\documentclass[11pt]{article}
\input{style.sty}
\usepackage[T1]{fontenc}
\usepackage{lmodern,microtype,booktabs,longtable,array,enumitem}
\usepackage[colorlinks=true,linkcolor=blue!45!black,citecolor=blue!45!black,urlcolor=blue!45!black]{hyperref}
\usepackage{bookmark}
\hypersetup{pdftitle={Quantum Locally Testable CSS Codes},pdfsubject={Cubical construction, expansion analysis, and simultaneous realization}}
\numberwithin{equation}{section}
\newcommand{\PP}{\mathbb P}

\DeclareMathOperator{\rank}{rank}

\DeclareMathOperator{\mult}{mult}
\DeclareMathOperator{\length}{length}
\DeclareMathOperator{\reg}{reg}

\DeclareMathOperator{\lcm}{lcm}
\allowdisplaybreaks[2]
\usepackage{csquotes}
\usepackage{xurl}
\usepackage[
  backend=biber,
  bibencoding=utf8,
  texencoding=ascii,
  style=alphabetic,
  sorting=anyt,
  backref=true,
  maxbibnames=3,
  minbibnames=1,
  maxalphanames=3,
  minalphanames=1,
  giveninits=false,
  doi=true,
  url=true
]{biblatex}
\title{Asymptotically Good Quantum Locally Testable Codes}
\author{
William Gay\thanks{
University of Illinois Urbana-Champaign.
Email: \texttt{\{whgay2,granha\}@illinois.edu}
} \and 
Fernando Granha Jeronimo\footnotemark[1]
}
\date{September 17, 2026}

\begin{document}
\maketitle

\vskip 2cm
\begin{abstract}
We construct explicit families of asymptotically good quantum locally testable codes over qubits. More precseily, we construct explicit quantum LDPC CSS codes over qubits having constant rate, constant relative distance and constant weight local testers with constant soundness.
\end{abstract}
\thispagestyle{empty}

\newpage
\pagenumbering{roman}
\tableofcontents
\newpage

\pagenumbering{arabic}
\setcounter{page}{1}
\section*{Introduction}
\addcontentsline{toc}{section}{Introduction}
\label{sec:introduction}
\begingroup
\clubpenalties 2 10000 0
\widowpenalty=10000


Error-correcting are a fundamental object in computer science, in both practical and theoretical application. A good code stores a constant fraction of its block
length in alphabet symbols while ensuring that distinct codewords
differ in a constant fraction of their coordinates. Rate and distance are global
properties of a code, they constrain the information carried by an entire word and
the separation between entire words. A complementary requirement is that
the code admit a sparse description by local constraints. Gallager's
low-density parity-check (LDPC) codes have parity-check matrices with
bounded row and column weights. That is, each check involves only a bounded
number of symbols, and each symbol participates in only a bounded number
of checks \cite{Gallager1962}. This sparsity limits the number of
interactions needed to compute a syndrome and provides the structure
used by iterative decoding algorithms.

Tanner generalized this viewpoint by placing small constituent codes
on a bipartite graph and imposing their constraints on overlapping
subsets of coordinates \cite{Tanner1981}. Sipser and Spielman used
expander graphs to construct asymptotically good LDPC codes with
linear-time decoding \cite{SipserSpielman1996}. Their work illustrates
how the expansion of a sparse graph can turn local constraints into
global distance and decoding guarantees. The constructions considered
here follow this general approach, with higher-dimensional complexes
replacing graphs and stronger properties required of the local codes.

Local testability asks whether the global coding structure can be
recognized by examining only a few coordinates.
A locally testable code (LTC) has a randomized test that always accepts
codewords and rejects any other word with probability at least a constant
multiple of its relative distance from the code. When the test reads
only a constant number of coordinates, it can reject words at constant
relative distance from the code with constant probability, independently
of the block length. Sparse parity checks provide inexpensive tests,
but sparsity alone does not establish this quantitative relation
between rejection probability and distance.

Unlike many other desirable code property, 
being locally testable is not a property of random codes. 
Goldreich and Sudan developed a systematic theory of locally testable
codes and constructed constant-query binary LTCs of almost-linear
length \cite{GoldreichSudan2006}. Ben-Sasson and Sudan obtained
quasilinear-length PCPs and codes testable with polylogarithmically
many queries at any fixed positive distance threshold, using
univariate Reed--Solomon techniques \cite{BenSassonSudan2008}.
These works brought the block length close to linear in the message
length, while leaving a loss in rate or query complexity. The problem
of simultaneously obtaining constant rate, relative distance, query
complexity, and soundness therefore required further ideas.

Panteleev and Kalachev (PK) \cite{PanteleevKalachev2022} and,
independently, Dinur, Evra, Livn\'e, Lubotzky, and Mozes (DELLM)
\cite{DELLM2022} resolved the constant-rate LTC problem for classical codes. Both constructions give LTCs with positive
constant rate, relative distance, and soundness, using a constant
number of queries. PK combine Tanner codes through a lifted-product
construction and prove expansion properties of the resulting chain
complexes; their paper also establishes asymptotically good quantum
LDPC codes. DELLM place local tensor-code constraints on left-right
Cayley square complexes and use the expansion of the complex together
with local agreement properties to prove global testability. The
journal treatment of DELLM streamlines this argument and supplies
the local codes through expander-code constructions \cite{DELLM2022}.
These results are direct predecessors of the present work, as they show
how local tensor-code properties can be combined with expanding
two-dimensional geometry, while the quantum testability argument here
requires corresponding control in more directions.

Local testing is closely connected to probabilistically checkable proofs
(PCPs), in that one can view the latter as a computational variant of the former information theoretic object. The work of Arora and Safra \cite{AroraSafra1998} and of
Arora, Lund, Motwani, Sudan, and Szegedy \cite{ALMSS1998}
established that every language in NP has polynomial-length proofs
checkable using logarithmically many random bits and constantly many
proof-bit queries, with perfect completeness and constant soundness
error. Dinur subsequently gave a proof through gap amplification
\cite{Dinur2007}. Tests of algebraic encodings and tests of consistency
between local views are central components of these developments.
The connection is particularly concrete for PCPs of
proximity, in which a verifier checks that a given word is close to a
specified language with the aid of an additional proof. Such proximity
proofs can be used to construct locally testable codes, and coding
techniques in turn enter their construction \cite{BGHSV2006}.

Quantum error-correcting codes encode a space of quantum states into a
larger Hilbert space. For a stabilizer code, membership in the codespace
is specified by commuting checks, each of which is a measurement with
accept and reject outcomes. The CSS construction of Calderbank and
Shor \cite{CalderbankShor1996} and Steane \cite{Steane1996}
builds such quantum codes from classical linear codes, with an
orthogonality condition ensuring that the two types of checks commute.
For a quantum LDPC code, both the weight of each check and the number
of checks involving each qubit are bounded independently of the block
length. Achieving sparsity together with the commutation requirement
is a substantial additional constraint on quantum constructions.

This sparsity is especially important for quantum error correction,
where checks must be measured on the encoded state. Bounded check
weight permits syndrome-extraction circuits using bounded-size
ancillary systems for individual checks, and bounded qubit incidence
limits the number of such measurements involving any data qubit.
Gottesman showed how suitable quantum LDPC families, together with
decoding and circuit assumptions, can support fault-tolerant quantum
computation with constant qubit overhead \cite{Gottesman2014}.

Explicit quantum LDPC codes are also a fairly recent development. 
Tillich and Z\'emor's hypergraph-product construction
gives constant rate and distance proportional to the square root of
the block length \cite{TillichZemor2014}. Leverrier, Tillich, and
Z\'emor developed linear-time decoding for quantum expander codes
from this framework \cite{LeverrierTillichZemor2015}. Fiber-bundle
codes of Hastings, Haah, and O'Donnell broke the square-root distance
barrier by a polynomial factor \cite{HastingsHaahODonnell2021},
and Panteleev and Kalachev obtained almost-linear distance through
lifted products \cite{PanteleevKalachev2022Distance}.
PK's subsequent construction achieved both
\cite{PanteleevKalachev2022} constant rate and linear distance. 
Leverrier and Z\'emor's quantum Tanner
codes give another asymptotically good family, formulated using
local codes on left-right Cayley complexes \cite{LeverrierZemor2022}.
The existence of these good quantum LDPC families establishes the
compatibility of sparsity, rate, and distance. Global local-testability
soundness is an additional requirement.

Sampling a stabilizer check gives a natural local test, and quantum
local testability asks that its rejection probability control distance
from the codespace.
Aharonov and Eldar initiated its systematic study
\cite{AharonovEldar2015}. We use the quantum Hamming-distance
formulation of Eldar and Harrow \cite{EldarHarrow2017}: the expected
fraction of violated checks must control the normalized distance from
the codespace. Within a joint eigenspace of the stabilizer checks, this
distance is the minimum support size of a Pauli error producing that
syndrome; for a general state, it is the expectation of the corresponding
distance observable. This asks the test to detect errors according to
the number of qubits affected, rather than according to Hilbert-space
or trace distance. The precise convention appears in
Section~\ref{sec:css-preliminaries}.

Quantum locally testable codes are motivated in part by the quantum PCP
conjecture, which concerns the hardness of approximating the ground
energy of a local Hamiltonian to constant additive accuracy after
normalization \cite{AharonovAradVidick2013}. A stabilizer code gives a
particularly structured local Hamiltonian: its ground space is the
codespace, and its energy is the expected fraction of violated checks.
Local testability relates that energy to distance from the ground
space. This is a quantum local-verification problem closely related in
motivation to PCPs, although constructing good quantum locally testable codes
does not by itself provide the reductions required by the quantum PCP
conjecture \cite{AharonovEldar2015,DLV2024}.
There is also a connection to the complexity of low-energy states.
Eldar and Harrow related quantum local testability to obstructions to
preparing such states by shallow circuits \cite{EldarHarrow2017}.
Anshu, Breuckmann, and Nirkhe subsequently proved the no-low-energy
trivial-state (NLTS) conjecture using a particular family of good
quantum LDPC codes \cite{AnshuBreuckmannNirkhe2023}. 

Several constructions have obtained quantum local testability with
different parameter tradeoffs. Leverrier, Londe, and Z\'emor's
hemicubic codes have inverse-logarithmic soundness and logarithmic
check weight, while encoding one logical qubit
\cite{LeverrierLondeZemor2022}. Cross, He, Natarajan, Szegedy, and
Zhu constructed constant-soundness qLTCs with tradeoffs among rate,
distance, and locality \cite{CrossEtAl2024}. In their first
construction the distance scales with the locality, their second
has constant average locality, which does not bound the largest
check weight. Dinur, Lin, and Vidick (DLV) developed a
higher-dimensional cubical construction whose bounded-locality
instantiation has inverse-polylogarithmic losses in relative distance
and soundness \cite{DLV2024}. Their abstract construction and
local-to-global expansion analysis are the starting point of the present work.

\subsection*{Main Results}

Our main result is the existence of asymptotically good quantum locally
testable codes:
\begin{theorem*}
There exists a family of binary CSS codes of unbounded length with constant rate and relative distance. 
Moreover, the codes are both LDPC and qLTC with constant soundness. 
\end{theorem*}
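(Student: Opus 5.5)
The plan is to instantiate the Dinur--Lin--Vidick cubical construction \cite{DLV2024} in a regime where every parameter is a constant independent of block length, and to remove the polylogarithmic losses in their local-to-global analysis. The construction is as follows. Fix a dimension $t\ge 4$ and a finite group $G$ with $t$ symmetric generating sets $A_1,\dots,A_t$ of constant size $\Delta$. The generating sets are chosen so that the $t$-dimensional left--right (multi-sided) Cayley cubical complex $X$ has all its one-dimensional ``direction graphs'' being $\lambda$-spectral expanders, with $\lambda\le \Delta^{1/2+\epsilon}$; explicit choices come from Ramanujan-type Cayley graphs on $\mathrm{PSL}_2(\mathbb F_q)$ with commuting left/right actions.

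On each $t$-cube we place the tensor code $C_1\otimes\cdots\otimes C_t$, with $C_i\subseteq\mathbb F_2^{\Delta}$. Qubits live on cubes of a middle level $k$, $X$-checks come from level $k-1$ and $Z$-checks from level $k+1$, exactly as in DLV's sheaf/co-sheaf chain complex. The commutation condition $\partial^2=0$ is automatic from the cubical structure. The rate is constant by a dimension count, provided the local code rates satisfy the inequality DLV use; this forces one half of the factors to have rate $>1/2$ and their duals to be taken in the others. LDPC holds because every check touches $O(\Delta^t)$ qubits, which is a constant.

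The core of the argument is a pair of statements, for both the complex and its dual: cosystolic distance (linear distance) and small-set coboundary expansion (soundness). Both are to be derived from a local-to-global theorem. We need two local hypotheses on the tensor codes $C_{i_1}\otimes\cdots\otimes C_{i_j}$ and their duals: (i) \emph{robust testability}, i.e.\ agreement testability of the tensor codes with constant robustness in every sub-collection of directions; and (ii) \emph{product-expansion} in the sense of Kalachev--Panteleev.

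The key step, and the main obstacle, is to obtain these local properties with robustness constants independent of $\Delta$ while keeping $\Delta$ constant but large. DLV lose polylog factors precisely because their local codes only have robustness degrading with dimension or size. My first attempt would be to take random codes (or explicit codes found by brute force, which costs constant time since $\Delta$ is constant), using the Kalachev--Panteleev theorem that random tuples of codes are product-expanding with constant $\rho$ for every fixed $t$. I would then show that product-expansion of all subfamilies implies the robust testability and the ``co-'' versions needed in each of the $t$ directions.

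With constant robustness in hand, I would run DLV's inductive local-to-global argument: process a small-weight co-chain direction by direction, and use the spectral gap $\lambda/\Delta\to 0$ to control how far the error can spread. The result is that $\|\delta f\|\ge \kappa\cdot \mathrm{dist}(f,Z)$ holds for all $f$, with $\kappa$ depending only on $(t,\rho,\lambda/\Delta)$, together with linear cosystolic distance. Choosing $\Delta$ large enough relative to $\rho^{-1}$ closes the induction. Finally, the Eldar--Harrow soundness of the CSS code follows from coboundary expansion on both the $X$ and $Z$ sides, as stated in Section~\ref{sec:css-preliminaries}.

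The main technical risk is the induction on $t$. It must never pass through a $\log n$-length chain of ``small set'' regimes. I would try to replace DLV's iterative peeling with a single global averaging argument that holds at constant-fraction weights.
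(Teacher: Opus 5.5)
\textbf{There is a genuine gap, and it is in the geometry, not in the local codes where you locate the main obstacle.}

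A left--right Cayley complex supplies only two commuting directions. A permutation of $G$ that commutes with left multiplication by a generating set and with right multiplication by a generating set must be multiplication by a central element, and the center of $\mathrm{PSL}_2(\mathbb F_q)$ is trivial. So a ``multi-sided'' Cayley complex with the $t\ge4$ commuting directions you need does not exist there; note that $t\ge4$ is forced, because the analysis needs $2\le k\le t-2$. The obvious workarounds both fail:
\begin{itemize}
\item Taking products of groups splits each direction walk into many small components. The mixing bound $\langle 1_A,W1_A\rangle\le\lambda|A|+\beta|A|^2/Z$ then degenerates, since $\beta$ grows with the number of components, and you are back in hypergraph-product behaviour.
\item Commuting abelian-type expanding actions need degree growing like $\log N$.
\end{itemize}
This geometric obstruction is the source of DLV's polylogarithmic loss. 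By contrast, the local step you treat as the crux is already available for unstructured codes. Kalachev--Panteleev give product expansion of random tuples together with their duals, with a constant independent of the local length. Their result is over sufficiently large fields, so you would work over $\mathbb F_{2^s}$ and restrict scalars, as the paper does. With such a $\rho$, DLV's local-to-global argument (local cominimality, the support recursion, face-walk mixing) already gives linear distance and constant soundness once $\lambda\mathcal L(t,k,\rho,K)<1$. Your quantifier order, fixing $\rho$ first and then taking the degree large, is exactly how the paper closes the argument. No new global averaging argument is needed.

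\textbf{What your plan is missing is a geometry that actually exists, together with the constraint it places on the local codes.} The paper uses finite arithmetic quotients of a product of $t$ Bruhat--Tits trees (Jordan--Livn\'e and Livn\'e, with the Ramanujan bound coming from Blasius). These have constant degrees $q_i+1$ and $\lambda\le2/\sqrt{q_-}$, their parallel-face components are exactly the parity classes, and full upward stars embed. But the directions do not commute as labelled generators. The two routes around a square identify edge coordinates differently, by elements of the vertex stabilizers, which act on the $q+1$ incident edges as $\mathrm{SL}_2(\mathbb F_q)$ acts on $\mathbb P^1(\mathbb F_q)$.

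So your claim that $\partial^2=0$ is automatic fails for arbitrary tensor codes. To obtain a cosheaf with simultaneous local charts, each base code must be invariant under the stabilizer action, and must induce the same character on the edge stabilizer from both endpoints. Random codes have no such symmetry, so Kalachev--Panteleev cannot be invoked. The paper instead takes projective Reed--Solomon codes of two nearby degrees $e\le d$. It reindexes one of them by a Frobenius twist of the evaluation arguments so that the two endpoint characters agree.

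This choice then requires a new result: product expansion of Reed--Solomon tuples on products of norm-one grids of distinct sizes $2^{m_i}+1$. The constant must be uniform in the lengths, over all field extensions, and for the dual codes. Distinct sizes matter, since Kalachev showed that Reed--Solomon tensor powers with repeated factors can fail product expansion. That algebraic-geometric argument (incidence bounds, interpolation and regularity) is the missing idea in your plan.
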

The formal statement is Theorem~\ref{thm:main-qltc}. The construction
uses the middle degree of a four-dimensional cubical complex, its
abstract formulation and analysis allow a general number of directions.
The proof also establishes a uniform product-expansion theorem for a
family of Reed--Solomon codes on unequal evaluation
sets, including the dual codes needed in the quantum construction
(Theorem~\ref{exloc:finite-grid-pe}).

\subsection*{Tools}

The use of higher-dimensional expansion to study testing has a broader
history. Kaufman and Lubotzky identified the connection between
coboundary expansion and property testing \cite{KaufmanLubotzky2014}.
On the coding side, Ben-Sasson and Sudan developed robust local
testability and tensor-code tests \cite{BenSassonSudan2006}.
These viewpoints relate the global distance from satisfying a system
of constraints to information obtained from its local restrictions.
Product expansion is the particular local tensor-code property needed
for the construction considered here.

Product expansion describes how local codewords can combine on a
Cartesian grid. Fix a code for each coordinate direction, and consider
arrays that are sums of codewords placed on coordinate lines. Such a
sum may have extensive cancellation. Product expansion says that the
same array has a decomposition whose total line cost is controlled by
its nonzero entries, where a used line is charged its length. It thus
turns small support in the resulting array into a quantitatively small
description by local codewords. This is the local property
needed in the cubical construction of~\cite{DLV2024}.

Panteleev and Kalachev used product expansion of pairs of local codes
in their constructions \cite{PanteleevKalachev2022}. In higher
dimensions, robust and agreement testability do not imply product
expansion, as Kalachev exhibited Reed--Solomon tensor powers with at
least three factors separating these notions \cite{Kalachev2026}.
This counterexample concerns repeated factors and does not rule out
the unequal evaluation sets used in this work. Kalachev and
Panteleev also established product-expanding tuples with any
fixed number of factors over sufficiently large fields, with an
expansion constant independent of the common block length
\cite{KalachevPanteleev2025}. Their result also permits simultaneous
expansion of a tuple and its tuple of dual codes. 
In the present construction, the local codes must additionally
respect symmetries imposed by the underlying cubical geometry, so an unconstrained
random choice of local codes does not supply the required input.

Reed--Solomon codes encode a polynomial by evaluating it at a finite
set of field elements \cite{ReedSolomon1960}. Their dimension and
distance follow from polynomial interpolation and the root bound for
a nonzero polynomial. For our purposes, their symmetries are equally
important. In projective form, these codes evaluate homogeneous
polynomials on a projective line. An invertible linear change of the two
homogeneous coordinates over the field defining that projective line
permutes the evaluation points and rescales the resulting coordinates
by nonzero scalars. This provides a natural way to match
local code coordinates to projective incidence data.
Here we prove product expansion for a particular family of projective
evaluation sets whose symmetries also support the local coefficient
maps.

The geometric input is the Ramanujan property. For a regular graph,
this bounds the nontrivial adjacency eigenvalues by the spectral radius
of its covering tree, the trivial eigenvalues include both signs of
the degree when the graph is bipartite. Lubotzky, Phillips, and Sarnak 
constructed arithmetic families
of graphs with this property \cite{LPS1988}. Jordan and Livn\'e
developed its cubical analogue using quotients of products of trees
\cite{JordanLivne2000}. Rungtanapirom, Stix, and Vdovina constructed
further arithmetic lattices in products of trees and cubical Ramanujan
complexes from their congruence subgroups \cite{RungtanapiromStixVdovina2019}.
The relevant operators now move a face to a
parallel face in one direction. Their spectral bounds control how a
small set of faces can meet its transported copies, which is the
geometric estimate used in the coding analysis.

Hsieh, Lubotzky, Mohanty, Reiner, and Zhang (HLMRZ) constructed
Ramanujan cubical complexes from the LPS framework in their work on
lossless vertex expansion \cite{HLMRZ2025}. These give a natural
noncommuting counterpart to the cubical geometry in DLV, directions
can be exchanged around a square even though the individual generators
need not commute. Their construction motivates replacing the commuting
geometry by arithmetic cubical complexes whose expansion bounds remain
uniform as the quotient grows. Our realization uses the related
Jordan--Livn\'e and Livn\'e arithmetic constructions
\cite{JordanLivne2000,Livne2001}.

\subsection*{Proof Overview}

The abstract construction builds on DLV's
abstract framework \cite{DLV2024}. Vector spaces and incidence maps are
assigned to cubical faces, with simultaneous local tensor coordinates
coming from the base-code encoders. Compatible incidence maps give a
square-zero differential and hence commuting CSS checks. We allow
unequal directional degrees, endpoint-dependent code dimensions, and
local coordinates in place of global generator labels. The expansion
analysis is nearly identical to DLV's, product expansion gives local
fillings, which force a cocycle resistant to local weight reduction
to have many neighbors in its support. Face-walk mixing rules out a
small such support. Applying this mechanism to syndromes gives
soundness, DLV's double-complex argument transfers the estimates to
homology using the dual base codes. Section~\ref{comb:section}
assembles the resulting abstract criterion.

The main issue is to realize the geometric and local-code hypotheses
simultaneously. DLV's geometric small-set expansion range causes the
inverse-polylogarithmic loss. Ramanujan cubical complexes of the kind
suggested by HLMRZ offer a geometry without this loss, but their
noncommuting labels obstruct a naive substitution, the two routes
around a square can change the local code coordinates, so the two
coded incidence compositions need not agree. To obtain the differential,
we choose base codes respecting these changes of coordinates.

Section~\ref{sec:arithmetic} uses the Jordan--Livn\'e/Livn\'e
framework to construct a tower of finite quotients of a product of
trees. Smaller congruence subgroups give larger quotients with fixed
local degrees, sufficiently deep quotients retain every full upward
star. The arithmetic input gives uniform spectral bounds for all
required parallel-face walks, whose connected components occupy fixed
positive fractions of the relevant face classes. This removes the
geometric loss. This particular arithmetic realization allows the
characteristic-two residue fields and local stabilizer actions needed.

At a tree vertex, incident edges form a projective line over the
residue field. Projective Reed--Solomon evaluation respects the
stabilizer's action, including its coordinate rescalings. Across an
edge, the two endpoint coordinate systems exchange the stabilizer's
diagonal entries, so their scalar actions must be matched. We pair
two nearby polynomial degrees and reindex one evaluation code by a
Frobenius automorphism, acting on evaluation arguments rather than
coefficients. Section~\ref{sec:equivariance} thereby builds equivariant
incidence maps on each tree, tensors them, and descends them to the
quotients. The two routes around a square evaluate different tensor
factors and therefore agree, giving square zero in characteristic two
and the simultaneous local charts required by the analysis.

For product expansion, we identify each projective evaluation set
with a norm-one subgroup of a quadratic finite-field extension, up
to coordinate rescaling. The field sizes are distinct powers of two
with bounded ratios. Sections~\ref{exinc:chapter}--\ref{exloc:chapter}
convert the problem to interpolation on their Cartesian product.
A small support has a polynomial closure of controlled size. On these
norm-one sets, the corresponding Frobenius powers act as inversion,
combining the unequal powers with inversion and coefficient Frobenius
transforms curves without changing their grid intersections. An irreducible curve meeting the grid
and varying in two coordinates cannot be preserved, since its
coordinate degrees would scale differently. Intersections with a
distinct transformed curve, together with induction, control the
points of a closed set lying on no full coordinate line in that set.

This incidence estimate gives interpolation for the points indexing
maximal coordinate flats in the closure. Regularity bounds for unions
of these flats then extend compatible local polynomial data with the
required individual degree bounds. By linear duality, this extension
gives a decomposition into codewords on lines contained in the closure.
Its controlled size bounds the line cost, proving product expansion
uniformly in the local lengths, also for the required dual codes and
after extension of the coefficient field.

Finally, we fix the number of directions, the local-length ratio bound,
and rate intervals away from zero and one. The Reed--Solomon theorem
fixes a positive product-expansion constant before the residue fields
are enlarged to meet the Ramanujan threshold. All local data then
remain fixed as the congruence level, and hence the quantum block
length, grows. Using endpoint codes in two directions and their duals
in the other two gives positive middle-dimensional rate by an
Euler-characteristic calculation and bounds on the other cohomology
groups. Binary restriction of scalars incurs only constant losses.
Section~\ref{real:section} combines these choices with the abstract
distance and soundness estimates.

\endgroup

\section{Preliminaries}
\label{sec:preliminaries}
For an integer $a\geq0$, write $[a]=\{1,\ldots,a\}$.
All tensor products of indexed factors are taken in increasing index
order; an empty tensor product over a field $F$ is $F$.
For a field $F$ and a finite set $\Omega$, equip $F^\Omega$ with
the bilinear pairing $\langle x,y\rangle=\sum_{\omega\in\Omega}x_\omega y_\omega$.
For a subspace $C\leq F^\Omega$, its \emph{dual code}, or
\emph{orthogonal complement}, is
\[
 C^\perp=\{y\in F^\Omega:\langle x,y\rangle=0\text{ for every }x\in C\}.
\]
The notation $C^*=\operatorname{Hom}_F(C,F)$ denotes the vector-space dual.
For a matrix $T$, $T^{\mathsf T}$ denotes
the transpose in the specified bases. Hamming weight counts nonzero
coordinates. When coordinates are grouped into vector-valued blocks,
$|x|_b$ counts the nonzero blocks. We work in characteristic two for
the cubical complexes; coding statements not using this restriction
are stated over arbitrary fields. Write $[z^j]P$ for a polynomial
coefficient, with value zero outside its degree range, and
\[
 e_j(a_1,\ldots,a_t)=\sum_{|I|=j}\prod_{i\in I}a_i,
 \qquad e_0=1,\qquad e_j=0\quad(j\notin[0,t]).
\]
\label{bin:elementary-symmetric}

\subsection{Classical Codes and Product Expansion}
\label{cert:encoder-data}\label{cert:line-data}
Let $F$ be a field, $r\geq1$, and, for $i\in[r]$, let
$\Omega_i$ be a finite nonempty set of size $n_i$. An encoder for
an $m_i$-dimensional code $C_i\leq V_i:=F^{\Omega_i}$ is an
injective map $G_i:U_i:=F^{m_i}\to V_i$ with image $C_i$.
A parity-check matrix $H_i$ has rows forming a basis of $C_i^\perp$;
thus
\[
 C_i^\perp=\ker G_i^{\mathsf T},\qquad
 \ker H_i=C_i,\qquad \rank H_i=n_i-m_i.
\]
For $\Omega=\prod_i\Omega_i$, identify $\bigotimes_i V_i=F^\Omega$ and put
\[
 A_i=C_i\otimes\bigotimes_{j\ne i}V_j,\qquad A=\sum_i A_i.
\]
Factors in this and subsequent formulas are reordered into increasing
index order. An $i$-line is obtained by fixing all coordinates except
$i$. For $x_i\in A_i$, let $\ell_i(x_i)$ count its nonzero $i$-lines.
Thus each $i$-line of $x_i$ belongs to $C_i$.

\begin{definition}[Product expansion]\label{loc:weighted-product-expansion}
The tuple $(C_i)_{i=1}^r$ is $\rho$-product-expanding, for a real
$\rho>0$, if
\[
 \forall x\in A\quad\exists x_i\in A_i:\qquad
 x=\sum_i x_i,\qquad \rho\sum_i n_i\ell_i(x_i)\leq |x|.
\]
It is \emph{universally} $\rho$-product-expanding if this holds for
$(E\otimes_F C_i)_i$ over every field extension $E/F$, with the same
$\rho$. For a finite-dimensional $E$-space $R$, set
$A_i(E;R)=R\otimes_E A_i(E)$ and $A(E;R)=\sum_i A_i(E;R)$.
The corresponding inequality for $R$-valued entries, counting a
coordinate or line when its vector is nonzero, is called collective
product expansion. Abbreviate these spaces to $A_i(R),A(R)$ when
the field is understood.
\end{definition}

\begin{lemma}\label{cert:product-check}\label{cert:product-kernel}
For the preceding encoders and checks,
\[
 \ker\Bigl(\bigotimes_i H_i\Bigr)=A,
 \qquad A^\perp=\bigotimes_i C_i^\perp.
\]
\end{lemma}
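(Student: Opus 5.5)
The plan is to prove the second identity first, by computing both sides as orthogonal complements, and then read off the first from it using the fact that the rows of $\bigotimes_i H_i$ span the right space.

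For each $i$, the complement of $A_i$ under the product pairing on $F^\Omega$ factors as
\[
 A_i^\perp = C_i^\perp\otimes\bigotimes_{j\ne i}V_j^\perp + V_i\otimes\Bigl(\bigotimes_{j\neq i}V_j\Bigr)^{\perp}.
\]
It is cleaner to argue directly. Choose bases so that the pairing on each $V_i$ is the standard one. Then $C_i^\perp\otimes\bigotimes_{j\ne i}V_j$ is orthogonal to $C_i\otimes\bigotimes_{j\neq i}V_j$, since the pairing of pure tensors factors as a product of pairings. Dimension counting gives equality:
\[
 \dim A_i^\perp = n - m_i\prod_{j\ne i}n_j = (n_i-m_i)\prod_{j\ne i}n_j,
 \qquad n=\prod_j n_j .
\]
Hence $A_i^\perp = C_i^\perp\otimes\bigotimes_{j\ne i}V_j$.

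Because the complement of a sum is the intersection of the complements, $A^\perp=\bigcap_i A_i^\perp$. So it remains to show
\[
 \bigcap_i\Bigl(C_i^\perp\otimes\bigotimes_{j\ne i}V_j\Bigr)=\bigotimes_i C_i^\perp .
\]
This is a standard fact about subspaces $W_i\le V_i$:
\[
 \bigcap_i\Bigl(W_i\otimes\bigotimes_{j\ne i}V_j\Bigr)=\bigotimes_i W_i .
\]
The inclusion $\supseteq$ is clear. For $\subseteq$, I would pick complements $V_i=W_i\oplus W_i'$ and expand $F^\Omega$ as a direct sum of the $2^r$ tensor products, one for each choice of $W$ or $W'$ per factor. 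Membership in $W_i\otimes\bigotimes_{j\ne i}V_j$ kills every summand with a $W_i'$ factor in slot $i$. Imposing this for all $i$ leaves only $\bigotimes_i W_i$. With $W_i=C_i^\perp$ this gives $A^\perp=\bigotimes_i C_i^\perp$.

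For the kernel statement, the rows of $H_i$ form a basis of $C_i^\perp$. The rows of the Kronecker product $\bigotimes_i H_i$ are the tensor products of rows, so they form a basis of $\bigotimes_i C_i^\perp=A^\perp$. Therefore
\[
 \ker\Bigl(\bigotimes_i H_i\Bigr)=(\text{row space})^\perp=(A^\perp)^\perp=A,
\]
where the last step uses nondegeneracy of the standard pairing in finite dimension, so that $(A^\perp)^\perp=A$.

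The only step needing care is the intersection identity. The direct-sum decomposition handles it cleanly. An alternative is induction on $r$ via the two-factor identity
\[
 (W\otimes V')\cap(V\otimes W')=W\otimes W'.
\]
Everything else is bookkeeping of index order, which is fixed by the paper's convention of taking tensor products in increasing index order.
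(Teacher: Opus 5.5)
Your proof is correct, but it runs in the opposite direction from the paper's.

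The paper works on the primal side. It splits $V_i=C_i\oplus Q_i$ and observes three things about the resulting $2^r$-term expansion of $F^\Omega$. First, $A$ is exactly the sum of the summands with at least one $C_i$ factor. Second, $\bigotimes_i H_i$ kills those summands. Third, $\bigotimes_i H_i$ is an isomorphism on $\bigotimes_i Q_i$. So it reads off $\ker(\bigotimes_i H_i)=A$ directly, and then obtains $A^\perp=\operatorname{im}(\bigotimes_i H_i^{\mathsf T})=\bigotimes_i C_i^\perp$ by taking complements.

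You start on the dual side instead:
\begin{itemize}
\item you compute $A_i^\perp=C_i^\perp\otimes\bigotimes_{j\neq i}V_j$ by orthogonality of pure tensors plus a dimension count;
\item you write $A^\perp=\bigcap_i A_i^\perp$;
\item you evaluate that intersection with the same complement-splitting trick, applied to $W_i=C_i^\perp$;
\item only then do you recover the kernel, from the Kronecker rows and $(A^\perp)^\perp=A$.
\end{itemize}

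The paper's route is shorter, because one decomposition identifies both $A$ and the kernel at once. Yours needs the extra computation of each $A_i^\perp$ and the intersection identity. In exchange, it never touches the check matrices until the last line. It also makes explicit that $A^\perp$ consists of the words all of whose $i$-lines lie in $C_i^\perp$ for every $i$. That intersection form is how the fact is reused later in the paper's line-interpolation argument.

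One blemish: your first displayed formula for $A_i^\perp$ is wrong as written. Since $V_j^\perp=0$ and $(\bigotimes_{j\neq i}V_j)^\perp=0$, its right-hand side vanishes as soon as $r\geq2$. You abandon it immediately in favour of the direct argument, so just delete it.
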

\begin{proof}
Choose $V_i=C_i\oplus Q_i$. Expanding the tensor product, the summands
containing a $C_i$ factor form exactly $A$. The remaining summand is
$\bigotimes_iQ_i$, on which $\bigotimes_iH_i$ is an isomorphism onto
its target. This proves the kernel formula. Taking orthogonal complements gives
$A^\perp=\im(\bigotimes_iH_i^{\mathsf T})=\bigotimes_iC_i^\perp$.
\end{proof}

\begin{lemma}\label{cert:scalar-extension}\label{exloc:scalar-extension}\label{cert:monomial-invariance}\label{exloc:monomial-isometry}
Ranks, kernels, and images of matrices over $F$ commute with extension
of the coefficient field. Independent permutations and nonzero
rescalings of the coordinates in each $\Omega_i$ preserve product
expansion, including universal and collective expansion. If $T_i$
is such a monomial map, then
$(T_iC_i)^\perp=T_i^{-\mathsf T}C_i^\perp$.
\end{lemma}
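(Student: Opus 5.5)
The lemma bundles three elementary facts, and I would prove them in turn.

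\emph{Scalar extension.} Let $M$ be a matrix over $F$ and $E/F$ a field extension. Gaussian elimination over $F$ produces invertible matrices $P,Q$ over $F$ with $PMQ$ in rank normal form. The same $P,Q$ serve over $E$, so $\rank_E M=\rank_F M$. The reduced row echelon form computed over $F$ is also the echelon form over $E$. Hence a basis of $\ker M$ over $F$ remains a basis of $\ker_E M$; by the equality of ranks it is independent and of the right dimension. Likewise the columns spanning $\im M$ over $F$ span $\im_E M$. In other words, $\ker_E M=E\otimes_F\ker M$ and $\im_E M=E\otimes_F\im M$. Applied to $G_i$ and $H_i$, this gives $E\otimes_F C_i=\im_E G_i$ and $(E\otimes_F C_i)^\perp=E\otimes_F C_i^\perp$.

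\emph{Monomial invariance.} Let $T_i=D_i\Pi_i$ with $\Pi_i$ a permutation of $\Omega_i$ and $D_i$ an invertible diagonal matrix, and put $T=\bigotimes_i T_i$. Then $T$ is a monomial map on $F^\Omega$. It sends $i$-lines to $i$-lines bijectively, because each $T_j$ acts on its own coordinate only. It preserves the Hamming weight of every array, and it preserves the nonzero pattern of $i$-lines. It maps $A_i$ onto $A_i'=T_iC_i\otimes\bigotimes_{j\ne i}V_j$, and hence maps $A$ onto $A'$. Now take $x'\in A'$ and set $x=T^{-1}x'\in A$. Product expansion of $(C_i)$ gives $x=\sum_i x_i$ with $x_i\in A_i$ and $\rho\sum_i n_i\ell_i(x_i)\le|x|$. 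Then $x'=\sum_i Tx_i$ with $Tx_i\in A_i'$, $\ell_i(Tx_i)=\ell_i(x_i)$, and $|x'|=|x|$; the lengths $n_i$ are unchanged. So the same $\rho$ works for $(T_iC_i)$.

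The universal version follows because $E\otimes T_iC_i=T_i(E\otimes C_i)$ with $T_i$ still monomial over $E$. The collective version follows by applying $\mathrm{id}_R\otimes T$: nonzero vectors in a coordinate are sent to nonzero vectors, so the block weights $|\cdot|_b$ and the line counts are preserved.

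\emph{Duality formula.} For $y\in F^{\Omega_i}$ and $c\in C_i$ we have $\langle T_ic,y\rangle=\langle c,T_i^{\mathsf T}y\rangle$. Hence $y\in(T_iC_i)^\perp$ if and only if $T_i^{\mathsf T}y\in C_i^\perp$, that is, $y\in T_i^{-\mathsf T}C_i^\perp$.

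I do not expect a real obstacle. The only point needing care is the scalar-extension claim for kernels, which is handled by the rank equality together with the observation that an $F$-basis stays $E$-independent.
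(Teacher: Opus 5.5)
Your proposal is correct and follows essentially the same route as the paper: a rank normal form over $F$ stays a rank normal form over $E$ (giving the extended kernels and images), $\bigotimes_i T_i$ bijects grid points and directional lines while preserving their zero or nonzero status so expanding decompositions transport with unchanged weights, and the duality formula is the identity $\langle T_ix,T_i^{-\mathsf T}y\rangle=\langle x,y\rangle$. You simply spell out the universal and collective cases that the paper leaves implicit.
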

\begin{proof}
A rank normal form $PBQ=\operatorname{diag}(I_a,0)$ remains a rank
normal form after extending the field. This also identifies the
extended kernel and image. The map $\bigotimes_iT_i$ bijects grid
points and directional lines, preserving their zero or nonzero
status, so it transports each expanding decomposition without changing
any of its weights. Finally
$\langle T_ix,T_i^{-\mathsf T}y\rangle=\langle x,y\rangle$.
\end{proof}

\subsection{CSS Codes and Local Testability}
\label{sec:css-preliminaries}\label{bin:pauli}\label{bin:css}
For $N\geq1$, the $N$-qubit space is $\mathscr H_N=\C^{\F_2^N}$,
with orthonormal basis $(e_a)_a$. For $x,z\in\F_2^N$, set
\[
 X^xe_a=e_{a+x},\qquad Z^ze_a=(-1)^{z\cdot a}e_a,
 \qquad \operatorname{wt}(X^xZ^z)=|\supp x\cup\supp z|.
\]
Let $H_X\in\F_2^{m_X\times N}$ and
$H_Z\in\F_2^{m_Z\times N}$ satisfy $H_XH_Z^{\mathsf T}=0$,
and suppose $m=m_X+m_Z>0$. The CSS code $\mathscr Q$ is the common
$+1$ eigenspace of $X^h$ for the rows $h$ of $H_X$ and $Z^h$
for the rows of $H_Z$. Repetitions and zero rows are retained in
this specified check list. Its dimension and distance are
\begin{align}
 \log_2\dim_\C\mathscr Q
 &=N-\rank H_X-\rank H_Z,\label{eq:css-dimension}\\
 d(\mathscr Q)
 &=\min\left\{
 \min_{z\in\ker H_X\setminus\im H_Z^{\mathsf T}}|z|,
 \min_{x\in\ker H_Z\setminus\im H_X^{\mathsf T}}|x|
 \right\},\label{eq:css-distance}
\end{align}
where the minimum of the empty set is $+\infty$.
The rate is $(\log_2\dim\mathscr Q)/N$.

For an integer $a\geq0$, let $\mathscr Q_{\leq a}$ be the span
of $P\mathscr Q$ over Pauli operators $P$ of weight at most $a$,
and let $\Pi_{\leq a}$ be its orthogonal projector; put
$\Pi_{\leq-1}=0$. Define
\[
 \mathsf D_{\mathscr Q}=\sum_{a=0}^N a(\Pi_{\leq a}-\Pi_{\leq a-1}),
 \qquad
 \mathsf H=\frac1m\left(
 \sum_{h\text{ row of }H_X}\frac{I-X^h}{2}
 +\sum_{h\text{ row of }H_Z}\frac{I-Z^h}{2}\right).
\]
\begin{definition}[Quantum local testability]\label{def:qltc}
The specified CSS code has soundness at least $\sigma>0$ if
$\mathsf H\succeq (\sigma/N)\mathsf D_{\mathscr Q}$, where
$A\succeq B$ means that $A-B$ is positive semidefinite.
Its check weight is the maximum row weight of the two matrices;
its qubit incidence is the maximum, over columns, of the sum of
their two column weights.
\end{definition}
The formulas above and the syndrome criterion for this operator
inequality are proved in Appendix~\ref{sec:css-appendix}. In particular,
the normalization uses the full listed number $m$ of checks.

\section{Construction}

\subsection{Cubical Complexes}
\label{sec:cubical-complexes}

Fix an integer $t \ge 1$, and write $[t] := \set{1,\ldots,t}$.
The face poset of the $t$-dimensional cube is
\[
\mcC_t := \set{0,1,\star}^{[t]}, \qquad
c \le d \quad\Longleftrightarrow\quad
\forall i \in [t],\quad c_i=d_i \ \text{or}\ d_i=\star.
\]
For a poset $P$ and $f,g \in P$, write
\[
P_{\le f} := \set*{h\in P:h\le f}, \qquad
P_{\ge f} := \set*{h\in P:f\le h}, \qquad
[f,g] := P_{\ge f}\cap P_{\le g}.
\]

\begin{definition}[Cubical complexes]
\label{def:cubical-complex}\label{cube:finite-cube-data}\label{cube:intervals}
A $t$-dimensional cubical complex is a finite poset
$( X,\le)$ equipped with a map
\[
\phi: X\to\mcC_t, \qquad f\mapsto\wt f,
\]
whose image contains $(\star,\ldots,\star)$, such that the following hold:
\begin{enumerate}
\item For every $f\in X$, the restriction
$\phi|_{ X_{\le f}}: X_{\le f}\to(\mcC_t)_{\le\wt f}$
is a poset isomorphism.
\item For every $f,g\in X$, the intersection
$ X_{\le f}\cap X_{\le g}$ is either empty or equal to
$ X_{\le h}$ for some $h\in X$.
\end{enumerate}
\end{definition}

The elements of $ X$ are its faces. For $f\in X$, set
\[
\dir(f):=\set*{i\in[t]:\wt f_i=\star}, \qquad
\codir(f):=[t]\setminus\dir(f), \qquad
\dim(f):=|\dir(f)|.
\]
The set $\dir(f)$ is also called the type of $f$.
Faces of dimensions zero and one are called vertices and edges,
respectively. No empty face is included. For $I\subseteq[t]$ and
$k\in\Z$, write
\[
 X(I):=\set*{f\in X:\dir(f)=I}, \qquad
 X(k):=\set*{f\in X:\dim(f)=k},
\]
and set $ X_{\le f}(k):= X_{\le f}\cap X(k)$ and
$ X_{\ge f}(k):= X_{\ge f}\cap X(k)$.
Thus $ X(k)=\varnothing$ outside $0\le k\le t$.
If $ X_{\le f}\cap X_{\le g}= X_{\le h}$, then $h$ is
unique: equality of the lower sets of $h$ and $h'$ implies
$h\le h'$ and $h'\le h$. We therefore write $f\cap g=h$ in this
case and $f\cap g=\varnothing$ otherwise.

The lower-cube isomorphism gives, for every $f\le g$,
\[
\dir(f)\subseteq\dir(g), \qquad
[f,g]\xrightarrow{\sim}2^{\dir(g)\setminus\dir(f)}, \qquad
h\longmapsto\dir(h)\setminus\dir(f),
\]
where the set of subsets is ordered by inclusion. Indeed, a word
between $\wt f$ and $\wt g$ is determined by which coordinates in
$\dir(g)\setminus\dir(f)$ are changed to stars. Similarly, for
$g\in X(k)$ and $0\le\ell\le k$,
\begin{equation}
\label{eq:lower-face-count}
| X_{\le g}(\ell)|=\binom{k}{\ell}2^{k-\ell}.
\end{equation}
There are $\binom{k}{\ell}$ choices of the stars that remain,
and two choices for each of the other $k-\ell$ coordinates.
In particular, every face has a vertex, and every interval whose
endpoints differ in dimension by two has exactly two intermediate faces.

For $x\in X(0)$ and $i\in[t]$, let
\[
E_i(x):=\set*{e\in X(\{i\}):x\le e}.
\]
Fix integers $n_1,\ldots,n_t\ge2$, and write
$n_-:=\min_{i\in[t]}n_i$ and $n_+:=\max_{i\in[t]}n_i$.
We call $ X$ $(n_1,\ldots,n_t)$-regular if $|E_i(x)|=n_i$ for every
$x\in X(0)$ and $i\in[t]$, and if, for every $x\in X(0)$,
$I\subseteq[t]$, and $(e_i)_{i\in I}\in\prod_{i\in I}E_i(x)$,
\begin{equation}
\label{eq:cubical-regularity}
\left|\set*{f\in X(I):x\le f,\quad e_i\le f\ \text{for all }i\in I}\right|=1.
\end{equation}
The product over an empty index set is a singleton; for
$I=\varnothing$, the unique face in \eqref{eq:cubical-regularity} is $x$.

\begin{lemma}
\label{lem:cubical-stars}\label{cube:full-stars}
Suppose $ X$ is a $t$-dimensional $(n_1,\ldots,n_t)$-regular
cubical complex. Fix $I\subseteq[t]$, $f\in X(I)$, and
$x\in X_{\le f}(0)$.
There is a poset isomorphism
\[
\theta_{f,x}:\bigsqcup_{J\subseteq\codir(f)}
\left(\{J\}\times\prod_{j\in J}E_j(x)\right)
\xrightarrow{\sim} X_{\ge f},
\]
where $(J,e)\le(J',e')$ means $J\subseteq J'$ and $e'|_J=e$.
Its value $\theta_{f,x}(J,e)$ is the unique face $g\in X(I\cup J)$
containing $f$ and every $e_j$ for $j\in J$.
Consequently, for every $S$ with $I\subseteq S\subseteq[t]$,
\begin{equation}
\label{eq:upper-face-count}
| X_{\ge f}\cap X(S)|=\prod_{i\in S\setminus I}n_i.
\end{equation}
\end{lemma}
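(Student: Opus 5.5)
We will define $\theta_{f,x}$ directly from the regularity axiom \eqref{eq:cubical-regularity}, then show it is a bijection and an order isomorphism. The count \eqref{eq:upper-face-count} then follows by restricting to $J=S\setminus I$: that slice has exactly $\prod_{j\in S\setminus I}n_j$ elements, since $|E_j(x)|=n_j$.

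\textbf{Well-definedness.} Fix $J\subseteq\codir(f)$ and $e\in\prod_{j\in J}E_j(x)$. We would produce a face $g\in X(I\cup J)$ with $f\le g$ and $e_j\le g$ for all $j$, and show it is unique. Inside the lower cube of $f$ (the first axiom of Definition~\ref{def:cubical-complex}), choose for each $i\in I$ the edge $e'_i\le f$ of type $\{i\}$ containing $x$. Apply \eqref{eq:cubical-regularity} at $x$ with index set $I\cup J$ and edges $(e'_i)_{i\in I}\cup(e_j)_{j\in J}$. This yields a unique $g\in X(I\cup J)$ above $x$ and all these edges.

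Next we need $f\le g$. Applying \eqref{eq:cubical-regularity} with index set $I$ shows that $f$ is the unique type-$I$ face above $x$ and the $e'_i$. The lower cube of $g$ contains a type-$I$ face above $x$ and the $e'_i$: in $(\mcC_t)_{\le\wt g}$, take the word agreeing with $\wt x$ outside $I$ and with stars on $I$, which lies above the images of $x$ and the $e'_i$. By uniqueness this face equals $f$, so $f\le g$.

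For uniqueness of $g$: any $g'\in X(I\cup J)$ with $f\le g'$ and $e_j\le g'$ satisfies $x\le f\le g'$ and $e'_i\le f\le g'$, so it is the face from \eqref{eq:cubical-regularity}. Hence $g'=g$.

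\textbf{Surjectivity and injectivity.} Given $g\ge f$, set $J=\dir(g)\setminus I\subseteq\codir(f)$. For each $j\in J$, let $e_j\le g$ be the edge of type $\{j\}$ through $x$, which exists in the lower cube $(\mcC_t)_{\le\wt g}$ since $x\le g$. Then $\theta_{f,x}(J,e)=g$ by the uniqueness above. Injectivity holds because $g$ determines $J=\dir(g)\setminus I$. It also determines each $e_j$: by the lower-cube isomorphism, $g$ has a unique type-$\{j\}$ edge through $x$.

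\textbf{Order.} If $J\subseteq J'$ and $e'|_J=e$, let $g=\theta(J,e)$ and $g'=\theta(J',e')$. The argument used to show $f\le g$ applies again: within the lower cube of $g'$, the type-$(I\cup J)$ face through $x$ lies above $f$ and the $e_j$, so by uniqueness it equals $g$, giving $g\le g'$.

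Conversely, suppose $g\le g'$. Then $\dir(g)\subseteq\dir(g')$, so $J\subseteq J'$. Each $e_j\le g\le g'$ is a type-$\{j\}$ edge of $g'$ through $x$, so by uniqueness in the lower cube of $g'$ it equals $e'_j$.

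\textbf{Main obstacle.} The delicate point is the repeated claim that a face's lower cube contains exactly one face of a given type through a given vertex, and that such a face lies above prescribed edges. This reduces to a statement about words in $\mcC_t$ via the first axiom of Definition~\ref{def:cubical-complex}. We would isolate it as a preliminary observation before constructing $\theta_{f,x}$.
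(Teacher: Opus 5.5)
Your proposal is correct and follows essentially the same route as the paper. You take the edges of $f$ through $x$, apply regularity at $x$ to the combined edge list, identify $f$ with the type-$I$ face of $g$ through $x$ via the uniqueness clause, and recover $(J,e)$ from the edges of $g$ through $x$ to get bijectivity and order; the only difference is that you spell out the lower-cube word computations the paper leaves implicit.
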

\begin{proof}
For every $i\in I$, the lower cube of $f$ contains a unique
edge $a_i\in E_i(x)$. Given $J\subseteq\codir(f)$ and
$e\in\prod_{j\in J}E_j(x)$, regularity supplies a unique
$g\in X(I\cup J)$ through $x$ containing the edges
$(a_i)_{i\in I}$ and $(e_j)_{j\in J}$. The type-$I$ face of $g$
through $x$ contains all the $a_i$, so the uniqueness clause of
\eqref{eq:cubical-regularity} identifies it with $f$.
This defines $\theta_{f,x}$.

Conversely, every $g\ge f$ contains $x$, and its lower cube
contains a unique edge through $x$ in each direction of $\dir(g)$.
Its edges in $I$ are the $a_i$, while the remaining edges recover
$J=\dir(g)\setminus I$ and $e$. Regularity then recovers $g$,
proving bijectivity. If $g\le g'$, their recovered edge lists
agree on every direction of $g$. Conversely, if one edge list
extends the other, the corresponding face of the larger cube
through $x$ has the smaller edge list and hence equals the
smaller face by regularity. Thus $\theta_{f,x}$ preserves and
reflects the order. For type $S$, its domain has
$\prod_{i\in S\setminus I}|E_i(x)|=\prod_{i\in S\setminus I}n_i$
elements, proving \eqref{eq:upper-face-count}.
\end{proof}

For $I\subseteq[t]$, abbreviate $D_I:=\prod_{i\in I}n_i$, with
$D_\varnothing=1$.

\begin{lemma}
\label{lem:cubical-face-counts}\label{cube:face-counts}
Suppose $ X$ is a $t$-dimensional $(n_1,\ldots,n_t)$-regular
cubical complex, and put $V:=2^{-t}| X(0)|$. Then $V$ is a
positive integer, and for every $I\subseteq[t]$ and
$c\in\{0,1\}^{[t]\setminus I}$,
\begin{equation}
\label{eq:parity-face-count}
\left|\set*{f\in X(I):\wt f|_{[t]\setminus I}=c}\right|=VD_I.
\end{equation}
In particular, for every $I\subseteq[t]$ and every integer $0\le k\le t$,
\begin{equation}
\label{eq:cubical-face-counts}
| X(I)|=2^{t-|I|}VD_I, \qquad
| X(k)|=2^{t-k}V\sum_{\substack{I\subseteq[t]\\|I|=k}}D_I.
\end{equation}
\end{lemma}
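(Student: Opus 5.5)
The approach is double counting, using the full upward stars of Lemma~\ref{lem:cubical-stars} to compare face classes with different parity labels. For $I\subseteq[t]$ and $c\in\{0,1\}^{[t]\setminus I}$, write $N(I,c)$ for the left side of \eqref{eq:parity-face-count}. The aim is to show that $N(I,c)/D_I$ depends on neither $I$ nor $c$. Positivity and integrality of $V$ then come for free: the common value is $N(\varnothing,c)=D_\varnothing^{-1}N(\varnothing,c)$, which is an integer, and summing over the $2^t$ labels $c$ shows that it equals $2^{-t}|X(0)|=V$. It is positive because the image of $\phi$ contains $(\star,\ldots,\star)$, so some top face exists, and its lower cube meets every label class.

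\emph{Step 1 (raising a direction).} Fix $I$, $i\notin I$, and $c$ on $[t]\setminus I$. Let $c'$ be the restriction of $c$ to $[t]\setminus(I\cup\{i\})$. I would count the incidence pairs $(f,g)$ with $f\in X(I)$, $\wt f$ restricted to $[t]\setminus I$ equal to $c$, $g\in X(I\cup\{i\})$, $\wt g$ restricted to $[t]\setminus(I\cup\{i\})$ equal to $c'$, and $f\le g$. From the side of $f$: every $g\ge f$ of type $I\cup\{i\}$ automatically carries the label $c'$, since $\wt f\le\wt g$ forces the non-star coordinates to agree. By \eqref{eq:upper-face-count} there are exactly $n_i$ such $g$. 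From the side of $g$: the lower cube $X_{\le g}\cong(\mcC_t)_{\le\wt g}$ contains exactly one type-$I$ face whose $i$-coordinate equals $c_i$. Hence
\[
n_i\,N(I,c)=N(I\cup\{i\},c').
\]
This holds for both values of $c_i$, so $N(I,c)$ does not depend on $c_i$.

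\emph{Step 2 (induction).} Iterating Step~1 downward from $I$ to $\varnothing$, adding the directions of $I$ one at a time, gives
\[
N(\varnothing,\tilde c)\,D_I=N(I,c)
\]
for any extension $\tilde c$ of $c$ to $[t]$. Iterating in the other direction, from $I$ up to $[t]$, gives
\[
N(I,c)\,D_{[t]\setminus I}=|X([t])|.
\]
Consequently $N(I,c)=|X([t])|/D_{[t]\setminus I}$ is independent of $c$, and in particular $N(\varnothing,\cdot)$ is constant. Call this constant $V$. Then $|X(0)|=2^tV$, and $N(I,c)=VD_I$, which is \eqref{eq:parity-face-count}.

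\emph{Step 3.} Summing \eqref{eq:parity-face-count} over the $2^{t-|I|}$ labels $c$ gives $|X(I)|$. Summing again over all $I$ with $|I|=k$ gives \eqref{eq:cubical-face-counts}.

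The only delicate point is the uniqueness, in Step~1, of the lower face of $g$ with the prescribed label. This follows from the poset isomorphism in clause~1 of Definition~\ref{def:cubical-complex}. Everything else is bookkeeping.
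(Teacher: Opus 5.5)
Your proof is correct; it differs from the paper's mainly in how the double counting is organized.

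The paper argues in two stages.
\begin{enumerate}
\item It first shows that the vertex-class sizes $v_a=|\{x\in X(0):\wt x=a\}|$ are all equal, by counting color-$i$ edges from both endpoints. This is essentially your Step~1 with $I=\varnothing$.
\item It then gets \eqref{eq:parity-face-count} from a single double count of pairs $(x,f)$ with $\wt x=a$ and $x\le f\in X(I)$. This uses the full-star count $D_I$ from Lemma~\ref{lem:cubical-stars} and the fact that each type-$I$ face has exactly one vertex of a given label.
\end{enumerate}

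You instead chain the one-direction identities $n_iN(I,c)=N(I\cup\{i\},c')$. These need only the single-direction case of \eqref{eq:upper-face-count} and the two opposite facets of $g$ in direction $i$. You then get label-independence in one stroke by running the chain up to the top faces, which yields the closed form $N(I,c)=|X([t])|/D_{[t]\setminus I}$.

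The trade-off is this. The paper's route avoids iteration by using the whole star at a vertex at once. Yours avoids the separate vertex-flipping step and makes positivity of $V$ immediate from the existence of a single top face. The two arguments are about equally short.
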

\begin{proof}
For $a\in\{0,1\}^{[t]}$, let
$v_a:=|\{x\in X(0):\wt x=a\}|$. For $i\in[t]$, let $a^{(i)}$
be obtained by flipping coordinate $i$. Every color-$i$ edge with an endpoint
of label $a$ has its other endpoint labeled $a^{(i)}$.
Counting these edges from either endpoint gives
\[
n_i v_a=n_i v_{a^{(i)}}.
\]
Since $n_i>0$, all the $v_a$ are equal by successive coordinate
flips. Their common value is $V$, which is a positive integer
because $ X(0)\ne\varnothing$.

Fix $I,c$, and extend $c$ to $a\in\{0,1\}^{[t]}$.
By Lemma~\ref{lem:cubical-stars}, each of the $V$ vertices
labeled $a$ belongs to exactly $D_I$ faces of type $I$.
Each such face has exactly one vertex labeled $a$, by its
lower-cube isomorphism. Moreover, a type-$I$ face has such a
vertex precisely when its fixed coordinates equal $c$.
Counting the pairs $(x,f)$ with $\wt x=a$ and $x\le f\in X(I)$
therefore proves \eqref{eq:parity-face-count}. Summing over the
$2^{t-|I|}$ choices of $c$, and then over all $I$ of size $k$,
gives \eqref{eq:cubical-face-counts}.
\end{proof}

For an $(n_1,\ldots,n_t)$-regular cubical complex $ X$,
$I\subsetneq[t]$, and $j\in[t]\setminus I$, define the
parallel-face graph $G_{I,j}$ by
\[
V(G_{I,j})= X(I), \qquad E(G_{I,j})= X(I\cup\{j\}).
\]
An edge $q\in X(I\cup\{j\})$ joins its two type-$I$ faces
$f_j^0(q),f_j^1(q)$, determined by
\[
f_j^\epsilon(q)\le q, \qquad
\wt{f_j^\epsilon(q)}_j=\epsilon
\quad(\epsilon\in\{0,1\}).
\]
These are the opposite facets obtained by fixing coordinate $j$.
By \eqref{eq:upper-face-count}, every vertex belongs to exactly
$n_j$ such edges. Its random walk chooses one of these edges
uniformly and moves to the opposite face. The transition operator
$M_{I,j}:\R^{ X(I)}\to\R^{ X(I)}$ is therefore
\begin{equation}
\label{eq:parallel-face-walk}
(M_{I,j}z)(f)=\frac1{n_j}
\sum_{\substack{q\in X(I\cup\{j\})\\f\le q}}
z\!\left(f_j^{1-\wt f_j}(q)\right)
\qquad(z\in\R^{ X(I)},\ f\in X(I)).
\end{equation}

Each step preserves the coordinates outside $I\cup\{j\}$.
For $c\in\{0,1\}^{[t]\setminus(I\cup\{j\})}$, set
\[
C_{I,j,c}:=\set*{f\in X(I):\wt f|_{[t]\setminus(I\cup\{j\})}=c}.
\]
The walk restricts to $C=C_{I,j,c}$; denote its transition
operator there by $M_C$. The graph induced on $C$ is
$n_j$-regular and bipartite, with parts distinguished by
$\wt f_j\in\{0,1\}$. Both parts are nonempty by
\eqref{eq:parity-face-count}. On $\R^C$, use the counting
inner product $\langle z,w\rangle:=\sum_{f\in C}z(f)w(f)$
and its norm $\|z\|_2:=\langle z,z\rangle^{1/2}$. Write
$\mathbf{1}_C(f):=1$ and $\sigma_C(f):=(-1)^{\wt f_j}$ for $f\in C$.
Since every edge has opposite endpoints and is chosen with
probability $1/n_j$ in either direction,
\[
M_C^{\mathsf T}=M_C, \qquad
M_C\mathbf{1}_C=\mathbf{1}_C, \qquad M_C\sigma_C=-\sigma_C.
\]

\begin{definition}
\label{def:expanding-cubical-complex}
Let $0\le\lambda<1$. An $(n_1,\ldots,n_t)$-regular cubical
complex $ X$ is $\lambda$-expanding if, for every
$I\subsetneq[t]$, $j\in[t]\setminus I$, and
$c\in\{0,1\}^{[t]\setminus(I\cup\{j\})}$, the induced graph
$G_{I,j}|_{C_{I,j,c}}$ is connected and
\begin{equation}
\label{eq:cubical-spectral-expansion}
\|M_C z\|_2\le\lambda\|z\|_2
\qquad
\text{for every }z\in\operatorname{span}\{\mathbf{1}_C,\sigma_C\}^{\perp},
\qquad C=C_{I,j,c}.
\end{equation}
\end{definition}
Thus the prescribed sets $C_{I,j,c}$ are exactly the connected
components of $G_{I,j}$, and \eqref{eq:cubical-spectral-expansion}
bounds the absolute values of all eigenvalues other than the
two bipartite eigenvalues $1$ and $-1$ on each component.

\subsection{Cosheaves and Compatible Base Codes}
\label{sec:cosheaves}
Fix a finite field $F=\F_{2^s}$ and an $(n_1,\ldots,n_t)$-regular
cubical complex $X$.

\begin{definition}\label{diag:ascending-system}
A cosheaf $\mathcal F$ on $X$ assigns a coordinate space
$\mathcal F(f)=F^{d_f}$ to each face and a linear map
$R_{gf}:\mathcal F(f)\to\mathcal F(g)$ to each $f\leq g$, with
\[
 R_{ff}=I,\qquad R_{hg}R_{gf}=R_{hf}\quad(f\leq g\leq h).
\]
Zero-dimensional stalks are allowed.
\end{definition}

Fix integers $0\leq m_{i,b}\leq n_i$ for $i\in[t]$, $b\in\{0,1\}$.
For $f\in X$, put $J_f=\codir(f)$ and
$U_{f,i}=F^{m_{i, \wt{f}_i}}$ for $i\in J_f$.

\begin{definition}\label{def:compatible-base-codes}
\label{cube:star-chart-certificate}\label{hom:array-star}
Compatible base-code data for $\mathcal F$ consist, for each face $f$, of
an anchor vertex $x_f\leq f$, injective linear encoders
\[
 E_{f,i}:U_{f,i}\hookrightarrow F^{\Omega_{f,i}},
 \qquad \Omega_{f,i}=E_i(x_f)\quad(i\in J_f),
\]
and isomorphisms
\[
 \Psi_{f,g}:\mathcal F(g)\xrightarrow{\sim}
 \bigotimes_{i\in\codir(g)}U_{f,i}\quad(g\geq f).
\]
They must simultaneously satisfy the following identities. If
$T\subseteq J_f$, $\omega_T\in\prod_{i\in T}\Omega_{f,i}$,
$j\in J_f\setminus T$, and $\omega_j\in\Omega_{f,j}$, set
\[
 g=\theta_{f,x_f}(T,\omega_T),\qquad
 h=\theta_{f,x_f}(T\cup\{j\},(\omega_T,\omega_j)).
\]
Then
\begin{equation}\label{eq:compatible-charts}
 \Psi_{f,h}R_{hg}\Psi_{f,g}^{-1}
 =\left(\operatorname{ev}_{\omega_j}\circ E_{f,j}\right)
       \otimes\bigotimes_{i\in J_f\setminus(T\cup\{j\})}I_{U_{f,i}}.
\end{equation}
The scalar output is removed by the canonical tensor identification.
Write $C_{f,i}=\im E_{f,i}$ for these base codes.
\end{definition}
The geometric indexing in this definition is supplied by regularity;
only the encoders and their simultaneous linear charts are additional
structure. No agreement between charts with different anchors is
required beyond their intertwining the same incidence maps.
In particular,
\begin{equation}\label{cube:chart-dimensions}
 d_f=\prod_{i\in\codir(f)}m_{i, \wt{f}_i},\qquad
 \mathcal F(f)\cong F\quad(\dim f=t).
\end{equation}

\subsection{Cosheaf Complexes and Their CSS Codes}
\label{diag:cubical-cochains}
For $j\in\mathbb Z$, define
\[
 C^j=C^j(X;\mathcal F)=\bigoplus_{f\in X(j)}\mathcal F(f),\qquad
 (\delta^j x)(g)=\sum_{f\in X_{\leq g}(j)}R_{gf}x(f),\qquad
 |x|_b=|\{f:x(f)\ne0\}|.
\]
The spaces are zero outside $0\leq j\leq t$. When discussing
transposes, we also use $C_j=C^j$, $D_j=\delta^j$, and
$B_j=(\delta^{j-1})^{\mathsf T}$; the subscript on $C_j$ still
specifies the cochain degree.
For a face $v$, restriction to faces containing $v$ gives its
upward complex, denoted $(C_v^\bullet,\delta_v)$, with local degree
$j-\dim v$ on global $j$-faces. The simultaneous charts identify
this complex with the tensor evaluation complex of Section~\ref{ten:section}.
They act separately in each face and hence preserve block support.

\begin{lemma}\label{cube:differential-square}
For every $j$, $\delta^{j+1}\delta^j=0$.
\end{lemma}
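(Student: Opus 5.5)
Expanding the composite $\delta^{j+1}\delta^j$ face by face reduces the claim to the cubical interval structure and characteristic two; the encoders and charts are not needed. For $x\in C^j$ and $h\in X(j+2)$,
\[
 (\delta^{j+1}\delta^j x)(h)=\sum_{g\in X_{\le h}(j+1)}R_{hg}\sum_{f\in X_{\le g}(j)}R_{gf}x(f)
 =\sum_{f\in X_{\le h}(j)}\Bigl(\sum_{g\in[f,h]\cap X(j+1)}R_{hg}R_{gf}\Bigr)x(f).
\]
The second equality exchanges the order of summation over pairs $f\le g\le h$ with $\dim f=j$ and $\dim g=j+1$.

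By the cosheaf axiom of Definition~\ref{diag:ascending-system}, every term in the inner sum equals $R_{hf}$. The inner sum is therefore $N_{f,h}R_{hf}$, where $N_{f,h}$ counts the $(j+1)$-faces in the interval $[f,h]$.

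By the lower-cube isomorphism, $[f,h]\cong 2^{\dir(h)\setminus\dir(f)}$, and $|\dir(h)\setminus\dir(f)|=2$. This is the observation after \eqref{eq:lower-face-count} that such an interval has exactly two intermediate faces, so $N_{f,h}=2$. Since $F=\F_{2^s}$ has characteristic two, $2R_{hf}=0$, and every coefficient vanishes.

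The degrees outside $0\le j\le t-2$ need a separate check. There $C^{j+2}=0$ or $C^j=0$, so the identity holds trivially; for instance, if $j+2>t$ then $X(j+2)=\varnothing$.

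No step here is a genuine obstacle. The only point needing care is that both routes through the square compose to the same map $R_{hf}$. This is exactly the functoriality $R_{hg}R_{gf}=R_{hf}$, which is built into the definition of a cosheaf, so compatibility of the charts in \eqref{eq:compatible-charts} is not required for this lemma.
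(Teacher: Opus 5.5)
Your proposal is correct and follows essentially the same route as the paper: expand the composite block at $f\le h$, use the two intermediate faces of the codimension-two interval, identify each route with $R_{hf}$ by functoriality, and cancel in characteristic two. Your explicit exchange of summation and your check of the out-of-range degrees spell out details the paper leaves implicit.
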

\begin{proof}
For $f\in X(j)$ and $h\in X(j+2)$ with $f\leq h$, the interval
$[f,h]$ has exactly two intermediate faces. Each contributes $R_{hf}$
to the composite matrix block. Their sum is zero in characteristic
two. If $f\not\leq h$ there is no contribution.
\end{proof}

Suppose henceforth that $0<m_{i,b}<n_i$ for every $i,b$.
Set $M_j=\dim_F C^j$, $Z_j=|X(j)|$ and
$h_j=\dim_F(\ker\delta^j/\im\delta^{j-1})$.
For an ordered binary basis $\beta$ of $F$, let $R_\beta(T)$ be
the matrix of an $F$-linear map $T$ after expressing every field
coordinate in $\beta$. In degree $1\leq k<t$ the construction is
\begin{equation}\label{con:matrix-construction}
 H_X=R_\beta(\delta^k),\qquad
 H_Z=R_\beta(\delta^{k-1})^{\mathsf T},\qquad N=sM_k.
\end{equation}
Since restriction of scalars respects composition,
$H_XH_Z^{\mathsf T}=0$. The listed numbers of checks are $sM_{k+1}$
and $sM_{k-1}$, and the number of encoded qubits is $sh_k$.
Every degree has positive dimension by regularity and the strict
endpoint inequalities.

Put
\begin{equation}\label{comb:endpoint-data}
 A_0=V\prod_i n_i,\qquad \mu_{i,b}=m_{i,b}/n_i,\qquad
 S_i=\mu_{i,0}+\mu_{i,1}.
\end{equation}
Counting faces of each type and parity, as in
Lemma~\ref{lem:cubical-face-counts}, gives
\begin{equation}\label{eq:cochain-count}
 M_j=\sum_{|I|=j}V\prod_{i\in I}n_i
                       \prod_{i\notin I}(m_{i,0}+m_{i,1})
     =A_0e_{t-j}(S).
\end{equation}

\begin{lemma}\label{bin:locality}
Let $n=\max_i n_i$ and $1\leq k<t$. In arbitrary bases of the
individual stalks, each row and column of $\delta^{k-1}$ and
$\delta^k$ has weight at most
\[
 w(t,k,n)=\max\{2(k+1),t-k+1\}n^{t-k+1}.
\]
Every middle stalk has dimension at most $n^{t-k}$. The binary
code has check weight at most $sw(t,k,n)$ and qubit incidence at
most $2sw(t,k,n)$.
\end{lemma}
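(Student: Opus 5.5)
We bound the weights blockwise. Each stalk has a block of rows and columns, and the entries of a row or column are counted inside the nonzero blocks it meets. Since the bases of the stalks are arbitrary, a block $R_{gf}$ may be dense. So a row or column weight is at most the sum, over the incident blocks, of the dimension of the other stalk. First, by \eqref{cube:chart-dimensions}, a face $f$ of dimension $j$ has
\[
 d_f=\prod_{i\in\codir(f)}m_{i,\wt f_i}\le n^{t-j},
\]
because $m_{i,b}<n_i\le n$. This is the stalk-dimension claim for $j=k$.

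\emph{Columns.} Fix $f\in X(j)$. Its column in $\delta^j$ meets only the blocks $R_{gf}$ with $g\ge f$ and $\dim g=j+1$. By Lemma~\ref{lem:cubical-stars}, specifically \eqref{eq:upper-face-count}, there are $\sum_{i\in\codir(f)}n_i\le(t-j)n$ such faces $g$. Each has $d_g\le n^{t-j-1}$, so the column weight is at most $(t-j)n^{t-j}$.

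\emph{Rows.} A row of $\delta^j$ indexed by a coordinate of $g\in X(j+1)$ meets the $|X_{\le g}(j)|=2(j+1)$ lower faces $f$, by \eqref{eq:lower-face-count}. Each has $d_f\le n^{t-j}$, so the row weight is at most $2(j+1)n^{t-j}$.

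\emph{Specializing the degree.}
\begin{itemize}
\item For $j=k$: rows are at most $2(k+1)n^{t-k}$ and columns at most $(t-k)n^{t-k}$.
\item For $j=k-1$: rows are at most $2k\,n^{t-k+1}$ and columns at most $(t-k+1)n^{t-k+1}$.
\end{itemize}
Each of these four bounds is at most
\[
 \max\{2(k+1),\,t-k+1\}\,n^{t-k+1}=w(t,k,n),
\]
using $n\ge1$. This proves the first claim.

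\emph{Binary code.} Restriction of scalars replaces each $F$-entry by an $s\times s$ binary block, so row and column weights grow by at most a factor $s$. Now $H_X=R_\beta(\delta^k)$, and the rows of $H_Z=R_\beta(\delta^{k-1})^{\mathsf T}$ are the columns of $R_\beta(\delta^{k-1})$. Hence every check has weight at most $s\,w(t,k,n)$.

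A qubit is a column of $H_X$ and a column of $H_Z$. The latter corresponds to a row of $R_\beta(\delta^{k-1})$. Its incidence is the sum of these two column weights, so it is at most $2s\,w(t,k,n)$.

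The only point that needs care is the index bookkeeping: the transpose in $H_Z$ exchanges rows and columns. This is why both row and column bounds are needed in each of the two degrees.
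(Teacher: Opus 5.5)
Your proof is correct and follows the paper's route. Like the paper, you bound the row and column weights of $\delta^j$ blockwise by $2(j+1)n^{t-j}$ and $(t-j)n^{t-j}$, using the facet count, the coface count, and the stalk bound $d_f\le n^{t-\dim f}$; you then specialize to $j=k-1,k$ and multiply by $s$ for restriction of scalars. Your account of which bound governs check weight and which governs qubit incidence, via the transpose in $H_Z$, is a more explicit version of the paper's remark that a qubit belongs to two check lists.
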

\begin{proof}
A $(j+1)$-face has $2(j+1)$ facets, each with stalk dimension at
most $n^{t-j}$. A $j$-face has at most $(t-j)n$ immediate cofaces,
each with stalk dimension at most $n^{t-j-1}$. Thus the row and
column bounds for $\delta^j$ are $2(j+1)n^{t-j}$ and
$(t-j)n^{t-j}$, respectively. Apply these with $j=k-1,k$.
Each scalar entry becomes an $s\times s$ binary block, multiplying
row and column weights by at most $s$; a qubit belongs to two check
lists. The stalk bound is \eqref{cube:chart-dimensions}.
\end{proof}

\subsection{Main Results}
\label{sec:main-results}
Our abstract criterion separates the dimension condition from the
expansion conditions. In addition to $S_i$, put
\[
 p_i=\min_b\mu_{i,b},\qquad q_i=1-\max_b\mu_{i,b},\qquad
 \mathcal B_j=[z^j]\prod_i(p_i+q_i z),
\]
\[
 \Gamma_k=(-1)^k\prod_i(S_i-1)
       -\sum_{\substack{j\equiv k\ (2)\\0\leq j\leq t,\ j\ne k}}
                   \mathcal B_j.
\]
The explicit positive function $\mathcal L(t,k,\rho,K)$ is defined
in \eqref{geo:abstract-constants}; it is independent of the local lengths $n_i$.

\begin{theorem}\label{thm:abstract}
Fix integers $t\geq4$, $2\leq k\leq t-2$, $s\geq1$, degrees
$n_i\geq2$, and endpoint dimensions $0<m_{i,b}<n_i$. Fix reals
$K\geq1$, $0<\rho\leq1$, $0\leq\lambda<1$, with
\[
 \frac{\max_i n_i}{\min_i n_i}\leq K,\qquad
 \Gamma_k>0,\qquad \lambda\mathcal L(t,k,\rho,K)<1.
\]
Suppose $(X_\nu,\mathcal F_\nu)_{\nu\geq1}$ have compatible base
codes over $F=\F_{2^s}$ with these fixed dimensions, each $X_\nu$
is $(n_1,\ldots,n_t)$-regular and $\lambda$-expanding, and
$|X_\nu(0)|\to\infty$. Suppose every nonempty subtuple of the base
codes in each chart, and of their dual codes, is
universally $\rho$-product-expanding.
Then \eqref{con:matrix-construction} produces binary CSS codes of lengths
$N_\nu\to\infty$, rate at least $\Gamma_k/e_{t-k}(S)>0$, and
positive constant relative distance and soundness. Their check weight
and qubit incidence have the bounds of Lemma~\ref{bin:locality}.
All constants are independent of $\nu$.
\end{theorem}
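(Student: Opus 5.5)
The plan has three parts: rate, distance, and soundness. Locality is already given by Lemma~\ref{bin:locality}.

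\textbf{Rate.} Since $sh_k$ encoded qubits out of $N=sM_k$ are produced, the rate is $h_k/M_k$, and we need to bound $h_k$ from below.
\begin{itemize}
\item By \eqref{eq:cochain-count}, $M_j=A_0e_{t-j}(S)$.
\item The Euler characteristic gives $\sum_j(-1)^jh_j=\sum_j(-1)^jM_j=A_0\sum_j(-1)^je_{t-j}(S)=(-1)^tA_0\prod_i(1-S_i)$, which equals $(-1)^kA_0\cdot(-1)^k\prod_i(S_i-1)$ up to the sign convention in $\Gamma_k$.
\item It then suffices to show $h_j\leq A_0\mathcal B_{t-j}$, or a matching index, for every $j\neq k$ of the same parity as $k$. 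Terms of the opposite parity enter with the favourable sign.
\item For these bounds I would work with the upward local complexes. The charts identify $C_v^\bullet$ with a tensor evaluation complex. By Lemma~\ref{cert:product-check}, the local cohomology of a tensor of two-term complexes $U\to F^{n_i}$ is the tensor of $\ker E=0$ and $\operatorname{coker}$, of dimension $n_i-m_i$.
\item Comparing $\mathcal B$ with the choice of $\min_b\mu$ and $1-\max_b\mu$ suggests a spectral-sequence or filtration-by-direction argument. In each direction the cosheaf looks like a Tanner-type complex. Its $H^0$ is bounded by the smaller endpoint code dimension, $p_i$, and its $H^1$ by the larger co-dimension deficiency, $q_i$.
\item Filtering by directions and using subadditivity of cohomology along the spectral sequence gives $h_j\leq A_0[z^{j}]\prod_i(p_i+q_iz)$, after normalising indices.
\item Combining, $h_k\geq A_0\Gamma_k$, so the rate is at least $\Gamma_k/e_{t-k}(S)$.
\end{itemize}

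\textbf{Distance and soundness.} I would follow the DLV local-to-global scheme.
\begin{enumerate}
\item Call a cochain \emph{locally minimal} if no change supported in a single upward star reduces $|x|_b$.
\item Use universal $\rho$-product expansion of every subtuple (Definition~\ref{loc:weighted-product-expansion}) in the chart at each face $v$ to obtain local fillings. If $\delta x$ is small near $v$, then the restriction of $x$ to the star of $v$ is close to a local coboundary. The filling cost is at most $\rho^{-1}$ times the local syndrome weight, with constants depending on $t$ and $K$ because the $n_i$ differ by at most a factor $K$.
\item Local minimality then forces each face in $\supp x$ to have a constant fraction of its upward neighbours also in the support, up to a syndrome term.
\item The $\lambda$-expanding walks $M_{I,j}$ are bipartite with spectral gap, so by the expander mixing lemma on each component $C_{I,j,c}$ a set of density $\alpha$ has at most $(\alpha^2+\lambda\alpha)$ edge density inside it.
\item Iterating this over the $t$ directions shows that the resulting lower bound on neighbours exceeds what mixing permits unless $\alpha\gtrsim1/\mathcal L$ or $\lambda\mathcal L\geq1$. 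This defines the constant $\mathcal L(t,k,\rho,K)$ and uses the hypothesis $\lambda\mathcal L<1$.
\item Applied to cocycles, this gives that every small locally minimal cocycle is zero, hence cosystolic distance $\geq cM_k$. Applied to arbitrary $x$, repeatedly performing local weight reductions, each of which drops the syndrome and weight in a controlled way, yields $|\delta x|\geq c'\,\mathrm{dist}(x,\ker\delta)$. This is the coboundary/cosystolic-expansion statement needed for the $X$ side.
\item For the $Z$ side, the transposed complex $(C_j,B_j)$ is again a cosheaf complex, with base codes the duals $C_{f,i}^\perp$. This is where product expansion of the dual codes is required, via DLV's double-complex transfer. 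The same argument gives homological distance and soundness.
\end{enumerate}

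\textbf{Binary reduction.} Restriction of scalars via $\beta$ changes block weights by at most a factor $s$. Appendix~\ref{sec:css-appendix}'s syndrome criterion then converts these syndrome inequalities into $\mathsf H\succeq(\sigma/N)\mathsf D_{\mathscr Q}$, with $m$ comparable to $N$ by \eqref{eq:cochain-count}.

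I expect the \textbf{main obstacle} to be the rate bound, specifically the vanishing and bounding of $h_j$ for $j\neq k$ with the exact constants $\mathcal B_j$. The expansion half follows DLV nearly verbatim once the charts are in place. The cohomology bound, by contrast, needs a careful direction-by-direction filtration using both endpoint dimensions $m_{i,0},m_{i,1}$. I would try first the spectral sequence of the filtration by the number of directions in a fixed subset.
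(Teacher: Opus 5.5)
Your overall architecture matches the paper's. For the rate you use the Euler characteristic plus upper bounds on the other Betti numbers. For cosystolic distance and soundness you use local fillings from universal product expansion, local cominimality and parallel-face mixing. For the $Z$ side you use a double-complex transfer to a dual-code cosheaf, and then binary conversion.

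The genuine gap is the step you flag yourself: $h_j\le A_0\mathcal B_j$ for $j\ne k$. ``Filtering by directions and using subadditivity along the spectral sequence'' does not, as stated, make the one-direction savings multiply. A filtration by directions, whether by a single direction or by $|\operatorname{dir}(f)\cap P|$, has a sum of whole directional maps $\delta_i$ as its graded differential. Its $E_1$ page is therefore directional cohomology, which is only a subquotient. The Tanner heuristic gives the factor $p_1+q_1z$ there. To repeat it in direction $2$, you need the direction-$2$ map out of the layer with the larger endpoint dimension to stay injective on that page. Injectivity is not inherited by subquotients, in particular by the cokernel part. So subadditivity alone only yields savings in the first filtered direction.

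The missing idea is a cohomology-preserving reduction adapted to the larger endpoint bit. Choose $a_i$ with $m_{i,a_i}=\max_b m_{i,b}$. The simultaneous charts show that the map out of the $a_j$-layer is $I\otimes E_{f,j}$, hence injective, even after the relations of the already processed directions are imposed (Lemma~\ref{rankcomp:quotient-charts}). The paper then quotients by the acyclic subcomplex $H\oplus d_jH$ (Lemma~\ref{rankcomp:one-direction}). This preserves cohomology and leaves a genuine commuting multicomplex in the remaining directions, so the step can be iterated over all $t$ directions. The final degree-$q$ space has dimension $V\sum_{|T|=q}\prod_{i\in T}(n_i-m_{i,a_i})\prod_{i\notin T}m_{i,1-a_i}\le A_0\mathcal B_q$.

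If you prefer a one-shot spectral sequence, filter instead by the number of codirections of $f$ whose bit is $1-a_i$. The graded differential then keeps only incidences that replace a bit $a_i$ by a star. It splits over vertices $x$ into partial upward-star tensor complexes, in the directions where $x$ carries the bit $a_i$. Lemma~\ref{ten:exactness} gives their cohomology, concentrated in top degree with exactly these dimensions. This is the local computation you already quote, applied to partial rather than full stars.

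Two smaller corrections:
\begin{itemize}
\item The transposed complex is not itself a cosheaf complex with dual base codes. The paper builds the kernel cosheaf $\mathcal K$ and transfers its cohomology in degrees $t-k$ and $t-k+1$ to homology of $\mathcal F$ in degree $k$. This, together with degree $k+1$ for syndrome repair, is why $2\le k\le t-2$ is needed.
\item The block-to-binary loss is $sM_2\le sn^{t-k}$, not $s$. It is still a constant.
\end{itemize}
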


\begin{theorem}\label{thm:simultaneous}
For every $t\geq4$ and $2\leq k\leq t-2$, there are fixed
parameters and a family $(X_\nu,\mathcal F_\nu)$ satisfying every
hypothesis of Theorem~\ref{thm:abstract}.
\end{theorem}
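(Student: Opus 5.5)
The plan is to assemble the family from three independent ingredients described in the introduction, fixing parameters in the right order so that every constant stays fixed as $\nu\to\infty$.

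\emph{Local codes first.} Fix $t$, $k$, a ratio bound $K$ (say $K=4$), and rate intervals away from $0$ and $1$ for the $\mu_{i,b}$. For each direction $i$ choose a power of two $q_i=2^{a_i}$, with the $a_i$ distinct and the $q_i+1$ within ratio $K$, and take $n_i=q_i+1$. Identify the projective line $\PP^1(\F_{q_i})$ with the norm-one subgroup of $\F_{q_i^2}^\times$ up to rescaling. Take the base codes to be projective Reed--Solomon codes whose two endpoint degrees $m_{i,0},m_{i,1}$ are nearby, with one of the two reindexed by Frobenius on the evaluation arguments. Then apply Theorem~\ref{exloc:finite-grid-pe} to all nonempty subtuples of these codes and of their duals. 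It yields a single $\rho>0$ depending only on $t$, $K$ and the rate intervals, and in particular not on the $n_i$. Universality and invariance under monomial maps come from Lemma~\ref{cert:scalar-extension}, which covers the chart-dependent rescalings. With $\rho$ fixed, $\mathcal L(t,k,\rho,K)$ is fixed, so I set the target $\lambda<1/\mathcal L$.

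\emph{Geometry next.} The Ramanujan property gives nontrivial eigenvalues of the parallel-face walks bounded by roughly $2\sqrt{q_j}/(q_j+1)$. I would enlarge all $a_i$ simultaneously, keeping the ratios and the rate fractions fixed, until this falls below $\lambda$. This step does not lower $\rho$, because the product-expansion constant is uniform in the lengths. Then I use the Jordan--Livn\'e/Livn\'e arithmetic lattice acting on $\prod_i T_{q_i+1}$, whose residue fields have characteristic two. Congruence quotients $X_\nu$ give $(n_1,\dots,n_t)$-regular cubical complexes once the level is deep enough that upward stars embed; this gives the regularity and the charts of Lemma~\ref{lem:cubical-stars}. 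The same arithmetic input supplies connected components $C_{I,j,c}$ with the spectral bound \eqref{eq:cubical-spectral-expansion}, and $|X_\nu(0)|\to\infty$.

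\emph{Cosheaves and dimension last.} On each tree I build incidence maps that are equivariant under the vertex stabilizer, acting by $\mathrm{PGL}_2$ of the residue field on the projective line. The matching of diagonal scalars across an edge is handled by the paired degrees together with the Frobenius reindexing. I tensor these over the directions, descend them to the quotient, and read off \eqref{eq:compatible-charts}. Square-zero holds because the two routes around a square evaluate different tensor factors. Finally I choose $\mu_{i,b}$ so that $\Gamma_k>0$: directions in two coordinates use the endpoint codes, and the other two use their duals, giving $S_i$ on either side of $1$ with the sign $(-1)^k\prod_i(S_i-1)>0$. The $\mathcal B_j$ are made small by taking $p_i$ or $q_i$ small in the appropriate pattern. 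For $t>4$ I would pad with extra directions. This is a finite polynomial inequality, and rational points exist in open regions compatible with the Reed--Solomon degree constraints.

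The main obstacle I expect is the equivariance step: constructing charts that are simultaneously compatible at every anchor while the stabilizer's scalar actions differ at the two endpoints of an edge. I would first check this for $t=1$ (a single edge and its two stars), then tensor the result.
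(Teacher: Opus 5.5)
Your outline follows the paper's route (Lemma~\ref{real:admissible-data} feeding Theorem~\ref{real:main}). You fix $\rho$ from Theorem~\ref{exloc:finite-grid-pe} uniformly in the lengths. You shift all exponents by a common $r$ until $\lambda\le2/\sqrt{q_-}$ falls below $1/\mathcal L$. You then tensor and descend Frobenius-paired projective RS diagrams. Two of your parameter choices would fail as written, however.

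\textbf{The ratio bound.} $K=4$ cannot work for $t\ge4$. The exponents must be pairwise distinct; this is what gives $C'\ne C$ in Lemma~\ref{exinc:coordinate-degrees}. Hence $\max_iq_i/\min_iq_i\ge2^{t-1}\ge8$. Take $a_i=i-1$ and $K=2^{t-1}$, as the paper does.

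\textbf{The endpoint rates.} These are not free quantities to be picked at the end from an open region. Matching the edge characters forces $d\equiv-2^he\pmod{q-1}$. Without the Frobenius twist one gets $d=q-1-e$, hence $S_i=1$ and $\prod_i(S_i-1)=0$. So both rates are pinned near $1/(2^h+1)$ and drift with $q$; they cannot be held literally fixed while you enlarge the $a_i$.

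The order of choices must therefore be:
\begin{enumerate}
\item choose $\tau$ with $G_{t,k}(\tau)>0$, e.g.\ $\tau=(8t2^t)^{-1}$ by Lemma~\ref{bin:rate-positive};
\item choose $h$ with $2^h>1/\tau$, then $\epsilon<1/(2^h+1)$;
\item only then fix $\rho(t,K,\epsilon)$ and $\mathcal L$;
\item finally choose $r$ with $2^{r+a_-}\ge Q_{\mathrm{end}}(h,\epsilon,\tau)$ and $2^{r+a_-}>16\mathcal L^2$.
\end{enumerate}
This guarantees that all rates lie in $[\epsilon,\tau]$ for every admissible $q$ (Theorem~\ref{exloc:reciprocal-prs}).

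A rate interval fixed beforehand, without this dependence, can make your final step impossible. For example, if all $\mu_{i,b}\in[1/3,2/3]$ with $t=4$, $k=2$, then $|S_i-1|\le1/3$ and $p_i,q_i\ge1/3$. This gives $\Gamma_2\le3^{-4}-2\cdot3^{-4}<0$.

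\textbf{The pattern for general $(t,k)$.} ``Padding'' has to mean using the low-rate endpoint codes in exactly $|L|=k$ directions and their duals in the other $t-k$. Then $\prod_i(p_i+q_iz)\le(\tau+z)^k(1+\tau z)^{t-k}$ coefficientwise. So the $\mathcal B_j$ with $j\equiv k\pmod 2$, $j\ne k$, sum to at most $2^t\tau^2$, while $(-1)^k\prod_i(S_i-1)\ge(1-2\tau)^t$ (Corollary~\ref{bin:rate-numbers}).
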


\begin{theorem}\label{thm:main-qltc}
There are positive constants $R,\Delta,\sigma$ and an integer $w$
and a family of binary CSS codes with lengths $N_\nu\to\infty$
such that their rates are at least $R$, their distances are at
least $\Delta N_\nu$, their specified check lists have soundness
at least $\sigma$, and both check weight and qubit incidence are
at most $w$.
\end{theorem}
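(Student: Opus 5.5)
Theorem~\ref{thm:main-qltc} follows by combining Theorem~\ref{thm:simultaneous} with Theorem~\ref{thm:abstract}, taking the smallest admissible dimensions. We fix $t=4$ and $k=2$, so that $2\leq k\leq t-2$ holds. Theorem~\ref{thm:simultaneous} then supplies the following fixed data:
\begin{itemize}
\item a field size $s$;
\item degrees $n_i$ with $\max_i n_i/\min_i n_i\leq K$;
\item endpoint dimensions $0<m_{i,b}<n_i$ with $\Gamma_2>0$;
\item a product-expansion constant $\rho$;
\item a spectral bound $\lambda$ with $\lambda\mathcal L(4,2,\rho,K)<1$;
\item a family $(X_\nu,\mathcal F_\nu)$ of $(n_1,\ldots,n_4)$-regular, $\lambda$-expanding complexes with compatible base codes, with $|X_\nu(0)|\to\infty$, and with every nonempty subtuple of the chart base codes and of their duals universally $\rho$-product-expanding.
\end{itemize}
These are exactly the hypotheses of Theorem~\ref{thm:abstract}.

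Applying Theorem~\ref{thm:abstract}, the binary CSS codes \eqref{con:matrix-construction} in degree $k=2$ have lengths $N_\nu=sM_2\to\infty$. By \eqref{eq:cochain-count}, $M_2=A_0e_2(S)$, and this grows with $V=2^{-t}|X_\nu(0)|$. The theorem also gives:
\begin{itemize}
\item rate at least $R:=\Gamma_2/e_2(S)>0$;
\item relative distance at least some $\Delta>0$;
\item soundness at least some $\sigma>0$, in the sense of Definition~\ref{def:qltc} with the full listed check count.
\end{itemize}
All of these constants are independent of $\nu$. For locality, Lemma~\ref{bin:locality} bounds both the check weight and the qubit incidence by
\[
w:=2s\,w(4,2,n)=2s\max\{6,3\}n^{3}=12sn^3,
\]
where $n=\max_i n_i$. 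Since $s$ and the $n_i$ are fixed, $w$ is a constant integer.

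The only points needing care are bookkeeping. First, every constant in Theorem~\ref{thm:abstract} must depend only on the fixed parameters $(t,k,s,n_i,m_{i,b},K,\rho,\lambda)$, and not on $\nu$; that theorem asserts this explicitly. Second, the family must be infinite with lengths tending to infinity, which follows from $|X_\nu(0)|\to\infty$. Third, the distance bound must be read against \eqref{eq:css-distance}, so that ``distance at least $\Delta N_\nu$'' is the conclusion produced by the abstract theorem.

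All of the substance lies in the two cited theorems. The real obstacle is Theorem~\ref{thm:simultaneous}, which must realize the Ramanujan geometry, the equivariant Reed--Solomon charts, and uniform product expansion simultaneously. Once that theorem is assumed, the present statement is a direct instantiation.
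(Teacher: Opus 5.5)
Your proposal is correct and is essentially the paper's proof: the paper likewise applies Theorem~\ref{thm:abstract} to a family supplied by Theorem~\ref{thm:simultaneous}, reads off $R,\Delta,\sigma$ from its rate, distance, and soundness bounds, and takes $w$ to be the larger of the two locality bounds. Your specialization to $t=4$, $k=2$ and the explicit common bound $w=2s\,w(4,2,n)=12sn^3$ are consistent with Lemma~\ref{bin:locality}.
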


Theorem~\ref{thm:abstract} is proved in Section~\ref{comb:section}.
Sections~\ref{exar:section}--\ref{real:section} realize its hypotheses
and prove Theorems~\ref{thm:simultaneous} and~\ref{thm:main-qltc}.
Section~\ref{real:section} also gives the fixed specialization with
rate at least $2^{-5}$.
The local-code input is a uniform product-expansion theorem for
Reed--Solomon codes on products of distinct-exponent norm-one sets,
proved in Section~\ref{exloc:chapter}. Its constant is uniform in
the set sizes and over extensions of the coefficient field; these
two uniformities are used separately in the realization and in the
local filling argument.

\section{Rate and Dimension}\label{rankcomp:section}\label{sec:rate}
Throughout this section, $\mathcal F$ has compatible base codes with
$0<m_{i,b}<n_i$. No expansion hypothesis is imposed. We prove
\begin{equation}\label{eq:rate-headline}
 \frac{h_k}{M_k}\geq\frac{\Gamma_k}{e_{t-k}(S)}
 \qquad(0\leq k\leq t),
\end{equation}
with the quantities of Section~\ref{sec:main-results}.

\subsection{Directional Differentials and Partial-Star Quotients}
\label{rankcomp:data}
Let $\delta_i$ retain the incidences adding direction $i$.
Cubical intervals and functoriality give
\begin{equation}\label{rankcomp:directional-identities}
 \delta=\sum_i\delta_i,\qquad \delta_i^2=0,\qquad
 \delta_i\delta_j=\delta_j\delta_i\quad(i\ne j).
\end{equation}
Indeed, a direction cannot be added twice, and the two paths adding
$i,j$ from $f$ to $h$ both have composite $R_{hf}$.

For $f\in X(J)$ and $T\subseteq[t]\setminus J$, put
\begin{align*}
 W_T(f)&=\bigoplus_{g\geq f,\ \dir(g)=J\cup T}\mathcal F(g),\\
 B_T(f)&=\sum_{i\in T}\im\bigl(D_{T,i}(f):W_{T\setminus\{i\}}(f)
                                      \longrightarrow W_T(f)\bigr),\\
 Q_T(f)&=W_T(f)/B_T(f),
\end{align*}
where $D_{T,i}$ uses the actual incidences adding $i$.
\label{rankcomp:partial-quotients}

\begin{lemma}\label{rankcomp:quotient-charts}
In the single chart at $f$, write $U_i=U_{f,i}$,
$V_i=F^{\Omega_{f,i}}$, $C_i=\im E_{f,i}$. Then
\[
 Q_T(f)\cong\bigotimes_{i\in T}(V_i/C_i)
                  \otimes\bigotimes_{i\notin J\cup T}U_i.
\]
For $j\notin J\cup T$, the induced map
\[
 Q_T(f)\longrightarrow\bigoplus_{h\geq f,\ \dir(h)=J\cup\{j\}}Q_T(h)
\]
is injective, with cokernel $Q_{T\cup\{j\}}(f)$.
\end{lemma}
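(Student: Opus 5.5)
The plan is to work entirely inside the single simultaneous chart at $f$ from Definition~\ref{def:compatible-base-codes}, including when describing the quotients $Q_T(h)$ at the larger faces $h$. Since $W_T$, $B_T$, $Q_T$ are defined intrinsically from the actual incidence maps $R$, they can be computed in any chart. Only the chart at $f$ is needed, so no comparison between charts with different anchors arises.

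\emph{Step 1 (the chart description of $W_T(f)$ and $B_T(f)$).} By Lemma~\ref{lem:cubical-stars}, the faces $g\geq f$ with $\dir(g)=J\cup T$ are exactly $\theta_{f,x_f}(T,\omega_T)$ with $\omega_T\in\prod_{i\in T}\Omega_{f,i}$. Under $\Psi_{f,g}$ each stalk is $\bigotimes_{i\notin J\cup T}U_i$. Summing over $\omega_T$ gives
\[
 W_T(f)\cong\bigotimes_{i\in T}V_i\otimes\bigotimes_{i\notin J\cup T}U_i .
\]
The component of $D_{T,i}$ from $\theta(T\setminus\{i\},\omega_{T\setminus i})$ to $\theta(T,\omega_T)$ is, by \eqref{eq:compatible-charts}, $\operatorname{ev}_{\omega_i}\circ E_{f,i}$ in the $U_i$ slot tensored with identities. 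Assembling over $\omega_i$, the whole map becomes $E_{f,i}\colon U_i\to V_i$ in slot $i$ tensored with identities. Hence
\[
 B_T(f)\cong\sum_{i\in T}\Bigl(C_i\otimes\bigotimes_{i'\in T\setminus\{i\}}V_{i'}\Bigr)\otimes\bigotimes_{i\notin J\cup T}U_i .
\]
Choosing complements $V_i=C_i\oplus Q_i$ and expanding, exactly as in the proof of Lemma~\ref{cert:product-check}, identifies the quotient with $\bigotimes_{i\in T}Q_i\otimes\bigotimes U_i\cong\bigotimes_{i\in T}(V_i/C_i)\otimes\bigotimes_{i\notin J\cup T}U_i$. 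This proves the first claim.

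\emph{Step 2 (the same computation for each $h$, still in the chart at $f$).} For $j\notin J\cup T$, the faces $h\geq f$ of type $J\cup\{j\}$ are $h_{\omega_j}=\theta(\{j\},\omega_j)$. The faces above $h_{\omega_j}$ of type $J\cup\{j\}\cup T$ are $\theta(T\cup\{j\},(\omega_T,\omega_j))$ with $\omega_j$ fixed; this is the order-preserving description of $X_{\geq f}$ in Lemma~\ref{lem:cubical-stars}. Repeating Step 1 with $\omega_j$ frozen gives
\[
 \bigoplus_{\omega_j}Q_T(h_{\omega_j})\cong\bigotimes_{i\in T}(V_i/C_i)\otimes V_j\otimes\bigotimes_{i\notin J\cup T\cup\{j\}}U_i .
\]
Here $F^{\Omega_{f,j}}=V_j$ records $\omega_j$. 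The intrinsic $D_{T,i}(h_{\omega_j})$ are again given by \eqref{eq:compatible-charts} in the chart at $f$, because that identity holds for every pair of faces above $f$.

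\emph{Step 3 (the induced map).} The map $W_T(f)\to\bigoplus_{\omega_j}W_T(h_{\omega_j})$ is the sum of incidences adding $j$. By \eqref{rankcomp:directional-identities} it commutes with each map adding $i\in T$, so it carries $B_T(f)$ into $\bigoplus B_T(h_{\omega_j})$ and descends to the quotients. In the chart it is $E_{f,j}$ in the $U_j$ slot tensored with identities. Since $E_{f,j}$ is injective and tensoring over a field preserves injectivity, the induced map is injective. Its cokernel is $\bigotimes_{i\in T}(V_i/C_i)\otimes(V_j/C_j)\otimes\bigotimes_{i\notin J\cup T\cup\{j\}}U_i$, which Step 1 identifies with $Q_{T\cup\{j\}}(f)$.

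To see that this last identification is the intrinsic one, note that $Q_{T\cup\{j\}}(f)$ is $\bigoplus_{\omega_j}W_T(h_{\omega_j})$ modulo the images of the maps adding $i\in T$ together with the image of the map adding $j$. These are exactly $\bigoplus B_T(h_{\omega_j})$ plus the image of $W_T(f)$.

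The main obstacle is bookkeeping rather than ideas. One must check that the chart isomorphisms $\Psi_{f,\cdot}$ assemble coherently into the displayed tensor identifications. In particular, the ordering of tensor factors and the scalar identification in \eqref{eq:compatible-charts} must match the slot-wise description of every $D_{T,i}$ simultaneously. I would verify this componentwise on the basis indexed by $\omega_T$, using the uniqueness clause of regularity \eqref{eq:cubical-regularity}.
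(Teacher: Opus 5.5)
Your proposal is correct and follows the paper's proof essentially step for step. Both arguments work in the single simultaneous chart at $f$, restricted to each coface $h$. Both identify the quotient by choosing complements of the $C_i$ in $V_i$ and expanding the tensor product. Both recognize the induced map as $E_{f,j}$ tensored with identities, and both identify its cokernel with $Q_{T\cup\{j\}}(f)$ intrinsically through commutation of the directional maps. Your observation that $B_{T\cup\{j\}}(f)$ is $\bigoplus_{\omega_j}B_T(h_{\omega_j})$ plus the image of $W_T(f)$ simply spells out the paper's remark that adding the $j$-relations after the $T$-relations imposes exactly the $T\cup\{j\}$ relations.
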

\begin{proof}
The simultaneous chart identifies $W_T(f)$ with
$\bigotimes_{i\in T}V_i\otimes\bigotimes_{i\notin J\cup T}U_i$.
For each $i\in T$, the image of $D_{T,i}$ consists of tensors whose
$i$th factor is in $C_i$. Choose complements of all $C_i$ in $V_i$;
the tensor expansion shows that quotienting by the sum of these
images leaves precisely the displayed quotient tensor.
For the new direction $j$, collect all factors other than $j$ into
$R$. Restricting the same chart to each coface $h$ identifies the
induced map with $I_R\otimes E_{f,j}:R\otimes U_j\to R\otimes V_j$.
This map is injective. Its cokernel is $R\otimes(V_j/C_j)$.
The quotient is intrinsic: adding the $j$-relations after the
$T$-relations imposes exactly the $T\cup\{j\}$ relations, since
all directional maps commute.
\end{proof}

\subsection{Cohomology-Preserving Directional Elimination}
\begin{lemma}\label{rankcomp:one-direction}
Let a finite graded complex over a field of characteristic two have
$d=\sum_i d_i$, $d_i^2=0$, and $d_id_j=d_jd_i$. Suppose
$D=H\oplus L\oplus E$ is a graded decomposition such that
$d_jH,d_jL\subseteq E$, $d_jE=0$, each $d_i$ for $i\ne j$
preserves the three summands, and $d_j|_H$ is injective.
Then $N=H\oplus d_jH$ is an acyclic subcomplex, every $d_i$
descends to $D/N$, and $H^q(D)\cong H^q(D/N)$ for all $q$.
\end{lemma}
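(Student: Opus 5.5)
The plan is to check three things in turn: that $N$ is a subcomplex, that $N$ is acyclic, and that the quotient map $D\to D/N$ is a quasi-isomorphism (which then follows from the long exact sequence).

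\textbf{$N$ is a subcomplex.} Since $d_j$ has degree one and the decomposition is graded, $N=H\oplus d_jH$ is a graded subspace. The direct sum is genuine because $d_jH\subseteq E$ and $H\cap E=0$. For $h\in H$ and $i\ne j$:
\begin{itemize}
\item $d_ih\in H$ by hypothesis;
\item $d_i d_jh=d_jd_ih\in d_jH$;
\item $d_jd_jh=0$.
\end{itemize}
So every $d_i$ maps $N$ into $N$. Hence each $d_i$, and $d$ itself, descends to $D/N$, and the relations $d_i^2=0$ and $d_id_j=d_jd_i$ persist there.

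\textbf{$N$ is acyclic.} Define a homotopy $s$ on $N$ by $s(d_jh)=h$ for $h\in H$ and $s|_H=0$. This is well defined because $d_j|_H$ is injective. Then compute $ds+sd$ on each summand, using $d=d_j+\sum_{i\ne j}d_i$.

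On $h\in H$:
\[
 (ds+sd)h=s\bigl(d_jh+\textstyle\sum_{i\ne j}d_ih\bigr)=h+0=h,
\]
since each $d_ih\in H$ for $i\ne j$ and $s|_H=0$.

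On $d_jh$: $sd(d_jh)=s\bigl(\sum_{i\ne j}d_jd_ih\bigr)=\sum_{i\ne j}d_ih$. Also
\[
 ds(d_jh)=dh=d_jh+\textstyle\sum_{i\ne j}d_ih.
\]
In characteristic two the two copies of $\sum_{i\ne j}d_ih$ cancel, so the total is $d_jh$.

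Thus $ds+sd=\mathrm{id}_N$, so $N$ has zero cohomology in every degree.

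\textbf{Conclusion.} The short exact sequence $0\to N\to D\to D/N\to0$ gives a long exact sequence in cohomology. Since $H^q(N)=0$ for all $q$, it yields $H^q(D)\cong H^q(D/N)$ for all $q$.

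\textbf{Main obstacle.} The only delicate point is the homotopy identity on $d_jH$, where the cross terms $d_ih$ must cancel. This cancellation uses characteristic two (or, in general, the right signs). Finiteness and gradedness ensure that everything lives degreewise.
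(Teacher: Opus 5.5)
Your proposal is correct and is essentially the paper's proof. Both use the contracting homotopy $s|_H=0$, $s(d_jh)=h$ on $N$, with the cross terms $d_ih$ ($i\ne j$) cancelling in characteristic two. Your only departure is invoking the long exact sequence of $0\to N\to D\to D/N\to0$ where the paper does the equivalent direct chase to get surjectivity and injectivity on cohomology.
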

\begin{proof}
Commutation shows that $N$ is preserved by every $d_i$.
Define $s|_H=0$ and $s(d_jh)=h$. Injectivity makes this a
well-defined map of degree $-1$. On $H$ and on $d_jH$,
respectively, direct calculation gives $ds+sd=I$; the terms
$d_ih$ for $i\ne j$ cancel in pairs. Hence $N$ is acyclic.
A closed quotient vector represented by $x$ has $dx\in N$;
subtracting a primitive in $N$ gives a closed lift. If a closed
$x$ maps to $d\bar y$, then $x-dy$ is closed in $N$ and therefore
exact there. These two observations prove surjectivity and
injectivity of the induced cohomology map. The directional
identities descend because $N$ is invariant.
\end{proof}

Fix eliminated endpoint bits $a_i\in\{0,1\}$. For $P=[p]$,
$T\subseteq P$ and $J\subseteq[t]\setminus P$, define
\[
 \mathscr A_P(T,J)=\{f\in X(J): \wt{f}_i=a_i\ (i\in T),\
                   \wt{f}_i=1-a_i\ (i\in P\setminus T)\}.
\]
\begin{proposition}\label{rankcomp:successive-elimination}
There is a complex $C_P$ with $H^q(C_P)\cong H^q(C)$ and
\[
 C_P^q\cong\bigoplus_{\substack{T\subseteq P,\ J\subseteq[t]\setminus P\\
                              |T|+|J|=q}}
                     \ \bigoplus_{f\in\mathscr A_P(T,J)}Q_T(f).
\]
Its unprocessed directional maps are induced by the original incidences.
\end{proposition}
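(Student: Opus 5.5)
Induction on $p$, starting from $P=\varnothing$ with $C_\varnothing=C$. For $p=0$ the only choice is $T=\varnothing$, the condition on $\mathscr A_\varnothing(\varnothing,J)$ is vacuous, and $Q_\varnothing(f)=W_\varnothing(f)=\mathcal F(f)$. So $C_\varnothing^q=\bigoplus_{f\in X(q)}\mathcal F(f)$ is the original complex, with its directional differentials from \eqref{rankcomp:directional-identities}. For the inductive step, assume $C_{P}$ has been built for $P=[p]$ with $p<t$, and its directional maps $d_i$ for $i\notin P$ are induced by the original incidences. Eliminate direction $j=p+1$ by applying Lemma~\ref{rankcomp:one-direction} with $d_j$ the induced map adding direction $j$.

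The decomposition of a summand $Q_T(f)$ with $f\in\mathscr A_P(T,J)$ goes by the status of direction $j$.
\begin{itemize}
\item If $j\in J$, the summand goes into $E$.
\item If $j\notin J$ and $\wt f_j=1-a_j$, the summand goes into $L$; this is the piece that survives with $j\in P\setminus T$.
\item If $j\notin J$ and $\wt f_j=a_j$, the summand goes into $H$.
\end{itemize}
Then $d_j$ sends $H$ and $L$ into $E$, and $d_jE=0$ because a direction cannot be added twice. The $d_i$ with $i\ne j$ do not change $j$-membership or the $j$-coordinate of the face, so they preserve all three summands. By Lemma~\ref{rankcomp:quotient-charts}, $d_j|_{Q_T(f)}$ maps into $\bigoplus_{h\ge f,\dir(h)=J\cup\{j\}}Q_T(h)$ injectively, with cokernel $Q_{T\cup\{j\}}(f)$. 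Hence $d_j|_H$ is injective, and Lemma~\ref{rankcomp:one-direction} gives $H^q(C_P)\cong H^q(C_P/N)$ with $N=H\oplus d_jH$.

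It remains to identify $C_P/N$ with the displayed $C_{P\cup\{j\}}$. The $L$ summands are exactly the terms with $j\in(P\cup\{j\})\setminus T$, $J\not\ni j$, and coordinate $1-a_j$. In $E$, each type-$(J'\cup\{j\})$ face $h$ has exactly two $j$-facets: one with coordinate $a_j$ and one with $1-a_j$. Only the $a_j$-facet lies in $H$. So the quotient $E/d_jH$ is a direct sum, over faces $f\in H$, of the cokernels above, namely $Q_{T\cup\{j\}}(f)$.

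The step I expect to be the real obstacle is this regrouping. The cokernel is indexed by the lower face $f$ (type $J'$, coordinate $a_j$) together with $T\cup\{j\}$, so $f\in\mathscr A_{P\cup\{j\}}(T\cup\{j\},J')$. The check is that the degrees match: the quotient lives in degree $|T|+|J'|+1$. One also needs that distinct $f$ give independent images $d_j Q_T(f)$. This holds because each $Q_T(h)$ summand, for $h$ of type $J'\cup\{j\}$, receives contributions only from its unique $a_j$-facet, which I would verify using the interval structure of Lemma~\ref{lem:cubical-stars}.

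Finally, the descended $d_i$ for $i\notin P\cup\{j\}$ are induced by the original incidences. This follows from the intrinsic description of the quotients in Lemma~\ref{rankcomp:quotient-charts} together with the commutation relations, which carry the inductive hypothesis forward.
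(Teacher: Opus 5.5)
Your proposal is correct and follows the paper's proof essentially step for step: induction on $p$ from $C_\varnothing=C$, the layers $H,L,E$ by $j$-state $a_j$, $1-a_j$, $\star$, injectivity of $d_j|_H$ from Lemma~\ref{rankcomp:quotient-charts} together with the fact that each $j$-active face has a unique $a_j$-facet (so the target lists partition $E$), and then Lemma~\ref{rankcomp:one-direction}, with the cokernels regrouped as $Q_{T\cup\{j\}}(f)$ anchored at that facet in degree one higher. The one point you leave implicit, as the paper also does, is that Lemma~\ref{rankcomp:one-direction} additionally needs the already-processed maps $d_i$, $i\in P$, to preserve the three $j$-layers; this holds because those maps never change the $j$-state of an anchor.
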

\begin{proof}
Induct on $p$. For $p=0$, $Q_\varnothing(f)=\mathcal F(f)$,
so take $C_\varnothing=C$. Suppose the statement holds for $P=[p]$
and put $j=p+1$. Divide the displayed summands into layers $H,L,E$
according as the $j$-state is $a_j$, $1-a_j$, or $\star$.
Every other directional map preserves these states. The $j$-map
sends $H,L$ into $E$ and vanishes on $E$.
For fixed $T,J_0$ not containing $j$, its restriction from $H$ is
the direct sum of the injective maps in
Lemma~\ref{rankcomp:quotient-charts}. Their target lists partition
$E$: each face with active direction $j$ has exactly one facet
with $j$-bit $a_j$. Thus the full map $d_j|_H$ is injective.
Apply Lemma~\ref{rankcomp:one-direction}. The quotient retains
the $L$-summands and replaces each corresponding list in $E$ by
$Q_{T\cup\{j\}}(f)$, anchored at its $a_j$ facet. These are
exactly the claimed summands for $P\cup\{j\}$, including the
increase of degree by one in the second case. The remaining
maps still come from incidences modulo the imposed relations;
commutation guarantees preservation of those relations. This
completes the induction.
\end{proof}

\begin{theorem}\label{rankcomp:betti-bound}\label{rank:compression}
For $0\leq q\leq t$ and every choice of $a_i$,
\[
 h_q\leq V\sum_{|T|=q}\prod_{i\in T}(n_i-m_{i,a_i})
                           \prod_{i\notin T}m_{i,1-a_i}.
\]
In particular $h_q\leq A_0\mathcal B_q$.
\end{theorem}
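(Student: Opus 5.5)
Apply Proposition~\ref{rankcomp:successive-elimination} with $P=[t]$, so that every direction has been eliminated. Then $J\subseteq[t]\setminus P=\varnothing$, so the only summands of $C_{[t]}^q$ are indexed by $T\subseteq[t]$ with $|T|=q$. Each such summand ranges over vertices $f\in X(\varnothing)=X(0)$ whose labels are fully prescribed: $\wt f_i=a_i$ for $i\in T$ and $\wt f_i=1-a_i$ for $i\notin T$. Since $H^q(C_{[t]})\cong H^q(C)$, we get $h_q\le\dim C_{[t]}^q$, because cohomology is a subquotient of the cochain space. So it suffices to count dimensions.

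For a vertex $f$ we have $J=\varnothing$, so Lemma~\ref{rankcomp:quotient-charts} gives
\[
 Q_T(f)\cong\bigotimes_{i\in T}(V_i/C_i)\otimes\bigotimes_{i\notin T}U_i .
\]
Here the chart at $f$ has $U_i=F^{m_{i,\wt f_i}}$ and $\dim V_i/C_i=n_i-m_{i,\wt f_i}$. With $\wt f_i=a_i$ on $T$ and $1-a_i$ off $T$, this gives
\[
 \dim Q_T(f)=\prod_{i\in T}(n_i-m_{i,a_i})\prod_{i\notin T}m_{i,1-a_i}.
\]
By Lemma~\ref{lem:cubical-face-counts} (with $I=\varnothing$, $D_\varnothing=1$), the number of vertices with a prescribed full label is exactly $V$. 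Summing over $|T|=q$ gives the displayed bound.

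For the ``in particular'' statement, choose each $a_i$ so that $n_i-m_{i,a_i}=n_iq_i$, i.e.\ $m_{i,a_i}=\max_b m_{i,b}$. Then $m_{i,1-a_i}=\min_b m_{i,b}=n_ip_i$. The summand for $T$ becomes
\[
 V\prod_i n_i\prod_{i\in T}q_i\prod_{i\notin T}p_i=A_0\prod_{i\in T}q_i\prod_{i\notin T}p_i .
\]
Summing over $|T|=q$ gives exactly the coefficient $[z^q]\prod_i(p_i+q_iz)$, so $h_q\le A_0\mathcal B_q$.

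All the work is carried by the earlier proposition and lemma, so I expect no real obstacle. The only points needing care are bookkeeping. First, the summands for $P=[t]$ sit at vertices anchored at the appropriate facets, and their labels are as stated. Second, the quotient stalk dimension must use the endpoint bit $\wt f_i$ of that vertex: eliminated directions in $T$ carry $a_i$, and those outside $T$ carry $1-a_i$. If this matching were off, it would merely swap $p_i$ and $q_i$, and the choice of $a_i$ would be adjusted accordingly.
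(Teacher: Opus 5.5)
Your proposal is correct and is essentially the paper's proof: apply the successive-elimination proposition at $P=[t]$ (so $J=\varnothing$), count the $V$ anchor vertices of each prescribed parity with quotient stalks of the stated dimension, bound $h_q$ by the cochain dimension, and choose $a_i$ maximizing $m_{i,a_i}$ to obtain $A_0\mathcal B_q$. The matching you worried about is right as written: the eliminated map from the $a_j$-layer is the encoder at endpoint $a_j$, so its cokernel factor has dimension $n_j-m_{j,a_j}$.
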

\begin{proof}
At $P=[t]$, the preceding proposition has $J=\varnothing$.
For each $T$, its anchors are the $V$ vertices of one prescribed
parity, and their quotient stalks have the dimensions in the
first displayed bound. A cohomology dimension is at most the
dimension of its cochain space. To obtain the last bound, choose
$a_i$ maximizing $m_{i,a_i}$, divide by $A_0=V\prod_i n_i$,
and expand $[z^q]\prod_i(p_i+q_i z)$.
\end{proof}

\subsection{Euler Characteristic and the Rate Bound}
\begin{proposition}\label{bin:euler-rate}
For all $0\leq j\leq t$, $M_j=A_0e_{t-j}(S)$ and
$0\leq h_j\leq A_0\mathcal B_j$. For every $k$, $h_k\geq A_0\Gamma_k$.
\end{proposition}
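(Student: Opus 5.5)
The first claim, $M_j=A_0e_{t-j}(S)$, is \eqref{eq:cochain-count}, so I only record it. Nonnegativity of $h_j$ is immediate, and the upper bound $h_j\le A_0\mathcal B_j$ is the last assertion of Theorem~\ref{rankcomp:betti-bound}. The substance is therefore the lower bound $h_k\ge A_0\Gamma_k$. I will get it from the Euler characteristic of the finite complex $(C^\bullet,\delta)$ together with the upper bounds on every other cohomology group.

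Since $C^\bullet$ is a finite complex of finite-dimensional $F$-spaces, rank--nullity gives
\[
 \sum_{j=0}^t(-1)^jh_j=\sum_{j=0}^t(-1)^jM_j .
\]
Using \eqref{eq:cochain-count} and the identity $\sum_j(-1)^je_{t-j}(S)=(-1)^t\prod_i(1-S_i)=\prod_i(S_i-1)$, this becomes
\[
 \sum_j(-1)^jh_j=A_0\prod_i(S_i-1).
\]
The identity comes from setting $z=-1$ in $\prod_i(z+S_i)=\sum_j z^je_{t-j}(S)$. Multiplying through by $(-1)^k$ and isolating $h_k$ gives
\[
 h_k=(-1)^kA_0\prod_i(S_i-1)-\sum_{j\ne k}(-1)^{j+k}h_j .
\]

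To bound the sum from above, I discard the terms with $j\not\equiv k\pmod 2$. These enter with coefficient $+1$ because $-(-1)^{j+k}=+1$ in that case, and they are nonnegative, so dropping them only decreases the right side. For the terms with $j\equiv k$, I use $h_j\le A_0\mathcal B_j$ from Theorem~\ref{rankcomp:betti-bound}. The result is
\[
 h_k\ge A_0\Bigl((-1)^k\prod_i(S_i-1)-\sum_{j\equiv k,\,j\ne k}\mathcal B_j\Bigr)=A_0\Gamma_k .
\]

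The only delicate points are the sign bookkeeping in the symmetric-function identity and confirming that $\mathcal B_j$ is indexed to match $h_j$, rather than its complement $t-j$. Theorem~\ref{rankcomp:betti-bound} states the bound in exactly the form $h_q\le A_0\mathcal B_q$, so the indexing matches. With that checked, \eqref{eq:rate-headline} follows by dividing by $M_k=A_0e_{t-k}(S)$.
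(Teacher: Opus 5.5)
Your proposal is correct and follows the paper's proof essentially step for step. Both arguments cite \eqref{eq:cochain-count} and Theorem~\ref{rank:compression} for the first two claims. Both then take the Euler characteristic $\sum_j(-1)^jh_j=A_0\prod_i(S_i-1)$, discard the opposite-parity terms (which enter with positive sign), and bound the remaining same-parity $h_j$ by $A_0\mathcal B_j$.
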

\begin{proof}
The first two conclusions are \eqref{eq:cochain-count} and
Theorem~\ref{rank:compression}. Put $r_j=\rank\delta^j$,
with $r_{-1}=r_t=0$. As $h_j=M_j-r_{j-1}-r_j$, the ranks
cancel in the alternating sum:
\[
 \sum_j(-1)^jh_j=\sum_j(-1)^jM_j=A_0\prod_i(S_i-1).
\]
After multiplying by $(-1)^k$ and isolating $h_k$, the terms
with parity opposite to $k$ have positive sign and may be discarded.
Bound the remaining $h_j$ by $A_0\mathcal B_j$ to obtain the claim.
Dividing by $M_k>0$ proves \eqref{eq:rate-headline}.
\end{proof}

For $1\leq k<t$ and $0<\tau<1/2$, put
\begin{equation}\label{bin:rate-functions}
 \begin{gathered}
 R_{t,k,j}(\tau)=[z^j](\tau+z)^k(1+\tau z)^{t-k},\\
 E_{t,k,j}(\tau)=[z^{t-j}](1+2\tau z)^k(1+2z)^{t-k},\\
 G_{t,k}(\tau)=(1-2\tau)^t-
 \sum_{\substack{0\leq j\leq t\\j\equiv k\ (2),\ j\ne k}}R_{t,k,j}(\tau).
 \end{gathered}
\end{equation}
\begin{corollary}\label{bin:rate-numbers}
Let $L\subseteq[t]$ have size $k$, and suppose
$0\leq\alpha_i\leq\beta_i\leq\tau$ and
\[
 \{\mu_{i,0},\mu_{i,1}\}=
 \begin{cases}\{\alpha_i,\beta_i\},&i\in L,\\
 \{1-\alpha_i,1-\beta_i\},&i\notin L.
 \end{cases}
\]
Then $\Gamma_k\geq G_{t,k}(\tau)$,
$M_j\leq A_0 E_{t,k,j}(\tau)$ for every $j$, and
$M_k\geq A_0[2(1-\tau)]^{t-k}$. If $G_{t,k}(\tau)>0$, then
\[
 \frac{h_k}{M_k}\geq\frac{G_{t,k}(\tau)}{E_{t,k,k}(\tau)},\qquad
 \frac{M_k}{M_{k-1}+M_{k+1}}\geq
 \frac{[2(1-\tau)]^{t-k}}{E_{t,k,k-1}(\tau)+E_{t,k,k+1}(\tau)}.
\]
\end{corollary}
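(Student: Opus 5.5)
The plan is to compare each quantity in the statement coefficientwise with the explicit polynomials in \eqref{bin:rate-functions}. Everything reduces to sign and monotonicity observations about the endpoint data $S_i$, $p_i$, $q_i$, followed by Proposition~\ref{bin:euler-rate}.

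\emph{Step 1: the signed product.} For $i\in L$ we have $S_i=\alpha_i+\beta_i\leq2\tau<1$. Hence $S_i-1<0$ and $|S_i-1|=1-\alpha_i-\beta_i\geq1-2\tau$. For $i\notin L$ we have $S_i=2-\alpha_i-\beta_i$. Hence $S_i-1=1-\alpha_i-\beta_i\geq1-2\tau>0$. Exactly $|L|=k$ factors are negative, so
\[
 (-1)^k\prod_i(S_i-1)=\prod_i|S_i-1|\geq(1-2\tau)^t .
\]

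\emph{Step 2: the bounds $\mathcal B_j$.} For $i\in L$, the definitions give $p_i=\alpha_i\leq\tau$ and $q_i=1-\beta_i\leq1$. For $i\notin L$, they give $p_i=1-\beta_i\leq1$ and $q_i=1-(1-\alpha_i)=\alpha_i\leq\tau$. All of these are nonnegative. Each coefficient of a product of linear polynomials with nonnegative coefficients is monotone in those coefficients. Therefore
\[
 \mathcal B_j=[z^j]\prod_i(p_i+q_iz)\leq[z^j](\tau+z)^k(1+\tau z)^{t-k}=R_{t,k,j}(\tau).
\]
Substituting Steps 1 and 2 into the definition of $\Gamma_k$ gives $\Gamma_k\geq G_{t,k}(\tau)$.

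\emph{Step 3: the cochain dimensions.} By \eqref{eq:cochain-count}, $M_j=A_0e_{t-j}(S)=A_0[z^{t-j}]\prod_i(1+S_iz)$. We have $0\leq S_i\leq2\tau$ for $i\in L$ and $0\leq S_i\leq2$ otherwise, so the same monotonicity yields $M_j\leq A_0E_{t,k,j}(\tau)$. For the lower bound, $e_{t-k}(S)$ is a sum of nonnegative terms. One of them is $\prod_{i\notin L}S_i\geq(2-2\tau)^{t-k}$, which gives $M_k\geq A_0[2(1-\tau)]^{t-k}$.

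\emph{Step 4: the ratios.} Assume $G_{t,k}(\tau)>0$. Proposition~\ref{bin:euler-rate} gives $h_k\geq A_0\Gamma_k\geq A_0G_{t,k}(\tau)$. Dividing by $M_k\leq A_0E_{t,k,k}(\tau)$ gives the first ratio. The second ratio follows by dividing the lower bound on $M_k$ by the upper bounds on $M_{k-1}$ and $M_{k+1}$, since $A_0>0$ cancels.

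I expect no real obstacle. The only point needing care is the bookkeeping in Step 2, namely checking that $p_i$ and $q_i$ swap roles between $L$ and its complement. This swap is exactly what produces the asymmetric factors $(\tau+z)^k$ and $(1+\tau z)^{t-k}$.
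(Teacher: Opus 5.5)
Your proposal is correct and follows essentially the same route as the paper. Both arguments use the sign count $(-1)^k\prod_i(S_i-1)=\prod_i(1-\alpha_i-\beta_i)$, then the coefficientwise comparisons $p_i+q_iz\leq\tau+z$ on $L$, $p_i+q_iz\leq1+\tau z$ off $L$, and $\prod_i(1+S_iz)\leq(1+2\tau z)^k(1+2z)^{t-k}$, then the single term of $e_{t-k}(S)$ indexed by $[t]\setminus L$, and finally Proposition~\ref{bin:euler-rate}; your explicit check that $p_i$ and $q_i$ swap roles between $L$ and its complement is the bookkeeping the paper leaves implicit.
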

\begin{proof}
Writing $s_i=\alpha_i+\beta_i\leq2\tau$, we have $S_i=s_i$
on $L$ and $S_i=2-s_i$ off $L$. Thus
$(-1)^k\prod_i(S_i-1)=\prod_i(1-s_i)\geq(1-2\tau)^t$.
Coefficientwise, $p_i+q_i z\leq\tau+z$ on $L$ and
$p_i+q_i z\leq1+\tau z$ off $L$, giving
$\mathcal B_j\leq R_{t,k,j}(\tau)$. Likewise
$\prod_i(1+S_i z)\leq(1+2\tau z)^k(1+2z)^{t-k}$
coefficientwise. The single term in $e_{t-k}(S)$ indexed by
$[t]\setminus L$ is at least $[2(1-\tau)]^{t-k}$.
Apply Proposition~\ref{bin:euler-rate} and divide the resulting
positive bounds.
\end{proof}

\begin{lemma}\label{bin:rate-positive}
If $t\geq2$, $1\leq k<t$, and $0<\tau\leq(8t2^t)^{-1}$,
then $G_{t,k}(\tau)>1/2$.
\end{lemma}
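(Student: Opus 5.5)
We prove two elementary estimates: a lower bound for $(1-2\tau)^t$ and an upper bound for the sum of the $R_{t,k,j}(\tau)$ with $j\neq k$.

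\textbf{First term.} By Bernoulli's inequality,
\[
(1-2\tau)^t\geq 1-2t\tau\geq 1-\frac{2t}{8t2^t}=1-\frac{1}{4\cdot2^t}\geq 1-\frac18,
\]
using $t\geq2$.

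\textbf{Second term.} Write
\[
P(z)=(\tau+z)^k(1+\tau z)^{t-k}=\sum_j R_{t,k,j}(\tau)z^j.
\]
All coefficients are nonnegative polynomials in $\tau$.

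Expand each binomial. A monomial of $P$ arises from a choice, in each of the $k$ factors $(\tau+z)$, of $\tau$ or $z$, and, in each of the $t-k$ factors $(1+\tau z)$, of $1$ or $\tau z$. The exponent of $z$ equals $k$ exactly when the number of $\tau$'s taken from the first group equals the number of $\tau z$'s taken from the second group.

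Consequently every monomial contributing to $z^j$ with $j\neq k$ uses at least one factor $\tau$. So each such term is at most $\tau$ times the corresponding term of
\[
(1+z)^k(1+z)^{t-k}
\]
evaluated at $z=1$, since $\tau\leq1$.

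Summing over all $j\neq k$, which includes in particular those of the same parity as $k$, gives
\[
\sum_{j\ne k}R_{t,k,j}(\tau)\leq \tau\cdot 2^t\leq\frac{1}{8t}\leq\frac1{16}.
\]

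More carefully: each of the $2^t$ selections contributes at most $\tau^{a}$ with $a\geq1$ whenever $j\neq k$. Hence the total over $j\neq k$ is at most $2^t\tau$. This is the bound used above.

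\textbf{Conclusion.} Combining the two estimates,
\[
G_{t,k}(\tau)\geq \frac78-\frac1{16}>\frac12.
\]

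The only point requiring care is the bookkeeping showing that off-diagonal terms carry at least one factor of $\tau$. The counting of selections settles it directly, so there is no real obstacle.
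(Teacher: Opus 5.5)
Your argument is correct and is essentially the paper's: expand the product into the $2^t$ selections, bound each off-diagonal contribution by a power of $\tau$, and combine this with Bernoulli's inequality. The only difference is in the off-diagonal bound. The paper uses the parity restriction $j\equiv k \pmod 2$: the $\tau$-exponent is $|I\triangle L|$, which is then even and nonzero, giving the sharper bound $2^t\tau^2$. Your bound $2^t\tau\leq 1/(8t)$ ignores parity, but it already suffices under the stated hypothesis on $\tau$.
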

\begin{proof}
In the polynomial defining $R_{t,k,j}$, a choice of $z$ in positions
$I$ contributes $\tau^{|I\triangle L|}z^{|I|}$. For
$|I|\equiv k\pmod2$ and $|I|\ne k$, the exponent of $\tau$
is at least two. Hence the subtracted sum is at most $2^t\tau^2$.
Bernoulli's inequality gives
\[
 G_{t,k}(\tau)\geq1-2t\tau-2^t\tau^2
 \geq1-\frac1{4\cdot2^t}-\frac1{64t^2 2^t}>\frac12.
\]
\end{proof}

\section{Local Filling from Product Expansion}\label{ten:section}
We prove that universal product expansion supplies filling inequalities
on every upward star, with constants independent of the local lengths.
If $0<\rho\leq1$ and $K\geq1$, define
\begin{equation}\label{ten:filling-constants}
 \kappa_{r,u}(\rho,K)=
 \frac{\rho^{\binom{r-1}{u}}}{(K+1)^{\binom{r-1}{u}-1}}
 \quad(r\geq1,\ 0\leq u<r).
\end{equation}
These numbers lie in $(0,1]$. Their boundary values are $\rho$ and
Pascal's identity gives
$\kappa_{r,u}=\kappa_{r-1,u}\kappa_{r-1,u-1}/(K+1)$ for
$1\leq u\leq r-2$.
\label{ten:filling-constant-bounds}

\begin{theorem}\label{geo:general-filling-table}
Suppose $X$ is a $t$-dimensional $(n_1,\ldots,n_t)$-regular
cubical complex with $n_+/n_-\leq K$, and $\mathcal F$
has compatible base codes. Suppose every nonempty subtuple in every
chart is universally $\rho$-product-expanding. For $v\in X(\ell)$,
$\ell\leq k<t$, and $z\in C_v^{k-\ell}$,
\begin{equation}\label{geo:local-filling}
 |\delta_v z|_b\geq n_-\kappa_{t-\ell,k-\ell}(\rho,K)
                \min_{y\in C_v^{k-\ell-1}}|z+\delta_v y|_b.
\end{equation}
The assertion also holds after any field extension and tensoring
all stalks with any finite-dimensional passive vector space.
\end{theorem}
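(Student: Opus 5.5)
We work entirely in the single chart at $v$. By Definition~\ref{def:compatible-base-codes}, $\Psi_{v,\cdot}$ identifies the upward complex $(C_v^\bullet,\delta_v)$, block by block, with the tensor evaluation complex of the $r:=t-\ell$ encoders $E_i:U_i\hookrightarrow V_i=F^{\Omega_i}$, $i\in\codir(v)$. Its degree-$u$ part is
\[
 \bigoplus_{|T|=u}\ \bigoplus_{\omega_T\in\prod_{i\in T}\Omega_i}
 \ \bigotimes_{i\notin T}U_i,
\]
one block for each grid point $\omega_T$. The differential adding direction $j$ is $\operatorname{ev}_{\omega_j}\circ E_j$ on the $j$th factor. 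The charts act facewise, so block weights $|\cdot|_b$ are preserved. Extending the field and tensoring with a passive space $R$ give the same complex over $E$ with $R$-valued blocks. Every hypothesis we use, namely universal and collective $\rho$-product expansion of all subtuples, is stable under these operations by Lemma~\ref{exloc:scalar-extension} and Definition~\ref{loc:weighted-product-expansion}. We therefore prove \eqref{geo:local-filling} for the abstract complex $K_r(E;R)$, writing $u=k-\ell$ and $\kappa_{r,u}$, with the uniformity in $E,R$ built into the induction.

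First we compute cohomology. $K_r$ is the tensor product of the $r$ two-term complexes $U_i\to V_i$. Each has cohomology $V_i/C_i$, concentrated in degree one, since $E_i$ is injective. By Künneth, $K_r$ is exact in every degree $u<r$. This is also the case $P=[t]$ of the elimination argument in Proposition~\ref{rankcomp:successive-elimination} restricted to a star. So the minimum in \eqref{geo:local-filling} equals $\min\{|z'|_b:\delta z'=\delta z\}$, and it suffices to produce a small preimage $z'$ of $\delta z$.

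\textbf{Boundary cases.} For $u=0$, $z\in\bigotimes_iU_i\otimes R$. If $z\ne0$, its image in some direction $i$ is a nonzero element of $C_i\otimes(\cdots)$. Product expansion of the singleton $(C_i)$ forces every nonzero codeword to have weight at least $\rho n_i$. Hence $|\delta z|_b\geq\rho n_-$, giving $\kappa_{r,0}=\rho$. For $u=r-1$, $x=\delta z$ lies in $A(R)=\sum_iA_i(R)$. Collective product expansion gives $x=\sum_ix_i$ with $\rho\sum_in_i\ell_i(x_i)\leq|x|$. Each $x_i$ is $(E_i\otimes I)$ applied to a degree-$(r-1)$ cochain $z_i'$ with $|z_i'|_b=\ell_i(x_i)$. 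Then $z'=\sum_iz'_i$ satisfies $\delta z'=\delta z$ and $|z'|_b\leq|\delta z|_b/(\rho n_-)$, again giving $\kappa_{r,r-1}=\rho$.

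\textbf{Middle degrees $1\leq u\leq r-2$.} Here we induct on $r$ via the recursion $\kappa_{r,u}=\kappa_{r-1,u}\kappa_{r-1,u-1}/(K+1)$. Split off the last direction $r$, writing $K_r=\operatorname{Cone}\bigl(K_{r-1}\otimes U_r\to K_{r-1}\otimes V_r\bigr)$, and decompose $z=(z_0,z_1)$ with $z_0\in K_{r-1}^u\otimes U_r$ and $z_1\in K_{r-1}^{u-1}\otimes V_r$. Then
\[
 \delta z=\bigl(\delta'z_0,\ E_rz_0+\delta'z_1\bigr).
\]
\begin{enumerate}
\item Treat $U_r$ as a passive space and apply the inductive filling for $K_{r-1}$ in degree $u$ to $z_0$. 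This modifies $z$ by a coboundary so that $|z_0|_b\leq|\delta'z_0|_b/(n_-\kappa_{r-1,u})\leq|\delta z|_b/(n_-\kappa_{r-1,u})$.
\item For each $\omega\in\Omega_r$, the slice of $z_1$ at $\omega$ lies in $K_{r-1}^{u-1}$, and its coboundary equals the $\omega$-slice of the second component of $\delta z$ minus $\operatorname{ev}_\omega E_rz_0$. Apply the inductive filling in degree $u-1$ slice by slice, with constant $\kappa_{r-1,u-1}$. This yields a correction whose weight is bounded by the second component of $|\delta z|_b$ plus $n_r|z_0|_b$.
\item Combine the two steps. The factor $n_r\leq Kn_-$ coming from evaluating $z_0$ along all of $\Omega_r$ produces the $(K+1)$ loss.
\end{enumerate}

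The main obstacle is step 2. The slicewise fillings must be assembled into a single cochain whose coboundary matches $\delta z$, that is, slice-by-slice corrections must be made consistent across $\omega\in\Omega_r$. I would handle this by noting that the slicewise correction changes $z$ only by a cocycle, and the discrepancy can be absorbed by exactness of $K_r$ in degree $u$. The weights must then be tracked carefully so that only one factor of $K+1$, and only products of the two earlier constants, appear, matching the Pascal recursion. If this fails, I would instead iterate the top-degree product-expansion argument on the subtuples indexed by each $T$, which is why the hypothesis covers every nonempty subtuple.
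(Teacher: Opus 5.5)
Your overall route is the paper's. You reduce via the chart at $v$ to the tensor evaluation complex and use exactness below top degree to look for a small cohomologous representative. You treat $u=0$ by the singleton distance bound and $u=r-1$ by expansion of the full tuple. In middle degrees you split off direction $r$ with $\kappa_{r,u}=\kappa_{r-1,u}\kappa_{r-1,u-1}/(K+1)$.

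The obstacle you flag in step~2 is not one. Let $y_\omega\in K_{r-1}^{u-2}$ be the slicewise fillings. Then $y_1=(y_\omega)_{\omega}$ is a single cochain in the active summand of $K_r^{u-1}$, and $\delta(0,y_1)=(0,\delta'y_1)$. So the slicewise corrections form one coboundary, with no compatibility condition across $\omega$ and no appeal to exactness. Because you correct $z$ inside its coset rather than filling an arbitrary element of $\operatorname{im}\delta^u$, you also avoid the paper's extra argument (via an auxiliary primitive and exactness of $K_{r-1}$ in degree $u$) that the corrected right-hand side has active slices in $\operatorname{im}\delta'$. Put $a=\kappa_{r-1,u}$, $c=\kappa_{r-1,u-1}$ and $w=\delta z=(w_0,w_1)$. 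Your bookkeeping $|z_0|_b\le|w_0|_b/(n_-a)$ and $|z_1|_b\le(|w_1|_b+n_+|z_0|_b)/(n_-c)$ then gives $|z|_b\le(K+1)|w|_b/(n_-ac)$, as required.

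The genuine gap is collective product expansion. You list it among the hypotheses and say it is stable by Lemma~\ref{exloc:scalar-extension} and Definition~\ref{loc:weighted-product-expansion}. But the theorem assumes only universal (scalar) expansion. The definition merely names the collective property, and that lemma concerns ranks, kernels and monomial changes of coordinates.

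You need collective expansion essentially, in two places:
\begin{itemize}
\item the top-degree case with passive space $R$;
\item your step~1, where $K_{r-1}$ is filled with passive space $R\otimes U_r$.
\end{itemize}
When $u=r-2$, the second is the top-degree case for the $(r-1)$-tuple with a passive space of dimension at least $m_r$, which grows with $n_r$. Expanding each coordinate of a basis of the passive space separately multiplies the line cost by that dimension. That destroys the independence of $\kappa$ from the local lengths. For $u=0$ the issue does not arise: a functional on the passive space turns a nonzero vector-valued codeword into a nonzero scalar codeword supported inside its block support.

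The missing argument (the paper's Lemma~\ref{ten:collective-pe}) is exactly where universality is used. Choose a basis $r_1,\dots,r_d$ of $R$ and embed $R\hookrightarrow E(T)$ by $r_j\mapsto T^{j-1}$; this preserves supports. Apply scalar expansion over the transcendental extension $E(T)$. Then apply an $E$-linear retraction onto $R$ of the finite-dimensional $E$-span of all entries and line messages. The encoder matrices have entries in $F$, so this retraction preserves every line equation and sends zero lines to zero lines. The decomposition therefore descends to $A_i(E;R)$ with the same $\rho$.
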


\subsection{Vector-Valued Product Expansion}
\begin{lemma}\label{cert:collective-expansion}\label{ten:collective-pe}
A universally $\rho$-product-expanding tuple has collective product
expansion with the same constant over every extension $E/F$ and
every finite-dimensional $E$-space $R$.
\end{lemma}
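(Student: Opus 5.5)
The plan is to reduce the collective statement to the scalar statement over a larger field, which is exactly what the universality hypothesis supplies. Fix an extension $E/F$ and a finite-dimensional $E$-space $R$ of dimension $r\ge1$. If $r=0$ there is nothing to prove. Otherwise choose a finite extension $E'/E$ with $[E':E]\ge r$; for instance, a simple extension of degree $r$ obtained from an irreducible polynomial of degree $r$ over $E$, which exists for finite fields and which we can always reach by passing to a rational function field $E(u)$ and then a degree-$r$ extension of it. Fix an $E$-linear isomorphism $\iota:R\to E'$ onto an $E$-subspace, for example by sending a basis of $R$ to $1,\alpha,\ldots,\alpha^{r-1}$ for a primitive element $\alpha$. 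The key point is that $\iota$ is injective, so a vector $v\in R$ is nonzero exactly when the scalar $\iota(v)\in E'$ is nonzero.

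Next I will check that $\iota$ transports the relevant spaces. Applying $\iota$ entrywise gives an injective $E$-linear map $I:R\otimes_E E^{\Omega}\to E'^{\Omega}$. Since $A_i(E;R)=R\otimes_E A_i(E)$ is spanned by tensors $v\otimes a$ with $a\in A_i(E)$, and $I(v\otimes a)=\iota(v)a\in E'\otimes_E A_i(E)=A_i(E')$, we get $I(A_i(E;R))\subseteq A_i(E')$ and hence $I(A(E;R))\subseteq A(E')$. Moreover $I$ preserves both counts: a coordinate of $x$ is nonzero iff the corresponding coordinate of $Ix$ is nonzero, and likewise for $i$-lines.

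The main step, and the only real obstacle, is pulling a decomposition over $E'$ back to $R$-valued entries. Given $x\in A(E;R)$, universality provides $y_i\in A_i(E')$ with $Ix=\sum_i y_i$ and $\rho\sum_i n_i\ell_i(y_i)\le|Ix|=|x|$. The $y_i$ need not lie in $I(R\otimes E^\Omega)$. To handle this, choose an $E$-linear projection $\pi:E'\to\iota(R)$ that is the identity on $\iota(R)$, and put $x_i=I^{-1}(\pi y_i)$, with $\pi$ applied entrywise. The map $\pi\otimes I_{E^\Omega}$ sends $E'\otimes_E A_i(E)$ into $\iota(R)\otimes_E A_i(E)$, because $A_i(E')=E'\otimes_E A_i(E)$ by Lemma~\ref{cert:scalar-extension}. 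Hence $x_i\in A_i(E;R)$ and $\sum_i x_i=I^{-1}\pi(Ix)=x$. Since $\pi$ is applied entrywise and is linear, a zero line of $y_i$ remains zero, so $\ell_i(x_i)\le\ell_i(y_i)$. This yields $\rho\sum_i n_i\ell_i(x_i)\le|x|$, which is the collective inequality with the same constant $\rho$.

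One subtlety deserves care. Universality is formulated over every extension of $F$, and $E'$ is an extension of $F$, so the hypothesis applies to $E'$ directly. I will also note that Lemma~\ref{cert:scalar-extension} identifies $E'\otimes_F C_i$ with $E'\otimes_E(E\otimes_F C_i)$ and, correspondingly, the spaces $A_i$. With these identifications in place the argument is complete.
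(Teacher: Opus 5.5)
Your proof is correct and follows the paper's argument. You embed $R$ into a field extension by an injective $E$-linear map, apply universal expansion there, and retract entrywise by an $E$-linear map fixing the image of $R$. Your identity $A_i(E')=E'\otimes_E A_i(E)$ does the job of the paper's remark that the $F$-rational line-encoding equations survive the retraction.

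One wording fix: a finite extension of degree at least $r$ need not exist, for instance when $E$ is algebraically closed. So take $E'=E(u)$ with $1,u,\ldots,u^{r-1}$ directly, exactly as the paper does with $E(T)$. Your projection onto $\iota(R)$ then needs a complement in an infinite-dimensional space; the paper avoids this by retracting only on a finite-dimensional span $M$.
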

\begin{proof}
The cases $R=0$ or $x=0$ are immediate. Otherwise choose a basis
$r_1,\ldots,r_d$ of $R$ and embed $R$ linearly into $E(T)$ by
$r_j\mapsto T^{j-1}$. Applied entrywise to $x\in A(E;R)$, this
gives $\widetilde x\in A(E(T))$ with identical support.
Universal expansion gives $\widetilde x=\sum_i y_i$, with
$\rho\sum_i n_i\ell_i(y_i)\leq|x|$.
Let $M\leq E(T)$ be the finite-dimensional $E$-span of the image
of $R$, all entries of the $y_i$, and entries of messages encoding
each directional line of each $y_i$. Choose an $E$-linear retraction
$M\to R$ of the chosen embedding. Applying it entrywise preserves
the sum and every line-encoding equation, because the matrices have
entries in $F\subseteq E$. A zero line remains zero. The resulting
$x_i\in A_i(E;R)$ therefore have sum $x$ and satisfy the required
bound.
\end{proof}

\subsection{Tensor Evaluation Complexes and Local Exactness}
\label{ten:array-spaces}\label{ten:tensor-complex}\label{ten:array-differential}
Fix a field $F$ of characteristic two and an integer $r\geq0$.
For every $i\in[r]$, fix a finite nonempty set $\Omega_i$ of size
$n_i$, an integer $0\leq m_i\leq n_i$, and an injective $F$-linear
map $G_i:U_i=F^{m_i}\to V_i=F^{\Omega_i}$.
Fix also a finite-dimensional $F$-space $R$. Set
$\Omega=\prod_{i\in[r]}\Omega_i$ and put
\[
 T^u=\bigoplus_{|I|=u}R\otimes\bigotimes_{i\in I}V_i
                              \otimes\bigotimes_{i\notin I}U_i,
 \qquad
 (\delta x)_J=\sum_{j\in J}(I\otimes G_j)x_{J\setminus\{j\}}.
\]
Here the indicated map applies $G_j$ in its own factor and the
identity in every other factor. Spaces outside $0\leq u\leq r$
are zero. A block in the $I$-summand is indexed by
$\omega_I\in\prod_{i\in I}\Omega_i$ and has values in
$R\otimes\bigotimes_{i\notin I}U_i$; $|x|_b$ counts these blocks.
\label{ten:tensor-block-weight}

\begin{lemma}\label{ten:exactness}\label{ten:dual-exactness}
One has $\delta^2=0$ and $\ker\delta^u=\im\delta^{u-1}$ for
$0\leq u<r$. Moreover $\im\delta^{r-1}=A(R)$.
For $T_u=(T^u)^*$ and $\partial_u=(\delta^{u-1})^*$,
\[
 \ker\partial_u=\im\partial_{u+1}\quad(0\leq u<r),\qquad
 \ker\partial_r=R^*\otimes\bigotimes_i\ker G_i^{\mathsf T}.
\]
\end{lemma}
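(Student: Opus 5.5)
The complex $(T^\bullet,\delta)$ is, up to the passive factor $R$ and a reindexing, the tensor product over $i\in[r]$ of two-term complexes $K_i:\ U_i\xrightarrow{G_i}V_i$, with $U_i$ in degree $0$ and $V_i$ in degree $1$. The summand of $T^u$ indexed by $I$ is the tensor of the $V_i$ for $i\in I$ with the $U_i$ for $i\notin I$, and $\delta$ adds one direction $j$ by applying $G_j$. In characteristic two no Koszul signs are needed. The plan is to read all three assertions off this tensor decomposition, using Lemma~\ref{cert:product-check} for the top degree.

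\textbf{Square zero.} For $\delta^2=0$, the $J$-component of $\delta^2x$ is a sum over ordered pairs $(j,j')$ of distinct elements of $J$. The two orders give the same map, because $G_j$ and $G_{j'}$ act on different factors. They therefore cancel in characteristic two.

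\textbf{Exactness below the top degree.} Each $K_i$ has cohomology $0$ in degree $0$, since $G_i$ is injective, and $V_i/C_i$ in degree $1$. Choosing a complement $V_i=C_i\oplus Q_i$ splits $K_i$ as $(U_i\xrightarrow{\sim}C_i)\oplus(0\to Q_i)$. Expanding the tensor product $\bigotimes_iK_i$ gives a direct sum of subcomplexes. Every summand that contains some factor $U_i\xrightarrow{\sim}C_i$ is contractible, being a tensor product with an acyclic two-term complex over a field. The only remaining summand is $R\otimes\bigotimes_iQ_i$, concentrated in degree $r$. Hence $\ker\delta^u=\im\delta^{u-1}$ for all $u<r$, and the degree-$0$ case is just the injectivity of $\bigotimes_iG_i$. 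In degree $r$ the same expansion shows that $\im\delta^{r-1}$ is the sum of the summands having at least one $C_i$ factor. That sum is $R\otimes A=A(R)$, matching the kernel description in Lemma~\ref{cert:product-check}; alternatively one can see it directly, since $\delta^{r-1}$ maps the $[r]\setminus\{i\}$ summand onto $A_i(R)$. If one prefers to avoid splittings, Künneth over a field gives the same cohomology computation.

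\textbf{Dual statements.} Over a field, dualizing a finite complex is exact, so $\ker\partial_u=\im\partial_{u+1}$ in degree $u$ follows from the exactness of $\delta$ in degree $u$. For $u<r$ this is exactly what was just proved. At the top, $\ker\partial_r=(\operatorname{coker}\delta^{r-1})^*$ is the annihilator of $A(R)$ in $(T^r)^*=R^*\otimes\bigotimes_iV_i^*$. Identifying $V_i^*$ with $V_i$ through the standard pairing, Lemma~\ref{cert:product-check} gives $A^\perp=\bigotimes_iC_i^\perp$. Since $C_i^\perp=\ker G_i^{\mathsf T}$, the annihilator is $R^*\otimes\bigotimes_i\ker G_i^{\mathsf T}$.

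\textbf{Expected difficulty.} The main obstacle is bookkeeping rather than mathematics. One must check that the chosen complements give a decomposition into genuine subcomplexes compatible with every directional map. This holds because each $G_j$ maps $U_j$ isomorphically onto $C_j$ and the splitting is factorwise. One must also pin down degree conventions for $\partial_u$ so that the indices $u$ and $u+1$ line up. The degenerate cases $r=0$, $m_i=0$ and $R=0$ need a separate sentence each; for instance, when $r=0$ there is nothing to prove below the top degree.
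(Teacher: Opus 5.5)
Your proposal is correct and follows essentially the same route as the paper. You split $V_i=C_i\oplus Q_i$ and note that every tensor summand containing an isomorphism factor $U_i\to C_i$ is contractible, with no signs needed in characteristic two. Only $R\otimes\bigotimes_iQ_i$ survives, in degree $r$, which gives exactness below the top and $\operatorname{im}\delta^{r-1}=A(R)$; you then dualize and use Lemma~\ref{cert:product-check} for the top kernel, exactly as the paper does.

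Two small remarks. For $r\geq1$, injectivity of $\delta^0$ follows most directly from injectivity of any single $I\otimes G_j$ rather than of $\bigotimes_iG_i$, though your splitting argument already covers this case. The $r=0$ top-degree claims, $\operatorname{im}\delta^{-1}=A(R)=0$ and $\ker\partial_0=R^*$, deserve the one explicit sentence the paper gives them.
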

\begin{proof}
Maps in different factors commute, so every term in $\delta^2$
occurs twice. Choose $V_i=\im G_i\oplus Q_i$. The tensor complex
splits into tensor products of $[U_i\xrightarrow{I}U_i]$ and
$[0\to Q_i]$. Any summand with an identity factor is contractible:
the map removing that factor satisfies $\delta h+h\delta=I$;
mixed terms cancel in characteristic two. The only remaining
summand is $R\otimes\bigotimes_iQ_i$ in degree $r$.
For $r=0$, the complex is $R$ in degree zero, its top image is
$A(R)=0$ (the empty directional sum), and its top dual kernel is $R^*$.
For $r\geq1$, the image in top degree is the sum of the directional code spaces,
hence $A(R)$. Dualizing proves exactness below the top degree;
at the top, Lemma~\ref{cert:product-kernel} identifies the orthogonal
complement of $A(R)$ in $(R\otimes F^\Omega)^*$, under the natural
evaluation pairing, with the asserted tensor of coordinate kernels.
\end{proof}
The transposed equations apply $G_i^{\mathsf T}$ in an active
factor and sum over the possible factors to remove.
\label{ten:dual-arrays}

\subsection{Controlled Primitives}
\begin{proposition}\label{ten:filling-recursion}
Let $r\geq1$, and in this proposition write
$n_-:=\min_{i\in[r]}n_i$ and $n_+:=\max_{i\in[r]}n_i$.
Suppose all nonempty subtuples of $(\im G_i)_{i\in[r]}$ are
universally $\rho$-product-expanding and $n_+/n_-\leq K$.
For every field extension $E/F$, every finite-dimensional $E$-space
$R$, every $0\leq u<r$, and $b\in\im\delta^u$, there is $y\in T^u$
such that
\[
 \delta y=b,\qquad n_-\kappa_{r,u}|y|_b\leq|b|_b.
\]
\end{proposition}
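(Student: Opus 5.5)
The proof will be by induction on $r$, using the boundary values $\kappa_{r,0}=\kappa_{r,r-1}=\rho$ and the Pascal recursion $\kappa_{r,u}=\kappa_{r-1,u}\kappa_{r-1,u-1}/(K+1)$ for $1\leq u\leq r-2$. The two boundary cases $u=r-1$ and $u=0$ are handled directly. This covers every case with $r\leq2$. The interior cases are then handled by splitting off direction $r$. Throughout we work over an arbitrary extension $E/F$ with an arbitrary passive space $R$. The hypotheses are stable under these changes, and they restrict to subtuples.

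\emph{Case $u=r-1$.} By Lemma~\ref{ten:exactness}, $b\in\im\delta^{r-1}=A(E;R)$. Lemma~\ref{ten:collective-pe} gives $b=\sum_ix_i$ with $x_i\in A_i(E;R)$ and $\rho\sum_in_i\ell_i(x_i)\leq|b|$. Each nonzero $i$-line of $x_i$, indexed by $\omega_{[r]\setminus\{i\}}$, equals $(I\otimes G_i)$ applied to a unique message in $R\otimes U_i$. We place that message in the block of the $([r]\setminus\{i\})$-summand of $T^{r-1}$ indexed by $\omega_{[r]\setminus\{i\}}$. Summing over $i$ produces $y$ with $\delta y=b$ and $|y|_b\leq\sum_i\ell_i(x_i)\leq|b|_b/(\rho n_-)$.

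\emph{Case $u=0$.} Here $T^0=R\otimes\bigotimes_iU_i$ is a single block. If $b=0$, take $y=0$. Otherwise the unique preimage $y\neq0$ has $|y|_b=1$. Consider the $\{j\}$-component $(I\otimes G_j)y$ of $b$. After a passive basis choice, it is a nonzero vector-valued codeword of $C_j$. Applying one-factor product expansion to a single nonzero line shows that such a codeword has at least $\rho n_j$ nonzero coordinates. Hence $|b|_b\geq\rho n_-=\rho n_-|y|_b$.

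\emph{Inductive step, $1\leq u\leq r-2$ (so $r\geq3$).} Split $T^u=P^u\oplus Q^u$ according as $r\notin I$ or $r\in I$. Then $P^u$ is the $(r-1)$-direction complex $T'^u$ with passive space $R\otimes U_r$. Also $Q^u=\bigoplus_{\omega_r}T'^{u-1}$ with passive space $R$, one copy per fiber $\omega_r\in\Omega_r$. Block counts agree with these identifications. The differential is $\delta'$ on each piece plus the map $I\otimes G_r:P^{u-1}\to Q^u$.

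Write $b=(b_P,b_Q)$. Then $b_P\in\im\delta'$, so induction at $(r-1,u)$ gives $y_1\in P^u$ with $\delta'y_1=b_P$ and $n_-\kappa_{r-1,u}|y_1|_b\leq|b_P|_b$. Set $c=b-\delta y_1$. It lies in $Q^{u+1}$ and is closed, so each fiber $c_{\omega_r}$ is $\delta'$-closed in degree $u<r-1$. By Lemma~\ref{ten:exactness} each fiber is exact, and induction at $(r-1,u-1)$ fills it fiberwise. This gives $y_2\in Q^u$ with $\delta y_2=c$ and $n_-\kappa_{r-1,u-1}|y_2|_b\leq|c|_b$. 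Take $y=y_1+y_2$.

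The main obstacle is making the weight bookkeeping close exactly to $\kappa_{r,u}$. Each block of $y_1$ produces at most $n_r\leq n_+$ blocks under $G_r$, so
\[
|c|_b\leq|b_Q|_b+n_+|y_1|_b\leq|b_Q|_b+K|b_P|_b/\kappa_{r-1,u}.
\]
Write $\kappa_1=\kappa_{r-1,u}$ and $\kappa_2=\kappa_{r-1,u-1}$. Using $\kappa_2\leq1$ to absorb the $y_1$ term,
\[
n_-|y|_b\leq\frac{|b_P|_b}{\kappa_1}+\frac{|b_Q|_b+K|b_P|_b/\kappa_1}{\kappa_2}\leq\frac{(K+1)\bigl(|b_P|_b+|b_Q|_b\bigr)}{\kappa_1\kappa_2}=\frac{|b|_b}{\kappa_{r,u}}.
\]
Note that only the $Q$-part of $b$ is charged at the weaker rate, and only the $P$-part picks up the factor $K+1$. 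This is what allows the crude bound to match the recursion.
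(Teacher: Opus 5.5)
Your proof is correct and follows the paper's argument essentially step for step. It uses the same two boundary cases via collective product expansion, the same splitting off of direction $r$ with passive space $R\otimes U_r$ on the inactive part, and the same bookkeeping, which bounds $n_-|y|_b$ by $\frac{K+\kappa_{r-1,u-1}}{\kappa_{r-1,u}\kappa_{r-1,u-1}}|b_P|_b+\frac{1}{\kappa_{r-1,u-1}}|b_Q|_b$. The only difference is cosmetic: you get exactness of the active fibers of $b-\delta y_1$ from its closedness, whereas the paper modifies a chosen primitive to have zero inactive component; both rest on exactness of the $(r-1)$-direction complex in degree $u<r-1$.
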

\begin{proof}
All spaces and encoder matrices in this proof are extended to $E$,
and all tensor products are over $E$. Induct on $r\geq1$,
simultaneously for all coefficient extensions and passive spaces. If $b=0$ take $y=0$. In degree zero, exactness gives
a unique nonzero primitive with one block. Applying the singleton
case of Lemma~\ref{cert:collective-expansion} in any factor, with
all other factors passive, gives $|b|_b\geq\rho n_-$.
For $u=r-1$, collective expansion of the full tuple writes
$b=\sum_i b_i$ with $\rho\sum_i n_i\ell_i(b_i)\leq|b|_b$.
The unique messages of these directional code lines give $y$ with
$|y|_b=\sum_i\ell_i(b_i)$, proving the claim in this degree.

For $1\leq u\leq r-2$, put $a=\kappa_{r-1,u}$ and
$c=\kappa_{r-1,u-1}$. Write the complex as
\[
 T^j=T_{r-1}^j(R\otimes U_r)\oplus
                    \bigoplus_{\omega\in\Omega_r}T_{r-1}^{j-1}(R),
 \qquad \delta(z_0,z_1)=(\delta'z_0,G_rz_0+\delta'z_1).
\]
For $b=(b_0,b_1)$ choose an old primitive $y^{\rm old}$.
The induction hypothesis gives $y_0$ with
\[
 \delta'y_0=b_0,\qquad |y_0|_b\leq |b_0|_b/(n_-a).
\]
Set $b'=b-\delta(y_0,0)$. Then
\[
 b'_0=0,\qquad |b'|_b\leq|b_1|_b+n_+|y_0|_b
                         \leq|b_1|_b+(K/a)|b_0|_b.
\]
Each active slice of $b'$ is in $\im\delta'$, as follows.
Let $y'=y^{\rm old}-(y_0,0)$. Its inactive component is closed,
and $u<r-1$, so exactness gives $y'_0=\delta'z_0$. Subtracting
$\delta(z_0,0)$ from $y'$ produces a primitive of $b'$ with
inactive component zero. Thus $b'_\omega=\delta'\widetilde y_{1,\omega}$.
Apply induction in degree $u-1$ separately in every active slice;
this yields $y_1$ with $\delta'y_1=b'_1$ and
$|y_1|_b\leq|b'|_b/(n_-c)$. Consequently
\[
 |(y_0,y_1)|_b\leq\frac1{n_-}
 \left(\frac{K+c}{ac}|b_0|_b+\frac1c|b_1|_b\right)
 \leq\frac{K+1}{n_-ac}|b|_b.
\]
Since $a,c\leq1$, the last inequality is valid, and its constant
is $1/(n_-\kappa_{r,u})$. This proves the induction.
\end{proof}

\begin{proof}[Proof of Theorem~\ref{geo:general-filling-table}]
At $v\in X(\ell)$, its simultaneous chart has $r=t-\ell$ factors,
minimum length at least $n_-$, and ratio at most $K$. In degree
$u=k-\ell<r$, Proposition~\ref{ten:filling-recursion} gives a
primitive $p$ of $\delta_vz$ with
$|p|_b\leq|\delta_vz|_b/(n_-\kappa_{r,u})$.
By Lemma~\ref{ten:exactness}, $z+p=\delta_vy$ for some local $y$.
Thus the minimum in \eqref{geo:local-filling} is at most $|p|_b$.
The charts preserve blocks and intertwine all incidences, so this
is the claimed inequality on the actual star. The same argument
works over every extension with every passive space.
\end{proof}

\section{Mixing of Cubical Face Walks}\label{geo:section}
\label{geo:geometry-data}
Throughout this section $X$ is an $(n_1,\ldots,n_t)$-regular,
$\lambda$-expanding cubical complex, with $n_+/n_-\leq K$.
Recall $D_I=\prod_{i\in I}n_i$ and
\begin{equation}\label{geo:counts}
 |X(I)|=2^{t-|I|}VD_I,\qquad Z_j=2^{t-j}V e_j(n_1,\ldots,n_t).
\end{equation}
All operators below act on real-valued functions. For a positive
probability measure $\pi$, use
$\langle x,y\rangle_\pi=\sum_v\pi(v)x(v)y(v)$; an inner product
without subscript uses counting measure.
We shall prove the mixing estimate \eqref{geo:mixing-bound} for a
walk obtained by descending to a lower face, moving to a parallel
face, and ascending again.

\subsection{Parallel Components and Incidence Measures}
\label{geo:parallel-components}\label{geo:incidence-data}
Fix $0\leq\ell\leq k<t$. Every $k$-face has $c_{k\ell}$ lower
$\ell$-faces, and an $\ell$-face of type $I$ has $h_I$ upper
$k$-faces, where
\begin{equation}\label{geo:c-h}
 c_{k\ell}=\binom{k}{\ell}2^{k-\ell},\qquad
 h_I=\sum_{\substack{S\supseteq I\\|S|=k}}D_{S\setminus I}.
\end{equation}
The first identity counts faces of a cube; the second follows from
Lemma~\ref{lem:cubical-stars}. The parallel graph $G_{I,j}$ has
components $C_{I,j,c}$, each with two parts of size $VD_I$ and
regular degree $n_j$, by \eqref{eq:parity-face-count}.

Define the incidence measures and averaging maps by
\begin{equation}\label{geo:pi}
 \begin{gathered}
 \pi_k(f)=Z_k^{-1},\qquad \pi_\ell(v)=\frac{h_{\dir(v)}}{c_{k\ell}Z_k},\\
 (Rx)(v)=h_{\dir(v)}^{-1}\sum_{f\geq v,\ \dim f=k}x(f),\qquad
 (Uz)(f)=c_{k\ell}^{-1}\sum_{v\leq f,\ \dim v=\ell}z(v).
 \end{gathered}
\end{equation}
\begin{lemma}\label{geo:adjoint}
The measure $\pi_\ell$ has total mass one, $U=R^*$,
$R1=U1=1$, and $R$ preserves means and is a contraction in the
indicated weighted Euclidean norms.
\end{lemma}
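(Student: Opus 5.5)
The whole lemma comes down to one double-counting identity for the incidence pairs $(v,f)$ with $v\in X(\ell)$, $f\in X(k)$ and $v\leq f$. Every other assertion then follows formally. Count these pairs in two ways. Each $k$-face has exactly $c_{k\ell}$ lower $\ell$-faces, by \eqref{eq:lower-face-count}. Each $\ell$-face of type $I$ has exactly $h_I$ upper $k$-faces, by \eqref{eq:upper-face-count} summed over types $S\supseteq I$ with $|S|=k$. Hence
\[
 \sum_{v\in X(\ell)}h_{\dir(v)}=c_{k\ell}Z_k .
\]
Dividing by $c_{k\ell}Z_k$ gives $\sum_v\pi_\ell(v)=1$. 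Each $\pi_\ell(v)$ is positive, since $h_I\geq1$ whenever $|I|\leq k<t$: every $n_i\geq2$ and some type $S\supseteq I$ of size $k$ exists. So $\pi_\ell$ is a genuine positive probability measure.

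For the adjoint relation, fix $x\in\R^{X(k)}$ and $z\in\R^{X(\ell)}$. Substituting the definitions in \eqref{geo:pi},
\[
 \langle Rx,z\rangle_{\pi_\ell}
 =\sum_v\frac{h_{\dir(v)}}{c_{k\ell}Z_k}\cdot\frac1{h_{\dir(v)}}\sum_{f\geq v}x(f)z(v)
 =\frac1{c_{k\ell}Z_k}\sum_{v\leq f}x(f)z(v).
\]
The same substitution gives
\[
 \langle x,Uz\rangle_{\pi_k}=\frac1{Z_k}\sum_f x(f)\,\frac1{c_{k\ell}}\sum_{v\leq f}z(v),
\]
which is the same symmetric sum over incidence pairs. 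Hence $U=R^*$ with respect to these weighted inner products.

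Next, $R1=1$ and $U1=1$ hold because each operator averages over exactly $h_{\dir(v)}$, respectively $c_{k\ell}$, terms. Mean preservation follows from adjointness:
\[
 \langle Rx,1\rangle_{\pi_\ell}=\langle x,U1\rangle_{\pi_k}=\langle x,1\rangle_{\pi_k}.
\]

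For the contraction claim, apply Cauchy--Schwarz (or Jensen) to each average:
\[
 (Rx)(v)^2\leq h_{\dir(v)}^{-1}\sum_{f\geq v}x(f)^2 ,
\]
so $R(x)^2\leq R(x^2)$ pointwise. Integrating against $\pi_\ell$ and using mean preservation gives $\|Rx\|_{\pi_\ell}^2\leq\|x\|_{\pi_k}^2$. Therefore $U=R^*$ is a contraction as well.

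None of these steps is a real obstacle. The only point that needs care is checking that the face-count formulas apply uniformly across all types $I$, which is exactly what regularity provides. That check is what makes the normalization of $\pi_\ell$ consistent, and it carries no further content.
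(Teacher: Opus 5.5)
Your proposal is correct and follows essentially the same route as the paper. Both proofs double-count incidence pairs to normalize $\pi_\ell$ and obtain $U=R^*$, read $R1=U1=1$ off the incidence counts, deduce mean preservation, and get contraction from the pointwise Jensen inequality $(Rx)^2\leq R(x^2)$ integrated against $\pi_\ell$.
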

\begin{proof}
Double-counting $(v,f)$ with $v\leq f$ gives
$\sum_vh_{\dir(v)}=c_{k\ell}Z_k$. Also
\[
 \langle z,Rx\rangle_{\pi_\ell}
 =\frac1{c_{k\ell}Z_k}\sum_{v\leq f}z(v)x(f)
 =\langle Uz,x\rangle_{\pi_k}.
\]
The identities on constants follow from the incidence counts, and
then give preservation of means. Finally Jensen's inequality
$(Rx)^2\leq R(x^2)$, summed against $\pi_\ell$, proves contraction.
\end{proof}

\subsection{Transported-Face Walks}
\label{geo:walk-data}
For $|I|=\ell$, put
\[
 M_{I,j}=n_j^{-1}\operatorname{Adj}(G_{I,j}),\qquad
 M_I=\frac1{t-\ell}\sum_{j\notin I}M_{I,j},\qquad
 M=\bigoplus_{|I|=\ell}M_I,
\]
\begin{equation}\label{geo:walk}
 W_{k\ell}=UMR=R^*MR.
\end{equation}
The operator $M$ is self-adjoint for $\pi_\ell$, which is constant
on each type. All three operators in $UMR$ are nonnegative and
preserve constants. Thus $W_{k\ell}$ is a reversible transition
matrix for the uniform measure on $X(k)$.
\label{geo:walk-reversible}
More explicitly, let $\mathcal W_{k\ell}(f,g)$ consist of tuples
$(v,j,q,v')$ in which $v\leq f$ and $v'\leq g$ are opposite
$\ell$-facets of $q\in X(\dir(v)\cup\{j\})$. Then
\begin{equation}\label{eq:walk-paths}
 W_{k\ell}(f,g)=\sum_{(v,j,q,v')\in\mathcal W_{k\ell}(f,g)}
          \frac1{c_{k\ell}(t-\ell)n_j h_{\dir(v)}}.
\end{equation}
Set
\begin{equation}\label{geo:beta}
 \mu_I=\frac{2VD_Ih_I}{c_{k\ell}Z_k},\qquad
 \beta_{k\ell}=\max_{|I|=\ell}\mu_I^{-1}.
\end{equation}
Here $\mu_I$ is the $\pi_\ell$-mass of each component of every
$G_{I,j}$, independently of $j$.

\subsection{The Mixing Estimate}
\begin{lemma}\label{geo:mixing}
For every $A\subseteq X(k)$,
\begin{equation}\label{geo:mixing-bound}
 \langle1_A,W_{k\ell}1_A\rangle
 \leq\lambda|A|+\beta_{k\ell}\frac{|A|^2}{Z_k},
\end{equation}
where
\begin{equation}\label{geo:beta-ratio}
 \beta_{k\ell}\leq\binom t\ell2^{t-\ell-1}K^k.
\end{equation}
\end{lemma}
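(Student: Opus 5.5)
The plan is to write $\langle 1_A,W_{k\ell}1_A\rangle$ through the adjoint factorization $W_{k\ell}=R^*MR$ of Lemma~\ref{geo:adjoint} and then split $x=R1_A$ spectrally on each parallel component. Note that $\langle\cdot,\cdot\rangle$ in \eqref{geo:mixing-bound} is the counting inner product on $X(k)$. Since $\pi_k=Z_k^{-1}$ is uniform, this gives
\[
 \langle 1_A,W_{k\ell}1_A\rangle=Z_k\langle 1_A,UMR1_A\rangle_{\pi_k}
 =Z_k\langle x,Mx\rangle_{\pi_\ell},\qquad x:=R1_A .
\]
Since $\pi_\ell$ is constant on each type $I$, the form splits as $\sum_{|I|=\ell}\frac1{t-\ell}\sum_{j\notin I}\langle x_I,M_{I,j}x_I\rangle_{\pi_\ell}$. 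Each $M_{I,j}$ splits further over the components $C=C_{I,j,c}$, with $M_{I,j}|_C=M_C$.

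On a component $C$, I decompose $x|_C=\alpha\mathbf 1_C+\gamma\sigma_C+z$ with $z\perp\{\mathbf 1_C,\sigma_C\}$. These pieces are orthogonal and $M_C$ preserves each summand, so
\[
 \langle x,M_Cx\rangle=\alpha^2|C|-\gamma^2|C|+\langle z,M_Cz\rangle\le \alpha^2|C|+\lambda\|z\|^2 .
\]
Here I use that $\sigma_C$ has eigenvalue $-1$ and that \eqref{eq:cubical-spectral-expansion} together with symmetry bounds the quadratic form on $z$ by $\lambda\|z\|^2$. I then bound $\|z\|^2\le\|x|_C\|^2$ and $\alpha^2|C|=(\sum_C x)^2/|C|$. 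Weighting by $\pi_\ell$ and averaging over $j$ and $I$, the first piece totals at most $\lambda\|x\|_{\pi_\ell}^2\le\lambda\langle R(1_A^2),1\rangle_{\pi_\ell}$ by Jensen, as in Lemma~\ref{geo:adjoint}. Because $R$ preserves means, this equals $\lambda\|1_A\|_{\pi_k}^2=\lambda|A|/Z_k$. Multiplying by $Z_k$ gives the term $\lambda|A|$.

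For the constant-part term, I write it in $\pi_\ell$-language. On $C$ the measure $\pi_\ell$ is constant, so $\pi_\ell(C)\,\mathbb E_{\pi_\ell}[x\mid C]^2=(\sum_C\pi_\ell x)^2/\pi_\ell(C)$. Here $\pi_\ell(C)=\mu_I$ by the definition \eqref{geo:beta} and the count $|C|=2VD_I$ from \eqref{eq:parity-face-count}. I then use $\sum_C\pi_\ell x\le\langle x,1\rangle_{\pi_\ell}=|A|/Z_k$ (mean preservation). This gives a sum over components of $s_C^2/\mu_I$ with $s_C\ge0$ and $\sum_C s_C\le |A|/Z_k$ for each fixed $j$, where the sum runs over all types $I$ and components. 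Hence
\[
 \sum s_C^2/\mu_I\le\beta_{k\ell}\Bigl(\sum s_C\Bigr)^2\le\beta_{k\ell}|A|^2/Z_k^2 .
\]
Averaging over $j$ preserves this bound, and multiplying by $Z_k$ gives the second term. I expect the bookkeeping of these normalizations to be the only delicate point; the argument is otherwise standard expander mixing.

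Finally, for \eqref{geo:beta-ratio} I compute
\[
 \mu_I^{-1}=\frac{c_{k\ell}Z_k}{2VD_Ih_I},\qquad Z_k=2^{t-k}V e_k(n),\qquad c_{k\ell}=\binom k\ell2^{k-\ell}.
\]
This gives $\mu_I^{-1}=\binom k\ell 2^{t-\ell-1}e_k(n)/(D_Ih_I)$. Then $D_Ih_I=\sum_{S\supseteq I,|S|=k}D_S\ge\binom{t-\ell}{k-\ell}n_-^k$ and $e_k(n)\le\binom tk n_+^k$. Using $\binom k\ell\binom tk/\binom{t-\ell}{k-\ell}=\binom t\ell$ and $n_+/n_-\le K$ yields the stated bound.
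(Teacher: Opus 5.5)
Your proposal is correct and follows the paper's own route. You factor $W_{k\ell}=R^*MR$ and split $R1_A$ on each parallel component into its $\mathbf 1_C$, $\sigma_C$, and orthogonal parts, using $\lambda$ on the orthogonal part. You then collect the constant parts by nonnegativity into $\beta_{k\ell}(\int R1_A\,d\pi_\ell)^2$ and bound $\mu_I^{-1}$ through the same face-count identities. One phrase needs tightening: ``for each fixed $j$'' does not quite parse, since $j$ ranges over $[t]\setminus I$; instead bound each type $I$ by $\mu_I^{-1}\bigl(\int_{X(I)}R1_A\,d\pi_\ell\bigr)^2$ first, average over $j\notin I$, and only then sum over types, as the paper does.
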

\begin{proof}
For $z\geq0$ on $X(\ell)$, restrict to a component $C$ of
$G_{I,j}$ and write
$z|_C=a_C1_C+b_C\sigma_C+z_C^0$, with
$z_C^0\perp1_C,\sigma_C$. The two distinguished eigenvalues are
$1,-1$, while the remaining norm is at most $\lambda$.
Since both distinguished vectors have squared norm $\mu_I$,
\[
 \langle z|_C,M_Cz|_C\rangle_{\pi_\ell}
 \leq\mu_Ia_C^2-\mu_Ib_C^2+\lambda\|z_C^0\|_{\pi_\ell}^2
 \leq\lambda\|z|_C\|_{\pi_\ell}^2
       +\mu_I^{-1}\left(\int_Cz\,d\pi_\ell\right)^2.
\]
The second inequality uses $\lambda\geq0$. Nonnegativity of $z$
gives $\sum_C(\int_Cz)^2\leq(\int_{X(I)}z)^2$, and the same
inequality applies when summing over types. Summing over components,
averaging over $j$, and summing over $I$ therefore yields
\[
 \langle z,Mz\rangle_{\pi_\ell}
 \leq\lambda\|z\|_{\pi_\ell}^2+
          \beta_{k\ell}\left(\int z\,d\pi_\ell\right)^2.
\]
Take $z=R1_A$. Lemma~\ref{geo:adjoint} bounds its squared norm
by $|A|/Z_k$ and identifies its mean with $|A|/Z_k$.
Multiplying the last display by $Z_k$ proves \eqref{geo:mixing-bound}.
Finally
\[
 \mu_I=\frac{2}{c_{k\ell}2^{t-k}}
 \frac{\sum_{S\supseteq I,\ |S|=k}D_S}{\sum_{|S|=k}D_S}
 \geq\frac{2\binom{t-\ell}{k-\ell}n_-^k}
 {c_{k\ell}2^{t-k}\binom tk n_+^k}
 \geq\frac1{\binom t\ell2^{t-\ell-1}K^k}.
\]
Use $\binom tk\binom k\ell=\binom t\ell\binom{t-\ell}{k-\ell}$
and take reciprocals.
\end{proof}

\section{Cosystolic Distance and Cocycle Expansion}\label{sec:cosystole}
Let $X$ have the geometry of Section~\ref{geo:section}, and let
$\mathcal F$ be a cosheaf over a field of characteristic two.
For a subspace $U\leq C^j$, put
$\operatorname{dist}_b(x,U)=\min_{u\in U}|x-u|_b$. Write
\begin{equation}\label{hom:block-weight}
 \mu^j=\min_{x\in\ker\delta^j\setminus\im\delta^{j-1}}|x|_b,
 \qquad
 \varepsilon^j=\inf_{x\notin\ker\delta^j}
       \frac{|\delta^jx|_b}{\operatorname{dist}_b(x,\ker\delta^j)}.
\end{equation}
Empty minima and infima are $+\infty$.
The conclusion of this section is a linear lower bound on $\mu^j$
and a positive lower bound on $\varepsilon^j$, under local filling
and a sufficiently small geometric expansion parameter.

\subsection{Local Cominimality and Support Cancellation}
\begin{definition}\label{geo:cominimal}\label{hom:local-cominimality}
A cochain $x\in C^j$ is locally cominimal if
$|x+\delta y|_b\geq|x|_b$ whenever $y\in C^{j-1}$ is supported
on faces containing a single vertex. Put
\[
 m^j(\mathcal F)=\min\{|x|_b:0\ne x\in\ker\delta^j,
                                    \ x\text{ locally cominimal}\}.
\]
\label{geo:cominimal-distance-definition}\label{hom:cominimal-distance}
\end{definition}
For $v\in X(\ell)$ and $\ell\leq k<t$, consider facets
$g\in X(k)$ of $(k+1)$-faces containing $v$, with $v\not\leq g$.
Let $O_v(k)$ contain those with $v\cap g=\varnothing$, and
$N_v(k)$ those with nonempty intersection.
\label{geo:external-face-sets}

\begin{lemma}\label{cube:external-witness}\label{geo:external-characterization}
For each such pair $(v,g)$ there is at most one $(k+1)$-face
containing both. If it exists, then either $v\cap g$ is a facet
of $v$, or $v\cap g=\varnothing$. In the latter case there is
within the witness a unique $(\ell+1)$-face $q$ whose opposite
$\ell$-facets are $v$ and a face $v'\leq g$.
\end{lemma}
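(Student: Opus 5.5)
Work entirely inside the lower cube of a putative witness. Suppose $w\in X(k+1)$ contains both $v$ and $g$. By Definition~\ref{def:cubical-complex}(1), $\phi$ restricts to a poset isomorphism $X_{\le w}\cong(\mcC_t)_{\le\wt w}$. So all assertions about faces below $w$ become assertions about words in $\{0,1,\star\}^{[t]}$ below $\wt w$. There, a word is determined by its star set and its fixed bits, and intersections of faces below $w$ correspond to meets of words. The plan is: (i) uniqueness of $w$; (ii) a combinatorial classification of $v\cap g$ inside the cube; (iii) construction of $q$ in the disjoint case.

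For uniqueness, I would show that $w$ is determined by $v$ and $g$. Since $g$ is a facet of $w$, we have $\dir(w)=\dir(g)\cup\{j\}$ for the unique direction $j\in\dir(w)\setminus\dir(g)$. Because $v\not\le g$ while $v\le w$, this $j$ lies in $\dir(v)$, or else $\wt v_j\ne\wt g_j$. In either case $j$ is forced by the words $\wt v,\wt g$, so $\dir(w)$ is forced. Now apply Lemma~\ref{lem:cubical-stars} at a vertex $x\le g$. The face $w$ is the unique element of $X(\dir(g)\cup\{j\})$ above $g$ containing the edge in direction $j$ at $x$. I would pick $x\le g$ as close to $v$ as possible, so that this edge lies in both $w$ and $v$. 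If $j\in\dir(v)$, one can choose $x\le v\cap g$ when that intersection is nonempty, and then the edge is the $j$-edge of $v$ at $x$. The disjoint case $v\cap g=\varnothing$ is the delicate one. There I would rather use Definition~\ref{def:cubical-complex}(2): if $w,w'$ both contain $v$ and $g$, then $X_{\le w}\cap X_{\le w'}=X_{\le h}$ with $h\ge v,g$. Dimension counting then gives $h=w$, since $\dim h\le k+1$ and $h\ge g$ with $v\not\le g$ forces $\dim h=k+1$. Hence $w=h=w'$. This argument handles all cases uniformly, so I would use it for uniqueness throughout.

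For the classification, work in $\mcC_t$ below $\wt w$. The facet $g$ fixes coordinate $j$ to some $\epsilon$. If $j\notin\dir(v)$, then $\wt v_j=1-\epsilon$, and $\wt v,\wt g$ disagree in a fixed coordinate, so the meet is empty. If $j\in\dir(v)$, the meet of $\wt v$ and $\wt g$ is $\wt v$ with coordinate $j$ set to $\epsilon$. This is a facet of $v$, and since $\dir(v)\setminus\{j\}\subseteq\dir(g)$ and the other fixed coordinates agree below $\wt w$, it does lie below $\wt g$. Pulled back through the isomorphism, this gives the dichotomy. In the disjoint case, $j\notin\dir(v)$, and I take $\wt q$ to be $\wt v$ with coordinate $j$ changed to $\star$. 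Then $\wt q\le\wt w$ with $\dim q=\ell+1$, and its two $j$-facets are $v$ and $v'$, where $\wt{v'}$ has $j$-bit $\epsilon$ and $v'\le g$. Uniqueness of $q$ follows because any $(\ell+1)$-face whose opposite facets include $v$, with the other facet under $g$, must add exactly direction $j$ to $v$. Adding any other direction $i$ would leave the $j$-bit at $1-\epsilon$, and the opposite facet could not lie below $g$.

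The main obstacle I expect is the uniqueness of $w$ across the whole complex rather than inside a single cube. The intersection axiom resolves this via the dimension count above. I would check that count carefully, using that $h\le w$ and $h\ge g$, so that $h$ equals $g$ or $w$, and $h\ne g$ because $v\le h$ but $v\not\le g$.
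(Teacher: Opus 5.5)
Your proposal is correct and essentially follows the paper's proof: uniqueness of the witness comes from the intersection axiom, since the common lower face of two witnesses lies in $[g,w]$ but cannot be $g$ because $v\not\le g$. The rest is a coordinate analysis inside the lower cube of the witness, split by whether the direction fixed by $g$ is active in $v$, with $q$ obtained by starring that bit of $v$.
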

\begin{proof}
Two witnesses would have intersection containing $g$ and $v$.
Its dimension is at least $k$, and equality would make it $g$,
contradicting $v\not\leq g$. Thus its dimension is $k+1$ and
the witnesses coincide. In this cube write $g$ as the facet
fixing direction $i$ to $b$. If $v$ is active in $i$, intersecting
with $g$ fixes this one coordinate and gives a facet of $v$.
Otherwise $v$ must have $i$-bit $1-b$, so the intersection is
empty. Replacing that bit by $\star$, or by $b$, respectively,
gives the unique faces $q,v'$ in the second alternative.
In particular $N_v(k)=\varnothing$ when $\ell=0$.
\end{proof}

Suppose positive numbers $\kappa_{r,u}$ satisfy
\eqref{geo:local-filling} for every upward star in degrees below
its top. This hypothesis alone on the coefficients is used until
the final specialization to product expansion.
\label{geo:coefficient-data}

\begin{lemma}\label{geo:local-cancellation}
If $x\in C^k$ is a locally cominimal cocycle with support $A$, then
for $v\in X(\ell)$, $0\leq\ell\leq k<t$,
\begin{equation}\label{geo:local-support}
 n_-\kappa_{t-\ell,k-\ell}|A\cap X_{\geq v}(k)|
 \leq |A\cap O_v(k)|+|A\cap N_v(k)|.
\end{equation}
\end{lemma}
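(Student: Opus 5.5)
We apply the local filling inequality \eqref{geo:local-filling} at the star of $v$, to the restriction of $x$ to faces above $v$. Put $z=x|_{X_{\geq v}(k)}\in C_v^{k-\ell}$; then $|z|_b=|A\cap X_{\geq v}(k)|$. The argument rests on two claims: local cominimality forces the minimum in \eqref{geo:local-filling} to equal $|z|_b$, and the cocycle condition forces $|\delta_vz|_b\leq|A\cap O_v(k)|+|A\cap N_v(k)|$. Combining these with \eqref{geo:local-filling} gives \eqref{geo:local-support}.

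\emph{Lower bound.} For the minimum, take any $y\in C_v^{k-\ell-1}$ and regard it as a global cochain $\tilde y\in C^{k-1}$ supported on faces containing $v$; these faces contain a common vertex of $v$. The coboundary $\delta\tilde y$ has two kinds of terms. On $k$-faces $g\geq v$ it agrees with $\delta_v y$. It may also be nonzero on $k$-faces $g\not\geq v$ that contain some $(k-1)$-face $f\geq v$. Local cominimality gives $|x+\delta\tilde y|_b\geq|x|_b$, but this does not directly say $|z+\delta_vy|_b\geq|z|_b$, because of those spill-over faces. So I will not use arbitrary $y$.

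Instead I will apply \eqref{geo:local-filling} in the form proved in Proposition~\ref{ten:filling-recursion}, choosing $y$ with $z+\delta_vy=p$ small. Expected main obstacle: the clean resolution is likely that the paper's definition intends cominimality with respect to $y$ supported on the star, and the bound $|x+\delta\tilde y|_b\geq|x|_b$ restricted to the star loses at most the faces outside. I would first try to show that the outside contributions are absorbed. If $\tilde y$ lives on faces $\geq v$ of dimension $k-1\geq\ell$, then any $k$-face $g$ containing such an $f$ either contains $v$ or lies in $O_v(k)\cup N_v(k)$ (by Lemma~\ref{geo:external-characterization}, $g$ shares a $(k+1)$-coface with $v$, namely $f\vee g$). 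Then $|x+\delta\tilde y|_b\geq|x|_b$ yields
\[
|z+\delta_vy|_b\geq|z|_b-|(O_v\cup N_v)(k)\text{-changes}|.
\]
This is not obviously controlled. A cleaner route is the one I will actually commit to: choose $y=y_0$ as the minimizer, so that $|z+\delta_vy_0|_b=\min$. Cominimality then gives $|z|_b\leq|z+\delta_vy_0|_b+\#\{g\in O_v\cup N_v:\ \text{changed}\}$. I will only count changed faces of this kind that lie outside $A$, or else the bound is absorbed into the right-hand side of \eqref{geo:local-support} up to a constant. If the constant is not exactly 1, I will recheck whether the star of a vertex (rather than of $v$) is intended, in which case spill-over cancels cleanly.

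\emph{Upper bound on $|\delta_vz|_b$.} For a $(k+1)$-face $h\geq v$, $\delta x=0$ gives
\[
(\delta_vz)(h)=\sum_{g\leq h,\,g\not\geq v}R_{hg}x(g).
\]
Hence $(\delta_vz)(h)\neq0$ forces some $g\in A$ with $g\leq h$, $v\not\leq g$, so $g\in O_v(k)\cup N_v(k)$. By the uniqueness in Lemma~\ref{geo:external-characterization}, each such $g$ lies in at most one $h\geq v$. This injection bounds $|\delta_vz|_b$ by $|A\cap O_v(k)|+|A\cap N_v(k)|$.
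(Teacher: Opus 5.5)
Your upper bound on $|\delta_v z|_b$ is exactly the paper's argument: the cocycle equation at each $(k+1)$-face $h\geq v$ produces a facet $g\in A$ with $v\not\leq g$, and injectivity comes from Lemma~\ref{cube:external-witness}. The gap is in the lower bound. You never show that the minimum in \eqref{geo:local-filling} equals $|z|_b$. Instead you leave an unfinished fallback: absorbing ``changed'' faces into the right-hand side ``up to a constant'', or reinterpreting which star is meant. That route could not give the stated inequality with the exact constant $n_-\kappa_{t-\ell,k-\ell}$ anyway. As written, the proof is therefore incomplete.

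The obstacle you are working around does not exist. The coboundary raises dimension along the order: $(\delta\tilde y)(g)=\sum_{f\in X_{\leq g}(k-1)}R_{gf}\tilde y(f)$. So a $k$-face $g$ receives a contribution only from some $f\leq g$ with $f\geq v$, and then $g\geq f\geq v$ by transitivity. There are no $k$-faces $g\not\geq v$ containing a $(k-1)$-face $f\geq v$, so there is no spill-over. The appeal to a common coface ``$f\vee g$'' is likewise unfounded. Hence, for $y\in C_v^{k-\ell-1}$ extended by zero to $\tilde y\in C^{k-1}$, the cochain $\delta\tilde y$ is supported in $X_{\geq v}(k)$ and equals $\delta_v y$ there. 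This gives
\[
 |x+\delta\tilde y|_b=|x|_b-|z|_b+|z+\delta_v y|_b .
\]
Every face containing $v$ contains any fixed vertex $w\leq v$, so $\tilde y$ is an admissible correction in Definition~\ref{geo:cominimal}. Cominimality then yields $|z+\delta_v y|_b\geq|z|_b$ for every $y$, so the minimum in \eqref{geo:local-filling} is $|z|_b$. This is the missing line; with it, your two halves combine directly to give \eqref{geo:local-support}.
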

\begin{proof}
A local correction $y$ at $v$ extends by zero to a global cochain.
Its coboundary is supported entirely on faces containing $v$ and
restricts there to $\delta_vy$. Choosing any vertex of $v$ shows
that this correction is allowed by local cominimality. Hence
$x_v$ is minimum weight modulo local coboundaries. Apply
\eqref{geo:local-filling} to obtain
$|\delta_vx_v|_b\geq n_-\kappa_{t-\ell,k-\ell}|x_v|_b$.
For each $(k+1)$-face where this local coboundary is nonzero,
the global equation $\delta x=0$ supplies a nonzero external
facet contribution. Choose one such facet $g\in A$.
Lemma~\ref{cube:external-witness} makes the assignment to $g$
injective, and puts its image in $O_v(k)\cup N_v(k)$.
This proves the opposite inequality and hence the claim.
\end{proof}

\subsection{Neighbor Covering and the Support Recursion}
Let $P_{k\ell}(f,g)=|\mathcal W_{k\ell}(f,g)|$ count the paths in
\eqref{eq:walk-paths}, with all listed edges retained. Put
\begin{equation}\label{geo:path-bound}
 L_{k\ell}=c_{k\ell}(t-\ell)\binom{t-\ell}{k-\ell}n_+^{k-\ell+1}.
\end{equation}
\begin{lemma}\label{geo:path-domination}\label{geo:neighbor-covering}\label{geo:neighbor-bound}
For $f\in X(k)$ and $A\subseteq X(k)$,
\begin{align}
 \sum_{v\leq f,\ \dim v=\ell}|A\cap O_v(k)|
 &\leq(P_{k\ell}1_A)(f)\leq L_{k\ell}(W_{k\ell}1_A)(f),
                         \label{geo:opposite-path}\\
 \sum_{v\leq f,\ \dim v=\ell+1}|A\cap N_v(k)|
 &\leq(k-\ell)\sum_{u\leq f,\ \dim u=\ell}|A\cap X_{\geq u}(k)|
                         \quad(\ell<k).\label{geo:neighbor-cover}
\end{align}
\end{lemma}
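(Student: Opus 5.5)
The plan is to prove the three inequalities by explicit injections and double counting, using Lemma~\ref{cube:external-witness} to describe the external faces and the path formula \eqref{eq:walk-paths}.

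\emph{First inequality, left half.} For $v\le f$ with $\dim v=\ell$ and $g\in A\cap O_v(k)$, Lemma~\ref{cube:external-witness} gives a unique $(k+1)$-face $w$ containing $v$ and $g$. Inside $w$ it gives a unique $(\ell+1)$-face $q$ whose opposite $\ell$-facets are $v$ and some $v'\le g$; let $j$ be the direction added to $\dir(v)$. Then $(v,j,q,v')\in\mathcal W_{k\ell}(f,g)$. The map $(v,g)\mapsto(v,j,q,v')$ is injective: $v$ and $g$ are read off from the tuple together with its endpoint $g$. Summing over $v$ and $g\in A$ therefore gives
\[
\sum_{v}|A\cap O_v(k)|\le\sum_{g\in A}P_{k\ell}(f,g)=(P_{k\ell}1_A)(f).
\]

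\emph{First inequality, right half.} By \eqref{eq:walk-paths}, each path contributes $1/(c_{k\ell}(t-\ell)n_jh_{\dir(v)})$ to $W_{k\ell}(f,g)$. So it suffices to bound $n_jh_I\le(t-\ell)^{-1}L_{k\ell}/c_{k\ell}\cdot(t-\ell)$, that is, $n_jh_I\le\binom{t-\ell}{k-\ell}n_+^{k-\ell+1}$. This holds because $h_I$ is a sum of $\binom{t-\ell}{k-\ell}$ terms $D_{S\setminus I}\le n_+^{k-\ell}$ and $n_j\le n_+$. Comparing $P_{k\ell}$ with $W_{k\ell}$ entrywise then gives $P_{k\ell}\le L_{k\ell}W_{k\ell}$.

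\emph{Second inequality.} Take $v\le f$ with $\dim v=\ell+1$ and $g\in A\cap N_v(k)$. Lemma~\ref{cube:external-witness} says $u:=v\cap g$ is a facet of $v$, hence an $\ell$-face with $u\le f$ and $g\in A\cap X_{\ge u}(k)$. Charge the pair $(v,g)$ to $(u,g)$. It remains to show each $(u,g)$ receives at most $k-\ell$ charges, i.e.\ to count the $(\ell+1)$-faces $v$ with $u\le v\le f$ and $v\cap g=u$. Every such $v$ adds to $u$ one direction of $\dir(f)\setminus\dir(u)$, and by the interval isomorphism $[u,f]\cong 2^{\dir(f)\setminus\dir(u)}$ it is determined by that direction. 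There are $k-\ell$ such directions, which gives the bound.

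The main obstacle is the bookkeeping in the first injectivity claim. One must check that the witness face $w$, and hence $q$ and $j$, are uniquely determined, so that different $v$ cannot produce the same tuple. This is exactly the uniqueness in Lemma~\ref{cube:external-witness}.
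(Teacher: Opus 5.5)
Your proof is correct and follows the paper's own argument. It uses the same path injection via Lemma~\ref{cube:external-witness}, the same per-path lower bound $1/L_{k\ell}$ from $n_jh_I\le\binom{t-\ell}{k-\ell}n_+^{k-\ell+1}$, and the same fiber count of at most $k-\ell$ immediate cofaces of $v\cap g$ inside $f$. Two small corrections: injectivity of $(v,g)\mapsto(v,j,q,v')$ is automatic because the tuple records $v$ and its endpoint $g$, so the uniqueness in Lemma~\ref{cube:external-witness} only makes the map choice-free; and the intermediate bound you wrote should read $n_jh_I\le L_{k\ell}/(c_{k\ell}(t-\ell))$.
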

\begin{proof}
Each pair $(v,g)$ counted in the first sum gives, by
Lemma~\ref{cube:external-witness}, a path $(v,j,q,v')$ from $f$
to $g$. The path retains $(v,g)$, so this assignment is injective.
Every path in \eqref{eq:walk-paths} has probability at least
$L_{k\ell}^{-1}$, since
$h_I\leq\binom{t-\ell}{k-\ell}n_+^{k-\ell}$.
This proves the first line.
For the second, map $(v,g)$ to $(v\cap g,g)$. Its first entry
is an $\ell$-face $u\leq f,g$. For fixed $(u,g)$, the possible
$v$ lie among the $k-\ell$ immediate cofaces of $u$ inside $f$.
This bounds every fiber and proves the inequality.
\end{proof}

For $0\leq\ell\leq k<t$, define
\begin{equation}\label{geo:Btilde}
 b_{k\ell}=
 \frac{K^{k-\ell+1}\binom k\ell\binom{t-\ell}{k-\ell}
       (t-\ell)2^{k-\ell}(k-\ell)!}
      {\prod_{j=\ell}^k\kappa_{t-j,k-j}}.
\end{equation}
\begin{proposition}\label{geo:support-recursion}
Every locally cominimal cocycle in $C^k$, $k<t$, has support $A$
satisfying
\begin{equation}\label{geo:support-walk}
 |A|\leq\sum_{\ell=0}^k b_{k\ell}
                  \langle1_A,W_{k\ell}1_A\rangle.
\end{equation}
\end{proposition}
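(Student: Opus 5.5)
The plan is to track, for each $0\le\ell\le k$, a quadratic incidence count rather than $|A|$ itself, and to derive a recursion in $\ell$ from Lemma~\ref{geo:local-cancellation}. For $v\in X(\ell)$ put $a_v=|A\cap X_{\geq v}(k)|$ and define
\[
 T_\ell=\sum_{v\in X(\ell)}a_v^2=\sum_{f\in A}\ \sum_{v\leq f,\ \dim v=\ell}a_v .
\]
At $\ell=k$ the only $k$-face above $v\in X(k)$ is $v$ itself, so $a_v=1_A(v)$ and $T_k=|A|$. It therefore suffices to bound $T_k$.

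\textbf{Weighted local cancellation.} I multiply \eqref{geo:local-support} at each $v\in X(\ell)$ by $a_v\ge0$ and sum over $v$. This weighting is the key step: it converts the per-vertex sums in Lemma~\ref{geo:path-domination}, which range over $v\le f$, into sums over $f\in A$. On the left this gives $n_-\kappa_{t-\ell,k-\ell}T_\ell$.

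\textbf{The opposite-face term.} Here $\sum_v a_v|A\cap O_v(k)|=\sum_{f\in A}\sum_{v\le f}|A\cap O_v(k)|$. By \eqref{geo:opposite-path} this is at most $L_{k\ell}\sum_{f\in A}(W_{k\ell}1_A)(f)=L_{k\ell}\langle1_A,W_{k\ell}1_A\rangle$.

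\textbf{The neighbor term.} This term is absent for $\ell=0$, since Lemma~\ref{cube:external-witness} gives $N_v(k)=\varnothing$ there. For $\ell\ge1$ I rewrite it the same way and apply \eqref{geo:neighbor-cover} with $\ell-1$ in place of $\ell$. This gives the bound $(k-\ell+1)\sum_{f\in A}\sum_{u\le f,\dim u=\ell-1}a_u=(k-\ell+1)T_{\ell-1}$.

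\textbf{The recursion.} Combining the three pieces gives
\[
 T_\ell\le\frac{L_{k\ell}\langle1_A,W_{k\ell}1_A\rangle+(k-\ell+1)T_{\ell-1}}{n_-\kappa_{t-\ell,k-\ell}}
 \qquad(T_{-1}:=0).
\]
Unrolling from $\ell=k$ down to $0$ yields
\[
 |A|=T_k\le\sum_{\ell=0}^k\frac{L_{k\ell}}{n_-\kappa_{t-\ell,k-\ell}}\prod_{j=\ell+1}^{k}\frac{k-j+1}{n_-\kappa_{t-j,k-j}}\ \langle1_A,W_{k\ell}1_A\rangle .
\]
The product of the factors $k-j+1$ over $j=\ell+1,\dots,k$ equals $(k-\ell)!$, and the denominator collects $n_-^{k-\ell+1}\prod_{j=\ell}^k\kappa_{t-j,k-j}$.

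\textbf{Matching the constant.} Substituting \eqref{geo:path-bound} gives
\[
 L_{k\ell}/n_-^{k-\ell+1}\le \binom k\ell 2^{k-\ell}(t-\ell)\binom{t-\ell}{k-\ell}K^{k-\ell+1},
\]
using $n_+/n_-\le K$. This is exactly the coefficient $b_{k\ell}$ of \eqref{geo:Btilde}, so the displayed bound is \eqref{geo:support-walk}.

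\textbf{Expected obstacle.} I expect the only real subtlety to be choosing the weighting by $a_v$. Summing \eqref{geo:local-support} unweighted produces terms $\sum_v|A\cap O_v(k)|$ over all $v$, which the path-domination lemma, stated for $v\le f$ with $f$ fixed, does not directly control. I also need to check the index bookkeeping in \eqref{geo:neighbor-cover}: its $(\ell+1)$-faces $v$ correspond to the current level, so the multiplier is $k-(\ell-1)=k-\ell+1$, producing the factorial.
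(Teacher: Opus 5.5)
Your proposal is correct and is essentially the paper's proof. The paper runs the same recursion $r_jH_j\le O_j+(k-j+1)H_{j-1}$, with $r_j=n_-\kappa_{t-j,k-j}$ and $H_j,O_j,N_j$ the sums over $j$-faces $v\le f$, pointwise for each fixed $f$, and only at the end multiplies by $1_A(f)$ and sums. You instead sum against $1_A(f)$ first (your $T_\ell$ is exactly $\sum_{f\in A}H_\ell(f)$), so the two arguments differ only in the order of summation; note also that the final coefficient is bounded by $b_{k\ell}$ rather than equal to it, which is harmless because $\langle1_A,W_{k\ell}1_A\rangle\ge0$.
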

\begin{proof}
For fixed $f\in X(k)$, let $H_j,O_j,N_j$ be the sums over
$j$-faces $v\leq f$ of $|A\cap X_{\geq v}(k)|$,
$|A\cap O_v(k)|$, and $|A\cap N_v(k)|$, respectively.
Set $r_j=n_-\kappa_{t-j,k-j}$. Lemma~\ref{geo:local-cancellation}
gives $r_jH_j\leq O_j+N_j$. Also $H_k=1_A(f)$,
$N_0=0$, and \eqref{geo:neighbor-cover} gives
$N_j\leq(k-j+1)H_{j-1}$. Substitution from $j=0$ upwards yields
\[
 1_A(f)\leq\sum_{\ell=0}^k
 \frac{(k-\ell)!}{\prod_{j=\ell}^kr_j}\,O_\ell(f).
\]
Multiply by $1_A(f)$, sum over $f$, and use
\eqref{geo:opposite-path}. Since
$L_{k\ell}/n_-^{k-\ell+1}$ is at most
$K^{k-\ell+1}\binom k\ell2^{k-\ell}(t-\ell)
\binom{t-\ell}{k-\ell}$, the resulting coefficients are bounded
by \eqref{geo:Btilde}.
\end{proof}

\subsection{Global Locally Cominimal Distance}
Set $L_j=\sum_{\ell=0}^j b_{j\ell}$ and
$Q_j=\sum_{\ell=0}^j b_{j\ell}\beta_{j\ell}$.
\begin{theorem}\label{geo:cominimal-distance}
If $j<t$ and $\lambda L_j<1$, then
\begin{equation}\label{geo:distance-general}
 m^j(\mathcal F)\geq\frac{1-\lambda L_j}{Q_j}Z_j.
\end{equation}
\end{theorem}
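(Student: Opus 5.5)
The plan is to combine the support recursion of Proposition~\ref{geo:support-recursion} with the mixing estimate of Lemma~\ref{geo:mixing}, and then solve the resulting quadratic inequality in $|A|$. Let $x\in\ker\delta^j$ be nonzero and locally cominimal, with support $A\subseteq X(j)$, so $|A|=|x|_b\ge1$. Proposition~\ref{geo:support-recursion}, with $k=j<t$, gives
\[
 |A|\leq\sum_{\ell=0}^j b_{j\ell}\langle1_A,W_{j\ell}1_A\rangle .
\]
Here the local filling inequalities \eqref{geo:local-filling} are the standing hypothesis on the coefficients, so nothing beyond that hypothesis is needed.

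Next, each term is bounded with \eqref{geo:mixing-bound} at $k=j$:
\[
 \langle1_A,W_{j\ell}1_A\rangle\leq\lambda|A|+\beta_{j\ell}|A|^2/Z_j .
\]
Summing with the nonnegative weights $b_{j\ell}$ gives
\[
 |A|\leq\lambda L_j|A|+Q_j|A|^2/Z_j .
\]
Dividing by $|A|>0$ yields $1-\lambda L_j\leq Q_j|A|/Z_j$. The hypothesis $\lambda L_j<1$ makes the left side positive, and $Q_j>0$ because each $b_{j\ell}$ and each $\beta_{j\ell}$ is positive. Rearranging gives $|A|\geq(1-\lambda L_j)Z_j/Q_j$. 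Taking the minimum over all such $x$ proves \eqref{geo:distance-general}; if no such $x$ exists, the minimum is $+\infty$ and the bound holds trivially.

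The only point needing care is that Lemma~\ref{geo:mixing} is stated for the walk $W_{k\ell}$ with the same $\ell$ range $0\le\ell\le k<t$ used in the recursion, and that the constants $\beta_{j\ell}$ appearing there are those of \eqref{geo:beta}. This matching is immediate from the definitions, so I expect no real obstacle. The substantive work was already done in the recursion and the mixing estimate.
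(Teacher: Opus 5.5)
Your proposal is correct and follows the paper's proof: you combine Proposition~\ref{geo:support-recursion} with Lemma~\ref{geo:mixing} to get $|A|\leq\lambda L_j|A|+Q_j|A|^2/Z_j$, divide by $|A|>0$ using $Q_j>0$, and note that the bound holds trivially when no such cocycle exists, since the minimum is then $+\infty$. The only difference is that you justify $Q_j>0$ by positivity of every summand, while the paper cites the $\ell=j$ term alone; either argument suffices.
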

\begin{proof}
For nonempty support $A$ of a locally cominimal cocycle, combine
Proposition~\ref{geo:support-recursion} and Lemma~\ref{geo:mixing}:
\[
 |A|\leq\lambda L_j|A|+Q_j|A|^2/Z_j.
\]
Here $Q_j>0$ because the term $\ell=j$ is positive. Subtraction
and division prove the bound for every such $A$. If no such
cocycle exists, the minimum is $+\infty$.
\end{proof}

For $t\geq4$, $2\leq k\leq t-2$, $0<\rho\leq1$ and $K\geq1$,
use the canonical $\kappa$ of \eqref{ten:filling-constants} and
write $\widehat b_{j\ell}$ for \eqref{geo:Btilde} with those values.
Set
\begin{equation}\label{geo:abstract-constants}
 \begin{gathered}
 \widehat\beta_{j\ell}=\binom t\ell2^{t-\ell-1}K^j,\qquad
 \widehat L_j=\sum_{\ell=0}^j\widehat b_{j\ell},\qquad
 \widehat Q_j=\sum_{\ell=0}^j\widehat b_{j\ell}\widehat\beta_{j\ell},\\
 \mathcal J_{t,k}=\{k,k+1,t-k,t-k+1\},\qquad
 \mathcal L(t,k,\rho,K)=\max_{j\in\mathcal J_{t,k}}\widehat L_j,\\
 \mathcal Q(t,k,\rho,K)=\max_{j\in\mathcal J_{t,k}}\widehat Q_j,
 \qquad c(t,k,\rho,K,\lambda)=\frac{1-\lambda\mathcal L}{\mathcal Q}.
 \end{gathered}
\end{equation}
These are finite positive constants except that $c>0$ requires
$\lambda\mathcal L<1$. Every degree in $\mathcal J_{t,k}$ is below $t$.
\begin{corollary}\label{geo:threshold}
If a cosheaf $\mathcal G$ satisfies the hypotheses of
Theorem~\ref{geo:general-filling-table} and
$\lambda\mathcal L(t,k,\rho,K)<1$, then
\begin{equation}\label{geo:uniform-distance}
 m^j(\mathcal G)\geq c(t,k,\rho,K,\lambda)Z_j
 \qquad(j\in\mathcal J_{t,k}).
\end{equation}
\end{corollary}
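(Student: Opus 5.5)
This corollary specializes Theorem~\ref{geo:cominimal-distance} to the canonical filling constants. I would fix $j\in\mathcal J_{t,k}$; every such degree satisfies $j<t$, and $j\geq2$ since $k\geq2$ and $t-k\geq2$. Theorem~\ref{geo:general-filling-table}, applied to $\mathcal G$ in degree $j$ with the canonical $\kappa_{t-\ell,j-\ell}(\rho,K)$ of \eqref{ten:filling-constants}, gives the filling inequality \eqref{geo:local-filling} on every upward star $v\in X(\ell)$ with $\ell\leq j$. This is the only coefficient hypothesis used in Section~\ref{sec:cosystole} before Theorem~\ref{geo:cominimal-distance}. So Lemma~\ref{geo:local-cancellation}, Proposition~\ref{geo:support-recursion}, and Theorem~\ref{geo:cominimal-distance} all hold with $b_{j\ell}=\widehat b_{j\ell}$.

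Next I would compare constants. From the definitions, $L_j=\widehat L_j\leq\mathcal L$, so $\lambda L_j\leq\lambda\mathcal L<1$ and Theorem~\ref{geo:cominimal-distance} applies. For $Q_j$, the bound \eqref{geo:beta-ratio} of Lemma~\ref{geo:mixing} gives $\beta_{j\ell}\leq\widehat\beta_{j\ell}$, with $k$ there replaced by $j$. Since all $\widehat b_{j\ell}>0$, this yields $Q_j\leq\widehat Q_j\leq\mathcal Q$. Hence
\[
 m^j(\mathcal G)\geq\frac{1-\lambda L_j}{Q_j}Z_j
 \geq\frac{1-\lambda\mathcal L}{\mathcal Q}Z_j,
\]
because $1-\lambda L_j\geq1-\lambda\mathcal L>0$ and $0<Q_j\leq\mathcal Q$. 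If no nonzero locally cominimal cocycle exists, the minimum is $+\infty$ and there is nothing to prove.

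The only step needing care is the claim that \eqref{geo:local-filling} holds in degree $j$ rather than only $k$. Theorem~\ref{geo:general-filling-table} is stated for an arbitrary degree below $t$, so applying it in degree $j$ is legitimate, and the $\kappa$ appearing in $\widehat b_{j\ell}$ are exactly $\kappa_{t-\ell',j-\ell'}$ for $\ell\leq\ell'\leq j$. Everything else is monotonicity of the explicit constants.
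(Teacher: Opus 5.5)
Your proposal is correct and follows the paper's own argument: Theorem~\ref{geo:general-filling-table} supplies the canonical filling constants, so $L_j=\widehat L_j\leq\mathcal L$. The bound \eqref{geo:beta-ratio} then gives $Q_j\leq\widehat Q_j\leq\mathcal Q$, and Theorem~\ref{geo:cominimal-distance} together with decreasing the numerator and increasing the positive denominator yields \eqref{geo:uniform-distance}.
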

\begin{proof}
Theorem~\ref{geo:general-filling-table} supplies the canonical
local filling inequalities. Equation~\eqref{geo:beta-ratio} gives
$L_j\leq\mathcal L$ and $Q_j\leq\mathcal Q$ in the required
degrees. Apply Theorem~\ref{geo:cominimal-distance}; decreasing
its numerator and increasing its positive denominator gives
\eqref{geo:uniform-distance}.
\end{proof}

\subsection{Cosystolic Distance and Syndrome Repair}
Put $\ell_j=\max_{v\in X(0)}|X_{\geq v}(j)|$.
Regularity implies $1\leq\ell_j\leq\binom tj n_+^j$.
The following local descent argument uses the local-minimality method
of \cite{DLV2024}; we give the syndrome-repair
argument in full in the present upward-map convention.
\begin{lemma}\label{hom:local-descent}\label{hom:local-descent-expansion}
For $0\leq j\leq t$, $\mu^j(\mathcal F)\geq m^j(\mathcal F)$.
For $0\leq j<t$ and a real $M>0$ with $m^{j+1}(\mathcal F)\geq M$,
\[
 \varepsilon^j(\mathcal F)\geq\min\{\ell_j^{-1},M/Z_j\}.
\]
\end{lemma}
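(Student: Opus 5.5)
Both claims will be proved by local descent: repeatedly apply weight-reducing local corrections until the cochain becomes locally cominimal.

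\emph{First claim.} Take $x\in\ker\delta^j\setminus\im\delta^{j-1}$ of minimum block weight $\mu^j$. If $x$ were not locally cominimal, some $y$ supported on faces containing a single vertex would give $|x+\delta y|_b<|x|_b$. But $x+\delta y$ is still a cocycle, and it is still not a coboundary, so this contradicts minimality. Hence $x$ is locally cominimal. It is nonzero (zero is exact), so $|x|_b\geq m^j$. If no such $x$ exists, then $\mu^j=+\infty$ and the inequality is trivial.

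\emph{Second claim.} Fix $x\notin\ker\delta^j$ and run a greedy descent on the syndrome.

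Put $x_0=x$. While the syndrome $s_i=\delta^j x_i$ is nonzero and not locally cominimal as an element of $C^{j+1}$, do the following:
\begin{itemize}
\item pick a vertex $v$ and a $j$-cochain $y_i$ supported on faces containing $v$ with $|s_i+\delta y_i|_b<|s_i|_b$;
\item set $x_{i+1}=x_i+y_i$.
\end{itemize}
This uses the same notion of locally cominimal for the cochain $s_i\in C^{j+1}$, with corrections coming from $C^j$, exactly as in Definition~\ref{geo:cominimal}.

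\emph{Counting.} Each step lowers the syndrome weight by at least one. So the number of steps $T$ satisfies $T\leq|\delta x|_b$. Each $y_i$ is supported in $X_{\geq v}(j)$, so $|y_i|_b\leq\ell_j$. Hence $x_T$ differs from $x$ in at most $\ell_j|\delta x|_b$ blocks.

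\emph{Termination.} The final syndrome $s_T$ is a cocycle, since $\delta^{j+1}\delta^j=0$ by Lemma~\ref{cube:differential-square}. It is also locally cominimal. There are two cases.
\begin{itemize}
\item If $s_T=0$, then $x_T\in\ker\delta^j$, so $\operatorname{dist}_b(x,\ker\delta^j)\leq\ell_j|\delta x|_b$. This gives the ratio bound $\ell_j^{-1}$.
\item If $s_T\neq0$, then by definition of $m^{j+1}$ we get $|\delta x|_b\geq|s_T|_b\geq m^{j+1}\geq M$. Trivially $\operatorname{dist}_b(x,\ker\delta^j)\leq|x|_b\leq Z_j$, since $0\in\ker\delta^j$. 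So the ratio is at least $M/Z_j$.
\end{itemize}
Taking the minimum over the two cases proves the bound on $\varepsilon^j$.

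\emph{Remaining check.} The one point needing care is that a single-vertex correction changes the syndrome by exactly $\delta y_i$ and that $y_i$ has at most $\ell_j$ nonzero blocks; both are immediate from the definitions. The argument is otherwise bookkeeping, with the weight decrease of the syndrome ensuring termination.
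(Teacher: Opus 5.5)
Your proposal is correct and follows the paper's argument: a minimum-weight non-exact cocycle is locally cominimal, and greedy single-vertex syndrome reduction costs at most $\ell_j$ correction blocks per unit of syndrome weight. The only difference is the order of the case split. The paper splits on $|\delta^j x|_b\geq M$ before descending, whereas you descend first and split on whether the terminal locally cominimal syndrome vanishes; the two are equivalent.
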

\begin{proof}
A minimum-weight representative of a nonzero cohomology class is
locally cominimal, proving the first inequality (and the empty
case is automatic).
For the second, let $b=\delta^jx$. If $b=0$ there is nothing to
prove. If $|b|_b\geq M$, then
$\operatorname{dist}_b(x,\ker\delta^j)\leq Z_j\leq Z_j|b|_b/M$.
Otherwise $b$ is a cocycle of weight below $M$. While nonzero it
cannot be locally cominimal, so there is a correction $y$ supported
above a vertex with $|b+\delta^jy|_b<|b|_b$ and $|y|_b\leq\ell_j$.
Repeat. The integer syndrome weight decreases at each step, so after
at most the original $|b|_b$ steps the syndrome vanishes. The total
correction has weight at most $\ell_j|b|_b$ and moves $x$ into
$\ker\delta^j$. This proves the claimed minimum bound.
\end{proof}
In particular, \eqref{geo:uniform-distance} gives
$\mu^j\geq cZ_j$ and
$\varepsilon^j\geq\min\{\binom{t}{j}^{-1}n_+^{-j},cZ_{j+1}/Z_j\}$
whenever the two consecutive degrees are among those covered.

\section{Systolic distance and cycle expansion}
\label{hom:section}

Let $F$ be a field of characteristic two, let $X=\mcX$ be a
$t$-dimensional $(n_1,\ldots,n_t)$-regular cubical complex, and let
$\mathcal F$ be a cosheaf with compatible base-code encoders as in
Section~\ref{sec:cosheaves}. All stalks have specified ordered bases.
Write
\[
 C_j=C^j(X;\mathcal F),\qquad D_j=\delta^j,\qquad
 B_j=D_{j-1}^{\mathsf T},\qquad Z_j=|X(j)|,\qquad n=\max_i n_i.
\]
The transpose acts on the algebraic duals in their dual bases;
we identify these coordinate spaces with the corresponding $C_j$.
Spaces outside degrees $0,\ldots,t$ and maps into them are zero.
The cochain square-zero identity gives
$B_jB_{j+1}=(D_jD_{j-1})^{\mathsf T}=0$ for every integer $j$.
For $0\leq j\leq t$, define
\begin{align*}
 \mu_j(\mathcal F)
 &=\min\{|x|_b:x\in\ker B_j\setminus\operatorname{im}B_{j+1}\},\\
 \varepsilon_j(\mathcal F)
 &=\inf_{x\notin\ker B_j}
   \frac{|B_jx|_b}{\operatorname{dist}_b(x,\ker B_j)}.
\end{align*}
Empty minima and infima are $+\infty$. In particular, a lower
bound on $\varepsilon_j$ is an inequality for distance to the
entire kernel, including all homology classes.

The anchored-array method of \cite{DLV2024} transfers estimates from the
cohomology of an auxiliary cosheaf $\mathcal K$ to the transpose
complex of $\mathcal F$. We establish the required simultaneous
coordinates for the actual cosheaf maps below.

\begin{theorem}\label{hom:headline}
\label{hom:transfer}\label{hom:expansion-transfer}
The cosheaf $\mathcal F$ determines a cosheaf $\mathcal K$ on $X$
whose compatible base codes are the dual codes
of those of $\mathcal F$. Put
\[
 B=(2tn)^t,\qquad T=2(t^22^{2t}n^{t+1})^t.
\]
If $t\geq2$, then
\[
 \mu_k(\mathcal F)\geq B^{-1}\mu^{t-k}(\mathcal K)
 \qquad(1\leq k<t).
\]
If $2\leq k\leq t$, $M>0$, and $0<\varepsilon\leq1$ satisfy
$\mu^{t-k+1}(\mathcal K)\geq M$ and
$\varepsilon^{t-k}(\mathcal K)\geq\varepsilon$, then
\[
 \varepsilon_k(\mathcal F)
 \geq\min\left\{\frac{\varepsilon}{T},\frac{M}{BZ_k}\right\}.
\]
The assertions hold in arbitrary stalk bases, including
nonorthogonal changes of basis.
\end{theorem}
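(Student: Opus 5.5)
We will follow the anchored-array double-complex method of \cite{DLV2024}, adapted to the upward-map convention used here. The construction of $\mathcal K$ comes first. For a face $g$ with anchor chart at $f=g$, set
\[
 \mathcal K(g)=\bigotimes_{i\in\dir(g)}\ker E_{g,i}^{\mathsf T}
\]
realized inside the top-face arrays. Equivalently, $\mathcal K(g)$ is the space of dual tensors on the star of the \emph{reflected} face, with the roles of active and inactive directions swapped. Concretely, by Lemma~\ref{ten:exactness} the transpose local complex at a vertex is exact below its top degree. Its top kernel is $\bigotimes_i\ker G_i^{\mathsf T}=\bigotimes_i C_i^\perp$ (up to the monomial identification of Lemma~\ref{cert:scalar-extension}). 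We assign this dual-code tensor to faces and define $R^{\mathcal K}_{hg}$ by evaluating dual encoders, i.e.\ encoders $E^\perp$ of $C_{f,i}^\perp$. Compatibility \eqref{eq:compatible-charts} for $\mathcal K$ then follows formally from that for $\mathcal F$, because the charts are tensor products and transposition commutes with tensoring. The indexing by degrees is $j\mapsto t-j$: a $k$-face of $\mathcal F$'s transposed complex pairs with $(t-k)$-dimensional data of $\mathcal K$, since the stalk of $\mathcal F$ lives on inactive directions and that of $\mathcal K$ on active ones.

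Next we build a bicomplex $\mathcal A^{p,q}$ of anchored arrays, indexed by pairs (face $g$, vertex or face $f\leq g$) with values in local tensor spaces mixing $U$ and $V$ factors. It carries two anticommuting (in characteristic two, commuting) differentials. Its row differential is the local tensor evaluation complex $\delta$, and its column differential is the local transpose. By Lemma~\ref{ten:exactness}, every row is exact except at one end, where it reproduces $(C_\bullet,B_\bullet)$ of $\mathcal F$. Every column is exact except at one end, where it reproduces $(C^\bullet(\mathcal K),\delta)$. A standard staircase (zig-zag) argument then identifies $\ker B_k/\im B_{k+1}$ with $H^{t-k}(\mathcal K)$. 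Quantitatively, each zig-zag step passes through at most $(2tn)$ neighbouring blocks per block. Over $t$ steps this gives the blow-up factor $B=(2tn)^t$ on block weight, which yields $\mu_k(\mathcal F)\geq B^{-1}\mu^{t-k}(\mathcal K)$. Given a nontrivial cycle $x$ of $B_k$, we zig-zag it to a cocycle of $\mathcal K$ of weight at most $B|x|_b$ that is non-exact, since otherwise zig-zagging back would show $x\in\im B_{k+1}$.

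For expansion, take $x$ with syndrome $s=B_kx$. We transfer $s$ along the staircase to a $\mathcal K$-cochain, losing a factor at most $T$ in weight; the count of lower faces, $2^{2t}$ per stage, together with the stalk dimensions $n^{t+1}$, produces $T$. We then treat two cases. If the transported syndrome has large weight (at least $M$ after rescaling), we use the trivial bound $\operatorname{dist}_b\leq Z_k$, which gives the $M/(BZ_k)$ term. Otherwise we apply $\varepsilon^{t-k}(\mathcal K)\geq\varepsilon$ to obtain a nearby cocycle, and $\mu^{t-k+1}(\mathcal K)\geq M$ to ensure the correction's coboundary part is trivial. We then transport this correction back to a correction $c$ with $B_k(x+c)=0$ and $|c|_b\leq T|s|_b/\varepsilon$. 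Basis independence holds because block supports are unchanged by invertible stalk-wise basis changes, and transposes under a nonorthogonal basis change differ by block-diagonal invertible maps, which preserve $|\cdot|_b$ and kernels.

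The main obstacle will be the bicomplex construction on the actual cosheaf. In particular, we must check that the anchored local charts at different anchors intertwine the same incidence maps, so that the column differential is globally well defined and commutes with the row differential. The definition deliberately avoids requiring agreement between charts, so this must be derived from \eqref{eq:compatible-charts} alone. We would first try to define the bicomplex intrinsically, using the partial-star quotients $Q_T(f)$ of Lemma~\ref{rankcomp:quotient-charts}, which are chart-independent, and only use charts to verify exactness.
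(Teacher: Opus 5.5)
Your overall strategy is the paper's: the anchored-array double complex of \cite{DLV2024}, a staircase from a small cycle of $B_\bullet$ to a small cocycle of $\mathcal K$ and back, and a case split on syndrome weight. But the objects you actually describe would not carry it.

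\textbf{The construction of $\mathcal K$ is wrong as stated.} $E_{g,i}$ exists only for $i\in\codir(g)$, and $\mathcal K$ does not live on the active directions of a reflected face. The paper defines
\[
\mathcal K(f)=\ker T_f^t\subseteq\bigoplus_{u\in X(t),\,u\geq f}\mathcal F(u)^*,
\]
the kernel of the intrinsic local transpose $(Ta)[v]=\sum_{u>v}R_{uv}^{\mathsf T}a[u]$. Its restriction maps simply retain the entries at top faces containing the larger face. In a chart at $f$ this is $\bigotimes_{i\in\codir(f)}(\operatorname{im}E_{f,i})^\perp$. Equivalently it is the dual of the partial-star quotient $Q_{\codir(f)}(f)$, so your fallback idea points in the right direction. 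Top faces get scalar stalks. The degree flip $k\mapsto t-k$ comes from the length $L=t-k$ of the staircase, not from exchanging active and inactive directions. Defining $R^{\mathcal K}$ chart by chart through dual encoders is exactly what creates the consistency problem you flag and leave open. With the intrinsic definition that problem never arises. Charts, used with inverse-transpose coordinates $(\Psi_{f,g}^{-1})^{\mathsf T}$ on dual vectors, are then needed only at one anchor at a time, to prove local exactness and to identify the dual encoders.

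\textbf{The bicomplex is also misassembled, and this is where the constants come from.} Its entries are $a(f)[u]\in\mathcal F(u)^*$ for an anchor $f\leq u$. One differential is the anchor coboundary $\Delta$, which sums over facets of the anchor with identity incidences; the other is $T$ above. Both are intrinsic, so $\Delta T=T\Delta$ is immediate. Only $T$ is governed by Lemma~\ref{ten:exactness}, and its defect is $\mathcal K$. The direction whose defect is $(C_\bullet,B_\bullet)$ is $\Delta$. For each fixed $u$, $\Delta$ is the cochain complex of the ordinary cube below $u$ with constant coefficients, and it is exact by the contraction of Lemma~\ref{hom:cube-contraction}, with explicit weight factor $2^p$. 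If that direction were a local code complex, controlled primitives would need product expansion as in Proposition~\ref{ten:filling-recursion}, which is not a hypothesis of this theorem.

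\textbf{The missing bookkeeping device is anchor weight.} Vertical lifts use arbitrary linear right inverses of $T_f$, applied anchor by anchor, and these never create new anchors. So going up costs only $2^t$ for copying $x$ to vertices and $tn$ per application of $\Delta$, which gives $B$. Coming down costs $H_0=2^{2t}n^t$ per horizontal filling and $G_0=2^tn^t$ at the end. Together with $\varepsilon^{-1}$ for the minimal top primitive, this gives $T$; in particular $n^{t+1}$ is a face count, not a stalk dimension.

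\textbf{In the expansion step you have garbled the roles of the two hypotheses.} It is $\mu^{t-k+1}(\mathcal K)\geq M$ that makes the transported syndrome, a $\mathcal K$-cocycle of degree $t-k+1$, a coboundary. Then $\varepsilon^{t-k}(\mathcal K)\geq\varepsilon$ bounds the weight of its minimum-weight primitive, and that primitive is the correction you transport back.
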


\subsection{Anchored arrays and the kernel cosheaf}

For integers $0\leq r\leq p\leq t$, set
\begin{equation}\label{hom:double-complex}
 A^{r,p}(f)=\bigoplus_{\substack{u\in X(p)\\f\leq u}}
                 \mathcal F(u)^*\quad(f\in X(r)),\qquad
 A^{r,p}=\bigoplus_{f\in X(r)}A^{r,p}(f).
\end{equation}
The coordinate at an incident pair $(f,u)$ is denoted $a(f)[u]$.
Its anchor weight is
\[
 |a|_a=|\{f\in X(r):a(f)\ne0\}|.
\]
Define the horizontal and vertical maps by
\begin{align}
 (\Delta a)(g)[u]
 &=\sum_{\substack{f\in X(r)\\f<g}}a(f)[u],
 &&g\in X(r+1),\ u\in X(p),\ g\leq u,
 \label{hom:horizontal-map}\\
 (Ta)(f)[v]
 &=\sum_{\substack{u\in X(p)\\v<u}}R_{uv}^{\mathsf T}a(f)[u],
 &&f\in X(r),\ v\in X(p-1),\ f\leq v.
 \label{hom:vertical-map}
\end{align}
Thus $\Delta:A^{r,p}\to A^{r+1,p}$ and
$T:A^{r,p}\to A^{r,p-1}$. Arrays outside the indicated range
are zero. Write $T_f^p$ for the part of $T$ at the anchor $f$.

\begin{lemma}\label{hom:double-identities}
These maps satisfy
\[
 \Delta^2=T^2=0,\qquad \Delta T=T\Delta,\qquad
 |\Delta a|_a\leq tn|a|_a,\qquad |Ta|_a\leq|a|_a.
\]
\end{lemma}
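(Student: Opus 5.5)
The four assertions are local computations on the defining formulas \eqref{hom:horizontal-map} and \eqref{hom:vertical-map}. We treat each entry $a(f)[u]$ as indexed by an incident pair $f\leq u$ and track which pairs contribute to a given output coordinate.

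\emph{Horizontal square.} Fix $h\in X(r+2)$ and $u\in X(p)$ with $h\leq u$. Then $(\Delta^2a)(h)[u]=\sum a(f)[u]$, where the sum runs over chains $f<g<h$ with $\dim f=r$ and $\dim g=r+1$. Every $g$ in such a chain satisfies $g\leq h\leq u$, so the coordinate $[u]$ is defined at every step. By the interval structure $[f,h]\cong 2^{\dir(h)\setminus\dir(f)}$, each $f$ lies under exactly two intermediate faces $g$. Every term therefore appears twice, and the sum vanishes in characteristic two.

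\emph{Vertical square.} Fix $w\in X(p-2)$ with $f\leq w$. Then $(T^2a)(f)[w]=\sum R_{vw}^{\mathsf T}R_{uv}^{\mathsf T}a(f)[u]$ over chains $w<v<u$. Each such product equals $R_{uw}^{\mathsf T}$ by functoriality of the cosheaf. There are exactly two intermediate faces $v$, and both satisfy $f\leq w\leq v$, so both terms are present and cancel.

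\emph{Commutation.} Compare $(\Delta Ta)(g)[v]$ and $(T\Delta a)(g)[v]$ for $g\in X(r+1)$ and $v\in X(p-1)$ with $g\leq v$. Both equal $\sum R_{uv}^{\mathsf T}a(f)[u]$ over pairs $(f,u)$ with $f<g$ and $v<u$. This is the step that needs care. In $\Delta T$ the inner condition is $f\leq v$, which holds automatically because $f<g\leq v$. In $T\Delta$ the requirement is $g\leq u$, which holds because $g\leq v<u$. The two index sets therefore coincide, and the maps agree.

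\emph{Anchor weight of $\Delta$.} If $(\Delta a)(g)\neq0$, then some facet $f<g$ has $a(f)\neq0$. Each $r$-face $f$ lies in at most $\sum_{j\notin\dir f}n_j\leq tn$ faces of dimension $r+1$, by \eqref{eq:upper-face-count}. The map $g\mapsto$ (such a facet $f$) thus has fibers of size at most $tn$, which gives $|\Delta a|_a\leq tn|a|_a$.

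\emph{Anchor weight of $T$.} The map $T$ acts anchorwise, and $(Ta)(f)$ depends only on $a(f)$. It follows that $\{f:(Ta)(f)\neq0\}\subseteq\{f:a(f)\neq0\}$, so $|Ta|_a\leq|a|_a$.

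No step is a real obstacle. The only point needing attention is the commutation check, specifically confirming that the incidence constraints $f\leq v$ and $g\leq u$ hold automatically on both sides.
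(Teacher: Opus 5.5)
Your proof is correct and follows the paper's argument essentially step for step. Both square-zero identities come from the two-intermediate-faces cancellation, with functoriality $R_{uv}R_{vw}=R_{uw}$ used for $T^2$. The identity $\Delta T=T\Delta$ comes from matching the index sets using $f\leq g\leq v\leq u$. The weight bounds come from the coface count and the fact that $T$ preserves anchors.

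The only point the paper states explicitly that you leave implicit is the boundary case. When the output bidegree falls outside $0\leq r\leq p\leq t$, both sides vanish by the convention that such arrays are zero.
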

\begin{proof}
A codimension-two interval in a cube has exactly two intermediate
faces. The two contributions to $\Delta^2$ are identical. Those
to $T^2$ are transposes of two compositions $R$ around a diamond,
and both compositions equal the map between its endpoints.
Characteristic two therefore gives both square-zero identities.
For an output pair $g\leq v$, both mixed compositions equal
\[
 \sum_{\substack{f\in X(r)\\f<g}}
 \sum_{\substack{u\in X(p)\\v<u}}
             R_{uv}^{\mathsf T}a(f)[u].
\]
Every term is defined since $f\leq g\leq v\leq u$.
If the output degree is outside the allowed range, both maps
are zero. This also verifies the identities at the endpoints.
An active $r$-anchor has at most $(t-r)n\leq tn$ immediate
cofaces, by regularity and the higher-face star description.
These are the only possible active output anchors of $\Delta$.
The map $T$ does not change anchors, proving both weight bounds.
\end{proof}

For every face $f$, form the actual local top kernel
\begin{equation}\label{hom:canonical-kernel}
 K(f)=\ker\bigl(T_f^t:A^{\dim f,t}(f)\longrightarrow
                           A^{\dim f,t-1}(f)\bigr).
\end{equation}
For $f\leq g$, let $P_{gf}$ retain the entries at top faces
containing $g$:
\[
 (P_{gf}a)[u]=a[u]\qquad(u\in X(t),\ g\leq u).
\]
Choose any basis of $K(f)$, set $k_f=\dim_FK(f)$, and denote
its coordinate isomorphism by $\iota_f:F^{k_f}\to K(f)$.

\begin{lemma}\label{hom:kernel-restrictions}
The assignments
\[
 \mathcal K(f)=F^{k_f},\qquad
 S_{gf}=\iota_g^{-1}P_{gf}\iota_f\quad(f\leq g)
\]
form a cosheaf. Under $\iota^r=\bigoplus_{f\in X(r)}\iota_f$,
its cochain differential is the restriction of $\Delta$ to
$\bigoplus_fK(f)\subseteq A^{r,t}$, and its block weight is
the anchor weight.
\end{lemma}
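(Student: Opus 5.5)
The plan is to check three things in turn: that $P_{gf}$ maps $K(f)$ into $K(g)$, that the resulting maps $S_{gf}$ satisfy the cosheaf identities, and that under $\iota^r$ the restriction of $\Delta$ is the cochain differential of $\mathcal K$ with matching block weights.

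The substantive step is well-definedness: for $f\leq g$, we need $P_{gf}(K(f))\subseteq K(g)$. Write $T_f^t$ coordinatewise. For $a\in A^{\dim f,t}(f)$ and $v\in X(t-1)$ with $f\leq v$,
\[
 (T_f^ta)[v]=\sum_{u\in X(t),\ v<u}R_{uv}^{\mathsf T}a[u].
\]
Let $v\geq g$. Every top face $u>v$ then also satisfies $u\geq g$, so the entries of $a$ entering the $v$-coordinate are exactly those kept by $P_{gf}$. Hence
\[
 (T_g^tP_{gf}a)[v]=(T_f^ta)[v]\qquad(v\geq g).
\]
In other words, $T_g^t\circ P_{gf}$ is the restriction of $T_f^t$ to the coordinates $v\geq g$. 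If $a\in K(f)$, every coordinate of $T_f^ta$ vanishes, so $P_{gf}a\in K(g)$. This is the only place where the kernel condition is used. I expect it to be the main point; it reduces entirely to the observation that upward sets are upward closed.

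The cosheaf axioms then follow formally. $P_{ff}$ is the identity, and $P_{hg}P_{gf}=P_{hf}$ because both keep exactly the entries at top faces $u\geq h$, using $h\geq g$. Conjugating by the isomorphisms $\iota$ gives $S_{ff}=I$ and $S_{hg}S_{gf}=S_{hf}$.

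For the differential, take $x\in C^r(X;\mathcal K)$ and $a=\iota^rx$. For $g\in X(r+1)$,
\[
 \iota_g(\delta x)(g)=\sum_{f<g}P_{gf}\iota_fx(f).
\]
Its $u$-entry, for $u\geq g$, is $\sum_{f<g}a(f)[u]$. By \eqref{hom:horizontal-map}, this is exactly $(\Delta a)(g)[u]$. Thus $\iota^{r+1}\delta=\Delta\iota^r$ on $\bigoplus_fK(f)$. In particular $\Delta$ preserves the kernel subspace, which is consistent with $\Delta T=T\Delta$ from Lemma~\ref{hom:double-identities}.

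Finally, each $\iota_f$ is injective, so $x(f)\ne0$ exactly when $a(f)\ne0$. Block weight therefore equals anchor weight. No characteristic or basis assumption beyond the chosen $\iota_f$ enters the argument.
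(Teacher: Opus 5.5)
Your proposal is correct and follows the paper's proof step for step. The steps match: upward closure gives $T_g^tP_{gf}=T_f^t$ on the coordinates $v\geq g$, hence $P_{gf}K(f)\subseteq K(g)$; retention gives the cosheaf identities; the entrywise comparison with the horizontal map identifies the differential; and injectivity of each $\iota_f$ equates block and anchor weight. The case $g\in X(t)$, which the paper treats separately, is covered vacuously by your argument.
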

\begin{proof}
If $a\in K(f)$ and $g\leq v\in X(t-1)$, then every top face
$u>v$ contains $g$. Hence the equation at $v$ after restriction is
\[
 (T_g^tP_{gf}a)[v]
 =\sum_{u>v}R_{uv}^{\mathsf T}a[u]=(T_f^ta)[v]=0.
\]
This proves $P_{gf}K(f)\subseteq K(g)$. At $g\in X(t)$ the
target equations are empty, which gives the same conclusion.
Retention gives $P_{ff}=I$ and $P_{hg}P_{gf}=P_{hf}$ for every
comparable triple, without a restriction on dimension gaps.
Conjugating by the $\iota_f$ proves the cosheaf identities.
For a cochain $x$, the entry of $\iota^{r+1}\delta^rx$ at
$(g,u)$ is
$\sum_{f<g}(\iota_fx(f))[u]$, exactly the formula for
$\Delta\iota^rx$. Each $\iota_f$ is invertible, so a block is
nonzero precisely when its anchored list is nonzero.
\end{proof}

We henceforth use these isomorphisms to regard $C^r(X;\mathcal K)$
as the corresponding subspace of $A^{r,t}$. Changing a basis of
$K(f)$ only conjugates maps within that block and preserves this
identification and all block weights. The vertical maps $T$
continue to use the specified bases of the original cosheaf.

\subsection{Local transpose exactness and dual base codes}

\begin{theorem}
\label{hom:canonical-dual}
For every face $f$ and every $\dim f\leq p<t$,
\begin{equation}\label{hom:local-exactness}
 \ker T_f^p=\operatorname{im}T_f^{p+1}.
\end{equation}
Moreover, suppose a compatible chart at $f$, based at a vertex
$x\leq f$, uses direction sets $\Omega_i=E_i(x)$ and injective
encoders $E_i:F^{m_i}\to F^{\Omega_i}$ for
$i\in J=\codir(f)$. Choose basis matrices
\[
 Q_i:F^{n_i-m_i}\hookrightarrow F^{\Omega_i},\qquad
 \operatorname{im}Q_i=\ker E_i^{\mathsf T}.
\]
Then $\mathcal K$ has compatible charts throughout the same
star, with the same geometric indexing $\theta_{f,x}$ and
encoders $Q_i$. In particular these encoder images are
$(\operatorname{im}E_i)^\perp$ in the actual edge coordinates.
\end{theorem}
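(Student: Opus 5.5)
The plan is to transport the local column $A^{\dim f,\bullet}(f)$ with vertical map $T_f$ into the tensor evaluation complex of Section~\ref{ten:section}, using the compatible chart at $f$, and then read off both claims from Lemma~\ref{ten:exactness}.

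\textbf{Step 1: identify $T_f$ with the dual tensor complex.} Fix the chart at $f$ with anchor $x$. For each $p$, the faces $u\in X(p)$ with $f\le u$ correspond under $\theta_{f,x}$ to pairs $(I,\omega_I)$ with $I\subseteq J$, $|I|=p-\dim f$, and $\omega_I\in\prod_{i\in I}\Omega_i$. The chart isomorphisms $\Psi_{f,u}$ identify $\mathcal F(u)$ with $\bigotimes_{i\in J\setminus I}U_i$. Summing over $u$ identifies $\bigoplus_{u\ge f,\dim u=p}\mathcal F(u)$ with the degree $u_0=p-\dim f$ term $T^{u_0}$ of the tensor evaluation complex with $R=F$. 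By \eqref{eq:compatible-charts}, the upward differential becomes $\delta$.

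Dualizing, $A^{\dim f,p}(f)$ becomes $T_{u_0}=(T^{u_0})^*$. The map $T_f$ is the transpose of the direct sum of the $R_{uv}$, so it becomes $\partial_{u_0}$, possibly conjugated by the transposes $\Psi^{-\mathsf T}$. Conjugation does not change kernels or images up to isomorphism, and this holds in arbitrary stalk bases.

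\textbf{Step 2: local exactness.} Lemma~\ref{ten:exactness} gives $\ker\partial_{u_0}=\im\partial_{u_0+1}$ for $0\le u_0<r=|J|$. This is exactly \eqref{hom:local-exactness} for $\dim f\le p<t$.

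\textbf{Step 3: the top kernel and the charts of $\mathcal K$.} At the top degree, Lemma~\ref{ten:exactness} gives
\[
K(f)\cong\ker\partial_r=\bigotimes_{i\in J}\ker E_i^{\mathsf T}.
\]
Here $\mathcal F(u)^*\cong F$ for top faces $u$, and one must check that this identification matches the actual edge coordinates. The isomorphism $Q^{\otimes}=\bigotimes_i Q_i$ therefore maps $F^{\prod_i(n_i-m_i)}$ onto $K(f)$, with one scalar per top face. I would define the chart $\Psi^{\mathcal K}_{f,g}$ for $g\ge f$ in the same way, applied at $g$ but using the anchor $x$ of $f$.

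Concretely, the top faces above $g$ are indexed by $\omega_{J\setminus\codir(g)}$ fixed together with free $\omega_{\codir(g)}$. On these, $K(g)$ is $\bigotimes_{i\in\codir(g)}\ker E_i^{\mathsf T}$ tensored with the fixed coordinates. To see this, apply Lemma~\ref{ten:exactness} to the sub-star of $g$, which is charted by restricting the chart of $f$. The chart $\Psi^{\mathcal K}_{f,g}$ is then the inverse of $\bigotimes_{i\in\codir(g)}Q_i$.

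Compatibility \eqref{eq:compatible-charts} for $S_{hg}=P_{hg}$ then reduces to the following. Retaining the top faces with $j$-coordinate $\omega_j$ of $\bigotimes Q_i(c)$ equals $(\operatorname{ev}_{\omega_j}Q_j)\otimes I$ applied to $c$. This is immediate from the tensor structure.

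\textbf{Main obstacle.} The hardest part is the bookkeeping in Step 1 and Step 3: checking that the top-degree identification of $\mathcal F(u)^*$ with $F$ uses the actual scalar coordinate. If the stalk basis differs by a scalar $c_u$, the codes get rescaled, so $\im Q_i$ might only equal $(\im E_i)^\perp$ after a monomial change. I would handle this by noting that at a top face $u$, $\Psi_{f,u}$ is an identification of $\mathcal F(u)$ with $F$ fixed by the chart. I would then interpret the statement's ``actual edge coordinates'' as those chart coordinates, invoking Lemma~\ref{cert:monomial-invariance} to absorb any rescaling of the form $T^{-\mathsf T}$.
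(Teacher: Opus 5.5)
Your proposal is correct and follows essentially the same route as the paper. You transport the star of $f$ to the tensor evaluation complex via the chart, dualize with the inverse-transpose charts $\Theta_g=(\Psi_{f,g}^{-1})^{\mathsf T}$ so that $\Theta_gR_{hg}^{\mathsf T}\Theta_h^{-1}=(\Psi_{f,h}R_{hg}\Psi_{f,g}^{-1})^{\mathsf T}$, read off exactness and the top kernel $\operatorname{im}\bigotimes_iQ_i$ from Lemma~\ref{ten:dual-exactness}, and check that retention by $P_{hg}$ acts on the parameters as row evaluation by $Q_j$.

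The scalar issue you flag as the main obstacle does not arise. Once the $\mathcal K$-charts are built from these same fixed $\Theta_u$, any scalar in $\Psi_{f,u}$ is inverted on the dual and cancels. Hence $\operatorname{im}Q_i=\ker E_i^{\mathsf T}$ holds exactly in the edge coordinates of $E_i$, with no appeal to Lemma~\ref{cert:monomial-invariance}. That fallback would not prove the literal statement anyway, and a per-face scalar need not factor as per-direction multipliers.
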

\begin{proof}
Fix the entire family of chart maps $\Psi_{f,g}$, $g\geq f$,
supplied by compatibility at $f$. On dual vectors use
$\Theta_g=(\Psi_{f,g}^{-1})^{\mathsf T}$. For every incidence
$g<h$ in this star, one has the exact matrix identity
\begin{equation}\label{hom:inverse-transpose-chart}
 \Theta_gR_{hg}^{\mathsf T}\Theta_h^{-1}
 =\bigl(\Psi_{f,h}R_{hg}\Psi_{f,g}^{-1}\bigr)^{\mathsf T}.
\end{equation}
Thus the local transpose matrix is the transpose of the tensor
evaluation matrix; using $\Psi_{f,g}$ itself on dual vectors
would not give this identity in general.

Write $r=\dim f$ and $d=t-r$. The geometric indexing of cofaces
by added subsets $I\subseteq J$ and labels in $\prod_{i\in I}\Omega_i$
identifies $A^{r,r+q}(f)$ with
\[
 V_q=\bigoplus_{\substack{I\subseteq J\\|I|=q}}
       \bigotimes_{i\in J}
       \begin{cases}F^{\Omega_i},&i\in I,\\F^{m_i},&i\notin I.
       \end{cases}
\]
Under \eqref{hom:inverse-transpose-chart}, $T_f^{r+q}$ applies
$E_i^{\mathsf T}$ in one active factor and sums over removed
directions. This is exactly the transposed tensor complex of
Lemma~\ref{ten:dual-exactness}, instantiated with encoder $E_i$
and parity-check matrix $Q_i^{\mathsf T}$ in direction $i$.
The latter has independent rows and kernel
$\ker Q_i^{\mathsf T}=(\ker E_i^{\mathsf T})^\perp
=\operatorname{im}E_i$.
The lemma gives exactness for $q<d$ and top kernel
$\operatorname{im}\bigotimes_{i\in J}Q_i$. This proves
\eqref{hom:local-exactness}, including $p=r$. When $r=t$ there
is no such $p$.

For the compatibility assertion, fix
$g=\theta_{f,x}(I,\omega_I)\geq f$ and write $J_g=J\setminus I$.
The restrictions of the geometric indexing and all chart maps at
$f$ give the entire star above $g$, still based at $x$.
Apply the same top-kernel description to the factors in $J_g$.
Each $a\in K(g)$ has a unique parameter
$w_g\in\bigotimes_{i\in J_g}F^{n_i-m_i}$ satisfying, at a top
coface $u$ with remaining labels $\omega_{J_g}$,
\begin{equation}\label{hom:kernel-parameterization}
 \Theta_u a[u]
 =\sum_{b\in\prod_{i\in J_g}[n_i-m_i]}
       w_g(b)\prod_{i\in J_g}Q_i(\omega_i,b_i).
\end{equation}
Existence follows from the top-kernel formula. Uniqueness follows
because each $Q_i$ has a left inverse, so their tensor product
does too. For $J_g=\varnothing$ the formula is the identity on
$F$. A zero-dimensional factor gives a zero parameter space and
the same injectivity statement.

Suppose $h$ is obtained from $g$ by adding direction $j\in J_g$
and edge label $\omega_j$. Retention by $P_{hg}$ fixes that label
in \eqref{hom:kernel-parameterization}. Summing first over $b_j$
gives the unique parameters at $h$:
\[
 w_h(b_{J_g\setminus\{j\}})
 =\sum_{b_j\in[n_j-m_j]}Q_j(\omega_j,b_j)w_g(b).
\]
This is precisely row evaluation by $Q_j$. All maps $\Theta_u$
and matrices $Q_j$ were fixed once for the star at $f$, so these
identities hold simultaneously for every incidence in that star.
Composing $a\mapsto w_g$ with $\iota_g$ gives the required charts
on the actual coordinate stalks $\mathcal K(g)$.
\end{proof}

\subsection{Filling small transpose-kernel vectors}

We first record a contraction that acts only on the anchors.

\begin{lemma}\label{hom:cube-contraction}
Let $p\geq0$ and let $W$ be a finite-dimensional $F$-space.
On the cochains of the ordinary $p$-cube with constant stalk $W$
and identity incidence maps, there are maps $H$ lowering degree
and $P$ preserving degree such that
\[
 dH+Hd=I-P,\qquad |Ha|_b\leq2^p|a|_b.
\]
The map $P$ is zero in positive degrees and, in degree zero,
replaces every vertex value by the value at the all-zero vertex.
Consequently a closed cochain in positive degree is $dHa$,
and a closed degree-zero cochain is constant.
\end{lemma}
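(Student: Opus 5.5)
The plan is to build the contraction one coordinate at a time and then tensor the pieces together. The cochain complex of the $p$-cube with constant stalk $W$ is the tensor product of $p$ copies of the interval complex. Each copy is $I^\bullet$, with $I^0=F^{\{0,1\}}$, $I^1=F$, and $d(a_0,a_1)=a_0+a_1$ in characteristic two. The whole complex is $W\otimes\bigotimes_{i=1}^p I^\bullet$. Under this identification a cube face corresponds to a choice in $\{0,1,\star\}$ per coordinate, and block weight counts the nonzero $W$-valued entries.

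For a single interval, I will set $h:I^1\to I^0$ by $h(c)=(0,c)$, and $\pi:I^0\to I^0$ by $\pi(a_0,a_1)=(a_0,a_0)$, with $\pi=0$ on $I^1$. Then I check $dh+hd=I-\pi$ directly:
\begin{itemize}
\item in degree one, $dh(c)=0+c=c$;
\item in degree zero, $hd(a_0,a_1)=(0,a_0+a_1)$, which equals $(a_0,a_1)-(a_0,a_0)$ in characteristic two.
\end{itemize}
In addition, $\pi h=0$, $h\pi=0$, $d\pi=0$ (since $d(a_0,a_0)=0$), $\pi d=0$, and $h^2=0$.

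For $p$ coordinates, I will use the standard tensor-product homotopy:
\[
 H=\sum_{i=1}^p \pi^{\otimes(i-1)}\otimes h\otimes I^{\otimes(p-i)},\qquad P=\pi^{\otimes p}.
\]
Here $W$ is a passive factor, and no Koszul signs are needed in characteristic two. The identity $dH+Hd=I-P$ then follows from the usual telescoping. In the $i$-th term, $d h+h d$ in factor $i$ gives $I-\pi$. The cross terms cancel because $d$ acting on a factor $j<i$ hits $\pi$, and $d\pi=\pi d=0$. Summing over $i$, the expression telescopes to $\pi^{\otimes 0}\otimes I^{\otimes p}-\pi^{\otimes p}$. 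The operator $P=\pi^{\otimes p}$ vanishes unless every factor is in degree zero. In degree zero it sends a vertex function $a$ to the constant function with value $a(0,\dots,0)$, because iterating $\pi$ copies the value at the $0$-endpoint in each coordinate. This matches the statement.

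For the weight bound, I will show each term of $H$ has block weight at most $|a|_b$:
\begin{itemize}
\item The map $h$ is injective on blocks: each nonzero output comes from exactly one input, $c\mapsto$ slot $1$.
\item The map $\pi$ sends each block at $0$ to at most two blocks, the ones at $0$ and at $1$.
\item Hence the $i$-th term multiplies weight by at most $2^{i-1}$.
\end{itemize}
Summing gives $|Ha|_b\le(1+2+\dots+2^{p-1})|a|_b\le 2^p|a|_b$. The only delicate point is that a single output block can receive contributions from several input blocks. This is harmless, since we are bounding the support of the image by the number of contributing input blocks.

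The consequences are immediate. If $da=0$ in positive degree, then $a=dHa+Hda+Pa=dHa$. In degree zero, the degree $-1$ term $dH$ vanishes, so a closed $a$ equals $Pa$, which is constant.

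The main step to check carefully is the cross-term cancellation in the identity $dH+Hd=I-P$. It needs the relations $d\pi=\pi d=0$ and $\pi h=h\pi=0$, together with $\pi^2=\pi$. All of these hold by the explicit formulas above.
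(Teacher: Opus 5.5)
Your proposal is correct and is essentially the paper's proof: the same one-coordinate homotopy ($h$ moves the $\star$-value to the $1$-vertex, $\pi(a_0,a_1)=(a_0,a_0)$, matching the paper's $h_i$ and $P_i$), the same telescoping operators $H=\sum_i\pi^{\otimes(i-1)}\otimes h\otimes I^{\otimes(p-i)}$ and $P=\pi^{\otimes p}$, and the same support counting for $|Ha|_b$. The differences are cosmetic. The paper observes that only one summand of $H$ is nonzero on a given input face, whereas you sum $2^{i-1}$ over $i$; both give a bound within $2^p$. Also, the cross terms with $j>i$ cancel in pairs in characteristic two, which you should state explicitly alongside your $j<i$ case.
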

\begin{proof}
For one coordinate, write $e_0,e_1,e_*$ for its three face
symbols. Define
\[
 \begin{array}{c|ccc}
       &e_0&e_1&e_*\\\hline
 d_i   &e_*&e_*&0\\
 h_i   &0&0&e_1\\
 P_i   &e_0+e_1&0&0
 \end{array}
\]
and let these maps act identically in the other coordinates and
on $W$. Evaluation on these three symbols gives
$d_ih_i+h_id_i=I-P_i$ and $d_iP_i=P_id_i=0$.
Operators in distinct coordinates commute. Hence, for
\[
 H=\sum_{i=1}^pP_1\cdots P_{i-1}h_i,\qquad
 P=P_1\cdots P_p,
\]
the terms with different indices in $dH+Hd$ either vanish or
cancel in pairs, and the remaining sum telescopes:
\[
 dH+Hd=\sum_{i=1}^pP_1\cdots P_{i-1}(I-P_i)=I-P.
\]
Every positive-degree word contains a star and is killed by $P$.
At degree zero only the all-zero word survives, and its image is
the sum of all vertex words. For a single input word, the $i$th
summand of $H$ can be nonzero only if its first nonzero symbol
is a star in position $i$. There is at most one such $i$, and
the preceding $P$'s produce at most $2^{i-1}\leq2^p$ words.
Subadditivity of support proves the weight bound. For $p=0$,
$H=0$ and $P=I$. Applying the identity to a closed cochain
proves the final statements, including this case.
\end{proof}

\begin{lemma}
\label{hom:horizontal-filling}
Put $H_0=2^{2t}n^t$ and $G_0=2^tn^t$.
For $1\leq r\leq p\leq t$ and $a\in A^{r,p}$ with
$\Delta a=0$, there is $b\in A^{r-1,p}$ with
$\Delta b=a$ and $|b|_a\leq H_0|a|_a$.
For $a\in A^{0,p}$ with $\Delta a=0$, there is a unique
$z\in C_p$ such that $a(v)[u]=z(u)$ for all vertices $v\leq u$;
it satisfies $|z|_b\leq G_0|a|_a$.
\end{lemma}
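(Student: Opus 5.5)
The horizontal complex $(A^{\bullet,p},\Delta)$ splits over the target face $u\in X(p)$. For fixed $u$, the entries $a(f)[u]$ with $f\leq u$ form a cochain on the lower cube $X_{\leq u}$, with constant stalk $W=\mathcal F(u)^*$ and identity incidence maps. By \eqref{hom:horizontal-map}, $\Delta$ acts on this cochain exactly as the cube differential $d$. The plan is therefore to apply Lemma~\ref{hom:cube-contraction} separately for each $u$ and then reassemble.

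\textbf{Setup.} By the lower-cube isomorphism of Definition~\ref{def:cubical-complex}, $X_{\leq u}\cong(\mcC_t)_{\leq\wt u}$, and this is the ordinary $p$-cube. Write $a_u$ for the slice $f\mapsto a(f)[u]$; it is a degree-$r$ cochain on this cube. The condition $\Delta a=0$ says that $d a_u=0$ for every $u$, because the output entries $(g,u)$ with $g\leq u$ involve only $a_u$.

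\textbf{Positive degree, $r\geq1$.} Lemma~\ref{hom:cube-contraction} gives $a_u=d(Ha_u)$. Define $b(e)[u]=(Ha_u)(e)$ for $e\leq u$, $\dim e=r-1$. Then $\Delta b=a$ slice by slice. For the weight bound, note that $b(e)\neq0$ forces some $u\geq e$ with $Ha_u\neq0$. By Lemma~\ref{hom:cube-contraction}, $Ha_u$ is supported on at most $2^p|a_u|_b$ faces, and $|a_u|_b$ is at most the number of active anchors $f\leq u$. Summing over $u$, $|b|_a$ is bounded by $2^p$ times the number of active pairs $(f,u)$ with $f$ an active anchor and $f\leq u\in X(p)$. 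Each anchor lies under at most $|X_{\geq f}(p)|\leq 2^t n^{t}$ faces of dimension $p$, using \eqref{eq:upper-face-count} and summing over types. This gives $|b|_a\leq 2^p\cdot 2^t n^t|a|_a\leq 2^{2t}n^t|a|_a=H_0|a|_a$. The one step needing care is the precise count of $u$ above a given anchor: by Lemma~\ref{lem:cubical-stars} it is at most $\binom{t-r}{p-r}n^{p-r}\leq 2^t n^t$.

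\textbf{Degree zero.} Here Lemma~\ref{hom:cube-contraction} says that a closed degree-$0$ cochain on the cube is constant. So each $a_u$ is constant on the vertices of $u$, and we set $z(u)$ equal to this common value. This $z$ is unique, since every $u$ has a vertex. If $z(u)\neq0$, then every vertex $v\leq u$ is an active anchor. Hence each nonzero $u$ is charged to one of its active vertices. Each vertex lies under at most $|X_{\geq v}(p)|\leq 2^tn^t$ faces of dimension $p$, which gives $|z|_b\leq G_0|a|_a$.

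The main obstacle is bookkeeping rather than mathematics. One has to check that identifying $X_{\leq u}$ with the standard $p$-cube turns the anchor weight into a sum of block weights with bounded multiplicity. One also has to check that the correct (up to characteristic-two signs) cube incidences match \eqref{hom:horizontal-map} with identity maps.
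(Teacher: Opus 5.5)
Your proposal is correct and follows the paper's proof essentially step for step. You split $A^{r,p}$ by the upper face $u$, apply Lemma~\ref{hom:cube-contraction} to each closed slice on the $p$-cube $X_{\leq u}$, and reassemble. You then bound the anchor weight by $2^p$ times the number of nonzero pairs $(f,u)$, using that an anchor lies below at most $\binom{t-r}{p-r}n^{p-r}\leq 2^tn^t$ faces of dimension $p$; the degree-zero case is the constancy statement of the same lemma.
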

\begin{proof}
Regroup the entries of $A^{r,p}$ by upper face:
\[
 A^{r,p}=\bigoplus_{u\in X(p)}
              \bigoplus_{f\in X_{\leq u}(r)}\mathcal F(u)^*.
\]
At fixed $u$, the horizontal map is the ordinary cube cochain
map with passive space $\mathcal F(u)^*$. Let $|a|_{\rm pair}$
count nonzero pairs $(f,u)$. Regularity gives
\[
 |a|_{\rm pair}\leq
   \binom{t-r}{p-r}n^{p-r}|a|_a\leq2^tn^t|a|_a.
\]
For $r>0$, apply Lemma~\ref{hom:cube-contraction} to the closed
array at each $u$, with cube dimension $p$. Reassemble its
primitives. Then $\Delta b=a$ and
$|b|_a\leq|b|_{\rm pair}\leq2^p|a|_{\rm pair}\leq H_0|a|_a$.
For $r=0$, the same lemma says that each upper face has a common
value at all its vertices. This defines $z(u)$ uniquely. Every
nonzero $z(u)$ contributes a nonzero pair, whence
$|z|_b\leq|a|_{\rm pair}\leq G_0|a|_a$.
\end{proof}

By \eqref{hom:local-exactness}, for every $f\in X(r)$ and
$r\leq p<t$ choose a linear right inverse
\[
 S_f^p:\ker T_f^p\longrightarrow A^{r,p+1}(f),\qquad
 T_f^{p+1}S_f^p=I.
\]
Such a map is obtained by lifting a basis of the kernel. It
sends zero to zero and therefore never creates a new active
anchor when applied separately at each anchor.

\begin{proposition}[Small-cycle filling]\label{hom:small-cycle-filling}
Suppose $t\geq2$, $1\leq k<t$, and $M>0$ satisfy
$\mu^{t-k}(\mathcal K)\geq M$. If $x\in C_k$ obeys
\[
 B_kx=0,\qquad |x|_b<M/(2tn)^t,
\]
then $x\in\operatorname{im}B_{k+1}$. If also
$\varepsilon^{t-k-1}(\mathcal K)\geq\varepsilon$ for some
$0<\varepsilon\leq1$, a primitive can be chosen with
\[
 B_{k+1}z=x,\qquad
 |z|_b\leq\frac{2(t^22^{2t}n^{t+1})^t}{\varepsilon}|x|_b.
\]
\end{proposition}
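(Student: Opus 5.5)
The plan is the anchored-array zigzag of \cite{DLV2024} through the double complex $(A^{r,p},\Delta,T)$: lift $x$ up the transpose direction to the top degree, where the $\Delta$-cocycles are exactly the cochains of $\mathcal K$; use the distance and expansion of $\mathcal K$ there; then descend back with Lemma~\ref{hom:horizontal-filling}.

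\textbf{Embedding $x$.} Define $a^{(0)}\in A^{0,k}$ by $a^{(0)}(v)[u]=x(u)$ for vertices $v\leq u$. Then $|a^{(0)}|_a\leq 2^k|x|_b$.
\begin{itemize}
\item $\Delta a^{(0)}=0$: each edge below $u$ has two endpoints carrying the same value, and we are in characteristic two.
\item $Ta^{(0)}=0$: at an anchor $v$, $(Ta^{(0)})(v)[v']=(B_kx)(v')=0$.
\end{itemize}

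\textbf{Ascending.} We build $a^{(j)}\in A^{j-1,k+j}$ for $j=1,\ldots,t-k$ with
\[
 Ta^{(1)}=a^{(0)},\qquad Ta^{(j+1)}=\Delta a^{(j)} .
\]
This is possible because $T\Delta a^{(j)}=\Delta Ta^{(j)}=\Delta\Delta a^{(j-1)}=0$, so the local exactness \eqref{hom:local-exactness} applies at every anchor. Apply the right inverses $S_f^p$ anchorwise; they create no new anchors. Hence each step costs at most the factor $tn$ from Lemma~\ref{hom:double-identities}. At the top, $c=\Delta a^{(t-k)}\in A^{t-k,t}$ satisfies $Tc=0$ and $\Delta c=0$. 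So $c$ is a cocycle in $C^{t-k}(\mathcal K)$ by Lemma~\ref{hom:kernel-restrictions}, with
\[
 |c|_b\leq (tn)^{t-k}2^k|x|_b<M .
\]
Since $\mu^{t-k}(\mathcal K)\geq M$, $c$ is a coboundary: $c=\Delta w$ for some $w\in C^{t-k-1}(\mathcal K)$, that is, $Tw=0$.

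\textbf{Choosing $w$ small.} For the quantitative part, take any primitive $y$ of $c$ and subtract a nearest element of $\ker\Delta$. The bound $\varepsilon^{t-k-1}(\mathcal K)\geq\varepsilon$ then gives $|w|_a\leq|c|_b/\varepsilon$.

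\textbf{Descending.} Set $\alpha=a^{(t-k)}-w$. It is $\Delta$-closed and $T\alpha=Ta^{(t-k)}$. If its anchor degree $t-k-1$ is positive, Lemma~\ref{hom:horizontal-filling} gives $b$ with $\Delta b=\alpha$ and $|b|_a\leq H_0|\alpha|_a$. Replace
\[
 a^{(t-k-1)}\ \text{by}\ a^{(t-k-1)}-Tb .
\]
Then
\[
 \Delta\bigl(a^{(t-k-1)}-Tb\bigr)=Ta^{(t-k)}-T\Delta b=Tw=0,
\]
and its $T$-image is unchanged, since $T^2=0$. Iterate down the anchor degrees, each time subtracting $T$ of a horizontal primitive. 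Each step costs a factor at most $H_0$, by $|Tb|_a\leq|b|_a$. The iteration ends with $a'\in A^{0,k+1}$ such that $\Delta a'=0$ and $Ta'=a^{(0)}$. The case $k=t-1$ enters this stage directly with $\alpha$ at anchor degree $0$.

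\textbf{Reading off $z$.} The vertex case of Lemma~\ref{hom:horizontal-filling} produces $z\in C_{k+1}$ with $a'(v)[u]=z(u)$ and $|z|_b\leq G_0|a'|_a$. Evaluating $Ta'=a^{(0)}$ at any anchor vertex gives $B_{k+1}z=x$. For the first assertion, the same chase works without any size control on $w$, using only that $c$ is exact.

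\textbf{Main obstacle.} The expected difficulty is bookkeeping rather than structure. We must check that each ascending or descending step multiplies anchor weight by at most $tn$, $H_0$, or $1$ (for $T$), and that the vertex-to-face conversions cost $2^k$ and $G_0$. Collecting these, together with the $1/\varepsilon$ from $w$ and the at most $t$ descending levels, should fit inside the claimed constant $2(t^22^{2t}n^{t+1})^t/\varepsilon$, since the bound $\ldots\leq(t^22^{2t}n^{t+1})^t$ absorbs the products.

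A second point to verify is that the $T$-maps stay matched with $B_k$ and $B_{k+1}$ in the given, possibly nonorthogonal, stalk bases. This holds because $T$ is defined with the transposes $R_{uv}^{\mathsf T}$ in exactly those bases.
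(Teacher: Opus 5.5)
Your proposal is correct and is essentially the paper's proof. Your $a^{(j)}$, $c$ and $w$ are the paper's $z^{(j-1)}$, $x^{(L)}$ and $u$, and your descent, which subtracts $T$ of horizontal primitives, is exactly its correction $z^{(r-1)}\mapsto z^{(r-1)}+Tv^{(r-1)}$ (signs are irrelevant in characteristic two).

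The one step you leave as ``should fit'' is the final count. The paper carries it out through the additive recurrence $q_{r-1}\leq d_0^{r-1}B_0+H_0q_r$, with $a_0=H_0d_0$, which gives $|z|_b\leq 2\varepsilon^{-1}a_0^{t-1}|x|_b$. Your weights, with $2^k$ in place of $2^t$, satisfy the same recurrence, so the stated constant is indeed met.
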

\begin{proof}
The zero vector has the zero primitive. Suppose $x\ne0$, and
put $L=t-k$, $d_0=tn$, and $B_0=2^t|x|_b$.
Copy $x$ to the vertices of its supporting faces:
\[
 x^{(0)}(v)[u]=x(u)\qquad(v\leq u\in X(k)).
\]
This lies in $A^{0,k}$ and has anchor weight at most
$2^k|x|_b\leq B_0$. Equal values at the two endpoints of each
edge give $\Delta x^{(0)}=0$. At $(v,w)$ with
$v\leq w\in X(k-1)$, the vertical equation is
$(Tx^{(0)})(v)[w]=(B_kx)(w)=0$.

Construct successively, for $0\leq r<L$,
\[
 z^{(r)}(f)=S_f^{k+r}(x^{(r)}(f)),\qquad
 x^{(r+1)}=\Delta z^{(r)},
\]
where $x^{(r)}\in A^{r,k+r}$ and
$z^{(r)}\in A^{r,k+r+1}$. Inductively the argument of $S_f^{k+r}$
is in its kernel domain: the base case was just proved, and
\[
 Tz^{(r)}=x^{(r)},\quad
 Tx^{(r+1)}=\Delta Tz^{(r)}=\Delta x^{(r)}=0,\quad
 \Delta x^{(r+1)}=0.
\]
Here $k+r<t$ ensures that the right inverse exists. Since it
does not increase anchor support, Lemma~\ref{hom:double-identities}
also yields
\begin{equation}\label{hom:ladder-growth}
 |z^{(r)}|_a\leq|x^{(r)}|_a\leq d_0^rB_0,
 \qquad |x^{(r+1)}|_a\leq d_0^{r+1}B_0.
\end{equation}
The last array $x^{(L)}\in A^{L,t}$ is vertically closed and
hence belongs to $C^L(X;\mathcal K)$. It is horizontally
closed and has weight
\[
 |x^{(L)}|_a\leq2^t(tn)^L|x|_b\leq(2tn)^t|x|_b<M.
\]
The assumed cohomological distance therefore supplies
$u\in C^{L-1}(X;\mathcal K)$ with $\Delta u=x^{(L)}$ and
$Tu=0$. For the quantitative conclusion, choose a minimum-weight
primitive. All primitives form one coset of $\ker\Delta$, so
the assumed expansion in degree $L-1=t-k-1$ gives
\begin{equation}\label{hom:ladder-top-primitive}
 |u|_a\leq\varepsilon^{-1}|x^{(L)}|_a
            \leq\varepsilon^{-1}d_0^LB_0.
\end{equation}
If $x^{(L)}=0$, take $u=0$.

Replace $z^{(L-1)}$ by $z^{(L-1)}+u$. Its horizontal
differential is now zero, and its vertical differential remains
$x^{(L-1)}$. Descend through $r=L-1,\ldots,1$. If the corrected
$z^{(r)}$ satisfies $\Delta z^{(r)}=0$ and $Tz^{(r)}=x^{(r)}$,
Lemma~\ref{hom:horizontal-filling}, with upper degree $k+r+1$,
gives $v^{(r-1)}\in A^{r-1,k+r+1}$ such that
\[
 \Delta v^{(r-1)}=z^{(r)},\qquad
 |v^{(r-1)}|_a\leq H_0|z^{(r)}|_a.
\]
Replace $z^{(r-1)}$ by $z^{(r-1)}+Tv^{(r-1)}$. Then
\[
 \Delta(z^{(r-1)}+Tv^{(r-1)})
 =x^{(r)}+Tz^{(r)}=0,
\]
and its vertical differential remains $x^{(r-1)}$ because $T^2=0$.
This completes the descending induction. If $L=1$, this loop
is empty and the top correction already has horizontal degree zero.

The final $z^{(0)}\in A^{0,k+1}$ is horizontally closed.
Agreement in Lemma~\ref{hom:horizontal-filling} supplies
$z\in C_{k+1}$ with $z^{(0)}(v)[u]=z(u)$. At a $k$-face $w$,
choose any vertex $v\leq w$ and use $Tz^{(0)}=x^{(0)}$:
\[
 (B_{k+1}z)(w)=\sum_{u>w}R_{uw}^{\mathsf T}z(u)=x(w).
\]
This proves the qualitative assertion without using
\eqref{hom:ladder-top-primitive}.

For the quantitative assertion let $q_r$ bound the corrected
weight $|z^{(r)}|_a$. The construction gives
\[
 q_{L-1}\leq(d_0^{L-1}+\varepsilon^{-1}d_0^L)B_0,
 \qquad q_{r-1}\leq d_0^{r-1}B_0+H_0q_r,
 \qquad |z|_b\leq G_0q_0.
\]
Iterating these inequalities, and writing
$a_0=H_0d_0=t2^{2t}n^{t+1}$, yields
\begin{align*}
 |z|_b
 &\leq G_0B_0\left(
      \sum_{j=0}^{L-1}a_0^j+
             \varepsilon^{-1}H_0^{L-1}d_0^L\right)\\
 &\leq H_0|x|_b(2+\varepsilon^{-1}d_0)a_0^{t-2}
 \leq2\varepsilon^{-1}a_0^{t-1}|x|_b
 \leq2\varepsilon^{-1}(t^22^{2t}n^{t+1})^t|x|_b.
\end{align*}
For these inequalities, $1\leq L\leq t-1$, $a_0\geq2$,
$G_0B_0=H_0|x|_b$, $d_0\geq2$, and $\varepsilon\leq1$.
In particular, the geometric sum is at most $2a_0^{L-1}$ and
$2+\varepsilon^{-1}d_0\leq2\varepsilon^{-1}d_0$.
\end{proof}

\subsection{Systolic distance and cycle expansion}

\begin{proof}[Proof of Theorem~\ref{hom:headline}]
The construction and compatible-chart assertions are
Lemma~\ref{hom:kernel-restrictions} and
Theorem~\ref{hom:canonical-dual}. If
$\mu^{t-k}(\mathcal K)=M<\infty$, Proposition~\ref{hom:small-cycle-filling}
shows that every cycle of weight less than $M/B$ is a boundary.
Hence $\mu_k(\mathcal F)\geq M/B$. If that cohomological distance
is infinite, apply the proposition separately to each cycle $x$
with any finite $M>B|x|_b$. Every cycle is then a boundary,
and both sides of the asserted distance inequality are infinite.
This proves the distance transfer.

For expansion, let $x\in C_k$ and $b=B_kx$. If $b=0$ there
is nothing to prove. If $|b|_b\geq M/B$, then
\[
 \operatorname{dist}_b(x,\ker B_k)\leq Z_k
                       \leq (BZ_k/M)|b|_b.
\]
Otherwise $B_{k-1}b=0$ and $0<|b|_b<M/B$. Apply
Proposition~\ref{hom:small-cycle-filling} to $b$ in degree
$k-1$. Its hypotheses are precisely
$1\leq k-1<t$, $\mu^{t-k+1}(\mathcal K)\geq M$, and
$\varepsilon^{t-k}(\mathcal K)\geq\varepsilon$.
It supplies $z\in C_k$ with $B_kz=b$ and
$|z|_b\leq(T/\varepsilon)|b|_b$. Since $x-z\in\ker B_k$,
this bounds the distance to the kernel by the same quantity.
Taking the weaker of the two bounds proves the assertion.

The inverse-transpose identity
\eqref{hom:inverse-transpose-chart} applies to every invertible
stalk chart; none of these arguments uses an orthogonal basis.
\end{proof}

\begin{corollary}
\label{hom:middle-constants}
Let $t\geq4$, $2\leq k\leq t-2$, and $c>0$. Suppose
\[
 m^k(\mathcal F)\geq cZ_k,\quad
 m^{k+1}(\mathcal F)\geq cZ_{k+1},\quad
 m^{t-k}(\mathcal K)\geq cZ_{t-k},\quad
 m^{t-k+1}(\mathcal K)\geq cZ_{t-k+1}.
\]
With $B,T$ as in Theorem~\ref{hom:headline}, put
\begin{align*}
 u_c&=\min\left\{\frac1{\binom tk n^k},\frac{cZ_{k+1}}{Z_k}\right\},\\
 u_d&=\min\left\{\frac1{\binom t{t-k}n^{t-k}},
                         \frac{cZ_{t-k+1}}{Z_{t-k}}\right\},\\
 v_h&=\min\left\{u_d/T,\frac{cZ_{t-k+1}}{BZ_k}\right\}.
\end{align*}
Then all three constants are positive and
\[
 \mu^k(\mathcal F)\geq cZ_k,\qquad
 \mu_k(\mathcal F)\geq cZ_{t-k}/B,\qquad
 \varepsilon^k(\mathcal F)\geq u_c,\qquad
 \varepsilon_k(\mathcal F)\geq v_h.
\]
\end{corollary}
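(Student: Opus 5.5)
This corollary is a direct combination of Lemma~\ref{hom:local-descent} with Theorem~\ref{hom:headline}. No new geometric or coding input is needed; the task is to match each of the four hypotheses with the degree in which it is consumed.

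First, positivity. The quantities $Z_j$, $n$, $c$, $B$ and $T$ are all positive and finite, since every $X(j)$ is nonempty by Lemma~\ref{lem:cubical-face-counts}. So $u_c$, $u_d$ and $v_h$ are minima of positive numbers.

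Next, the cohomological quantities of $\mathcal F$.
\begin{itemize}
\item The first part of Lemma~\ref{hom:local-descent} gives $\mu^k(\mathcal F)\geq m^k(\mathcal F)\geq cZ_k$.
\item For $\varepsilon^k$, apply the second part of that lemma in degree $j=k<t$ with $M=cZ_{k+1}\leq m^{k+1}(\mathcal F)$. This gives $\varepsilon^k\geq\min\{\ell_k^{-1},cZ_{k+1}/Z_k\}$.
\item The bound $\ell_k\leq\binom tk n^k$ recorded after the definition of $\ell_j$ then gives $\varepsilon^k\geq u_c$.
\end{itemize}

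The same two steps apply to the kernel cosheaf $\mathcal K$, which is again a cosheaf on $X$ (Lemma~\ref{hom:kernel-restrictions}) with $n_+=n$. Since $t-k<t$:
\begin{itemize}
\item $\mu^{t-k}(\mathcal K)\geq m^{t-k}(\mathcal K)\geq cZ_{t-k}$ and $\mu^{t-k+1}(\mathcal K)\geq cZ_{t-k+1}$.
\item Lemma~\ref{hom:local-descent} with $M=cZ_{t-k+1}$ gives $\varepsilon^{t-k}(\mathcal K)\geq u_d$.
\end{itemize}
These two lemmas do not use any hypothesis specific to $\mathcal F$ beyond characteristic two and regularity, so they apply equally to $\mathcal K$.

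Now transfer to the transpose complex of $\mathcal F$ with Theorem~\ref{hom:headline}.
\begin{itemize}
\item The distance part, valid for $1\leq k<t$ and $t\geq2$, gives $\mu_k(\mathcal F)\geq B^{-1}\mu^{t-k}(\mathcal K)\geq cZ_{t-k}/B$.
\item The expansion part needs $2\leq k\leq t$, which holds. Take $M=cZ_{t-k+1}$ and $\varepsilon=u_d$; this requires $0<u_d\leq1$, which holds because $u_d\leq1/(\binom t{t-k}n^{t-k})\leq1$. The result is $\varepsilon_k(\mathcal F)\geq\min\{u_d/T,\,cZ_{t-k+1}/(BZ_k)\}=v_h$.
\end{itemize}

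The only point needing care is the bookkeeping of degrees. The expansion transfer uses $\mu^{t-k+1}(\mathcal K)$ and $\varepsilon^{t-k}(\mathcal K)$, and the latter in turn uses $m^{t-k+1}(\mathcal K)$. One must also check that every degree used lies below $t$, so that Lemma~\ref{hom:local-descent} applies; this follows from $k\leq t-2$ and $k\geq2$. Beyond that, the argument is just substitution of constants.
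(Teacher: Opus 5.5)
Your proposal is correct and follows the paper's proof. You apply Lemma~\ref{hom:local-descent} to $\mathcal F$ in degree $k$ and to $\mathcal K$ in degrees $t-k$ and $t-k+1$, using syndrome thresholds $cZ_{k+1}$ and $cZ_{t-k+1}$ together with the vertex-incidence bound $\ell_j\leq\binom tj n^j$; you then feed $M=cZ_{t-k+1}$ and $\varepsilon=u_d\leq1$ into Theorem~\ref{hom:headline} to obtain the two homological bounds.
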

\begin{proof}
Regularity gives at most $\binom tj n^j$ degree-$j$ faces above
any vertex. Apply Lemma~\ref{hom:local-descent} to
$(\mathcal F,k)$, $(\mathcal K,t-k)$, and
$(\mathcal K,t-k+1)$. This gives the corresponding three
cohomological distance bounds. Apply
Lemma~\ref{hom:local-descent-expansion} to $\mathcal F$ in
degree $k$ and $\mathcal K$ in degree $t-k$, respectively,
using syndrome thresholds $cZ_{k+1}$ and $cZ_{t-k+1}$.
The vertex-incidence bound gives $\varepsilon^k(\mathcal F)\geq u_c$
and $\varepsilon^{t-k}(\mathcal K)\geq u_d$. These constants
are positive and at most one. Theorem~\ref{hom:headline}, with
$M=cZ_{t-k+1}$ and $\varepsilon=u_d$ for its expansion part,
now gives the two homological conclusions. All relevant degrees
are in range since $2\leq k,t-k\leq t-2$.
\end{proof}

\section{The abstract parameter theorem}
\label{comb:section}

\subsection{Binary conversion of block estimates}

For an ordered binary basis $\beta$ of a finite field $F$, write
$R_\beta(A)$ for the binary matrix of an $F$-linear map $A$.
The conversion uses $\beta$ on cochains and its trace-dual basis
on dual vectors. In particular, it does not require a self-dual
basis of $F$ or orthogonal bases of the stalks.

\begin{proposition}
\label{bin:restriction-and-soundness}
Let $F=\mathbb F_{2^s}$, $s\geq1$, and let
\[
 C^1\xrightarrow{\ B\ } C^2\xrightarrow{\ A\ }C^3,
 \qquad AB=0,
\]
be a complex of finite-dimensional based $F$-spaces. Each $C^i$
is a finite ordered direct sum of blocks, with its basis obtained
by concatenating their ordered bases. Its dual has the
corresponding dual block bases. Put
\[
 D_i=\dim_F C^i,\qquad
 M_2=\max\{\dim_F W:W\text{ is a block of }C^2\},
\]
and suppose $D_2>0$ and $D_1+D_3>0$.
Let $\mu_c,\mu_h,\varepsilon_c,\varepsilon_h>0$ satisfy
\begin{gather*}
 \min_{x\in\ker A\setminus\operatorname{im}B}|x|_b\geq\mu_c,
 \qquad
 \min_{y\in\ker B^{\mathsf T}\setminus\operatorname{im}A^{\mathsf T}}
                                  |y|_b\geq\mu_h,\\
 |Ax|_b\geq\varepsilon_c\operatorname{dist}_b(x,\ker A)
                       \quad(x\in C^2),\\
 |B^{\mathsf T}y|_b\geq\varepsilon_h
       \operatorname{dist}_b(y,\ker B^{\mathsf T})
                       \quad(y\in(C^2)^*).
\end{gather*}
For any ordered binary basis $\beta$ of $F$, the matrices
\[
 H_X=R_\beta(A),\qquad H_Z=R_\beta(B)^{\mathsf T}
\]
define a binary CSS code with
\begin{gather*}
 N=sD_2,\qquad
 K=s(D_2-\operatorname{rank}_F A-\operatorname{rank}_F B),\qquad
 d\geq\min\{\mu_c,\mu_h\},\\
 \mathsf H\succeq
 \frac{D_2\min\{\varepsilon_c,\varepsilon_h\}}
      {sM_2(D_1+D_3)}\frac{\mathsf D_{\mathscr Q}}N.
\end{gather*}
The tester averages all $s(D_1+D_3)$ listed rows, including
dependent and zero rows. If an integer $w_F\geq1$ bounds the
number of nonzero entries in every row and column of $A$ and $B$,
every binary check has weight
at most $sw_F$, and every qubit is incident to at most $2sw_F$
listed checks.
\end{proposition}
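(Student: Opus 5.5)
The plan is to reduce every binary statement to the corresponding block statement over $F$. The key device is a pair of $\F_2$-linear isomorphisms $\phi:C^2\to\F_2^{sD_2}$ and $\phi^*:(C^2)^*\to\F_2^{sD_2}$. The first uses $\beta$ coordinatewise. The second uses the trace-dual basis $\beta^\vee$ on dual coordinates, so that $\operatorname{Tr}(ab)$ equals the binary dot product of the $\beta$- and $\beta^\vee$-expansions of $a,b$.

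\textbf{Identifying kernels and images.} First check the identities
\[
 \ker H_X=\phi(\ker A),\qquad \operatorname{im}H_Z^{\mathsf T}=\phi(\operatorname{im}B),
\]
\[
 \ker H_Z=\phi^*(\ker B^{\mathsf T}),\qquad \operatorname{im}H_X^{\mathsf T}=\phi^*(\operatorname{im}A^{\mathsf T}).
\]
The first two hold because $R_\beta(A)=\phi_3A\phi^{-1}$ and $R_\beta(B)=\phi B\phi_1^{-1}$. For the other two, note that the binary transpose of $R_\beta(A)$ is $R_{\beta^\vee}(A^{\mathsf T})$ with respect to the trace pairing. This holds because $\operatorname{Tr}(\langle Ax,y\rangle)=\operatorname{Tr}(\langle x,A^{\mathsf T}y\rangle)$ and the trace form is nondegenerate.

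\textbf{Dimension and distance.} The dimension formula \eqref{eq:css-dimension} then follows, since binary ranks are $s$ times $F$-ranks. For distance, the binary weight of $\phi(x)$ is at least the number of nonzero scalar coordinates of $x$, which is at least $|x|_b$. Applying this to \eqref{eq:css-distance} gives $d\ge\min\{\mu_c,\mu_h\}$.

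\textbf{Soundness.} I will use the syndrome criterion from Appendix~\ref{sec:css-appendix}. Since the stabilizers are commuting Paulis, $\mathsf H$ and $\mathsf D_{\mathscr Q}$ are simultaneously diagonal in syndrome eigenspaces. On the eigenspace of a syndrome $(\sigma_X,\sigma_Z)$, the operator $\mathsf H$ equals $(|\sigma_X|+|\sigma_Z|)/m$, and $\mathsf D_{\mathscr Q}$ is at most the minimum weight of a Pauli with that syndrome. That weight is at most the minimum $|e_z|$ with syndrome $\sigma_X$ plus the minimum $|e_x|$ with syndrome $\sigma_Z$, taking the appendix's form as given.

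So it suffices to show that every $z$ satisfies
\[
 \min_{z'\in\ker H_X}|z-z'|\le \frac{sM_2}{\varepsilon_c}\,|H_Xz|,
\]
and similarly with $\varepsilon_h$ for $H_Z$. Write $z=\phi(x)$. Then $|H_Xz|\ge|Ax|_b$, since a nonzero block yields at least one nonzero bit. By hypothesis some $x'\in\ker A$ has $|x-x'|_b\le|Ax|_b/\varepsilon_c$. Each differing block contributes at most $sM_2$ bits, so
\[
 |z-\phi(x')|\le sM_2|x-x'|_b\le \frac{sM_2}{\varepsilon_c}\,|H_Xz|.
\]
Summing the two sides gives $\mathsf D\le (sM_2/\varepsilon)\,m\,\mathsf H$ with $\varepsilon=\min\{\varepsilon_c,\varepsilon_h\}$ and $m=s(D_1+D_3)$. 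Rewriting with $N=sD_2$ yields exactly the stated constant.

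\textbf{Locality.} Each nonzero scalar entry of $A$ or $B$ becomes an $s\times s$ binary block. Hence binary row and column weights are at most $sw_F$, and the incidence is at most $2sw_F$ since each qubit appears in two lists.

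\textbf{Main obstacle.} The main subtlety is the transpose bookkeeping: the $Z$-side must be expressed through $\phi^*$ with the trace-dual basis, so that the block hypotheses on $B^{\mathsf T}$ over $(C^2)^*$ transfer without assuming self-duality. Another delicate point is correctly invoking the appendix criterion, namely that the distance observable is dominated syndrome-wise by the sum of the separate $X$ and $Z$ repair costs.
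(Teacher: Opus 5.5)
Your proposal is correct and follows essentially the same route as the paper's proof in Appendix~\ref{app:css}. Trace-dual coordinates identify the binary kernels and images with $\ker A$, $\operatorname{im}B$, $\ker B^{\mathsf T}$, $\operatorname{im}A^{\mathsf T}$; the comparison $|x|_b\leq|\overline x|_H\leq sM_2|x|_b$ transfers distance and coset-repair expansion; and the syndrome decomposition converts the summed $X$- and $Z$-repair costs into the operator inequality. The one point you leave unstated is the CSS condition itself, which is immediate from $H_XH_Z^{\mathsf T}=R_\beta(A)R_\beta(B)=R_\beta(AB)=0$.
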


The proof is given in Appendix~\ref{app:css}. The losses visible
here have different origins: a nonzero block contributes at least
one binary coordinate to distance, whereas a correction in a
middle block may use as many as $sM_2$ binary coordinates.
The factor $D_1+D_3$ is the size of the specified check list
before restriction of scalars.

\subsection{Proof of the abstract parameter theorem}

Fix integers $t\geq4$, $2\leq k\leq t-2$, and $s\geq1$, a field
$F$ of cardinality $2^s$, and integers
\[
 n_i\geq2,\qquad 0<m_{i,b}<n_i
           \quad(i\in[t],\ b\in\{0,1\}).
\]
Let $n=\max_i n_i$, and fix $K\geq\max_i n_i/\min_i n_i$,
$0<\rho\leq1$, and $0\leq\lambda<1$. For the rate quantities
of Section~\ref{sec:rate}, use
\[
 \mu_{i,b}=m_{i,b}/n_i,\quad
 S_i=\mu_{i,0}+\mu_{i,1},\quad
 p_i=\min_b\mu_{i,b},\quad q_i=1-\max_b\mu_{i,b},\quad
 \Gamma=\Gamma_k.
\]
The hypotheses on these fixed parameters are
\begin{equation}\label{comb:numerical-data}
 \Gamma>0,\qquad \lambda\mathcal L(t,k,\rho,K)<1,
 \qquad
 c=\frac{1-\lambda\mathcal L(t,k,\rho,K)}
          {\mathcal Q(t,k,\rho,K)}>0,
\end{equation}
where $\mathcal L,\mathcal Q$ are the uniform constants of
Corollary~\ref{geo:threshold}.
Define the face-count ratios and output constants by
\begin{gather}
 r_j=2^{k-j}\frac{e_j(n_1,\ldots,n_t)}{e_k(n_1,\ldots,n_t)}
       \quad(0\leq j\leq t),\qquad
 B=(2tn)^t,\qquad T=2(t^22^{2t}n^{t+1})^t,
 \label{comb:output-constants}\\
 u_c=\min\left\{\frac1{\binom tk n^k},cr_{k+1}\right\},\qquad
 u_d=\min\left\{\frac1{\binom t{t-k}n^{t-k}},
                        c\frac{r_{t-k+1}}{r_{t-k}}\right\},
 \nonumber\\
 v=\min\{u_c,u_d/T,cr_{t-k+1}/B\},\qquad
 \xi=\min\{1,r_{t-k}/B\},\qquad
 w=\max\{2(k+1),t-k+1\}n^{t-k+1}.
 \nonumber
\end{gather}
Here $e_j$ is the elementary symmetric polynomial, with
$e_0=1$ and $e_j=0$ outside $0\leq j\leq t$.

\begin{theorem}\label{comb:main}
Fix the preceding parameters satisfying \eqref{comb:numerical-data}.
For every $\nu\geq1$, let $(X_\nu,\mathcal F_\nu)$ consist of
a $t$-dimensional $(n_1,\ldots,n_t)$-regular cubical complex
and an $F$-cosheaf with compatible base-code encoders of endpoint
dimensions $(m_{i,b})$. Suppose:
\begin{enumerate}
\item The parallel-face graphs satisfy the spectral hypothesis
of Section~\ref{sec:cubical-complexes} with bound $\lambda$.
\item For each face $f$, each nonempty subset $I\subseteq\codir(f)$,
and each field extension $E/F$, both tuples
\[
 (E\otimes_F C_{f,i})_{i\in I},\qquad
 (E\otimes_F C_{f,i}^{\perp})_{i\in I},\qquad
 C_{f,i}=\operatorname{im}E_{f,i},
\]
have weighted product expansion at least $\rho$. The dual code
is taken in the actual edge coordinates of the compatible chart.
\item The vertex-parity masses $V_\nu$ tend to infinity.
\end{enumerate}
Choose a binary basis $\beta$ of $F$, orders of the face sets,
and ordered bases of the individual stalks. Define
\[
 H_{X,\nu}=R_\beta(\delta_\nu^k),\qquad
 H_{Z,\nu}=R_\beta(\delta_\nu^{k-1})^{\mathsf T}.
\]
Then the associated CSS codes have parameters $[[N_\nu,K_\nu,d_\nu]]$
and listed testers $\mathsf H_\nu$ satisfying
\begin{gather*}
 N_\nu=sV_\nu\left(\prod_i n_i\right)e_{t-k}(S)\longrightarrow\infty,
 \qquad
 \frac{K_\nu}{N_\nu}\geq\frac{\Gamma}{e_{t-k}(S)}>0,\\
 \frac{d_\nu}{N_\nu}\geq\frac{c\xi}{sn^{t-k}}>0,\\
 \mathsf H_\nu\succeq
 \frac{v e_{t-k}(S)}
 {sn^{t-k}(e_{t-k+1}(S)+e_{t-k-1}(S))}
               \frac{\mathsf D_{\mathscr Q_\nu}}{N_\nu}.
\end{gather*}
Every listed check has weight at most $sw$, and every qubit is
incident to at most $2sw$ listed checks.
\end{theorem}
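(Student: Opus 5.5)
The plan is to assemble Theorem~\ref{comb:main} from the block-level estimates already proved and then convert them to binary with Proposition~\ref{bin:restriction-and-soundness}. I apply that proposition to the complex $C^{k-1}\xrightarrow{\delta^{k-1}}C^k\xrightarrow{\delta^k}C^{k+1}$ with $A=\delta^k$, $B=\delta^{k-1}$. The blocks are the face stalks, and $M_2\leq n^{t-k}$ by Lemma~\ref{bin:locality}. The cohomological inputs are $\mu_c=\mu^k(\mathcal F)$ and $\varepsilon_c=\varepsilon^k(\mathcal F)$. The homological inputs are $\mu_h=\mu_k(\mathcal F)$ and $\varepsilon_h=\varepsilon_k(\mathcal F)$ in the transpose convention of Section~\ref{hom:section}; the conversion there uses trace-dual bases, so no orthogonality is needed.

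\textbf{Step 1: the four distance hypotheses.} For the four hypotheses of Corollary~\ref{hom:middle-constants}, I use hypotheses 1--2 and Theorem~\ref{geo:general-filling-table} to give canonical local filling for $\mathcal F$. Theorem~\ref{hom:canonical-dual} shows that $\mathcal K$ has compatible charts whose base codes are $C_{f,i}^\perp$ in the actual edge coordinates. Hypothesis 2 for the dual tuples then gives filling for $\mathcal K$ with the same $\rho,K$. Corollary~\ref{geo:threshold} yields $m^j\geq cZ_j$ for $j\in\mathcal J_{t,k}$, for both $\mathcal F$ and $\mathcal K$, since $\lambda\mathcal L<1$. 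Corollary~\ref{hom:middle-constants} then bounds $\mu^k,\mu_k,\varepsilon^k,\varepsilon_k$.

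\textbf{Step 2: normalize by face counts.} I rewrite $Z_{j}/Z_k=r_j$ using \eqref{geo:counts}, so $Z_{k+1}/Z_k=r_{k+1}$ and similarly for the other ratios; the constants $u_c,u_d,v$ of \eqref{comb:output-constants} are exactly these. For distance, $\min\{cZ_k,cZ_{t-k}/B\}\geq c\xi Z_k$. I then divide by $N=sM_k$, using $M_k\leq n^{t-k}Z_k$ since each stalk has dimension at most $n^{t-k}$, to get $d/N\geq c\xi/(sn^{t-k})$.

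\textbf{Step 3: soundness.} Proposition~\ref{bin:restriction-and-soundness} gives the factor $D_2v/(sM_2(D_1+D_3))$. Substituting $D_j=M_j=A_0e_{t-j}(S)$ from \eqref{eq:cochain-count} gives the stated expression.

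\textbf{Step 4: length, rate, locality.} The length is $N_\nu=sM_k=sV_\nu\prod_i n_i\,e_{t-k}(S)$, which tends to infinity by hypothesis 3. The rate bound is $K_\nu/N_\nu=h_k/M_k\geq\Gamma/e_{t-k}(S)$ by Proposition~\ref{bin:euler-rate}. Check weight and incidence follow from Lemma~\ref{bin:locality} with $w$.

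\textbf{Main obstacle.} I expect the hardest step to be the bookkeeping that these constants are positive and independent of $\nu$, in particular that $r_j$ and $n$ are fixed. The other delicate point is verifying that the filling hypothesis for $\mathcal K$ holds in every chart, including the subtuple and field-extension requirements. That reduces to hypothesis 2 together with Lemma~\ref{cert:scalar-extension}, since $Q_i$ differs from any chosen dual encoder only by a basis change. A remaining boundary case is the hypothesis $D_1+D_3>0$ in Proposition~\ref{bin:restriction-and-soundness}; it holds because every degree has positive dimension.
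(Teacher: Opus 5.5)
Your proposal is correct and follows the paper's proof essentially step for step. You build $\mathcal K$ with dual base codes via Theorem~\ref{hom:canonical-dual}, obtain $m^j\geq cZ_j$ for both cosheaves from Corollary~\ref{geo:threshold}, and pass through Corollary~\ref{hom:middle-constants} and the ratios $r_j$ to get $\min\{cZ_k,cZ_{t-k}/B\}=c\xi Z_k$ and $\varepsilon^k,\varepsilon_k\geq v$. You then finish with Proposition~\ref{bin:restriction-and-soundness} using $M_2\leq n^{t-k}$, Proposition~\ref{bin:euler-rate}, and Lemma~\ref{bin:locality}; the only slip is notational, since the substitution should read $(D_1,D_2,D_3)=(M_{k-1},M_k,M_{k+1})$.
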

\begin{proof}
Fix $\nu$, abbreviate $X=X_\nu$ and $\mathcal F=\mathcal F_\nu$,
and put $Z_j=|X(j)|$, $M_j=\dim_FC^j(X;\mathcal F)$, and
$A_0=V_\nu\prod_i n_i$. All these are finite and their required
denominators are positive: regularity gives positive face counts,
and all endpoint dimensions are strictly between zero and $n_i$.

Construct the actual kernel cosheaf $\mathcal K$ by
\eqref{hom:canonical-kernel}. Theorem~\ref{hom:canonical-dual}
gives local transpose exactness and compatible encoders whose
images are $C_{f,i}^{\perp}$. Thus the second hypothesis supplies
the product-expansion premises of the local filling theorem for
both $\mathcal F$ and $\mathcal K$, with the same $\rho$ and
degree-ratio bound $K$. Their underlying geometry and spectral
bound are identical. Corollary~\ref{geo:threshold}, applied to
these two cosheaves and the degrees
$\{k,k+1,t-k,t-k+1\}$, yields
\[
 m^j(\mathcal F),\ m^j(\mathcal K)\geq cZ_j
 \qquad(j\in\{k,k+1,t-k,t-k+1\}).
\]
All these degrees are below $t$ by the assumed middle-degree
range. No choice of a local length is made in this application:
$\rho,K$ and the fixed lengths already satisfy its hypotheses.

The face counts of Section~\ref{sec:cubical-complexes} give
\[
 Z_j=2^{t-j}V_\nu e_j(n_1,\ldots,n_t),\qquad Z_j/Z_k=r_j.
\]
Apply Corollary~\ref{hom:middle-constants} with these counts and
$n=\max_i n_i$. Substitution of the ratios gives
\begin{gather*}
 \mu^k(\mathcal F)\geq cZ_k,\qquad
 \mu_k(\mathcal F)\geq cZ_{t-k}/B,\\
 \varepsilon^k(\mathcal F)\geq u_c\geq v,\qquad
 \varepsilon_k(\mathcal F)\geq\min\{u_d/T,cr_{t-k+1}/B\}\geq v.
\end{gather*}
Both distances are therefore at least $c\xi Z_k$.

The dimension and rate calculation in Proposition~\ref{bin:euler-rate}
gives
\[
 M_j=A_0e_{t-j}(S),\qquad
 h_k:=M_k-\operatorname{rank}\delta^{k-1}
                -\operatorname{rank}\delta^k\geq A_0\Gamma.
\]
Lemma~\ref{bin:locality} bounds the rows and columns of the two
adjacent differentials by $w$ and every middle block by
$n^{t-k}$. In particular $M_k\leq n^{t-k}Z_k$.
Apply Proposition~\ref{bin:restriction-and-soundness} with
\[
 (C^1,C^2,C^3)=(C^{k-1}(X;\mathcal F),C^k(X;\mathcal F),
                                      C^{k+1}(X;\mathcal F)),
 \quad B=\delta^{k-1},\quad A=\delta^k,
\]
and $\mu_c=\mu_h=c\xi Z_k$, $\varepsilon_c=\varepsilon_h=v$.
The complex identity gives $AB=0$; the preceding paragraphs
verify its four metric hypotheses and its dimension conditions.
It follows that $N_\nu=sM_k$, $K_\nu=sh_k$, and
$d_\nu\geq c\xi Z_k$. These imply the stated rate and distance.
For soundness, replace the actual middle-block bound in that
proposition by the larger value $n^{t-k}$ and cancel $A_0$:
\[
 \sigma_\nu\geq
 \frac{vM_k}{sn^{t-k}(M_{k-1}+M_{k+1})}
 =\frac{v e_{t-k}(S)}
 {sn^{t-k}(e_{t-k+1}(S)+e_{t-k-1}(S))}.
\]
The same proposition gives the two locality bounds.

Every constant in these bounds depends only on the fixed
parameters preceding \eqref{comb:numerical-data}. They are
positive: $c>0$, all elementary symmetric sums in range are
positive, and every displayed minimum has positive entries.
Finally $N_\nu$ is a fixed positive multiple of $V_\nu$, so
the asserted growth follows.
\end{proof}

\begin{proof}[Proof of Theorem~\ref{thm:abstract}]
The hypotheses of that theorem are those of Theorem~\ref{comb:main}.
Indeed, $|X_\nu(0)|=2^tV_\nu$, so their two growth conditions
are equivalent, and universal product expansion means expansion
over every coefficient-field extension as required above.
The explicit distance and soundness bounds in Theorem~\ref{comb:main}
are fixed and positive. Its rate and locality conclusions are
the bounds claimed in Theorem~\ref{thm:abstract}.
\end{proof}

\begin{corollary}\label{comb:weaken}
The distance and soundness conclusions of Theorem~\ref{comb:main}
remain valid if $c$ is replaced throughout the output constants
by any $c'$ with $0<c'\leq c$.
\end{corollary}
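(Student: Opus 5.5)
The approach is monotonicity: each output constant is a nondecreasing function of $c$, and the inputs to Theorem~\ref{comb:main} that involve $c$ are inequalities of the form $m^j\geq cZ_j$. Those inequalities are only weakened when $c$ is lowered to $c'$.

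First, I will observe where $c$ enters the proof of Theorem~\ref{comb:main}. It enters only through the four lower bounds $m^j(\mathcal F),m^j(\mathcal K)\geq cZ_j$ for $j\in\{k,k+1,t-k,t-k+1\}$, which are supplied by Corollary~\ref{geo:threshold}. Since $0<c'\leq c$, the same four inequalities hold with $c'$ in place of $c$. Corollary~\ref{hom:middle-constants} is stated for an arbitrary $c>0$, so I will apply it verbatim with $c'$. This gives $\mu^k(\mathcal F)\geq c'Z_k$, $\mu_k(\mathcal F)\geq c'Z_{t-k}/B$, $\varepsilon^k(\mathcal F)\geq u_c(c')$ and $\varepsilon_k(\mathcal F)\geq v_h(c')$.

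Second, I will rerun the last paragraphs of the proof of Theorem~\ref{comb:main} with $c'$ in place of $c$. The constants $u_c,u_d,v,\xi$ are built from $c'$, and $\xi$ does not depend on $c$ at all. Positivity of the modified constants is clear, because each is a minimum of positive entries once $c'>0$. Then Proposition~\ref{bin:restriction-and-soundness} applies with $\mu_c=\mu_h=c'\xi Z_k$ and $\varepsilon_c=\varepsilon_h=v(c')$. It yields the distance bound $c'\xi/(sn^{t-k})$ and the soundness bound with $v(c')$, exactly as displayed in Theorem~\ref{comb:main} but with $c'$ substituted. The rate and locality conclusions do not involve $c$ and are unchanged.

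Alternatively, one can argue directly. Each of $u_c,u_d,v$ is a minimum of terms that are either independent of $c$ or linear in $c$ with a positive coefficient, so $v(c')\leq v(c)$ and $c'\xi\leq c\xi$. The stated bounds with $c'$ are therefore weaker than those already proved, since operator inequalities $\mathsf H\succeq a\mathsf D/N$ persist when $a$ is decreased, because $\mathsf D\succeq0$. No step is a real obstacle. The only point needing care is to check that $\mathsf D_{\mathscr Q}$ is positive semidefinite, which holds because it is a nonnegative combination of the orthogonal projections $\Pi_{\leq a}-\Pi_{\leq a-1}$.
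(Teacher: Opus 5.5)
Your proposal is correct, and its final paragraph is essentially the paper's proof. The paper notes that $r_j,B,T,w,\xi$ do not depend on $c$ and that every other occurrence of $c$ is a positive multiple inside a minimum, so lowering $c$ only weakens the distance bound and the soundness inequality; for the latter you correctly add that $\mathsf D_{\mathscr Q}\succeq0$, a point the paper leaves implicit. Your main route, re-running the proof of Theorem~\ref{comb:main} through Corollary~\ref{hom:middle-constants} with $c'$, is also valid, but it is longer than needed.
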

\begin{proof}
The quantities $r_j,B,T,w,\xi$ do not depend on $c$. Each other
occurrence of $c$ in the output constants is a positive multiple
inside a minimum. Decreasing $c$ therefore only decreases the
asserted distance and soundness bounds. Rate and locality are
unchanged.
\end{proof}

\section{Arithmetic realization of the cubical complexes}
\label{sec:arithmetic}\label{exar:section}

For a nonarchimedean local field $L$, write $T(L)$ for its
rank-two lattice tree. Its vertices are homothety classes of
$\mathcal O_L$-lattices in $L^2$, and adjacency is represented by
$\pi\Lambda\subsetneq\Lambda'\subsetneq\Lambda$. A product of
these trees has the cubical cells obtained by choosing a vertex
or an edge in each factor. Quotients in the following theorem
initially mean cubical incidence complexes; the face-intersection
axiom of Definition~\ref{def:cubical-complex} is asserted only
at the indicated sufficiently deep levels.

\begin{theorem}\label{exar:tower}
Fix $t\ge1$, pairwise distinct integers $a_i\ge0$ for $i\in[t]$,
and an integer $r\ge1$. Put
\[
 q_i=2^{r+a_i},\qquad n_i=q_i+1,\qquad q_-=\min_i q_i,\qquad
 \lambda_0=\max_i\frac{2\sqrt{q_i}}{q_i+1}<1.
\]
There exist a totally real number field $F$, a totally definite
quaternion algebra $B/F$, a maximal order $\mathcal M\subset B$,
and distinct split finite places $v_i$ of residue cardinality $q_i$,
with a descending sequence of principal norm-one congruence groups
$\Gamma_\nu$ such that the following hold.
\begin{enumerate}
\item The group $\Gamma_\nu$ acts freely and preserves the
directional vertex parities on
$\mathcal T=\prod_{i=1}^tT(F_{v_i})$. The quotient
$X_\nu=\Gamma_\nu\backslash\mathcal T$ is finite.
\item For every integer $R\ge0$, all sufficiently deep quotient
maps identify every full radius-$R$ cubical ball with the
corresponding product-tree ball.
\item At all sufficiently deep levels, $X_\nu$ is an
$(n_1,\ldots,n_t)$-regular, $\lambda_0$-expanding cubical complex
in the sense of Definitions~\ref{def:cubical-complex}
and~\ref{def:expanding-cubical-complex}.
For every upstairs face, the quotient map identifies its
entire upward star, including all incidences, with the entire
upward star of its image.
\item The parity mass $V_\nu=2^{-t}|X_\nu(0)|$ tends to infinity.
The local data are independent of $\nu$, and
\[
 \lambda_0\le\frac2{\sqrt{q_-}},\qquad
 \frac{\max_i n_i}{\min_i n_i}
       \le 2^{\max_i a_i-\min_i a_i}.
\]
\end{enumerate}
At each walking place, an integral splitting identifies the local
action with that of
$\mathrm{SL}_2(F_{v_i})/\{\pm I_2\}$ and the endpoint stabilizers
of Lemma~\ref{exdiag:lattice-tree}.
\end{theorem}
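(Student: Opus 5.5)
The plan follows the Jordan--Livn\'e/Livn\'e framework for lattices in products of trees, specialized to dyadic places.

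\emph{Choice of arithmetic data.} Take a totally real field $F$ in which $2$ has $t$ distinct split (or suitably decomposed) places $v_i$ with residue fields $\F_{2^{r+a_i}}$. For instance, choose a totally real extension of $\Q$ in which $2$ has at least $t$ primes with the prescribed distinct residue degrees $r+a_i$; such extensions exist by standard global class-field or Krasner-type constructions. Choose a totally definite quaternion algebra $B/F$ ramified only at infinite places and possibly at auxiliary finite places away from the $v_i$, so that $B$ splits at every $v_i$; this is possible by parity of ramification when $[F:\Q]$ is suitably chosen. Fix a maximal order $\mathcal M$. The $S$-arithmetic group of norm-one units of $\mathcal M[1/v_1\cdots v_t]$ is then a cocompact lattice in $\prod_i\mathrm{SL}_2(F_{v_i})$, by definiteness at infinity. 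Put $\Gamma_\nu$ equal to the principal congruence subgroups for an ideal $\mathfrak n_\nu$ coprime to the $v_i$, with $\mathfrak n_\nu$ decreasing and eventually divisible by every ideal.

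\emph{Steps 1--2 and 4.} First, for $\mathfrak n$ deep enough, $\Gamma(\mathfrak n)$ is torsion-free. Since $B$ is definite, elements fixing a point of $\mathcal T$ lie in a compact open subgroup and hence in a finite group, so torsion-freeness gives freeness of the action. Passing to the subgroup acting trivially on the $\Z/2$-valued vertex parity in each factor, which has finite index and is contained in the congruence subgroups after one refinement, preserves directional parities. Finiteness of the quotient comes from cocompactness. Second, $\bigcap_\nu\Gamma_\nu=\{1\}$ together with discreteness implies that, for each $R$, only finitely many elements move some point of a fundamental domain by at most $2R$, and these eventually leave $\Gamma_\nu$. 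This gives injectivity on radius-$R$ balls, which proves item 2. Fourth, $|X_\nu(0)|=[\Gamma_1:\Gamma_\nu]\,|X_1(0)|\to\infty$. The stated inequalities are elementary: $2\sqrt q/(q+1)\le2/\sqrt q$, and $n_i/n_j\le q_i/q_j$ when $q_i\ge q_j$.

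\emph{Step 3: regularity and expansion.} Item 2 with $R=2$ gives the lower-cube isomorphisms and the intersection axiom of Definition~\ref{def:cubical-complex}, identifies upward stars, and gives \eqref{eq:cubical-regularity} with degrees $n_i=q_i+1$. The spectral bound for the parallel-face walk $M_{I,j}$ is the main point. The components $C_{I,j,c}$ are quotients of the form $\Gamma_\nu\backslash(T(F_{v_j})\times\text{(fixed faces in other factors)})$, so $M_{I,j}$ is the Hecke operator at $v_j$ acting on automorphic functions on $B^\times$ with level given by $\Gamma_\nu$ times the stabilizer of an $I$-face. By Jacquet--Langlands, nonabelian eigenfunctions correspond to cuspidal Hilbert modular forms of parallel weight $2$. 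Deligne's bound (Ramanujan, as established for Hilbert modular forms by Blasius) gives eigenvalues at most $2\sqrt{q_j}/(q_j+1)$ after normalization. One-dimensional automorphic representations factor through the reduced norm and give only the eigenvalues $\pm1$. Strong approximation for the norm-one group, which uses the fact that $B$ is split at $v_j$, shows that on each component only the trivial and sign characters occur, which yields connectedness. I expect the main obstacle here to be showing that these two characters exhaust the abelian spectrum on each $C_{I,j,c}$ uniformly in $\nu$, and that the face stabilizers (Iwahori-type at the $v_i$ with $i\in I$) do not introduce additional characters. I would handle this by restricting to norm-one principal congruence groups, where the norm characters are trivial on $\Gamma_\nu$ apart from parity.

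Finally, an integral splitting $\mathcal M\otimes\mathcal O_{v_i}\cong M_2(\mathcal O_{v_i})$ identifies the local action with $\mathrm{SL}_2(F_{v_i})/\{\pm I\}$ and the vertex stabilizers with $\mathrm{SL}_2(\mathcal O)$ and its conjugate, as in Lemma~\ref{exdiag:lattice-tree}.
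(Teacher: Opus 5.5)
Your proposal is correct and follows essentially the same route as the paper: Jordan--Livn\'e arithmetic data, torsion-free norm-one principal congruence groups, residual finiteness of a tower normal in a fixed torsion-free group to identify balls, Livn\'e's criterion combined with Blasius's Ramanujan theorem for the nontrivial spectrum, and strong approximation with $v_j$ omitted for connectivity of the parallel components. Two remarks. First, the determinant-one local action already preserves parities, so no passage to a further subgroup is needed. Second, the paper phrases the connectivity step as showing that each inactive-parity fiber is a single $\Gamma_\nu$-orbit of transverse tuples; that argument is uniform in $\nu$ and is unaffected by the Iwahori-type stabilizers you were worried about. The one slip is the radius. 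Identifying full upward stars and checking the intersection axiom requires ball identification of radius at least $t$ (the paper uses $2t$), not $R=2$, because a $t$-cube through a vertex reaches distance $t$ from it. Since your item~2 holds for every $R$, this is fixed immediately.
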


\subsection{Local trees and endpoint stabilizers}

Let $L$ be a nonarchimedean local field with normalized valuation
$v_L:L^\times\to\mathbb Z$, valuation ring $\mathcal O$, uniformizer
$\pi$, and residue field $k=\mathcal O/\pi\mathcal O$ of size $q$.
A lattice is a finitely generated spanning $\mathcal O$-submodule
of $L^2$. Since $\mathcal O$ is a discrete valuation ring, such a
lattice is free of rank two. The graph described above is a
$(q+1)$-regular tree
\cite[Definition~23.5.5 \& Proposition~23.5.8]{Voight2021}.
The neighbors of $[\Lambda]$ are indexed by the lines in
$\Lambda/\pi\Lambda$: the line $\ell$ gives its full inverse image
$\Lambda_\ell$ in $\Lambda$. These lattices are pairwise
nonhomothetic, since they all have the same determinant valuation
and a homothety between two of them must therefore be a unit.

The vertex type of $[g\mathcal O^2]$ is
$v_L(\det g)\bmod2$. A change of lattice basis changes the
determinant by a unit; a homothety changes its valuation by an
even integer. This type is consequently well-defined. Adjacent
vertices have opposite types, and left multiplication by
$g\in\mathrm{GL}_2(L)$ changes the type by
$v_L(\det g)\bmod2$.

\begin{lemma}\label{exdiag:lattice-tree}
Put
\[
 G=\mathrm{SL}_2(L)/\{\pm I_2\},\quad
 s=\begin{pmatrix}0&1\\\pi&0\end{pmatrix},\quad
 v_0=[\mathcal O^2],\quad v_1=s v_0,\quad e_*=\{v_0,v_1\}.
\]
The group $G$ preserves the types and acts transitively on
vertices of either fixed type and on edges. The stabilizers are
\[
 K_0=\mathrm{SL}_2(\mathcal O)/\{\pm I_2\},\qquad
 K_1=sK_0s^{-1},\qquad I=K_0\cap K_1.
\]
Thus the incident edges at $v_b$ are indexed by $K_b/I$,
and $[K_b:I]=q+1$ for $b\in\{0,1\}$. In $K_0$ coordinates,
$I$ consists of the determinant-one integral matrices whose
lower-left entry lies in $\pi\mathcal O$. These integral lifts
reduce onto the upper-triangular subgroup of $\mathrm{SL}_2(k)$. Conjugation by $s$ normalizes $I$
and exchanges the two diagonal residue entries.

If $q$ is even, reduction gives homomorphisms
\[
 K_0\longrightarrow\mathrm{SL}_2(k),\qquad
 K_1\xrightarrow{s^{-1}(\cdot)s}K_0
       \longrightarrow\mathrm{SL}_2(k),
\]
and both are surjective. Under either of these maps, $K_b/I$
identifies with $\mathbb P^1(k)$, with the distinguished coset
$I$ corresponding to $[1:0]$.
\end{lemma}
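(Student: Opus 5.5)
Our plan is to reduce every assertion to standard facts about $\mathrm{GL}_2(L)$ acting on lattices, using the neighbor description by lines of $\Lambda/\pi\Lambda$ recalled above.

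\textbf{Transitivity and stabilizers.} Type preservation is immediate from the type formula, since $\det g=1$. For transitivity on type-zero vertices, write $[g\mathcal O^2]$ with $v_L(\det g)$ even. Scaling by a power of $\pi$ makes $\det g$ a unit $u$, and right multiplication by $\operatorname{diag}(u^{-1},1)\in\mathrm{GL}_2(\mathcal O)$ then gives $g\in\mathrm{SL}_2(L)$. Type one follows by applying $s$: indeed $s\,\mathrm{SL}_2(L)s^{-1}=\mathrm{SL}_2(L)$, so $\mathrm{SL}_2(L)sv_0=s\,\mathrm{SL}_2(L)v_0$ is all of type one.

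The stabilizer of $v_0$ in $\mathrm{SL}_2(L)$ consists of the $g$ with $g\mathcal O^2=\pi^c\mathcal O^2$. Taking determinants forces $c=0$, so this stabilizer is $\mathrm{SL}_2(\mathcal O)$. Hence $K_0$ and $K_1=sK_0s^{-1}$ are as stated. Since edges join opposite types, an edge stabilizer fixes both endpoints and equals $I=K_0\cap K_1$. The neighbors of $v_0$ correspond to lines in $k^2$, and $\mathrm{SL}_2(\mathcal O)$ acts on these lines through reduction. Once surjectivity onto $\mathrm{SL}_2(k)$ is known, this action is transitive, and edge transitivity follows from vertex transitivity plus transitivity on $E(v_0)$. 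Note that $v_1=s\mathcal O^2=\mathcal O e_1+\pi\mathcal O e_2$ corresponds to the line $ke_1=[1:0]$.

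So $I$ is the preimage of the Borel subgroup: the integral determinant-one matrices with lower-left entry in $\pi\mathcal O$. They reduce onto the upper-triangular subgroup because upper-triangular elements of $\mathrm{SL}_2(k)$ lift to upper-triangular elements of $\mathrm{SL}_2(\mathcal O)$. A direct computation gives
\[
 s\begin{pmatrix}a&b\\\pi c&d\end{pmatrix}s^{-1}
 =\begin{pmatrix}d&c\\\pi b&a\end{pmatrix}.
\]
This shows that $s$ normalizes $I$ and swaps the diagonal residues. Then $[K_0:I]=|\mathbb P^1(k)|=q+1$, and conjugation gives the same for $K_1$.

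\textbf{Surjectivity and the identification for even $q$.} The key point is surjectivity of $\mathrm{SL}_2(\mathcal O)\to\mathrm{SL}_2(k)$. We prove it by lifting the elementary generators: the upper and lower unipotent matrices lift directly, and these generate $\mathrm{SL}_2(k)$. Alternatively, lift any matrix and correct its determinant, which is a unit, by a diagonal factor.

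When $q$ is even, $-I_2\equiv I_2$ modulo $\pi$. So the reduction map factors through the quotient by $\pm I_2$, which is the only reason the parity hypothesis enters. The second map is reduction composed with conjugation by $s^{-1}$, which carries $K_1$ onto $K_0$, so it is surjective as well.

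Under the first map, $K_0/I\cong\mathrm{SL}_2(k)/B\cong\mathbb P^1(k)$ via $g\mapsto g[1:0]$, and the coset $I$ goes to $[1:0]$. For $K_1$, conjugation sends $I$ to $s^{-1}Is=I$, so $K_1/I\cong K_0/I$ with $I\mapsto I$, giving the same identification.

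I expect no real obstacle in this argument. The only care needed is with the quotient by $\pm I_2$ and with checking that $s^{-1}Is=I$, not merely $sIs^{-1}=I$. Both follow from the displayed formula, since $s^2=\pi I_2$ is central in $\mathrm{GL}_2(L)$.
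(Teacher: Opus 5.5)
Your proof is correct and follows essentially the same route as the paper: the determinant-valuation arguments for transitivity and stabilizers, lifting elementary matrices for surjectivity, and the explicit conjugation by $s$ (with $s^2=\pi I_2$ central) for normalizing $I$ and swapping diagonal residues. The only minor difference is that you obtain $I$ as the preimage of the upper-triangular subgroup via the neighbor correspondence $v_1\leftrightarrow[1:0]$. The paper instead reads the condition $c\in\pi\mathcal O$ directly off the formula for $s^{-1}\begin{pmatrix}a&b\\c&d\end{pmatrix}s$.
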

\begin{proof}
For a vertex $[g\mathcal O^2]$ of type $b$, multiply $g$ by a
scalar power of $\pi$ so that $v_L(\det g)=b$; this does not
change its lattice class. Put
$u=\det(g)/\det(s^b)\in\mathcal O^\times$.
Then $g\operatorname{diag}(u^{-1},1)s^{-b}$ has determinant one
and maps $v_b$ to $[g\mathcal O^2]$. This proves the two
transitivity assertions for vertices.

If a determinant-one matrix stabilizes $[\mathcal O^2]$, it
maps $\mathcal O^2$ onto $c\mathcal O^2$ for a scalar $c$.
Determinants give $2v_L(c)=0$, so $c$ is a unit. Its stabilizer
is therefore $\mathrm{SL}_2(\mathcal O)$, and conjugation gives
the stabilizer of $v_1$. Reduction to $\mathrm{SL}_2(k)$ is
surjective because elementary matrices over $k$ lift to
determinant-one elementary matrices over $\mathcal O$.
The finite group is transitive on the lines of $k^2$. The
neighbor-line description proves edge transitivity.

For every matrix with entries in $L$,
\[
 s^{-1}\begin{pmatrix}a&b\\c&d\end{pmatrix}s
   =\begin{pmatrix}d&c/\pi\\\pi b&a\end{pmatrix}.
\]
The simultaneous integrality conditions are precisely
$a,b,d\in\mathcal O$ and $c\in\pi\mathcal O$, together with
determinant one. Their reductions have reciprocal diagonal
entries, which the displayed conjugation exchanges.
The equality $s^2=\pi I_2$ proves that conjugation by $s$
normalizes $I$. In characteristic two the scalar signs have
the same residue, so reduction factors through the stated
projective stabilizers. The line fixed by the upper-triangular
subgroup is $[1:0]$, giving the final coset description.
\end{proof}

\subsection{Norm-one arithmetic quotients}

Write $\mathcal O_F$ for the ring of integers of a number field
$F$, and $F_v,\mathcal O_v,k_v$ for its completion, valuation
ring, and residue field at a finite place $v$. A totally real
field is a number field all of whose complex embeddings have
real image. A quaternion algebra $B/F$ is totally definite
when $B\otimes_{F,\sigma}\mathbb R$ is Hamilton's quaternion
algebra for every real embedding $\sigma$. The reduced norm
is denoted by $\operatorname{nrd}$, and
\[
 H(F)=B^1(F)=\{x\in B:\operatorname{nrd}(x)=1\}.
\]
A finite place is split when $B_v\simeq M_2(F_v)$; otherwise
it is ramified. We use the standard local and global
quaternion-algebra conventions of~\cite[Chapters~10,14,23]{Voight2021}.

\begin{proposition}\label{exar:prescribed-arithmetic}
For positive integers $\ell_1,\ldots,\ell_t$, there exist a totally
real field $F$, distinct finite places $v_i$ with
$|k_{v_i}|=2^{\ell_i}$, and a totally definite quaternion algebra
$B/F$ split at every $v_i$ and ramified at a finite place.
One may choose $[F:\mathbb Q]$ to be any integer
$d\ge\sum_i\ell_i$. A maximal order $\mathcal M\subset B$ exists,
and the split identifications can be chosen to satisfy
$\mathcal M_{v_i}=M_2(\mathcal O_{v_i})$.
\end{proposition}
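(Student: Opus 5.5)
We need a totally real field $F$ of degree $d\ge\sum_i\ell_i$ with distinct places $v_i$ whose residue fields have size $2^{\ell_i}$, followed by a quaternion algebra with prescribed local behaviour. The plan is to build $F$ from local conditions at $2$ by weak approximation, and then to obtain $B$ from the Hasse principle together with the parity condition on ramification sets.

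\emph{Step 1: the field.} Choose a degree-$d$ étale $\mathbb Q_2$-algebra $A_2=\prod_{i}\mathbb Q_{2^{\ell_i}}\times A'$, where $\mathbb Q_{2^{\ell}}$ is the unramified extension of degree $\ell$ and $A'$ is any étale algebra of degree $d-\sum_i\ell_i$ (for instance $\mathbb Q_2^{d-\sum\ell_i}$). Also fix the real étale algebra $\mathbb R^d$.

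By Krasner's lemma together with weak approximation, take a monic $f\in\mathbb Z[x]$ of degree $d$ with the following properties:
\begin{itemize}
\item $f$ is $2$-adically close to a separable polynomial defining $A_2$;
\item $f$ is real-close to a polynomial with $d$ distinct real roots;
\item $f$ is $p$-adically close to an irreducible polynomial modulo some auxiliary prime $p\ne2$, which forces $f$ to be irreducible over $\mathbb Q$.
\end{itemize}

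Then $F=\mathbb Q[x]/(f)$ is totally real of degree $d$, and $F\otimes\mathbb Q_2\simeq A_2$. The factors $\mathbb Q_{2^{\ell_i}}$ give distinct places $v_i$ with $|k_{v_i}|=2^{\ell_i}$. The requirement $d\ge\sum\ell_i$ is exactly what makes room for these factors. This step is standard, and I expect it to be the only place where some care is needed: Krasner continuity of the local algebra under coefficient perturbation, and the simultaneous approximation at three places.

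\emph{Step 2: the quaternion algebra.} A quaternion algebra over $F$ is determined by a finite set $\Sigma$ of ramified places of even cardinality, and every such set occurs (Hasse principle and the reciprocity/parity theorem, \cite[Chapter~14]{Voight2021}). Take
\[
\Sigma=\{\text{all } d \text{ real places}\}\cup\{w\},
\]
together with one further finite place $w'$ if $d$ is odd, so that $|\Sigma|$ is even. Here $w,w'$ are finite places distinct from all $v_i$; there are infinitely many primes, so such places exist. Then $B$ is totally definite, ramified at a finite place, and split at every $v_i$.

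\emph{Step 3: the maximal order.} Maximal orders exist in any quaternion algebra over a number field: take any order, for example $\mathcal O_F$-span of a basis scaled to be integral, and enlarge it. An ascending chain of orders terminates, since orders have bounded discriminant.

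Locally at each split $v_i$, the completion $\mathcal M_{v_i}$ is a maximal order in $B_{v_i}\simeq M_2(F_{v_i})$. All maximal orders there are conjugate to $M_2(\mathcal O_{v_i})$ \cite[Chapter~10]{Voight2021}. Composing the chosen split identification with that conjugation gives $\mathcal M_{v_i}=M_2(\mathcal O_{v_i})$.
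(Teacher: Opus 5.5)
Your Step~2 has the parity rule reversed. With $\Sigma$ consisting of the $d$ real places and $w$, one has $|\Sigma|=d+1$, which is already even when $d$ is odd. Adding $w'$ in that case gives $d+2$, and leaving it out when $d$ is even gives $d+1$. So, as written, $|\Sigma|$ is always odd, and no quaternion algebra over $F$ has that ramification set. The fix is to adjoin $w'$ exactly when $d$ is even. Then the classification by even ramification sets \cite[Chapter~14]{Voight2021} gives a totally definite $B$, ramified at $w$ and split at every $v_i$, as you intend.

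Two smaller points:
\begin{itemize}
\item Weak approximation at $2$, $p$, $\infty$ produces a monic $f$ with rational, not integral, coefficients. This is harmless: closeness to a polynomial irreducible modulo $p$ already makes $f$ irreducible over $\mathbb Q_p$, hence over $\mathbb Q$.
\item In Step~3 you should cite that maximality of an $\mathcal O_F$-order is a local property. That is what makes $\mathcal M_{v_i}$ maximal in $B_{v_i}$ before you apply local conjugacy.
\end{itemize}

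With that correction your argument is complete, but it takes a different route from the paper's. The paper does not build $F$ or $B$ at all. It takes the following directly from the prescribed-local-data construction of Jordan and Livn\'e \cite{JordanLivne2000}:
\begin{itemize}
\item the field $F$ and the places $v_i$ over $2$ with residue degrees $\ell_i$;
\item the degree condition $d\ge\sum_i\ell_i$;
\item a totally definite algebra whose finite ramification avoids the $v_i$.
\end{itemize}
Only the maximal-order part is argued, exactly as in your Step~3: existence, maximality of completions, and conjugacy to $M_2(\mathcal O_{v_i})$ over a local PID.

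Your version replaces that citation by standard tools. Krasner plus weak approximation realizes the $2$-adic \'etale algebra $\prod_i\mathbb Q_{2^{\ell_i}}\times A'$ inside a totally real field, and the even-ramification-set classification produces $B$. This is self-contained and shows transparently why $d\ge\sum_i\ell_i$ is exactly the room needed. The paper's citation is shorter and keeps $F,B$ literally in the setting of the Jordan--Livn\'e and Livn\'e papers invoked later in Proposition~\ref{exar:JL}, although nothing in the statement itself depends on that.
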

\begin{proof}
In the prescribed-local-data construction of
Jordan and Livn\'e~\cite{JordanLivne2000}, take the
number of directions to be $t$, every rational prime to be $2$,
and the prescribed residue degrees to be $\ell_i$. Its degree
condition is $d\ge\sum_i\ell_i$, and the finite ramification
of its quaternion algebra avoids the prescribed places.
This gives the asserted field, places, and algebra.
A quaternion algebra over a characteristic-zero field is
separable; a maximal order therefore exists
\cite[Secition~10.4.2]{Voight2021}. Its completions are maximal orders
by~\cite[Section~23.2.2]{Voight2021}. At a split place the valuation
ring is a principal ideal domain, so
\cite[Corollary~10.5.5]{Voight2021}, with matrix size two,
conjugates that completed order to $M_2(\mathcal O_v)$.
Compose the splitting with this conjugation.
\end{proof}

Fix such $F,B,\mathcal M$ and a nonempty set
$S=\{v_1,\ldots,v_t\}$ of split finite places. For a nonzero
ideal $\mathfrak n\subseteq\mathcal O_F$ supported outside $S$, set
\[
 \mathcal M_S=\{x\in B:x_v\in\mathcal M_v
                     \text{ for all finite }v\notin S\},
\]
\begin{equation}\label{exar:principal-subgroups}
 \Gamma(\mathfrak n)=
 \{x\in H(F)\cap\mathcal M_S:
          x_v-1\in\mathfrak n\mathcal M_v
          \text{ for every }v\mid\mathfrak n\}.
\end{equation}
These are principal congruences on norm-one elements.
Their local inverses are integral: the standard involution
preserves orders and $x^{-1}=\overline x$ when
$\operatorname{nrd}(x)=1$. Consequently the displayed sets
are groups and are finite-index congruence subgroups of
$H(F)\cap\mathcal M_S$. The finite-index assertion follows
also directly from the reduction homomorphism into the finite
rings $\mathcal M_v/\mathfrak n\mathcal M_v$; details of the
local algebra and finiteness are given in
Appendix~\ref{app:arithmetic}.

\begin{lemma}\label{exar:strong-approximation}
Let $v_*$ be a split finite place. If $A$ is a finite set of
finite places disjoint from $\{v_*\}$ and
$U_v\subset H(F_v)$ is nonempty and open for every $v\in A$,
there is $x\in H(F)$ such that
\[
 x_v\in U_v\quad(v\in A),\qquad
 x_v\in\mathcal M_v\quad(v\notin A\cup\{v_*\},\ v\text{ finite}).
\]
\end{lemma}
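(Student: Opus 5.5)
This is strong approximation for the norm-one group $H=B^1$, with the split place $v_*$ serving as the noncompact place. I will apply the strong approximation theorem for quaternion algebras, as in Voight's Chapter~28.

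First, the hypothesis. $B$ is totally definite, so every archimedean place ramifies and $H(F_\infty)$ is compact. Since $v_*$ is split, $H(F_{v_*})\cong\mathrm{SL}_2(F_{v_*})$ is noncompact. Hence the set $\{v_*\}\cup\{\text{archimedean places}\}$ satisfies the hypothesis of strong approximation: it contains a place where $H$ is noncompact, and $H$ is simply connected and absolutely almost simple. The theorem then says that $H(F)$ is dense in $H(\mathbb A_F^{\{v_*\}\cup\infty})$, the restricted product over the finite places other than $v_*$.

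Next, the target set. Consider
\[
 \mathcal U=\prod_{v\in A}U_v\times\prod_{v\notin A\cup\{v_*\}}\bigl(H(F_v)\cap\mathcal M_v\bigr).
\]
I need $\mathcal U$ to be a nonempty open subset of that adelic group.
\begin{itemize}
\item Each factor $H(F_v)\cap\mathcal M_v=\mathcal M_v^1$ is a compact open subgroup and contains $1$, so it is nonempty.
\item For almost all $v$, $\mathcal M_v^1$ is exactly the standard maximal compact subgroup used to define the restricted product, so the product topology applies.
\item The factors $U_v$ for $v\in A$ are open and nonempty by hypothesis.
\end{itemize}
So $\mathcal U$ is a basic open set of the restricted product topology.

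Conclusion. Density gives some $x\in H(F)$ whose diagonal image lies in $\mathcal U$. Reading off the components, $x_v\in U_v$ for $v\in A$, and $x_v\in\mathcal M_v$ for every remaining finite $v\ne v_*$. This is exactly the statement of the lemma.

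The main obstacle is citing strong approximation in a form that really applies, rather than proving it. The appropriate reference is Voight Theorem~28.5.3 or Main Theorem~28.5.5 (Kneser–Platonov for $\mathrm{SL}_1$ of a quaternion algebra). Its Eichler-type condition, that some place in the exceptional set be split, is met by $v_*$. Two details need checking when invoking it. First, the exceptional set must include the archimedean places. Second, the finite set $A$ must be allowed to contain ramified places. The latter is harmless, because density holds in the full restricted product, ramified factors included.
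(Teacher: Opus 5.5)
Your proposal is correct and follows the paper's proof: both apply Voight's norm-one strong approximation (Theorem~28.5.3) with exceptional set consisting of the archimedean places together with $v_*$, whose splitness makes $B$ $\Sigma$-indefinite. Both then note that $\prod_{v\in A}U_v\times\prod_{v\notin A\cup\{v_*\}}(H(F_v)\cap\mathcal M_v)$ is a nonempty basic open set of the restricted product, so density of $H(F)$ produces the required $x$.
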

\begin{proof}
Apply norm-one strong approximation
\cite[Definition~28.5.1 \& Theorem~28.5.3]{Voight2021}
with omitted set $\Sigma$ equal to all infinite places together
with $v_*$. The split place $v_*$ makes $B$ $\Sigma$-indefinite,
which is the source hypothesis. The diagonal $H(F)$ is dense
in the restricted product away from $\Sigma$. The set
\[
 \prod_{v\in A}U_v\times
 \prod_{\substack{v\notin A\cup\Sigma\\v\text{ finite}}}
             (H(F_v)\cap\mathcal M_v)
\]
is a nonempty basic open set in that restricted product.
An element of $H(F)$ in this set satisfies exactly the required
conditions.
\end{proof}

\begin{proposition}\label{exar:finite-quotient}
For every level $\mathfrak n$ above, the action of
$\Gamma(\mathfrak n)$ on $\mathcal T=\prod_iT(F_{v_i})$
has finitely many vertex and cell orbits and finite cell
stabilizers. If $\Gamma(\mathfrak n)$ is torsion free, the
action is free on cells and the quotient map is a cubical
covering. All directional parities descend.
\end{proposition}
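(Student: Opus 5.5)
The plan is to treat $\Gamma=\Gamma(\mathfrak n)$ as a discrete cocompact subgroup of $G_S=\prod_i H(F_{v_i})$ acting on $\mathcal T$ through the fixed splittings $H(F_{v_i})\cong\mathrm{SL}_2(F_{v_i})$. Finiteness of the orbit sets and of the stabilizers then follows from a proper, cocompact action on a locally finite complex.

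\textbf{Discreteness and cocompactness.} For discreteness, I will embed $H(F)\cap\mathcal M_S$ diagonally. An element lying in a compact open neighbourhood $\prod_i\mathrm{SL}_2(\mathcal O_{v_i})$ is integral at every finite place. It also lies in the compact group $H(F\otimes\mathbb R)$, because $B$ is totally definite. Such elements form a discrete-and-compact, hence finite, subset of $\mathcal M$. This is the standard argument that norm-one units of an order in a totally definite algebra form a finite group.

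For cocompactness, I will invoke the finiteness of the class number, i.e.\ compactness of $H(F)\backslash H(\mathbb A_F)/H(F\otimes\mathbb R)$. Equivalently, I can cite the standard fact, from Voight Chapter~28 or the Jordan--Livné setup, that $S$-arithmetic norm-one groups of a definite algebra are cocompact lattices in $G_S$ when $S$ contains a split place. Passing to the finite-index subgroup $\Gamma(\mathfrak n)$ preserves both properties; the finite index is supplied by the reduction map of the appendix.

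\textbf{Finite orbits and stabilizers.} The cell stabilizers in $G_S$ are compact open, being products of vertex or edge stabilizers $K_b$ or $I$ from Lemma~\ref{exdiag:lattice-tree}. Intersecting a compact set with the discrete group $\Gamma$ gives finite stabilizers. Cocompactness plus local finiteness of $\mathcal T$ gives finitely many orbits: a compact fundamental set meets only finitely many cells.

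\textbf{Freeness when torsion free.} Since the stabilizers are finite, torsion-freeness makes them trivial. The action therefore has no cell fixed setwise, which is stronger than freeness on points. I also need that no element maps a cell to itself with a nontrivial symmetry. This follows because the setwise stabilizer of a cell is the finite stabilizer computed above.

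\textbf{Directional parities and the covering property.} The action is through $\mathrm{SL}_2$ in each factor, via the splitting, so determinants are $1$. By the type computation preceding Lemma~\ref{exdiag:lattice-tree}, each factor preserves its own vertex type. Hence the labels $\wt f_i\in\{0,1,\star\}$ are $\Gamma$-invariant and descend.

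For the covering property, freeness on cells implies that distinct cells in a star map to distinct cells. Then the quotient restricted to each star is injective, so the quotient map is a local isomorphism of cubical incidence complexes. The face-intersection axiom is not claimed here, consistent with the remark before Theorem~\ref{exar:tower}.

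\textbf{Main obstacle.} I expect the main obstacle to be justifying cocompactness cleanly with the right citation. My first attempt will be the adelic finiteness argument: $H(F)$ is cocompact in $H(\mathbb A_F)$ for anisotropic $H$, since definiteness makes $B$ a division algebra. I will then intersect with the open subgroup $\prod_{v\notin S}\mathcal M_v^1\times H(F\otimes\mathbb R)\times G_S$ to get cocompactness of $\Gamma$ in $G_S$.
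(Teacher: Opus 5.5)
Your proposal is correct and follows essentially the paper's route: compactness of $H(F)\backslash H(\mathbb A_F)$ for the anisotropic norm-one group, intersected with an open subgroup, gives finitely many orbits; discreteness of $\Gamma(\mathfrak n)$ against compact open cell stabilizers gives finite stabilizers, hence trivial ones when $\Gamma(\mathfrak n)$ is torsion free; and determinant one gives preservation of all directional parities. One small correction: injectivity on stars does not follow from freeness alone (a type-reversing translation along a path in a tree is free on cells yet identifies two edges at a vertex). It needs the parity preservation you already proved: if $c\ni x$ and $\gamma c\ni x$, then $\gamma x$ and $x$ are vertices of one cube with equal parity vectors, forcing $\gamma x=x$, and the paper uses parity in the same way.
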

The adelic compactness and double-coset proof is in
Appendix~\ref{app:arithmetic}, Lemma~\ref{exar:norm-one-compact}
and the proof following it. Parity preservation follows already
from determinant one in every local factor; hence no element
can invert an edge or permute distinct vertices of a fixed cube.

Choose an odd rational prime $p_0$ unramified in $F$ and below
no place in $S$ or in the finite ramification set of $B$.
Such primes exist because those exceptional sets are finite
\cite[Section~14.4.4 and Lemma~14.5.3]{Voight2021}.
Choose $\ell_0\mid p_0$ and put
\[
 \Gamma_0=\Gamma(\ell_0).
\]
Then $\ell_0$ is split and
$\operatorname{ord}_{\ell_0}(p_0)=1$.
Lemma~\ref{exar:free} in Appendix~\ref{app:arithmetic} proves
that $\Gamma_0$ is torsion free. In particular it acts freely,
has finite quotient, and excludes $-1$. Its homomorphism to
the product of the local projective groups is injective:
a quaternion acting trivially on a local tree is scalar,
and a scalar of reduced norm one is $1$ or $-1$.

\subsection{Parallel components and Ramanujan bounds}

For any torsion-free $\Gamma(\mathfrak n)$, use the quotient
cells to form the parallel-face multigraph $G_{I,j}$, for
$I\subseteq[t]$ and $j\notin I$. Its vertices are the type-$I$
cell orbits and its edges the type-$(I\cup\{j\})$ cell orbits.
An edge joins the two opposite type-$I$ facets. Multiplicities
are retained if distinct quotient cells give the same pair.

\begin{proposition}\label{exar:JL}
For every torsion-free level $\Gamma(\mathfrak n)$ and every
$I\subseteq[t]$, $j\notin I$, the unnormalized adjacency
matrix $A_{I,j}$ satisfies
\[
 \operatorname{Spec}(A_{I,j})
 \subseteq\{-(q_j+1),q_j+1\}\cup[-2\sqrt{q_j},2\sqrt{q_j}],
 \qquad q_j=|k_{v_j}|.
\]
The assertion holds at every deeper principal level supported
outside $S$.
\end{proposition}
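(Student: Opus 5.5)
We reduce the spectral statement to Ramanujan bounds for automorphic forms on the norm-one group, via an identification of the parallel-face graph with a quotient of a single tree direction. Fix $I$ and $j\notin I$. A type-$I$ cell of $\mathcal T$ is a choice of an edge in each factor $T(F_{v_i})$, $i\in I$, and a vertex in each other factor. A type-$(I\cup\{j\})$ cell additionally replaces the vertex in factor $j$ by an edge. Thus, upstairs, $G_{I,j}$ is a disjoint union of copies of the tree $T(F_{v_j})$. There is one copy for each choice of cells in the factors other than $j$. Its adjacency operator acts only in the $j$-th coordinate.

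On the quotient, functions on type-$I$ cell orbits are $\Gamma(\mathfrak n)$-invariant functions on $\mathcal T(I)$. The first step is therefore to write $\ell^2$ of the quotient vertices as $\Gamma(\mathfrak n)$-invariant vectors, equivalently as functions on $H(F)\backslash H(\mathbb A_f)/U$. Here $U$ is an open compact subgroup recording the level $\mathfrak n$ away from $S$, the Iwahori subgroups at the places $v_i$, $i\in I$, and maximal compacts at the remaining places of $S$. Under this identification, $A_{I,j}$ becomes the Hecke operator $T_{v_j}$ at the unramified place $v_j$.

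The bipartite eigenvalues $\pm(q_j+1)$ come from one-dimensional representations. Since $H$ is the norm-one group, strong approximation (Lemma~\ref{exar:strong-approximation}) shows that the only one-dimensional constituents are characters trivial on the image of $H(F_{v_j})$ up to the parity character. These give exactly the eigenvalues $\pm(q_j+1)$.

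The remaining eigenvectors generate infinite-dimensional automorphic representations $\pi$ of $B^1$. The plan is to lift them, via Jacquet--Langlands, to cuspidal automorphic representations of $\mathrm{GL}_2$ over the totally real field $F$. Those representations have parallel weight at least $2$, because $B$ is totally definite and the archimedean components are finite-dimensional. This is precisely the setting of Jordan--Livn\'e and Livn\'e. There, the Ramanujan--Petersson conjecture is known at the unramified place $v_j$ (Deligne, Blasius, Brylinski--Labesse). It gives $|\lambda_{v_j}(\pi)|\le 2\sqrt{q_j}$ for the normalized Hecke eigenvalue. One subtlety is that $U$ is Iwahori at the places $v_i$ with $i\in I$ rather than spherical. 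This only affects local components away from $v_j$, and the local component at $v_j$ is still spherical, so the bound at $v_j$ is unaffected. A second subtlety is the passage from $B^1$ to $B^\times$. Every irreducible constituent of $B^1$ automorphic functions occurs in the restriction of a $B^\times$-automorphic representation, and the Hecke eigenvalue at $v_j$ is unchanged up to a unitary character twist, which preserves absolute values.

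The main obstacle I expect is rigorously matching the combinatorial multigraph, with retained multiplicities and Iwahori level in the directions of $I$, to the Hecke operator on this adelic space. One must also ensure that the non-tempered spectrum consists only of characters. For the first point I would use freeness (Proposition~\ref{exar:finite-quotient}), which makes the quotient graph locally isomorphic to the tree, so that the adjacency operator agrees with $T_{v_j}$ entry by entry. For the second I would cite the Jordan--Livn\'e analysis directly rather than reprove it. The final sentence of the statement, about deeper principal levels, follows because every such level is again torsion free and of the same shape.
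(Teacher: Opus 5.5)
Your proposal follows essentially the same route as the paper: apply the Jordan--Livn\'e/Livn\'e Ramanujan criterion to the torsion-free norm-one $S$-arithmetic group, supply the Ramanujan--Petersson property for the relevant parallel-weight-two Hilbert cusp forms (the paper cites Blasius, which holds at every finite place regardless of the Iwahori level at the other walking places), identify the star operator with retained multiplicities with $A_{I,j}$, and treat deeper levels in the same way. The automorphic steps you sketch (Jacquet--Langlands, passage from $B^1$ to $B^\times$, sphericity at $v_j$) are exactly what the paper black-boxes by citation; the one imprecision is that $\mathrm{SL}_2(F_{v_j})$ has no nontrivial characters, so the eigenvalue $-(q_j+1)$ comes from the trivial representation spread over the two $j$-parity classes rather than from a ``parity character''.
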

\begin{proof}
Livn\'e's Ramanujan criterion~\cite{Livne2001} applies to
the norm-one $S$-arithmetic torsion-free congruence group
$\Gamma(\mathfrak n)$, with its number of directions equal
to $t$, its field and algebra equal to $F,B$, and its local
degrees equal to $q_i+1$. Its Ramanujan conclusion is conditional
on the Ramanujan--Petersson property for the indicated Hilbert
cusp spaces of parallel weight two.

For every cuspidal holomorphic representation in these spaces,
use the unitary normalization of
\cite{Blasius2006}.
Its weight at each of the $d=[F:\mathbb Q]$ real places is two.
All weights are at least two and have the same parity.
Thus~\cite[Theorem~1]{Blasius2006} gives
Ramanujan at every finite place, including each $v_i$.
It therefore supplies the conditional premise of Livn\'e's
theorem. This application imposes no condition that the rational
prime under one walking place be coprime to the other levels.

It remains to identify the operators. Use the graph and star
operator of~\cite{JordanLivne2000}
with its $g=t$, $r_j=q_j+1$, and trivial complex local system.
Orient each active cube direction from parity zero to parity
one. Its graph vertices are exactly the type-$I$ quotient cells;
its directed edges are the type-$(I\cup\{j\})$ cells with both
choices of orientation in direction $j$. The star operator is
therefore
\[
 (S_{j,I}z)(f)=
 \sum_{\substack{e:\ \operatorname{terminal}(e)=f}}
             z(\operatorname{origin}(e))=(A_{I,j}z)(f),
\]
including all edge multiplicities. The source's Ramanujan condition
bounds every eigenvalue except $\pm r_j$ by $2\sqrt{r_j-1}$.
This is precisely the asserted spectral inclusion.
A deeper principal level remains a torsion-free congruence
subgroup of the same norm-one group, so the same application
applies separately at that level.
\end{proof}

\begin{lemma}\label{exar:components}
For every principal subgroup $\Gamma\le\Gamma_0$, the connected
components of $G_{I,j}$ are exactly the fibers of the inactive
parities outside $I\cup\{j\}$. There are
$2^{t-1-|I|}$ such components.
\end{lemma}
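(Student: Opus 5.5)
The plan is to count the fibers, show each fiber is a union of components, and then prove each fiber is connected by lifting to the product tree and using strong approximation.

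\emph{Invariance and counting.} Each step of the parallel-face walk preserves the coordinates outside $I\cup\{j\}$, as recorded after \eqref{eq:parallel-face-walk}. The directional parities descend to the quotient by Proposition~\ref{exar:finite-quotient}, because determinant one in every local factor preserves vertex types. Hence every component of $G_{I,j}$ lies in a single fiber $C_{I,j,c}$. There are $2^{t-1-|I|}$ choices of $c$. Each fiber is nonempty, since upstairs every parity pattern occurs on type-$I$ cells of $\mathcal T$. It therefore suffices to show that each fiber is connected.

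\emph{Connectivity upstairs.} Fix a type-$I$ cell $\tilde f=\prod_i\tilde f_i$ of $\mathcal T$, with $\tilde f_i$ a vertex of $T(F_{v_i})$ for $i\notin I$ and an edge for $i\in I$. Moving along a type-$(I\cup\{j\})$ cell changes only the $j$-th factor, replacing the vertex $\tilde f_j$ by a tree neighbour. The component of $\tilde f$ in the upstairs parallel graph is therefore $\{\tilde f_i\}_{i\ne j}\times T(F_{v_j})$. Its image in the quotient consists of all type-$I$ cells that have a lift agreeing with $\tilde f$ in every factor except $j$.

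\emph{Connectivity downstairs.} Let $\tilde g$ be any type-$I$ cell with the same parities outside $I\cup\{j\}$. I need $\gamma\in\Gamma$ with $\gamma\tilde g_i=\tilde f_i$ for all $i\ne j$; the $j$-th factor is free, since the whole tree $T(F_{v_j})$ is reached. I will apply Lemma~\ref{exar:strong-approximation} with $v_*=v_j$ and $A=\{v_i:i\ne j\}\cup\{v\mid\mathfrak n\}$. For $i\ne j$, take $U_{v_i}$ to be the open coset of elements carrying $\tilde g_i$ to $\tilde f_i$. This set is nonempty by Lemma~\ref{exdiag:lattice-tree}: $\mathrm{SL}_2$ is transitive on vertices of a fixed type and on edges, and the parities match for $i\notin I\cup\{j\}$. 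For $i\in I$ edges are handled by edge transitivity, with orientation fixed by parity preservation. At places dividing $\mathfrak n$, take $U_v=1+\mathfrak n\mathcal M_v$ intersected with norm one, an open neighbourhood of $1$. The element $x$ produced is integral at all other finite places outside $v_j$, so it lies in $H(F)\cap\mathcal M_S$ with the congruence condition, i.e.\ in $\Gamma=\Gamma(\mathfrak n)$. It carries $\tilde g$ into the upstairs component of $\tilde f$, so the quotient fiber is connected.

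\emph{Main obstacle.} The delicate point is the bookkeeping in the strong approximation step. First, $\Gamma(\mathfrak n)$ is defined by integrality outside $S$ together with congruence, so the places $v_i$ for $i\ne j$ must be placed in $A$ with open cosets, while $v_j$ is left unconstrained. Second, the local identification at each $v_i$ must be with $\mathrm{SL}_2(F_{v_i})$ via the integral splitting of Proposition~\ref{exar:prescribed-arithmetic}. Finally, for torsion-free $\Gamma$ the quotient graph is the honest orbit graph, so connectivity of the image follows from the upstairs connectivity together with the translating element.
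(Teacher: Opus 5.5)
Your proposal is correct and is essentially the paper's proof. You identify the upstairs parallel components as $\{\tilde f_i\}_{i\ne j}\times T(F_{v_j})$, observe that $\Gamma$ preserves the inactive parities, and then use Lemma~\ref{exar:strong-approximation} with omitted place $v_j$, imposing open stabilizer cosets at the other walking places and the principal congruences at the level places, to map any transverse tuple with the same inactive parities to any other.
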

\begin{proof}
An upstairs parallel component is the $j$th tree together with
a fixed transverse tuple: one edge in every factor indexed by
$I$, and one vertex in each factor outside $I\cup\{j\}$.
Consequently the downstairs components are the $\Gamma$-orbits
of these transverse tuples. Indeed, the image of one upstairs
component is connected; an identification between two such
images comes from a group element taking one transverse tuple
to the other.

The inactive parities are preserved by $\Gamma$. Conversely,
take two tuples having the same inactive parities.
For each $i\ne j$, Lemma~\ref{exdiag:lattice-tree} gives an
element of $\mathrm{SL}_2(F_{v_i})$ taking the first specified
edge or vertex to the second. The set of all such elements is
a nonempty open stabilizer coset. Apply
Lemma~\ref{exar:strong-approximation} with omitted place $v_j$.
At the other walking places impose these cosets, and at the
finitely many level places impose the principal identity
congruences defining $\Gamma$. The resulting rational norm-one
element is integral at every remaining finite place outside
$S$, hence belongs to $\Gamma$, and maps the first tuple
to the second. Thus each inactive-parity fiber is one orbit.
Every pattern occurs in the product of trees, proving the count.
\end{proof}

\begin{corollary}\label{exar:face-counts}
On each parallel component, the normalized adjacency has simple
eigenvalues $1,-1$ and all remaining eigenvalues have absolute
value at most $2\sqrt{q_j}/(q_j+1)$.
Once the quotient is a cubical complex, it is
$\lambda_0$-expanding, with $\lambda_0$ as in
Theorem~\ref{exar:tower}, and each bipartite half of a parallel component
has $V\prod_{i\in I}n_i$ vertices.
\end{corollary}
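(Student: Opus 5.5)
The plan is to derive everything from Proposition~\ref{exar:JL} and Lemma~\ref{exar:components}, after fixing the relation between the unnormalized adjacency $A_{I,j}$ and the walk operator. Since each type-$I$ cell lies in exactly $n_j=q_j+1$ cells of type $I\cup\{j\}$ (upward stars are copies of product-tree stars, and the action is free), we have $M_{I,j}=n_j^{-1}A_{I,j}$. This holds with multiplicities retained. Proposition~\ref{exar:JL} then places the spectrum of $M_{I,j}$ in $\{-1,1\}\cup[-2\sqrt{q_j}/(q_j+1),\,2\sqrt{q_j}/(q_j+1)]$. Because $A_{I,j}$ is block diagonal over connected components, the same inclusion holds on each component $C$.

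Next I handle simplicity of $\pm1$. Each component is a connected $n_j$-regular graph, by Lemma~\ref{exar:components}. It is bipartite by the parity in direction $j$, since adjacent type-$I$ cells are opposite facets and so have opposite $j$-bits. For a connected regular graph, eigenvalue $1$ of the normalized adjacency is simple with eigenvector $\mathbf 1_C$. For a connected bipartite graph, $-1$ is simple with eigenvector $\sigma_C$. Each has multiplicity exactly one because the largest eigenvalue of a connected nonnegative symmetric matrix is simple (Perron--Frobenius), and $-1$ corresponds to it under conjugation by $\sigma_C$.

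Since $M_C$ is symmetric, the orthogonal complement of $\operatorname{span}\{\mathbf 1_C,\sigma_C\}$ is spanned by the remaining eigenvectors. Therefore $\|M_Cz\|_2\le \frac{2\sqrt{q_j}}{q_j+1}\|z\|_2\le\lambda_0\|z\|_2$ there. Lemma~\ref{exar:components} identifies these components with the sets $C_{I,j,c}$ and shows they are connected. This gives Definition~\ref{def:expanding-cubical-complex} once the quotient is known to be a regular cubical complex.

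For the count, once the quotient is an $(n_1,\ldots,n_t)$-regular cubical complex, apply \eqref{eq:parity-face-count} with the parity vector fixed on $[t]\setminus I$. The half of $C_{I,j,c}$ with $j$-bit $\epsilon$ is exactly the set of type-$I$ faces with the prescribed coordinates $(c,\epsilon)$, so it has $VD_I=V\prod_{i\in I}n_i$ elements.

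The main obstacle is the identification $M_{I,j}=n_j^{-1}A_{I,j}$ with correct multiplicities, together with regularity. Here I will rely on the freeness and torsion-freeness from Proposition~\ref{exar:finite-quotient}: the upward star of each quotient cell is a bijective image of the upstairs star, so each type-$I$ orbit has exactly $q_j+1$ incident $(I\cup\{j\})$-orbits, counted with multiplicity, including loops identified by distinct cells. The edge case $I\cup\{j\}=[t]$ needs no separate argument.
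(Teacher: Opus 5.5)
Your proposal is correct and follows essentially the paper's own argument: the components from Lemma~\ref{exar:components} are connected, regular and bipartite; the extreme eigenvalues $\pm1$ are simple; Proposition~\ref{exar:JL}, divided by $q_j+1$, bounds the orthogonal complement of $\mathbf 1_C,\sigma_C$; and the half sizes come from \eqref{eq:parity-face-count} with the $j$th bit fixed. The only differences are that you invoke Perron--Frobenius where the paper uses the vanishing of $\sum_{\{x,y\}\in E}(z(x)-z(y))^2$, and that you spell out $M_{I,j}=n_j^{-1}A_{I,j}$, which the paper leaves implicit; your remark about loops is moot, since opposite facets have opposite $j$-parity and $\Gamma$ preserves parity.
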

\begin{proof}
The components of Lemma~\ref{exar:components} are connected,
regular, bipartite graphs. For a connected regular graph the
constant eigenfunction spans the adjacency eigenspace of its
degree; multiplying by the bipartite sign identifies this
space with the eigenspace of the negative degree. For example,
the equality
$\sum_{\{x,y\}\in E}(z(x)-z(y))^2=0$ for a top eigenvector
forces constancy along every edge and hence everywhere.
The remaining eigenvalues satisfy Proposition~\ref{exar:JL}.
Division by $q_j+1$ proves the norm bound on the orthogonal
complement of these two vectors. The half-size is the
parity-face count of Lemma~\ref{lem:cubical-face-counts},
with the $j$th parity fixed.
\end{proof}

\subsection{Congruence towers and complete upward stars}

Choose a finite place $w\notin S\cup\{\ell_0\}$. For $\nu\ge1$ set
\[
 \Gamma_\nu=\Gamma(\ell_0w^\nu),\qquad
 X_\nu=\Gamma_\nu\backslash\mathcal T .
\]
These are normal finite-index subgroups of $\Gamma_0$.
Their intersection is trivial: if $x$ belongs to every
$\Gamma_\nu$, then $x-1$ belongs to every $w^\nu\mathcal M_w$,
whose intersection is zero, and the map $B\to B_w$ is injective.
The indices $[\Gamma_0:\Gamma_\nu]$ tend to infinity. Otherwise
the nondecreasing indices stabilize; nested subgroups of the
same finite index are equal, giving a finite-index trivial
subgroup of $\Gamma_0$. This would make $\Gamma_0$ finite.
But a finite group cannot have finitely many orbits on the
infinite vertex set of $\mathcal T$. Proposition
\ref{exar:finite-quotient} gives those finite orbits, a contradiction.
Each $\Gamma_0$-cell orbit splits into exactly
$[\Gamma_0:\Gamma_\nu]$ cell orbits because the stabilizers are
trivial. Thus all parity masses grow by this index.

A full radius-$R$ cubical ball means the vertex graph ball
together with every cube all of whose vertices belong to that
ball. The following elementary covering argument will also
supply the whole-star condition used in Section~\ref{sec:equivariance}.

\begin{lemma}\label{exar:injectivity}
For every integer $R\ge0$, all sufficiently deep maps
$\mathcal T\to X_\nu$ identify all full radius-$R$ cubical balls.
For all sufficiently deep levels the quotient face poset
satisfies the intersection axiom, is $(n_1,\ldots,n_t)$-regular,
and the restriction to the entire upward star of every face
is a poset isomorphism onto the entire upward star of its image.
\end{lemma}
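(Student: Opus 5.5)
The plan is the standard residual-finiteness (injectivity-radius) argument, combined with the freeness of the $\Gamma_0$-action.

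\emph{Step 1: a finite set of forbidden elements.} Fix $R$. Since $\Gamma_0$ acts freely with finitely many cell orbits (Proposition~\ref{exar:finite-quotient}), choose a finite set $\Phi$ of orbit representatives of vertices of $\mathcal T$. For a vertex $x$, let $B_{R'}(x)$ be its vertex ball of some radius $R'\geq 2R+t+1$, taken large enough to contain every cube meeting a radius-$R$ ball together with its full upward star. Put
\[
 \Sigma=\{\gamma\in\Gamma_0\setminus\{1\}:\ \gamma B_{R'}(x)\cap B_{R'}(x)\neq\varnothing\ \text{for some } x\in\Phi\}.
\]
The product tree is locally finite and the action of $\Gamma_0$ is properly discontinuous, with trivial cell stabilizers. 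Hence $\Sigma$ is finite.

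\emph{Step 2: deep levels avoid $\Sigma$.} The $\Gamma_\nu$ are nested with trivial intersection, as established just above. So for each $\gamma\in\Sigma$ there is some $\nu_\gamma$ with $\gamma\notin\Gamma_\nu$ for all $\nu\geq\nu_\gamma$. Taking the maximum over the finite set $\Sigma$, for all deep $\nu$ we get $\Gamma_\nu\cap\Sigma=\varnothing$.

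By normality of $\Gamma_\nu$ in $\Gamma_0$, the same conclusion transfers from the representatives in $\Phi$ to every vertex $y=g x$ of $\mathcal T$: any $\gamma\in\Gamma_\nu$ with $\gamma B_{R'}(y)\cap B_{R'}(y)\ne\varnothing$ gives $g^{-1}\gamma g\in\Gamma_\nu\cap\Sigma\cup\{1\}$, hence $\gamma=1$. Consequently the quotient map is injective on every $B_{R'}(y)$ and on all cells contained in it. It is surjective onto the corresponding quotient ball, because paths lift through a covering map. It also preserves incidences, since the map is a cubical covering.

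Injectivity alone does not yet give an isomorphism of full balls. A quotient cube with all vertices in the image ball might a priori lift to a cube whose vertices are not all in one upstairs ball. This is excluded by lifting the cube at one vertex and using injectivity on the larger ball $B_{R'}$.

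\emph{Step 3: the combinatorial properties.} Take $R\geq t$.
\begin{itemize}
\item \emph{Regularity and full upward stars.} The upward star of any face lies in the radius-$t$ ball around one of its vertices. The previous step therefore shows it maps isomorphically onto the upward star of its image. Regularity and the counts \eqref{eq:cubical-regularity} then hold because they hold in the product of trees.
\item \emph{Lower-cube isomorphism.} Each lower cube embeds, for the same reason.
\item \emph{Intersection axiom.} Given two quotient faces $f,g$ with a common vertex, lift both to the ball around a lift of that vertex. Injectivity on that ball identifies $X_{\le f}\cap X_{\le g}$ with the corresponding upstairs intersection. In a product of trees that intersection is the lower set of a face, or empty, factorwise.
\end{itemize}

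\emph{Main obstacle.} The delicate point is the intersection axiom. Two quotient faces could share vertices that lift to different vertices upstairs, making the quotient intersection disconnected, or not a single face. To handle this, I would choose $R'$ large enough, at least $2t$, that any two faces sharing a vertex lie in one ball of radius $t$ around that vertex. Every common lower face then also lies in this ball, so a common vertex coming from a second, distant lift is excluded by injectivity.
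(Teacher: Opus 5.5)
Your proposal is correct and follows essentially the same route as the paper. Both arguments take a finite set of nonidentity elements $\gamma$ with $d(x_a,\gamma x_a)\le 2D$ at finitely many $\Gamma_0$-orbit representatives (finite by local finiteness and freeness), remove it at deep levels by residuality, and transfer the exclusion to every center by normality. Both then deduce vertex injectivity on balls, obtain full balls by lifting a cube through a path-lifted vertex and identifying each of its vertices with its own path lift inside the larger ball, and read the intersection axiom, regularity, and full upward stars off radius-$2t$ balls in the product of trees.
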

\begin{proof}
Take representatives $x_1,\ldots,x_s$ of the finitely many
$\Gamma_0$-vertex orbits. For a fixed $D$, each set
\[
 \{\gamma\in\Gamma_0:
                 d(x_a,\gamma x_a)\le2D\}
\]
is finite: the ball is finite, and freeness allows at most
one group element carrying $x_a$ to each vertex.
Residuality excludes all its nonidentity elements from
$\Gamma_\nu$ for sufficiently large $\nu$. Normality in
$\Gamma_0$ makes this exclusion valid at every center.

If two vertices $y,z$ in a radius-$D$ upstairs ball centered
at $x$ have the same quotient image, then $z=\gamma y$ with
$\gamma\in\Gamma_\nu$, and
$d(x,\gamma x)\le d(x,z)+d(z,\gamma x)\le2D$.
Thus $\gamma=1$, proving vertex injectivity. Every quotient
path of length at most $R$ lifts uniquely from a chosen center,
since the action is free and has no inversions. Apply vertex
injectivity with $D=R+t$. A cube whose quotient vertices lie
in the radius-$R$ ball can be lifted through a chosen lifted
vertex. Every vertex of this lift is within distance $R+t$
of the center and has the same image as the corresponding
path lift. Injectivity identifies them. The same argument
shows uniqueness of the lifted cube, by comparing two lifts
through one vertex. Thus full radius-$R$ balls agree.

Now choose a level where these balls agree for radius $2t$.
Two quotient faces having a common vertex lie, with all their
subfaces, in one such product-tree ball. In a product of trees
their common subfaces are the subfaces of their coordinatewise
intersection, or there are none. This proves the intersection
axiom downstairs. The same ball identifies any list of
distinct-color edges at a vertex with such a list upstairs;
their unique product cube descends, and every competing cube
would also lift in this ball. Hence the quotient is $(n_1,\ldots,n_t)$-regular.
Finally, every coface of a given face lies in the radius-$t$
ball about any vertex of that face. All these cofaces and all
their inclusions are therefore identified simultaneously with
the upstairs upward star.
\end{proof}

\begin{proof}[Proof of Theorem~\ref{exar:tower}]
Apply Proposition~\ref{exar:prescribed-arithmetic} with
$\ell_i=r+a_i$, and choose the integral splittings there.
The preceding congruence construction gives the finite free
quotients, fixed local actions, and growing parity masses.
Lemma~\ref{exar:injectivity} gives all the asserted local
and face-poset properties after one fixed initial segment.
Lemma~\ref{exar:components} and
Corollary~\ref{exar:face-counts} give the exact component
condition and the required spectral norm.
Since $q_i\ge2$, the inequality $2\sqrt{q_i}<q_i+1$ gives
$\lambda_0<1$, and
$2\sqrt{q_i}/(q_i+1)\le2/\sqrt{q_-}$.
For integers $b\ge a$,
\[
 \frac{2^{r+b}+1}{2^{r+a}+1}\le2^{b-a}.
\]
Taking the largest and smallest offsets proves the final ratio
bound. All choices except $\nu$ have been fixed throughout.
\end{proof}

\section{Weighted incidence on unequal norm-one grids}\label{exinc:chapter}
Throughout this section, $k=\overline{\mathbb F}_2$. Varieties are reduced
closed subschemes of projective space; an integral variety is irreducible.
Degrees are projective degrees. For an integer $a\geq1$, an integral
projective curve $C\subseteq\mathbb P^a_k$, and a finite set
$Z\subseteq\mathbb P^a(k)$, put
\[
 \mu_Z(C)=\sum_{z\in Z\cap C(k)}\mult_zC,
\]
where $\mult_zC$ is the Hilbert--Samuel multiplicity of
$\mathcal O_{C,z}$.

\subsection{Norm-one grids and polynomial closure}
\begin{definition}\label{exinc:grid-definition}
\label{exinc:closure-definition}\label{exinc:point-stratum}
Let $e\geq1$ and let $m_1,\ldots,m_e$ be distinct positive integers. Set
\begin{equation}\label{exinc:grid-parameters}
 q_i=2^{m_i},\quad n_i=q_i+1,\quad S_i=\{z\in k^\times:z^{n_i}=1\},
 \quad\Omega=\prod_{i=1}^eS_i,\quad m=\min_i n_i.
\end{equation}
For an integer $\ell\geq0$, let $\mathcal P_\ell$ be the space of
polynomials in $k[X_1,\ldots,X_e]$ having degree at most $\ell$ in each
variable. For $A\subseteq\Omega$, define
\begin{equation}\label{exinc:box-closure}
 I_\ell(A)=\{F\in\mathcal P_\ell:F|_A=0\},\qquad
 \operatorname{cl}_\ell(A)=\{z\in\Omega:F(z)=0\text{ for all }F\in I_\ell(A)\}.
\end{equation}
The set $A$ is $\ell$-closed if $\operatorname{cl}_\ell(A)=A$.
An $i$-line is a set obtained by fixing all coordinates except $i$.
The point stratum is
\[
 A^\circ=\{z\in A:\text{no coordinate line through }z\text{ is contained in }A\}.
\]
For $I\subseteq[e]$ and $a\in\prod_{i\notin I}S_i$, the coordinate slice
is $A_{I,a}=\{z\in\prod_{i\in I}S_i:(z,a)\in A\}$, with factors in their
original order.
\end{definition}
The polynomial $X^{q_i+1}-1$ has derivative $X^{q_i}$, hence
$|S_i|=n_i$. If $\max_iq_i/\min_iq_i\leq K$, then
\begin{equation}\label{exinc:grid-cardinality}
 m\leq n_i\leq Km,\qquad |\Omega|=\prod_i n_i.
\end{equation}
These inequalities hold on every nonempty coordinate subtuple.

\begin{lemma}\label{exinc:closure-slices}
\label{exinc:closed-coordinate-slice}\label{exinc:sliced-point-stratum}
\label{exinc:zero-closure}\label{exinc:one-dimensional-closure}
If $A$ is $\ell$-closed, then every nonempty-coordinate slice $A_{I,a}$
is $\ell$-closed and $(A^\circ)_{I,a}\subseteq(A_{I,a})^\circ$.
A $0$-closed set is empty or the entire grid. In one coordinate a proper
$\ell$-closed set has at most $\ell$ points.
\end{lemma}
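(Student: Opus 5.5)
The plan is to verify the three assertions directly from the definition of $\operatorname{cl}_\ell$ in \eqref{exinc:box-closure}. Each reduces to building suitable polynomials in $\mathcal P_\ell$.

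\emph{Slices.} Fix $I\neq\varnothing$ and $a\in\prod_{i\notin I}S_i$. Let $z\in\prod_{i\in I}S_i$ lie in $\operatorname{cl}_\ell(A_{I,a})$, and suppose for contradiction that $(z,a)\notin A$. Since $A$ is $\ell$-closed, there is $F\in\mathcal P_\ell$ vanishing on $A$ with $F(z,a)\neq0$. Substituting the fixed coordinates $a$ into $F$ gives a polynomial $G$ in the variables indexed by $I$. This $G$ still has degree at most $\ell$ in each variable, it vanishes on $A_{I,a}$, and $G(z)\neq0$. That contradicts $z\in\operatorname{cl}_\ell(A_{I,a})$, so $\operatorname{cl}_\ell(A_{I,a})\subseteq A_{I,a}$. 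The reverse inclusion is automatic, so the slice is $\ell$-closed.

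\emph{Point strata.} For the inclusion $(A^\circ)_{I,a}\subseteq(A_{I,a})^\circ$, take $z$ with $(z,a)\in A^\circ$. A coordinate line of the slice through $z$ in a direction $i\in I$ is exactly the slice of the $i$-line of $\Omega$ through $(z,a)$. If that line were contained in $A_{I,a}$, then the full $i$-line would be contained in $A$, which contradicts $(z,a)\in A^\circ$.

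\emph{The cases $\ell=0$ and one coordinate.} $\mathcal P_0$ consists of the constants. If $A\neq\varnothing$, then $I_0(A)=\{0\}$ and the closure is all of $\Omega$. If $A=\varnothing$, the constant $1$ lies in $I_0(A)$ and the closure is empty. So a $0$-closed set is empty or the whole grid.

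In one coordinate, suppose $A\subseteq S_1$ is $\ell$-closed and $|A|>\ell$. Any nonzero polynomial of degree at most $\ell$ has at most $\ell$ roots, so $I_\ell(A)=0$. Then $\operatorname{cl}_\ell(A)=S_1$, hence $A=S_1$ is not proper. Therefore a proper $\ell$-closed set has at most $\ell$ points.

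None of these steps is a real obstacle. The only point needing care is checking that substituting constants keeps each individual degree at most $\ell$, which is immediate.
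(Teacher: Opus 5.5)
Your proposal is correct and follows essentially the same route as the paper. For the slice, you substitute the fixed coordinates into a separating polynomial. For the point stratum, you observe that a full line in the slice is a full line in the grid. You then separate by constants when $\ell=0$, and in one coordinate you use the root bound for a nonzero polynomial of degree at most $\ell$; your contrapositive form ($|A|>\ell$ forces $I_\ell(A)=0$) is equivalent to the paper's phrasing.
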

\begin{proof}
For $z\notin A_{I,a}$, closedness supplies $F\in I_\ell(A)$ with
$F(z,a)\ne0$. Substituting the fixed coordinates produces a separating
polynomial on the slice with the same individual-degree bounds.
A full line in the slice is a full line in the original grid, proving
the point-stratum inclusion. Constants separate the empty set and cannot
separate a nonempty set. Finally, a proper one-coordinate closed set is
contained in the roots of a nonzero polynomial of degree at most $\ell$.
\end{proof}

Fix $t\geq1$ and $K\geq1$. For $1\leq e\leq t$, set
\begin{equation}\label{exinc:incidence-constants}
 \gamma_e=2^{-\max\{e-2,0\}},\qquad
 B_t(K)=(t-1)(tK+1)+2,\qquad P_1(t,K)=1.
\end{equation}
For $t\geq2$, set $P_2(t,K)=K+1$. For $3\leq e\leq t$, define
\begin{equation}\label{exinc:constant-recurrence}
 \begin{aligned}
 S_e(t,K)&=K^eP_{e-1}(t,K),& L_e(t,K)&=4e!S_e(t,K),\\
 H_e(t,K)&=e^e(1+L_e(t,K)),& P_e(t,K)&=B_t(K)H_e(t,K).
 \end{aligned}
\end{equation}
The symbol $S_e(t,K)$ is a number, distinct from the set $S_i$.

\begin{theorem}\label{exinc:main}
\label{exinc:weighted-grid-incidence}
Fix $t\geq1$, $K\geq1$, and a norm-one grid of dimension $1\leq e\leq t$
with $\max_iq_i/\min_iq_i\leq K$. If $0<\theta\leq1/2$,
$0\leq\ell\leq\theta m$ is an integer, $A\subseteq\Omega$ is $\ell$-closed,
and $C\subseteq\mathbb P^e_k$ is an integral projective curve, then
\begin{equation}\label{exinc:main-bound}
 \mu_{A^\circ}(C)\leq P_e(t,K)\theta^{\gamma_e}m\deg C.
\end{equation}
The grid is embedded by $z\mapsto[1:z_1:\cdots:z_e]$. The same statement
holds on every nonempty subtuple with its actual dimension and minimum
length, retaining the fixed parameters $t,K$.
\end{theorem}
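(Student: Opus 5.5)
The plan is to induct on $e$, splitting according to how $C$ sits relative to the coordinates and to the ideal $I_\ell(A)$; the subtuple clause follows because the hypotheses restrict to subtuples with their own $m$ and the same $t,K$. For $e=1$ a proper $\ell$-closed set has at most $\ell\leq\theta m$ points by Lemma~\ref{exinc:closure-slices}. An integral curve in $\mathbb P^1$ is all of $\mathbb P^1$, with multiplicity one at each point. If $A=\Omega$ then $A^\circ=\varnothing$. Otherwise $\mu_{A^\circ}(C)\leq\ell\leq\theta m$, which is \eqref{exinc:main-bound} with $P_1=1$, $\gamma_1=1$.

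For general $e$, I would first reduce degenerate curves. If some coordinate $z_i$ is constant on $C$, then $C$ lies in a hyperplane $z_i=a$. Its grid points lie in the slice $A_{[e]\setminus\{i\},a}$, which is $\ell$-closed with $(A^\circ)_{I,a}\subseteq(A_{I,a})^\circ$ by Lemma~\ref{exinc:closure-slices}. Induction at dimension $e-1$, whose minimum length is at most $Km$, then gives a bound absorbed by $S_e=K^eP_{e-1}$. If $C$ is a coordinate line, it is not contained in $A$ at any point of $A^\circ$. Its slice is therefore proper and closed, so it contributes at most $\ell$.

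For a curve varying in at least two coordinates, I split on $I_\ell(A)$. If some $F\in I_\ell(A)$ does not vanish identically on $C$, Bezout applies. Homogenized, $F$ has degree at most $e\ell$ and vanishes on all of $A\supseteq A^\circ$, so $\mu_{A^\circ}(C)\leq e\ell\deg C\leq e\theta m\deg C$, with multiplicities counted by intersection multiplicity $\geq\mult_z C$. The remaining case is $C\subseteq V(I_\ell(A))$, so that $C(k)\cap\Omega\subseteq\operatorname{cl}_\ell(A)=A$. Here I use the norm-one symmetry. On $S_i$, $z\mapsto z^{q_i}$ is inversion, while for $j\neq i$ the map $z\mapsto z^{q_i}$ permutes $S_j$. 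Composing the coordinatewise $q_i$-th power with inversion in coordinate $i$ and an inverse coefficient Frobenius yields a transform $\tau$ with these properties:
\begin{itemize}
\item it preserves $\Omega$;
\item it sends integral curves to integral curves;
\item it fixes the $i$-th coordinate function while rescaling the other coordinate degrees by $q_i/q_j\neq1$.
\end{itemize}
Since $C$ varies in coordinates $i$ and $j$ with $m_i\neq m_j$, the projection degrees of $\tau C$ and $C$ differ, so $\tau C\neq C$. The points of $A^\circ\cap C$ are then counted against $\tau C$ or against the union of coordinate lines through them. Each $z\in A^\circ$ has, in every direction $i$, an $i$-line not contained in $A$, whose closed slice has at most $\ell$ points.

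The main obstacle is this last case: making the Bezout count against the transformed curve linear in $m\deg C$ rather than quadratic. The naive bound $\deg(C)\deg(\tau C)$ carries a factor $q_i$. I would first handle low degree, $\deg C\leq\sqrt{\theta}\,m$ roughly, by Bezout with $\tau C$. For high degree I would use the trivial bound $\mu_\Omega(C)\leq n_1\deg C\leq Km\deg C$, from intersecting with $z_1^{n_1}=1$, combined with a counting argument along coordinate lines. This costs the exponent loss $\gamma_e=2^{-(e-2)}$: each level of the induction takes a square root in $\theta$. The factors $e^e$, $4e!$ and $B_t(K)$ in $H_e,L_e,P_e$ absorb the choice of the pair $(i,j)$, the multiplicity comparisons, and the number of slices. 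Balancing this threshold so that the recurrence \eqref{exinc:constant-recurrence} closes is where I expect the real work to lie.
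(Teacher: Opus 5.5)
Your reductions (the case $e=1$, constant-coordinate slices, coordinate lines, and the B\'ezout exit when some $F\in I_\ell(A)$ is nonzero on $C$) agree with the paper. Your Frobenius--inversion transform is the paper's $T$, provided $i$ is taken to be the coordinate with the largest $q_i$, so that the rescalings $q_i/q_j$ are integers at most $K$. Note that what you actually use is that $T$ fixes the grid trace pointwise, $T(V)(k)\cap\Omega=V(k)\cap\Omega$, not merely that it preserves $\Omega$.

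The genuine gap is the remaining case, which is the heart of the theorem. For $e\geq3$, the condition $C\subseteq V(I_\ell(A))$ gives no bound on $\deg C$, because that variety can contain surfaces carrying curves of any degree. So your high-degree regime cannot be avoided. There the only estimate you have, $\mu_\Omega(C)\leq n_1\deg C\leq Km\deg C$, contains no power of $\theta$. The fact that every coordinate line meets $A^\circ$ in at most $\ell$ points does not control how many of the at most $\deg C$ points of a section $C\cap\{z_1=\alpha\}$ lie in $A^\circ$. So ``counting along coordinate lines'' has nothing to work with, and no choice of threshold makes the recurrence close.

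The paper instead uses a \emph{linear} substitute for the quadratic bound $\deg C\cdot\deg C'$. If $C\neq C'$ are both expected components of $e-1$ hypersurfaces of degree at most $B$, then $\mu_Z(C)\leq(e-1)B(\deg C+\deg C')+2$ for every $Z\subseteq C\cap C'$. This is Lemma~\ref{exag:normalization-intersection}, proved via normalization, arithmetic genus, and the Chardin--Philippon bound of Lemma~\ref{exag:expected-component-hilbert}. Since $\deg C'\leq eK\deg C$, it suffices to produce such hypersurfaces with $B=O(m\theta^{\gamma_e})$, all vanishing on both $C$ and $C'$. This is where the induction really enters:
\begin{enumerate}
\item Start from $E_1=F$. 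Take each current component $V$ of dimension $h$ that contains $C$ or $C'$, and put $j=e-h+1$.
\item Cut $V$ by $h-1$ families of grid hyperplanes and apply the induction hypothesis to the resulting curves in coordinate slices. This gives $|A^\circ\cap V|\leq S_e\theta^{\gamma_j}m^h\deg V$.
\item The Hilbert function of $V$ exceeds this bound in some degree $b\leq cB_c+L_em\theta^{\gamma_j/h}$. Hence there is $P_V$ of degree $b$ vanishing on $A^\circ\cap V\supseteq A^\circ\cap C$ but not on $V$.
\item If $P_V$ is nonzero on $C$, or on $C'$ (then use $\mathcal T(P_V)$), weighted B\'ezout exits with a linear bound.
\item Otherwise a generic combination of the $P_V$ becomes $E_{c+1}$, which cuts every tracked component properly.
\end{enumerate}
The exponent $\gamma_e$ comes from these $h$th roots, not from iterated square roots of a degree threshold.

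Even for $e=2$, B\'ezout against $\tau C$ only gives about $4K\theta m\deg C$. The stated $P_2=K+1$ needs the bidegree count $(c+1)ab$ on $\mathbb P^1\times\mathbb P^1$ from Lemma~\ref{exinc:pair-norm-incidence}.
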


\subsection{Hilbert growth and weighted intersections}
For $a\geq1$ and a nonempty projective scheme $Y\subseteq\mathbb P^a_k$, set
\[
 H_Y(b)=\dim_k(k[X_0,\ldots,X_a]/I_Y)_b.
\]
If $Y$ has dimension $h$,
its Hilbert polynomial has leading term $(\deg Y)b^h/h!$; for a curve it
is $(\deg Y)b+1-p_a(Y)$, defining its arithmetic genus
\cite[I, Theorem 7.5 and Proposition 7.6; IV, \S1]{Hartshorne1977}.
For an integral curve $C$, its finite normalization
$\nu:\widetilde C\to C$ is a nonsingular projective curve
\cite[I, Theorem 6.2A; II, Exercise 3.8; III, Exercise 5.8]{Hartshorne1977}.
At a branch $w\in\nu^{-1}(z)$, write $v_w$ for its normalized discrete
valuation, with $v_w(0)=+\infty$, and for an ideal $J\subseteq\mathcal O_{C,z}$ put
$v_w(J)=\min_{f\in J}v_w(f)$. The formulas used below are
\begin{equation}\label{exag:branch-formulas}
 \mult_zC=\sum_{w\mid z}v_w(\mathfrak m_z),\qquad
 \delta_z(C)=\length_k((\nu_*\mathcal O_{\widetilde C})_z/\mathcal O_{C,z}),
 \qquad p_a(C)=g(\widetilde C)+\sum_z\delta_z(C),
\end{equation}
where $g(\widetilde C)=\dim_kH^1(\widetilde C,\mathcal O_{\widetilde C})$.
The multiplicity formula is the normalization formula for Samuel
multiplicity \cite[Examples 1.2.3, 4.3.4, and 4.3.6]{Fulton1998}; the genus
formula follows from the normalization exact sequence
\cite[IV, Exercise 1.8]{Hartshorne1977}.

\begin{lemma}\label{exag:external-intersection}
\label{exag:weighted-bezout}
Let $V\subseteq\mathbb P^a_k$ be integral of dimension $h\geq1$ and let
$F$ be a homogeneous polynomial of degree $b\geq1$ not vanishing on $V$.
Every irreducible component of $V\cap V_+(F)$ has dimension $h-1$, and
\[
 \sum_{W\in\operatorname{Irr}(V\cap V_+(F))}\deg W\leq b\deg V.
\]
If $V=C$ is a curve and $Z\subseteq C(k)\cap V_+(F)$, then
$\mu_Z(C)\leq b\deg C$. The same bound holds for an affine polynomial
of total degree at most $b$, using its homogenization.
\end{lemma}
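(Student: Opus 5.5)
The first assertion is the standard dimension and degree statement for hypersurface sections, so the plan is to reduce everything to Hartshorne's intersection theory. Since $V$ is integral and $F$ does not vanish on $V$, the section $V\cap V_+(F)$ is cut out in the integral homogeneous coordinate ring $S(V)$ by a nonzerodivisor. Krull's principal ideal theorem (Hartshorne I, Exercise 1.8 / Theorem 7.2 in projective form) then gives that every irreducible component has dimension exactly $h-1$. For the degree bound, I would invoke the projective intersection theorem (Hartshorne I, Theorem 7.7):
\[
 \sum_W i(V,V_+(F);W)\deg W=b\deg V,
\]
with intersection multiplicities $i(V,V_+(F);W)\ge1$. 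Discarding the multiplicities gives the displayed inequality. Alternatively, I can compare Hilbert polynomials: the exact sequence $0\to S(V)(-b)\xrightarrow{F}S(V)\to S(V)/(F)\to0$ shows that the scheme $V\cap V_+(F)$ has degree $b\deg V$. The reduced components then carry lengths at least one, which yields the bound.

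For the curve case, each component $W$ is a point $z\in Z$, and I need $\mult_zC\le i(C,V_+(F);z)$. I would compute locally. The intersection multiplicity equals $\length_k\mathcal O_{C,z}/(f)$, where $f$ is a local equation of $F$ divided by a nonvanishing linear form. By the normalization, this length equals $\sum_{w\mid z}v_w(f)$. That identity is the standard fact that for a one-dimensional Cohen--Macaulay local domain and a nonzerodivisor $f$, the length of $\mathcal O/f$ equals the length of $\widetilde{\mathcal O}/f\widetilde{\mathcal O}$, because $\widetilde{\mathcal O}/\mathcal O$ has finite length and multiplication by $f$ preserves it. Since $f\in\mathfrak m_z$ when $z\in V_+(F)$, each $v_w(f)\ge v_w(\mathfrak m_z)$. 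Summing over branches and using \eqref{exag:branch-formulas} gives $\mult_zC\le i(C,V_+(F);z)$. Summing over $z\in Z$ then gives $\mu_Z(C)\le\sum_z i(C,V_+(F);z)=b\deg C$.

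For the affine version, I would homogenize with respect to $X_0$ to degree exactly $b$, multiplying by a power of $X_0$ if the total degree is smaller. The resulting form vanishes at the embedded points $[1:z]$ exactly when the affine polynomial does. If it vanishes identically on $C$, the hypothesis fails; the statement implicitly assumes non-vanishing on $C$, and I would state that assumption explicitly. The main obstacle is the comparison $\mult_z\le$ local intersection length through the branch valuations, and that is settled by the length identity for the normalization.
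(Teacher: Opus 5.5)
Your proposal is correct and follows essentially the same route as the paper: Hartshorne's Theorem I.7.7 with the positive multiplicities dropped gives the first assertion, and for a curve you sum the branch inequality $v_w(f)\geq v_w(\mathfrak m_z)$ via the normalization formula for $\mult_zC$. The only difference is how the total $b\deg C$ is obtained. The paper reads it off directly as the degree of the pulled-back divisor of $F$ on $\widetilde C$. You instead get it from the local lengths $\length_k\mathcal O_{C,z}/(f)$ in Theorem I.7.7 together with the identity $\length\mathcal O_{C,z}/(f)=\length\widetilde{\mathcal O}/f\widetilde{\mathcal O}$, which is a valid and slightly more explicit justification of the same count.
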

\begin{proof}
The proper-section theorem, with scheme multiplicities, gives the first
assertion and total degree $b\deg V$
\cite[I, Theorem 7.7]{Hartshorne1977}. All component multiplicities are
positive, so dropping them gives the displayed inequality. For a curve,
the pullback of the section $F$ to its normalization has zero divisor of
degree $b\deg C$. At every $z\in Z$, its local equation belongs to
$\mathfrak m_z$, so its total branch order is at least $\mult_zC$ by
\eqref{exag:branch-formulas}. Summing proves the weighted assertion.
\end{proof}

For $1\leq c\leq a-1$, an \emph{expected component} of $c$ homogeneous
equations in $\mathbb P^a$ means an actual irreducible component of
dimension $a-c$; it does not mean an arbitrary subvariety of that dimension.

\begin{lemma}\label{exag:expected-component-hilbert}
\label{exag:expected-hilbert}\label{exag:expected-genus}
Let $a\geq2$, $1\leq c\leq a-1$, and $B\geq1$ be integers. Let
$E_1,\ldots,E_c\in k[X_0,\ldots,X_a]$ be nonzero homogeneous forms of
degrees between $1$ and $B$. If $Y$ is a nonempty reduced union of
expected components of their common zero locus, then, for every $b\geq cB$,
\begin{equation}\label{exag:expected-hilbert-bound}
 H_Y(b)\geq\deg Y\binom{b-cB+a-c}{a-c}.
\end{equation}
For $c=a-1$ this implies
$p_a(Y)\leq((a-1)B-1)\deg Y+1$.
\end{lemma}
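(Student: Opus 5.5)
Our plan is to reduce Lemma~\ref{exag:expected-component-hilbert} to a complete-intersection computation by linkage-free means: choose the equations so that they form a regular sequence locally near $Y$, and compare Hilbert functions. First, replace $E_1,\ldots,E_c$ by general forms of the common degree $B$ in the ideal they generate, obtained by multiplying each $E_i$ by general forms of degree $B-\deg E_i$ and taking general linear combinations. Since $k$ is infinite, a prime-avoidance argument applied one step at a time keeps every component of $Y$ inside the zero locus while forcing each new form to avoid every associated prime of the previous partial intersection that is not contained in some component of $Y$. The outcome is a complete intersection $W=V_+(G_1,\ldots,G_c)$ of pure dimension $a-c$ with $Y\subseteq W$ and every component of $Y$ a component of $W$. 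Pure dimensionality means the $G_i$ form a regular sequence in the polynomial ring, so $H_W(b)=\sum_{j}(-1)^j\binom cj\binom{b-jB+a}{a}$. We should double-check that the genericity is compatible with $Y$ remaining a union of components of $W$; this holds because $Y$ consists of expected components of the original system, and $W$ contains the original zero locus.

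The key inequality is then $H_Y(b)\ge \deg Y\binom{b-cB+a-c}{a-c}$. We intend to prove it by a generic linear projection, which is more robust than linkage. Choose general linear forms $L_0,\ldots,L_{a-c}$. The projection from the complementary linear space restricts to a finite map $\pi\colon Y\to\mathbb P^{a-c}$ of degree $\deg Y$. Degree-$b$ forms in $k[L_0,\ldots,L_{a-c}]$ then inject into $(S/I_Y)_b$, but this gives only $\binom{b+a-c}{a-c}$, so we need the multiplicity $\deg Y$. To get it, choose a general point and consider the fiber structure. Since $Y$ is a union of components of the complete intersection $W$, the module $S/I_W$ is Cohen--Macaulay and free over $R=k[L_0,\ldots,L_{a-c}]$, with generators in degrees at most $c(B-1)\le cB$ by the Koszul computation. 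The quotient $S/I_Y$ is a torsion-free $R$-module of rank $\deg Y$, being reduced and equidimensional. Hence it contains a free $R$-submodule of rank $\deg Y$, and we claim its generators can be taken from images of the free generators of $S/I_W$, which have degree at most $cB$. Since $S/I_W\to S/I_Y$ is surjective and the target has rank $\deg Y$, some $\deg Y$ of the free basis elements map to $R$-independent elements of degree $\le cB$. Consequently $H_Y(b)\ge\sum_{r=1}^{\deg Y}\binom{b-d_r+a-c}{a-c}\ge \deg Y\binom{b-cB+a-c}{a-c}$ for $b\ge cB$. The main obstacle, and the step to verify carefully, is the choice of $L$ as a Noether normalization of $W$ for which the Cohen--Macaulay freeness holds over an infinite field. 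It is standard: use general linear forms and the fact that a regular sequence makes the quotient Cohen--Macaulay.

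For the genus consequence, take $c=a-1$. The bound reads $H_Y(b)\ge \deg Y\,(b-(a-1)B+1)$ for all large $b$. Comparing with the Hilbert polynomial $(\deg Y)b+1-p_a(Y)$ gives $1-p_a(Y)\ge \deg Y(1-(a-1)B)$, which is the claimed $p_a(Y)\le((a-1)B-1)\deg Y+1$.
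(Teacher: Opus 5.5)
Your first step fails in the generality in which the lemma is stated and used. By the paper's convention, an expected component is an actual component of $V_+(E_1,\ldots,E_c)$ of dimension $a-c$. The common zero locus may also have components of larger dimension, and the paper needs this case: in the proof of Theorem~\ref{exinc:main} only the components through $C$ or $C'$ are kept at dimension $h$, while unrelated components of $X_c$ are unconstrained.

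Let $Z$ be such an excess component. Every form in $(E_1,\ldots,E_c)$ vanishes on $Z$. So any $W=V_+(G_1,\ldots,G_c)$ with the $G_i$ in that ideal contains $Z$ and cannot be pure of dimension $a-c$. Prime avoidance cannot prevent this. A minimal prime of $(G_1,\ldots,G_j)$ whose zero set lies in $Z$ contains all of $(E_1,\ldots,E_c)$, so no element of that ideal avoids it. Your remark that $W$ contains the original zero locus is exactly what breaks the argument.

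Already $E_1=X_0X_1$, $E_2=X_0X_2$ in $\mathbb P^3$ shows this. The line $X_1=X_2=0$ is an expected component, but the plane $X_0=0$ is also a component. Here one could cancel the common factor, but excess components of codimension at least two offer no such shortcut.

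Once $W$ is not pure, $S/(G_1,\ldots,G_c)$ has dimension larger than $a-c+1$. It is then not finite over $R=k[L_0,\ldots,L_{a-c}]$, and the count of free generators in degrees at most $c(B-1)$ is lost. Choosing the $G_i$ in $I_Y$ instead does not obviously help, because separating $Y$ from an excess component can require forms of degree larger than $B$. For instance, take $E_i=X_0P_i+X_1Q_i$ in $\mathbb P^4$ with $c=3$. The residual curve's ideal also needs the minors $P_iQ_j-P_jQ_i$, which have degree $2B-2$.

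What is missing is a mechanism that discards excess components while keeping the degree shift $\sum_i(\deg E_i-1)\le c(B-1)$. The paper gets this from Chardin--Philippon's Corollary~3 (with its erratum). Its only hypothesis is $I^{\langle c\rangle}\subseteq I_Y$ with $\operatorname{codim}I_Y=c$, where $I^{\langle c\rangle}$ is the intersection of the isolated primary components of $(E_1,\ldots,E_c)$ of codimension $c$. Excess minimal primes never enter.

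Your Noether-normalization argument is correct, and classical, when $V_+(E_1,\ldots,E_c)$ is pure of dimension $a-c$. In that case you may use the $E_i$ themselves: $S/(E_1,\ldots,E_c)$ is free over $R$ with generators in degrees at most $\sum_i(\deg E_i-1)$. This recovers exactly the cited bound $H_Y(b)\geq\deg Y\binom{b+a-\sum_i\deg E_i}{a-c}$. Your deduction of the genus bound from \eqref{exag:expected-hilbert-bound} is also fine. But that special case proves neither the lemma as stated nor the form in which it is used in Section~\ref{exinc:chapter}.
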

\begin{proof}
Apply \cite[Corollary 3]{ChardinPhilippon1999}, with the erratum
\cite{ChardinPhilippon2002}, to $S=k[X_0,\ldots,X_a]$,
$I=(E_1,\ldots,E_c)$, and $J=I_Y$. In the source notation the projective
dimension is $a$, the codimension is $c$, and the listed degrees are
$d_1\geq\cdots\geq d_c$, where $d_i\leq B$. Its required inclusion holds:
if $\mathscr P_c$ is the set of codimension-$c$ minimal primes of $I$,
and $\mathscr Q\subseteq\mathscr P_c$ selects the components of $Y$, then
\[
 I^{\langle c\rangle}=\bigcap_{\mathfrak p\in\mathscr P_c}Q_{\mathfrak p}
 \subseteq\bigcap_{\mathfrak p\in\mathscr Q}\mathfrak p=J,
 \qquad\operatorname{codim}J=c.
\]
Here $Q_{\mathfrak p}$ are the isolated primary factors, and
$Q_{\mathfrak p}\subseteq\mathfrak p$ suffices; no reducedness along the
components is required. The cited result gives
\[
 b>\sum_i d_i-c\quad\Longrightarrow\quad
 H_Y(b)\geq\deg Y\binom{b+a-\sum_i d_i}{a-c}.
\]
For $b\geq cB$ this implies \eqref{exag:expected-hilbert-bound} by
monotonicity of the binomial coefficient. Corollaire 3 is unaffected by
the erratum's correction to the regularity definition and proof of
Proposition 2. For $c=a-1$, compare the resulting lower bound
$(\deg Y)(b-cB+1)$ with the eventual Hilbert polynomial
$(\deg Y)b+1-p_a(Y)$.
\end{proof}

\begin{lemma}\label{exag:normalization-intersection}
\label{exag:weighted-union-intersection}
Let $C,C'\subseteq\mathbb P^a_k$ be distinct integral curves, $a\geq2$,
and let $Z\subseteq C(k)\cap C'(k)$. Then
\begin{equation}\label{exag:normalization-union-bound}
 \mu_Z(C)\leq p_a(C\cup C')-p_a(C')-g(\widetilde C)+1.
\end{equation}
If both curves are expected components of $a-1$ equations of degrees
between $1$ and $B$, then
\begin{equation}\label{exag:expected-weighted-intersection}
 \mu_Z(C)\leq(a-1)B(\deg C+\deg C')+2.
\end{equation}
The union is reduced; the intersection is scheme-theoretic.
\end{lemma}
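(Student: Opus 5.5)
Since $Z\subseteq C\cap C'$ and $C\ne C'$, we have $\mu_Z(C)\leq\mu_{C\cap C'}(C)$, so we may take $Z=C(k)\cap C'(k)$. We then compute through normalizations. Put $Y=C\cup C'$ with its reduced structure. The normalization of $Y$ is $\widetilde C\sqcup\widetilde C'$, and we compare $\mathcal O_Y$ with $\mathcal O_C\times\mathcal O_{C'}$. The exact sequence
\[
 0\to\mathcal O_Y\to\mathcal O_C\oplus\mathcal O_{C'}\to\mathcal Q\to0
\]
has $\mathcal Q$ supported on $C\cap C'$, with $\mathcal Q_z=\mathcal O_{C,z}/(I_C+I_{C'})_z$ the local ring of the scheme-theoretic intersection. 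Euler characteristics give
\[
 p_a(Y)=p_a(C)+p_a(C')-1+\sum_z\length\mathcal Q_z.
\]
By \eqref{exag:branch-formulas}, $p_a(C)\geq g(\widetilde C)$. It therefore suffices to show $\length\mathcal Q_z\geq\mult_zC$ at each $z\in Z$. Indeed,
\[
 p_a(Y)-p_a(C')-g(\widetilde C)+1\geq\sum_z\length\mathcal Q_z,
\]
and this gives \eqref{exag:normalization-union-bound}.

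The key local step is the inequality $\length(\mathcal O_{C,z}/J)\geq\mult_zC$, where $J=I_{C'}\mathcal O_{C,z}$ is an $\mathfrak m_z$-primary ideal. For a one-dimensional Cohen--Macaulay local ring (here reduced of dimension one) and an $\mathfrak m$-primary ideal $J$, we have $\length(\mathcal O/J)\geq e(J)$ for the Samuel multiplicity, since equality holds for a parameter ideal and $J$ contains a parameter element $x$. Using a minimal reduction, one gets $\length(\mathcal O/J)\ge\length(\mathcal O/x\mathcal O)-\text{(correction)}$. The cleanest route avoids the correction term: pass to an infinite residue field (already $k=\overline{\mathbb F}_2$), pick a superficial element $x\in J$ generating a reduction, and use $e(J)\geq e(\mathfrak m)=\mult_zC$ together with $\length(\mathcal O/J)\geq\length(\mathcal O/J)$ computed on branches: $\length(\nu_*\mathcal O/J\nu_*\mathcal O)=\sum_w v_w(J)\geq\sum_w v_w(\mathfrak m)=\mult_zC$. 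This is the main obstacle, because $\length(\mathcal O/J)$ and $\length(\widetilde{\mathcal O}/J\widetilde{\mathcal O})$ differ. For the invertible element $x$ of $\widetilde{\mathcal O}$-rank computation, I would use $\length(\mathcal O/x\mathcal O)=\length(\widetilde{\mathcal O}/x\widetilde{\mathcal O})$, which holds because multiplication by a nonzerodivisor preserves the finite-length quotient $\widetilde{\mathcal O}/\mathcal O$. With $J\supseteq$ a reduction $x\mathcal O$ one then has $e(J)=\length(\mathcal O/x\mathcal O)=\sum_wv_w(x)\geq\mult_zC$. The inequality $\length(\mathcal O/J)\geq e(J)$ fails in general, however. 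If the direct approach stalls, I would instead re-derive the bound with $\sum_z\delta_z$ absorbed: apply the sequence on normalizations, where $p_a(C)=g(\widetilde C)+\sum\delta_z$, so the slack $\delta_z(C)$ is available. The needed local claim then becomes
\[
 \length(\mathcal O/J)+\delta_z(C)\geq\length(\widetilde{\mathcal O}/J\widetilde{\mathcal O})=\sum_wv_w(J)\geq\mult_zC,
\]
which follows from the chain $J\widetilde{\mathcal O}\supseteq J$ and $\length(\widetilde{\mathcal O}/J)=\length(\widetilde{\mathcal O}/\mathcal O)+\length(\mathcal O/J)$. Thus I would keep $p_a(C)$ exactly rather than bounding it by $g(\widetilde C)$, which yields the first inequality cleanly.

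For \eqref{exag:expected-weighted-intersection}, apply Lemma~\ref{exag:expected-hilbert} with $c=a-1$ to $Y=C\cup C'$, a reduced union of expected components. This gives $p_a(Y)\leq((a-1)B-1)(\deg C+\deg C')+1$. Bound $-p_a(C')$ using $p_a(C')\geq g(\widetilde C')\geq0$ from \eqref{exag:branch-formulas}, and drop $-g(\widetilde C)\leq0$. We obtain
\[
 \mu_Z(C)\leq((a-1)B-1)(\deg C+\deg C')+2\leq(a-1)B(\deg C+\deg C')+2.
\]
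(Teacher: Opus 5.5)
Your final route is correct and is essentially the paper's proof: keep $p_a(C)=g(\widetilde C)+\sum_z\delta_z(C)$ exactly, prove $\mult_zC\le\sum_wv_w(J)=\length(\widetilde{\mathcal O}/J\widetilde{\mathcal O})\le\length(\mathcal O_{C,z}/J)+\delta_z(C)$ via the surjection $\widetilde{\mathcal O}/J\twoheadrightarrow\widetilde{\mathcal O}/J\widetilde{\mathcal O}$, then combine the intersection exact sequence with the expected-component Hilbert bound. Delete the exploratory detour, though: the reduction to $\length\mathcal Q_z\ge\mult_zC$ is false, not merely hard (a line through a node of a plane curve, transverse to that plane, gives length $1$ against multiplicity $2$), and in a one-dimensional Cohen--Macaulay local ring one always has $\length(\mathcal O/J)\le e(J)$, so the $\delta_z(C)$ slack is essential rather than a fallback.
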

\begin{proof}
The intersection is finite because the curves are distinct. At $z$ put
$A=\mathcal O_{C,z}$, let $J\subset A$ be the ideal induced by $C'$,
and put $\widetilde A=(\nu_*\mathcal O_{\widetilde C})_z$. The cokernel
of $A/J\to\widetilde A/J\widetilde A$ is a quotient of
$\widetilde A/A$. Thus
\[
 \sum_{w\mid z}v_w(J)=\length_k(\widetilde A/J\widetilde A)
 \leq\length_k(A/J)+\delta_z(C).
\]
Since $J\subseteq\mathfrak m_z$, \eqref{exag:branch-formulas} gives
$\mult_zC\leq\length_k(A/J)+\delta_z(C)$. Sum over $Z$, then include
the remaining nonnegative terms, obtaining
\[
 \mu_Z(C)\leq\length(C\cap C')+\sum_z\delta_z(C).
\]
The exact sequence
$0\to\mathcal O_{C\cup C'}\to\mathcal O_C\oplus\mathcal O_{C'}
\to\mathcal O_{C\cap C'}\to0$, or its Hilbert polynomials, gives
\[
 \length(C\cap C')=p_a(C\cup C')-p_a(C)-p_a(C')+1.
\]
Substitution and \eqref{exag:branch-formulas} prove the first bound.
For the second, Lemma~\ref{exag:expected-component-hilbert} applied to
$Y=C\cup C'$ gives
$p_a(Y)\leq((a-1)B-1)(\deg C+\deg C')+1$.
Both $p_a(C')$ and $g(\widetilde C)$ are nonnegative by
\eqref{exag:branch-formulas}, proving the stated weaker bound.
\end{proof}

\subsection{The two-variable Frobenius bound}
\begin{lemma}\label{exinc:pair-norm-incidence}
Let $Q=2^u$ and $c=2^v$ for integers $u,v\geq1$. If
$F\in k[X,Y]$ is irreducible of bidegree $(a,b)$ with $a,b\geq1$,
and $C\subseteq\mathbb P^2_k$ is its reduced projective closure, then
\begin{equation}\label{exinc:pair-multiplicity-bound}
 \mu_{\mu_{Q+1}(k)\times\mu_{cQ+1}(k)}(C)\leq(c+1)ab.
\end{equation}
\end{lemma}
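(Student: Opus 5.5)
The plan is to exploit the Frobenius structure of the two norm-one sets to produce a second curve through every grid point of $C$, distinct from $C$, and then apply a bihomogeneous B\'ezout count on $\mathbb P^1\times\mathbb P^1$. Write $S=\mu_{Q+1}(k)$ and $T=\mu_{cQ+1}(k)$.

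\emph{Frobenius identities on the grid.} On $S$ one has $x^{Q}=x^{-1}$, hence $x^{cQ}=x^{-c}$. On $T$ one has $y^{cQ}=y^{-1}$. Let $F^{\sigma}$ denote $F$ with every coefficient raised to the power $cQ=2^{u+v}$; this map is a field automorphism of $k$. Then $F^{\sigma}(x^{cQ},y^{cQ})=F(x,y)^{cQ}$. Hence every $(x,y)\in(S\times T)\cap C$ satisfies $F^{\sigma}(x^{-c},y^{-1})=0$. Clearing denominators, these points are zeros of
\[
 G(x,y)=x^{ca}y^{b}F^{\sigma}(x^{-c},y^{-1}),
\]
a polynomial of bidegree at most $(ca,b)$. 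It is nonzero, since $F\neq0$ and the substitution is invertible on Laurent monomials. All grid points have nonzero coordinates, so no point lies on the coordinate axes or at infinity.

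\emph{Key step: $F\nmid G$.} This is where I expect the real content to lie. Suppose $F\mid G$. Then the rational map $\psi(x,y)=(x^{-c},y^{-1})$ sends $C$ into $C^{\sigma}=V(F^{\sigma})$. The curve $C^{\sigma}$ is irreducible of the same bidegree $(a,b)$, and $\psi$ is dominant onto it, since its image is not a point because $x$ is nonconstant on $C$ (as $b\ge1$). Let $d$ be the degree of $\psi\colon C\to C^{\sigma}$. On $C$ the coordinate function $x$ has degree $b$ and $y$ has degree $a$, and the same holds on $C^{\sigma}$. Pulling back, $\deg_C\psi^*x=db$ and $\deg_C\psi^*y=da$. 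Directly, $\psi^*x=x^{-c}$ has degree $cb$ and $\psi^*y=y^{-1}$ has degree $a$. This gives $d=c$ and $d=1$, which contradicts $c\ge2$. This is the ``coordinate degrees scale differently'' mechanism. I would check carefully that the degree of a rational function is multiplicative under the finite map $\psi$, and that the map is nonconstant in both coordinates.

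\emph{Counting.} Since $F$ is irreducible and does not divide $G$, the curves $V(F)$ and $V(G)$ share no component in $\mathbb P^1\times\mathbb P^1$. Their total intersection number is
\[
 (a,b)\cdot(ca,b)=ab+b\cdot ca=(c+1)ab.
\]
To pass to the weighted count, I would argue as in Lemma~\ref{exag:weighted-bezout}. Pull $G$ back to the normalization of the closure of $C$ in $\mathbb P^1\times\mathbb P^1$. Its zero divisor there has degree $(c+1)ab$. At each grid point $z$ the local equation of $G$ lies in $\mathfrak m_z$, so its total branch order is at least $\mult_zC$ by \eqref{exag:branch-formulas}. Grid points lie in the open torus $(k^\times)^2$, where $\mathbb P^2$ and $\mathbb P^1\times\mathbb P^1$ have the same local rings. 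The multiplicities of the projective closure in $\mathbb P^2$ therefore coincide with those on the $\mathbb P^1\times\mathbb P^1$ model. Summing over grid points yields \eqref{exinc:pair-multiplicity-bound}.
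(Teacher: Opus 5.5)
Your proof is correct and has the same structure as the paper's: the same auxiliary polynomial $G=X^{ca}Y^bF^{[cQ]}(X^{-c},Y^{-1})$, the same grid identities $x^{cQ}=x^{-c}$ and $y^{cQ}=y^{-1}$, the same intersection number $(c+1)ab$ on $\mathbb P^1\times\mathbb P^1$, and the same branch-order comparison via \eqref{exag:branch-formulas}.

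The one real difference is how you prove $F\nmid G$.
\begin{itemize}
\item The paper stays inside $k[X,Y]$. It shows that $H=X^aY^bF^{[cQ]}(X^{-1},Y^{-1})$ is irreducible. Since $(x,y)\mapsto(x^c,y)$ is a universal homeomorphism, $G=H(X^c,Y)=\lambda J^r$ with $J$ irreducible. So $F\mid G$ would force $G=\lambda F^r$, whose bidegree $(ra,rb)$ cannot equal $(ca,b)$.
\item You instead compare function-field degrees. A dominant $\psi\colon C\to C^\sigma$ of degree $d$ would give $cb=db$ from $\psi^*x=x^{-c}$ and $a=da$ from $\psi^*y=y^{-1}$, which is impossible for $c\geq2$.
\end{itemize}

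Your route avoids the irreducibility-of-preimage step. It is exactly the ``coordinate degrees scale differently'' mechanism the paper itself uses for general $e$ in Lemma~\ref{exinc:coordinate-degrees}. The paper's route is purely polynomial and yields the extra structural fact that $G$ is a power of a single irreducible factor. Both arguments are complete.
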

\begin{proof}
Coefficient Frobenius $u\mapsto u^{cQ}$ is an automorphism of $k$. Put
\[
 H=X^aY^bF^{[cQ]}(X^{-1},Y^{-1}),\qquad G=H(X^c,Y).
\]
As $F$ has positive degree in both variables, neither $X$ nor $Y$ divides
it. Laurent inversion and coefficient conjugation preserve
irreducibility, and clearing denominators introduces no coordinate
factor. Consequently $H$ is irreducible of bidegree $(a,b)$.
The map $(x,y)\mapsto(x^c,y)$ is finite and a universal homeomorphism:
on coordinate rings each variable is integral over its $c$th power,
and on every extension field there is at most one $c$th root. Hence the
reduced inverse image of $V(H)$ is irreducible, and unique factorization
gives $G=\lambda J^r$, with $J$ irreducible and $r\geq1$. If $F\mid G$,
then $J$ is proportional to $F$; comparing the two degrees gives
$ra=ca$ and $rb=b$, contradicting $c>1$. Thus $F,G$ have no common factor.

For $(\alpha,\beta)\in V(F)\cap(\mu_{Q+1}\times\mu_{cQ+1})$,
\[
 \alpha^{cQ}=\alpha^{-c},\qquad\beta^{cQ}=\beta^{-1},\qquad
 G(\alpha,\beta)=\alpha^{ca}\beta^bF(\alpha,\beta)^{cQ}=0.
\]
The exact bihomogenizations define divisors of bidegrees $(a,b)$ and
$(ca,b)$ on $\mathbb P^1\times\mathbb P^1$, without boundary components
or common components. Their intersection number is $(c+1)ab$
\cite[V, Theorem 1.1, Proposition 1.4, Example 1.4.3]{Hartshorne1977}.
At each branch $w$ over such a point,
\[
 v_w(G)\geq\min\{v_w(x-\alpha),v_w(y-\beta)\}.
\]
The right side sums to $\mult_{(\alpha,\beta)}C$ by
\eqref{exag:branch-formulas}; the left side sums to the local
intersection multiplicity \cite[Example 1.2.3]{Fulton1998}. Summing and
including the other nonnegative intersection terms proves the bound,
even when the divisor defined by $G$ is nonreduced.
\end{proof}

\subsection{The anisotropic grid-preserving transform}
Use the grid of Definition~\ref{exinc:grid-definition} and suppose its
power ratio is at most $K$. Put
\begin{equation}\label{exinc:transform-definition}
 Q=2^{\max_i m_i},\qquad c_i=Q/q_i,\qquad c_* =\max_i c_i\leq K,
 \qquad X=(\mathbb P^1_k)^e.
\end{equation}
On $X$, with $x_i=V_i/U_i$, set
\[
 \Phi(([U_i:V_i])_i)=([V_i^{c_i}:U_i^{c_i}])_i,
 \qquad T(V)=(\Phi^{-1}(V^{[Q]}))_{\mathrm{red}}.
\]
Here $V^{[Q]}$ applies $u\mapsto u^Q$ to coefficients of all defining
equations. For a curve meeting the torus, use its reduced torus closure
in $X$ when applying $T$, and return to its reduced torus closure in
$\mathbb P^e$ afterward. Both compactifications have the same function
field and coordinate functions, since they contain the same dense open
subscheme.
For $0\ne F\in k[X_1,\ldots,X_e]$, set
\begin{equation}\label{exinc:polynomial-transform}
 \mathcal T(F)=\prod_iX_i^{c_i\deg_{X_i}F}
               F^{[Q]}(X_1^{-c_1},\ldots,X_e^{-c_e}).
\end{equation}

\begin{lemma}\label{exinc:transform}
\label{exinc:polynomial-transport}\label{exinc:proper-transport}
The map $T$ bijects reduced closed subschemes of $X$, preserving
irreducible components and dimensions, and
\begin{equation}\label{exinc:grid-trace-equality}
 T(V)(k)\cap\Omega=V(k)\cap\Omega.
\end{equation}
If $\deg F\leq B$, then $\mathcal T(F)\ne0$,
$\deg\mathcal T(F)\leq eKB$, and it has the same grid zeros as $F$.
If $C'=T^{-1}(C)$ and $F|_{C'}\ne0$, then $\mathcal T(F)|_C\ne0$.
\end{lemma}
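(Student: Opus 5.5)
We prove the lemma by factoring $\Phi$ into pieces and checking each assertion on points and on polynomials separately.

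\emph{Bijectivity of $T$.} Write $\Phi=\Phi_0\circ\iota$, where:
\begin{itemize}
\item $\iota$ is the involution $[U_i:V_i]\mapsto[V_i:U_i]$ in every factor, an automorphism of $X$;
\item $\Phi_0$ raises the homogeneous coordinates of factor $i$ to the power $c_i$.
\end{itemize}
Each $c_i$ is a power of two, so $\Phi_0$ is a finite, bijective universal homeomorphism: on $k$-points every element has a unique $2^v$th root. Coefficient Frobenius $V\mapsto V^{[Q]}$ is a bijection on reduced closed subschemes, induced by the field automorphism $u\mapsto u^Q$ of $k$. It preserves irreducibility and dimension, and on points it acts by $z\mapsto z^Q$. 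Hence $V\mapsto(\Phi^{-1}(V^{[Q]}))_{\mathrm{red}}$ is a bijection of reduced closed subschemes. Its inverse sends $W$ to the reduced image $\Phi(W)$ with coefficients pulled back by the inverse Frobenius; closed images exist because $\Phi$ is finite. Components and dimensions are preserved because all maps involved are homeomorphisms.

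\emph{Grid traces.} On $k$-points, a torus point $z$ lies in $T(V)$ iff $\Phi(z)=(z_i^{-c_i})_i$ lies in $V^{[Q]}$, that is, iff $(z_i^{-c_i/Q})_i\in V$ after inverting Frobenius. For $z_i\in S_i$ we have $z_i^{q_i}=z_i^{-1}$, so
\[
 z_i^{Q}=\bigl(z_i^{q_i}\bigr)^{c_i}=z_i^{-c_i}.
\]
Thus $\Phi(z)=z^{Q}$ coordinatewise, and $z^Q\in V^{[Q]}$ iff $z\in V$. This proves the trace equality \eqref{exinc:grid-trace-equality}.

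\emph{The polynomial transform.} $\mathcal T(F)$ is a nonzero Laurent polynomial multiplied by monomials that clear its denominators, so it is a nonzero polynomial. Its degree in $X_i$ is at most $c_i\deg_{X_i}F\le K\deg F$, so its total degree is at most $\sum_i c_i\deg_{X_i}F\le eKB$. At a grid point the monomial prefactor does not vanish, and by the computation above
\[
 F^{[Q]}(z^{-c})=F^{[Q]}(z^{Q})=F(z)^Q,
\]
so $\mathcal T(F)$ has the same grid zeros as $F$.

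\emph{Non-vanishing on $C$.} Suppose $F|_{C'}\ne0$ with $C'=T^{-1}(C)$. On the torus, $\mathcal T(F)$ vanishes at a point $z$ exactly when $F^{[Q]}$ vanishes at $\Phi(z)$, so $V(\mathcal T(F))\cap\text{torus}=\Phi^{-1}(V(F)^{[Q]})\cap\text{torus}=T(V(F))\cap\text{torus}$. If $\mathcal T(F)$ vanished on $C$, then the dense torus part of $C$ would lie in $T(V(F))$. Applying $T^{-1}$, which is a homeomorphism preserving the torus, would put $C'$ inside $V(F)$, contradicting $F|_{C'}\ne0$.

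The only delicate point, and the main obstacle, is the torus-closure convention: $T$ is defined on $X=(\mathbb P^1)^e$, but the curves live in $\mathbb P^e$. I would handle this by noting that every statement is checked on the dense torus, where the two compactifications agree and $\Phi$ restricts to an automorphism of the torus as a topological map.
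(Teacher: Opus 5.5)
Your proposal is correct and follows essentially the same route as the paper. The bijection comes from inversion, coordinate Frobenius powers and coefficient conjugation all being homeomorphisms. Both the trace equality and the grid-zero claim come from $z_i^{q_i}=z_i^{-1}$, which gives $\Phi(z)=z^{[Q]}$ on $\Omega$. The final claim is checked on the dense torus.

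The differences are cosmetic. The paper phrases the last step as pulling back the nonzero function $F^{[Q]}|_{(C'_X)^{[Q]}}$ along the dominant map $\Phi|_{C_X}$, whereas you argue the contrapositive via the homeomorphism underlying $T^{-1}$. The paper also states explicitly that $X_i\mapsto X_i^{-c_i}$ keeps distinct exponents distinct, which is the unstated reason your Laurent polynomial is nonzero.
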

\begin{proof}
Coordinate Frobenius powers are finite universal homeomorphisms, and
inversion and coefficient conjugation are invertible. Their effects on
reduced closed loci give the asserted bijection. For $z\in\Omega$,
$z_i^{q_i}=z_i^{-1}$, hence $\Phi(z)=z^{[Q]}$. This proves
\eqref{exinc:grid-trace-equality}. Write $F=\sum_\alpha u_\alpha X^\alpha$
and $d_i=\deg_{X_i}F$. Then
\[
 \mathcal T(F)=\sum_\alpha u_\alpha^Q\prod_iX_i^{c_i(d_i-\alpha_i)},
 \quad \deg\mathcal T(F)\leq\sum_i c_id_i\leq eKB,
 \quad \mathcal T(F)(z)=\prod_i z_i^{c_id_i}F(z)^Q.
\]
Distinct exponents stay distinct, so the polynomial is nonzero. The last
formula proves the grid assertion. Finally
$\Phi(C_X)=(C'_X)^{[Q]}$; pullback of a nonzero function under this
dominant map is nonzero, and the monomial prefactor is invertible on
the torus.
\end{proof}

For an integral curve $C\subseteq\mathbb P^e$ with dense affine part,
let $d_i(C)$ be the degree of the pole divisor of $x_i$ on its
normalization, with $d_i(C)=0$ for a constant coordinate. A nonconstant
rational function defines a finite morphism to $\mathbb P^1$, and this
pole degree is its full function-field degree, including its
inseparable part \cite[II, Proposition 6.8, Proposition 6.9, Corollary 6.10]{Hartshorne1977}.

\begin{lemma}\label{exinc:coordinate-degrees}
\label{exinc:predecessor-degree}\label{exinc:predecessor-distinct}
For a curve $C$ with dense affine part,
\begin{equation}\label{exinc:coordinate-degree-comparison}
 \deg C\leq\sum_i d_i(C)\leq e\deg C.
\end{equation}
If $C$ meets the torus and $C'=T^{-1}(C)$, there is one positive integer
$r$ such that
\begin{equation}\label{exinc:coordinate-degree-pullback}
 c_i d_i(C)=r d_i(C')\quad(i\in[e]),\qquad \deg C'\leq eK\deg C.
\end{equation}
If at least two coordinates are nonconstant, then $C'\ne C$.
\end{lemma}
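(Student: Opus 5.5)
The plan is to compute every coordinate degree on the normalization $\widetilde C$, and to read the transform $T$ as a dominant map of curves whose degree is the integer $r$.

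\emph{First inequality.} Since the affine part of $C$ is dense, $X_0$ does not vanish on $C$, so the hyperplane section $X_0=0$ pulls back to an effective divisor on $\widetilde C$ of degree $\deg C$. At a branch $w$, its order is
\[
 v_w(X_0)-\min_{0\le j\le e}v_w(X_j)=\max\Bigl\{0,\max_i\bigl(-v_w(x_i)\bigr)\Bigr\}.
\]
The right side is the largest pole order at $w$ among the coordinate functions. Summing over branches gives
\[
 \max_i d_i(C)\;\le\;\deg C\;\le\;\sum_i d_i(C).
\]
The left inequality holds because each pole divisor is dominated branchwise by the maximum. Combining it with $\sum_id_i\le e\max_id_i$ gives \eqref{exinc:coordinate-degree-comparison}. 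Constant coordinates contribute zero and change nothing.

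\emph{Transport relation.} Put $C'=T^{-1}(C)$, so that $T(C')=C$. By definition $C=\Phi^{-1}(C'^{[Q]})_{\rm red}$ on the torus closures. Hence $\Phi$ restricts to a finite surjective morphism $\varphi:C\to C'^{[Q]}$. It is surjective because $T$ is a bijection preserving dimension (Lemma~\ref{exinc:transform}). In the coordinates $x_i=V_i/U_i$, the map $\Phi$ pulls the $i$th coordinate $y_i$ back to $x_i^{-c_i}$. Let $r=\deg\varphi\ge1$ be its function-field degree, counting the inseparable part. The pole degree of $x_i^{-c_i}$ on $\widetilde C$ equals its zero degree, which is $c_id_i(C)$. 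On the other hand, the degree of a pullback is multiplicative, so the pole degree of $\varphi^*y_i$ is $r\,d_i(C'^{[Q]})$; this holds also when $y_i$ is constant, where both sides vanish. Coefficient Frobenius is an isomorphism of abstract curves twisted by a field automorphism, and it preserves pole degrees of coordinates, so $d_i(C'^{[Q]})=d_i(C')$. Together these give exactly $c_id_i(C)=r\,d_i(C')$, with a single $r$ for all $i$.

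\emph{Degree bound and distinctness.} By the first part applied to $C'$, then the transport relation with $r\ge1$ and $c_i\le K$, and finally the first part for $C$:
\[
 \deg C'\le\sum_id_i(C')=\frac1r\sum_ic_id_i(C)\le K\sum_id_i(C)\le eK\deg C .
\]
For distinctness, suppose $C'=C$. Then $c_id_i(C)=r\,d_i(C)$ for every $i$. For the two nonconstant coordinates $i\ne j$ we have $d_i,d_j>0$, hence $c_i=r=c_j$. This is impossible, because the exponents $m_i$ are distinct, so the ratios $c_i=Q/q_i$ are distinct.

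The main point needing care is the identification $\deg(\varphi^*y_i)=r\deg y_i$ when $\varphi$ is purely inseparable in part. I would invoke the cited fact that the pole degree of a rational function is its full function-field degree, including its inseparable part. Multiplicativity of degrees in the tower $k(y_i)\subseteq k(C'^{[Q]})\subseteq k(C)$ then gives the identification directly.
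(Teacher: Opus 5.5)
Your proof is correct and follows the paper's argument: the same pole-degree bookkeeping on the normalization, the same single degree $r$ of the finite map $C_X\to(C'_X)^{[Q]}$ under which $y_i$ pulls back to $x_i^{-c_i}$, and the same distinctness argument via $c_i=r=c_j$. The only cosmetic difference is in the lower bound $\deg C\le\sum_i d_i(C)$: you compute the local order of $\nu^*(X_0)$ exactly as the largest coordinate pole order, whereas the paper chooses a linear form $L$ nonvanishing on $C\cap V_+(X_0)$ and bounds the poles of $L/X_0$ by the valuation inequality.
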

\begin{proof}
On the normalization, let $E_0=\operatorname{div}(\nu^*X_0)$, of degree
$\deg C$. Each coordinate pole divisor is bounded by $E_0$, proving the
upper bound in \eqref{exinc:coordinate-degree-comparison}. Choose a
linear form $L$ nonzero at the finite set $C\cap V_+(X_0)$. Such an $L$
exists because $k$ is infinite. The pole divisor of $L/X_0$ is exactly
$E_0$. Since $L/X_0$ is a linear combination of $1,x_1,\ldots,x_e$, the
valuation inequality bounds its pole order at each point by the sum
of the coordinate pole orders. This proves the lower bound.

The finite dominant map $\Phi|_{C_X}:C_X\to(C'_X)^{[Q]}$ induces a
finite map on normalizations of degree
$r=[k(C_X):k((C'_X)^{[Q]})]$. Under it the $i$th coordinate pulls back
to $x_i^{-c_i}$. Pullback multiplies divisor degree by $r$, and the
zero and pole divisors of a rational function have equal degree
\cite[II, Proposition 6.9 and Corollary 6.10]{Hartshorne1977}.
Coefficient conjugation preserves degrees. Consequently
$c_i d_i(C)=r d_i(C')$, including the constant-coordinate case, and
\[
 \deg C'\leq\sum_i d_i(C')=r^{-1}\sum_i c_id_i(C)
 \leq c_*\sum_i d_i(C)\leq eK\deg C.
\]
If $C'=C$ and coordinates $i,j$ vary, their positive degrees imply
$c_i=r=c_j$. The distinct exponents $m_i,m_j$ make this impossible.
\end{proof}

\begin{corollary}\label{exinc:expected-predecessor}
Suppose $e\geq2$, $C$ meets the torus and has two nonconstant
coordinates, and $C'=T^{-1}(C)$. If $C,C'$ are expected components of
$e-1$ homogeneous equations of degrees between $1$ and $B$, then
\begin{equation}\label{exinc:expected-predecessor-bound}
 \mu_\Omega(C)\leq((e-1)(eK+1)+2)B\deg C.
\end{equation}
\end{corollary}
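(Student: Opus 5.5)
The plan is to combine the grid-trace equality of Lemma~\ref{exinc:transform} with the weighted intersection bound of Lemma~\ref{exag:normalization-intersection}. Put $Z=\Omega\cap C(k)$, so that $\mu_\Omega(C)=\mu_Z(C)$. The first goal is to show $Z\subseteq C'(k)$. By \eqref{exinc:grid-trace-equality}, applied to the reduced torus closures in $X=(\mathbb P^1_k)^e$, the points of $C$ lying on $\Omega$ are the same as those of $C'$ lying on $\Omega$. Since $\Omega$ sits inside the torus, and both compactifications agree there, the same identification holds for the torus closures in $\mathbb P^e$. Hence $Z\subseteq C(k)\cap C'(k)$.

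Next I use Lemma~\ref{exinc:coordinate-degrees}. It gives $C'\ne C$, because at least two coordinates are nonconstant. So $C,C'$ are distinct integral curves in $\mathbb P^e$ with $e\geq2$. They are expected components of $e-1$ equations of degrees in $[1,B]$, which is exactly the hypothesis of \eqref{exag:expected-weighted-intersection} with $a=e$. That bound yields
\[
 \mu_Z(C)\leq(e-1)B(\deg C+\deg C')+2.
\]

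I then substitute $\deg C'\leq eK\deg C$ from \eqref{exinc:coordinate-degree-pullback}. This gives $\mu_Z(C)\leq(e-1)(eK+1)B\deg C+2$. The final step absorbs the constant $2$ using $B\geq1$ and $\deg C\geq1$: indeed $2\leq 2B\deg C$, so the total is at most $((e-1)(eK+1)+2)B\deg C$, which is \eqref{exinc:expected-predecessor-bound}.

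The main point needing care is the transfer of grid points between the two compactifications. $T$ is defined on torus closures in $(\mathbb P^1)^e$, whereas the intersection lemma works in $\mathbb P^e$. I would argue this directly: $\Omega$ lies in the common dense open torus, where both closures coincide as reduced schemes. Multiplicities at grid points are local and computed in $\mathbb P^e$, so nothing is lost in the transfer. The expected-component hypothesis on $C'$ is assumed, so no further work is needed there.
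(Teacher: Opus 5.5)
Your proposal is correct and follows essentially the same route as the paper. It takes the common grid trace from Lemma~\ref{exinc:transform}, distinctness and $\deg C'\leq eK\deg C$ from Lemma~\ref{exinc:coordinate-degrees}, then applies \eqref{exag:expected-weighted-intersection} and absorbs the constant $2$ via $B\deg C\geq1$; your remark about transferring grid points between the two compactifications makes explicit a point the paper leaves implicit.
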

\begin{proof}
The two curves are distinct, have the same grid trace, and satisfy
$\deg C'\leq eK\deg C$. Apply
\eqref{exag:expected-weighted-intersection} to this common trace and use
$B\deg C\geq1$.
\end{proof}

\subsection{Induction on the number of coordinates}
The recurrences \eqref{exinc:incidence-constants}--\eqref{exinc:constant-recurrence}
give, by induction on $e$,
\begin{equation}\label{exinc:constant-monotonicity}
 \begin{gathered}
 1\leq P_e(t,K)\leq P_t(t,K),\quad\gamma_e\geq\gamma_t,\\
 P_e(t,K)\geq KP_{e-1}(t,K)\quad(3\leq e\leq t),\\
 \max\{1,eK,(e-1)(eK+1)+2\}\leq B_t(K)\quad(2\leq e\leq t).
 \end{gathered}
\end{equation}
Indeed $P_2=K+1\geq P_1$, and
$B_t e^e(1+4e!K^eP_{e-1})\geq KP_{e-1}$ for $e\geq3$.
Every factor in the recurrence is positive and nondecreasing in $K$.

\begin{proof}[Proof of Theorem~\ref{exinc:main}]
Fix $t\geq1$ and $K\geq1$. In particular every constant $P_j(t,K)$
in the induction has the same first argument $t$. We prove the statement
by induction on the actual number $e$ of coordinates. At the induction
step, the assertion for every smaller actual dimension is available
for every grid and every curve satisfying its hypotheses.
Induct on $1\leq e\leq t$, simultaneously over all
\[
 m_i\geq1\ (i\in[e]),\quad(m_i)\text{ pairwise distinct},\quad
 \frac{\max_iq_i}{\min_iq_i}\leq K,\quad0<\theta\leq\tfrac12,
 \quad0\leq \ell\leq\theta m,
\]
and all stated closed sets and curves. Every inherited subtuple has
the same ratio bound by \eqref{exinc:grid-cardinality}.
The cases $C\subseteq V_+(X_0)$, $A=\varnothing$, $A=\Omega$, or $\ell=0$
give zero incidence; the last uses Lemma~\ref{exinc:zero-closure}.
The same holds if $C(k)\cap A^\circ=\varnothing$. Hence assume that
$C$ has dense affine part, $\varnothing\ne A\subsetneq\Omega$,
$\ell\geq1$, and $C(k)\cap A^\circ\ne\varnothing$.
In particular,
\begin{equation}\label{exinc:nontrivial-scale}
 m\theta\geq \ell\geq1.
\end{equation}

For $e=1$, $C=\mathbb P^1$ and Lemma~\ref{exinc:one-dimensional-closure} gives
\[
 \mu_{A^\circ}(C)\leq|A|\leq \ell\leq\theta m=P_1(t,K)\theta^{\gamma_1}m\deg C.
\]
For $e=2$ (hence $t\geq2$), a coordinate line either misses the grid, has a full slice
(contributing no point-stratum points), or has a proper $\ell$-closed slice
of size at most $\ell$. Assume both coordinates vary. Choose
$0\ne F\in I_\ell(A)$ by separation from any $z_0\in\Omega\setminus A$.
If $F|_C\ne0$, weighted B\'ezout, Lemma~\ref{exag:weighted-bezout}, gives
$\mu_{A^\circ}(C)\leq2\ell\deg C$.
Otherwise an irreducible factor $G\mid F$ defines the affine curve $C$:
its projective hypersurface closure is integral of dimension one,
and contains $C$, so equals $C$. Put
\[
 a=\deg_{X_1}G,\quad b=\deg_{X_2}G,\qquad1\leq a,b\leq \ell.
\]
Reorder the two coordinates so that $m_1<m_2$. Their lengths are
$Q+1,cQ+1$, with $Q=2^{m_1}$ and
$c=2^{m_2-m_1}=q_2/q_1\in[2,K]$. Lemma~\ref{exinc:pair-norm-incidence} gives
\[
 \mu_{A^\circ}(C)\leq(c+1)ab
 \leq(K+1)\ell\min\{a,b\}\leq(K+1)\ell\deg C.
\]
Since $2\leq K+1$, both cases satisfy the $e=2$ claim, including singular points.

Now let $3\leq e\leq t$. If $x_i|_C$ is constant, its value either
lies outside $S_i$ or defines a coordinate hyperplane slice.
The linear identification with $\mathbb P^{e-1}$ preserves degree and
multiplicity. The slice lemmas give closedness and inclusion of point strata,
and its actual minimum $m'$ satisfies
\[
 m\leq m'\leq Km,\qquad \ell\leq\theta m',\qquad
 \mu_{A^\circ}(C)\leq P_{e-1}(t,K)\theta^{\gamma_{e-1}}m'\deg C
 \leq KP_{e-1}(t,K)\theta^{\gamma_e}m\deg C.
\]
This is included in $P_e(t,K)$. Thus assume all coordinates vary and
$C\cap\Omega\ne\varnothing$. Lemmas~\ref{exinc:predecessor-distinct}
and~\ref{exinc:transform} give
\begin{equation}\label{exinc:common-trace}
 C'=T^{-1}(C)\ne C,\qquad Z=A^\circ,\qquad Z_C=Z\cap C=Z\cap C'.
\end{equation}

First establish the auxiliary bound, for integral $V\subseteq\mathbb P^e$
with $2\leq h=\dim V\leq e-1$, $c=e-h$, and $j=c+1$:
\begin{equation}\label{exinc:slicing-estimate}
 |Z\cap V|\leq S_e(t,K)\theta^{\gamma_j}m^h\deg V.
\end{equation}
At each component meeting $Z$ of dimension at least two, choose a
nonconstant unfixed coordinate and cut by all its grid hyperplanes.
Such a coordinate exists because otherwise its dense affine part is a point.
Each cut is proper, so Lemma~\ref{exag:external-intersection} gives
\[
 \sum_{W\in\operatorname{Irr}(V\cap V_+(X_i-\alpha X_0))}\deg W\leq\deg V,
 \qquad \dim W=\dim V-1,
 \qquad \sum_{\alpha\in S_i}\sum_W\deg W\leq Km\deg V.
\]
For this application of Lemma~\ref{exag:external-intersection},
the integral variety is $V$ and the nonzero restricted form is
$X_i-\alpha X_0$, of degree one. Only the distinct reduced components
of the intersection are counted; its scheme need not be reduced.
Discard components missing $Z$. After $h-1$ cuts, along distinct
coordinates on each branch, terminal curves $D_\lambda$ cover $Z\cap V$ and satisfy
\[
 \sum_\lambda\deg D_\lambda\leq(Km)^{h-1}\deg V,
 \qquad D_\lambda\subseteq\mathbb P^j\text{ in a coordinate slice}.
\]
Each slice has $m_J\in[m,Km]$, $\ell\leq\theta m_J$, and ratio at most $K$.
The lower-dimensional induction and slice lemmas therefore give
\[
 |Z\cap V|\leq\sum_\lambda\mu_{Z\cap D_\lambda}(D_\lambda)
 \leq P_j(t,K)\theta^{\gamma_j}Km\sum_\lambda\deg D_\lambda
 \leq K^hP_j(t,K)\theta^{\gamma_j}m^h\deg V.
\]
Since $j\leq e-1$ and $K^hP_j(t,K)\leq K^eP_{e-1}(t,K)=S_e(t,K)$,
this proves \eqref{exinc:slicing-estimate}. Reduced components suffice;
no reducedness of the intersection schemes is assumed.

Because $A$ is proper, choose $z_0\in\Omega\setminus A$.
Lemma~\ref{exinc:closure-slices}, applied to this $z_0$, gives a
polynomial $0\ne F\in I_\ell(A)$ with $F(z_0)\ne0$.
Since $A$ is nonempty, $F$ cannot be a nonzero constant.
Each individual degree is at most $\ell$, so
$1\leq\deg F\leq e\ell$. The possible first exits are
\[
 \begin{array}{c|c|c}
 \text{condition}&\text{polynomial nonzero on }C\text{ and zero on }Z_C&\text{degree bound}\\ \hline
 F|_C\ne0&F&e\ell\\
 F|_{C'}\ne0&\mathcal T(F)&e^2K\ell.
 \end{array}
\]
These use Lemmas~\ref{exinc:proper-transport} and~\ref{exag:external-intersection}.
Otherwise homogenize $F$ to $E_1$, which vanishes on both curves.

At stage $1\leq c\leq e-1$, maintain
\[
 \begin{gathered}
 X_c=V_+(E_1,\ldots,E_c),\quad B_c=\max_{1\leq r\leq c}\deg E_r,
 \quad h=e-c,\quad C,C'\subseteq X_c,\\
 \mathscr V_c=\{V\in\operatorname{Irr}(X_c):C\subseteq V\text{ or }C'\subseteq V\},
 \qquad \mathscr V_c\ne\varnothing,\qquad
 \forall V\in\mathscr V_c:\dim V=h.
 \end{gathered}
\]
The hypersurface theorem Lemma~\ref{exag:external-intersection}, applied first to
$\mathbb P^e$, verifies this for $c=1$. The component family is finite because $X_c$ is Noetherian. If $h=1$, the two
curves themselves are isolated expected components. If $h\geq2$, put $j=c+1$.
Every tracked $V$ is an isolated expected component of $X_c$.
Lemma~\ref{exag:expected-component-hilbert}, with
$(a,c,B,Y)=(e,c,B_c,V)$, therefore gives
\begin{equation}\label{exinc:tracked-hilbert}
 H_V(b)\geq\deg V\binom{b-cB_c+h}{h}\qquad(b\geq cB_c).
\end{equation}
Here the Hilbert function is the dimension of the indicated graded
coordinate piece, as in the definition of $H_V$ above.
Set $x=(h!S_e(t,K))^{1/h}m\theta^{\gamma_j/h}$ and
$b=cB_c+\lfloor x\rfloor+1$. Then
\[
 \binom{b-cB_c+h}{h}\geq\frac{(b-cB_c)^h}{h!}
 >S_e(t,K)m^h\theta^{\gamma_j},
\]
while $\gamma_j/h\leq1$, $m\theta\geq1$, and $L_e(t,K)=4e!S_e(t,K)$ imply
\[
 \lfloor x\rfloor+1\leq
 \bigl((h!S_e(t,K))^{1/h}+1\bigr)m\theta^{\gamma_j/h}
 \leq L_e(t,K)m\theta^{\gamma_j/h}.
\]
Together with \eqref{exinc:slicing-estimate}, this proves the common choice
\begin{equation}\label{exinc:locator-degree-choice}
 cB_c\leq b\leq cB_c+L_e(t,K)m\theta^{\gamma_j/h},\qquad
 \forall V\in\mathscr V_c:\ H_V(b)>|Z\cap V|.
\end{equation}
For each $V\in\mathscr V_c$, the map
\[
 \operatorname{ev}_{Z\cap V}:(k[X_0,\ldots,X_e]/I(V))_b
 \longrightarrow k^{Z\cap V},\qquad
 [P]\longmapsto(P(z)/X_0(z)^b)_z
\]
has nonzero kernel. Lift a nonzero kernel vector to $P_V$; then
\[
 P_V|_V\ne0,\quad P_V|_{Z\cap V}=0,
 \qquad Z_C\subseteq Z\cap V\quad\Longrightarrow\quad P_V|_{Z_C}=0.
\]
The last containment holds for both types of tracked components by
the common-trace identity. In particular every exit polynomial below
vanishes on the nonempty set $Z_C$ and therefore has positive degree.
The possible exits are now
\[
 \begin{array}{c|c|c}
 P_V|_C\ne0&P_V&b\\
 P_V|_{C'}\ne0&\mathcal T(P_V(1,X_1,\ldots,X_e))&eKb.
 \end{array}
\]
Otherwise all $P_V$ vanish on both curves. Define proper linear subspaces
\[
 L_V=\{\lambda\in k^{\mathscr V_c}:\textstyle\sum_W\lambda_WP_W|_V=0\}.
\]
They are proper since $P_V|_V\ne0$. Since a vector space over the infinite field $k$ is not a finite union of proper linear subspaces, choose
$\lambda\notin\bigcup_VL_V$ and set $E_{c+1}=\sum_W\lambda_WP_W$. Thus
\[
 \deg E_{c+1}=b,\qquad E_{c+1}|_C=E_{c+1}|_{C'}=0,
 \qquad\forall V\in\mathscr V_c:\ E_{c+1}|_V\ne0.
\]
To verify the next invariant, let $U\in\operatorname{Irr}(X_{c+1})$ contain
either curve. Choose $V\in\operatorname{Irr}(X_c)$ containing $U$, then
$W\in\operatorname{Irr}(V\cap V_+(E_{c+1}))$ containing $U$. Necessarily
\[
 V\in\mathscr V_c,\qquad\dim W=h-1,\qquad
 U\subseteq W\subseteq X_{c+1}\quad\Longrightarrow\quad U=W.
\]
The dimension equality is the proper-section assertion in
Lemma~\ref{exag:external-intersection}; the last implication is maximality in the
definition of an irreducible component.
Thus the new tracked components have dimension $h-1$. Unrelated
components impose no extra dimension requirement. With no exit, after
$e-1$ equations Corollary~\ref{exinc:expected-predecessor} gives
\[
 \mu_Z(C)\leq B_t(K)B_{e-1}\deg C.
\]

It remains to bound the terminal and exit degrees. For $2\leq j\leq e-1$, the inequality $r+1\leq2^r$ at $r=e-j\geq1$ gives
\[
 \frac{\gamma_j}{e-j+1}
 =\frac{2^{2-j}}{e-j+1}\geq2^{2-e}=\gamma_e.
\]
Equality holds at $j=e-1$, so
\[
 \min\left\{1,\min_{2\leq j\leq e-1}\frac{\gamma_j}{e-j+1}\right\}
 =\gamma_e.
\]
Put $M_e=m\theta^{\gamma_e}$. Since $0<\theta\leq1$,
\[
 B_1\leq e\ell\leq eM_e,\qquad
 b\leq cB_c+L_e(t,K)M_e\leq eB_c+L_e(t,K)M_e.
\]
The identical recurrence bounds every later equation and exit candidate:
\[
 B_c\leq\left(e^c+L_e(t,K)\sum_{u=0}^{c-2}e^u\right)M_e
 \leq H_e(t,K)M_e\qquad(c\leq e-1),
\]
with empty sum for $c=1$. An exit candidate is the next term of this
same recurrence and occurs by stage $e-1$. Weighted B\'ezout
Lemma~\ref{exag:weighted-bezout} therefore gives
\[
 \mu_Z(C)\leq
 \begin{cases}
 H_e(t,K)M_e\deg C,&\text{untransformed exit},\\
 eKH_e(t,K)M_e\deg C,&\text{transformed exit},\\
 B_t(K)H_e(t,K)M_e\deg C,&\text{terminal expected components}.
 \end{cases}
\]
Since $1,eK\leq B_t(K)$ by \eqref{exinc:constant-monotonicity} and
\[
 P_e(t,K)=B_t(K)H_e(t,K)\geq KP_{e-1}(t,K),
\]
all branches, including constant-coordinate slices, satisfy
$\mu_{A^\circ}(C)\leq P_e(t,K)\theta^{\gamma_e}m\deg C$.
The simultaneous quantifiers give the assertion on every nonempty coordinate subtuple.
\end{proof}

\section{Product expansion of Reed--Solomon codes}\label{exloc:chapter}
For a tuple of codes over a field $E$, write $A(C_1,\ldots,C_d)$ for
the directional sum from Section~\ref{sec:preliminaries}, and retain
$\ell_i(x_i)$ for the number of nonzero $i$-lines. Unless a coefficient
field is specified, geometric statements in this section are over
$k=\overline{\mathbb F}_2$.

\subsection{Affine and projective evaluation codes}
\begin{definition}[Reed--Solomon Codes]\label{exloc:rs-definition}
\label{exloc:generalized-rs-definition}\label{exloc:projective-rs-definition}
For a field $E$, a nonempty finite subset $S\subset E$ of size $n$, and
an integer $0\leq u\leq n$, set
\[
 \operatorname{RS}_E(S,u)=\{(f(a))_{a\in S}:f\in E[X],\ \deg f<u\}.
\]
For $u=0$ the code is zero. A generalized Reed--Solomon code is
$\operatorname{diag}(v_a)\operatorname{RS}_E(S,u)$ with
$v\in(E^\times)^S$. For a prime power $q$ such that $E$ contains $\mathbb F_q$, choose nonzero
representatives $r_x=(X_x,Y_x)\in\mathbb F_q^2$ for
$x\in\mathbb P^1(\mathbb F_q)$, and for integers $0\leq u\leq q$ set
\[
 \operatorname{PRS}_q(u)=\{(F(r_x))_x:F\in E[X,Y]_u\},
\]
where $E[X,Y]_u$ denotes the homogeneous forms of degree $u$.
\end{definition}

\begin{lemma}\label{rs:dimension-distance}
\label{exloc:rs-dual}
Let $E$ be a field and $S\subset E$ a nonempty finite set of size $n$.
For every integer $1\leq u\leq n$, the code $\operatorname{RS}_E(S,u)$ has dimension
$u$ and distance $n-u+1$. For integers $0\leq u\leq n$, put
$v_a=\prod_{b\in S\setminus\{a\}}(a-b)^{-1}$. Then
\[
 \operatorname{RS}_E(S,u)^\perp
 =\operatorname{diag}(v_a)\operatorname{RS}_E(S,n-u).
\]
\end{lemma}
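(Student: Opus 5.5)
The plan is to prove dimension and distance by interpolation and the root bound, and then the duality formula by a residue-type identity for Lagrange interpolation, finishing with a dimension count.

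\emph{Dimension and distance.} The evaluation map from polynomials of degree $<u$ is injective when $u\leq n$, since a nonzero polynomial of degree $<u\leq n$ has fewer than $n$ roots in $S$. Hence the dimension is $u$. A nonzero codeword comes from a nonzero $f$ of degree at most $u-1$, which vanishes on at most $u-1$ points, so its weight is at least $n-u+1$. Lagrange interpolation produces a polynomial of degree $u-1$ vanishing at $u-1$ chosen points of $S$, and its codeword has weight exactly $n-u+1$. The degenerate value $u=0$ only enters the duality statement, where the code is zero.

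\emph{Duality.} The key identity is
\[
 \sum_{a\in S}v_a a^j=0\qquad(0\leq j\leq n-2).
\]
This holds because the left side is the coefficient of $X^{n-1}$ in the Lagrange interpolant of $X^j$ on $S$. That interpolant is $X^j$ itself, because $j<n-1$, so this coefficient vanishes. By linearity, $\sum_a v_a g(a)=0$ for every $g$ with $\deg g\leq n-2$.

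Take $f$ with $\deg f<u$ and $h$ with $\deg h<n-u$. Then $\deg(fh)\leq n-2$, so
\[
 \langle (f(a))_a,(v_ah(a))_a\rangle=0.
\]
Since every $v_a\neq0$, the code $\operatorname{diag}(v_a)\operatorname{RS}_E(S,n-u)$ has dimension $n-u$, and we have just shown it is contained in $\operatorname{RS}_E(S,u)^\perp$. That orthogonal complement also has dimension $n-u$, so the two spaces are equal. The endpoint cases $u=0$ and $u=n$ follow from the same computation, with one side equal to the zero code and the other to the whole space $E^S$.

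The only step requiring care is the coefficient identity for the weights $v_a$. It follows at once from the uniqueness of the interpolating polynomial of degree $<n$, so I do not expect any real obstacle.
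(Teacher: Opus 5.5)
Your proposal is correct and follows the paper's proof essentially step for step. Both arguments use the root bound for injectivity and the distance lower bound, and a polynomial with $u-1$ prescribed roots in $S$ to attain weight $n-u+1$. Both then use the vanishing $X^{n-1}$-coefficient of the Lagrange interpolant to get $\sum_a v_a a^j=0$ for $j\le n-2$, the degree count for orthogonality, and complementary dimensions for equality, with the endpoints $u=0,n$ treated separately.
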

\begin{proof}
A nonzero polynomial of degree at most $u-1$ has at most $u-1$ roots.
Thus evaluation is injective and the distance is at least $n-u+1$.
The polynomial $\prod_{a\in T}(X-a)$, for any $|T|=u-1$, attains that
weight. For duality, the leading coefficient in the Lagrange formula gives
\[
 \sum_{a\in S}v_a a^j=0\qquad(0\leq j\leq n-2).
\]
The products of polynomials of degrees below $u$ and $n-u$ have degree
at most $n-2$, proving orthogonality. Both spaces have the required
complementary dimensions. The cases $u=0,n$ are immediate.
\end{proof}

\begin{proposition}\label{exloc:norm-bridge}
Let $q$ be a prime power, let $E$ contain $\mathbb F_{q^2}$, choose
$\alpha\in\mathbb F_{q^2}\setminus\mathbb F_q$, and put $\beta=\alpha^q$.
The map
\[
 x=[X:Y]\longmapsto z_x=\frac{X-\alpha Y}{X-\beta Y}
\]
is a bijection from $\mathbb P^1(\mathbb F_q)$ onto
$S=\{z\in\mathbb F_{q^2}^\times:z^{q+1}=1\}$. For the chosen
representatives put $d_x=X_x-\beta Y_x\ne0$. For every $0\leq u\leq q$,
\[
 \operatorname{PRS}_q(u)=\operatorname{diag}(d_x^u)
                    \operatorname{RS}_E(S,u+1)
\]
after reindexing by $x\mapsto z_x$. Its dimension is $u+1$, its distance
is $q+1-u$, and its dual code is a generalized Reed--Solomon code
of dimension $q-u$. In the dual formula on $S$, $v_z=z/(q+1)$.
\end{proposition}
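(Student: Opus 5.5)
The plan is to treat the map as a Möbius transformation defined over $\mathbb F_{q^2}$, use Frobenius to locate its image, and then transport the Reed--Solomon facts of Lemma~\ref{rs:dimension-distance} through a monomial rescaling, using Lemma~\ref{exloc:monomial-isometry} for the dual.

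\emph{Bijection.} First, $z_x$ is well defined and nonzero. A rescaling of $(X,Y)$ cancels in the quotient. Neither linear form vanishes at an $\mathbb F_q$-rational point $(X,Y)\ne0$: if $Y\ne0$ this would force $\alpha$ or $\beta$ into $\mathbb F_q$, and if $Y=0$ then $X\ne0$. In particular $d_x\ne0$.

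Next, the image lies in $S$. Apply the $q$-power Frobenius, which fixes $X_x,Y_x\in\mathbb F_q$ and swaps $\alpha$ and $\beta$ because $\alpha^{q^2}=\alpha$. This gives
\[
 z_x^q=\frac{X-\beta Y}{X-\alpha Y}=z_x^{-1},
\]
so $z_x^{q+1}=1$.

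For injectivity, note that the map is the fractional linear transformation with matrix $\bigl(\begin{smallmatrix}1&-\alpha\\1&-\beta\end{smallmatrix}\bigr)$, whose determinant is $\alpha-\beta\ne0$; hence it is injective on $\mathbb P^1$. Both sides have $q+1$ elements. For $S$ this holds because $Z^{q+1}-1$ has derivative $(q+1)Z^q$, which is nonzero on $k^\times$ since $q+1\equiv1$ modulo the characteristic, and because $q+1$ divides $q^2-1$, so all $q+1$ roots lie in $\mathbb F_{q^2}^\times$. Injectivity and equal sizes give the bijection.

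\emph{Code identity.} The linear forms $X-\alpha Y$ and $X-\beta Y$ are linearly independent over $E$. Therefore the products $(X-\alpha Y)^a(X-\beta Y)^{u-a}$, for $0\le a\le u$, form a basis of $E[X,Y]_u$. Writing $F=\sum_a c_a(X-\alpha Y)^a(X-\beta Y)^{u-a}$ gives
\[
 F(r_x)=d_x^u\sum_a c_az_x^a=d_x^u\,f(z_x),
\]
where $f=\sum_a c_aZ^a$ ranges over all polynomials of degree at most $u$ as $F$ ranges over $E[X,Y]_u$. This proves $\operatorname{PRS}_q(u)=\operatorname{diag}(d_x^u)\operatorname{RS}_E(S,u+1)$. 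Since $u+1\le q+1=|S|$, Lemma~\ref{rs:dimension-distance} gives dimension $u+1$ and distance $q+1-u$, and the diagonal rescaling preserves both.

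\emph{Dual.} By Lemma~\ref{rs:dimension-distance}, $\operatorname{RS}_E(S,u+1)^\perp=\operatorname{diag}(v_z)\operatorname{RS}_E(S,q-u)$ with $v_z=\prod_{b\ne z}(z-b)^{-1}$. Since $Z^{q+1}-1=\prod_{b\in S}(Z-b)$, the product $\prod_{b\ne z}(z-b)$ equals the derivative at $z$, namely $(q+1)z^q=(q+1)z^{-1}$. Hence $v_z=z/(q+1)$, which is legitimate because $q+1$ is invertible in $E$. Lemma~\ref{exloc:monomial-isometry} then gives
\[
 \operatorname{PRS}_q(u)^\perp=\operatorname{diag}\bigl(d_x^{-u}z_x/(q+1)\bigr)\operatorname{RS}_E(S,q-u),
\]
a generalized Reed--Solomon code of dimension $q-u$.

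The only step needing any care is the Frobenius computation placing the image inside $S$, together with the nonvanishing of $d_x$. Everything else is bookkeeping with the earlier lemmas.
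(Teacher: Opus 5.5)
Your proposal is correct and follows the paper's proof essentially step for step. Both arguments use Frobenius to place the image in $S$, then injectivity of the M\"obius map together with $|S|=q+1$ from separability of $Z^{q+1}-1$. Both then change coordinates to the linear forms $X-\alpha Y$ and $X-\beta Y$: the paper writes this as the substitution $F\mapsto F(\beta z-\alpha,z-1)/(\beta-\alpha)^u$, which sends your basis element $(X-\alpha Y)^a(X-\beta Y)^{u-a}$ to $z^a$. Both finish with the derivative computation giving $v_z=z/(q+1)$. The only cosmetic differences are that you get injectivity from the determinant instead of the explicit inverse $z\mapsto[\beta z-\alpha:z-1]$, and that you write out the transported dual code explicitly.
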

\begin{proof}
The two linear expressions in the ratio never vanish on a nonzero
vector over $\mathbb F_q$, and $z_x^q=z_x^{-1}$. The transformation has
inverse $z\mapsto[\beta z-\alpha:z-1]$, so it is injective. Both sets
have size $q+1$: $X^{q+1}-1$ is separable and all its roots lie in
$\mathbb F_{q^2}$ because $q+1$ divides $q^2-1$.
The polynomial-space isomorphism
\[
 E[X,Y]_u\longrightarrow E[z]_{\leq u},\qquad
 F\longmapsto\frac{F(\beta z-\alpha,z-1)}{(\beta-\alpha)^u}
\]
sends $F$ to $f$ with $F(r_x)=d_x^uf(z_x)$. The dimension, distance,
and dual assertions follow from Lemma~\ref{exloc:rs-dual} and
Lemma~\ref{cert:monomial-invariance}. Finally differentiation of
$\prod_{z\in S}(X-z)=X^{q+1}-1$ gives
$\prod_{w\ne z}(z-w)=(q+1)z^{-1}$.
\end{proof}

For integers $e\geq1$ define
\begin{equation}\label{exloc:arrangement-constants}
 F_e=2^e-1,\qquad \sigma_e=\sum_{j=2}^{F_e}j
 =2^{e-1}(2^e-1)-1,\qquad c_e=1+4^e\sigma_e.
\end{equation}
The empty sum is zero, so $c_1=1$. Fix $t\geq1$, $K\geq1$, and
$0<\epsilon\leq1/2$, and set
\begin{equation}\label{exloc:uniform-pe-parameters}
 \begin{split}
 \eta&=\frac{\epsilon}{2tc_tK},\qquad
 \theta=\min\left\{\frac12,
   \left(\frac{\eta}{P_t(t,K)}\right)^{1/\gamma_t}\right\},\\
 \rho&=\rho(t,K,\epsilon)=\frac1{2t}
           \left(\frac{\epsilon\theta}{4c_tK}\right)^t.
 \end{split}
\end{equation}
The functions $P_t(t,K)$ and $\gamma_t$ are defined in
\eqref{exinc:incidence-constants}--\eqref{exinc:constant-recurrence}.

\begin{theorem}\label{exloc:finite-grid-pe}
Fix an integer $t\geq1$ and real numbers $K\geq1$ and
$0<\epsilon\leq1/2$, and fix
$\eta,\theta,\rho$ by \eqref{exloc:uniform-pe-parameters}.
Let $d$ be an integer with $1\leq d\leq t$, choose pairwise distinct positive integers
$m_1,\ldots,m_d$, and put $q_i=2^{m_i}$. Assume
$\max_i q_i/\min_i q_i\leq K$. Define $n_i=q_i+1$ and
$S_i=\{z\in\overline{\F}_2^\times:z^{n_i}=1\}$ for every $i\in[d]$.
Let $E_0\subseteq\overline{\F}_2$ be a finite field containing
$\bigcup_{i=1}^d S_i$, and let $E/E_0$ be any field extension,
with the given sets embedded in $E$ by that inclusion.
For every tuple of integers $(k_i)_{i=1}^d$ satisfying
$\epsilon n_i\leq k_i\leq(1-\epsilon)n_i$ for every $i$, the
$E$-linear tuple $(\operatorname{RS}(S_i,k_i))_{i=1}^d$ has
weighted product expansion at least $\rho$.
For every nonempty $I\subseteq[d]$, the same constant applies to
the subtuple indexed by $I$ in increasing order; in each of its
factors one may independently take the dual code,
permute coordinates, and multiply coordinates by nonzero elements
of $E$. All these choices are made after $\rho$ has been fixed.
\end{theorem}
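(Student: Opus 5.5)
The plan is to reduce to a single statement about arrays on the grid $\Omega=\prod_i S_i$ over $k=\overline{\mathbb F}_2$, and then prove product expansion through the polynomial closure of the support.

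\textbf{Step 1 (normalizations).} By Lemma~\ref{rs:dimension-distance}, the dual of $\operatorname{RS}(S_i,k_i)$ is a generalized Reed--Solomon code of dimension $n_i-k_i$, which again lies in $[\epsilon n_i,(1-\epsilon)n_i]$. By Lemma~\ref{cert:monomial-invariance}, coordinate permutations and rescalings do not change product expansion. So it suffices to treat plain tuples $(\operatorname{RS}(S_i,k_i))_{i\in I}$ with $k_i$ in the symmetric range. Since each code is defined over $E_0$, Lemma~\ref{cert:scalar-extension} lets me work over an algebraically closed field $\overline E\supseteq E$. A decomposition over $\overline E$ descends to $E$ by an $E$-linear retraction, as in the proof of Lemma~\ref{cert:collective-expansion}. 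I can therefore assume the field is algebraically closed and contains $k$. This is needed because the incidence theory of Section~\ref{exinc:chapter} is stated over $k$; over $\overline E$ I would only use the vanishing-ideal statements, which are insensitive to the base field. I also restrict to a subtuple of actual dimension $d$, which keeps the same $t,K$, as Theorem~\ref{exinc:main} permits.

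\textbf{Step 2 (large supports and the closure).} Fix $x\in A$ and let $Z=\supp x$.

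If $|Z|\geq\rho\, d\,|\Omega|$, I take any decomposition $x=\sum_i x_i$. Such a decomposition exists because $x\in A$. Even if it uses every line, its cost is at most $d|\Omega|\le |Z|/\rho$, so the required inequality holds.

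Otherwise, set $\ell=\lfloor\theta m\rfloor$ with $\theta$ as in \eqref{exloc:uniform-pe-parameters}, and let $\bar A=\operatorname{cl}_\ell(Z)$. I claim $|\bar A|\le C|Z|$ with $C$ depending only on $t,K,\epsilon$. The argument is an induction on the dimension of the closure.
\begin{itemize}
\item A nonzero polynomial in $\mathcal P_\ell$ vanishing on $\bar A$ exists whenever $\bar A\neq\Omega$, which holds because $|Z|$ is small against $|\Omega|$.
\item Points of the point stratum $\bar A^\circ$ are covered by curves cut out by such polynomials. Theorem~\ref{exinc:main} bounds their weighted incidence with $\bar A^\circ$ by $P_t\theta^{\gamma_t}m\deg C$. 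The choice $\theta^{\gamma_t}\le\eta/P_t$ makes this at most an $\eta$-fraction of what a full line would carry.
\item The part of $\bar A$ lying on full coordinate lines is handled slice by slice, using Lemma~\ref{exinc:closure-slices}.
\end{itemize}
The constants $c_e=1+4^e\sigma_e$ appear to count the arrangement of maximal coordinate flats of each type $J\subseteq[d]$ inside $\bar A$.

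\textbf{Step 3 (decomposition via duality; the main obstacle).} I want to show that $x$ lies in the span $W$ of $i$-line codewords supported on full $i$-lines contained in $\bar A$. Charging each such line its length then gives cost at most $d|\bar A|\le dC|Z|$, which yields $\rho$.

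By Lemma~\ref{cert:product-kernel}, $x\perp\bigotimes_i C_i^\perp$. Since $W$ is supported on $\bar A$, $W^\perp$ contains every function vanishing on $\bar A$, so $W^\perp$ is determined by its restrictions to $\bar A$. These restrictions are exactly the $y\in\overline E^{\bar A}$ whose restriction to each full line in $\bar A$ lies in the corresponding dual code. It therefore suffices to show that every such $y$ agrees on $\bar A$ with some element of $\bigotimes_i C_i^\perp$. Concretely, this element is a weighted polynomial with individual degrees below $n_i-k_i$.

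I expect this extension to be the hard step. I would attack it in two stages.
\begin{itemize}
\item First interpolate $y$ on the point stratum. Using Theorem~\ref{exinc:main}, the points of $\bar A^\circ$ impose independent conditions on polynomials of individual degree well below $\epsilon n_i$.
\item Then glue the data on the maximal flats. On each flat, $y$ is already polynomial of the right degree in the flat directions, because its lines lie in the dual codes. To glue, I would use Castelnuovo--Mumford regularity bounds for unions of these coordinate flats, in the style of Lemma~\ref{exag:expected-component-hilbert}. The goal is that compatible local polynomials on a union of flats extend globally with individual degree at most $\ell+O(\text{arrangement})$, which stays below $\epsilon n_i$ by the choice of $\eta$.
\end{itemize}
The delicate points are twofold. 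One is keeping individual degree bounds rather than total degree bounds. The other is ensuring that the compatibility on intersections of flats, which comes from $\ell$-closedness of $\bar A$, suffices for gluing. Once the extension exists, duality gives $x\in W$, and the cost bound together with the choice of $\rho$ finishes the proof. Every constant depends only on $(t,K,\epsilon)$, so the bound is uniform in the lengths and in the field extension, as the theorem requires.
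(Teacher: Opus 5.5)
Your architecture is essentially the paper's. It has a support threshold, the closure $T=\operatorname{cl}_\ell(Z)$ with $\ell=\lfloor\theta m\rfloor$, and the duality reduction of your Step~3, which is exactly Theorem~\ref{exloc:supported-lines}. It then needs an extension theorem proved from the incidence bound plus regularity of unions of coordinate flats, which is Theorem~\ref{exloc:line-interpolation}. Your single threshold $|Z|\geq\rho d|\Omega|$ is fine. It even absorbs the paper's separate small-$m$ case, since a nonzero $x$ below it forces $m>4c_t/(\epsilon\theta)$ and $|Z|<(\ell+1)^d$.

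\textbf{The gap is in Step~2.} The decomposition costs at most $d|T|$, so you need $|T|\leq C|Z|$ \emph{linearly in $|Z|$}. The mechanism you propose cannot give this. Covering $T^\circ$ by curves cut out by box-degree-$\ell$ polynomials and applying Theorem~\ref{exinc:main} at best bounds $|T^\circ|$ in terms of $m$, $\ell$ and the degrees of the cutting polynomials. That bound is of order a constant times $m^d$ and does not see $|Z|$ at all; for $|Z|=1$ one has $T=Z$.

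The missing idea is elementary linear algebra, as in Lemma~\ref{exloc:locator-size}:
\begin{itemize}
\item A point $z$ lies in $\operatorname{cl}_\ell(Z)$ iff the evaluation functional $e_z$ on $\mathcal P_\ell$ lies in $L=\langle e_w:w\in Z\rangle$, and $\dim L\leq|Z|$.
\item On every subgrid $\prod_iT_i$ with $|T_i|=\ell+1$, the $(\ell+1)^d$ evaluation functionals form a basis by tensor Lagrange interpolation. Hence at most $|Z|$ of them lie in $L$.
\item Averaging over these subgrids gives $|T|\leq N|Z|/(\ell+1)^d\leq(K/\theta)^d|Z|$.
\end{itemize}
Neither incidence geometry nor the constants $c_e$ enter at this point.

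\textbf{Your Step~3 is aimed correctly but misplaces its inputs.}
\begin{itemize}
\item The point-interpolation theorem (Theorem~\ref{exag:point-interpolation}) is applied not only to $T^\circ$. For every orientation $J$ it is applied to $B_J=F_J^\circ$, where $F_J$ is the set of base points of full $J$-flats in $T$ and is $\ell$-closed on the complementary grid (Lemma~\ref{exloc:maximal-flat-indexing}). This is why the incidence premise is needed on all subtuples.
\item These sets interpolate in individual degree $R\leq dK\eta m+1$. That degree feeds the regularity bound $c_d(R+1)$ for the reduced union $Y$ of all maximal flats (Theorem~\ref{exloc:flat-regularity}).
\item The section is defined on all of $Y$ at once, not points first and flats afterwards. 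It then extends by $H^1$-vanishing with multidegree twists $v_i=r_i-1\geq c_d(R+1)-1$, where $r_i=n_i-k_i$ (Theorem~\ref{exloc:anisotropic-extension}).
\item So the extension degree is $c_d$ times a quantity on the $\eta m$ scale, not $\ell+O(1)$. This is exactly where $\eta\leq\epsilon/(2dc_dK)$ and $m\geq 4c_d/\epsilon$ are used.
\item Compatibility on intersections of flats does not come from $\ell$-closedness. It comes from agreement of grid values together with injectivity of evaluation in degrees $<r_i\leq n_i$. Gluing on the reduced union then uses distributivity of monomial ideals (Lemma~\ref{exloc:scheme-gluing}).
\end{itemize}

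Finally, you do not need to run this geometry over $\overline E$. The supported-line identity is cut out by matrices over $E_0$, so you can prove it over $\overline{\mathbb F}_2$ and extend scalars by Lemma~\ref{cert:scalar-extension}.
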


\subsection{Point interpolation from weighted incidence}
Let $a\geq1$, let $Z\subseteq\mathbb P^a(k)$ be finite and reduced,
and let $L$ be a line bundle on $\mathbb P^a$. For an integer $j\geq0$,
$L$ separates ordinary $j$-jets on $Z$ if
\[
 H^0(\mathbb P^a,L)\longrightarrow
 \bigoplus_{z\in Z}L_z/\mathfrak m_z^{j+1}L_z
\]
is onto. It separates $P$-Frobenius jets, for an integer $e\geq1$ and $P=2^e$, if the same
map with $\mathfrak m_z^{[P]}=(f^P:f\in\mathfrak m_z)$ in place of
$\mathfrak m_z^{j+1}$ is onto.

\begin{lemma}\label{exag:external-multipoint-jets}
\label{exag:external-multipoint-adjoint}
Let $a\geq2$, let $Z\subset\mathbb P^a(k)$ be finite with at least two
points, and let $\Lambda>0$ satisfy $\mu_Z(C)\leq\Lambda\deg C$
for all integral projective curves $C\subseteq\mathbb P^a_k$.
For each $0<\delta<1/\Lambda$, there is $d_0$ such that
$\mathcal O(d)$ separates $\lfloor\delta d\rfloor$-jets for $d\geq d_0$.
The threshold may depend on $Z$.

For any $a,l\geq1$ and nonempty finite reduced $Z\subset\mathbb P^a(k)$,
if $m,e\geq1$, $m<2^e-1$, and $\mathcal O(lm)$ separates
$2^e$-Frobenius jets on $Z$, then $\mathcal O(l-a-1)$ separates points
of $Z$.
\end{lemma}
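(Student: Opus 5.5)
The statement has two independent parts, and I would prove them by separate standard mechanisms. The first is a Seshadri-constant/vanishing argument on the blow-up of $Z$. The second is a Frobenius-trace (Cartier operator) argument on $\mathbb P^a$.

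\emph{First part.} Let $\pi:Y\to\mathbb P^a_k$ be the blow-up at the finite reduced set $Z$, with exceptional divisor $E=\bigsqcup_{z\in Z}E_z$, $E_z\cong\mathbb P^{a-1}$, and put $H=\pi^*\mathcal O(1)$.

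The first step is to show that $H-sE$ is nef for every $0\leq s\leq 1/\Lambda$.
\begin{itemize}
\item For a curve $\Gamma\subseteq E_z$, one has $-E\cdot\Gamma>0$ and $H\cdot\Gamma=0$.
\item For any other integral curve, $\Gamma$ is the strict transform $\widetilde C$ of an integral $C\subseteq\mathbb P^a$. Then
\[
 (H-sE)\cdot\widetilde C=\deg C-s\,\mu_Z(C)\geq(1-s\Lambda)\deg C\geq0,
\]
using $E\cdot\widetilde C=\mu_Z(C)$, which follows from the branch formula \eqref{exag:branch-formulas}.
\end{itemize}

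Next, fix $c\geq1$ with $A=cH-E$ ample on $Y$; this is possible for a fixed finite $Z$. Fujita's vanishing theorem, which holds in every characteristic, gives $m_0$ with $H^1(Y,mA+N)=0$ for all $m\geq m_0$ and all nef $N$.

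Pick $\delta<\delta'<1/\Lambda$ and a small $\eta>0$, and set $m=\lceil\eta d\rceil$ and $j=\lfloor\delta d\rfloor$. Decompose
\[
 dH-(j+1)E=mA+\bigl((d-mc)H-(j+1-m)E\bigr).
\]
For $\eta$ small and $d$ large, the ratio $(j+1-m)/(d-mc)$ is at most $\delta'$. By the first step the second summand is therefore nef, and $m\geq m_0$. Hence $H^1(Y,dH-(j+1)E)=0$.

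Finally, since $\pi_*\mathcal O_Y(-rE)=\mathfrak m_Z^r$ and $R^i\pi_*\mathcal O_Y(-rE)=0$ for $i>0$ and $r\geq0$ on a point blow-up, the Leray spectral sequence gives $H^1(\mathbb P^a,\mathcal O(d)\otimes\mathfrak m_Z^{j+1})=0$. This vanishing is exactly the surjectivity onto $j$-jets. The threshold $d_0$ depends on $Z$ through $c$ and $m_0$, as allowed.

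The hypothesis $|Z|\geq2$ plays no essential role in this argument; I would only check that it is not needed for the ampleness of $A$.

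\emph{Second part.} Put $P=2^e$ and use the $e$-fold Cartier operator
\[
 \mathrm{Tr}:F^e_*\bigl(\omega\otimes\mathcal O(P N)\bigr)\to\omega\otimes\mathcal O(N),\qquad \omega=\mathcal O(-a-1).
\]
Its two key properties are the following.
\begin{itemize}
\item It is $\mathcal O$-linear in the sense $\mathrm{Tr}(g^Ph)=g\,\mathrm{Tr}(h)$, so it kills $\mathfrak m_w^{[P]}$-multiples modulo $\mathfrak m_w$ at $w$.
\item In local coordinates at $z$, it sends $x_1^{P-1}\cdots x_a^{P-1}\,dx$ to $dx$, which is nonzero at $z$.
\end{itemize}

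Taking $N=l$, it suffices, for each $z\in Z$, to produce a global section $\sigma$ of $\mathcal O(Pl-a-1)$ with two properties:
\begin{itemize}
\item at every $w\ne z$, $\sigma\in\mathfrak m_w^{[P]}$;
\item at $z$, $\sigma\equiv x^{P-1}$ modulo $\mathfrak m_z^{[P]}$ up to a unit.
\end{itemize}
Then $\mathrm{Tr}(\sigma)\in H^0(\mathcal O(l-a-1))$ vanishes on $Z\setminus\{z\}$ and not at $z$.

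I would build $\sigma$ as a product $s\cdot\tau$. Here $s\in H^0(\mathcal O(lm))$ realizes a prescribed Frobenius jet from the hypothesis, and $\tau$ is a fixed form of the complementary degree $Pl-a-1-lm$ that is a unit at every point of $Z$, for example a power of a linear form missing $Z$. The jet prescribed for $s$ is the corrected one: $x^{P-1}\tau^{-1}$ at $z$ and $0$ at the other points. Since $\tau$ is a unit near $Z$, the product has the required jets.

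\emph{Main obstacle.} The step I expect to be delicate is this degree bookkeeping. It needs $Pl-lm\geq a+1$ so that $\tau$ exists, and it should be exactly where $m<2^e-1$ is used. The reason is that $l(P-m)\geq 2l$, so there is room precisely when $l$ is not too small. In the complementary range $l\leq a$, I would either absorb the deficit by choosing the target jet of $s$ in a twisted form, or show that the hypothesis forces $|Z|$ so small that the conclusion is vacuous or degenerate. I would try the twisted-jet route first.
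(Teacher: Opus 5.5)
\textbf{First part.} Your argument is correct and follows a different route from the paper. The paper cites the asymptotic duality theorem of \cite{DiPasqualeNguyenSeceleanu2023}, which identifies $\lim_d s_Z(d)/d$ with $\inf_{\mu_Z(C)>0}\deg C/\mu_Z(C)$. It then has to check that the source's homogeneous quotients and multiplicities agree with local jets and Hilbert--Samuel multiplicities. Your blow-up argument proves the needed one-sided statement directly. It uses three ingredients: nefness of $H-sE$ for $0\le s\le1/\Lambda$, which follows from $E\cdot\widetilde C=\mu_Z(C)$; Fujita vanishing, which holds in all characteristics; and $\pi_*\mathcal O_Y(-rE)=\mathfrak m_Z^r$ with vanishing higher direct images for $r\ge0$. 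This is more self-contained. The price is a $Z$-dependent threshold through $c$ and $m_0$, which the statement allows.

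\textbf{Second part.} There is a genuine gap, exactly at the step you flagged. Your trace construction is the same mechanism the paper uses, but it needs $\deg\tau=l(P-m)-a-1\ge0$, and you never establish this. Your preferred repair, twisting the target jet, cannot work. The section $\sigma$ must be a global section of $\mathcal O(Pl-a-1)$. If $lm>Pl-a-1$, no choice of jet for $s\in H^0(\mathcal O(lm))$ produces such a section, because jet separation by a bundle of larger degree says nothing about a smaller one.

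The paper creates the missing room by the Frobenius-jet amplification lemma of \cite{MustataSchwede2014}. If $L^m$ separates $P$-Frobenius jets, then $L^{m_s}$ separates $P^s$-Frobenius jets, where $m_s=m(P^s-1)/(P-1)$. Since $m<P-1$, this gives $P^s-1-m_s\to\infty$. For large $s$ there is then a section of $\omega\otimes L^{P^s-m_s}$ nonvanishing on $Z$, and the $se$-fold trace gives the conclusion.

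Alternatively, your second fallback can be made precise, but as a constraint on $l$, not on $|Z|$. Surjectivity at a single $z\in Z$ onto $L_z/\mathfrak m_z^{[P]}L_z\cong k[x]/(x_1^P,\ldots,x_a^P)$ requires hitting $x_1^{P-1}\cdots x_a^{P-1}$, which has degree $a(P-1)$. But forms of degree $lm$ restrict in an affine chart centred at $z$ to polynomials of total degree at most $lm$. Hence the hypothesis itself forces $lm\ge a(P-1)$. Since $m\le P-2$, this gives $l\ge a+1$, and then $\deg\tau\ge2l-a-1\ge a+1>0$. So the problematic range never occurs. With either addition, your single-level trace argument is complete.
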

\begin{proof}
For the first assertion, \cite[Definitions 5.1, 5.3 and Theorem 5.7]{DiPasqualeNguyenSeceleanu2023}
give
\[
 \lim_{d\to\infty}\frac{s_Z(d)}d
 =\inf_{\mu_Z(C)>0}\frac{\deg C}{\mu_Z(C)}\geq\Lambda^{-1},
\]
where $s_Z(d)$ is the largest simultaneously separated jet order.
The source's homogeneous quotients coincide with the stated local jets:
at $z=[1:0:\cdots:0]$, degree-$d$ monomials in
$k[X_0,\ldots,X_a]/(X_1,\ldots,X_a)^{j+1}$ dehomogenize to
$u^\alpha$ with $|\alpha|\leq j$, whenever $j\leq d$.
A line through two points gives $\Lambda\geq2$, so the required orders
$\lfloor\delta d\rfloor$ satisfy this inequality. The homogeneous
multiplicity at a point of a curve agrees with its local
Hilbert--Samuel multiplicity: in the chart $X_0\ne0$, localization at
the homogeneous point prime is the local curve ring after the residue
field extension $k\subset k(X_0)$, which preserves lengths of successive
maximal-ideal quotients. Thus the source's curve infimum has the stated
normalization. Its Section 5 has no characteristic-zero restriction.

For the second assertion we use the simultaneous-jet implication in the
Frobenius trace argument of \cite{MustataSchwede2014},
with $X=\mathbb P^a$, $p=2$, and $L=\mathcal O(l)$.
Here is the precise application of its amplification and trace argument.
Choose the largest $e_0\geq e$ for which $L^m$ separates
$2^{e_0}$-Frobenius jets; it exists because the target dimension
$|Z|2^{ae_0}$ is bounded by the fixed dimension of $H^0(L^m)$.
By the Frobenius-jet amplification lemma in that paper,
$L^{m_s}$ separates $P_s$-Frobenius jets for
\[
 P_s=2^{se_0},\qquad m_s=m\frac{P_s-1}{2^{e_0}-1},\qquad
 P_s-1-m_s\longrightarrow\infty.
\]
For large $s$, the bundle
$\omega_X\otimes L^{P_s-m_s}=\mathcal O(l(P_s-m_s)-a-1)$
is globally generated. Choose a section nonzero at the finite set $Z$.
Multiplication by it shows that $\omega_X\otimes L^{P_s}$ separates
$P_s$-Frobenius jets. The Frobenius trace diagram in the cited proof
then makes restriction of $\omega_X\otimes L$ onto $Z$ surjective.
The identity $\omega_{\mathbb P^a}=\mathcal O(-a-1)$
\cite[II, Example 8.20.1]{Hartshorne1977} yields the conclusion.
This application uses the proved simultaneous-jet implication, rather
than replacing the different one-point hypotheses of the numbered theorem.
\end{proof}

\begin{theorem}\label{exag:point-interpolation}
Let $a\geq1$, let $Z\subset\mathbb P^a(k)$ be finite and reduced, and
let $\Lambda>0$ satisfy $\mu_Z(C)\leq\Lambda\deg C$ for every integral
curve $C\subseteq\mathbb P^a$. If $b\geq0$ is an integer and
$b+a+1>a\Lambda$, then
\[
 H^0(\mathbb P^a,\mathcal O(b))\twoheadrightarrow H^0(Z,\mathcal O_Z(b)).
\]
In particular $b=\lceil a\Lambda\rceil$ suffices.
\end{theorem}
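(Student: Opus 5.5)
The plan is to deduce the statement from the two halves of Lemma~\ref{exag:external-multipoint-jets}. Ordinary jets of high order coming from the weighted incidence bound will be converted into Frobenius jets, and the Frobenius trace implication then descends to separation of points in degree $b$. Put $l=b+a+1\geq1$; the hypothesis reads $l>a\Lambda$.

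First dispose of the degenerate cases. If $|Z|=1$, surjectivity onto a single point holds for every $b\geq0$, using a form nonvanishing at that point. If $a=1$, apply the hypothesis to the integral curve $C=\mathbb P^1$ of degree one. This gives $|Z|=\mu_Z(\mathbb P^1)\leq\Lambda$, so $b+2>\Lambda\geq|Z|$, i.e.\ $b\geq|Z|-1$. Binary forms of degree $b\geq|Z|-1$ interpolate arbitrary values on $|Z|$ distinct points by Lagrange interpolation. Hence assume $a\geq2$ and $|Z|\geq2$, which are exactly the hypotheses of the first part of Lemma~\ref{exag:external-multipoint-jets}.

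Since $l>a\Lambda$, choose $\delta$ with $a/l<\delta<1/\Lambda$. The first part of the lemma supplies $d_0$ such that $\mathcal O(d)$ separates $\lfloor\delta d\rfloor$-jets on $Z$ for all $d\geq d_0$. To pass to Frobenius jets, I will use the elementary containment, valid in a regular local ring of dimension $a$ in characteristic two,
\[
 \mathfrak m_z^{a(P-1)+1}\subseteq\mathfrak m_z^{[P]},\qquad P=2^e.
\]
It follows from the pigeonhole principle on monomials in a regular system of parameters. Consequently, surjectivity onto $\bigoplus_z L_z/\mathfrak m_z^{j+1}L_z$ with $j\geq a(P-1)$ implies surjectivity onto the coarser quotients $L_z/\mathfrak m_z^{[P]}L_z$.

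Take $e\geq2$ and $m=2^e-2<2^e-1$, and set $d=lm$. For all sufficiently large $e$ we need $d\geq d_0$ and
\[
 \lfloor\delta l(2^e-2)\rfloor\geq a(2^e-1).
\]
The left side is at least $\delta l(2^e-2)-1$, and $\delta l>a$ strictly, so the difference $(\delta l-a)2^e-2\delta l+a-1$ tends to infinity; hence the inequality holds for large $e$. For such $e$, the bundle $\mathcal O(lm)$ separates $2^e$-Frobenius jets on $Z$. The second part of Lemma~\ref{exag:external-multipoint-jets} then shows that $\mathcal O(l-a-1)=\mathcal O(b)$ separates points of $Z$. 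This is the asserted surjectivity $H^0(\mathcal O(b))\to H^0(Z,\mathcal O_Z(b))$, since $Z$ is reduced and each stalk quotient is $L_z/\mathfrak m_zL_z$. Finally, $b=\lceil a\Lambda\rceil$ gives $b+a+1\geq a\Lambda+a+1>a\Lambda$.

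The main point needing care is the bookkeeping between the two halves of the lemma. The strict inequality $l>a\Lambda$ is used exactly once, to fit $\delta$ strictly between $a/l$ and $1/\Lambda$. The Frobenius-jet threshold grows like $a\cdot2^e$, while the available jet order grows like $\delta l\cdot2^e$; choosing $m$ just below $2^e-1$ makes these comparable, so the strict margin $\delta l-a>0$ absorbs the floor and the additive constants.
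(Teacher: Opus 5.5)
Your proposal is correct and follows the paper's proof: it uses the same degenerate cases ($|Z|\leq1$, and $a=1$ via $\mu_Z(\mathbb P^1)=|Z|\leq\Lambda$), the same choice $a/l<\delta<1/\Lambda$, the containment $\mathfrak m_z^{a(P-1)+1}\subseteq\mathfrak m_z^{[P]}$ to pass from ordinary to Frobenius jets, and the second half of Lemma~\ref{exag:external-multipoint-jets} to descend to degree $b=l-a-1$. The only difference is cosmetic: you take $m=2^e-2$, whereas the paper takes the smallest admissible $m_e=\lceil a(P-1)/(l\delta)\rceil$, and both satisfy $1\leq m<2^e-1$ and $\lfloor\delta lm\rfloor\geq a(P-1)$ for large $e$ (the empty case $Z=\varnothing$, which you omit, is trivial).
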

\begin{proof}
The cases $|Z|\leq1$ are immediate. If $a=1$, the hypothesis applied to
$\mathbb P^1$ gives $|Z|\leq\Lambda$, and the integer inequality
$b>\Lambda-2$ gives $b\geq|Z|-1$. The product of the linear forms
vanishing at all but one point, multiplied by a form nonzero throughout
$Z$ to reach degree $b$, interpolates that point.

Suppose $a\geq2$ and put $l=b+a+1>a\Lambda$. Choose
$a/l<\delta<1/\Lambda$. For $P=2^e$ set
\[
 m_e=\left\lceil\frac{a(P-1)}{l\delta}\right\rceil.
\]
For large $e$, $1\leq m_e<P-1$, $lm_e\geq d_0$, and
$\lfloor\delta lm_e\rfloor\geq a(P-1)$.
At a point of the smooth $a$-dimensional space $\mathbb P^a$,
\[
 \mathfrak m_z^{a(P-1)+1}\subseteq\mathfrak m_z^{[P]}:
\]
in local affine coordinates, each monomial of degree above $a(P-1)$
has some exponent at least $P$. Thus ordinary jet separation from
Lemma~\ref{exag:external-multipoint-jets} implies that
$\mathcal O(lm_e)$ separates $P$-Frobenius jets. The second assertion
of that lemma gives separation by $\mathcal O(l-a-1)=\mathcal O(b)$.
The configuration-dependent threshold $d_0$ affects the auxiliary
choice of $e$, not the asserted degree $b$.
\end{proof}

\subsection{Regularity and interpolation on coordinate-flat arrangements}
Write $X_I=\prod_{i\in I}\mathbb P^1_k$, $X_\varnothing=\operatorname{Spec}k$,
and $X_e=X_{[e]}$. For $v\in\mathbb Z^I$, put
$\mathcal O_{X_I}(v)=\bigotimes_{i\in I}\operatorname{pr}_i^*\mathcal O(v_i)$.
The ambient projective space of the Segre embedding of $X_e$ has
homogeneous coordinate ring $S_e=k[z_U:U\subseteq[e]]$. The coordinate
$z_U$ restricts to $\prod_{i\notin U}X_i\prod_{i\in U}Y_i$ on $X_e$;
the hyperplane bundle restricts to
$\mathcal O_{X_e}(1,\ldots,1)$ \cite[II, Exercise 4.9]{Hartshorne1977}.
Use graded regularity and saturated ambient ideals as in
Appendix~\ref{app:regularity}.

\begin{definition}\label{exloc:flat-arrangement}
\label{exloc:individual-interpolation}\label{exloc:segre-definition}
For $J\subseteq[e]$ and $b\in X_{[e]\setminus J}(k)$, a coordinate flat
is $X_J\times\{b\}$. A coordinate-flat arrangement is a reduced union
\[
 Y=\bigcup_{J\in\mathcal F}(X_J\times B_J)\subseteq X_e,
\]
where $\mathcal F\subseteq2^{[e]}$ and each $B_J$ is a finite reduced
subscheme of the indicated complementary product. A finite reduced
$B\subseteq X_I$ \emph{interpolates in individual degree $R$} if
\[
 H^0(X_I,\mathcal O((R)_{i\in I}))\twoheadrightarrow
 H^0(B,\mathcal O((R)_{i\in I})|_B).
\]
The definition is independent of fiber trivializations. For $I=\varnothing$
the only possibilities are the empty scheme and $\operatorname{Spec}k$.
\end{definition}

\begin{lemma}\label{exloc:segre-and-products}
For every $n\geq0$, $(S_e)_n\to H^0(X_e,\mathcal O(n,\ldots,n))$ is
onto. If $R\geq1$ and $B_J$ interpolates in individual degree $R$, then
\[
 \operatorname{reg}I_{X_J\times B_J}\leq\max(R+1,e).
\]
If $r,l\geq1$, $e=r+l$, and $Z\subseteq X_r$ is nonempty with nonzero
proper ambient ideal satisfying $\operatorname{reg}I_Z\leq a$, then
\[
 \operatorname{reg}I_{X_l\times Z}\leq\max(a,e).
\]
Also $\operatorname{reg}I_{X_e}\leq e$ when $I_{X_e}\ne0$.
\end{lemma}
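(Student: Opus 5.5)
The plan is to prove the four assertions in order, working throughout with Castelnuovo--Mumford regularity of saturated ideals in the Segre coordinate ring $S_e$, in the graded conventions of Appendix~\ref{app:regularity}. The recurring tool is the translation between ambient regularity and sheaf cohomology on $X_e$: for a closed subscheme $W\subseteq X_e$ with saturated ideal $I_W\subseteq S_e$, $\operatorname{reg}I_W\leq r$ holds once $H^i(\mathbb P^N,\mathcal I_W(r-i))=0$ for $i\geq1$. Using the exact sequence $0\to\mathcal I_{X_e}\to\mathcal I_W\to\mathcal I_{W/X_e}\to0$, this reduces to three inputs: vanishing for the Segre variety itself, surjectivity $H^0(\mathcal O_{\mathbb P^N}(n))\to H^0(\mathcal O_{X_e}(n,\ldots,n))$, and Künneth vanishing on the product of $\mathbb P^1$'s.

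\emph{Surjectivity.} I will observe that every section of $\mathcal O(n,\ldots,n)$ on $X_e$ is a multihomogeneous polynomial of degree $n$ in each pair $(X_i,Y_i)$. Each multidegree-$(n,\ldots,n)$ monomial factors as a product of $n$ monomials of multidegree $(1,\ldots,1)$, and each of these factors is some $z_U$. Hence $(S_e)_n$ maps onto the sections.

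\emph{Regularity of $X_e$.} Next I will show $H^i(X_e,\mathcal O(n,\ldots,n))=0$ for $i\geq1$ and $n\geq-1$ by Künneth, since $H^1(\mathbb P^1,\mathcal O(n))=0$ for $n\geq-1$. Then $H^1(\mathcal I_{X_e}(n))=0$ follows from surjectivity, and $H^i(\mathcal I_{X_e}(r-i))=H^{i-1}(\mathcal O_{X_e}(r-i))$ for $i\geq2$. The latter vanishes once $r-i\geq-1$, or once $i-1>e=\dim X_e$. Checking $i\leq e+1$ gives $\operatorname{reg}\leq e$.

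\emph{Flats $X_J\times B_J$.} Write $W=X_J\times B_J$ with $I=[e]\setminus J$. Then $\mathcal O_W(n,\ldots,n)$ is a direct sum over the points of $B_J$ of $\mathcal O_{X_J}(n,\ldots)$, so its higher cohomology vanishes for $n\geq-1$. For surjectivity of $H^0(\mathcal O_{X_e}(n))\to H^0(\mathcal O_W(n))$, I will factor a section as an $X_J$-part tensored with a function on $B_J$. The map $H^0(\mathcal O_{X_I}(n))\to H^0(B_J,\mathcal O(n))$ is onto for $n\geq R$: individual-degree-$R$ interpolation holds by hypothesis, and for larger $n$ one multiplies by a form nonvanishing on the finite set $B_J$. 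Künneth then gives surjectivity, hence $H^1(\mathcal I_W(n))=0$ for $n\geq R$. The higher groups vanish as in the previous step. Combining with regularity $e$ of $X_e$ gives $\max(R+1,e)$.

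\emph{The product $X_l\times Z$.} The argument is analogous. On the factor $X_l$ there is no cohomology in degrees $\geq-1$, so Künneth expresses $H^i(\mathcal I_{X_l\times Z/X_e}(n))$ through $H^0(\mathcal O_{X_l}(n))\otimes H^i(\mathcal I_{Z/X_r}(n))$. The bound $\operatorname{reg}I_Z\leq a$ transfers to the relative ideal sheaf on $X_r$ via the sequence for $X_r$, whose regularity is at most $r$. The main obstacle is the bookkeeping in this step: matching ambient regularity in two different Segre spaces through the relative ideal sheaves, and especially handling $H^1$ via surjectivity of sections rather than vanishing alone.
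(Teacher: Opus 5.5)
Your proof is correct and rests on the same ingredients as the paper's: the chain factorization of multidegree-$(n,\ldots,n)$ monomials into Segre coordinates, K\"unneth vanishing on products of $\mathbb P^1$'s, and the cohomological criterion of Lemma~\ref{exloc:external-regularity}. The only difference is bookkeeping. You route everything through $0\to\mathcal I_{X_e}\to\mathcal I_W\to\mathcal I_{W/X_e}\to0$, transferring $\operatorname{reg}I_Z\le a$ to $\mathcal I_{Z/X_r}$ via $\operatorname{reg}I_{X_r}\le r$, whereas the paper uses $0\to\mathcal I_W\to\mathcal O_{\mathbb P}\to\mathcal O_W\to0$ directly together with the $\mathcal O_Z$-consequences of that lemma. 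In your version the case $i=1$ needs no separate surjectivity argument. The twists $m-i\le-2$, where $\mathcal O_{X_l}(m-i)$ has higher cohomology and your K\"unneth reduction does not apply, are killed by vanishing above $\dim X_e$.
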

\begin{proof}
For $0\leq a_i\leq n$, choose a chain $U_1\subseteq\cdots\subseteq U_n$
in which $i$ occurs $a_i$ times. The product of its Segre coordinates
restricts to $\prod_iX_i^{n-a_i}Y_i^{a_i}$, proving the first assertion.
These are all the multihomogeneous monomials of degree $(n,\ldots,n)$.
For a finite $B_J$, multiplication by a section of $\mathcal O(1,\ldots,1)$
nonzero at every point extends interpolation from degree $R$ to all
$n\geq R$. Tensoring with the free-coordinate sections proves ambient
restriction surjectivity onto $W=X_J\times B_J$ in those degrees.
By \eqref{exloc:external-projective-cohomology}--\eqref{exloc:external-kunneth},
\[
 H^u(W,\mathcal O_W(n))=0\quad(u>0,n\geq-1),\qquad
 H^u(W,\mathcal O_W(n))=0\quad(u>\dim W).
\]
Set $m=\max(R+1,e)$. The ideal sequence at twist $m-i$ has zero
$H^i(\mathcal I_W(m-i))$: for $i=1$ use restriction; for $i>1$ use
the displayed vanishings, since either $i-1>\dim W$ or $m-i\geq-1$.
Ambient intermediate cohomology vanishes, and its top group also
vanishes because the twist is at least minus its projective dimension.
Lemma~\ref{exloc:external-regularity} gives the bound. Taking $W=X_e$
and $m=e$ proves its asserted bound by the same calculation.

For $W=X_l\times Z$, set $m=\max(a,e)$. Restriction in degree $m-1$
is surjective by the first assertion and
Lemma~\ref{exloc:external-regularity} for $Z$. For higher ideal
cohomology set $j=i-1>0$ and $n=m-j-1$. If $n\geq0$, K\"unneth reduces
$H^j(W,\mathcal O_W(n))$ to
$H^0(X_l,\mathcal O(n,\ldots,n))\otimes H^j(Z,\mathcal O_Z(n))=0$,
since $n\geq a-j-1$. If $n=-1$, the $X_l$ factor has zero cohomology
in every degree. If $n\leq-2$, then $j\geq m+1>e\geq\dim W$.
The same ideal criterion proves the product bound. Empty and zero
ideals have their stated regularity conventions and are handled directly.
\end{proof}
\begin{lemma}\label{exloc:star-initial-ideal}
Let $e\geq1$ be an integer,
let $c\in X_e(k)$, and let $\varnothing\ne\mathcal F\subseteq2^{[e]}$.
Let $Y_c\subseteq X_e$ be the reduced union of the coordinate flats
through $c$ whose free sets belong to $\mathcal F$, and put
$\mathcal D=\bigcup_{J\in\mathcal F}2^J$.
After independent projective coordinate changes sending $c$ to
$([1:0],\ldots,[1:0])$, there exists a term order on
$S_e=k[z_U:U\subseteq[e]]$ for which
\begin{equation}\label{exloc:star-initial-formula}
 \operatorname{in}(I_{Y_c})
 =(z_U:U\notin\mathcal D)
 +(z_Uz_V:U,V\in\mathcal D,\ U\not\subseteq V,\ V\not\subseteq U).
\end{equation}
Consequently $\reg(S_e/I_{Y_c})\leq e+1$.
\end{lemma}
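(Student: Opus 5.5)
After changing coordinates so that $c=([1:0],\ldots,[1:0])$, I will read everything off from the Segre parametrization. Write $X_i,Y_i$ for the coordinates on the $i$th factor, so $c$ is $Y_i=0$ for all $i$. The flat through $c$ with free set $J$ is $\{Y_i=0: i\notin J\}$. On it the Segre coordinate $z_U$ restricts to $\prod_{i\notin U}X_i\prod_{i\in U}Y_i$, which vanishes identically unless $U\subseteq J$. Hence every $z_U$ with $U\notin\mathcal D$ lies in $I_{Y_c}$. The remaining coordinates $z_U$, $U\in\mathcal D$, cut out in $\mathbb P(k^{\mathcal D})$ the reduced union of the Segre images of the flats. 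Here the Segre image of the $J$-flat is the Segre variety of $X_J$ embedded in the linear span $\langle z_U:U\subseteq J\rangle$.

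I would take a term order in which the $z_U$ with $U\notin\mathcal D$ are eliminated linearly. On $\mathcal D$, I would use a degree reverse lexicographic order refining a linear extension of reverse inclusion, so that larger sets $U$ are smaller variables. The key polynomials are the Segre binomials
\[
 z_Uz_V-z_{U\cap V}z_{U\cup V}\qquad(U,V\in\mathcal D),
\]
which vanish on each flat whenever $U\cup V$ is contained in that flat's free set. When $U\cup V\notin\mathcal D$, the term $z_{U\cup V}$ already lies in the linear part, so $z_Uz_V$ itself is in the ideal modulo the linear generators. When $U\cup V\in\mathcal D$, the chosen order is meant to make $z_Uz_V$ the leading term for incomparable $U,V$. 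Thus the right side of \eqref{exloc:star-initial-formula} is contained in $\operatorname{in}(I_{Y_c})$.

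For equality I will compare Hilbert functions. Denote the monomial ideal on the right by $M$. Modulo $M$, the standard monomials in degree $n$ are chains $U_1\subseteq\cdots\subseteq U_n$ in $\mathcal D$. Such a chain lies entirely in $2^J$ for the flat $J$ containing its top element. So the number of standard monomials equals the number of multidegree-$(n,\ldots,n)$ monomials on $X_e$ supported on some flat, which is $\dim$ of the span of their restrictions. I must check that the chain monomials restrict to linearly independent functions on $Y_c$, and that this count is exactly $H_{Y_c}(n)$. For this I would use the first assertion of Lemma~\ref{exloc:segre-and-products}: a chain with multiplicities $a_i$ restricts to $\prod X_i^{n-a_i}Y_i^{a_i}$, and distinct chains give distinct monomials. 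Then $\dim(S_e/M)_n\le H_{Y_c}(n)$, while $M\subseteq\operatorname{in}(I_{Y_c})$ gives the reverse inequality, so the two ideals coincide. The main obstacle is the leading-term claim for incomparable pairs. The fallback is to argue by this dimension count directly: choose any order, note that each incomparable product reduces to a combination of chain monomials, and conclude equality from the counts.

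Finally, $M$ is the sum of a linear ideal and the quadratic Stanley--Reisner ideal of the order complex of the poset $\mathcal D$, which is a union of Boolean lattices and so is a cone with apex $\varnothing$. The order complex of a poset is flag, and here it is shellable, being a union of order complexes of Boolean lattices glued along lower ideals; shellability should be checkable via a lexicographic shelling. Hence $S/M$ is Cohen--Macaulay of dimension at most $e+1$ with $h$-vector in degrees at most $e$, so $\operatorname{reg}(S/M)\le e$. Regularity upper semicontinuity under Gr\"obner degeneration then gives $\operatorname{reg}(S_e/I_{Y_c})\le\operatorname{reg}(S_e/M)\le e+1$. If shellability is awkward, I would instead bound the regularity of the Stanley--Reisner ring by Hochster's formula: links in an order complex are joins of order complexes of intervals, whose homology vanishes above dimension $e-1$, giving the same bound.
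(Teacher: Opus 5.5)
Your strategy matches the paper's proof. You show that the right-hand side $J_0$ of \eqref{exloc:star-initial-formula} lies in $\operatorname{in}(I_{Y_c})$ using the linear forms and the Segre binomials. You get equality because the standard monomials of $J_0$, which are the multichains in $\mathcal D$, restrict to linearly independent functions on $Y_c$. You then bound regularity through Gr\"obner degeneration and Hochster's formula.

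The step you call the main obstacle already works with your own order. In degree reverse lexicographic order with $z_{U'}<z_U$ whenever $U\subsetneq U'$, the smallest of the four variables $z_U,z_V,z_{U\cap V},z_{U\cup V}$ is $z_{U\cup V}$. It occurs only in the trailing monomial, so $z_Uz_V$ leads. The paper instead uses the weights $w_U=e^2+1-|U|^2$, for which $w_U+w_V-w_{U\cap V}-w_{U\cup V}=2|U\setminus V|\,|V\setminus U|>0$.

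For the independence step, make the restriction explicit, since distinctness of the monomials on $X_e$ alone does not give independence on $Y_c$. Restrict to each flat $X_J\times\{c_{[e]\setminus J}\}$. There the chains with top element in $2^J$ become distinct monomials in the $X_i,Y_i$ with $i\in J$, and all other chains vanish. Every chain has its top in some $2^J$, so all coefficients vanish.

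The one incorrect step is your primary regularity route via shellability and Cohen--Macaulayness. $\mathcal D$ is an arbitrary simplicial complex generated by $\mathcal F$, so its order complex need not be pure; take $\mathcal F=\{\{1,2\},\{3\}\}$. Even when it is pure it need not be Cohen--Macaulay: for $\mathcal F=\{\{1,2\},\{3,4\}\}$ the link of the apex $\varnothing$ is disconnected. So $S_e/M$ is not Cohen--Macaulay in general, and the $h$-vector argument is unavailable.

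Your Hochster fallback is the correct argument, and it needs no information about links. Every face of the order complex is a strict chain in $2^{[e]}$, hence has at most $e+1$ vertices. So every induced subcomplex has dimension at most $e$, and Hochster's formula gives $\reg(S_e/M)\le e+1$ directly. This is exactly Lemma~\ref{exloc:initial-betti-comparison} with $s=e+1$, as used in the paper. The sharper bound $e$ you aimed for does hold, because the complex is a cone with apex $\varnothing$ over a complex of dimension at most $e-1$, but the lemma only needs $e+1$.
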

\begin{proof}
After the stated coordinate changes, give $z_U$ positive weight
$w_U=e^2+1-|U|^2$ and refine by a term order. For incomparable $U,V$,
\[
 w_U+w_V-w_{U\cap V}-w_{U\cup V}
 =2|U\setminus V|\,|V\setminus U|>0.
\]
If $U\cup V\in\mathcal D$, the relation
$z_Uz_V-z_{U\cap V}z_{U\cup V}$ vanishes on $Y_c$; otherwise
$z_Uz_V$ itself vanishes. Also $U\notin\mathcal D\Rightarrow z_U|_{Y_c}=0$.
Thus the right side $J_0$ of \eqref{exloc:star-initial-formula} satisfies
$J_0\subseteq\operatorname{in}(I_{Y_c})$.

Its standard monomials are precisely
\[
 z_{U_1}\cdots z_{U_n},\qquad U_1\subseteq\cdots\subseteq U_n\in\mathcal D
 \quad(n\geq1),
\]
and $1$ in degree zero. Choose a component whose free set contains $U_n$.
The selected chain restricts nontrivially there. Distinct surviving chains
restrict to distinct monomials: their coordinate occurrence counts recover
the nested sets by $i\in U_j\iff a_i\geq n-j+1$.
Thus the coefficient of that chain cannot cancel in any relation.
The standard-monomial basis theorem \cite[Theorem 15.3]{Eisenbud1995}
and this independence prove
$J_0=\operatorname{in}(I_{Y_c})$.

The ideal $J_0$ is squarefree. Its squarefree standard monomials are
strict chains in $\mathcal D\subseteq2^{[e]}$, hence have at most $e+1$
factors. Lemma~\ref{exloc:initial-betti-comparison}, with $I=I_{Y_c}$, $J=J_0$, and $s=e+1$, give
\[
 \reg(S_e/I_{Y_c})\leq\reg(S_e/J_0)\leq e+1.
\]
\end{proof}

\begin{lemma}\label{exloc:arrangement-constant-bounds}
For every integer $e\geq1$, the constants of
\eqref{exloc:arrangement-constants} satisfy
\[
 \sigma_e=\sum_{j=2}^{F_e}j=2^{e-1}(2^e-1)-1,
 \qquad c_e\leq c_{e+1}.
\]
An empty sum is zero. For every integer $e\geq2$,
\[
 4^e\bigl(\sigma_e-(F_e-1)(e-1)\bigr)\geq e+F_e.
\]
\end{lemma}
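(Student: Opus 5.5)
The plan is to verify the three assertions directly from the definitions in \eqref{exloc:arrangement-constants}, since everything is elementary arithmetic in $e$.

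First, the closed form for $\sigma_e$. With $F_e=2^e-1$,
\[
\sum_{j=2}^{F_e}j=\frac{F_e(F_e+1)}2-1=\frac{(2^e-1)2^e}{2}-1=2^{e-1}(2^e-1)-1 .
\]
For $e=1$ we have $F_1=1$, so the sum is empty and equals zero. This matches the closed form $1\cdot1-1=0$, and gives $c_1=1$.

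Second, monotonicity. I would show $\sigma_{e+1}\geq\sigma_e\geq0$; then $4^{e+1}\sigma_{e+1}\geq4^e\sigma_e$, hence $c_{e+1}\geq c_e$. Nonnegativity of $\sigma_e$ is clear from the closed form. Monotonicity follows because the range of summation only grows, since $F_{e+1}>F_e$ and all terms are positive.

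Third, the inequality for $e\geq2$. Put $P=2^e\geq4$, so $F_e=P-1$ and $\sigma_e=P(P-1)/2-1$. The bracket is
\[
\frac{P(P-1)}2-1-(P-2)(e-1).
\]
Since $e-1<P/2$ (indeed $e+1\le 2^{e-1}$ for $e\geq2$ fails only at small $e$, so check directly), I would bound $(P-2)(e-1)\leq(P-2)(P/4)$ for $e\geq2$, which needs $e-1\leq P/4=2^{e-2}$. This holds for all $e\geq2$ (equality at $e=2$; induction thereafter). The bracket is then at least
\[
\frac{P(P-1)}2-1-\frac{P(P-2)}4=\frac{P^2}4-1\geq3 .
\]
So the left side is at least $4^e\cdot3\geq 3P^2$. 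The right side is $e+P-1\leq 2P$, and $3P^2\ge 2P$ trivially.

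The only mildly delicate step is the auxiliary inequality $e-1\leq2^{e-2}$. It holds at $e=2$ and $e=3$, and for larger $e$ it follows by induction, since the right side doubles while the left increases by one. I expect no real obstacle; the main care is the empty-sum convention at $e=1$ and keeping the two $\sigma$-formulas consistent.
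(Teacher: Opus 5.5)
Your proof is correct and follows the paper's direct arithmetic verification. The only variation is in the last inequality: you bound $e-1\le 2^{e-2}$ to get the bracket at least $P^2/4-1$, while the paper factors it exactly as $(F_e-1)(F_e+4-2e)/2$ and uses $2^e\ge 2e$. Delete the garbled parenthetical about $e+1\le 2^{e-1}$, which is false at $e=2$ and is never used.
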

\begin{proof}
The recursion partitions the integers $2,\ldots,F_e$, so
\[
 \sigma_e=\sum_{j=2}^{F_e}j
 =\frac{F_e(F_e+1)}2-1=2^{e-1}(2^e-1)-1,
 \qquad c_e=1+4^e\sigma_e\leq c_{e+1}.
\]
For $e\geq2$, induction gives $2^e\geq2e$. With $F=2^e-1$,
\[
 F\geq e+1,\quad F+4-2e\geq3,\quad
 \sigma_e-(F-1)(e-1)=\frac{(F-1)(F+4-2e)}2\geq\frac32(F-1).
\]
Consequently
\[
 4^e\bigl(\sigma_e-(F-1)(e-1)\bigr)
 \geq24(F-1)\geq2F\geq e+F.
\]
\end{proof}

\begin{theorem}\label{exloc:flat-regularity}
Let $e,R$ be integers
with $e\geq1$ and $R\geq1$, and let
$\mathcal F\subseteq2^{[e]}$. For every $J\in\mathcal F$, let
$B_J\subseteq X_{[e]\setminus J}$ be a finite reduced closed
subscheme interpolating in individual degree $R$.
Let $Y=\bigcup_{J\in\mathcal F}(X_J\times B_J)\subseteq X_e$
with its reduced structure. In the standard graded ring
$S_e=k[z_U:U\subseteq[e]]$, its saturated homogeneous ideal satisfies
$\reg I_Y\leq c_e(R+1)$, where $c_e$ is the integer of
\eqref{exloc:arrangement-constants}.
\end{theorem}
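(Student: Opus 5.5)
Induction on $e$, assembling $Y$ one flat-family at a time. Each union step is controlled by a Mayer--Vietoris type regularity estimate, with the local star computation of Lemma~\ref{exloc:star-initial-ideal} handling the intersections. The numerology of $c_e=1+4^e\sigma_e$, with $\sigma_e=\sum_{j=2}^{F_e}j$ and $F_e=2^e-1$ the number of nonempty subsets, suggests the bookkeeping. The families $J\in\mathcal F$ are added in some order. When the $j$th family is added ($j\geq2$), a loss of roughly $j$ times a $4^e$-size factor in $R+1$ is paid. The leading $1$ is the base cost coming from Lemma~\ref{exloc:segre-and-products}.

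\textbf{Base pieces.} Each single piece $W_J=X_J\times B_J$ satisfies $\reg I_{W_J}\leq\max(R+1,e)\leq e(R+1)$ by Lemma~\ref{exloc:segre-and-products}. The case $J=[e]$ is the full space, and $J=\varnothing$ is a finite set interpolating in degree $R$; both are covered by the same lemma.

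\textbf{Union step.} Order $\mathcal F$ by decreasing $|J|$ and put $Y_{<J}=\bigcup_{J'\text{ earlier}}W_{J'}$. The standard sequence
\[
 0\to I_{Y_{<J}\cup W_J}\to I_{Y_{<J}}\oplus I_{W_J}\to I_{Y_{<J}}+I_{W_J}\to0
\]
gives $\reg I_{Y_{<J}\cup W_J}\leq\max\{\reg I_{Y_{<J}},\reg I_{W_J},\reg(I_{Y_{<J}}+I_{W_J})+1\}$. The last term must be compared with the saturated ideal of the scheme-theoretic intersection, so the key point is to bound the regularity of $S_e/(I_{Y_{<J}}+I_{W_J})$. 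Reducedness of the intersection is not guaranteed, so this cannot simply be identified with $I_{Y_{<J}\cap W_J}$.

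\textbf{Bounding the intersection term.} Set-theoretically, $Y_{<J}\cap W_J$ is a union of coordinate flats $X_{J\cap J'}\times(\text{finite set})$. The fibre sets here are intersections of projections of the $B_J$ and $B_{J'}$. A fibre is contained in a slice of $B_{J'}$ with the $J$-coordinates freed, so it inherits interpolation in individual degree $R$: restrictions of interpolating sets interpolate, by substitution as in Lemma~\ref{exinc:closure-slices}. Each such intersection is therefore again a coordinate-flat arrangement in $X_J$, or in $X_e$ with fewer free directions per flat, and the induction hypothesis applies with smaller $e$ or fewer families. Nonreducedness only occurs at points where several flats meet. Along any such point $c$, Lemma~\ref{exloc:star-initial-ideal} gives a Gröbner degeneration of the local star arrangement $Y_c$ with $\reg\leq e+1$ and a squarefree initial ideal. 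Comparing via Lemma~\ref{exloc:initial-betti-comparison}, the ideal sum agrees with the intersection's saturated ideal in degrees at least $\reg+e+1$. Gluing the finitely many types of star configurations ($\mathcal D\subseteq2^{[e]}$, at most $4^e$ of them counting pairs) gives the factor $4^e$. The $j$th addition then costs at most $4^e j(R+1)$ above the current regularity. Summing over $j=2,\ldots,|\mathcal F|\leq F_e$ yields $\reg I_Y\leq(1+4^e\sigma_e)(R+1)$. The inequality of Lemma~\ref{exloc:arrangement-constant-bounds} absorbs the additive $e+F_e$ constants and the inductive contribution $(F_e-1)(e-1)$ from lower-dimensional recursions, and $c_{e-1}\leq c_e$ lets lower-dimensional bounds be reused.

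\textbf{Main obstacle.} The hard step is controlling $\reg(I_{Y_{<J}}+I_{W_J})$ when the scheme-theoretic intersection is nonreduced. It requires showing saturation agreement, above a linear threshold, between the ideal sum and the radical. My first attempt would be local: compute the sum at each star point using the explicit initial ideal \eqref{exloc:star-initial-formula}, and verify that the two squarefree monomial ideals add to a squarefree ideal. I would then globalize by Castelnuovo--Mumford cohomology vanishing, using the criterion of Lemma~\ref{exloc:external-regularity}.
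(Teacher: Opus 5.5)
\textbf{There is a genuine gap, and it sits exactly at the step you flag as the main obstacle.} The Mayer--Vietoris route is circular there.

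Write $Y_1=Y_{<J}$ and $Y_2=W_J$. The sheaf sequence $0\to\mathcal I_Y\to\mathcal I_{Y_1}\oplus\mathcal I_{Y_2}\to\mathcal I_{Y_1}+\mathcal I_{Y_2}\to0$ shows the following: once $H^1(\mathcal I_{Y_i}(n))=0$ for $i=1,2$, one has
\[
 H^1(\mathcal I_Y(n))\cong (I_{Y_1\cap Y_2})_n\big/\bigl((I_{Y_1})_n+(I_{Y_2})_n\bigr).
\]
So ``the ideal sum agrees with the saturated intersection ideal in degree $n$'' is equivalent to degree-$n$ restriction surjectivity for $Y$, which is what you are trying to prove. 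Controlling $\reg(I_{Y_{<J}}+I_{W_J})$ is the whole problem, not a lemma toward it.

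Your proposed remedy also targets the wrong issue. The scheme-theoretic intersection is never nonreduced. Near any point, the flats through it are cut out, after translation, by ideals generated by variables, and $(\bigcap_\lambda P_\lambda)+P=\bigcap_\lambda(P_\lambda+P)$ by the monomial distributivity used in Lemma~\ref{exloc:scheme-gluing}. Hence $\mathcal I_{Y_1}+\mathcal I_{Y_2}$ is already radical. The real obstruction is the global saturation defect of the graded sum, which no computation at star points can detect. Lemma~\ref{exloc:star-initial-ideal} bounds $\reg(S_e/I_{Y_c})$ for one star, and Lemma~\ref{exloc:initial-betti-comparison} compares an ideal with its own initial ideal. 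Neither says in which degree $I_{Y_1}+I_{Y_2}$ becomes saturated, and your count of ``$4^e$ star configurations'' is fitted to the constant rather than derived.

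A secondary problem: the induction hypothesis does not apply to $Y_{<J}\cap W_J$ with the same $R$. If two earlier families have $J\cap J'_1=J\cap J'_2$, the indexing set for that free set is a union of two fibre products. Take $e=3$, $J=\{1\}$, $J'_1=\{2,3\}$, $J'_2=\{2\}$. Above $(b,c)\in B_J$, the first coordinate then ranges over $B_{J'_1}\cup(B_{J'_2})_c$, which can be $2(R+1)$ points on one coordinate line.

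\textbf{How the paper avoids sums of ideals.} It first splits off common free coordinates, so that $\bigcap_JJ=\varnothing$ and the common point set $C=\bigcap_JY_J$ is finite. It then applies Derksen--Sidman regularity by approximation, Lemma~\ref{exloc:external-approximation}, to $M=S_e/I_Y$ with two kinds of quotient maps:
\begin{itemize}
\item $M\to S_e/I_{Y^{(J)}}$, deleting one family. The kernel is killed by $I_{Y_J}$, and the target is bounded by induction on the number of families.
\item $M\to S_e/I_{Y_c}$ for $c\in C$. The kernel is killed by $h_c=\prod_Je_{J,c}\in(S_e)_{hR}$, a product of selectors from the degree-$R$ interpolation of each $B_J$. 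The target has regularity at most $e+1$ by Lemma~\ref{exloc:star-initial-ideal}.
\end{itemize}
These annihilators have no common zero. Hence $N=2^e$ general forms of degree $D_0=\max(e,hR)$ in their sum generate an ideal containing $\mathfrak m^\tau$ with $\tau=N(D_0-1)+1$, by Lemma~\ref{ped:projective-parameter-power}. Each added family then costs $1+(\tau-1)N=1+4^e(\max(e,hR)-1)$.

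This is where $4^e=N^2$ comes from, and the growth with $j$ comes from the selector degree $hR$. Approximation only needs the annihilators to be jointly $\mathfrak m$-primary, never the saturation of $I_{Y_1}+I_{Y_2}$. If you want to keep your inductive shape, replacing the Mayer--Vietoris step by this approximation step is the missing idea.
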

\begin{proof}
Delete empty orientations and put $h=|\mathcal F|$. Induct lexicographically
on $(e,h)$, proving also, for $1\leq h\leq2^e-1$,
\[
 \reg(S_e/I_Y)\leq A_{e,h}(R),\qquad
 A_{e,h}(R)=\max(R,e+1)+\sum_{j=2}^h
       \bigl(1+4^e(\max(e,jR)-1)\bigr).
\]
The cases $h=0$, $h=1$, and $[e]\in\mathcal F$
follow from Lemma~\ref{exloc:segre-and-products}, including $Y=\varnothing,X_e$.
If $L=\bigcap_{J\in\mathcal F}J\ne\varnothing$, factor
\[
 Y=X_{|L|}\times\overline Y,\qquad e'=e-|L|<e,
 \qquad\overline Y=\bigcup_{J\in\mathcal F}(X_{J\setminus L}\times B_J).
\]
The indexing sets and distinct orientations retain their interpolation
data; the proper nonempty $\overline Y$ has nonzero proper ambient ideal.
Apply dimension induction and the product bound. The quantitative bound
for this branch is checked below. It remains to assume
\[
 2\leq h\leq2^e-1,\qquad\bigcap_JJ=\varnothing.
\]
Put $S=S_e$, $\mathfrak m=(z_U)_U$, $Y_J=X_J\times B_J$, and
\[
 M=S/I_Y,\qquad Y^{(J)}=\bigcup_{J'\ne J}Y_{J'},\qquad
 M^J=S/I_{Y^{(J)}}.
\]
The quotient map and its annihilator satisfy
\[
 M\twoheadrightarrow M^J,\qquad
 I_{Y_J}I_{Y^{(J)}}\subseteq\bigcap_{J'}I_{Y_{J'}}=I_Y,
 \qquad I_{Y_J}\ker(M\to M^J)=0.
\]
The common set $C=(\bigcap_JY_J)(k)$ is finite, since every coordinate
is fixed by at least one orientation. For $c\in C$, interpolation on
$B_J$, multiplication by nonvanishing degree-$R$ free-coordinate sections,
and Segre surjectivity supply $e_{J,c}\in S_R$ such that
\[
 e_{J,c}(c)\ne0,\qquad
 e_{J,c}|_{X_J\times\{b\}}=0\quad(b\in B_J,\ b\ne c_{[e]\setminus J}).
\]
Let $Y_c$ be the union of component flats through $c$, and put
\[
 h_c=\prod_{J\in\mathcal F}e_{J,c}\in S_{hR},\quad h_c(c)\ne0,
 \qquad M_c=S/I_{Y_c},\qquad h_c I_{Y_c}\subseteq I_Y.
\]
The last inclusion holds componentwise: $I_{Y_c}$ vanishes on components
through $c$, and one selector factor vanishes on every other component.
Reducedness of $Y$ turns this into the stated ideal inclusion. Hence
$(h_c)\ker(M\to M_c)=0$.

Define
\[
 \mathfrak b=\sum_JI_{Y_J}+\sum_{c\in C}(h_c),\qquad
 D_0=\max(e,hR),\qquad N=2^e.
\]
A common zero of the first family lies in $C$ and is excluded by its own
$h_c$. Thus $V_+(\mathfrak b)=\varnothing$. The product regularity bound
and generator-degree theorem give generators of the first ideals in degrees
$\leq\max(R+1,e)\leq D_0$; also $\deg h_c=hR\leq D_0$.
Multiplying lower-degree generators by forms nonzero at a specified point shows
that $\mathfrak b_{D_0}$ is basepoint-free.

Choose $f_1,\ldots,f_N\in\mathfrak b_{D_0}$ successively, each
avoiding the finitely many components of the current reduced projective
zero locus. This is possible by basepoint freeness and avoidance of a
finite union of proper linear subspaces over $k$. Every proper section
lowers positive component dimension by one
(Lemma~\ref{exag:external-intersection}); a section nonzero at a
zero-dimensional component removes it. Starting in $\mathbb P^{N-1}$,
after $N$ choices the common zero locus is empty. If it becomes empty
earlier, any nonzero remaining choices suffice. The parameter-ideal
bound, Lemma~\ref{ped:projective-parameter-power}, gives
\[
 \mathfrak m^\tau\subseteq(f_1,\ldots,f_N)\subseteq\mathfrak b,
 \qquad \tau=N(D_0-1)+1.
\]
All annihilator ideals are proper: $B_J\ne\varnothing$ and
$0\ne h_c\in S_{hR}$ with $hR>0$. All quotient maps have degree zero;
their finite target family is indexed by $\mathcal F\sqcup C$.
Empty $C$ omits only the second family.

If $b_0\geq0$ bounds every target regularity, Lemma~\ref{exloc:external-approximation}, with the quotient ideals
$I_{Y^{(J)}}$, $I_{Y_c}$ and annihilator ideals $I_{Y_J}$, $(h_c)$,
gives
\[
 \reg M\leq b_0+1+(\tau-1)N
 =b_0+1+4^e(\max(e,hR)-1).
\]
Since $\reg M_c\leq e+1$ by Lemma~\ref{exloc:star-initial-ideal}, induction gives
\[
 \reg M\leq A_{e,h-1}(R)+1+4^e(\max(e,hR)-1)=A_{e,h}(R).
\]
The common-free-coordinate branch also satisfies this recurrence:
\[
 \reg(S/I_Y)\leq\max\{A_{e',h}(R),e-1\}\leq A_{e,h}(R).
\]
For $F=2^e-1$, the arithmetic-series estimate and
Lemma~\ref{exloc:arrangement-constant-bounds} give
\[
 \begin{aligned}
 A_{e,F}(R)+1
 &\leq c_eR+e+F+1+4^e(F-1)(e-1)\\
 &\leq c_eR+1+4^e\sigma_e=c_e(R+1).
 \end{aligned}
\]
For nonzero proper $I_Y$, use $\reg I_Y=\reg(S/I_Y)+1$.
The exceptional ideals were handled initially; $e=1$ is the one-orientation case.
\end{proof}

\begin{theorem}\label{exloc:anisotropic-extension}
Let $e,d,R$ be integers
with $1\leq e\leq d$ and $R\geq1$, and choose
$\mathcal F\subseteq2^{[e]}$ and finite reduced indexing schemes
$B_J\subseteq X_{[e]\setminus J}$, for $J\in\mathcal F$, each
interpolating in individual degree $R$. Let
$Y=\bigcup_{J\in\mathcal F}(X_J\times B_J)$ have its reduced structure,
let $\mathcal I_{Y/X_e}$ be its ideal sheaf in $X_e$, and put
$b=c_d(R+1)-1$. For every tuple $v=(v_i)_{i=1}^e\in\mathbb Z^e$
with $v_i\geq b$ for every $i\in[e]$,
\[
 H^1(X_e,\mathcal I_{Y/X_e}\otimes\mathcal O_{X_e}(v))=0.
\]
Equivalently the restriction map
$H^0(X_e,\mathcal O_{X_e}(v))\to H^0(Y,\mathcal O_{X_e}(v)|_Y)$
is surjective. We abbreviate the tensor product sheaf by
$\mathcal I_{Y/X_e}(v)$.
\end{theorem}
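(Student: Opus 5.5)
The plan is to obtain the diagonal case from the regularity bound of Theorem~\ref{exloc:flat-regularity} via the Segre embedding. We then raise the coordinates of $v$ one at a time by cutting with a generic coordinate fiber. That cut lowers the number of coordinates, so the whole argument is an induction on $e$. The bound $c_d(R+1)$ is uniform in $e\leq d$, because $c_e\leq c_{d}$ by Lemma~\ref{exloc:arrangement-constant-bounds}. Throughout, equivalence of the vanishing with surjectivity of restriction follows from the sequence $0\to\mathcal I_{Y/X_e}(v)\to\mathcal O_{X_e}(v)\to\mathcal O_Y(v)\to0$, since $H^1(X_e,\mathcal O_{X_e}(v))=0$ for $v\geq0$ by K\"unneth.

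\emph{Diagonal case.} Take $v=(n,\ldots,n)$ with $n\geq b$. Theorem~\ref{exloc:flat-regularity} gives $\reg I_Y\leq c_e(R+1)\leq c_d(R+1)=b+1$. Regularity of the saturated ideal then makes the ambient restriction $(S_e)_n\to H^0(Y,\mathcal O_Y(n))$ surjective for $n\geq b$. This ambient restriction factors through $H^0(X_e,\mathcal O(n,\ldots,n))$, because the hyperplane bundle restricts to $\mathcal O_{X_e}(1,\ldots,1)$. Hence restriction from $X_e$ onto $Y$ is surjective. The empty and full cases $Y=\varnothing$ and $Y=X_e$ are trivial.

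\emph{Raising one coordinate.} Suppose the claim holds for $v-\varepsilon_i$ with all entries $\geq b$; we prove it for $v$. Choose $p\in\mathbb P^1(k)$ distinct from the $i$-th coordinates of all points of the finitely many $B_J$ with $i\notin J$, and let $H=\{x_i=p\}\cong X_{[e]\setminus\{i\}}$.
\begin{itemize}
\item No component of $Y$ lies in $H$. Components with $i\in J$ vary in coordinate $i$, and components with $i\notin J$ have $i$-th coordinate different from $p$.
\item Consequently the local equation of $H$ is a nonzerodivisor on the reduced $\mathcal O_Y$. Tensoring the ideal sequence of $Y$ with $0\to\mathcal O(-\varepsilon_i)\to\mathcal O\to\mathcal O_H\to0$ then gives the exact sequence
\[
 0\to\mathcal I_{Y/X_e}(v-\varepsilon_i)\to\mathcal I_{Y/X_e}(v)\to\mathcal I_{Y\cap H/H}(v|_H)\to0,
\]
where $Y\cap H$ is the scheme-theoretic intersection.
\item The intersection $Y\cap H$ is the union of the flats $X_{J\setminus\{i\}}\times B_J$ over $J\ni i$. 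It is reduced, since it is locally a product of the reduced factors with a point, and it is a coordinate-flat arrangement in $e-1$ coordinates with the same indexing schemes. These still interpolate in individual degree $R$, since the data are unchanged.
\end{itemize}
By induction on $e$, with the same $d$, the slice term has vanishing $H^1$; for $e-1=0$ it is a point or empty and trivial. The left term vanishes by the inner induction on $\sum_i v_i$. The long exact sequence then gives $H^1(\mathcal I_{Y/X_e}(v))=0$. Starting from the diagonal $(b,\ldots,b)$, we reach every $v\geq b$ coordinatewise, because every intermediate tuple stays $\geq b$ and the diagonal step covers the start.

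\emph{Main obstacle.} The delicate point is the exactness of the restricted ideal sequence. It requires the genericity of $p$, so that $\operatorname{Tor}_1(\mathcal O_Y,\mathcal O_H)=0$ and the intersection is the reduced arrangement, and the identification of its components. I would verify this locally on each product chart, where $Y$ is a finite union of products of flats and finite sets, before running the two inductions.
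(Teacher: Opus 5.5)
Your proposal is correct and follows the paper's proof essentially step for step. The diagonal case comes from the regularity bound of Theorem~\ref{exloc:flat-regularity} via Segre surjectivity, and it is followed by an outer induction on $e$ and an inner induction on the excess $\sum_i(v_i-b)$, cutting with a fiber $\{x_i=p\}$ that avoids every fixed $i$-th coordinate. Your $\operatorname{Tor}$/nonzerodivisor justification of the restricted ideal sequence is equivalent to the paper's local check $I\cap hA=hI$. Your reducedness argument for $Y\cap H$ (near $H$, $Y$ is the product of an open subset of the $i$-th $\mathbb P^1$ with the reduced projected arrangement) matches the paper's chart computation.
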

\begin{proof}
Set $a=c_d(R+1)$, $b=a-1\geq0$. Empty $Y$ and $Y=X_e$ follow
from product cohomology or the zero ideal. Otherwise
\[
 \reg I_Y\leq c_e(R+1)\leq a
 \quad\Longrightarrow\quad
 (S_e)_b\twoheadrightarrow H^0(Y,\mathcal O_Y(b)).
\]
The implication is Lemma~\ref{exloc:external-ideal-regularity}
for the nonzero proper ambient ideal. It factors through sections on $X_e$;
the intrinsic ideal sequence and $H^1(X_e,\mathcal O(b,\ldots,b))=0$ give
\[
 H^1(X_e,\mathcal I_{Y/X_e}(b,\ldots,b))=0.
\]
Fix $d,R,b$ throughout. Induct on $e$ and then on
$\sum_i(v_i-b)$. For positive excess choose $i$ with $v_i>b$ and a fiber
$H=\{x_i=\beta\}$ avoiding every value fixed in coordinate $i$ by a component of $Y$.
On an affine chart write $A$ for the smooth ambient coordinate ring,
$I$ for the reduced ideal of $Y$, and $h$ for the fiber equation.
Write $I=\bigcap_\lambda P_\lambda$, where $P_\lambda$ are the
ideals of the coordinate flats present on this chart. Each $A/P_\lambda$
is a localization of a polynomial ring and is a domain.
If a flat fixes coordinate $i$, the choice of the fiber makes $h$
a nonzero constant on it. If that coordinate is free, $h$ is a
nonzero linear polynomial on it. Hence $h\notin P_\lambda$ for
every surviving component. If $hu\in I$, then
$(h+P_\lambda)(u+P_\lambda)=0$ in this domain for each $\lambda$,
so $u\in P_\lambda$ for every $\lambda$. Thus
\[
 hu\in I\ \Longrightarrow\ u\in I,\qquad I\cap hA=hI.
\]
Multiplication by $h$ is also injective on $A$, which is a domain.
The kernel of reduction $I\to(I+(h))/(h)$ is $I\cap hA=hI$;
reduction is surjective by the definition of $I+(h)$.
Consequently the local exact sequences
$0\to I\xrightarrow{h}I\to(I+(h))/(h)\to0$ glue, after twisting, to
\[
 0\longrightarrow\mathcal I_{Y/X_e}(v-\mathbf e_i)
 \longrightarrow\mathcal I_{Y/X_e}(v)
 \longrightarrow(i_H)_*\mathcal I_{Y\cap H/H}(\widehat v_i)
 \longrightarrow0.
\]
Tensoring with a line bundle is exact.
Here $\widehat v_i$ omits coordinate $i$. On the open set of the $i$th
factor avoiding fixed values, all components fixing that coordinate disappear;
the remaining arrangement is the product of that open set with its reduced
projection. To check this as a statement about schemes, use a product
affine chart with coordinate ring $B[t]$ and write the surviving flat
ideals as $J_\lambda B[t]$, where $t$ is the free $i$th coordinate.
Coefficientwise membership gives
$\bigcap_\lambda J_\lambda B[t]=(\bigcap_\lambda J_\lambda)B[t]$.
This identity remains valid after restricting to the chosen open set
because localization is exact. Setting $t=\beta$ therefore
gives the projected union with ideal $\bigcap_\lambda J_\lambda$.
This ideal is radical: if a power of an element belongs to each
$J_\lambda$, primality of each coordinate-flat ideal places the
element in each $J_\lambda$. Thus $Y\cap H$ is reduced
scheme-theoretically. A surviving orientation contains $i$ as a free
coordinate; deleting $i$ leaves exactly its original fixed-coordinate
indexing set $B_J$. In particular the parameter $R$ is unchanged.

A closed immersion preserves sheaf cohomology
\cite[III, Lemma 2.10]{Hartshorne1977}. Thus the third term has $H^1$
equal to $H^1(H,\mathcal I_{Y\cap H/H}(\widehat v_i))$. For $e>1$ it
vanishes by dimension induction with the same $d,R,b$. For $e=1$, $H$
is a point and its ideal sheaf is zero or $\mathcal O_H$, both with
zero $H^1$. Excess induction annihilates $H^1$ of the first term;
the cohomology exact sequence therefore annihilates $H^1$ of the middle.
Finally $v_i\geq b\geq0$ implies $H^1(X_e,\mathcal O(v))=0$, so this
vanishing is equivalent to restriction surjectivity.
\end{proof}

\begin{lemma}\label{exloc:scheme-gluing}
Let $e\geq1$ be an integer,
let $s\geq0$ be an integer, and let $Y_1,\ldots,Y_s\subseteq X_e$
be coordinate flats. Let $Y=\bigcup_{j=1}^s Y_j$ with reduced
structure and let $\mathcal L$ be an invertible $\mathcal O_Y$-module
sheaf. For every choice of sections
$s_j\in H^0(Y_j,\mathcal L|_{Y_j})$ that satisfy
$s_j|_{Y_j\cap Y_h}=s_h|_{Y_j\cap Y_h}$ for every $j,h\in[s]$,
there exists a unique $s_Y\in H^0(Y,\mathcal L)$ restricting to
$s_j$ on every $Y_j$. Each intersection here has its scheme-theoretic
structure inside $X_e$.
\end{lemma}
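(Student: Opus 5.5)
Uniqueness and existence are both local on $Y$, so I will reduce to affine product charts of $X_e$. On such a chart $\mathcal L$ is trivial. There the coordinate ring is a polynomial ring localized, written $A$, and each flat $Y_j$ is cut out by a prime ideal $P_j$ generated by coordinate differences $x_i-\beta_i$ for the fixed coordinates $i\notin J_j$. The union $Y$ has ideal $I=\bigcap_jP_j$, since $Y$ is reduced. The statement then becomes exactness of
\[
 0\to A/I\to\prod_j A/P_j\rightrightarrows\prod_{j,h}A/(P_j+P_h),
\]
that is, $A/I$ is the equalizer. Uniqueness is immediate from injectivity of $A/I\to\prod_jA/P_j$. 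Sheaf uniqueness then glues the local solutions, so the existence of $s_Y$ is also local.

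For existence I will induct on $s$. Given compatible $s_1,\ldots,s_s$, the induction hypothesis produces a section $s'$ on $Y'=\bigcup_{j<s}Y_j$. I will then glue $s'$ with $s_s$, which reduces the problem to the two-piece gluing along $Y'\cap Y_s$. The ideal-theoretic form of that step is
\[
 (I'+P_s)=\bigcap_{j<s}(P_j+P_s),
\]
where $I'=\bigcap_{j<s}P_j$. Granting this identity, the standard Mayer--Vietoris argument gives gluing for the pair. The elements $a\in A/I'$ and $b\in A/P_s$ agree modulo $I'+P_s$. Pairwise compatibility only says they agree modulo each $P_j+P_s$, and the identity upgrades this to agreement modulo $I'+P_s$. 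Then $a-b=u+w$ with $u\in I'$ and $w\in P_s$, and the element $a-u=b+w$ restricts correctly to both pieces.

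The identity is the main obstacle, since it is a distributivity of ideal sums over intersections. That distributivity fails in general but holds for ideals generated by linear coordinate conditions $x_i-\beta$, after a translation making the relevant values monomial-like. First I will make the ideals monomial. Within the chart, each coordinate $x_i$ takes only finitely many fixed values $\beta_{i,1},\ldots,\beta_{i,r}$ among the flats. Each $P_j$ is then generated by elements $x_i-\beta_{i,c}$, and for distinct values these generators are comaximal. I expect to handle one coordinate at a time using the product structure: $A\cong A_i[x_i]$ localized, and ideals of the form $(x_i-\beta)+Q$ with $Q$ extended from $A_i$.

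For a single coordinate, the required distributivity comes from the Chinese remainder theorem over $k[x_i]$ together with flatness of $A_i$-extension. For two values $\beta\neq\beta'$ one has $(x_i-\beta)\cap(x_i-\beta')=(x_i-\beta)(x_i-\beta')$, and the sum of the two ideals is the unit ideal. The general statement then follows by a tensor-product (K\"unneth-type) decomposition of the lattice generated by these ideals. Such ideals generate a distributive lattice because they are sums of extensions of ideals from the separate factors $k[x_i]$, and each factor's lattice of ideals generated by linear factors is distributive: it is a lattice of principal ideals in a PID.

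If this lattice argument becomes awkward, the fallback is direct verification via Gr\"obner or standard-monomial reasoning. In that approach $A/I$ embeds into functions on the finite stratification by common values, and compatible sections are checked coefficientwise in the free variables. The interval-refinement identity $\bigcap_\lambda J_\lambda B[t]=(\bigcap_\lambda J_\lambda)B[t]$, already used in the proof of Theorem~\ref{exloc:anisotropic-extension}, supplies the analogous coefficientwise step.
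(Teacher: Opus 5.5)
Your skeleton is the paper's. You reduce to a local affine statement, identify the ideal of the reduced union with $\bigcap_j P_j$, add one flat at a time via the two-ideal lifting $a-b=u+w$, and isolate the identity $\bigl(\bigcap_{j<s}P_j\bigr)+P_s=\bigcap_{j<s}(P_j+P_s)$ as the crux. The difference is how you obtain that identity.

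The paper works near an arbitrary scheme point $y\in Y$ and first discards every flat not containing $y$, by passing to the complement of their closed union. Then any two surviving flats that fix coordinate $i$ fix it to the same value. After translating each $t_i$, the surviving ideals are generated by subsets of the variables, and distributivity is the set-theoretic distributive law checked on the monomial basis.

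You instead keep all flats on a whole product chart, where one coordinate may be fixed to several values. So your announced step ``make the ideals monomial'' cannot be achieved by translation; avoiding exactly this is what the paper's localization buys. You therefore need distributivity of the lattice generated by the ideals $(x_i-\beta_{i,c})$ with several $c$ per coordinate.

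That statement is true, but your stated reason does not yet prove it. Intersections such as $(x_1)\cap(x_2)=(x_1x_2)$ are not sums of extended ideals. What transfers to the tensor product is not distributivity of each factor's lattice as such, but a basis of each factor adapted to all the ideals at once. Here one exists. Put $P_i=\prod_c(x_i-\beta_{i,c})$ and let $e_c$ be the Lagrange polynomials. Then $\{e_c\}_c\cup\{P_ix_i^m\}_{m\geq0}$ is a basis of $k[x_i]$ in which $(x_i-\beta_{i,c})$ is spanned by every element except $e_c$. The product basis of $k[x_1,\ldots,x_e]$ then spans each $P_j$ by a subset. Sums and intersections of subset-spans are again subset-spans, so the generated lattice embeds in a Boolean lattice. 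Localization preserves finite sums and intersections, so this carries over to the chart. With this filled in, your route proves gluing on a whole chart at once, at the price of a CRT argument that the paper's pointwise reduction avoids.

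One smaller slip: an invertible sheaf on $Y\cap U$ need not be trivial for an affine product chart $U$. For example, the cycle of four lines $x(x-1)y(y-1)=0$ in $\mathbb A^2$ has Picard group $k^\times$. You must shrink to principal opens trivializing $\mathcal L$, as the paper does. This only localizes $A$ further and leaves the argument intact.
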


\begin{proof}
For $s=0$, the only section on the empty scheme is zero.
Assume $s>0$ and fix any scheme point $y\in Y$, including a
nonclosed point. Remove all flats not containing $y$ by taking the
complement of their closed union. Trivialize $\mathcal L$ on a
smaller neighborhood of $y$. Choose an affine product chart through
$y$ and restrict further to a principal open subset inside these
chosen neighborhoods. Its ambient coordinate ring is a localization
of a polynomial ring $A=k[t_1,\ldots,t_e]$.

For a fixed coordinate $i$, any two remaining flats that fix this
coordinate must fix it to the same value. Otherwise their defining
ideals would give two distinct constant values for $t_i$ at the same
prime ideal, whose difference is a nonzero scalar; a prime ideal
cannot contain a nonzero scalar. Translate $t_i$ by this common
value whenever that coordinate is fixed by some remaining flat.
The surviving flat ideals are now generated by subsets of the
variables, before localization. Call them $I_1,\ldots,I_q$.
Their intersections as schemes use the ideals $I_j+I_h$.

For ideals generated by monomials, membership is checked monomial
by monomial, since the monomials form a vector-space basis of $A$.
Sums correspond to unions of the sets of contained monomials, and
intersections correspond to intersections of these sets. The
set-theoretic distributive law therefore gives
\[
 (I_1\cap\cdots\cap I_{r-1})+I_r
 =\bigcap_{j<r}(I_j+I_r)\qquad(2\leq r\leq q).
\]
Localization preserves finite ideal intersections and sums, hence these equalities.

For any two ideals $I,J$ in a ring, residues $a+I,b+J$ agree modulo
$I+J$ exactly when $a-b=u+v$ for some $u\in I$, $v\in J$.
Then $a-u=b+v$ is a common lift. Two common lifts differ by
$I\cap J$. Thus
\[
 A/(I\cap J)\xrightarrow{\sim}
 \{(a+I,b+J):a-b\in I+J\}.
\]
Induct on $r$. Lift the first $r-1$ compatible residues to $a$.
For the last residue $b+I_r$, pairwise compatibility says
$a-b\in\bigcap_{j<r}(I_j+I_r)$.
The distributive identity identifies this ideal with
$(\bigcap_{j<r}I_j)+I_r$. The two-ideal result gives a common lift
for all $r$ residues, unique modulo $\bigcap_{j\leq r}I_j$.
It follows that the local compatible tuple determines exactly one
section on the reduced union, whose ideal is $\bigcap_j I_j$.

The construction applies on a neighborhood of every scheme point
$y$, so these neighborhoods form an open cover. On the intersection
of two neighborhoods the two constructed sections have the same
restrictions to every flat. Local uniqueness makes
them equal. The sheaf gluing axiom therefore gives a unique
global section of $\mathcal L$. This is the section in the statement.
\end{proof}

\subsection{Line interpolation and supported decompositions}
The closure and point-stratum definitions of
Definition~\ref{exinc:grid-definition} apply to any finite Cartesian grid
$\Omega=\prod_{i=1}^dS_i\subset E^d$ over any field $E$, using polynomials
over $E$. For no factors the grid is a singleton, the polynomial space
is $E$, and the point stratum of a subset is that subset.

\begin{lemma}\label{exloc:locator-closure}
\label{exloc:locator-scalar-extension}\label{exloc:locator-size}
Let $E$ be a field, let $d\geq1$, and let $S_i\subset E$ be nonempty
finite sets of sizes $n_i$. For any $\ell\geq0$ and $W\subseteq\Omega$,
$\operatorname{cl}_\ell(W)$ is unchanged by extension of $E$.
If $0\leq\ell<\min_i n_i$, then, with $N=\prod_i n_i$,
\[
 |\operatorname{cl}_\ell(W)|\leq\frac{N}{(\ell+1)^d}|W|.
\]
In particular $|W|<(\ell+1)^d$ implies proper closure.
\end{lemma}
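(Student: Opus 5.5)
The statement has two parts: invariance of the closure under field extension, and the size bound. I would prove them in that order, with the bound proceeding by induction on $d$ via slices.

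\emph{Invariance under extension.} For $E'/E$, the space $I_\ell(W)$ over $E'$ is the solution space of a linear system with coefficients in $E$. That system consists of the evaluation conditions at points of $W$ on the coefficient vector in $\mathcal P_\ell$. By Lemma~\ref{cert:scalar-extension}, its kernel over $E'$ is $E'\otimes_E$ of its kernel over $E$. A point $z\in\Omega$ lies in the closure exactly when evaluation at $z$ vanishes on this kernel. Testing this on an $E$-basis of the $E$-kernel gives the same answer over $E$ and over $E'$, so the closure does not change.

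\emph{Size bound.} Put $A=\operatorname{cl}_\ell(W)$. I would prove the stronger statement $|W|\ge(\ell+1)^d|A|/N$ by induction on $d$.

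For $d=1$: if $A$ is proper, Lemma~\ref{exinc:closure-slices} (its proof uses only the univariate root bound, valid over any field) gives $|A|\le\ell$. Since $W\subseteq A$, the set $W$ is a proper subset of size at most $\ell$. A set of at most $\ell$ points is closed, because it is the zero set of a polynomial of degree at most $\ell$ that can be chosen nonzero at any other given point. Hence $A=W$, and $|W|=|A|\ge(\ell+1)|A|/n$. If $A=S_1$, then $W$ is not contained in the zero set of any nonzero polynomial of degree at most $\ell$, so $|W|\ge\ell+1=(\ell+1)|A|/n$.

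For $d>1$, I would use a slicing estimate. Fix the last coordinate $a\in S_d$. The slice $A_{[d-1],a}$ is $\ell$-closed by Lemma~\ref{exinc:closure-slices}, and it contains $\operatorname{cl}_\ell(W_{[d-1],a})$. I want the reverse containment, possibly up to counting. This is the delicate point: closure in $d$ variables can be larger than the union of the slice closures. For this reason I would argue differently, with a Schwartz–Zippel style counting on the polynomial side, as follows.

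Let $T=\{a\in S_d : A_a\ne\varnothing\}$ and split $T$ into the set $T_f$ of full slices ($A_a$ equal to the whole $(d-1)$-grid) and the partial slices. I would try to show two things:
\begin{enumerate}
\item If $|T_f|\le\ell$, the points of $W$ in full slices still force enough: each full slice $a$ has $|W_a|\ge(\ell+1)^{d-1}$. Otherwise induction would give a proper closure of $W_a$, and with it a separating polynomial $g$ in $d-1$ variables.
\item The set $T$ itself (the projection of $A$) has its number of full slices bounded by counting over partial slices. The device is to multiply by Lagrange-type univariate factors of degree at most $\ell$ in $X_d$: when $W$ meets at most $\ell$ values $a$ in a non-full way, one builds $\prod_{b}(X_d-b)\cdot g$, which separates.
\end{enumerate}
Concretely, the key claim is this: if $z=(z',a)\in A$, then either $z'\in\operatorname{cl}_\ell(W_a)$, or $W$ has at least $\ell+1$ values of the last coordinate $b$ with $z'\in\operatorname{cl}_\ell(W_b)$. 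To prove it, suppose the contrary. Take $g_b$ separating $z'$ from $W_b$ for every $b$ outside a set $B$ of at most $\ell$ values, with $a\notin B$. Combine the $g_b$ by univariate interpolation in $X_d$: by the case $d=1$, there is a polynomial in $X_d$ of degree at most $\ell$ that vanishes on $B$ and is nonzero at $a$. This combination needs degree at most $\ell$ in $X_d$, and interpolating over all $n_d$ values $b$ would exceed that budget. That is the obstacle, and I expect this combination step to be the main difficulty.

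Granting the claim, double counting finishes the proof. Each $z\in A$ either sits in $\operatorname{cl}(W_a)\times\{a\}$ or its prefix $z'$ lies in at least $\ell+1$ slice closures. The sum $\sum_b|\operatorname{cl}(W_b)|$ is at most $\tfrac{N}{n_d(\ell+1)^{d-1}}|W|$ by induction. Dividing the count of the second kind of point by $\ell+1$ and multiplying by the $n_d$ choices of $a$, one obtains the bound $N|W|/(\ell+1)^d$.

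The final assertion is then immediate: if $|W|<(\ell+1)^d$, the bound gives $|\operatorname{cl}_\ell(W)|<N$, so the closure is proper.
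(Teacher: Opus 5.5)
Your field-extension argument is fine. It is the paper's observation that $z\in\operatorname{cl}_\ell(W)$ if and only if the evaluation functional $e_z$ lies in the span $L=\langle e_w:w\in W\rangle\subseteq\mathcal P_\ell^*$, which is a rank condition over $E$.

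The size bound, however, fails earlier than the combination step you flag: your key claim is false. Take $d=2$, $\ell=1$, distinct $p,q,r\in S_1$, distinct $b_0,b_1,b_2\in S_2$, and $W=\{(p,b_0),(p,b_1),(q,b_2)\}$. Let $F$ have bidegree at most $(1,1)$ and vanish on $W$. Then $F(p,X_2)$ has two roots, so $F=(X_1-p)G(X_2)$, and $F(q,b_2)=0$ forces $G(b_2)=0$. Hence $I_1(W)=E\cdot(X_1-p)(X_2-b_2)$ and $(r,b_2)\in\operatorname{cl}_1(W)$. Yet the slices $W_{b_0}=W_{b_1}=\{p\}$ and $W_{b_2}=\{q\}$ are closed singletons, so $r$ lies in no slice closure, and both alternatives of your claim fail.

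The reason is that $e_r\otimes e_{b_2}$ is a combination of $e_p\otimes e_{b_0}$, $e_p\otimes e_{b_1}$ and $e_q\otimes e_{b_2}$, because $e_r\in\langle e_p,e_q\rangle$ and $e_{b_2}\in\langle e_{b_0},e_{b_1}\rangle$. But $e_r$ lies in no single slice span. Closure in several variables genuinely mixes slices, so no slice-by-slice dichotomy of this shape can hold. Even granting the claim, your double count omits the first kind of point and only gives $\frac{N|W|}{(\ell+1)^{d-1}}\bigl(\frac1{n_d}+\frac1{\ell+1}\bigr)$, not the stated bound.

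The missing idea is to count inside small product subgrids instead of slices. Take $T=\prod_iT_i$ with $T_i\subseteq S_i$ and $|T_i|=\ell+1$, which is possible because $\ell<\min_in_i$. Tensor Lagrange interpolation makes $\{e_z:z\in T\}$ a basis of $\mathcal P_\ell^*$. So the points of $\operatorname{cl}_\ell(W)\cap T$ give linearly independent vectors of $L$, whence $|\operatorname{cl}_\ell(W)\cap T|\le\dim L\le|W|$. Now choose the $T_i$ uniformly at random. Each point of $\Omega$ lies in $T$ with probability $(\ell+1)^d/N$, and taking expectations gives $(\ell+1)^d|\operatorname{cl}_\ell(W)|/N\le|W|$. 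This is the paper's proof.
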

\begin{proof}
Write $e_z$ for evaluation at $z$ on the space of box-degree-$\ell$
polynomials. A point belongs to the closure precisely when
$e_z\in\langle e_w:w\in W\rangle_E$. This row-span condition is
unchanged by extension of the field, by Lemma~\ref{cert:scalar-extension}.
For the bound put $D=(\ell+1)^d$ and
$L=\langle e_w:w\in W\rangle_E$, so $\dim L\leq|W|$.
On every subgrid $\prod_i T_i$ with $|T_i|=\ell+1$, tensor Lagrange
interpolation shows that its $D$ evaluation rows are a basis. At most
$\dim L$ of these rows belong to $L$. Average over all such subgrids.
Each grid point occurs with probability $D/N$, giving
$(D/N)|\operatorname{cl}_\ell(W)|\leq\dim L\leq|W|$.
\end{proof}

\begin{lemma}\label{exloc:maximal-flat-indexing}
Let $d\geq1$, let $S_i\subset E$ have size at least two, and let
$T\subsetneq\prod_i S_i$ be $\ell$-closed. For $J\subseteq[d]$ put
\[
 S_J=\prod_{i\in J}S_i,\qquad
 F_J=\{b\in S_{[d]\setminus J}:S_J\times\{b\}\subseteq T\},
 \qquad B_J=F_J^\circ.
\]
Then $F_J$ is $\ell$-closed on its complementary grid. The pairs
$(J,b)$ with $b\in B_J$ index exactly the maximal full coordinate flats
in $T$. Embedding the factors by $a\mapsto[a:1]$, their projective
union has grid trace $T$. Moreover $B_{[d]}=\varnothing$.
\end{lemma}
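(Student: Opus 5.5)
We prove the four assertions of Lemma~\ref{exloc:maximal-flat-indexing} in turn. The work is bookkeeping with the closure operator and with full flats; the only point needing care is the $\ell$-closedness of $F_J$.

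\emph{Closedness of $F_J$.} If $J=[d]$, then $F_J$ is a subset of the one-point grid and is trivially closed. Otherwise, fix $b\notin F_J$. Then some $a\in S_J$ has $(a,b)\notin T$. Since $T$ is $\ell$-closed, there is $P\in I_\ell(T)$ with $P(a,b)\ne0$. Substituting the fixed values $X_J=a$ gives a polynomial $P(a,\cdot)$ on the complementary grid, still of box degree at most $\ell$. It vanishes on $F_J$, because $(a,b')\in T$ for every $b'\in F_J$, and it does not vanish at $b$. So $b\notin\operatorname{cl}_\ell(F_J)$, and $F_J$ is $\ell$-closed. This is the same substitution argument as in Lemma~\ref{exinc:closure-slices}, now with all of $X_J$ fixed at once.

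\emph{Maximal flats.} A flat $S_J\times\{b\}$ contained in $T$ is non-maximal exactly when it lies in some $S_{J'}\times\{b'\}\subseteq T$ with $J'\supsetneq J$. It is enough to enlarge by one coordinate $i\notin J$. Such an enlargement exists precisely when the $i$-line through $b$ in the complementary grid $S_{[d]\setminus J}$ lies in $F_J$. Indeed, that line is the set of $b'$ agreeing with $b$ off $i$, and its points lie in $F_J$ exactly when $S_{J\cup\{i\}}\times\{b_{\widehat i}\}\subseteq T$. Hence maximality of $(J,b)$ is equivalent to $b\in F_J^\circ=B_J$. The index $(J,b)$ determines the flat, and the flat determines $J$ because $|S_i|\ge2$. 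So the indexing is bijective.

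\emph{Grid trace.} The union of all full flats in $T$ contains every point of $T$, since each point is a $J=\varnothing$ flat. Every flat lies in a maximal one by finiteness, so the union of the maximal flats is exactly $T$. Under the embedding $a\mapsto[a:1]$, the projective flat $X_J\times\{b\}$ meets the grid $\prod_i S_i$ exactly in $S_J\times\{b\}$. Therefore the projective union has grid trace $T$.

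\emph{The case $J=[d]$.} Here $F_{[d]}$ is nonempty only if $\Omega\subseteq T$, which is excluded because $T$ is proper. So $B_{[d]}=\varnothing$.
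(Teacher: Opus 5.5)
Your proposal is correct and follows essentially the same route as the paper: you substitute $X_J=a$ into a separating polynomial to show $F_J$ is closed, you identify maximality with $b\in F_J^\circ$ via one-coordinate extensions and $|S_i|\ge 2$, and you read off the grid trace flat by flat. The one step you state without comment, that a non-maximal flat already extends by a single coordinate, holds because for $i\in J'\setminus J$ the intermediate flat with free set $J\cup\{i\}$ lies inside the larger flat $S_{J'}\times\{b'\}\subseteq T$.
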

\begin{proof}
If $b\notin F_J$, choose $a\in S_J$ with $(a,b)\notin T$ and a
box-degree-$\ell$ polynomial $P$ vanishing on $T$ but not at $(a,b)$.
The polynomial $P(a,-)$ vanishes on $F_J$ and separates $b$.
For $i\notin J$, the full $i$-line through $b$ lies in $F_J$ precisely
when the flat $S_J\times\{b\}$ extends to a flat with free set
$J\cup\{i\}$. Since all factor sizes exceed one, every strict
containment of full flats strictly enlarges the free set. Thus
$F_J^\circ$ indexes maximal flats. Every point of $T$ extends to one
by successively freeing coordinates. Each projective flat has exactly
its original grid trace, proving the union claim. Properness excludes
the full orientation $J=[d]$.
\end{proof}
\begin{theorem}\label{exloc:line-interpolation}
Let $E\subseteq k=\overline{\mathbb F}_2$ be a finite subfield and let
$d\geq1$ be an integer. For every $i\in[d]$, let
$S_i\subseteq E$ be a finite set of size $n_i\geq2$, and put
$\Omega=\prod_{i=1}^d S_i$. Let $\ell\geq0$ be an integer and let
$K,\epsilon,\eta$ be real numbers with $K\geq1$, $\epsilon>0$,
and $\eta>0$. Put $m=\min_{i\in[d]}n_i$ and assume
$\max_{i\in[d]}n_i\leq Km$.
For every nonempty $I\subseteq[d]$, put
$\Omega_I=\prod_{i\in I}S_i$ and $m_I=\min_{i\in I}n_i$.
Embed $\Omega_I$ in the affine chart of $\PP_k^{|I|}$ by its
coordinates in increasing index order. Assume that, for every such
$I$, every $\ell$-closed $A\subseteq\Omega_I$, and every integral
projective curve $C\subseteq\PP_k^{|I|}$,
\begin{equation}\label{exloc:incidence-premise}
 \sum_{z\in C(k)\cap A^\circ}\mult_z C\leq\eta m_I\deg C.
\end{equation}
Closure is computed over $E$; it is unchanged over $k$ by
Lemma~\ref{exloc:locator-scalar-extension}, with $L=k$ and the
specified integer $\ell\geq0$. Degree and multiplicity are those of
the degree and multiplicity conventions of Section~\ref{exinc:chapter}.
Assume additionally
\[
 \eta\leq\frac{\epsilon}{2dc_dK},\qquad
 m\geq\frac{4c_d}{\epsilon},
\]
where $c_d$ is given by \eqref{exloc:arrangement-constants}.
For every proper $\ell$-closed subset $T\subsetneq\Omega$, every tuple
of integers $(r_i)_{i=1}^d$ with $\epsilon n_i\leq r_i\leq n_i$
for all $i$, and every function $f:T\to E$ whose restriction to
each complete $i$-line contained in $T$ belongs to
$\operatorname{RS}(S_i,r_i)$, there exists
$P\in E[X_1,\ldots,X_d]$ such that $\deg_{X_i}P<r_i$ for every
$i\in[d]$ and $P(x)=f(x)$ for every $x\in T$.
\end{theorem}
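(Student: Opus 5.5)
We reduce the interpolation problem to the regularity machinery of the previous subsection. Decompose $T$ via Lemma~\ref{exloc:maximal-flat-indexing}: for each $J\subsetneq[d]$ the sets $F_J$ are $\ell$-closed on the complementary grids and $B_J=F_J^\circ$ indexes the maximal full flats $S_J\times\{b\}$ contained in $T$. Their union $Y=\bigcup_J(X_J\times B_J)\subseteq X_d=(\mathbb P^1_k)^d$ has grid trace exactly $T$. The plan has three steps. First, show that every $B_J$ interpolates in individual degree $R$, for some $R$ of order $\eta m/\epsilon$, small compared with $\epsilon m$. Second, build a compatible section on $Y$ from $f$. Third, extend that section to $X_d$ in multidegree $(r_i-1)_i$ using Theorem~\ref{exloc:anisotropic-extension}, and dehomogenize to obtain $P$.

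\textbf{Step 1: interpolation of $B_J$.} Fix $J$ with $I=[d]\setminus J\ne\varnothing$. The set $B_J$ is the point stratum of the $\ell$-closed set $F_J\subseteq\Omega_I$. The premise \eqref{exloc:incidence-premise} therefore gives $\mu_{B_J}(C)\leq\eta m_I\deg C$ for every integral curve in $\PP^{|I|}$. Theorem~\ref{exag:point-interpolation} then yields surjectivity of $H^0(\PP^{|I|},\mathcal O(b'))\to H^0(B_J,\mathcal O(b'))$ with $b'=\lceil|I|\eta m_I\rceil\leq\lceil d\eta Km\rceil$. Pulling total-degree-$b'$ forms back to $X_I$ gives sections of individual degree $b'$, so $B_J$ interpolates in individual degree $R:=\max(1,\lceil d\eta Km\rceil)$. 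The hypotheses give $d\eta K\leq\epsilon/(2c_d)$, hence $c_d(R+1)-1\leq c_d(\epsilon m/(2c_d)+2)\leq\epsilon m/2+2c_d\leq\epsilon m$, using $m\geq 4c_d/\epsilon$. So $b=c_d(R+1)-1\leq r_i-1$ for every $i$, and Theorem~\ref{exloc:anisotropic-extension} applies with $v=(r_i-1)_i$.

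\textbf{Step 2: sections on the flats.} On a flat $S_J\times\{b\}$, the values of $f$ restrict on every $i$-line ($i\in J$) to $\operatorname{RS}(S_i,r_i)$. By Lemma~\ref{cert:product-kernel} and tensor Lagrange interpolation, these values therefore equal a unique polynomial in the variables $X_J$ with $\deg_{X_i}<r_i$. Uniqueness holds because $r_i\leq n_i$ and the grid $S_J$ is a full product. This polynomial gives a section $s_{J,b}$ of $\mathcal O(v)$ on $X_J\times\{b\}$ after homogenizing in each coordinate.

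The step I expect to be the main obstacle is compatibility on overlaps $Y_j\cap Y_h$. These overlaps are again coordinate flats, namely $X_{J\cap J'}\times\{\text{common point}\}$, or are empty. On an overlap both sections are polynomials in the common free variables of degrees $<r_i\leq n_i$, and both agree with $f$ on the full subgrid of that flat, which lies in $T$. Hence they coincide as polynomials, and so as sections of the reduced intersection. The scheme-theoretic intersection of two coordinate flats is reduced, since its ideal is generated by translated variables. Lemma~\ref{exloc:scheme-gluing} then glues these sections to $s_Y\in H^0(Y,\mathcal O(v)|_Y)$.

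\textbf{Step 3: extension and descent to $E$.} By Theorem~\ref{exloc:anisotropic-extension}, $s_Y$ lifts to a global section of $\mathcal O(v)$, that is, a polynomial $P$ over $k$ with $\deg_{X_i}P<r_i$ agreeing with $f$ on the grid trace $T$ of $Y$. The conditions $P|_T=f$ with those degree bounds form a linear system defined over $E$. Since it is solvable over $k$, Lemma~\ref{cert:scalar-extension} gives a solution over $E$.
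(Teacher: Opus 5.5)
Your proposal is correct and follows the paper's proof essentially step for step: decomposition of $T$ into maximal flats via Lemma~\ref{exloc:maximal-flat-indexing}, individual-degree interpolation for each $B_J$ from the incidence premise and Theorem~\ref{exag:point-interpolation}, gluing of the flat-wise polynomial sections with Lemma~\ref{exloc:scheme-gluing}, extension to $X_d$ by Theorem~\ref{exloc:anisotropic-extension}, and descent from $k$ to $E$ by invariance of ranks under field extension. One small slip: as written, your displayed chain only concludes $c_d(R+1)-1\leq\epsilon m$. The same estimate actually gives $c_d(R+1)\leq\epsilon m\leq r_i$, and it is this stronger bound that yields $b\leq r_i-1$.
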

\begin{proof}
Let $U_E\leq E^T$ be the kernel of all dual line-constraint matrices, and
$V_E\leq E^T$ the image of evaluation of $\bigotimes_iE[X_i]_{<r_i}$.
Then $V_E\subseteq U_E$. Both spaces are defined by matrices over $E$;
Lemma~\ref{cert:scalar-extension} reduces $U_E=V_E$ to $U_k=V_k$.
Fix $f\in U_k$.

Form $B_J=F_J^\circ$ by Lemma~\ref{exloc:maximal-flat-indexing}. For nonempty
$B_J$, put $I=[d]\setminus J$, $a=|I|\geq1$; the full orientation is absent.
The incidence premise for the $\ell$-closed $F_J$ gives
\[
 \mu_{B_J}(C)\leq\Lambda\deg C\quad\text{for every integral }C\subseteq\mathbb P^a_k,
 \qquad\Lambda=\eta m_I\leq\eta Km.
\]
Theorem~\ref{exag:point-interpolation}, with $k=\overline{\mathbb F}_2$
and these $a,B_J,\Lambda$, gives interpolation in total degree $\lceil a\Lambda\rceil$.
After separate homogenization it gives individual degree
\[
 R=\max\{1,\lceil dK\eta m\rceil\}\leq dK\eta m+1
\]
on every indexing set. Empty indexing sets impose no condition.

On each maximal flat $S_J\times\{b\}$, all line constraints imply
\[
 f|_{S_J\times\{b\}}\in\bigotimes_{i\in J}\operatorname{RS}_k(S_i,r_i)
\]
because the simultaneous directional kernels intersect in the tensor product of the factor kernels (choose a complement to each code); for $J=\varnothing$ this is its single scalar value. Homogenize the polynomial in every free coordinate to degree $r_i-1$.
On the fixed coordinates, use $Y_i^{r_i-1}$, which is nonzero at
$[b_i:1]$. This defines a section of
$\mathcal O_{X_d}(r_1-1,\ldots,r_d-1)$ on each projective flat,
with the specified values in the affine trivialization $Y_i=1$.
The multihomogeneous monomial basis gives these section formulas.
On every nonempty intersection their grid values agree. Since
\[
 r_i\leq n_i\quad\Longrightarrow\quad
 \bigotimes_{i\in J'}k[X_i]_{<r_i}\longrightarrow k^{\prod_{i\in J'}S_i}
 \text{ is injective},
\]
they agree as sections on the intersection flat, using
Lemma~\ref{rs:dimension-distance} in each free coordinate. Lemma~\ref{exloc:scheme-gluing}
therefore gives a section on the reduced union $Y$.

The quantitative hypotheses give, for every $i$,
\[
 \begin{aligned}
 c_d(R+1)-1
 &\leq c_d(dK\eta m+2)-1
 \leq\epsilon m/2+2c_d-1\\
 &\leq\epsilon m-1\leq r_i-1.
 \end{aligned}
\]
Apply Theorem~\ref{exloc:anisotropic-extension} over $k$, with both dimensions
equal to $d$, this $R$, and $v_i=r_i-1$. It extends the section to $X_d$;
dehomogenization gives an element of $\bigotimes_i k[X_i]_{<r_i}$ restricting
to $f$. Thus $U_k=V_k$, and finite-matrix descent yields $U_E=V_E$.
\end{proof}

For a code $C_i\leq E^{S_i}$ and an $i$-line $L\subseteq\Omega$, write
$C_i[L]\leq E^\Omega$ for the copy of $C_i$ supported on $L$. Identify
$E^T$ with the subspace of $E^\Omega$ supported on $T$.
\begin{theorem}\label{exloc:supported-lines}
Under all grid, incidence, and numerical hypotheses of
Theorem~\ref{exloc:line-interpolation}, let $r_i$ be integers with
$\epsilon n_i\leq r_i\leq n_i$, and suppose
$C_i^\perp=\operatorname{RS}_E(S_i,r_i)$. For every proper
$\ell$-closed $T\subsetneq\Omega$,
\[
 A(C_1,\ldots,C_d)\cap E^T=\sum_{i=1}^d\sum_{\substack{L\text{ an }i\text{-line}\\L\subseteq T}}C_i[L].
\]
Every $x$ in this space has $x=\sum_i x_i$, $x_i\in A_i$, with
$\sum_i n_i\ell_i(x_i)\leq d|T|$. Both conclusions persist under
independent nonzero coordinate multipliers in the factors.
\end{theorem}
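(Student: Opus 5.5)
The plan is to prove the equality by duality from the interpolation statement of Theorem~\ref{exloc:line-interpolation}, and then to read off the cost bound directly from the supported decomposition.

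Write $W_T=\sum_i\sum_{L\subseteq T}C_i[L]$ for the right-hand side. The inclusion $W_T\subseteq A\cap E^T$ is immediate. For the reverse inclusion we compare annihilators inside $E^T$ under the standard pairing.

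First, a vector $f\in E^T$ annihilates $W_T$ exactly when, on every complete $i$-line $L\subseteq T$, the restriction $f|_L$ is orthogonal to $C_i$. Since $C_i^{\perp}=\operatorname{RS}_E(S_i,r_i)$, this says $f|_L\in\operatorname{RS}(S_i,r_i)$. So $f$ satisfies precisely the hypotheses of Theorem~\ref{exloc:line-interpolation}. That theorem supplies $P$ with $\deg_{X_i}P<r_i$ and $P|_T=f$.

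Second, the evaluation vector of such a $P$ on all of $\Omega$ lies in $\bigotimes_i\operatorname{RS}(S_i,r_i)=\bigotimes_iC_i^\perp=A^\perp$, by Lemma~\ref{cert:product-check}. Its restriction to $T$ is $f$. Hence for every $x\in A\cap E^T$,
\[
 \langle x,f\rangle=\langle x,P|_\Omega\rangle=0 .
\]
Therefore $W_T^{\perp_T}\subseteq(A\cap E^T)^{\perp_T}$. Since the pairing on $E^T$ is nondegenerate and everything is finite-dimensional, this gives $A\cap E^T\subseteq W_T$, and the equality follows.

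For the quantitative statement, take $x\in A\cap E^T$ and write $x=\sum_i x_i$, where each $x_i$ is a sum of copies $C_i[L]$ over $i$-lines $L\subseteq T$. Grouping the terms by line, $x_i\in A_i$ and every nonzero $i$-line of $x_i$ is one of the lines contained in $T$. The number of $i$-lines contained in $T$ is at most $|T|/n_i$, because distinct $i$-lines are disjoint and each has $n_i$ points. Therefore $n_i\ell_i(x_i)\leq|T|$, and summing over $i\in[d]$ gives $\sum_i n_i\ell_i(x_i)\leq d|T|$.

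Invariance under coordinate multipliers follows from Lemma~\ref{cert:monomial-invariance}. A diagonal rescaling $D_i$ sends $C_i$ to $D_iC_i$, whose dual is $D_i^{-1}C_i^\perp$. Applying $\bigotimes_iD_i$ preserves supports, lines, and membership of lines in $T$. So both the equality and the bound transport unchanged, and the case with multipliers reduces to the unrescaled one.

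The main point to check is the first step: the hypothesis of Theorem~\ref{exloc:line-interpolation} must be exactly orthogonality on the complete lines contained in $T$. It is, because $C_i^{\perp\perp}=C_i$. Everything else is linear algebra.
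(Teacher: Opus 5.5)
Your proposal is correct and follows the paper's proof essentially step for step. Like the paper, you identify the annihilator in $E^T$ of the internal-line sum with the line-wise Reed--Solomon condition. You then use Theorem~\ref{exloc:line-interpolation} and Lemma~\ref{cert:product-check} to realize it as the restriction of $A^\perp$ to $T$, and dualize back. The cost bound comes from disjointness of parallel lines, and the multiplier case from transport by $\bigotimes_i D_i$.

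One small remark: the appeal to $C_i^{\perp\perp}=C_i$ at the end is unnecessary. Orthogonality of $f$ to $C_i[L]$ already says directly that $f|_L\in C_i^\perp$.
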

\begin{proof}
Write $A=A(C_1,\ldots,C_d)$ and denote the right-hand internal-line
sum by $D_T\subseteq A\cap E^T$. Its orthogonal complement inside $E^T$ is
exactly the set of functions whose restriction to every internal
$i$-line belongs to $C_i^\perp$. By
Theorem~\ref{exloc:line-interpolation}, this is the restriction to $T$
of $\bigotimes_i C_i^\perp=A^\perp$
(Lemma~\ref{cert:product-check}). Thus
\[
 D_T^\perp=\pi_T(A^\perp),\qquad
 D_T=(\pi_T(A^\perp))^\perp=A\cap E^T.
\]
The last equality follows directly by pairing a zero extension from
$E^T$ with elements of $A^\perp$. Group an internal-line decomposition
by direction. Disjointness of parallel lines bounds their number by
$|T|/n_i$ in direction $i$, proving the cost bound. The tensor product
of the specified invertible diagonal maps preserves supports, each
line, and its nonzero status, and transports both subspaces and the
decomposition.
\end{proof}

\subsection{Uniform product expansion and its variants}
\begin{proof}[Proof of Theorem~\ref{exloc:finite-grid-pe}]
The constants $\eta,\theta,\rho$ have already been chosen from
$t,K,\epsilon$. We now fix arbitrary data $d,(m_i),E_0,E,(k_i)$
from the theorem statement; none of the constants will be changed.
Every nonempty coordinate subtuple has power ratio at most $K$, and
its length ratio satisfies
\[
 \frac{\max_{i\in I}n_i}{\min_{i\in I}n_i}
 =\frac{\max_{i\in I}q_i+1}{\min_{i\in I}q_i+1}
 \leq\frac{\max_{i\in I}q_i}{\min_{i\in I}q_i}\leq K.
\]
Theorem~\ref{exinc:main}, with the fixed dimension bound $t$
and field $\overline{\mathbb F}_2$, gives for every nonempty subtuple of size
$e\leq d$ and every integer $0\leq\ell\leq\theta m_I$,
\[
 \mu_{A^\circ}(C)\leq P_e(t,K)\theta^{\gamma_e}m_I\deg C
 \leq P_t(t,K)\theta^{\gamma_t}m_I\deg C
 \leq\eta m_I\deg C,
\]
since $\gamma_e\geq\gamma_t$, $P_e(t,K)\leq P_t(t,K)$, and $0<\theta\leq1/2$.
These are all complementary-subgrid incidence premises of
Theorem~\ref{exloc:line-interpolation}.

Use locator closure over $E_0$, with algebraic closure $\overline{\mathbb F}_2$.
Put
\[
 m=\min_i n_i,\quad N=\prod_i n_i,\quad \ell=\lfloor\theta m\rfloor,
 \qquad C_i^0=\operatorname{RS}_{E_0}(S_i,k_i),\quad r_i=n_i-k_i.
\]
Then $\ell\leq\theta m_I$ on every nonempty complementary grid. By the RS dual formula,
\[
 v_{i,a}=\prod_{b\in S_i\setminus\{a\}}(a-b)^{-1},\quad
 D_i^0=\operatorname{diag}(v_{i,a})C_i^0,
 \qquad (D_i^0)^\perp=\operatorname{RS}_{E_0}(S_i,r_i),
 \quad\epsilon n_i\leq r_i\leq(1-\epsilon)n_i.
\]
Set $D_i=D_i^0\otimes_{E_0}E$. The dual identity extends as a kernel
identity; support isometry reduces the desired expansion to $(D_i)$.

First assume $m\geq4c_t/(\epsilon\theta)$. Then $1\leq \ell<n_i$ and
$m\geq4c_t/\epsilon\geq4c_d/\epsilon$. Moreover
\[
 \eta=\frac{\epsilon}{2tc_tK}\leq\frac{\epsilon}{2dc_dK},
\]
so all numerical hypotheses of Theorem~\ref{exloc:line-interpolation} hold.
Fix a nonzero word $x\in A(D_1,\ldots,D_d)$.
Its support is a subset of the original finite grid even when its
nonzero values lie in the larger field $E$. Thus its locator closure
can be computed using the finite-field evaluation matrix over $E_0$.
For this $x$, put
$W=\supp x$ and $T=\operatorname{cl}_\ell(W)$. In the small-support range,
\[
 |W|<\frac{\theta^dN}{2K^d},\quad N\leq K^dm^d
 \quad\Longrightarrow\quad |W|<(\theta m)^d<(\ell+1)^d
 \quad\Longrightarrow\quad T\subsetneq\Omega.
\]
Theorem~\ref{exloc:supported-lines}, with $E=E_0$, the displayed
RS-dual dimensions, and this $E_0$-defined closed set $T$, proves
\[
 A(D_1^0,\ldots,D_d^0)\cap E_0^T
 =\sum_{i,L\subseteq T}D_i^0[L].
\]
The displayed equality concerns subspaces specified by finite matrices
over $E_0$. Tensoring those matrices with $E$ preserves their kernels,
images, sums, and intersections by Lemma~\ref{cert:scalar-extension}.
This is why the fact that $x$ might have values outside $E_0$ causes
no difficulty. Extend this finite-subspace identity before decomposing $x$:
\begin{gather*}
 A(D_1,\ldots,D_d)\cap E^T
 =\sum_{i,L\subseteq T}D_i[L],\qquad
 x=\sum_i x_i,\\
 \sum_i n_i\ell_i(x_i)\leq d|T|
 \leq\frac{dN}{(\ell+1)^d}|W|\leq d(K/\theta)^d|W|.
\end{gather*}
Here Lemma~\ref{cert:scalar-extension} preserves images, intersections, and finite sums;
Lemma~\ref{exloc:locator-size} supplies the closure bound over $E_0$.
By definition of the directional sum, $x$ has at least one directional
decomposition. In each direction there are exactly $N/n_i$ parallel
lines, so every decomposition has cost at most
$\sum_i n_i(N/n_i)=dN$. In the other support range this gives
\[
 |x|\geq\frac{\theta^dN}{2K^d},\qquad
 \sum_i n_i\ell_i(x_i)\leq dN
 \quad\Longrightarrow\quad
 \frac{\theta^d}{2dK^d}\sum_i n_i\ell_i(x_i)\leq|x|.
\]
If instead $m<4c_t/(\epsilon\theta)$, then every $n_i\leq Km$.
There is no need to use the geometric interpolation theorem in this case.
Every nonzero scalar word has at least one nonzero coordinate, and any
directional decomposition has
\[
 \sum_i n_i\ell_i(x_i)\leq dN
 <d\left(\frac{4c_tK}{\epsilon\theta}\right)^d,\qquad |x|\geq1.
\]
The zero word uses zero summands. Thus all ranges are covered by
\[
 \rho_{d;t}=\frac1{2d}\left(\frac{\epsilon\theta}{4c_tK}\right)^d
 \leq\frac{\theta^d}{2dK^d},\qquad
 0<\alpha:=\frac{\epsilon\theta}{4c_tK}\leq1.
\]
For $1\leq d<t$, one has
\[
 \frac{\rho_{d+1;t}}{\rho_{d;t}}=\frac{d}{d+1}\alpha\leq1.
\]
Including $t=1$, when the minimum has one term, this gives
\[
 \min_{1\leq d\leq t}\rho_{d;t}=\rho_{t;t}=\rho.
\]
This constant was fixed before $(m_i),E_0,E,(k_i)$ and all subtuple choices.
The argument uses incidence only over $\overline{\mathbb F}_2$ and extends
the supported-subspace identity to the arbitrary field $E$; it therefore
proves the claimed field uniformity. Complementary dimensions remain in
$[\epsilon n_i,(1-\epsilon)n_i]$. The RS dual formula and
Lemma~\ref{cert:monomial-invariance} give every independent actual-dual,
permutation, multiplier, and nonempty-subtuple variant with the same $\rho$.
\end{proof}

\section{Equivariant Reed--Solomon codes and compatible cosheaves}
\label{sec:equivariance}\label{exdiag:chapter}

Fix integers $t,h\ge1$, real numbers $K,\epsilon,\beta$ with
\[
 K\ge1,\qquad 0<\epsilon<\frac1{2^h+1}<\beta\le1-\epsilon,
\]
and pairwise distinct integers $a_i\ge0$ for $i\in[t]$ such that
$2^{\max_i a_i-\min_i a_i}\le K$. Write
\begin{equation}\label{exloc:reciprocal-parameters}
 s_h=2^h,\quad D=s_h+1,\quad
 Q_{\mathrm{end}}(h,\epsilon,\beta)
 =\max\left\{2,\frac{1+D\epsilon}{1-D\epsilon},
                  \frac{D(s_h-\beta)}{D\beta-1}\right\}.
\end{equation}
Both denominators are positive. The product-expansion constant
$\rho=\rho(t,K,\epsilon)$ is the one fixed in
\eqref{exloc:uniform-pe-parameters}, before any local field sizes
are chosen.

\begin{theorem}
\label{exdiag:actual-direction-system}
Let $r\ge1$ satisfy $r+a_i\ge h$ and
$q_i=2^{r+a_i}\ge Q_{\mathrm{end}}(h,\epsilon,\beta)$ for all
$i\in[t]$. Put $n_i=q_i+1$ and
\[
 e_i=\left\lfloor\frac{q_i-1}{D}\right\rfloor,\qquad
 d_i=q_i-1-s_h e_i.
\]
Choose a finite subfield $E_0\subseteq\overline{\mathbb F}_2$
containing every $\mathbb F_{q_i^2}$, a finite extension $E/E_0$,
and a subset $L\subseteq[t]$. At every sufficiently deep level
of the tower in Theorem~\ref{exar:tower}, there is a cosheaf
$\mathcal F_\nu$ over $E$ on $X_\nu$ with compatible base codes
in the sense of Definition~\ref{def:compatible-base-codes} and
endpoint dimensions
\begin{equation}\label{exdiag:endpoint-dimensions}
 (m_{i,0},m_{i,1})=
 \begin{cases}
 (e_i+1,d_i+1),&i\in L,\\
 (n_i-e_i-1,n_i-d_i-1),&i\notin L.
 \end{cases}
\end{equation}
For $i\in L$ both endpoint rates lie in $[\epsilon,\beta]$;
for $i\notin L$ they lie in $[1-\beta,1-\epsilon]$.
Every endpoint code and its dual code have
relative distance at least $\epsilon$.

For every face $f$, every nonempty $J\subseteq\codir(f)$,
every field extension $E'/E$, and every independent choice
\[
 D_i\in\left\{E'\otimes_E C_{f,i},
                   (E'\otimes_E C_{f,i})^\perp\right\}
       \quad(i\in J),
\]
the tuple $(D_i)_{i\in J}$ has weighted product expansion at
least $\rho$. Here $C_{f,i}=\operatorname{im}E_{f,i}$ is the
actual code in the compatible chart at $f$, and orthogonal complements
use the coordinate pairing on $(E')^{\Omega_{f,i}}$.
The field $E$, all endpoint dimensions, and $\rho$ are
independent of the tower index $\nu$.
\end{theorem}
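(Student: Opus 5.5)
We build the cosheaf on the product of trees $\mathcal T=\prod_iT(F_{v_i})$ from local equivariant data, descend it to $X_\nu$ using the full-star identification of Theorem~\ref{exar:tower}, and then read off the product-expansion claims from Theorem~\ref{exloc:finite-grid-pe}.

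\textbf{Local equivariant encoders.} Fix a direction $i$ with $q=q_i$. At a vertex $v$ of type $b$, Lemma~\ref{exdiag:lattice-tree} identifies the incident edges with $K_b/I\cong\mathbb P^1(\mathbb F_q)$.
\begin{itemize}
\item Type $0$ carries a message space that transforms under $\mathrm{SL}_2(\mathbb F_q)$ like homogeneous forms of degree $e_i$. Its encoder is evaluation into $\operatorname{PRS}_q(e_i)$, of dimension $e_i+1$.
\item Type $1$ carries forms of degree $d_i$, giving dimension $d_i+1$.
\end{itemize}
Evaluation at $r_x$ is equivariant up to the scalar $\chi(g,x)^{u}$ produced by rescaling the representative. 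So the incidence map at an edge $e=\{v_0,v_1\}$ is well defined as a map into a line indexed by $I$, twisted by the character $\mathrm{diag}(\alpha,\alpha^{-1})\mapsto\alpha^{u}$.

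Conjugation by $s$ exchanges the diagonal residue entries, hence turns $\alpha^{d_i}$ into $\alpha^{-d_i}$. We must therefore match $\alpha^{e_i}$ with $\alpha^{-d_i}$, where $\alpha$ ranges over $\mathbb F_q^\times$ and has order dividing $q-1$. The relation $d_i+s_he_i=q-1$ gives $\alpha^{-d_i}=\alpha^{s_he_i}=(\alpha^{e_i})^{2^h}$. So the two characters differ by the Frobenius $x\mapsto x^{2^h}$. We reindex the type-$1$ code by Frobenius on evaluation arguments, which is a field automorphism of $\mathbb F_q$ (this uses $r+a_i\ge h$). The edge stalk is then a single line on which both endpoint characters agree, and the incidence maps commute with $I$ up to a common scalar.

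For $i\notin L$ we use the dual codes instead. By Proposition~\ref{exloc:norm-bridge} these are again generalized RS codes, of dimension $n_i-e_i-1$ and $n_i-d_i-1$. Equivariance passes to the dual through the inverse-transpose monomial action, by Lemma~\ref{cert:monomial-invariance}.

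\textbf{Cosheaf and compatible charts.} On a cube $\prod_i\sigma_i$, the stalk is the tensor over $i$ of the vertex message space (if $\sigma_i$ is a vertex) or the line (if $\sigma_i$ is an edge). The maps $R$ are tensor products of the one-tree incidence maps. Functoriality is automatic, because each $R$ acts in its own tensor factor and the two routes around a square evaluate different factors. For a face $f$ with anchor vertex $x_f$, the chart $\Psi_{f,g}$ is the tensor of the trivializations at $x_f$. Identity \eqref{eq:compatible-charts} then holds by construction.

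Everything is $\Gamma_\nu$-equivariant. The group acts through $\prod_i\mathrm{SL}_2(F_{v_i})/\{\pm1\}$, and $\pm1$ acts trivially in characteristic two. Scalar twists are absorbed by choosing charts at orbit representatives and transporting them. At sufficiently deep levels, Theorem~\ref{exar:tower}(3) identifies every upward star downstairs with the corresponding star upstairs, so charts and incidences descend.

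\textbf{Rates, distances and product expansion.} The rate bounds are arithmetic in $e_i\approx(q_i-1)/D$ and $d_i\approx(q_i-1)/D$, using $q_i\ge Q_{\mathrm{end}}$. The two terms of $Q_{\mathrm{end}}$ are exactly what make $(e_i+1)/n_i$ and $(d_i+1)/n_i$ lie in $[\epsilon,\beta]$. Relative distance at least $\epsilon$ follows from Proposition~\ref{exloc:norm-bridge}.

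For product expansion, each chart code becomes, after the monomial map of Proposition~\ref{exloc:norm-bridge} and a coordinate permutation (Frobenius acts on the $S_i$ as a permutation), either $\operatorname{RS}(S_i,k_i)$ with $\epsilon n_i\le k_i\le(1-\epsilon)n_i$ or its dual. The exponents $r+a_i$ are distinct and their ratios are bounded by $K$. Theorem~\ref{exloc:finite-grid-pe} therefore applies to every subtuple, every choice of duals and every extension $E'$, with the constant $\rho$ fixed in advance. The variant clauses of that theorem cover the permutations and multipliers.

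\textbf{Main obstacle.} The delicate step is the Frobenius matching of characters across an edge. Specifically, we must check that the twisted type-$1$ encoder is still equivariant for the full conjugated stabilizer $K_1$, not only for $I$. We also must check that the resulting edge identification is independent of coset representatives. I would first verify this on the Borel subgroup and on the element $s$, and then use generation of $\mathrm{SL}_2$.
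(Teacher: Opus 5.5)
Your proposal follows the paper's proof essentially step for step. The common route is: reciprocal projective Reed--Solomon endpoint codes of degrees $e_i,d_i$, with a Frobenius twist on the type-$1$ evaluation arguments to match the characters on $I$; induced encoders on each tree, with duals for $i\notin L$; the external tensor product; descent through the full-star identification; and product expansion from Theorem~\ref{exloc:finite-grid-pe} after monomial identification of each chart code.

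The step you flag as the main obstacle is immediate once you fix $u=2^{\ell-h}$, which makes $ud\equiv-e \pmod{q-1}$. The twisted space is then $\mathrm{SL}_2(\mathbb F_q)$-invariant because $(g^{-1}v)^{[u]}=(g^{[u]})^{-1}v^{[u]}$. The induced-encoder construction then makes the encoder $K_1$-equivariant, and changing coset representatives only rescales coordinates. So no generation argument is needed, and none involving $s$, which does not lie in $K_1$.
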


The proof uses representations of the local endpoint stabilizers,
followed by tensor products over $E$ and descent to the quotient.
We first construct a pair of evaluation codes whose characters
on the common edge stabilizer agree.

\subsection{Reciprocal projective evaluation representations}

Let $q=2^\ell$ with $\ell\ge1$, and let $E\supseteq\mathbb F_q$
be a field. For $j\in\mathbb Z$, put
\begin{equation}\label{exloc:homogeneous-action}
 \mathcal H_j=\{F:\mathbb F_q^2\setminus\{0\}\to E:
       F(av)=a^jF(v)\ (a\in\mathbb F_q^\times)\},\qquad
 (gF)(v)=F(g^{-1}v)
 \quad(g\in\mathrm{SL}_2(\mathbb F_q)).
\end{equation}
Choose $r_x\in\mathbb F_q^2\setminus\{0\}$ representing every
$x\in\mathbb P^1(\mathbb F_q)$. Evaluation at these representatives
identifies $\mathcal H_j$ with $E^{\mathbb P^1(\mathbb F_q)}$.

\begin{lemma}\label{exloc:homogeneous-representation}
\label{exloc:infinity-character}
The action in \eqref{exloc:homogeneous-action} is an $E$-linear
representation, monomial in the evaluation coordinates. Its
space and action depend only on $j\bmod(q-1)$. If
\[
 B_q=\left\{\begin{pmatrix}c&z\\0&c^{-1}\end{pmatrix}:
          c\in\mathbb F_q^\times,\ z\in\mathbb F_q\right\},
 \qquad
 \chi_j\begin{pmatrix}c&z\\0&c^{-1}\end{pmatrix}=c^{-j},
\]
then evaluation at $v_\infty=(1,0)$ is a $B_q$-equivariant
map $\mathcal H_j\to E_{\chi_j}$.
\end{lemma}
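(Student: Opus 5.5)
The plan is to verify the claims of Lemma~\ref{exloc:homogeneous-representation} one at a time, directly from the definition \eqref{exloc:homogeneous-action}. The only care needed is with the inverse in $(gF)(v)=F(g^{-1}v)$ and with reading off the character at $v_\infty$.

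First, well-definedness and linearity. For $g\in\mathrm{SL}_2(\mathbb F_q)$ and $F\in\mathcal H_j$, the function $gF$ is again homogeneous of degree $j$. Indeed, $g^{-1}$ is $\mathbb F_q$-linear and preserves $\mathbb F_q^2\setminus\{0\}$, so
\[
 (gF)(av)=F(ag^{-1}v)=a^jF(g^{-1}v)=a^j(gF)(v).
\]
The action is $E$-linear in $F$. It is a left action because
\[
 (g(hF))(v)=(hF)(g^{-1}v)=F(h^{-1}g^{-1}v)=F((gh)^{-1}v),
\]
and the identity acts trivially.

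Second, the representation is monomial in the evaluation coordinates. Write $g^{-1}r_x=a_{g,x}r_{g^{-1}x}$ with $a_{g,x}\in\mathbb F_q^\times$; this is possible since $g^{-1}r_x$ spans the line $g^{-1}x$. Then
\[
 (gF)(r_x)=a_{g,x}^jF(r_{g^{-1}x}).
\]
So in the coordinates $F\mapsto(F(r_x))_x$, each $g$ acts by the permutation $x\mapsto g^{-1}x$ combined with the nonzero scalars $a_{g,x}^j$. The evaluation map is a bijection $\mathcal H_j\to E^{\mathbb P^1(\mathbb F_q)}$: a homogeneous function is determined by its values on representatives, and any prescribed values extend by $F(ar_x)=a^jF(r_x)$. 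This extension is consistent because every nonzero vector is uniquely $ar_x$. Dependence only on $j\bmod(q-1)$ follows because $a^{q-1}=1$ for $a\in\mathbb F_q^\times$. Hence the defining condition, the scalars $a_{g,x}^j$, and therefore both the space and the action are unchanged when $j$ is replaced by $j+(q-1)$.

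Third, the $B_q$-equivariance of evaluation at $v_\infty=(1,0)$. For
\[
 b=\begin{pmatrix}c&z\\0&c^{-1}\end{pmatrix},\qquad
 b^{-1}=\begin{pmatrix}c^{-1}&-z\\0&c\end{pmatrix},
\]
we get $b^{-1}v_\infty=(c^{-1},0)=c^{-1}v_\infty$. Therefore
\[
 (bF)(v_\infty)=F(c^{-1}v_\infty)=c^{-j}F(v_\infty)=\chi_j(b)F(v_\infty).
\]
This is precisely equivariance into the one-dimensional representation $E_{\chi_j}$. The map $\chi_j$ is a homomorphism $B_q\to E^\times$, since the upper-left entries multiply under matrix products.

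No step is a genuine obstacle. The only point needing attention is the sign of the exponent, namely that it is $c^{-j}$ rather than $c^{j}$. This comes from the $g^{-1}$ in the action, and the computation above settles it.
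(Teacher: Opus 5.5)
Your proposal is correct and follows essentially the same route as the paper's proof: you check that homogeneity is preserved and that $(gF)(v)=F(g^{-1}v)$ composes correctly, derive the monomial coordinate formula from $g^{-1}r_x=c_{g,x}r_{g^{-1}x}$, reduce the exponent modulo $q-1$ using $a^{q-1}=1$, and compute $g^{-1}v_\infty=c^{-1}v_\infty$ to obtain the character $c^{-j}$. You also give the bijectivity of evaluation and the left-action verification in a little more detail than the paper does.
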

\begin{proof}
The inverse of evaluation is the map sending a coordinate list
$(u_x)_x$ to $F(ar_x)=a^ju_x$; the expression $ar_x$ is unique.
For $g,g'\in\mathrm{SL}_2(\mathbb F_q)$,
\[
 (gF)(av)=a^j(gF)(v),\qquad
 g(g'F)(v)=F((gg')^{-1}v).
\]
If $g^{-1}r_x=c_{g,x}r_{g^{-1}x}$, the coordinate formula is
$(gF)(r_x)=c_{g,x}^{j}F(r_{g^{-1}x})$, with $c_{g,x}\ne0$.
This proves the representation and monomial assertions. Since
$a^{q-1}=1$ for $a\in\mathbb F_q^\times$, congruent exponents
give identical homogeneity conditions. Finally, the upper-left
entry of a product in $B_q$ is the product of the upper-left
entries, so $\chi_j$ is a character. For $g\in B_q$ with that
entry equal to $c$, one has $g^{-1}v_\infty=c^{-1}v_\infty$,
and hence $(gF)(v_\infty)=c^{-j}F(v_\infty)$.
\end{proof}

For integers $\ell\ge h$, put
\begin{equation}\label{exloc:reciprocal-pair}
 \begin{gathered}
 q=2^\ell,\quad n=q+1,\quad M=q-1,\quad u=2^{\ell-h},\\
 e=\lfloor M/D\rfloor,\quad R=M-De,\quad d=M-s_he=e+R,\\
 U=\{(P(r_x))_x:P\in E[X,Y]_e\},\qquad
 V=\{(P(r_x^{[u]}))_x:P\in E[X,Y]_d\}.
 \end{gathered}
\end{equation}
Here $E[X,Y]_j$ is the homogeneous degree-$j$ space and
$(X,Y)^{[u]}=(X^u,Y^u)$. The functions underlying $V$ raise
the arguments, not the polynomial coefficients, to the $u$th
power. Thus both evaluation maps in \eqref{exloc:reciprocal-pair}
are $E$-linear.

\begin{theorem}\label{exloc:reciprocal-prs}
If $\ell\ge h$, $q\ge Q_{\mathrm{end}}(h,\epsilon,\beta)$,
and $E\supseteq\mathbb F_q$ is finite, then the spaces $U,V$
in \eqref{exloc:reciprocal-pair} are invariant in
$\mathcal H_e,\mathcal H_{ud}$, respectively. Their evaluation
characters at infinity are $\chi_e$ and $\chi_{-e}$.
Both are monomial images of full projective Reed--Solomon codes,
and
\[
 \dim U=e+1,\qquad\dim V=d+1,\qquad
 \epsilon\le\frac{e+1}{n}\le\frac{d+1}{n}\le\beta.
\]
The four codes $U,V,U^\perp,V^\perp$ have relative distance at
least $\epsilon$. For fixed $h$, both endpoint rates tend to
$1/(2^h+1)$ as $\ell\to\infty$.
\end{theorem}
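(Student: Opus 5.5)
The plan is to identify $U$ and $V$ with (monomially transformed) projective Reed--Solomon codes and then read off every assertion from Lemma~\ref{exloc:homogeneous-representation} and Proposition~\ref{exloc:norm-bridge}, together with elementary arithmetic on $e,d$.

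\emph{Invariance and characters.} For $U$, a form $P\in E[X,Y]_e$ satisfies $P(av)=a^eP(v)$, so its evaluation lies in $\mathcal H_e$. For $g\in\mathrm{SL}_2(\mathbb F_q)$ one has $(gP)(v)=P(g^{-1}v)$, and $P\circ g^{-1}$ is again a degree-$e$ form, so $U$ is invariant. By Lemma~\ref{exloc:homogeneous-representation} its character at infinity is $\chi_e$.

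For $V$, put $F(v)=P(v^{[u]})$. Then $F(av)=a^{ud}F(v)$, so $F\in\mathcal H_{ud}$. Since $u$ is a power of $2$, $x\mapsto x^u$ is a field automorphism of $\mathbb F_q$. Hence $(g^{-1}v)^{[u]}=(g^{-1})^{[u]}v^{[u]}$, where $(g^{-1})^{[u]}$ applies Frobenius to the entries. It follows that $(gF)(v)=P'(v^{[u]})$ with $P'=P\circ(g^{-1})^{[u]}\in E[X,Y]_d$, which gives invariance.

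For the character, $ud=uM-us_he=uM-qe$. Since $q\equiv1\pmod{q-1}$ and $M=q-1$, this gives $ud\equiv-e\pmod{q-1}$. Lemma~\ref{exloc:homogeneous-representation} then yields $\chi_{ud}=\chi_{-e}$.

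\emph{Monomial identification, dimensions and distances.} $U$ is exactly $\operatorname{PRS}_q(e)$. For $V$, note that $r_x^{[u]}$ is a nonzero representative of $\sigma(x)$, where $\sigma$ is the Frobenius permutation of $\mathbb P^1(\mathbb F_q)$. Thus $r_x^{[u]}=c_xr_{\sigma(x)}$ with $c_x\in\mathbb F_q^\times$, and $V=\operatorname{diag}(c_x^d)\,\sigma^*\operatorname{PRS}_q(d)$, a monomial image of a full PRS code. Because $0\le e\le d\le q$, Proposition~\ref{exloc:norm-bridge} gives $\dim U=e+1$ and $\dim V=d+1$. The same proposition gives distances $q+1-e$ and $q+1-d$, and generalized Reed--Solomon duals of dimensions $q-e$ and $q-d$, hence with distances $e+2$ and $d+2$. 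Lemma~\ref{cert:monomial-invariance} transports the duals through the monomial maps, so these distances apply to $U^\perp$ and $V^\perp$ as well. Consequently every relative distance is at least $\min\{(e+2)/n,\,1-(d+1)/n+1/n\}$, which is at least $\epsilon$ once the rate window $[\epsilon,\beta]$ is established, using $\beta\le1-\epsilon$.

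\emph{Rate window and limit.} This step is the main obstacle: the only real work is turning $q\ge Q_{\mathrm{end}}$ into the two inequalities. From $e\ge (M-D+1)/D=(q-D)/D$ one gets $(e+1)/n\ge q/(D(q+1))$. This is at least $\epsilon$ when $q(1-D\epsilon)\ge D\epsilon$, which is implied by the second term of $Q_{\mathrm{end}}$, since that term dominates $D\epsilon/(1-D\epsilon)$. From $d=M-s_he\le M-s_h(q-D)/D$ one gets $d+1\le q-s_h(q-D)/D=(q+s_h D)/D-s_hq/D+\dots$, which I would simplify to $(d+1)D\le q+Ds_h$ after collecting terms. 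Requiring this to be at most $\beta D(q+1)$ is a linear inequality in $q$ whose threshold is the third term $D(s_h-\beta)/(D\beta-1)$. I expect some bookkeeping with floors here and will check the constants carefully against that term. The ordering $(e+1)/n\le(d+1)/n$ follows from $d=e+R$ with $R\ge0$. Finally, $e/q\to1/D$ and $d/q=(M-s_he)/q\to1-s_h/D=1/D$, so both endpoint rates tend to $1/(2^h+1)$ as $\ell\to\infty$.
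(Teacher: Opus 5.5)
Your proposal is correct and is essentially the paper's proof: the same substitution and Frobenius invariance argument, the congruence $ud\equiv-e\pmod{q-1}$ for the characters, the monomial identification via $r_x^{[u]}=c_xr_{\sigma(x)}$, Proposition~\ref{exloc:norm-bridge} with the dual formula and Lemma~\ref{cert:monomial-invariance} for dimensions and distances, and the same two linear inequalities in $q$ against the fractional terms of $Q_{\mathrm{end}}$. In your upper bound the intermediate expression has a stray term, but $d+1\le q-s_h(q-D)/D=(q+s_hD)/D$ holds exactly, so your final inequality is right.

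The one point you skip is a hypothesis mismatch. Proposition~\ref{exloc:norm-bridge} assumes $E\supseteq\mathbb F_{q^2}$, while here only $E\supseteq\mathbb F_q$ is given. Either compute over a finite overfield containing $\mathbb F_{q^2}$ and descend dimensions, distances and duals by Lemma~\ref{exloc:scalar-extension}, as the paper does, or note that these projective Reed--Solomon codes are already MDS over any $E\supseteq\mathbb F_q$ and that duals of MDS codes are MDS.
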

\begin{proof}
Euclidean division gives $0\le R\le s_h$ and $d=e+R\ge e\ge0$.
The two fractional terms in \eqref{exloc:reciprocal-parameters}
give
\[
 \frac{q-1}{D}\ge\epsilon(q+1),\qquad
 \frac{q+s_hD}{D}\le\beta(q+1).
\]
Consequently
\[
 e+1\ge\epsilon n,\qquad
 d+1=\frac{q+s_h+s_hR}{D}
       \le\frac{q+s_hD}{D}\le\beta n<n.
\]
In particular $0\le e\le d<q$.

Substitution by $g^{-1}$ preserves the homogeneous degree-$e$
polynomial space, giving invariance of $U$. For $V$, use
\[
 (g^{-1}v)^{[u]}=(g^{[u]})^{-1}v^{[u]},\qquad
 P\longmapsto P\circ(g^{[u]})^{-1}:
 E[X,Y]_d\xrightarrow{\sim}E[X,Y]_d.
\]
These identities prove invariance with homogeneity exponent $ud$.
Since $s_hu=q$ and $M=q-1$,
\[
 ud=u(M-s_he)\equiv-qe\equiv-e\pmod M.
\]
Lemma~\ref{exloc:homogeneous-representation} therefore gives
the two stated characters. The congruence concerns exponents
of rational scalar actions; no equality of polynomial degrees
is needed.

Coordinatewise Frobenius induces a permutation $\pi_u$ of
$\mathbb P^1(\mathbb F_q)$. Write
$r_x^{[u]}=c_xr_{\pi_u(x)}$ with $c_x\in\mathbb F_q^\times$.
Then $P(r_x^{[u]})=c_x^dP(r_{\pi_u(x)})$, which identifies
$V$ with a monomial image of the degree-$d$ projective code.
The degree-$e$ assertion for $U$ follows directly from its definition.

Choose a finite overfield containing $E$ and $\mathbb F_{q^2}$.
The norm-grid identification of Proposition~\ref{exloc:norm-bridge},
applied to $e,d<q$, and the Reed--Solomon dual formula
\ref{exloc:rs-dual} give over this overfield
\[
 \begin{array}{c|cccc}
 C&U&V&U^\perp&V^\perp\\ \hline
 \dim C&e+1&d+1&q-e&q-d\\
 d(C)&n-e&n-d&e+2&d+2.
 \end{array}
\]
These are identities over $E$ by the scalar-extension lemma
\ref{exloc:scalar-extension}: generator ranks, kernel ranks, and
the ranks detecting a codeword with a prescribed support do not
change on extension. In particular, for every generator matrix $G$
over $E$ and every field extension $E'/E$,
\[
 E'\otimes_E\ker G^{\mathsf T}
   =\ker\bigl((E'\otimes_EG)^{\mathsf T}\bigr).
\]
Thus the first two relative distances
are at least $1-\beta\ge\epsilon$ and the latter two are at least
$(e+1)/n\ge\epsilon$. Finally $R$ is bounded independently of
$q$, and $e=(q-1-R)/D$, $d=e+R$. Division by $q+1$ proves
the two asserted limits.
\end{proof}

\begin{corollary}\label{exloc:all-endpoint-tuples}
Use the persistent parameters and the integer $r$ of
Theorem~\ref{exdiag:actual-direction-system}. Form $U_i,V_i$
over $E_0$ from \eqref{exloc:reciprocal-pair} with
$\ell=r+a_i$. For every extension $E'/E_0$, every nonempty
$J\subseteq[t]$, and independently chosen
\[
 D_i\in\{E'\otimes_{E_0}U_i,E'\otimes_{E_0}V_i,
            (E'\otimes_{E_0}U_i)^\perp,
            (E'\otimes_{E_0}V_i)^\perp\}\quad(i\in J),
\]
the tuple $(D_i)_{i\in J}$ has weighted product expansion at least
$\rho(t,K,\epsilon)$. The same holds after arbitrary independent
monomial changes in the coordinate spaces.
\end{corollary}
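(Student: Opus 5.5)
The corollary should follow by reducing every listed tuple to the Reed--Solomon tuples covered by Theorem~\ref{exloc:finite-grid-pe}, applied with the persistent parameters $t,K,\epsilon$ and hence the same $\rho=\rho(t,K,\epsilon)$. Fix $J$ and the choices $D_i$, and put $m_i=r+a_i$. These exponents are pairwise distinct and positive, and $q_i=2^{m_i}$. The hypothesis $2^{\max a_i-\min a_i}\le K$ gives $\max_{i\in J}q_i/\min_{i\in J}q_i\le K$ on the subtuple $J$. Relabel $J$ as $[d]$ with $d=|J|\le t$.

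The first step is to identify each factor with an affine Reed--Solomon code on a norm-one set. Choose a finite field $E_1\supseteq E_0$ containing every $\mathbb F_{q_i^2}$; in fact $E_0$ already does. Proposition~\ref{exloc:norm-bridge} then writes $\operatorname{PRS}_{q_i}(u)$ as $\operatorname{diag}(d_x^u)\operatorname{RS}(S_i,u+1)$ after the reindexing $x\mapsto z_x$, where $S_i=\mu_{q_i+1}$. This applies to $U_i$ with $u=e_i$. The proof of Theorem~\ref{exloc:reciprocal-prs} expresses $V_i$ as a monomial image (Frobenius permutation $\pi_u$ together with the scalars $c_x^{d}$) of the degree-$d_i$ projective code, and hence of $\operatorname{RS}(S_i,d_i+1)$. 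Both are therefore monomial images of $\operatorname{RS}(S_i,k_i)$ with $k_i\in\{e_i+1,d_i+1\}$.

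Theorem~\ref{exloc:reciprocal-prs} also gives $\epsilon n_i\le k_i\le\beta n_i\le(1-\epsilon)n_i$, which is exactly the dimension window of Theorem~\ref{exloc:finite-grid-pe}. Duals are handled using Lemma~\ref{cert:monomial-invariance}, since $(TC)^\perp=T^{-\mathsf T}C^\perp$. Thus $(E'\otimes U_i)^\perp$ is a monomial image of the dual of the extended RS code, and the theorem explicitly allows an independent dual in each factor. Here I need the dual of the extension to equal the extension of the dual; this is the kernel identity $E'\otimes\ker G^{\mathsf T}=\ker(E'\otimes G)^{\mathsf T}$ from Lemma~\ref{exloc:scalar-extension}.

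Finally apply Theorem~\ref{exloc:finite-grid-pe} to the $d$-tuple $(\operatorname{RS}(S_i,k_i))$, with $E_0$ as the base finite field and $E'$ as the extension. Its variant clause covers the subtuple, the independent dual choices, the permutations and the nonzero multipliers, all with the fixed $\rho$. Any further arbitrary monomial changes compose with the ones already used and remain monomial, so Lemma~\ref{cert:monomial-invariance} preserves the constant.

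The main point of care is bookkeeping rather than substance. The identification of $V_i$ passes through the Frobenius reindexing of projective points and through the norm bridge. Both must be monomial maps defined over $E_0$, so that they commute with extension to $E'$ and with taking duals. I would verify this explicitly, since $\alpha,\beta\in\mathbb F_{q^2}\subseteq E_0$, before invoking the theorem.
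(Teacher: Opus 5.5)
Your proposal is correct and follows essentially the same route as the paper: identify each endpoint code with a weighted Reed--Solomon code on the norm-one grid $\mu_{q_i+1}$ via Proposition~\ref{exloc:norm-bridge}, take the dimension window from Theorem~\ref{exloc:reciprocal-prs}, use scalar extension of kernels for the duals, then apply Theorem~\ref{exloc:finite-grid-pe} to the subtuple and finish with monomial invariance. The only difference is cosmetic. The paper converts the dual codes explicitly through the Reed--Solomon dual formula, whereas you pass them to the variant clause of the uniform product-expansion theorem, and that clause is itself proved by the same formula.
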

\begin{proof}
Theorem~\ref{exloc:reciprocal-prs} applies over $E_0$ in each
direction because $\mathbb F_{q_i}\subseteq E_0$,
$r+a_i\ge h$, and $q_i\ge Q_{\mathrm{end}}$.
The endpoint dimensions lie in
$[\epsilon n_i,\beta n_i]\subseteq[\epsilon n_i,(1-\epsilon)n_i]$.
Their dual codes have dimensions in the same latter interval.
Proposition~\ref{exloc:norm-bridge} identifies each endpoint,
over $E_0\supseteq\mathbb F_{q_i^2}$, with a weighted full
Reed--Solomon code on the norm-one grid $\mu_{q_i+1}$.
The dual formula \ref{exloc:rs-dual} does the same for its actual
dual code. These identifications extend to $E'$ by
Lemma~\ref{exloc:scalar-extension}. The inherited offsets in $J$
remain distinct and satisfy
\[
 2^{\max_{i\in J}a_i-\min_{i\in J}a_i}\le K.
\]
Theorem~\ref{exloc:finite-grid-pe}, with this subtuple and these
independent dimensions, gives the previously fixed constant
$\rho(t,K,\epsilon)$. Lemma~\ref{exloc:monomial-isometry}
preserves that constant under the indicated monomial maps.
\end{proof}

\subsection{Induced encoders and diagrams on one tree}

Let $K_*$ be a group, $I_*\le K_*$ a finite-index subgroup,
and $W$ a finite-dimensional $E$-representation of $I_*$.
We use the induced representation
\[
 \operatorname{Ind}_{I_*}^{K_*}W
 =\{F:K_*\to W:F(ki)=i^{-1}F(k)\ (k\in K_*,i\in I_*)\},
 \qquad (k_0F)(k)=F(k_0^{-1}k).
\]
The geometric coordinate set is $K_*/I_*$, consisting of left
cosets. If $U$ is a representation of $K_*$ and $a:U\to W$
is $I_*$-equivariant, define
\[
 A:U\longrightarrow\operatorname{Ind}_{I_*}^{K_*}W,
 \qquad(Au)(k)=a(k^{-1}u).
\]
All representations used here are linear over $E$; no
semisimplicity hypothesis is imposed.

\begin{lemma}
\label{exdiag:induced-encoder}
The map $A$ is well-defined, linear, and $K_*$-equivariant.
For coset representatives $k_\omega$, evaluation is an isomorphism
\[
 \operatorname{Ind}_{I_*}^{K_*}W\xrightarrow{\sim}
                    \bigoplus_{\omega\in K_*/I_*}W.
\]
The pairing
\begin{equation}\label{exdiag:induced-pairing}
 \langle F,F'\rangle
  =\sum_{\omega\in K_*/I_*}F'(k_\omega)(F(k_\omega)),
 \quad F'\in\operatorname{Ind}_{I_*}^{K_*}W^*,
\end{equation}
is independent of the representatives and is nondegenerate and
$K_*$-invariant. Here $W^*$ carries the contragredient action.
The pairing gives a $K_*$-equivariant isomorphism
\[
 \operatorname{Ind}_{I_*}^{K_*}(W^*)
 \xrightarrow{\sim}(\operatorname{Ind}_{I_*}^{K_*}W)^*.
\]
For every invariant subspace $D\leq\operatorname{Ind}_{I_*}^{K_*}W$,
its orthogonal complement
\[
 D^\perp=\{F'\in\operatorname{Ind}_{I_*}^{K_*}(W^*):
       \langle d,F'\rangle=0\text{ for all }d\in D\}
\]
is invariant under the contragredient action.
\end{lemma}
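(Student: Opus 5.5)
The proof is a direct verification from the definitions of the induced representation, done in four steps in the order of the statement.

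\emph{Step 1: $A$ is well defined and equivariant.} For $k\in K_*$ and $i\in I_*$,
\[
 (Au)(ki)=a(i^{-1}k^{-1}u)=i^{-1}a(k^{-1}u),
\]
using $I_*$-equivariance of $a$. So $Au$ lies in the induced space, and $A$ is linear because $a$ and the $K_*$-action are. For equivariance,
\[
 (A(k_0u))(k)=a(k^{-1}k_0u)=(Au)(k_0^{-1}k)=(k_0Au)(k).
\]

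\emph{Step 2: the evaluation isomorphism.} Fix representatives $k_\omega$. The equivariance rule $F(k_\omega i)=i^{-1}F(k_\omega)$ shows that $F$ is determined by its values at the $k_\omega$, which gives injectivity. Conversely, any tuple $(w_\omega)_\omega$ extends to $F(k_\omega i)=i^{-1}w_\omega$. This is consistent because every $k\in K_*$ has a unique factorization $k=k_\omega i$.

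\emph{Step 3: the pairing.} To see independence of representatives, replace $k_\omega$ by $k_\omega i$. The summand becomes
\[
 (i^{-1}F'(k_\omega))(i^{-1}F(k_\omega)),
\]
and this equals $F'(k_\omega)(F(k_\omega))$ by the definition of the contragredient action, which preserves the canonical pairing $W^*\times W\to E$. For $K_*$-invariance, evaluate $\langle k_0F,k_0F'\rangle$ at the representatives $k_\omega$. Its summands are the values at $k_0^{-1}k_\omega$, and these elements form another system of representatives because left multiplication permutes $K_*/I_*$. The independence just proved then gives invariance. Under the isomorphism of Step 2, the pairing becomes the direct sum of the canonical pairings $W^*\times W\to E$, which are nondegenerate. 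So the pairing is nondegenerate, and by a dimension count it yields a linear isomorphism $\operatorname{Ind}(W^*)\to(\operatorname{Ind}W)^*$. This isomorphism is equivariant (with the contragredient action on the right) by invariance of the pairing.

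\emph{Step 4: invariance of $D^\perp$.} Let $F'\in D^\perp$ and $d\in D$. Then
\[
 \langle d,k_0F'\rangle=\langle k_0^{-1}d,F'\rangle=0,
\]
since $k_0^{-1}d\in D$.

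No step is deep. The only point needing care is the bookkeeping of the contragredient convention in Step 3: the action of $i^{-1}$ on $W^*$ must be the inverse transpose. Both independence of representatives and invariance depend on this.
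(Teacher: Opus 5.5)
Your proposal is correct and follows essentially the same route as the paper. Both arguments check well-definedness and equivariance of $A$ directly, invert coset evaluation via unique factorization $k=k_\omega i$, prove independence of representatives from the contragredient identity $(i^{-1}\lambda)(i^{-1}w)=\lambda(w)$, get invariance because left translation permutes the cosets, and deduce invariance of $D^\perp$ from invariance of the pairing; you additionally spell out the dimension count and the equivariance of $\operatorname{Ind}_{I_*}^{K_*}(W^*)\to(\operatorname{Ind}_{I_*}^{K_*}W)^*$, which the paper leaves implicit.
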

\begin{proof}
For $i\in I_*$ and $k,k_0\in K_*$,
\[
 (Au)(ki)=a(i^{-1}k^{-1}u)=i^{-1}(Au)(k),\qquad
 A(k_0u)(k)=a(k^{-1}k_0u)=(k_0Au)(k).
\]
These verify the first assertion. The inverse of coset evaluation
sends $(w_\omega)_\omega$ to $F(k_\omega i)=i^{-1}w_\omega$;
every element of $K_*$ has a unique such expression once the
representatives are fixed.

Replacing $k_\omega$ by $k_\omega i_\omega$ changes the paired
coordinates to $i_\omega^{-1}w$ and $i_\omega^{-1}\lambda$.
By the contragredient action,
$(i_\omega^{-1}\lambda)(i_\omega^{-1}w)=\lambda(w)$.
Thus \eqref{exdiag:induced-pairing} is independent of the
representatives. In any representative coordinates it is a
direct sum of nondegenerate pairings. Left translation permutes
the cosets and makes just such paired changes of representatives,
so the pairing is invariant. Finally, for an invariant subspace
$D$, $\lambda\in D^\perp$, $k\in K_*$, and $d\in D$,
$(k\lambda)(d)=\lambda(k^{-1}d)=0$, proving invariance of $D^\perp$.
\end{proof}

\begin{proposition}\label{exdiag:reciprocal-endpoints}
Let $L$ be a local field with residue field identified with
$\mathbb F_q$, where $q=2^\ell$ satisfies the hypotheses of
Theorem~\ref{exloc:reciprocal-prs}. Use the groups
$G,K_0,K_1,I$ and the matrix $s$ from
Lemma~\ref{exdiag:lattice-tree}. Let $U_0=U$ and $U_1=V$
be the homogeneous-function representations in that theorem.
Give $U_0$ the $K_0$ action through residue reduction and $U_1$
the $K_1$ action through $k\mapsto s^{-1}ks$ followed by reduction.
There is a common one-dimensional $I$-module $W$ such that
$a_b:U_b\to W$, evaluation at $(1,0)$, is $I$-equivariant for
$b=0,1$. Both induced encoders are injective, with images the
stated endpoint codes in suitable scalar edge coordinates.
\end{proposition}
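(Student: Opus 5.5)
Three facts have to be checked: the two evaluation maps at $(1,0)$ carry the same character of $I$; each $a_b$ is $I$-equivariant; and the induced encoders are injective with the projective Reed--Solomon images.

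\textbf{Step 1: the characters at $v_\infty$.} The proof reduces everything to the finite residue group. Reduction sends $I$ onto the upper-triangular subgroup $B_q\le\mathrm{SL}_2(\mathbb F_q)$ in $K_0$ coordinates, by Lemma~\ref{exdiag:lattice-tree}. Lemma~\ref{exloc:homogeneous-representation} then says that evaluation at $v_\infty=(1,0)$ on $U_0=U\le\mathcal H_e$ transforms under $I$ by $\chi_e$, evaluated at the reduction of $g\in I$. For $U_1=V$ the action factors through $g\mapsto s^{-1}gs$ followed by reduction. By the conjugation formula in Lemma~\ref{exdiag:lattice-tree}, this sends $\begin{pmatrix}a&b\\c&d\end{pmatrix}$ to $\begin{pmatrix}d&c/\pi\\\pi b&a\end{pmatrix}$. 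Its reduction is upper triangular with diagonal $(\bar d,\bar a)=(\bar a^{-1},\bar a)$. Theorem~\ref{exloc:reciprocal-prs} gives the character $\chi_{-e}$ on $V$, so $g$ acts on $a_1$ by $(\bar a^{-1})^{e}=\bar a^{\,e}$. Since $\chi_e(g)=\bar a^{-e}$, the two characters do not literally agree. I would resolve this by evaluating at the two geometric base points so that they match, replacing $a_b$ by evaluation twisted to the common character. Concretely, take $W=E$ with $I$ acting by $\chi_e\circ\mathrm{red}$. Let $a_0$ be evaluation at $(1,0)$. For $a_1$, use evaluation at $(1,0)$ after the identification $v\mapsto$ its contragredient, which is valid because $\chi_{-e}=\chi_e^{-1}$ and a one-dimensional representation over $E$ is determined up to that inversion by swapping the diagonal entries. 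If instead I had mislabelled which diagonal entry enters $\chi_j$, the two characters agree on the nose, since the exchange of diagonal entries by $s$ converts $\chi_{-e}$ into $\chi_e$. That exchange is exactly the design purpose of the Frobenius-reindexed pair $U,V$. I expect this bookkeeping to be the main obstacle: pinning down the sign convention so that the map is literally $I$-equivariant into the common $W$. I would settle it by direct computation on a diagonal element $\operatorname{diag}(c,c^{-1})$ lifted to $I$.

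\textbf{Step 2: equivariance of $a_b$.} With the common $W$ fixed, $I$-equivariance of each $a_b$ is Lemma~\ref{exloc:homogeneous-representation} applied through the respective reduction maps. In characteristic two the sign $\pm I_2$ acts trivially, so the action descends to $K_b$.

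\textbf{Step 3: injectivity and the images.} Lemma~\ref{exdiag:induced-encoder} gives the encoders $A_b:U_b\to\operatorname{Ind}_I^{K_b}W\cong\bigoplus_{\omega\in K_b/I}W$, where $K_b/I\cong\mathbb P^1(\mathbb F_q)$ by Lemma~\ref{exdiag:lattice-tree}. Choose representatives $k_x$ whose reductions send $v_\infty$ to a multiple $\lambda_x r_x$. Then $(A_bu)(k_x)=a_b(k_x^{-1}u)=u(k_xv_\infty)=\lambda_x^{j}u(r_x)$, with $j$ the homogeneity exponent. After the scalar change of edge coordinates by $\lambda_x^{-j}$, the image is the evaluation code $\{(u(r_x))_x\}$ itself, namely $U$ or $V$. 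These are monomial images of full projective Reed--Solomon codes by Theorem~\ref{exloc:reciprocal-prs}. Injectivity follows because the encoder is the evaluation map on a space of dimension $e+1\le n$ or $d+1\le n$. Theorem~\ref{exloc:reciprocal-prs} shows that this map has the stated dimension, and since the elements of $\mathcal H_j$ are determined by their values at the $r_x$, no nonzero $u$ can vanish on all of them.
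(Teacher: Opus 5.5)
Your Steps 2 and 3 follow the paper's argument. You lift determinant-one residue matrices taking $(1,0)$ to a multiple of $r_x$, reading them in $s$-coordinates when $b=1$. You then compute $(A_bF)(k_x)=F(\overline{k_x}(1,0))$ and absorb the scalars into the edge coordinates. Injectivity is clear because $U,V$ are already subspaces of $E^{\mathbb P^1(\mathbb F_q)}$.

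The problem is Step 1, which carries the real content of the proposition. Your intermediate expression $(\bar a^{-1})^{e}$ is correct, but it equals $\bar a^{-e}$, not $\bar a^{e}$. The character $\chi_j$ sends an upper-triangular residue with upper-left entry $c$ to $c^{-j}$. The reduction of $s^{-1}gs$ has upper-left entry $\bar d=\bar a^{-1}$. So $\chi_{-e}$ gives $(\bar a^{-1})^{e}=\bar a^{-e}=\chi_e(\bar g)$.

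The two evaluation characters therefore agree on the nose, with the paper's conventions and no relabelling. The exchange of diagonal residues by $s$ is exactly what converts $\chi_{-e}$ into $\chi_e$. This is the paper's proof. Take $W=E$ with $i\in I$ acting by $c^{-e}$, where $(c,c^{-1})$ are the diagonal residues of a determinant-one lift. This is independent of the lift because $-I_2\equiv I_2$ in characteristic two. Then note $\chi_e(c)=c^{-e}=\chi_{-e}(c^{-1})$.

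The repair you propose for the spurious mismatch would fail. Evaluation at $(1,0)$ is a nonzero functional on each $U_b$ (it takes the value $1$ on $X^e$, respectively $X^d$). Hence $I$-equivariance of $a_b$ into $W$ forces the character of $W$ to equal the evaluation character on $U_b$. If those characters really were $\chi_e$ and $\chi_e^{-1}$, they would be distinct whenever $(q-1)\nmid e$, since $q-1$ is odd. Then no common one-dimensional $W$ could exist.

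Passing to a contragredient is not an $I$-module map. Replacing $a_1$ by a twisted functional changes the maps that the statement prescribes. So, as written, Step 1 asserts a false obstruction and then resolves it invalidly. The one-line computation above is what closes it, and your hedged remark about the diagonal swap already anticipates it.
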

\begin{proof}
For $i\in I$, choose an integral determinant-one lift whose
diagonal residues are $(c,c^{-1})$. Let $i$ act on $W=E$ by
$c^{-e}$. This is independent of the projective lift: the only
ambiguity is multiplication by $-I_2$, which reduces to $I_2$
in characteristic two. It is a character because the residue
matrix is upper triangular.

Conjugation by $s$ exchanges the two diagonal residues by
Lemma~\ref{exdiag:lattice-tree}. The evaluation characters at
the two endpoints are therefore
\[
 \chi_e(c)=c^{-e},\qquad
 \chi_{-e}(c^{-1})=c^{-e}.
\]
Both maps have the required same target action. Identify
$K_b/I$ with $\mathbb P^1(\mathbb F_q)$ in the coordinates of
that endpoint. For each $x$, choose a determinant-one residue
matrix taking $(1,0)$ to $r_x$, and lift it to $K_b$ using the
surjectivity in Lemma~\ref{exdiag:lattice-tree}. If this lift is
$k_x$, then, in the homogeneous-function model,
\[
 (A_bF)(k_x)=a_b(k_x^{-1}F)=F(k_x(1,0))=F(r_x).
\]
For $b=1$ the matrix in this formula is its matrix in
$s$-coordinates. The evaluation maps are injective by
Theorem~\ref{exloc:reciprocal-prs}. Different representatives
multiply individual scalar coordinates by nonzero elements,
which gives the claimed coordinate freedom.
\end{proof}

A $G$-equivariant diagram on a tree assigns a vector space
$\mathcal F(c)$ to every vertex or edge $c$, a map
$a_{e,v}:\mathcal F(v)\to\mathcal F(e)$ to every $v<e$,
and invertible transports
$T_{g,c}:\mathcal F(c)\to\mathcal F(gc)$ satisfying
\[
 T_{1,c}=I,\quad T_{gg',c}=T_{g,g'c}T_{g',c},\quad
 T_{g,e}a_{e,v}=a_{ge,gv}T_{g,v}.
\]
These are equivariant cosheaves on the one-dimensional face
poset, with identity maps on equal faces.

\begin{theorem}\label{exdiag:tree-diagram}
Let $G,T,K_0,K_1,I,v_0,v_1,e_*$ be as in
Lemma~\ref{exdiag:lattice-tree}. Suppose $U_b$ is a representation
$\rho_b$ of $K_b$, $W$ is a representation $\rho_e$ of $I$,
and $a_b:U_b\to W$ is $I$-equivariant with injective induced
encoder $A_b$, for $b=0,1$. Then these data define a
$G$-equivariant diagram on $T$. Its vertex stalks have dimensions
$\dim U_b$ at type $b$ and its edge stalks have dimension $\dim W$.
At every vertex, simultaneous separate changes of coordinates
in its incident edge stalks identify the aggregate incidence
map with $A_b$.
\end{theorem}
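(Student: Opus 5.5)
The plan is to build the diagram by the standard homogeneous-space construction, using that $G$ acts transitively on vertices of each type and on edges, with stabilizers $K_0,K_1,I$ (Lemma~\ref{exdiag:lattice-tree}).

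\emph{Stalks.} For a vertex $v$ of type $b$, put
\[
 \mathcal F(v)=\{\phi:\{g\in G:gv_b=v\}\to U_b:\ \phi(gk)=\rho_b(k)^{-1}\phi(g)\ (k\in K_b)\}.
\]
Equivalently, $\mathcal F(v)=G\times_{K_b}U_b$ restricted to the fiber over $v$. For an edge $e$, define $\mathcal F(e)$ in the same way from $\{g:ge_*=e\}$, $I$ and $W$. Each fiber is a single left coset, so evaluation at any chosen $g$ identifies $\mathcal F(v)\cong U_b$ and $\mathcal F(e)\cong W$. This gives the dimension claims. Transports are left translation, $(T_{g_0,c}\phi)(g)=\phi(g_0^{-1}g)$; the cocycle identities $T_{1,c}=I$ and $T_{gg',c}=T_{g,g'c}T_{g',c}$ are immediate.

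\emph{Incidences.} If $v<e$ with $v$ of type $b$, then every $g$ with $ge_*=e$ satisfies $gv_b=v$, because $G$ preserves types and $v_b$ is the unique type-$b$ endpoint of $e_*$. Define $(a_{e,v}\phi)(g)=a_b(\phi(g))$ for such $g$. Well-definedness under $g\mapsto gi$, $i\in I\le K_b$, is exactly the $I$-equivariance of $a_b$. Compatibility with transports, $T_{g,e}a_{e,v}=a_{ge,gv}T_{g,v}$, holds because both sides are precomposition with left multiplication.

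\emph{Aggregate map.} Fix $v$ of type $b$ and $g_v$ with $g_vv_b=v$. The edges at $v$ are $g_vk_\omega e_*$ for coset representatives $k_\omega\in K_b/I$. Trivialize $\mathcal F(v)$ by evaluation at $g_v$, and each incident edge stalk by evaluation at $g_vk_\omega$; these are separate coordinate choices, one for each edge stalk. In these coordinates the aggregate map sends $u$ to
\[
 \bigl(a_b(\rho_b(k_\omega)^{-1}u)\bigr)_\omega=\bigl((A_bu)(k_\omega)\bigr)_\omega .
\]
Under the coset-evaluation isomorphism of Lemma~\ref{exdiag:induced-encoder}, this is precisely $A_b$. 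Injectivity of $A_b$ is then inherited by the aggregate map.

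The only delicate point is bookkeeping: one must check that the representatives $k_\omega$ really parametrize the incident edges bijectively, via $K_b/I\leftrightarrow$ edges at $v_b$ from Lemma~\ref{exdiag:lattice-tree}. One must also check that a different choice of $g_v$ or of the $k_\omega$ changes the coordinates only by separate invertible maps $\rho_e(i)$ on the individual edge stalks, together with one vertex-stalk map, so the identification with $A_b$ is intrinsic up to the allowed changes. For $b=1$, the action on $U_1$ is the one through conjugation by $s$, so the formula is read in $s$-coordinates, exactly as in Proposition~\ref{exdiag:reciprocal-endpoints}.
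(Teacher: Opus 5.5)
Your proposal is correct and is essentially the paper's argument in frame-free form: the paper fixes $p_v,p_e\in G$ with $p_vv_b=v$ and $p_ee_*=e$, takes the stalks to be literally $U_b$ and $W$, and sets $a_{e,v}=a_b\rho_b(p_e^{-1}p_v)$ and $T_{g,c}=\rho(p_{gc}^{-1}gp_c)$. This is exactly your associated-bundle construction trivialized by evaluation at those frames, and the paper's final rescaling of each incident edge stalk by $\rho_e(i_k)$ is the same step as your evaluation at $g_vk_\omega$. The only difference is presentation: your choice-free version makes the cocycle and equivariance identities tautological, while the paper's keeps the stalks concretely equal to $U_b$ and $W$.
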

\begin{proof}
Choose frames $p_v\in G$ with $p_vv_b=v$ for every type-$b$
vertex and $p_e\in G$ with $p_ee_*=e$ for every edge.
These exist by Lemma~\ref{exdiag:lattice-tree}. Give a type-$b$
vertex the stalk $U_b$ and an edge the stalk $W$. Define
\begin{equation}\label{exdiag:incidence-formula}
 a_{e,v}=a_b\rho_b(p_e^{-1}p_v),\qquad
 T_{g,v}=\rho_b(p_{gv}^{-1}gp_v),\qquad
 T_{g,e}=\rho_e(p_{ge}^{-1}gp_e).
\end{equation}
The type-$b$ endpoint of $e$ is both $p_ev_b$ and $p_vv_b$,
so $p_e^{-1}p_v\in K_b$. The other two group arguments fix
$v_b$ and $e_*$, respectively, so all maps in
\eqref{exdiag:incidence-formula} are defined.

For a vertex or edge $c$, the identity
\[
 p_{gg'c}^{-1}gg'p_c
    =(p_{gg'c}^{-1}gp_{g'c})(p_{g'c}^{-1}g'p_c)
\]
gives the transport composition law. The identity transport is
obtained by setting $g=1$. For incidence equivariance, let
$i=p_{ge}^{-1}gp_e\in I$. Then
\[
 \begin{aligned}
 a_{ge,gv}T_{g,v}
 &=a_b\rho_b(p_{ge}^{-1}gp_v)
  =a_b\rho_b(i)\rho_b(p_e^{-1}p_v)\\
 &=\rho_e(i)a_b\rho_b(p_e^{-1}p_v)
  =T_{g,e}a_{e,v},
 \end{aligned}
\]
where the third equality uses precisely the $I$-equivariance
of $a_b$.

Fix a type-$b$ vertex $v$. Its incident edges are
$e_k=p_vk e_*$, indexed by $kI\in K_b/I$.
After choosing coset representatives, write
$p_{e_k}=p_vk i_k$ with $i_k\in I$. The coordinate formula is
\[
 a_{e_k,v}=\rho_e(i_k^{-1})a_b\rho_b(k^{-1}).
\]
Changing the coordinates in the edge stalk of $e_k$ by
$\rho_e(i_k)$ turns this map into the $kI$ coordinate of $A_b$.
These choices are made once for every incident edge and do not
mix distinct edge stalks. They identify the entire star with
the claimed injective encoder.
\end{proof}

\begin{corollary}\label{exdiag:one-color-dual}
In Theorem~\ref{exdiag:tree-diagram}, write
$D_b=\operatorname{im}A_b\subseteq\operatorname{Ind}_I^{K_b}W$.
The representations $D_b^\perp\subseteq\operatorname{Ind}_I^{K_b}W^*$,
the common edge space $W^*$, and the maps
$a_b^\perp(\lambda)=\lambda(1)$ define a $G$-equivariant diagram.
Its endpoint star code is the orthogonal complement
$D_b^\perp$ in the paired edge coordinates. When $\dim W=1$,
its endpoint dimension is $q+1-\dim U_b$.
\end{corollary}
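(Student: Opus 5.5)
Apply Theorem~\ref{exdiag:tree-diagram} to the dualized local data: replace $U_b$ by the $K_b$-representation $D_b^\perp$, $W$ by $W^*$ with the contragredient $I$-action, and $a_b$ by $a_b^\perp(\lambda)=\lambda(1)$. This needs three facts. First, $D_b^\perp$ is $K_b$-invariant; this is the last clause of Lemma~\ref{exdiag:induced-encoder}, since $D_b=\operatorname{im}A_b$ is invariant by equivariance of $A_b$. Second, $a_b^\perp$ is $I$-equivariant. For $i\in I$ and $\lambda\in\operatorname{Ind}_I^{K_b}W^*$ we have $(i\lambda)(1)=\lambda(i^{-1})=\lambda(1\cdot i^{-1})=i\lambda(1)$, using the defining relation $F(ki)=i^{-1}F(k)$ with $k=1$. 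So evaluation at $1$ is $I$-equivariant on the whole induced space, and hence on the subrepresentation.

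Third, the induced encoder of $a_b^\perp$ is injective, and its image is exactly $D_b^\perp$. For $\lambda\in D_b^\perp$ the induced encoder sends $\lambda$ to $k\mapsto a_b^\perp(k^{-1}\lambda)=(k^{-1}\lambda)(1)=\lambda(k)$. It is therefore the inclusion $D_b^\perp\hookrightarrow\operatorname{Ind}_I^{K_b}W^*$, which is injective with the stated image. The same computation, applied to the original data, shows why the general induced-encoder formalism is consistent here. Theorem~\ref{exdiag:tree-diagram} then produces the $G$-equivariant diagram. It also gives edge-coordinate changes at each vertex identifying the aggregate star map with this inclusion.

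The main point to check is that ``endpoint star code is the orthogonal complement in the paired edge coordinates'' is compatible with the coordinate changes. In the original diagram, the edge stalk of $e_k$ is rescaled by $\rho_e(i_k)$, where $p_{e_k}=p_vki_k$. In the dual diagram the frames $p_e$ are the same, so the corresponding change is by the contragredient $\rho_e^*(i_k)=\rho_e(i_k^{-1})^{\mathsf T}$. Under the pairing \eqref{exdiag:induced-pairing} with representatives $k_\omega$, these are inverse-transpose changes of coordinates. By the computation in Lemma~\ref{exdiag:induced-encoder} (independence of representatives), they preserve the pairing. Hence the identified star codes of the two diagrams are mutually orthogonal under the coordinatewise pairing of $W$ with $W^*$, and the dual star code equals $D_b^\perp$ in those paired coordinates. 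This is the analogue of the inverse-transpose identity \eqref{hom:inverse-transpose-chart}, and I would verify it coordinate by coordinate.

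Finally, when $\dim W=1$, the induced space has dimension $[K_b:I]=q+1$ by Lemma~\ref{exdiag:lattice-tree}. Nondegeneracy of the pairing gives $\dim D_b^\perp=q+1-\dim D_b$. Injectivity of $A_b$ gives $\dim D_b=\dim U_b$, so $\dim D_b^\perp=q+1-\dim U_b$.
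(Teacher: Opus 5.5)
Your proposal is correct and follows essentially the same route as the paper. Invariance of $D_b^\perp$ comes from Lemma~\ref{exdiag:induced-encoder}. The two evaluation identities show that $a_b^\perp$ is $I$-equivariant and that its induced map is the inclusion of $D_b^\perp$. Theorem~\ref{exdiag:tree-diagram} is then applied, and nondegeneracy of the pairing gives the dimension count. Your explicit check that the edge rescalings $\rho_e(i_k)$ and $\rho_e^*(i_k)=\rho_e(i_k^{-1})^{\mathsf T}$ preserve the pairing spells out what the paper compresses into the phrase ``in paired representative coordinates.''
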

\begin{proof}
Lemma~\ref{exdiag:induced-encoder} shows that $D_b^\perp$ is
an invariant subspace of the indicated induced representation. For $i\in I$ and $k\in K_b$,
\[
 a_b^\perp(i\lambda)=\lambda(i^{-1})=i\lambda(1),\qquad
 (A_b^\perp\lambda)(k)
       =a_b^\perp(k^{-1}\lambda)=\lambda(k).
\]
Thus $a_b^\perp$ is $I$-equivariant and its induced map is exactly
the inclusion of $D_b^\perp$. Apply
Theorem~\ref{exdiag:tree-diagram} with these two endpoint modules
and $W^*$. In paired representative coordinates, the pairing is
\eqref{exdiag:induced-pairing}, which is the ordinary coordinate
pairing when $W$ is one-dimensional. Its nondegeneracy gives
the dimension formula.
\end{proof}

\subsection{Tensor products and simultaneous full-star charts}

For $i\in[t]$, let $G_i$ act on a typed tree $T_i$ and let
$\mathcal F_i$ be an equivariant diagram from
Theorem~\ref{exdiag:tree-diagram}, over the same field $E$.
Put $G=\prod_iG_i$ and $\mathcal T=\prod_iT_i$. For a face
$f=(f_i)_i$, and for $f\le g$, define
\begin{equation}\label{exdiag:external-product-formulas}
 \mathcal F(f)=\bigotimes_{i=1}^t\mathcal F_i(f_i),\qquad
 R_{gf}=\bigotimes_{i=1}^tR^i_{g_i f_i},\qquad
 T_{\gamma,f}=\bigotimes_{i=1}^tT^i_{\gamma_i,f_i}.
\end{equation}
Here $R^i$ is the incidence map of the tree diagram for a strict
inclusion, and the identity for an equality.

\begin{proposition}\label{exdiag:external-product}
The formulas \eqref{exdiag:external-product-formulas} define a
$G$-equivariant cosheaf on $\mathcal T$.
\end{proposition}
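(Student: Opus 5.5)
The verification is coordinatewise: each identity required of a $G$-equivariant cosheaf on $\mathcal T$ factors as a tensor product of the corresponding identities for the tree diagrams $\mathcal F_i$. I would first confirm that the face poset of $\mathcal T=\prod_iT_i$ is the product of the face posets of the trees. A face is a tuple $f=(f_i)$ with each $f_i$ a vertex or an edge, and $f\le g$ means $f_i\le g_i$ for every $i$. Consequently $R_{gf}=\bigotimes_iR^i_{g_if_i}$ is defined for every comparable pair, with $R^i_{g_if_i}$ equal to the tree incidence map $a_{g_i,f_i}$ when $f_i<g_i$ and to the identity when $f_i=g_i$.

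The cosheaf axioms then follow factorwise.
\begin{enumerate}
\item $R_{ff}=\bigotimes_iI=I$.
\item For $f\le g\le h$, we have $R_{hg}R_{gf}=\bigotimes_i R^i_{h_ig_i}R^i_{g_if_i}$, because tensor products of maps compose factorwise. So it suffices to check $R^i_{h_ig_i}R^i_{g_if_i}=R^i_{h_if_i}$ in each tree. The tree face poset has chains of length at most one, so at least one of $f_i=g_i$ or $g_i=h_i$ holds. In either case one factor is the identity and the identity reduces to a tautology.
\end{enumerate}
This is the only place the one-dimensionality of each factor matters, and it is why no further compatibility among the $a_{e,v}$ is needed.

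For the equivariance, $G=\prod_iG_i$ acts coordinatewise and preserves the product order. The transports are $T_{\gamma,f}=\bigotimes_iT^i_{\gamma_i,f_i}$.
\begin{enumerate}
\item $T_{1,f}=I$ holds factorwise.
\item The cocycle law $T_{\gamma\gamma',f}=T_{\gamma,\gamma'f}T_{\gamma',f}$ is the tensor product of the tree cocycle laws.
\item The intertwining identity $T_{\gamma,g}R_{gf}=R_{\gamma g,\gamma f}T_{\gamma,f}$ reduces in each factor to one of two cases. If $f_i<g_i$, it is exactly the tree identity $T_{g,e}a_{e,v}=a_{ge,gv}T_{g,v}$ from Theorem~\ref{exdiag:tree-diagram}. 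If $f_i=g_i$, it is trivial, since $R^i$ is the identity on both sides.
\end{enumerate}

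The main obstacle, such as it is, is bookkeeping rather than mathematics. The $t$ factors must be kept in the fixed increasing order, so that $R_{gf}$ and $T_{\gamma,f}$ are honest maps between the same ordered tensor products. One must also note that invertibility of $T_{\gamma,f}$ follows from invertibility of each factor. Characteristic two and the signs of the cubical differential play no role here; they enter only later, in Lemma~\ref{cube:differential-square}. With those observations the proposition follows.
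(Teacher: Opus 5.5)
Your proposal is correct and matches the paper's proof: you verify everything factorwise, use that a single tree coordinate has no chain of two strict inclusions (so one factor of each composite is the identity), and tensor the tree transport laws and incidence intertwining identities. The paper phrases the tensoring step as checking the identities on pure tensors, which span the stalks; this is the same as your appeal to factorwise composition of tensor products of maps.
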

\begin{proof}
For $f\le g\le h$, a single tree coordinate contains no chain
of two strict inclusions. Hence
$R^i_{h_i g_i}R^i_{g_i f_i}=R^i_{h_i f_i}$, since at least one
of the first two maps is the identity. Tensoring these equations
gives $R_{hg}R_{gf}=R_{hf}$, and $R_{ff}=I$.
The transport composition law and its compatibility with
$R_{gf}$ follow by tensoring the corresponding equations in
each factor. These equalities hold on pure tensors, which span
the stalks, and therefore are equalities of linear maps.
\end{proof}

\begin{theorem}\label{exdiag:tensor-star}
Suppose in addition that every edge representation $W_i$ is
one-dimensional. Fix an anchor face $f$ of $\mathcal T$ and put
$J=\codir(f)$. For $i\in J$, let $\Omega_i$ be the incident
edges at the vertex $f_i$ and let $U_i$ be its message space.
There are injective scalar encoders
$A_i:U_i\to E^{\Omega_i}$ and simultaneous isomorphisms
\[
 \Psi_{f,g}:\mathcal F(g)\xrightarrow{\sim}
       \bigotimes_{i\in\codir(g)}U_i\qquad(g\ge f)
\]
with the following property. If $g\ge f$ and $g<g'$ adds
an edge $\omega_j$ in direction $j\in\codir(g)$, then
\begin{equation}\label{exdiag:tensor-star-equation}
 \Psi_{f,g'}R_{g'g}\Psi_{f,g}^{-1}
 =\bigl(\operatorname{ev}_{\omega_j}\circ A_j\bigr)
       \otimes\bigotimes_{i\in\codir(g)\setminus\{j\}}I_{U_i},
\end{equation}
with tensor factors in increasing direction order and the scalar
output removed. Each coordinate isomorphism acts on its own
face stalk separately.
\end{theorem}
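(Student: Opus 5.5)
The plan is to build the charts one direction at a time and then tensor them. For each $i\in J$ the coordinate $f_i$ is a vertex of $T_i$, say of type $b_i$. Theorem~\ref{exdiag:tree-diagram} then supplies separate coordinate changes in the edge stalks at $f_i$. After these changes the aggregate incidence map $\mathcal F_i(f_i)\to\bigoplus_{e\ni f_i}\mathcal F_i(e)$ becomes the induced encoder $A_{b_i}$. Because $W_i$ is one-dimensional, each edge stalk is identified with $E$, and $A_i:=A_{b_i}$ is an injective scalar encoder $U_i=\mathcal F_i(f_i)\to E^{\Omega_i}$. Write $\phi_{i,\omega}:\mathcal F_i(\omega)\xrightarrow{\sim}E$ for the chosen coordinate on the edge stalk of $\omega\in\Omega_i$. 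These choices depend only on $f_i$, so they are made once per direction.

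For a face $g\ge f$, each coordinate $g_i$ with $i\in\dir(f)$ equals $f_i$, an edge whose stalk is not touched. Each coordinate with $i\in J$ is either the vertex $f_i$ or an edge $\omega_i\in\Omega_i$. Define
\[
 \Psi_{f,g}=\bigotimes_{i\in\dir(g)}\phi_{i,g_i}\otimes\bigotimes_{i\in\codir(g)}I_{U_i},
\]
where $\phi_{i,g_i}$ is the fixed edge coordinate on the one-dimensional stalk $\mathcal F_i(g_i)$. For $i\in\dir(f)$ this is an arbitrary fixed identification $\mathcal F_i(f_i)\cong E$, used for every $g$. All scalar factors are then removed by the canonical isomorphism $E\otimes V\cong V$. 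This acts face by face, and it is an isomorphism because each factor is.

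To check \eqref{exdiag:tensor-star-equation}, take $g<g'$ adding the edge $\omega_j$ in direction $j$. By \eqref{exdiag:external-product-formulas}, $R_{g'g}$ is the identity in every factor except $j$, where it is the tree incidence $a_{\omega_j,f_j}$. Conjugating by the charts leaves the identity factors as $I_{U_i}$ for $i\in\codir(g)\setminus\{j\}$. The scalar factors cancel because the same $\phi_{i,g_i}$ is used on both sides. In factor $j$ we get $\phi_{j,\omega_j}\circ a_{\omega_j,f_j}$, which by the choice of edge coordinates is exactly $\operatorname{ev}_{\omega_j}\circ A_j$. Reordering the tensor factors into increasing direction order is a canonical identification and does not affect the identity.

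The main obstacle is the bookkeeping. Every chart uses the same coordinate $\phi_{j,\omega}$ for a given edge, whatever larger face contains it, so the charts are genuinely simultaneous over the whole star and no cocycle condition is needed. The single point to verify with care is the one-dimensionality hypothesis. It is what makes the edge factors scalar, so that they can be stripped without changing the $U_i$ factors, and so that the coordinate changes from Theorem~\ref{exdiag:tree-diagram} are scalars rather than matrices.
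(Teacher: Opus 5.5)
Your proposal is correct and follows essentially the same route as the paper. You fix a basis of each active edge stalk once, use Theorem~\ref{exdiag:tree-diagram} at each vertex $f_i$ ($i\in J$) to choose scalar coordinates on all its incident edge stalks simultaneously, tensor these face by face and strip the scalar factors, and observe that $R_{g'g}$ changes only the $j$th factor, where it becomes $\operatorname{ev}_{\omega_j}\circ A_j$; your explicit formula for $\Psi_{f,g}$ simply makes concrete the paper's step ``tensor the coordinates just chosen''.
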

\begin{proof}
For every active direction of $f$, choose a basis of its fixed
edge stalk and use that basis in every coface. For a missing
direction $i\in J$, every coface has either the same vertex
$f_i$ or one of its incident edges. Apply
Theorem~\ref{exdiag:tree-diagram} at that one vertex, choosing
simultaneously coordinates in all its incident edge stalks.
Since $\dim W_i=1$, choose also a scalar basis of $W_i$.
The aggregate map in these coordinates is an injective map
$A_i:U_i\to E^{\Omega_i}$.

A coface $g$ selects some of these incident edges and leaves
all other missing directions equal to their original vertices.
Tensor the coordinates just chosen for its factors. All edge
factors are scalar and can be removed by the canonical tensor
identification, leaving exactly the displayed target of
$\Psi_{f,g}$. The incidence $g<g'$ changes only the $j$th
factor; in the chosen coordinates its map is
$\operatorname{ev}_{\omega_j}A_j$. This proves
\eqref{exdiag:tensor-star-equation}.
The coordinates in direction $j$ were chosen at $f_j$ once,
independently of the edges selected in the other directions.
The same equation consequently holds for every incidence in
the entire upward star simultaneously. No coordinate map
combines vectors from different face stalks.
\end{proof}

\subsection{Descent to the arithmetic quotients}

\begin{theorem}
\label{exdiag:descent}
Let $\mathcal F$ be the equivariant cosheaf of
Proposition~\ref{exdiag:external-product}. Suppose
$\Gamma\le\prod_iG_i$ acts freely on faces, preserves all
vertex types, and has a finite quotient $X=\Gamma\backslash\mathcal T$
satisfying Definition~\ref{def:cubical-complex}. Suppose also that
for every upstairs face the quotient map is an isomorphism
from its entire upward coface poset to that of its image.
Then $\mathcal F$ descends to a cosheaf $\mathcal F_X$ on $X$.
If the edge spaces are scalar, the simultaneous charts of
Theorem~\ref{exdiag:tensor-star} descend with all their incidence
identities and separate face blocks.
\end{theorem}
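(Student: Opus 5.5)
The plan is to build $\mathcal F_X$ as the invariant push-down of the equivariant cosheaf and to transport charts through the star isomorphism. For each quotient face $\bar f\in X$ I choose a lift $\tilde f\in\mathcal T$. Freeness gives, for any other lift $\gamma\tilde f$, a unique $\gamma\in\Gamma$. I set $\mathcal F_X(\bar f)=\mathcal F(\tilde f)$ with the chosen basis. Every other lift is identified with this stalk by the transport $T_{\gamma,\tilde f}$. Because $\gamma$ is unique, no stabilizer ambiguity arises, and the cocycle law $T_{\gamma\gamma',c}=T_{\gamma,\gamma'c}T_{\gamma',c}$ makes these identifications consistent.

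For an incidence $\bar f\le\bar g$ in $X$, the hypothesis that the quotient map is an isomorphism on upward stars of $\tilde f$ produces a unique lift $g'\ge\tilde f$ of $\bar g$. Write $g'=\gamma\tilde g$ with $\gamma\in\Gamma$ unique. I define
\[
 R^X_{\bar g\bar f}=T_{\gamma,\tilde g}^{-1}R_{g'\tilde f}.
\]
I then check independence of the chosen lifts. Replacing $\tilde f$ by $\delta\tilde f$ replaces $g'$ by $\delta g'$, and the equivariance $T_{\delta,g'}R_{g'\tilde f}=R_{\delta g',\delta\tilde f}T_{\delta,\tilde f}$ shows the new map is the old one conjugated by the lift-change transports, which is exactly the convention above.

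Next come the cosheaf identities. $R^X_{\bar f\bar f}=I$ because the unique lift above $\tilde f$ is $\tilde f$ itself. For $\bar f\le\bar g\le\bar h$, the star isomorphism at $\tilde f$ lifts the entire chain to $\tilde f\le g'\le h'$. The lift of $\bar h$ above $g'$ is the same $h'$, by uniqueness inside the star of $\tilde f$. Composing the defining formulas and using equivariance to move transports past $R$ reduces $R^X_{\bar h\bar g}R^X_{\bar g\bar f}=R^X_{\bar h\bar f}$ to $R_{h'g'}R_{g'\tilde f}=R_{h'\tilde f}$ upstairs. This bookkeeping is the main technical step: one must confirm that the lift of $\bar h$ chosen above $g'$ coincides with the one above $\tilde f$. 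That is where the full-upward-star hypothesis, and not merely local injectivity, is used.

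For the charts, at a quotient face $\bar f$ I take the charts $\Psi_{\tilde f,g}$ of Theorem~\ref{exdiag:tensor-star} at the lift $\tilde f$. Through the star isomorphism, these become maps on the stalks $\mathcal F_X(\bar g)$ for $\bar g\ge\bar f$, precomposed with the lift-change transport $T_{\gamma,\tilde g}$. Each transport acts on a single face block, so the resulting charts still act separately on blocks. Under this identification the quotient incidences in the star of $\bar f$ are literally the upstairs incidences $R_{g''g'}$, so \eqref{exdiag:tensor-star-equation} transfers verbatim.

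The remaining points are the geometric indexing and the type condition. The direction sets $\Omega_i$ are the edges at $f_i$, and these map bijectively to $E_i(x_{\bar f})$ for the image anchor vertex, again by the star isomorphism. Type preservation by $\Gamma$ guarantees that the endpoint parity of the encoder, $U_0$ versus $U_1$, is well defined downstairs.
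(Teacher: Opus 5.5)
Your proposal is correct and follows essentially the same route as the paper. Your fixed-lift model $\mathcal F_X(\bar f)=\mathcal F(\tilde f)$ with transports is the paper's equivalence-class construction written in chosen representatives. Your definition of $R^X$ via the unique coface lift, the equivariance check, the use of the full upward-star isomorphism to lift whole chains above $\tilde f$, and the transfer of the single upstairs chart at a fixed lift all match the paper's proof. Your closing remarks on geometric indexing and types belong more naturally to Lemma~\ref{exdiag:quotient-charts}, but they are harmless here.
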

\begin{proof}
For $f\in X$, define $\mathcal F_X(f)$ to be the classes of
pairs $(\widetilde f,u)$, where $\widetilde f$ is a lift of $f$
and $u\in\mathcal F(\widetilde f)$, under
\[
 (\widetilde f,u)\sim
       (\gamma\widetilde f,T_{\gamma,\widetilde f}u)
             \quad(\gamma\in\Gamma).
\]
The transport identities make this an equivalence relation.
Fixing a lift $\widetilde f$ identifies its stalk bijectively
with this set of classes: any other lift is $\gamma\widetilde f$
for a unique $\gamma$, because the action on faces is free.
Transfer the vector-space operations through this bijection.
A different lift changes the coordinates by an invertible linear
transport, so these operations are independent of the lift.

For $f\le g$, choose a lift $\widetilde f$. The full-star
hypothesis provides a unique lift $\widetilde g\ge\widetilde f$.
Set
\[
 R^X_{gf}[(\widetilde f,u)]
     =[(\widetilde g,R_{\widetilde g\widetilde f}u)].
\]
Replacing $\widetilde f$ by $\gamma\widetilde f$ replaces the
unique coface by $\gamma\widetilde g$. The identity
\[
 R_{\gamma\widetilde g,\gamma\widetilde f}T_{\gamma,\widetilde f}
       =T_{\gamma,\widetilde g}R_{\widetilde g\widetilde f}
\]
then proves well-definedness. For $f\le g\le h$, the entire
chain lifts above a fixed $\widetilde f$ by the assumed poset
isomorphism, and the composition identity upstairs proves
$R^X_{hg}R^X_{gf}=R^X_{hf}$. Identity maps descend in the same way.

For a fixed anchor $f$, choose its lift once and identify every
stalk above $f$ with the stalk at its unique coface lift.
Apply the one simultaneous upstairs chart at this lift of $f$.
The descended incidence maps in these coordinates are the
upstairs maps. Thus every chart identity holds simultaneously,
and all coordinate maps remain confined to their respective
face stalks.
\end{proof}

\begin{lemma}
\label{exdiag:quotient-charts}
Under the hypotheses of Theorem~\ref{exdiag:descent}, suppose
$X$ is $(n_1,\ldots,n_t)$-regular and every $W_i$ is scalar.
If the type-$b$ vertex space in direction $i$ has dimension
$m_{i,b}$, then $\mathcal F_X$ has compatible base-code data
with exactly these endpoint dimensions.
\end{lemma}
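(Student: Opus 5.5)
The proof unwinds Definition~\ref{def:compatible-base-codes} using the descended charts of Theorem~\ref{exdiag:descent}. We build the data face by face in $X$.

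\textbf{Step 1: choose anchors and charts.} For each face $f\in X$, fix a lift $\widetilde f$ and an anchor vertex $x_f\le f$. For each $i\in J_f=\codir(f)$, the $i$th coordinate $\widetilde f_i$ is a vertex of $T_i$ of type $\wt f_i$. Theorem~\ref{exdiag:tensor-star} at $\widetilde f$ then supplies injective scalar encoders $A_i:U_i\to E^{\Omega_i}$, where $\Omega_i$ is the set of tree edges at $\widetilde f_i$ and $\dim U_i=m_{i,\wt f_i}$. Choosing bases of the $U_i$ identifies $U_i$ with $U_{f,i}=E^{m_{i,\wt f_i}}$.

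\textbf{Step 2: transfer the edge index sets.} We must identify $\Omega_i$ with $\Omega_{f,i}=E_i(x_f)$ in the quotient. Let $\widetilde x$ be the lift of $x_f$ below $\widetilde f$, which is unique by the lower-cube isomorphism. Upstairs, the tree edges at $\widetilde f_i$ biject with the direction-$i$ edges of $\mathcal T$ at $\widetilde x$: take the edge $e$ in factor $i$ and the coordinates of $\widetilde x$ elsewhere. Since $x_f\le f$, the upward star of $x_f$ contains that of $f$. The full-star isomorphism at $\widetilde x$ therefore carries these edges bijectively to $E_i(x_f)$, using regularity for the count $n_i$. Transport $A_i$ along this bijection to obtain $E_{f,i}$.

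\textbf{Step 3: match the geometric indexing.} We must check that this bijection is compatible with $\theta_{f,x_f}$. That is, the coface $g=\theta_{f,x_f}(T,\omega_T)$ in $X$ should be the image of the unique upstairs coface of $\widetilde f$ that adds the tree edges corresponding to $\omega_T$. Upstairs, the face containing $\widetilde f$ and the edges $e_j$ through $\widetilde x$ is exactly the product face that replaces $\widetilde f_j$ by the tree edge. Regularity \eqref{eq:cubical-regularity} downstairs, together with the full upward star isomorphisms at $\widetilde x$ and $\widetilde f$, makes the two descriptions agree. This bookkeeping, reconciling the vertex-based indexing $\theta_{f,x}$ with the tree-edge indexing at $\widetilde f$, is the main point of care. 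It should follow from Lemma~\ref{lem:cubical-stars}, since $\theta_{\widetilde f,\widetilde x}$ is order-preserving and both sides are poset isomorphisms.

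\textbf{Step 4: define $\Psi$ and verify the identity.} Set $\Psi_{f,g}$ to be the descended chart of Theorem~\ref{exdiag:descent} at the anchor $f$, composed with the basis identifications of Step 1. Identity \eqref{eq:compatible-charts} is then \eqref{exdiag:tensor-star-equation} with $\omega_j$ relabelled by Step 2. The factor order is increasing in both statements, and the scalar edge factors are removed by the same canonical identification. The dimension formula $d_f=\prod_{i\in\codir f}m_{i,\wt f_i}$ follows from the target of $\Psi_{f,f}$, so the endpoint dimensions are exactly $m_{i,b}$.
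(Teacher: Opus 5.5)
Your proposal is correct and follows essentially the same route as the paper. You lift $f$, take the unique lift $\widetilde x$ of the anchor in the lower cube of $\widetilde f$, and use the full-star isomorphism at $\widetilde x$ to identify the tree edges at $\widetilde f_i$ with $E_i(x_f)$; you then check via Lemma~\ref{lem:cubical-stars} that the product cofaces descend to $\theta_{f,x_f}(T,\omega_T)$, and read \eqref{eq:compatible-charts} off the descended chart of Theorem~\ref{exdiag:tensor-star}, working anchor by anchor as Definition~\ref{def:compatible-base-codes} permits.
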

\begin{proof}
Fix $f\in X$, an anchor vertex $x_f\le f$, and a lift
$\widetilde f$. Its lower cube has a unique vertex
$\widetilde x_f$ over $x_f$. For a missing direction $i$,
$(\widetilde x_f)_i=\widetilde f_i$. The full-star isomorphism
at $\widetilde x_f$ therefore identifies the edges incident
to this tree vertex with $E_i(x_f)=\Omega_{f,i}$.
A product coface selecting directions $T\subseteq\codir(f)$
and incident edges $\omega_T$ contains $\widetilde f$ and
those lifted edges at $\widetilde x_f$. Its quotient is
therefore $\theta_{f,x_f}(T,\omega_T)$ by the characterization
in Lemma~\ref{lem:cubical-stars}.

Choose the simultaneous chart at $\widetilde f$ from
Theorem~\ref{exdiag:tensor-star}, descend it by
Theorem~\ref{exdiag:descent}, and choose bases of the message
spaces. Its encoders become
\[
 E_{f,i}:E^{m_{i,b_i(f)}}\hookrightarrow E^{\Omega_{f,i}}.
\]
Its stalk isomorphisms are the maps $\Psi_{f,g}$ in
Definition~\ref{def:compatible-base-codes}, and
\eqref{exdiag:tensor-star-equation}, with the geometric indexing
just verified, is precisely \eqref{eq:compatible-charts}.
The same fixed choices work for all cofaces of $f$. Repeating
this construction separately for each anchor is permitted by
that definition, which requires simultaneous incidence identities
within each star but no additional agreement between charts
at distinct anchors.
\end{proof}

\begin{proof}[Proof of Theorem~\ref{exdiag:actual-direction-system}]
The residue fields at the walking places in
Theorem~\ref{exar:tower} have cardinalities $q_i$.
Choose an isomorphism of each with the subfield
$\mathbb F_{q_i}\subseteq E_0$; finite-field uniqueness permits
these choices independently. A single such $E_0$ exists, for
example $E_0=\mathbb F_{2^d}$ for
$d=\operatorname{lcm}_{i\in[t]}2(r+a_i)$ inside
$\overline{\mathbb F}_2$.
For each $i$, the hypotheses of
Proposition~\ref{exdiag:reciprocal-endpoints} hold with
$\ell=r+a_i$, and it provides representations $U_{i,0},U_{i,1}$,
a scalar common edge representation $W_i$, and injective induced
encoders. If $i\in L$, use the resulting diagram of
Theorem~\ref{exdiag:tree-diagram}. If $i\notin L$, use its
diagram for the dual codes from
Corollary~\ref{exdiag:one-color-dual}. Their dimensions are
\eqref{exdiag:endpoint-dimensions}, and their rates and relative
distances follow from Theorem~\ref{exloc:reciprocal-prs}.

Take the external tensor product over $E$ in
Proposition~\ref{exdiag:external-product}. At every sufficiently
deep arithmetic level, Theorem~\ref{exar:tower} supplies freeness,
type preservation, the cubical intersection axiom,
$(n_1,\ldots,n_t)$-regularity, and the full upward-poset isomorphisms.
Theorem~\ref{exdiag:descent} and
Lemma~\ref{exdiag:quotient-charts} therefore produce the required
cosheaf and its compatible charts on the actual quotient.

It remains to verify product expansion for the codes in these
actual coordinates, including their dual codes. Choose
projective representatives over $\mathbb F_{q_i}$.
The corresponding endpoint codes $U_i,V_i$ and their
dual codes are defined over $E_0$. Every encoder image $C_{f,i}$
is a monomial image of the extension of one of these four codes:
\[
 C_{f,i}=M_{f,i}(E\otimes_{E_0}D_i),\qquad
 D_i\in\{U_i,V_i,U_i^\perp,V_i^\perp\}.
\]
Indeed changes of edge frames act by the scalar character of
$W_i$ or $W_i^*$, changes of coset representatives act by a
permutation and nonzero scalar in each coordinate, and changes
of message basis preserve the image. These are exactly the
coordinate changes used in the tree and tensor-star proofs.

For an arbitrary field extension $E'/E$, scalar extension of
matrices and their kernels gives
\[
 \begin{aligned}
 E'\otimes_E C_{f,i}&=M_{f,i}(E'\otimes_{E_0}D_i),\\
 (E'\otimes_E C_{f,i})^\perp
   &=M_{f,i}^{-T}(E'\otimes_{E_0}D_i^\perp).
 \end{aligned}
\]
The inverse transpose is taken with respect to the actual
coordinate pairings. In particular, if the direction already
uses a diagram for the dual codes, taking the dual code again
recovers the original finite-dimensional code.
Corollary~\ref{exloc:all-endpoint-tuples} applies to every
nonempty selected set of directions over $E'/E_0$, with every
independent endpoint and dual-code choice.
Lemma~\ref{exloc:monomial-isometry} applies to the monomial
matrices $M_{f,i}$ and $M_{f,i}^{-T}$ and preserves its constant
$\rho(t,K,\epsilon)$. This proves the complete uniform expansion
assertion. All representation and field choices preceded the
congruence tower, so none depends on $\nu$.
\end{proof}

\section{Simultaneous Realization and the Main Theorem}\label{real:section}
\subsection{Order of Parameter Selection}
Fix integers $t\geq4$, $2\leq k\leq t-2$, and $h\geq1$,
a set $L\subseteq[t]$ of size $k$, and reals $K,\epsilon,\tau$ with
\begin{equation}\label{real:parameters}
 K\geq1,\qquad 0<\epsilon<\frac1{2^h+1}<\tau<\frac12,
 \qquad G_{t,k}(\tau)>0.
\end{equation}
Choose distinct integers $a_i\geq0$ with
$2^{\max_i a_i-\min_i a_i}\leq K$, and put $a_-=\min_i a_i$.
In this order, set
\begin{equation}\label{real:endpoint-threshold}
 \begin{gathered}
 \rho=\rho(t,K,\epsilon),\qquad
 \mathcal L=\mathcal L(t,k,\rho,K),\qquad
 \mathcal Q=\mathcal Q(t,k,\rho,K),\\
 Q_{\rm end}=Q_{\mathrm{end}}(h,\epsilon,\tau),
 \end{gathered}
\end{equation}
using \eqref{exloc:uniform-pe-parameters}, \eqref{geo:abstract-constants}, and
\eqref{exloc:reciprocal-parameters}. These are fixed positive
numbers, and $0<\rho\leq1$ by its defining formula.
Choose an integer $r\geq1$ such that
\begin{equation}\label{real:degree-selection}
 r+a_-\geq h,\qquad 2^{r+a_-}\geq Q_{\rm end},\qquad
 2^{r+a_-}>16\mathcal L^2.
\end{equation}
Such an $r$ exists because all quantities on the right are already
fixed. In particular the product-expansion constant has been fixed
before choosing the local lengths. Put
\begin{equation}\label{eq:real-local-data}
 \ell_i=r+a_i,\quad q_i=2^{\ell_i},\quad n_i=q_i+1,\quad
 n=\max_i n_i,\quad s=\lcm_i(2\ell_i),\quad F=\F_{2^s}.
\end{equation}
The local data remain fixed as the congruence-tower index $\nu$ grows.

\begin{lemma}\label{real:admissible-data}
For every $t\geq4$ and $2\leq k\leq t-2$ there are choices
satisfying \eqref{real:parameters} and \eqref{real:degree-selection}.
\end{lemma}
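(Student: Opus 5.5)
The plan is to construct the data explicitly, in the order the section prescribes. First fix $K$, then $h$, then $\tau$, then $\epsilon$, and only after that the offsets $a_i$ and the integer $r$.

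\emph{Choice of $K$ and the offsets.} Take $K=2^{t-1}$ and $a_i=i-1$ for $i\in[t]$. These are pairwise distinct nonnegative integers with $2^{\max_i a_i-\min_i a_i}=2^{t-1}=K$, so $K\geq1$ and the offset condition hold. The set $L\subseteq[t]$ of size $k$ is arbitrary, say $L=[k]$.

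\emph{Choice of $\tau$, $h$, $\epsilon$.} Put $\tau=(8t2^t)^{-1}$. Then $0<\tau<1/2$, and Lemma~\ref{bin:rate-positive} (applicable since $t\geq2$ and $1\leq k<t$) gives $G_{t,k}(\tau)>1/2>0$. Next choose $h\geq1$ so large that $1/(2^h+1)<\tau$, for instance $2^h\geq 8t2^t$, i.e. $h=t+3+\lceil\log_2 t\rceil$. Finally take $\epsilon=\tfrac12\cdot(2^h+1)^{-1}$. This gives the chain $0<\epsilon<1/(2^h+1)<\tau<1/2$ of \eqref{real:parameters}.

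\emph{Choice of $r$.} The quantities $\rho(t,K,\epsilon)$, $\mathcal L(t,k,\rho,K)$ and $Q_{\rm end}(h,\epsilon,\tau)$ are now fixed finite positive reals. For $\rho$ this follows from \eqref{exloc:uniform-pe-parameters}, where $\theta>0$ because $P_t\geq1$. For $\mathcal L$ it follows from \eqref{geo:abstract-constants}, since every $\kappa\in(0,1]$. For $Q_{\rm end}$ it follows from \eqref{exloc:reciprocal-parameters}: its two denominators $1-D\epsilon$ and $D\tau-1$ are positive exactly because $\epsilon<1/D<\tau$. The hypothesis $\beta\leq1-\epsilon$ needed there also holds, since $\tau<1/2<1-\epsilon$. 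With $a_-=0$, pick $r\geq h$ large enough that $2^r\geq Q_{\rm end}$ and $2^r>16\mathcal L^2$. Such an $r$ exists because the right-hand sides are fixed reals and $2^r\to\infty$. This verifies \eqref{real:degree-selection}.

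\emph{Expected obstacle.} The only point needing care is ordering: $\mathcal L$ depends on $\rho$, which depends on $\epsilon$ and $K$, and none of these depend on $r$. The choice is therefore not circular. Everything else is bookkeeping of strict inequalities, and the positivity of $G_{t,k}$ is supplied directly by Lemma~\ref{bin:rate-positive}.
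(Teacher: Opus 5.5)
Your proposal is correct and takes essentially the same approach as the paper: the same explicit choices $K=2^{t-1}$, $a_i=i-1$, $L=[k]$, $\tau=(8t2^t)^{-1}$ with Lemma~\ref{bin:rate-positive}, $2^h\ge 1/\tau$, and $\epsilon=1/(2(2^h+1))$, followed by taking $r$ large once $\rho$, $\mathcal L$, and $Q_{\rm end}$ are fixed. Your additional checks that the denominators of $Q_{\rm end}$ are positive and that $\tau<1-\epsilon$ only make explicit what the paper's ``follow immediately'' leaves to the reader.
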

\begin{proof}
Take $\tau=(8t2^t)^{-1}$, choose $h$ with $2^h>1/\tau$,
and set $\epsilon=1/(2(2^h+1))$, $L=[k]$, $a_i=i-1$,
$K=2^{t-1}$. Lemma~\ref{bin:rate-positive} gives
$G_{t,k}(\tau)>1/2$, and all other inequalities in
\eqref{real:parameters} follow immediately. Then take $r$
sufficiently large in \eqref{real:degree-selection}.
\end{proof}

\subsection{Geometry and Coefficients on the Same Tower}
\begin{theorem}\label{real:main}
The parameters just chosen give a family $(X_\nu,\mathcal F_\nu)$
satisfying Theorem~\ref{thm:abstract}. Its endpoint proportions have
low and high form as in Corollary~\ref{bin:rate-numbers}, with
$\epsilon\leq\alpha_i\leq\beta_i\leq\tau$, and its spectral
parameter satisfies $\lambda\mathcal L<1/2$.
\end{theorem}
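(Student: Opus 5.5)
The proof assembles the two realization theorems and then checks the numerical hypotheses of Theorem~\ref{thm:abstract}. First apply Theorem~\ref{exar:tower} with the fixed offsets $a_i$ and the chosen $r$. This gives a tower $X_\nu$. At all sufficiently deep levels each $X_\nu$ is an $(n_1,\ldots,n_t)$-regular, $\lambda_0$-expanding cubical complex, with full upward-star identifications and $V_\nu\to\infty$. Hence $|X_\nu(0)|=2^tV_\nu\to\infty$. Reindex to start at such a level. The ratio bound $\max n_i/\min n_i\le 2^{\max a_i-\min a_i}\le K$ is part of that theorem. For the spectral condition, take $\lambda=\lambda_0\le 2/\sqrt{q_-}$ with $q_-=2^{r+a_-}$. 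Then \eqref{real:degree-selection} gives $\sqrt{q_-}>4\mathcal L$, so $\lambda\mathcal L<1/2<1$.

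Next apply Theorem~\ref{exdiag:actual-direction-system} with $\beta=\tau$, the set $L$, and $E_0=E=F=\mathbb F_{2^s}$. This field contains every $\mathbb F_{q_i^2}$ because $2\ell_i\mid s$. The hypotheses are satisfied as follows.
\begin{itemize}
\item $0<\epsilon<1/(2^h+1)<\tau\le 1-\epsilon$ holds, since $\tau<1/2$.
\item $r+a_i\ge h$ and $q_i\ge Q_{\rm end}(h,\epsilon,\tau)$ hold by \eqref{real:degree-selection}, since $a_i\ge a_-$.
\end{itemize}
The theorem yields, on the same deep levels, cosheaves $\mathcal F_\nu$ over $F$ with compatible base codes. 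Their endpoint dimensions are given by \eqref{exdiag:endpoint-dimensions} and are independent of $\nu$. Every nonempty subtuple of chart codes, with independent choices of dual codes, is universally $\rho$-product-expanding; this includes every $E'/F$. That is exactly the product-expansion hypothesis of Theorem~\ref{thm:abstract}, with the same $\rho$ used to define $\mathcal L$. The strict inequalities $0<m_{i,b}<n_i$ follow from rates in $[\epsilon,\tau]$ or in $[1-\tau,1-\epsilon]$.

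For the endpoint-proportion claim, set $\alpha_i=\min_b\mu_{i,b}$ and $\beta_i=\max_b\mu_{i,b}$ for $i\in L$. For $i\notin L$, set $\alpha_i=1-\max_b\mu_{i,b}$ and $\beta_i=1-\min_b\mu_{i,b}$. Theorem~\ref{exdiag:actual-direction-system} places these in $[\epsilon,\tau]$, which gives the low/high form of Corollary~\ref{bin:rate-numbers} with $|L|=k$. That corollary gives $\Gamma_k\ge G_{t,k}(\tau)>0$ by \eqref{real:parameters}. All hypotheses of Theorem~\ref{thm:abstract} then hold with parameters independent of $\nu$.

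The main point needing care is consistency of the two constructions on one tower. Two things must be checked:
\begin{itemize}
\item the ``sufficiently deep'' thresholds of the two theorems can be taken simultaneously;
\item the local walking-place identifications in Theorem~\ref{exar:tower} are those used by Proposition~\ref{exdiag:reciprocal-endpoints}.
\end{itemize}
Both hold because the second theorem is stated for the tower of the first, and because $\rho$ was fixed before $r$.
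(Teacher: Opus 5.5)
Your proposal is correct and follows essentially the same route as the paper. It realizes the geometry with Theorem~\ref{exar:tower} and the cosheaves with uniform product expansion via Theorem~\ref{exdiag:actual-direction-system} over $F$, then checks $\lambda_0\mathcal L<1/2$ from $2^{r+a_-}>16\mathcal L^2$ and $\Gamma_k\geq G_{t,k}(\tau)>0$ via Corollary~\ref{bin:rate-numbers}. The only cosmetic difference is that the paper reads off $\alpha_i=(e_i+1)/n_i$ and $\beta_i=(d_i+1)/n_i$ directly from Theorem~\ref{exloc:reciprocal-prs}, whereas you extract them as minima and maxima of the endpoint rates guaranteed by Theorem~\ref{exdiag:actual-direction-system}.
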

\begin{proof}
Since $2\ell_i\mid s$, the field $F$ contains
$\F_{q_i^2}$, and hence $\F_{q_i}$, for every $i$.
This is the standard finite-field subfield criterion
\cite[Chapter V, Section 1]{Grillet2007}.
Apply Theorem~\ref{exloc:reciprocal-prs} in direction $i$ with
parameters $(h,\ell_i,\epsilon,\tau,F)$. Its size premise follows
from $q_i\geq2^{r+a_-}\geq Q_{\rm end}$. Denote its two polynomial
degrees by $e_i,d_i$ and put
\[
 \alpha_i=(e_i+1)/n_i,\qquad \beta_i=(d_i+1)/n_i.
\]
The theorem gives $\epsilon\leq\alpha_i\leq\beta_i\leq\tau$.

Apply Theorem~\ref{exar:tower} to $t,r,(a_i)_i$ and discard the
fixed initial segment before full radius-$2t$ cubical balls embed.
It gives an $(n_1,\ldots,n_t)$-regular tower $X_\nu$ whose parity
masses grow, with full upward stars identified with the corresponding
stars in the product of local trees. Identify the residue fields
with $\F_{q_i}\subset F$. The actions are through the prescribed
local determinant-one groups and preserve directional parity.

Use the reciprocal endpoint diagrams in directions $i\in L$ and
the diagrams for their dual codes in the other directions.
Theorem~\ref{exdiag:actual-direction-system}, applied over the common
field $F$, descends their external tensor product to a cosheaf
$\mathcal F_\nu$ on each quotient. The full-star identification is
exactly the geometric premise needed for its simultaneous charts.
The endpoint proportions are
\[
 (\mu_{i,0},\mu_{i,1})=
 \begin{cases}(\alpha_i,\beta_i),&i\in L,\\
 (1-\alpha_i,1-\beta_i),&i\notin L.
 \end{cases}
\]
All lie strictly between zero and one. The same theorem and
Theorem~\ref{exloc:finite-grid-pe} give universal product expansion
at least $\rho$ for every nonempty tuple in every star and for
the tuple of its dual codes. The distinct exponents $\ell_i$
and ratio $\max q_i/\min q_i\leq K$ verify the norm-grid premises;
all message dimensions lie in $[\epsilon n_i,(1-\epsilon)n_i]$.

The arithmetic theorem gives $\lambda\leq2/\sqrt{q_-}$.
By \eqref{real:degree-selection},
\[
 \lambda\mathcal L<\frac12,\qquad
 c=\frac{1-\lambda\mathcal L}{\mathcal Q}>
 c_0:=\frac1{2\mathcal Q}>0.
\]
It also gives $\max n_i/\min n_i\leq K$.
Corollary~\ref{bin:rate-numbers} yields
$\Gamma_k\geq G_{t,k}(\tau)>0$. Thus all geometric, coefficient,
dimension, and spectral hypotheses of Theorem~\ref{thm:abstract}
hold simultaneously. The field and all local dimensions are fixed
while $V_\nu\to\infty$.
\end{proof}

\subsection{Proofs of the Main Results}
\begin{proof}[Proof of Theorem~\ref{thm:simultaneous}]
Choose admissible parameters by Lemma~\ref{real:admissible-data}
and apply Theorem~\ref{real:main}.
\end{proof}
\begin{proof}[Proof of Theorem~\ref{thm:main-qltc}]
Apply Theorem~\ref{thm:abstract} to any one family from
Theorem~\ref{thm:simultaneous}. Its positive rate, relative distance,
and soundness lower bounds give $R,\Delta,\sigma$. The larger of
its two fixed locality bounds is a valid $w$.
\end{proof}
More explicitly, in \eqref{comb:output-constants} replace $c$ by
$c_0=(2\mathcal Q)^{-1}$, and denote the resulting $v$ by $v_0$.
The construction has
\begin{equation}\label{eq:realized-parameters}
 \begin{gathered}
 \frac{K_\nu}{N_\nu}\geq\frac{G_{t,k}(\tau)}{E_{t,k,k}(\tau)},\qquad
 \frac{d_\nu}{N_\nu}\geq\frac{c_0\xi}{sn^{t-k}},\\
 \mathsf H_\nu\succeq
 \frac{[2(1-\tau)]^{t-k}v_0}
 {sn^{t-k}(E_{t,k,k-1}(\tau)+E_{t,k,k+1}(\tau))}
             \frac{\mathsf D_{\mathscr Q_\nu}}{N_\nu}.
 \end{gathered}
\end{equation}
Indeed Corollary~\ref{comb:weaken} allows $c_0<c$, and
Corollary~\ref{bin:rate-numbers} supplies the two dimension ratios.

\subsection{A Fixed Numerical Instantiation}
\label{fixed:section}\label{fixed:data}
Take
\[
 t=4,\quad k=2,\quad L=\{1,2\},\quad h=4,\quad K=2^3,\quad
 \epsilon=2^{-5},\quad\tau=2^{-3},\quad a_i=i-1,\quad r=2^{15}.
\]
The local lengths and common coefficient field are
\[
 n_i=2^{r+i-1}+1\quad(i\in[4]),\qquad
 n=n_4,\qquad s=\lcm_{0\leq a\leq3}2(r+a),\qquad F=\F_{2^s}.
\]
In particular, every $\F_{q_i^2}$ is a subfield of $F$.

\begin{corollary}\label{fixed:main}
With these fixed choices, the codes satisfy
\[
 \frac{K_\nu}{N_\nu}\geq2^{-5},\qquad
 \frac{d_\nu}{N_\nu}\geq2^{-2^{18}},\qquad
 \mathsf H_\nu\succeq2^{-2^{20}}
                         \frac{\mathsf D_{\mathscr Q_\nu}}{N_\nu}.
\]
Both their check weights and their qubit incidences are at most
$2^{2^{17}}$.
\end{corollary}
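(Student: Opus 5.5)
The plan is to specialize \eqref{eq:realized-parameters} to the displayed data and bound every constant explicitly, working with base-two logarithms throughout. First I will check that the data are admissible; this is done directly, because Lemma~\ref{real:admissible-data} does not apply ($\tau=2^{-3}$ is far larger than $(8t2^t)^{-1}$). The inequalities $\epsilon=2^{-5}<1/17<\tau=2^{-3}<1/2$ are immediate. The offsets are $a_i=i-1$, so $2^{\max a_i-\min a_i}=2^3=K$.

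For $G_{4,2}(1/8)$, the only parity-two terms with $j\ne2$ are $j=0,4$. They contribute $R_{4,2,0}=R_{4,2,4}=\tau^2$, so
\[
 G_{4,2}(\tau)=(3/4)^4-2^{-5}=81/256-8/256=73/256 .
\]
Expanding $(1+z/4)^2(1+2z)^2$ gives $E_{4,2,2}=4+2+1/16=97/16$. Hence
\[
 \frac{K_\nu}{N_\nu}\geq\frac{73\cdot16}{256\cdot97}>2^{-5}.
\]
The same expansion gives $E_{4,2,1}=9/2$ and $E_{4,2,3}=3/2+1/4$, and $[2(1-\tau)]^{2}=49/16$. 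These give an absolute constant in the soundness ratio.

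Next comes the step I expect to be the main obstacle: showing that $r=2^{15}$ satisfies \eqref{real:degree-selection}. This requires crude but honest upper bounds on $\log_2\mathcal L$, $\log_2\mathcal Q$, $\log_2 Q_{\rm end}$, and $\log_2\rho^{-1}$.

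I will start with $\rho$. From \eqref{exloc:uniform-pe-parameters}, $c_4=1+4^4\cdot 119$. Then $P_4(4,8)$ follows from the recurrence \eqref{exinc:constant-recurrence}: $P_2=9$, and each later step multiplies by at most $B_4(8)\,e^e\cdot 4e!\,K^e$ times a small constant. This gives $\log_2P_4$ of order $10^2$. With $\gamma_4=1/4$, $\theta$ is the fourth power of $\eta/P_4$, so $\log_2\theta^{-1}$ is at most a few hundred, and $\log_2\rho^{-1}\le 4\log_2(4c_4K/(\epsilon\theta))+3$ stays below roughly $2^{11}$.

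The filling constants $\kappa_{r,u}$ have exponents $\binom{r-1}{u}\le3$ for $r\le4$. Consequently $\widehat b_{j\ell}$, and hence $\mathcal L$ and $\mathcal Q$, are bounded by $\rho^{-O(10)}$ times constants. That gives $\log_2\mathcal L,\log_2\mathcal Q\le 2^{14}$, which comfortably satisfies $2^{r}>16\mathcal L^2$. The bound on $Q_{\rm end}$ is trivial, since $D=17$ and $D\tau-1>0$. If my estimate of $\rho$ turns out tighter than needed, that only helps; if it is looser, the first thing I would try is to track the $\theta$ exponent more carefully, since that is where the growth is concentrated.

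Finally I will bound the output constants using $n=2^{r+3}+1\le 2^{2^{15}+4}$ and $s=\operatorname{lcm}_a 2(r+a)\le 2^{64}$. We have $c_0=(2\mathcal Q)^{-1}\ge 2^{-2^{14}-1}$, and the ratios $r_j$ are at least $n^{-4}$ up to constants. In \eqref{comb:output-constants}, the binding term is $u_d/T$. Here $T=2(16\cdot2^{8}n^{5})^{4}$ costs about $20\cdot2^{15}$ bits, and $u_d$ costs about $2\cdot2^{15}$ more. Together with $sn^{2}$ and $c_0$, the total is well below $2^{20}=32\cdot2^{15}$. This yields the soundness bound $2^{-2^{20}}$.

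For the distance, $\xi\ge r_2/B$ with $B=(8n)^4$ costs about $4\cdot 2^{15}$ bits, and $c_0/(sn^2)$ costs about $3\cdot 2^{15}$ more. The total is below $2^{18}$.

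The locality bound is $sw$ with $w=6n^{3}$, and we have $2sw\le 2^{3\cdot2^{15}+100}\le 2^{2^{17}}$. This completes the corollary.
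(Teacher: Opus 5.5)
Your route is the paper's own: specialize \eqref{eq:realized-parameters}, verify \eqref{real:parameters} and \eqref{real:degree-selection} directly as in Appendix~\ref{app:constants}, and count bits. Your rate, soundness and locality tallies are right in outline. However, the step you flag as the main obstacle does not close as written.

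\textbf{The spectral premise.} The requirement is $2^{r}>16\mathcal L^2$ with $r=2^{15}$, i.e.\ $\log_2\mathcal L<2^{14}-2$. Your bound $\log_2\mathcal L\le 2^{14}$ only gives $16\mathcal L^2\le 2^{2^{15}+4}$. So it does not ``comfortably'' satisfy the premise; it does not satisfy it at all.

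Your intermediate estimate does not even yield $\log_2\mathcal L\le 2^{14}$. A factor $\rho^{-O(10)}$ with $\log_2\rho^{-1}$ up to $2^{11}$ allows $\log_2\mathcal L\approx 10\cdot 2^{11}>2^{14}$.

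The fix is to count exactly. In $\widehat b_{j\ell}$ the denominator $\prod_{a=\ell}^{j}\kappa_{4-a,j-a}$ has:
\begin{itemize}
\item $\rho$-exponent $\sum_{a=\ell}^{j}\binom{3-a}{j-a}$, at most $3+2+1=6$ (attained at $(j,\ell)=(2,0)$, and at most $4$ when $j=3$);
\item $(K+1)$-exponent at most $3$.
\end{itemize}
The six numerator factors contribute at most $2^{25}$. Hence $\widehat b_{j\ell}<2^{37}\rho^{-6}$. With $\widehat\beta_{j\ell}\le 2^{15}$ and at most four summands, $\mathcal L<2^{39}\rho^{-6}$ and $\mathcal Q<2^{54}\rho^{-6}$.

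You also need a certified $\rho>2^{-2^{11}}$; ``order $10^2$'' and ``a few hundred'' are not a verification. For example: $c_4<2^{15}$, $\log_2 P_4(4,8)<66$ and $\eta>2^{-26}$ give $\theta>2^{-368}$ and $\rho>2^{-3-4(25+368)}$. This then gives $\log_2\mathcal L<2^{14}-2$ with room to spare, and $c_0=(2\mathcal Q)^{-1}>2^{-2^{14}}$.

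\textbf{The ratios $r_j$.} In the distance and soundness tallies you must use the exact values, not the crude ``$r_j\ge n^{-4}$ up to constants'':
\begin{itemize}
\item $r_2=r_{t-k}=1$, because $t-k=k$;
\item $r_3=e_3/(2e_2)\ge m/3\ge 1$, from $3e_3\ge 2m\,e_2$.
\end{itemize}
With only the crude bound, $\xi=\min\{1,r_2/B\}\ge 2^{-12}n^{-8}$ costs about $8\cdot 2^{15}$ bits, and the distance exponent would exceed $2^{18}$. Your count for $\xi$ silently charges only $B$, so state $r_2=1$ explicitly. Similarly, $u_d$ costing only about $2\cdot 2^{15}$ bits uses $r_3/r_2\ge1$.

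\textbf{A small slip.} $E_{4,2,1}=[z^{3}](1+z/4)^2(1+2z)^2=9/4$ and $E_{4,2,3}=[z^{1}](1+z/4)^2(1+2z)^2=9/2$. You swapped $j$ with $t-j$ and miscomputed $[z^3]$. Their sum $27/4<2^3$ is still an absolute constant, so this is harmless.
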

\begin{proof}
Appendix~\ref{app:constants} verifies \eqref{real:parameters} and
\eqref{real:degree-selection}, and gives
\[
 c_*:=2^{-2^{14}}\leq c_0=(2\mathcal Q)^{-1},\qquad
 n<2^{r+4},\qquad s<2^{68}.
\]
In the abstract constants, put
\[
 z=\frac{e_3(n_1,n_2,n_3,n_4)}{2e_2(n_1,n_2,n_3,n_4)}.
\]
Then $r_2=1$, $r_3=z$, $B=2^{12}n^4$, $T=2^{49}n^{20}$,
and $\xi=B^{-1}$. For $m=\min_i n_i$, the identity
\[
 3e_3(n_1,n_2,n_3,n_4)
 =\sum_{i<j}n_i n_j\sum_{a\notin\{i,j\}}n_a
 \geq2m e_2(n_1,n_2,n_3,n_4)
\]
gives $z\geq m/3\geq1$. At $c=c_*$, therefore,
\[
 u_c=u_d=\min\{(6n^2)^{-1},c_*z\}
       \geq\frac{c_*}{2^3n^2},
\]
\[
 v=\min\left\{\frac{u_c}{2^{49}n^{20}},
                    \frac{c_*z}{2^{12}n^4}\right\}
       \geq\frac{c_*}{2^{52}n^{22}}.
\]
The last inequality uses $0<c_*\leq1$ and $n\geq1$.

The coefficient formulas at $\tau=2^{-3}$ give
\[
 G_{4,2}(\tau)=(1-2\tau)^4-2\tau^2\geq2^{-2},\qquad
 E_{4,2,2}(\tau)=4\tau^2+16\tau+4<2^3,
\]
\[
 [2(1-\tau)]^2\geq2,\qquad
 E_{4,2,1}(\tau)+E_{4,2,3}(\tau)=16\tau^2+20\tau+4<2^3.
\]
Thus the rate ratio in \eqref{eq:realized-parameters} is at least
$2^{-5}$, and its soundness dimension ratio is at least $2^{-2}$.
Corollary~\ref{comb:weaken}, \eqref{eq:realized-parameters}, and
Lemma~\ref{bin:locality} now give
\[
 \frac{d_\nu}{N_\nu}\geq\frac{c_*}{2^{12}sn^6},\qquad
 \mathsf H_\nu\succeq\frac{c_*}{2^{54}sn^{24}}
                      \frac{\mathsf D_{\mathscr Q_\nu}}{N_\nu},
\]
and both locality bounds are at most $2^4sn^3$.
Finally, for $r=2^{15}$,
\[
 \begin{aligned}
 2^{14}+12+68+6(r+4)&<2^{18},\\
 2^{14}+54+68+24(r+4)&<2^{20},\\
 4+68+3(r+4)&<2^{17}.
 \end{aligned}
\]
Together with the bounds on $n$ and $s$, these prove all the claims.
\end{proof}

\section*{AI Disclosure}



The general direction of using non-Abelian cubical complexes and robust tensor codes to design asymptotically good quantum LTCs was human. This project spans nearly a year. It was initially given to WG when he joined UIUC, and he has engaged deeply with this research problem and worked hard on it exploring several interactions of expanders and codes. 

All versions of ChatGPT Pro from 5 to 6 were used in this research. Only versions 5.6 and 6 Pro materially helped orchestrating some of the sophisticated expander-code interactions used to achieve our results. The authors take responsibility for results, proofs, and correctness of the work. 

In light of the OpenAI rumors of having solved several major TCS problems, we decided to release this paper at an earlier point. ChatGPT 6 Pro was used as an editorial assistant. 

\printbibliography

@book{Hartshorne1977,
  author = {Hartshorne, Robin},
  title = {Algebraic Geometry},
  series = {Graduate Texts in Mathematics},
  volume = {52},
  publisher = {Springer},
  year = {1977}
}

@book{Eisenbud1995,
  author = {Eisenbud, David},
  title = {Commutative Algebra with a View Toward Algebraic Geometry},
  series = {Graduate Texts in Mathematics},
  volume = {150},
  publisher = {Springer},
  year = {1995}
}

@book{Fulton1998,
  author = {Fulton, William},
  title = {Intersection Theory},
  edition = {2},
  series = {Ergebnisse der Mathematik und ihrer Grenzgebiete (3)},
  volume = {2},
  publisher = {Springer},
  year = {1998}
}

@article{MustataSchwede2014,
  author = {Musta{\c t}{\u a}, Mircea and Schwede, Karl},
  title = {A {Frobenius} variant of {Seshadri} constants},
  journaltitle = {Mathematische Annalen},
  volume = {358},
  year = {2014},
  pages = {861--878},
  doi = {10.1007/s00208-013-0976-4},
  url = {https://doi.org/10.1007/s00208-013-0976-4}
}

@article{DiPasqualeNguyenSeceleanu2023,
  author = {DiPasquale, Michael and Nguyễn, Thái Thành and Seceleanu, Alexandra},
  title = {Duality for asymptotic invariants of graded families},
  journaltitle = {Advances in Mathematics},
  volume = {430},
  year = {2023},
  eid = {109208},
  doi = {10.1016/j.aim.2023.109208},
  url = {https://doi.org/10.1016/j.aim.2023.109208}
}

@article{DS2002,
  author = {Derksen, Harm and Sidman, Jessica},
  title = {{Castelnuovo--Mumford} regularity by approximation},
  journaltitle = {Advances in Mathematics},
  volume = {188},
  year = {2004},
  pages = {104--123},
  doi = {10.1016/j.aim.2003.10.001},
  url = {https://sites.lsa.umich.edu/hderksen/wp-content/uploads/sites/614/2018/05/A.I.a.18.pdf}
}

@article{ChardinPhilippon1999,
  author = {Chardin, Marc and Philippon, Patrice},
  title = {R{\'e}gularit{\'e} et interpolation},
  journaltitle = {Journal of Algebraic Geometry},
  volume = {8},
  number = {3},
  year = {1999},
  pages = {471--481},
  url = {https://webusers.imj-prg.fr/~marc.chardin/publications/textes/11.RI.pdf}
}

@article{ChardinPhilippon2002,
  author = {Chardin, Marc and Philippon, Patrice},
  title = {Erratum to \enquote{R{\'e}gularit{\'e} et interpolation}},
  journaltitle = {Journal of Algebraic Geometry},
  volume = {11},
  number = {3},
  year = {2002},
  pages = {599--600},
  url = {https://webusers.imj-prg.fr/~marc.chardin/publications/textes/11.Err.pdf}
}

@incollection{Hochster1977,
  author = {Hochster, Melvin},
  title = {{Cohen--Macaulay} rings, combinatorics, and simplicial complexes},
  booktitle = {Ring Theory {II}},
  series = {Lecture Notes in Pure and Applied Mathematics},
  volume = {26},
  publisher = {Marcel Dekker},
  location = {New York},
  year = {1977},
  pages = {171--223},
  url = {https://sites.lsa.umich.edu/hochster/wp-content/uploads/sites/1337/2026/03/Cohen-Macaulay-Rings-Combinatorics-and-Simplicial-Complexes.pdf}
}

@book{Voight2021,
  author = {Voight, John},
  title = {Quaternion Algebras},
  series = {Graduate Texts in Mathematics},
  volume = {288},
  publisher = {Springer},
  year = {2021},
  url = {https://jvoight.github.io/quat-book.pdf}
}

@article{JordanLivne2000,
  author = {Jordan, Bruce W. and Livn{\'e}, Ron},
  title = {The {Ramanujan} property for regular cubical complexes},
  journaltitle = {Duke Mathematical Journal},
  volume = {105},
  year = {2000},
  pages = {85--103},
  doi = {10.1215/S0012-7094-00-10514-5},
  url = {https://doi.org/10.1215/S0012-7094-00-10514-5}
}

@incollection{Livne2001,
  author = {Livn{\'e}, Ron},
  title = {Communication networks and {Hilbert} modular forms},
  booktitle = {Applications of Algebraic Geometry to Coding Theory, Physics and Computation},
  series = {NATO Science Series II},
  volume = {36},
  publisher = {Springer},
  year = {2001},
  pages = {255--270},
  doi = {10.1007/978-94-010-1011-5_13},
  url = {https://doi.org/10.1007/978-94-010-1011-5_13}
}

@incollection{Blasius2006,
  author = {Blasius, Don},
  title = {{Hilbert} modular forms and the {Ramanujan} conjecture},
  booktitle = {Noncommutative Geometry and Number Theory},
  series = {Aspects of Mathematics},
  volume = {E37},
  publisher = {Vieweg},
  year = {2006},
  pages = {35--56},
  doi = {10.1007/978-3-8348-0352-8_2},
  url = {https://doi.org/10.1007/978-3-8348-0352-8_2}
}

@inproceedings{DLV2024,
  author = {Dinur, Irit and Lin, Ting-Chun and Vidick, Thomas},
  title = {Expansion of high-dimensional cubical complexes: With application to quantum locally testable codes},
  booktitle = {Proceedings of FOCS 2024},
  year = {2024},
  pages = {379--385},
  doi = {10.1109/FOCS61266.2024.00031},
  url = {https://doi.org/10.1109/FOCS61266.2024.00031}
}

@book{Grillet2007,
  author = {Grillet, Pierre Antoine},
  title = {Abstract Algebra},
  edition = {2},
  series = {Graduate Texts in Mathematics},
  volume = {242},
  publisher = {Springer},
  year = {2007}
}

@article{Dinur2007,
  author = {Dinur, Irit},
  title = {The {PCP} theorem by gap amplification},
  journaltitle = {Journal of the ACM},
  volume = {54},
  number = {3},
  year = {2007},
  eid = {12},
  doi = {10.1145/1236457.1236459},
  url = {https://doi.org/10.1145/1236457.1236459}
}

@article{BGHSV2006,
  author = {Ben-Sasson, Eli and Goldreich, Oded and Harsha, Prahladh and Sudan, Madhu and Vadhan, Salil},
  title = {Robust {PCPs} of proximity, shorter {PCPs}, and applications to coding},
  journaltitle = {SIAM Journal on Computing},
  volume = {36},
  number = {4},
  year = {2006},
  pages = {889--974},
  doi = {10.1137/S0097539705446810},
  url = {https://doi.org/10.1137/S0097539705446810}
}

@inproceedings{PanteleevKalachev2022,
  author = {Panteleev, Pavel and Kalachev, Gleb},
  title = {Asymptotically good quantum and locally testable classical {LDPC} codes},
  booktitle = {Proceedings of STOC 2022},
  year = {2022},
  pages = {375--388},
  doi = {10.1145/3519935.3520017},
  url = {https://doi.org/10.1145/3519935.3520017}
}

@article{DELLM2022,
  author = {Dinur, Irit and Evra, Shai and Livn{\'e}, Ron and Lubotzky, Alexander and Mozes, Shahar},
  title = {Good locally testable codes},
  journaltitle = {Annals of Mathematics},
  volume = {203},
  number = {2},
  year = {2026},
  pages = {511--553},
  doi = {10.4007/annals.2026.203.2.3},
  url = {https://doi.org/10.4007/annals.2026.203.2.3},
  addendum = {Conference version: \enquote{Locally testable codes with constant rate, distance, and locality}. In: \emph{Proceedings of STOC 2022}. 2022, pp.~357--374. \textsc{doi}: \href{https://doi.org/10.1145/3519935.3520024}{\nolinkurl{10.1145/3519935.3520024}}}
}

@article{AharonovEldar2015,
  author = {Aharonov, Dorit and Eldar, Lior},
  title = {Quantum locally testable codes},
  journaltitle = {SIAM Journal on Computing},
  volume = {44},
  number = {5},
  year = {2015},
  pages = {1230--1262},
  doi = {10.1137/140975498},
  url = {https://doi.org/10.1137/140975498}
}

@inproceedings{EldarHarrow2017,
  author = {Eldar, Lior and Harrow, Aram W.},
  title = {Local {Hamiltonians} whose ground states are hard to approximate},
  booktitle = {Proceedings of FOCS 2017},
  year = {2017},
  pages = {427--438},
  doi = {10.1109/FOCS.2017.46},
  url = {https://doi.org/10.1109/FOCS.2017.46}
}

@article{AharonovAradVidick2013,
  author = {Aharonov, Dorit and Arad, Itai and Vidick, Thomas},
  title = {Guest column: The quantum {PCP} conjecture},
  journaltitle = {ACM SIGACT News},
  volume = {44},
  number = {2},
  year = {2013},
  pages = {47--79},
  doi = {10.1145/2491533.2491549},
  url = {https://doi.org/10.1145/2491533.2491549}
}

@inproceedings{KalachevPanteleev2025,
  author = {Kalachev, Gleb and Panteleev, Pavel},
  title = {Maximally extendable product codes are good coboundary expanders},
  booktitle = {Proceedings of FOCS 2025},
  year = {2025},
  doi = {10.1109/FOCS63196.2025.00079},
  url = {https://doi.org/10.1109/FOCS63196.2025.00079}
}

@article{ReedSolomon1960,
  author = {Reed, Irving S. and Solomon, Gustave},
  title = {Polynomial codes over certain finite fields},
  journaltitle = {Journal of the Society for Industrial and Applied Mathematics},
  volume = {8},
  number = {2},
  year = {1960},
  pages = {300--304},
  doi = {10.1137/0108018},
  url = {https://doi.org/10.1137/0108018}
}

@article{LPS1988,
  author = {Lubotzky, Alexander and Phillips, Ralph and Sarnak, Peter},
  title = {{Ramanujan} graphs},
  journaltitle = {Combinatorica},
  volume = {8},
  number = {3},
  year = {1988},
  pages = {261--277},
  doi = {10.1007/BF02126799},
  url = {https://doi.org/10.1007/BF02126799}
}

@inproceedings{HLMRZ2025,
  author = {Hsieh, Jun-Ting and Lubotzky, Alexander and Mohanty, Sidhanth and Reiner, Assaf and Zhang, Rachel Yun},
  title = {Explicit lossless vertex expanders},
  booktitle = {Proceedings of FOCS 2025},
  year = {2025},
  doi = {10.1109/FOCS63196.2025.00046},
  url = {https://doi.org/10.1109/FOCS63196.2025.00046}
}

@article{Gallager1962,
  author = {Gallager, Robert G.},
  title = {Low-density parity-check codes},
  journaltitle = {IRE Transactions on Information Theory},
  volume = {8},
  number = {1},
  year = {1962},
  pages = {21--28},
  doi = {10.1109/TIT.1962.1057683},
  url = {https://doi.org/10.1109/TIT.1962.1057683}
}

@article{Tanner1981,
  author = {Tanner, R. Michael},
  title = {A recursive approach to low complexity codes},
  journaltitle = {IEEE Transactions on Information Theory},
  volume = {27},
  number = {5},
  year = {1981},
  pages = {533--547},
  doi = {10.1109/TIT.1981.1056404},
  url = {https://doi.org/10.1109/TIT.1981.1056404}
}

@article{SipserSpielman1996,
  author = {Sipser, Michael and Spielman, Daniel A.},
  title = {Expander codes},
  journaltitle = {IEEE Transactions on Information Theory},
  volume = {42},
  number = {6},
  year = {1996},
  pages = {1710--1722},
  doi = {10.1109/18.556667},
  url = {https://doi.org/10.1109/18.556667}
}

@article{AroraSafra1998,
  author = {Arora, Sanjeev and Safra, Shmuel},
  title = {Probabilistic checking of proofs: A new characterization of {NP}},
  journaltitle = {Journal of the ACM},
  volume = {45},
  number = {1},
  year = {1998},
  pages = {70--122},
  doi = {10.1145/273865.273901},
  url = {https://doi.org/10.1145/273865.273901}
}

@article{ALMSS1998,
  author = {Arora, Sanjeev and Lund, Carsten and Motwani, Rajeev and Sudan, Madhu and Szegedy, Mario},
  title = {Proof verification and the hardness of approximation problems},
  journaltitle = {Journal of the ACM},
  volume = {45},
  number = {3},
  year = {1998},
  pages = {501--555},
  doi = {10.1145/278298.278306},
  url = {https://doi.org/10.1145/278298.278306}
}

@article{GoldreichSudan2006,
  author = {Goldreich, Oded and Sudan, Madhu},
  title = {Locally testable codes and {PCPs} of almost-linear length},
  journaltitle = {Journal of the ACM},
  volume = {53},
  number = {4},
  year = {2006},
  pages = {558--655},
  doi = {10.1145/1162349.1162351},
  url = {https://doi.org/10.1145/1162349.1162351}
}

@article{BenSassonSudan2008,
  author = {Ben-Sasson, Eli and Sudan, Madhu},
  title = {Short {PCPs} with polylog query complexity},
  journaltitle = {SIAM Journal on Computing},
  volume = {38},
  number = {2},
  year = {2008},
  pages = {551--607},
  doi = {10.1137/050646445},
  url = {https://doi.org/10.1137/050646445}
}

@article{BenSassonSudan2006,
  author = {Ben-Sasson, Eli and Sudan, Madhu},
  title = {Robust locally testable codes and products of codes},
  journaltitle = {Random Structures \& Algorithms},
  volume = {28},
  number = {4},
  year = {2006},
  pages = {387--402},
  doi = {10.1002/rsa.20120},
  url = {https://doi.org/10.1002/rsa.20120}
}

@article{CalderbankShor1996,
  author = {Calderbank, A. Robert and Shor, Peter W.},
  title = {Good quantum error-correcting codes exist},
  journaltitle = {Physical Review A},
  volume = {54},
  number = {2},
  year = {1996},
  pages = {1098--1105},
  doi = {10.1103/PhysRevA.54.1098},
  url = {https://doi.org/10.1103/PhysRevA.54.1098}
}

@article{Steane1996,
  author = {Steane, Andrew M.},
  title = {Multiple-particle interference and quantum error correction},
  journaltitle = {Proceedings of the Royal Society of London A},
  volume = {452},
  number = {1954},
  year = {1996},
  pages = {2551--2577},
  doi = {10.1098/rspa.1996.0136},
  url = {https://doi.org/10.1098/rspa.1996.0136}
}

@article{Gottesman2014,
  author = {Gottesman, Daniel},
  title = {Fault-tolerant quantum computation with constant overhead},
  journaltitle = {Quantum Information and Computation},
  volume = {14},
  number = {15--16},
  year = {2014},
  pages = {1338--1372},
  doi = {10.26421/QIC14.15-16-5},
  url = {https://doi.org/10.26421/QIC14.15-16-5}
}

@article{TillichZemor2014,
  author = {Tillich, Jean-Pierre and Z{\'e}mor, Gilles},
  title = {Quantum {LDPC} codes with positive rate and minimum distance proportional to the square root of the blocklength},
  journaltitle = {IEEE Transactions on Information Theory},
  volume = {60},
  number = {2},
  year = {2014},
  pages = {1193--1202},
  doi = {10.1109/TIT.2013.2292061},
  url = {https://doi.org/10.1109/TIT.2013.2292061}
}

@inproceedings{LeverrierTillichZemor2015,
  author = {Leverrier, Anthony and Tillich, Jean-Pierre and Z{\'e}mor, Gilles},
  title = {Quantum expander codes},
  booktitle = {Proceedings of FOCS 2015},
  year = {2015},
  pages = {810--824},
  doi = {10.1109/FOCS.2015.55},
  url = {https://doi.org/10.1109/FOCS.2015.55}
}

@inproceedings{HastingsHaahODonnell2021,
  author = {Hastings, Matthew B. and Haah, Jeongwan and O'Donnell, Ryan},
  title = {Fiber bundle codes: Breaking the {$n^{1/2}\operatorname{polylog}(n)$} barrier for quantum {LDPC} codes},
  booktitle = {Proceedings of STOC 2021},
  year = {2021},
  pages = {1276--1288},
  doi = {10.1145/3406325.3451005},
  url = {https://doi.org/10.1145/3406325.3451005}
}

@article{PanteleevKalachev2022Distance,
  author = {Panteleev, Pavel and Kalachev, Gleb},
  title = {Quantum {LDPC} codes with almost linear minimum distance},
  journaltitle = {IEEE Transactions on Information Theory},
  volume = {68},
  number = {1},
  year = {2022},
  pages = {213--229},
  doi = {10.1109/TIT.2021.3119384},
  url = {https://doi.org/10.1109/TIT.2021.3119384}
}

@inproceedings{LeverrierZemor2022,
  author = {Leverrier, Anthony and Z{\'e}mor, Gilles},
  title = {Quantum {Tanner} codes},
  booktitle = {Proceedings of FOCS 2022},
  year = {2022},
  pages = {872--883},
  doi = {10.1109/FOCS54457.2022.00117},
  url = {https://doi.org/10.1109/FOCS54457.2022.00117}
}

@inproceedings{AnshuBreuckmannNirkhe2023,
  author = {Anshu, Anurag and Breuckmann, Nikolas P. and Nirkhe, Chinmay},
  title = {{NLTS Hamiltonians} from good quantum codes},
  booktitle = {Proceedings of STOC 2023},
  year = {2023},
  pages = {1090--1096},
  doi = {10.1145/3564246.3585114},
  url = {https://doi.org/10.1145/3564246.3585114}
}

@article{LeverrierLondeZemor2022,
  author = {Leverrier, Anthony and Londe, Vivien and Z{\'e}mor, Gilles},
  title = {Towards local testability for quantum coding},
  journaltitle = {Quantum},
  volume = {6},
  year = {2022},
  eid = {661},
  doi = {10.22331/q-2022-02-24-661},
  url = {https://doi.org/10.22331/q-2022-02-24-661}
}

@article{CrossEtAl2024,
  author = {Cross, Andrew and He, Zhiyang and Natarajan, Anand and Szegedy, Mario and Zhu, Guanyu},
  title = {Quantum locally testable code with constant soundness},
  journaltitle = {Quantum},
  volume = {8},
  year = {2024},
  eid = {1501},
  doi = {10.22331/q-2024-10-18-1501},
  url = {https://doi.org/10.22331/q-2024-10-18-1501}
}

@inproceedings{KaufmanLubotzky2014,
  author = {Kaufman, Tali and Lubotzky, Alexander},
  title = {High dimensional expanders and property testing},
  booktitle = {Proceedings of ITCS 2014},
  year = {2014},
  pages = {501--506},
  doi = {10.1145/2554797.2554842},
  url = {https://doi.org/10.1145/2554797.2554842}
}

@article{Kalachev2026,
  author = {Kalachev, Gleb V.},
  title = {High-dimensional expansion of product codes is stronger than robust and agreement testability},
  journaltitle = {Intelligent Systems. Theory and Applications},
  volume = {30},
  number = {2},
  year = {2026},
  pages = {101--119},
  url = {https://www.intsysmagazine.ru/en/issues/2026/2/article/1}
}

@article{RungtanapiromStixVdovina2019,
  author = {Rungtanapirom, Nithi and Stix, Jakob and Vdovina, Alina},
  title = {Infinite series of quaternionic 1-vertex cube complexes, the doubling construction, and explicit cubical {Ramanujan} complexes},
  journaltitle = {International Journal of Algebra and Computation},
  volume = {29},
  number = {6},
  year = {2019},
  pages = {951--1007},
  doi = {10.1142/S0218196719500371},
  url = {https://doi.org/10.1142/S0218196719500371}
}

@misc{stacks-project,
  author       = {The {Stacks project authors}},
  title        = {The Stacks project},
  howpublished = {\url{https://stacks.math.columbia.edu}},
  year         = {2026},
}
\appendix
\section{CSS conventions and binary conversion}
\label{app:css}\label{sec:css-appendix}

\subsection{Stabilizers, logical operators, and syndrome spaces}

Fix binary matrices $H_X\in\mathbb F_2^{m_X\times N}$ and
$H_Z\in\mathbb F_2^{m_Z\times N}$ with
$H_XH_Z^{\mathsf T}=0$, where $N\geq1$ and $m_X+m_Z>0$.
Use the CSS codespace $\mathscr Q\subseteq
\mathscr H_N=(\mathbb C^2)^{\otimes N}$, listed tester $\mathsf H$,
and distance operator $\mathsf D_{\mathscr Q}$ defined in
Section~\ref{sec:css-preliminaries}. Write $U_X=\operatorname{row}H_X$
and $U_Z=\operatorname{row}H_Z$. The action on a computational
basis vector is
\[
 X^xZ^z|a\rangle=(-1)^{z\cdot a}|a+x\rangle,
 \qquad x,z,a\in\mathbb F_2^N.
\]
In particular,
\begin{equation}\label{bin:pauli-product}
 (X^xZ^z)(X^{x'}Z^{z'})
       =(-1)^{z\cdot x'}X^{x+x'}Z^{z+z'}.
\end{equation}
All Pauli phases are irrelevant to their support and commutation
syndrome.

\begin{lemma}\label{bin:css-algebra}
The integer
$K=N-\operatorname{rank}H_X-\operatorname{rank}H_Z$
is nonnegative, and $\dim_{\mathbb C}\mathscr Q=2^K$.
\end{lemma}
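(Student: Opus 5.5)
We will prove Lemma~\ref{bin:css-algebra} by the standard stabilizer-group counting argument. The first step settles nonnegativity. Since $H_XH_Z^{\mathsf T}=0$, every row of $H_Z$ is orthogonal to every row of $H_X$, so $U_Z\subseteq U_X^\perp$ in $\mathbb F_2^N$. Hence $\operatorname{rank}H_Z=\dim U_Z\leq N-\dim U_X=N-\operatorname{rank}H_X$, which gives $K\geq0$.

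For the dimension, consider the abelian group $\mathcal S$ generated by $X^x$ for $x\in U_X$ and $Z^z$ for $z\in U_Z$. The generators commute by \eqref{bin:pauli-product}: the phase is $(-1)^{z\cdot x}=1$ whenever $x\in U_X$ and $z\in U_Z$. The maps $x\mapsto X^x$ and $z\mapsto Z^z$ are homomorphisms on the respective subspaces, and each is free of sign ambiguity since $X^xX^{x'}=X^{x+x'}$ and $Z^zZ^{z'}=Z^{z+z'}$. Therefore every element of $\mathcal S$ can be written uniquely as $X^xZ^z$ with $(x,z)\in U_X\times U_Z$, and $|\mathcal S|=2^{\operatorname{rank}H_X+\operatorname{rank}H_Z}$. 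The listed checks, including repeated and zero rows, generate the same group and impose the same $+1$ conditions. So $\mathscr Q$ is the common $+1$ eigenspace of $\mathcal S$.

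Next, since $-I\notin\mathcal S$ (all elements have the form $X^xZ^z$ with trivial phase), the projector onto $\mathscr Q$ is $P=|\mathcal S|^{-1}\sum_{S\in\mathcal S}S$. This is the usual averaging projector for a commuting group of involutions; it is idempotent because $\mathcal S$ is a group. Then $\dim\mathscr Q=\operatorname{tr}P=|\mathcal S|^{-1}\sum_S\operatorname{tr}S$. For $X^xZ^z$ with $(x,z)\ne(0,0)$ the trace vanishes. If $x\ne0$, the operator has no diagonal entries in the computational basis. If $x=0$ and $z\ne0$, the trace is $\sum_a(-1)^{z\cdot a}=0$. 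Only the identity contributes $2^N$, so $\dim\mathscr Q=2^{N-\operatorname{rank}H_X-\operatorname{rank}H_Z}=2^K$.

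The only delicate point is checking that no nontrivial product of listed generators equals $-I$, or any other scalar. This is exactly what makes the averaging operator a nonzero projector with the stated trace, and it is guaranteed by the uniqueness of the normal form $X^xZ^z$ above.
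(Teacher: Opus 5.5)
Your proof is correct and follows essentially the same route as the paper: nonnegativity from $U_Z\subseteq U_X^\perp$, then the commuting group $\{X^xZ^z:(x,z)\in U_X\times U_Z\}$ of order $2^{\operatorname{rank}H_X+\operatorname{rank}H_Z}$, then the trace of its averaging projector. One step is only asserted: injectivity of $(x,z)\mapsto X^xZ^z$ does not follow from the homomorphism property alone. The paper gets it from $X^xZ^ze_a=\pm e_{a+x}$, which fixes $x$, after which the signs fix $z$; your own zero-trace computation for nonzero labels would also give it.
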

\begin{proof}
The CSS equation gives $U_X\subseteq U_Z^\perp$, so
$\dim U_X+\dim U_Z\leq N$. For labels in $U_X\times U_Z$,
the sign in \eqref{bin:pauli-product} is one. Consequently
\[
 S=\{X^xZ^z:x\in U_X,\ z\in U_Z\}
\]
is a group of commuting self-adjoint involutions, generated by
the listed checks. Different label pairs give different matrices:
the image of $|a\rangle$ determines $x$, and its signs for all
$a$ then determine $z$. No nonzero label is a scalar identity.
Thus $|S|=2^{\dim U_X+\dim U_Z}$ and $-I\notin S$.

The group average
\[
 \Pi=|S|^{-1}\sum_{g\in S}g
\]
is self-adjoint and satisfies $\Pi^2=\Pi$, since each group
element occurs $|S|$ times in the double sum of products.
For every $h\in S$, multiplication permutes $S$, so $h\Pi=\Pi$.
Conversely the average fixes every vector fixed by $S$.
Thus $\Pi$ is the orthogonal projector onto $\mathscr Q$.

Every Pauli with nonzero label has trace zero. If $x\ne0$,
its matrix has zero diagonal. If $x=0$ and $z\ne0$, choose
$j$ with $z_j=1$; the diagonal entries for $a$ and $a+e_j$
cancel in pairs. Only the identity has trace $2^N$. It follows
that
\[
 \dim\mathscr Q=\operatorname{Tr}_{\mathbb C}\Pi
 =2^N/|S|=2^K.
\]
\end{proof}

A Pauli is a nontrivial logical operator if it commutes with all
checks and its restriction to $\mathscr Q$ is not a scalar
operator. Its minimum possible weight, with value $+\infty$
when no such Pauli exists, is the logical distance.

\begin{lemma}\label{bin:css-distance}
With the convention $\min\varnothing=+\infty$, the logical
distance is
\[
 d=\min\left\{
 \min_{z\in\ker H_X\setminus\operatorname{im}H_Z^{\mathsf T}}|z|_H,
 \min_{x\in\ker H_Z\setminus\operatorname{im}H_X^{\mathsf T}}|x|_H
 \right\}.
\]
\end{lemma}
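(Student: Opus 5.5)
The plan is to verify the formula one direction at a time, using the stabilizer group $S$ and the projector $\Pi$ from Lemma~\ref{bin:css-algebra}. Write a Pauli as $P=X^xZ^z$ up to phase. By \eqref{bin:pauli-product}, $P$ commutes with $X^h$ for a row $h$ of $H_X$ exactly when $z\cdot h=0$, and with $Z^h$ for a row of $H_Z$ exactly when $x\cdot h=0$. Hence $P$ commutes with all checks if and only if $z\in\ker H_X$ and $x\in\ker H_Z$.

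The next step is to characterize when such a commuting $P$ acts as a scalar on $\mathscr Q$. I claim this happens exactly when $x\in U_X=\operatorname{im}H_X^{\mathsf T}$ and $z\in U_Z=\operatorname{im}H_Z^{\mathsf T}$.

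\emph{If} both memberships hold, then $P$ is $\pm$ an element of $S$, so it acts on $\mathscr Q$ by a sign.

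\emph{Conversely}, suppose $P|_{\mathscr Q}=cI$ with $c\ne0$ (note $P$ is unitary and $\mathscr Q\neq0$). Then $\Pi P\Pi=c\Pi$. Take traces against $\Pi$:
\[
 \operatorname{Tr}(P\Pi)=|S|^{-1}\sum_{g\in S}\operatorname{Tr}(Pg).
\]
The trace-zero argument in Lemma~\ref{bin:css-algebra} shows that $\operatorname{Tr}(Pg)\ne0$ only if $Pg$ is a scalar, that is, only if the labels $(x,z)$ coincide with the labels of $g$. So if $(x,z)\notin U_X\times U_Z$, then $\operatorname{Tr}(P\Pi)=0$. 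On the other hand $\operatorname{Tr}(P\Pi)=c\dim\mathscr Q\ne0$, a contradiction. This is the one step needing care; the fact that $P$ commutes with $\Pi$ justifies $\Pi P\Pi=P\Pi$.

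Given this, the nontrivial logical Paulis are exactly the $X^xZ^z$ with $x\in\ker H_Z$, $z\in\ker H_X$, and $(x,z)\notin U_X\times U_Z$. The weight is $\operatorname{wt}(X^xZ^z)=|\supp x\cup\supp z|$.

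\emph{Lower bound.} If $x\notin U_X$, then $x$ is a candidate in the second inner minimum and $\operatorname{wt}\ge|x|$. Otherwise $z\notin U_Z$, and $\operatorname{wt}\ge|z|$ with $z$ a candidate in the first minimum. Either way the weight is at least the right-hand side.

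\emph{Upper bound.} A minimizer $z$ of the first inner minimum gives the logical operator $Z^z$ (taking $x=0$) of weight $|z|$. Symmetrically, a minimizer of the second gives $X^x$.

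Both minima are empty exactly when no nontrivial logical Pauli exists, so the value $+\infty$ agrees on both sides. The main obstacle is the converse scalar step, which is handled by the trace computation above.
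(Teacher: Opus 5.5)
Your proposal is correct and follows essentially the same route as the paper's proof. Both use the commutation criterion from \eqref{bin:pauli-product}, then the trace identity $\operatorname{Tr}(P\Pi)=|S|^{-1}\sum_{g\in S}\operatorname{Tr}(Pg)=0$ to rule out scalar action for labels outside $U_X\times U_Z$. Both finish with $\operatorname{wt}(X^xZ^z)\ge\max\{|x|,|z|\}$ for the lower bound and pure $X$ or $Z$ logicals for the matching upper bound.
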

\begin{proof}
By \eqref{bin:pauli-product}, $X^xZ^z$ commutes with every
listed check exactly when $H_Xz=0$ and $H_Zx=0$.
Its label is a stabilizer label exactly when
$x\in\operatorname{im}H_X^{\mathsf T}$ and
$z\in\operatorname{im}H_Z^{\mathsf T}$; such a Pauli acts
as the identity on $\mathscr Q$.

Conversely, let a commuting Pauli $P$ have a label outside this
stabilizer-label space. Both $P$ and $P^{-1}$ preserve
$\mathscr Q$, so its restriction is unitary. For every $g\in S$,
the label of $Pg$ is nonzero. The trace calculation in
Lemma~\ref{bin:css-algebra} gives
\[
 \operatorname{Tr}(P|_{\mathscr Q})
 =\operatorname{Tr}(P\Pi)
 =|S|^{-1}\sum_{g\in S}\operatorname{Tr}(Pg)=0.
\]
A scalar unitary on the nonzero space $\mathscr Q$ has nonzero
trace. Thus $P$ is a nontrivial logical Pauli. We have shown that
nontriviality is equivalent to at least one of $x,z$ lying outside
its indicated stabilizer image. Since
\[
 \operatorname{wt}(X^xZ^z)
 =|\operatorname{supp}x\cup\operatorname{supp}z|
 \geq\max\{|x|_H,|z|_H\},
\]
the displayed minimum is a lower bound on logical distance.
Each vector in either difference realizes the same weight as a
pure $Z$ or pure $X$ logical Pauli, proving equality. If $K=0$,
each kernel and its contained image have equal dimension, so
both differences are empty. The one-dimensional codespace then
has no nonscalar logical Pauli, consistent with the convention.
\end{proof}

For a binary pair $\eta=(\eta_X,\eta_Z)\in
\mathbb F_2^{m_X}\times\mathbb F_2^{m_Z}$, let
$\mathscr E_\eta$ be the joint check eigenspace whose eigenvalues
are $(-1)^{(\eta_X)_j}$ and $(-1)^{(\eta_Z)_j}$. Call $\eta$
consistent when this space is nonzero. For any syndrome realized
by a Pauli, define
\[
 w(\eta)=\min\{|\operatorname{supp}x\cup\operatorname{supp}z|:
             H_Xz=\eta_X,\ H_Zx=\eta_Z\}.
\]

\begin{lemma}
\label{bin:css-syndrome-decomposition}
Every consistent syndrome is realized by a Pauli. If $P$ has
syndrome $\eta$, then $\mathscr E_\eta=P\mathscr Q$.
The consistent syndrome spaces form an orthogonal decomposition
of $\mathscr H_N$, and
\[
 \mathsf D_{\mathscr Q}|_{\mathscr E_\eta}
       =w(\eta)I,\qquad
 \mathsf H|_{\mathscr E_\eta}
       =\frac{|\eta_X|_H+|\eta_Z|_H}{m_X+m_Z}I.
\]
\end{lemma}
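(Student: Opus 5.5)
The plan is to work with the stabilizer group $S=\{X^xZ^z:x\in U_X,z\in U_Z\}$ and its averaging projector $\Pi$ from Lemma~\ref{bin:css-algebra}. Each step reduces to character orthogonality for this abelian group together with the commutation rule \eqref{bin:pauli-product}.

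\textbf{Syndromes and eigenspaces.} For a Pauli $P=X^xZ^z$ and a check $X^h$, \eqref{bin:pauli-product} gives $X^hP=(-1)^{h\cdot z}PX^h$, and similarly $Z^hP=(-1)^{h\cdot x}PZ^h$. Hence $P$ maps $\mathscr Q$ into $\mathscr E_\eta$ with $\eta=(H_Xz,H_Zx)$. Conversely, $P^{-1}$ maps $\mathscr E_\eta$ back into $\mathscr Q$, so $\mathscr E_\eta=P\mathscr Q$.

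\textbf{Every consistent syndrome is realized.} I would argue by counting. Distinct realizable syndromes give nonzero joint eigenspaces of the commuting self-adjoint checks, and distinct syndromes give orthogonal spaces. Each such space is $P\mathscr Q$, of dimension $2^K$.

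The number of realizable syndromes is $|\operatorname{im}(x,z)\mapsto(H_Zx,H_Xz)|=2^{\operatorname{rank}H_X+\operatorname{rank}H_Z}$. Their eigenspaces therefore already fill dimension $2^{K+\operatorname{rank}H_X+\operatorname{rank}H_Z}=2^N$.

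Every joint eigenspace is orthogonal to the others, so no further consistent syndrome can have a nonzero eigenspace. This proves realization and the orthogonal decomposition together. One subtlety is that a consistent $\eta$ must respect repeated or dependent rows. That is automatic here, since every nonzero $\mathscr E_\eta$ has already been accounted for.

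\textbf{Tester eigenvalues.} On $\mathscr E_\eta$, each $X^h$ acts as $(-1)^{(\eta_X)_j}$, so $(I-X^h)/2$ acts as $(\eta_X)_j\in\{0,1\}$, and likewise for the $Z$ checks. Summing over the listed rows and dividing by $m$ gives $\mathsf H|_{\mathscr E_\eta}=\frac{|\eta_X|_H+|\eta_Z|_H}{m_X+m_Z}\,I$.

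\textbf{Distance operator.} This is the step I expect to need the most care. I would show that $\mathscr Q_{\le a}$ is the direct sum of those $\mathscr E_\eta$ with $w(\eta)\le a$.

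For the inclusion $\subseteq$: a Pauli of weight at most $a$ has syndrome $\eta$ with $w(\eta)\le a$, and it maps $\mathscr Q$ onto $\mathscr E_\eta$.

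For the inclusion $\supseteq$: if $w(\eta)\le a$, pick a minimizer $P$ of weight $w(\eta)$; then $\mathscr E_\eta=P\mathscr Q\subseteq\mathscr Q_{\le a}$.

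It follows that $\Pi_{\le a}=\sum_{w(\eta)\le a}\Pi_\eta$, where $\Pi_\eta$ is the projector onto $\mathscr E_\eta$. Consequently $\Pi_{\le a}-\Pi_{\le a-1}$ is the projector onto the sum of the $\mathscr E_\eta$ with $w(\eta)=a$. Substituting this into the definition of $\mathsf D_{\mathscr Q}$ and telescoping gives $\mathsf D_{\mathscr Q}|_{\mathscr E_\eta}=w(\eta)I$.
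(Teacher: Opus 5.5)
Your proof is correct. Its last two steps are exactly what the paper does: reading the tester eigenvalue off the syndrome bits, and identifying $\mathscr Q_{\le a}$ with $\bigoplus_{w(\eta)\le a}\mathscr E_\eta$ to obtain $\mathsf D_{\mathscr Q}|_{\mathscr E_\eta}=w(\eta)I$. The difference lies in how you obtain realization and the orthogonal decomposition.

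The paper argues intrinsically, in two separate steps:
\begin{itemize}
\item For every $\eta\in\mathbb F_2^{m}$ it forms $P_\eta=\prod_j\frac{I+(-1)^{\eta_j}S_j}{2}$. Pairwise orthogonality and $\sum_\eta P_\eta=I$ then follow directly from expanding the product.
\item For realizability it takes $a\in\ker H_X^{\mathsf T}$. The product of the $X$-checks selected by $a$ is the identity, which forces $a\cdot\eta_X=0$ on a nonzero $\mathscr E_\eta$. Hence $\eta_X\in\operatorname{im}H_X$, and likewise $\eta_Z\in\operatorname{im}H_Z$.
\end{itemize}

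You instead count. The $2^{\operatorname{rank}H_X+\operatorname{rank}H_Z}$ realizable syndromes give pairwise orthogonal spaces $P\mathscr Q$, each of dimension $2^K$. Together these already exhaust all $2^N$ dimensions, so no other nonzero joint eigenspace can exist. This is valid and gives realization and completeness in one stroke.

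What each route buys:
\begin{itemize}
\item Your count depends on the exact value $\dim\mathscr Q=2^K$, that is, on the trace computation in Lemma~\ref{bin:css-algebra}, together with $K+\operatorname{rank}H_X+\operatorname{rank}H_Z=N$.
\item The paper's argument does not use that dimension formula. It identifies the consistent syndromes directly as $\operatorname{im}H_X\times\operatorname{im}H_Z$ from the linear relations among the listed checks, which makes the treatment of repeated and zero rows transparent rather than a consequence of the count.
\end{itemize}
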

\begin{proof}
List the commuting check involutions as $S_1,\ldots,S_m$,
where $m=m_X+m_Z$. For each $\eta\in\mathbb F_2^m$, the product
\[
 P_\eta=\prod_{j=1}^m\frac{I+(-1)^{\eta_j}S_j}{2}
\]
is the orthogonal projector onto $\mathscr E_\eta$: its factors
are commuting orthogonal projectors onto the required individual
eigenspaces. Different binary lists give orthogonal projectors
since at a differing coordinate the product contains
$(I+S_j)(I-S_j)=0$. Expanding over the two choices in every
coordinate yields $\sum_\eta P_\eta=I$.

Suppose $\eta$ is consistent. For
$a\in\ker H_X^{\mathsf T}$, the product of the $X$-checks
selected by $a$ is the identity. On a nonzero vector of
$\mathscr E_\eta$ its eigenvalue is $(-1)^{a\cdot\eta_X}$,
so $a\cdot\eta_X=0$. Hence
$\eta_X\in(\ker H_X^{\mathsf T})^\perp=\operatorname{im}H_X$.
The identical argument for products of $Z$-checks gives
$\eta_Z\in\operatorname{im}H_Z$. Choose $z,x$ solving the two
syndrome equations. Then $P=X^xZ^z$ has the prescribed
commutation signs. Commuting checks past $P$ and $P^{-1}$ gives
$P\mathscr Q\subseteq\mathscr E_\eta$ and
$P^{-1}\mathscr E_\eta\subseteq\mathscr Q$, proving equality.

For $0\leq\ell\leq N$, the Pauli neighborhood $\mathscr Q_{\leq\ell}$
is therefore exactly
\[
 \mathscr Q_{\leq\ell}
 =\bigoplus_{\substack{\eta\text{ consistent}\\w(\eta)\leq\ell}}
                                \mathscr E_\eta.
\]
The projector $\Pi_{\leq\ell}$ onto this neighborhood is identity
on precisely these summands. On $\mathscr E_\eta$, the difference
$\Pi_{\leq\ell}-\Pi_{\leq\ell-1}$ is therefore identity exactly
when $\ell=w(\eta)$ and zero otherwise. Summing these differences
with coefficient $\ell$ proves the asserted distance-operator formula.
Finally a listed check projector $(I-S_j)/2$ acts as its syndrome
bit on $\mathscr E_\eta$. Averaging all $m$ such terms proves
the tester formula, including dependent and zero rows.
\end{proof}

\subsection{Trace-dual coordinates}

Let $F=\mathbb F_{2^s}$, $s\geq1$, and write
\[
 \operatorname{tr}(a)=\operatorname{Tr}_{F/\mathbb F_2}(a)
                    =\sum_{j=0}^{s-1}a^{2^j}.
\]
This is $\mathbb F_2$-linear. Its values lie in $\mathbb F_2$
because $a^{2^s}=a$ implies $\operatorname{tr}(a)^2=\operatorname{tr}(a)$.
The defining polynomial is nonzero and has degree
$2^{s-1}<|F|$, so it cannot vanish on all of $F$. Thus
$\operatorname{tr}:F\to\mathbb F_2$ is nonzero, including when
$s=1$.

\begin{lemma}\label{bin:pairing-dual}
The pairing $(a,b)\mapsto\operatorname{tr}(ab)$ is symmetric,
bilinear, and nondegenerate. Every ordered binary basis
$\beta=(\beta_1,\ldots,\beta_s)$ has a unique dual basis
$\beta^\vee$ satisfying
$\operatorname{tr}(\beta_i\beta_j^\vee)=\delta_{ij}$.
For every $F$-linear map $A:F^a\to F^b$,
\begin{equation}\label{bin:binary-transpose}
 R_\beta(A)^{\mathsf T}=R_{\beta^\vee}(A^{\mathsf T}).
\end{equation}
\end{lemma}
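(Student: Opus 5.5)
Symmetry and bilinearity are immediate from commutativity of $F$ and $\mathbb F_2$-linearity of $\operatorname{tr}$, so the plan concentrates on nondegeneracy, the dual basis, and the transpose formula \eqref{bin:binary-transpose}.

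For nondegeneracy, fix $a\ne0$. Since $b\mapsto ab$ is a bijection of $F$, the functional $b\mapsto\operatorname{tr}(ab)$ is the composite of this bijection with the nonzero map $\operatorname{tr}$. It is therefore nonzero. Because $F$ is a finite-dimensional $\mathbb F_2$-space, the induced map $F\to\operatorname{Hom}_{\mathbb F_2}(F,\mathbb F_2)$ is injective, and hence an isomorphism by comparing dimensions. Taking preimages of the coordinate functionals of $\beta$ gives a unique $\beta^\vee$ with $\operatorname{tr}(\beta_i\beta_j^\vee)=\delta_{ij}$. This family is a basis, since its image under the isomorphism is the dual basis.

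For \eqref{bin:binary-transpose}, both sides are block matrices with one $s\times s$ block for each scalar entry of $A$. $R_\beta$ and transposition both act blockwise, and transposition also swaps the block positions. It therefore suffices to treat $a=b=1$, that is, multiplication by a scalar $c\in F$, for which $A^{\mathsf T}=A$. In basis $\beta$, the matrix of $x\mapsto cx$ has $(i,j)$ entry equal to the $\beta_i$-coordinate of $c\beta_j$. By the dual-basis property this coordinate is $\operatorname{tr}(c\beta_j\beta_i^\vee)$. In basis $\beta^\vee$ the same map has $(i,j)$ entry $\operatorname{tr}(c\beta_j^\vee\beta_i)$. These two expressions are transposes of each other by symmetry of the trace pairing, which proves the $1\times1$ case.

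For general $a,b$, the $(p,q)$ block of $R_\beta(A)^{\mathsf T}$ is $R_\beta(A_{qp})^{\mathsf T}$, and this equals $R_{\beta^\vee}(A_{qp})$, the $(p,q)$ block of $R_{\beta^\vee}(A^{\mathsf T})$. The main obstacle is bookkeeping: one must use the convention consistently that coordinates in $\beta$ are read off with $\beta^\vee$, so that the index swap lands correctly in the entries above.
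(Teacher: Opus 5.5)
Your proof is correct. The nondegeneracy and dual-basis part matches the paper's argument in different language. You use the nonzero trace and the isomorphism $F\to\operatorname{Hom}_{\mathbb F_2}(F,\mathbb F_2)$; the paper picks an element of trace one and inverts the Gram matrix $(\operatorname{tr}(\beta_i\beta_j))_{ij}$.

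For \eqref{bin:binary-transpose} you take a genuinely different route. You reduce blockwise to a single scalar $c$ and compute the entries $\operatorname{tr}(c\beta_j\beta_i^\vee)$ and $\operatorname{tr}(c\beta_j^\vee\beta_i)$ explicitly. The identity then becomes the symmetry of the form $(x,y)\mapsto\operatorname{tr}(cxy)$.

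The paper instead proves one pairing identity on $F^a$, namely $\operatorname{coord}_\beta(x)^{\mathsf T}\operatorname{coord}_{\beta^\vee}(y)=\operatorname{tr}(\sum_i x_iy_i)$. It then compares $(Ax,y)$ with $(x,A^{\mathsf T}y)$ and concludes because both coordinate vectors range over their entire binary spaces.

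What each route buys:
\begin{itemize}
\item The paper's adjoint argument never needs to know how $R_\beta$ arranges binary coordinates into $s\times s$ blocks.
\item Your reduction does depend on that layout. This is harmless, since any other fixed ordering relabels both sides by the same permutations.
\item In exchange, you get explicit entry formulas for $R_\beta(c)$ and $R_{\beta^\vee}(c)$, which make the role of the dual basis fully transparent.
\end{itemize}
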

\begin{proof}
Bilinearity and symmetry follow from field multiplication and
linearity of the trace. Choose $c\in F$ with
$\operatorname{tr}(c)=1$. For every $a\ne0$,
$\operatorname{tr}(a(c/a))=1$, proving nondegeneracy. Consequently
the Gram matrix $G=(\operatorname{tr}(\beta_i\beta_j))_{ij}$
is invertible. Indeed, a vector in its kernel would specify an
element pairing to zero with every basis element and hence
with every element of $F$. The columns of $G^{-1}$ give the
unique dual basis.

For $x,y\in F^a$, paired coordinate vectors satisfy
\[
 \operatorname{coord}_\beta(x)^{\mathsf T}
 \operatorname{coord}_{\beta^\vee}(y)
       =\operatorname{tr}\left(\sum_i x_iy_i\right).
\]
For $x\in F^a$ and $y\in F^b$, apply this identity to $(Ax,y)$
and $(x,A^{\mathsf T}y)$. The field dot products agree, giving
\[
 \operatorname{coord}_\beta(x)^{\mathsf T}
 R_\beta(A)^{\mathsf T}\operatorname{coord}_{\beta^\vee}(y)
 =\operatorname{coord}_\beta(x)^{\mathsf T}
 R_{\beta^\vee}(A^{\mathsf T})\operatorname{coord}_{\beta^\vee}(y).
\]
Both coordinate vectors range independently over their entire
binary spaces, which proves the matrix identity. The argument
also covers an empty input or output coordinate space.
\end{proof}

The transpose in \eqref{bin:binary-transpose} is the algebraic
dual map in the dual $F$-bases. An arbitrary change of the
original $F$-basis therefore induces its inverse transpose on
dual vectors. The statement makes no assertion that an
unpaired use of the original coordinates represents this dual.

\subsection{Proof of the binary conversion proposition}

\begin{proof}[Proof of Proposition~\ref{bin:restriction-and-soundness}]
Use the field, three based block spaces, maps $A,B$, and constants
in that proposition. Restriction of scalars preserves composition,
so
\[
 H_XH_Z^{\mathsf T}
 =R_\beta(A)R_\beta(B)=R_\beta(AB)=0.
\]
An $F$-space of dimension $d$ has binary dimension $sd$,
as seen by multiplying an $F$-basis by the elements of $\beta$.
Applied to the image of each map, this gives
$\operatorname{rank}_{\mathbb F_2}R_\beta(A)
=s\operatorname{rank}_F A$, and likewise for $B$.
Lemma~\ref{bin:css-algebra} then proves the claimed dimension
formula. The hypotheses give $N=sD_2>0$ and
$m_X+m_Z=s(D_1+D_3)>0$.

Use $\beta$ on the original spaces and $\beta^\vee$ on their
algebraic duals. Lemma~\ref{bin:pairing-dual} gives the actual
binary subspace identifications
\begin{align*}
 \ker H_X&=\operatorname{coord}_\beta(\ker A),&
 \operatorname{im}H_Z^{\mathsf T}
       &=\operatorname{coord}_\beta(\operatorname{im}B),\\
 \ker H_Z&=\operatorname{coord}_{\beta^\vee}(\ker B^{\mathsf T}),&
 \operatorname{im}H_X^{\mathsf T}
       &=\operatorname{coord}_{\beta^\vee}(\operatorname{im}A^{\mathsf T}).
\end{align*}
For either paired coordinate map on the middle space, a nonzero
block contributes between one and $sM_2$ nonzero binary
coordinates. Consequently
\begin{equation}\label{bin:block-coordinate-comparison}
 |x|_b\leq|\overline x|_H\leq sM_2|x|_b.
\end{equation}
The lower comparison and Lemma~\ref{bin:css-distance} give
$d\geq\min\{\mu_c,\mu_h\}$. This includes $K=0$, when both
kernel-minus-image sets are empty.

For a fixed $x\in C^2$, a minimum block-weight correction to
$\ker A$ exists: the possible weights form a nonempty subset
of the finite set of integers from zero to the number of blocks.
Applying the upper comparison to such a correction gives
\[
 \operatorname{dist}_H(\overline x,\ker H_X)
       \leq sM_2\operatorname{dist}_b(x,\ker A).
\]
Every nonzero output block contributes at least one binary
coordinate. Hence
\[
 |H_X\overline x|_H\geq|Ax|_b
 \geq\frac{\varepsilon_c}{sM_2}
                  \operatorname{dist}_H(\overline x,\ker H_X).
\]
Using the dual coordinates and the hypothesis for $B^{\mathsf T}$
gives in precisely the same manner
\[
 |H_Z\overline y|_H
 \geq\frac{\varepsilon_h}{sM_2}
                  \operatorname{dist}_H(\overline y,\ker H_Z).
\]
Here the correction lies in the dual middle space, whose block
dimensions are also bounded by $M_2$; its output is in $(C^1)^*$,
where only the lower weight comparison is used. Thus no upper
bound on either syndrome-block size is needed.

Fix a consistent syndrome $\eta=(\eta_X,\eta_Z)$. By
Lemma~\ref{bin:css-syndrome-decomposition}, both syndrome fibers
are nonempty. Let
\[
 w_z=\min_{H_Xz=\eta_X}|z|_H,\qquad
 w_x=\min_{H_Zx=\eta_Z}|x|_H,\qquad
 \varepsilon_* =\min\{\varepsilon_c,\varepsilon_h\}.
\]
Each fiber is a coset of its entire kernel. The preceding two
expansion inequalities therefore imply
\[
 |\eta_X|_H+|\eta_Z|_H
 \geq\frac{\varepsilon_*}{sM_2}(w_z+w_x)
 \geq\frac{\varepsilon_*}{sM_2}w(\eta).
\]
For the last inequality, independent minimum solutions $x,z$
give a Pauli with support at most $w_x+w_z$; the minimum over
all Pauli solutions is no larger. On $\mathscr E_\eta$, the
tester eigenvalue is its syndrome weight divided by
$s(D_1+D_3)$, while the distance eigenvalue is $w(\eta)$.
Thus
\[
 \mathsf H|_{\mathscr E_\eta}
 \succeq\frac{\varepsilon_*}{s^2M_2(D_1+D_3)}
                 \mathsf D_{\mathscr Q}|_{\mathscr E_\eta}.
\]
The syndrome spaces are an orthogonal decomposition, so this
inequality holds on the entire Hilbert space. Substituting
$N=sD_2$ gives exactly the soundness constant stated in the
proposition.

Finally each scalar entry of a field matrix becomes an
$s\times s$ binary multiplication matrix. A zero entry gives
a zero block, and each row and column of any such block has
at most $s$ nonzero entries. Thus field row and column weights
at most $w_F$ give binary row and column weights at most $sw_F$.
Rows of $H_X$ are rows of $R_\beta(A)$, while rows of $H_Z$
are columns of $R_\beta(B)$. This proves check weight at most
$sw_F$. A qubit occurs in at most $sw_F$ checks from each list,
giving at most $2sw_F$ occurrences altogether.
\end{proof}

\section{Auxiliary arithmetic arguments}\label{app:arithmetic}

We retain the notation $F,B,\mathcal M,S,H=B^1$ of
Section~\ref{sec:arithmetic}. The field $F$ is totally real,
the quaternion algebra $B$ is totally definite, and
$\mathcal M$ is a maximal $\mathcal O_F$-order. At a finite
place $v$, write $\mathcal M_v=\mathcal M\otimes_{\mathcal O_F}
\mathcal O_v$ and $H_v=\{x\in B_v:\operatorname{nrd}(x)=1\}$.
The additive and multiplicative adelic topologies below are the
restricted-product topologies of~\cite[Sections~27.4,27.6]{Voight2021}.

\subsection{Integral norm-one groups and principal levels}

\begin{lemma}\label{exar:integral-local-groups}
For every finite place $v$, the group
$C_v=H_v\cap\mathcal M_v$ is compact and open in $H_v$.
For every integer $c\ge1$,
\[
 C_v(c)=\{x\in C_v:x-1\in\mathfrak m_v^c\mathcal M_v\}
\]
is an open normal subgroup of finite index in $C_v$.
Moreover $\bigcap_{c\ge1}C_v(c)=\{1\}$.
The principal groups in~\eqref{exar:principal-subgroups}
are finite-index subgroups of $H(F)\cap\mathcal M_S$;
deeper principal levels are normal in shallower ones.
\end{lemma}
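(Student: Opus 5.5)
The plan is to treat each local statement through the topology of $B_v$ and then pass to the global groups through the diagonal embedding. First, $\mathcal M_v$ is a free $\mathcal O_v$-module of rank four, so it is compact and open in $B_v\cong F_v^4$. The reduced norm is a polynomial in the coordinates, hence continuous. Therefore $H_v$ is closed in $B_v$, and $C_v=H_v\cap\mathcal M_v$ is compact. It is open in $H_v$ because $\mathcal M_v$ is open in $B_v$.

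Second, I check that $C_v$ is a group. An order is closed under multiplication. For $x$ of reduced norm one, $x^{-1}=\overline x=\operatorname{trd}(x)-x$, and $\overline x$ lies in $\mathcal M_v$ because traces of integral elements are integral. So $C_v$ is closed under products and inverses.

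For the congruence groups, I use the reduction homomorphism
\[
 \mathcal M_v\to\mathcal M_v/\mathfrak m_v^c\mathcal M_v .
\]
Its target is a finite ring, since its cardinality is $|\mathcal O_v/\mathfrak m_v^c|^4$. Restricting to $C_v$ gives a group homomorphism into the unit group of this finite ring. The subgroup $C_v(c)$ is exactly its kernel, so $C_v(c)$ is normal and of finite index. It is open because $1+\mathfrak m_v^c\mathcal M_v$ is open in $B_v$. For the intersection, I use that $\bigcap_c\mathfrak m_v^c\mathcal M_v=0$, which follows from freeness over the DVR and $\bigcap_c\mathfrak m_v^c=0$. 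Hence any $x$ in every $C_v(c)$ satisfies $x-1=0$.

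For the global statement, I map $H(F)\cap\mathcal M_S$ to the finite product $\prod_{v\mid\mathfrak n}C_v/C_v(c_v)$, where $\mathfrak n=\prod_v\mathfrak m_v^{c_v}$. The map lands in this product because an element of $\mathcal M_S$ is integral at every $v\mid\mathfrak n$, these places lying outside $S$, and it has norm one, so each component lies in $C_v$. The kernel of this map is $\Gamma(\mathfrak n)$, so $\Gamma(\mathfrak n)$ is normal of finite index.

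If $\mathfrak n\mid\mathfrak n'$, then $\Gamma(\mathfrak n')\subseteq\Gamma(\mathfrak n)$. Moreover, $\Gamma(\mathfrak n')$ is the kernel of the restriction of the larger reduction map to $\Gamma(\mathfrak n)$, so it is normal in the shallower group.

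The only delicate point is the interpretation of the congruence condition in \eqref{exar:principal-subgroups}. The ideal $\mathfrak n\mathcal M_v$ there must be read as $\mathfrak m_v^{c_v}\mathcal M_v$ at each place $v\mid\mathfrak n$. With that reading the argument is routine, and no step should be a serious obstacle.
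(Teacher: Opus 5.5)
Your proposal is correct and follows essentially the same route as the paper. Both use the lattice compactness of $\mathcal M_v$ together with continuity of the reduced norm, inverses via the standard involution, $C_v(c)$ as the kernel of reduction into the finite ring $\mathcal M_v/\mathfrak m_v^c\mathcal M_v$, and the global groups as common kernels of finitely many such reductions. The ``delicate point'' you flag is not actually an interpretive choice: $\mathfrak n\mathcal O_v=\mathfrak m_v^{c_v}$, so $\mathfrak n\mathcal M_v=\mathfrak m_v^{c_v}\mathcal M_v$ holds identically.
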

\begin{proof}
The completed order is a free $\mathcal O_v$-module of rank
four, since it is a lattice in the four-dimensional $F_v$-space
$B_v$ and $\mathcal O_v$ is a discrete valuation ring.
It is compact and open in $B_v$: in a lattice basis this is
the assertion that $\mathcal O_v^4$ is compact and open in
$F_v^4$. These standard local-field properties are recalled
in~\cite[Chapter~12]{Voight2021}.

For an order element $x$, reduced trace and norm are integral
\cite[Corollaries~10.3.3 and~10.3.6]{Voight2021}.
Hence the standard involution
$\overline x=\operatorname{trd}(x)-x$ preserves the order.
If $\operatorname{nrd}(x)=1$, then
$x^{-1}=\overline x$, so $C_v$ is a group.
The reduced norm is a continuous quadratic polynomial,
making $H_v$ closed in $B_v$. Thus $C_v$ is compact,
and it is open in the relative topology of $H_v$.

Reduction modulo $\mathfrak m_v^c\mathcal M_v$ is a group
homomorphism from $C_v$ into the units of the finite ring
$\mathcal M_v/\mathfrak m_v^c\mathcal M_v$.
Its kernel is $C_v(c)$. The ring is finite because its additive
group is a direct sum of four copies of
$\mathcal O_v/\mathfrak m_v^c$, each of cardinality $|k_v|^c$.
This proves normality and finite index. Its congruence
conditions are open in lattice coordinates. In the same
coordinates, $\bigcap_c\mathfrak m_v^c\mathcal M_v=0$,
which proves the intersection assertion.

For a global ideal $\mathfrak n$ supported outside $S$, combine
these finitely many reduction homomorphisms at the primes
dividing $\mathfrak n$. Their common kernel on
$H(F)\cap\mathcal M_S$ is exactly $\Gamma(\mathfrak n)$.
The finite target proves finite index. Every deeper principal
group is the intersection with further such kernels and is
therefore normal in every shallower principal group.
\end{proof}

\subsection{Compactness and finite norm-one quotients}

Let $\mathbb A_F$ be the adele ring of $F$. Write
$\mathbb B^\times$ for the restricted product of $B_v^\times$
with respect to $\mathcal M_v^\times$ at finite places, with
the finitely many infinite factors included. With normalized
local absolute values, put
\[
 \mathbb B^{(1)}
 =\left\{(x_v)_v\in\mathbb B^\times:
                  \prod_v|\operatorname{nrd}(x_v)|_v=1\right\},
 \qquad
 H(\mathbb A_F)=\{x\in\mathbb B^\times:
                         \operatorname{nrd}(x_v)=1\ \forall v\}.
\]
The norm map $\mathbb B^\times\to\mathbb A_F$ is continuous.
Indeed, it is a polynomial in each local algebra coordinate,
and at almost all finite places it maps the reference integral
unit group into $\mathcal O_v^\times$. This verifies continuity
on every basic restricted-product neighborhood. Multiplication
is locally polynomial, and inversion is
$x^{-1}=\overline x/\operatorname{nrd}(x)$.
The same description verifies the topological group operations.

\begin{lemma}
\label{exar:norm-one-compact}
There is a compact set $C_H\subset H(\mathbb A_F)$ such that
\[
                       H(\mathbb A_F)=H(F)C_H.
\]
\end{lemma}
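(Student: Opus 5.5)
The plan is to deduce the statement from the compactness of $\mathbb B^{(1)}/B^\times$. That compactness holds because $B$ is a division algebra: it is totally definite, hence ramified at the infinite places. The standard Fujisaki lemma \cite[Theorem~27.6.14 / Main Theorem~29.1.?]{Voight2021} gives a compact $C^{(1)}\subset\mathbb B^{(1)}$ with $\mathbb B^{(1)}=B^\times C^{(1)}$. I will then pass from $\mathbb B^{(1)}$ to the norm-one adelic group by controlling reduced norms.

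\textbf{Step 1 (reduction to norms).} Given $x\in H(\mathbb A_F)\subseteq\mathbb B^{(1)}$, I will write $x=\gamma c$ with $\gamma\in B^\times$ and $c\in C^{(1)}$. Then $\operatorname{nrd}(\gamma)=\operatorname{nrd}(c)^{-1}$ lies in $F^\times\cap\operatorname{nrd}(C^{(1)})^{-1}$.

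\textbf{Step 2 (finiteness of norm values).} The set $\operatorname{nrd}(C^{(1)})$ is compact in the ideles. The intersection of a compact set with the discrete subgroup $F^\times$ is finite. So $\operatorname{nrd}(\gamma)$ ranges over a finite set $\{a_1,\dots,a_N\}\subset F^\times$.

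\textbf{Step 3 (choice of representatives).} For each $a_j$ that occurs, I fix one $\gamma_j\in B^\times$ with $\operatorname{nrd}(\gamma_j)=a_j$. Then $\gamma\gamma_j^{-1}\in H(F)$, and
\[
 x=(\gamma\gamma_j^{-1})(\gamma_j c).
\]
Here $\gamma_jc$ has reduced norm one at every place, so it lies in $H(\mathbb A_F)\cap\bigcup_j\gamma_jC^{(1)}$. This set is a closed subset of a finite union of compact sets, hence compact, and I take $C_H$ to be it.

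\textbf{Step 4 (topologies).} I still have to check that the subspace topology on $H(\mathbb A_F)$ coming from $\mathbb B^\times$ agrees with its own restricted-product topology (with respect to $C_v$). This holds because $C_v=H_v\cap\mathcal M_v^\times$. Closedness of $H(\mathbb A_F)$ follows from the continuity of $\operatorname{nrd}$ established just above.

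The main obstacle is citing Fujisaki's lemma correctly, namely compactness of $B^\times\backslash\mathbb B^{(1)}$ for a division algebra. If it is not available in exactly this form, I would instead use reduction theory. Finiteness of the class number of $\mathcal M$ gives finitely many double cosets $B^\times\backslash\mathbb B^\times/\widehat{\mathcal M}^\times B_\infty^\times$. Compactness of $B_\infty^{(1)}$ (Hamilton quaternions) handles the infinite part. Together with finiteness of $\mathcal O_F^\times$ modulo norms up to the needed index, these assemble the compact set directly.
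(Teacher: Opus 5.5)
Your proposal is correct and follows essentially the same route as the paper: Fujisaki's lemma for the totally definite (hence division) algebra $B$ gives a compact $C\subset\mathbb B^{(1)}$ with $\mathbb B^{(1)}=B^\times C$, the rational norms $\operatorname{nrd}(\gamma)$ are confined to the finite set $F^\times\cap\operatorname{nrd}(C)^{-1}$, and translating by fixed quaternions of those norms produces the compact norm-one set. Your $C_H=H(\mathbb A_F)\cap\bigcup_j\gamma_jC^{(1)}$ agrees with the paper's $\bigcup_{c}b_c^{-1}C_c$, where $C_c$ are the fibers of the reduced norm, and your Step~4 topology check is a harmless extra that the paper leaves implicit.
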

\begin{proof}
Total definiteness implies that $B$ is a division algebra:
at a real embedding the reduced norm of every nonzero element
is a positive sum of four real squares, so its reduced norm
in $F$ is nonzero, and its inverse is
$\overline x/\operatorname{nrd}(x)$.
Fujisaki's lemma
\cite[Main Theorem~27.6.14(a), proof Steps~3--4]{Voight2021}
therefore applies to this division algebra over the global
field $F$. Its compact representative construction supplies
a compact set $C\subset\mathbb B^{(1)}$ satisfying
$\mathbb B^{(1)}=B^\times C$.

The continuous norm image $\operatorname{nrd}(C)$ is compact
in $\mathbb A_F$. Diagonal $F$ is closed and discrete there
\cite[27.4.5]{Voight2021}; its intersection with a compact set
is consequently finite. Thus
\[
 D=\operatorname{nrd}(C)\cap\operatorname{nrd}(B^\times)
\]
is finite. For each $c\in D$, choose $b_c\in B^\times$
with $\operatorname{nrd}(b_c)=c$, and define
\[
 C_c=\{z\in C:\operatorname{nrd}(z)=c\},\qquad
 C_H=\bigcup_{c\in D}b_c^{-1}C_c.
\]
Each $C_c$ is closed in $C$, since $\mathbb A_F$ is Hausdorff.
Hence $C_H$ is compact, and multiplicativity of the norm
shows $C_H\subset H(\mathbb A_F)$.

For $h\in H(\mathbb A_F)$, write $h=bz$ with $b\in B^\times$
and $z\in C$. Its norm identity gives
$c=\operatorname{nrd}(z)=\operatorname{nrd}(b)^{-1}\in D$.
Then
\[
 h=(bb_c)(b_c^{-1}z),\qquad
 bb_c\in H(F),\quad b_c^{-1}z\in C_H.
\]
This proves the required factorization. The choice of $b_c$
uses only norms already attained by rational quaternions.
\end{proof}

\begin{proof}[Proof of Proposition~\ref{exar:finite-quotient}]
Write $H_S=\prod_{v\in S}H_v$. At $v\notin S$ let
$U_v=C_v$, except at a prime dividing $\mathfrak n$, where
take
$U_v=C_v(\operatorname{ord}_v\mathfrak n)$.
The product $U=\prod_{v\notin S,\ v\text{ finite}}U_v$
is a compact open subgroup of the finite adelic group away
from $S$. Each infinite group $H(F_v)$ is the unit sphere
of Hamilton's quaternions and is compact. For a vertex-parity
pattern $b\in\{0,1\}^t$, let $K_b\le H_S$ be the compact
open stabilizer of a vertex of that type.
Lemma~\ref{exdiag:lattice-tree} identifies the type-$b$
vertex set with $H_S/K_b$.

Extend $g\in H_S$ by the identity at all other places.
This induces a map
\[
 \Gamma(\mathfrak n)\backslash H_S/K_b
 \longrightarrow
 H(F)\backslash H(\mathbb A_F)/
          (K_b\times U\times H(F_\infty)).
\]
It is well-defined: a rational element of $\Gamma(\mathfrak n)$
has its outside-$S$ coordinates in $U$, and its infinite
coordinates in $H(F_\infty)$.
It is injective. In fact, an adelic equality
$\widetilde g'=h\widetilde g k$ with $h\in H(F)$ and
$k\in K_b\times U\times H(F_\infty)$ gives
$1=h_vk_v$ at every finite $v\notin S$.
Thus $h_v\in U_v$ at all those places, precisely the defining
conditions for $h\in\Gamma(\mathfrak n)$.
At $S$ the equality is then $g'=h_Sgk_S$, with $k_S\in K_b$.

The target double-coset set is finite.
Indeed, Lemma~\ref{exar:norm-one-compact} supplies compact
representatives for the left quotient, and each right
double coset is open because its right subgroup is open.
Their open cover of the compact quotient has a finite subcover.
There are only $2^t$ vertex-parity patterns, so the quotient
has finitely many vertices. Every vertex belongs to finitely
many cells of the locally finite product of trees, and every
cell has a vertex; hence it has finitely many cell orbits.

For completeness, its cell stabilizers are finite.
The diagonal additive embedding $B\hookrightarrow
B\otimes_F\mathbb A_F$ is closed and discrete
\cite[27.6.2]{Voight2021}. The rational elements whose
$S$-coordinates lie in a fixed compact set, whose outside
finite coordinates lie in $U$, and whose infinite coordinates
lie in $H(F_\infty)$ form a finite set. Apply this to the
compact stabilizer of a product cell. A torsion-free group
has no nontrivial finite subgroup, so the cell stabilizers
are trivial.

Finally, determinant one preserves every directional parity.
The vertices of a cube have distinct full parity vectors,
so a cell stabilizer cannot permute its vertices. The free
proper action without inversions gives a cubical covering:
a sufficiently small neighborhood inside the star of a vertex
has disjoint translates, and its face incidences are preserved.
This proves all assertions of the proposition.
\end{proof}

\subsection{The torsion-free starting level}

\begin{lemma}\label{exar:free}
For the split place $\ell_0\mid p_0$ chosen in
Section~\ref{sec:arithmetic}, with $p_0$ odd and unramified in
$F$, the group $\Gamma_0=\Gamma(\ell_0)$ is torsion free.
Its action on $\mathcal T$ identifies it with its projective
image.
\end{lemma}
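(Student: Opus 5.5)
The plan is the classical reduction-mod-$\ell_0$ argument. Suppose $\gamma\in\Gamma_0$ has finite order $N>1$; replacing $\gamma$ by a power, assume $N$ is prime. First I would dispose of $\gamma=-1$: the congruence $-1-1=-2\in\ell_0\mathcal M_{\ell_0}$ would force $2\in\ell_0$. That is impossible because $\ell_0$ lies over the odd prime $p_0$, so $2$ is a unit at $\ell_0$. Hence $-1\notin\Gamma_0$, and we may assume $\gamma\neq\pm1$.

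Next I would use that $F(\gamma)\subseteq B$ is a field, since $B$ is a division algebra. As $\gamma$ has reduced norm one and finite order, $F(\gamma)$ is a quadratic extension containing a primitive $N$th root of unity $\zeta$ with $\operatorname{nrd}\gamma=1$, so $\gamma$ has eigenvalues $\zeta,\zeta^{-1}$. Then $\zeta+\zeta^{-1}=\operatorname{trd}(\gamma)\in F$, and $[\mathbb Q(\zeta):\mathbb Q(\zeta+\zeta^{-1})]=2$.

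The key local step works at the split place $\ell_0$, where $\mathcal M_{\ell_0}=M_2(\mathcal O_{\ell_0})$. The congruence gives $\gamma=1+\pi^c X$ with $c=\operatorname{ord}_{\ell_0}(\ell_0)\ge1$, $X$ integral and nonzero. I would expand $1=\gamma^N=\sum_j\binom Nj\pi^{cj}X^j$ and compare valuations.
\begin{itemize}
\item If $N\neq p_0$, the linear term $N\pi^cX$ has valuation exactly $c+v(X)$, and every other term has strictly larger valuation, a contradiction.
\item If $N=p_0$, then $v(p_0)=e(\ell_0/p_0)=1$ because $p_0$ is unramified with $\operatorname{ord}_{\ell_0}(p_0)=1$. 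The term $p_0\pi^cX$ has valuation $1+c+v(X)$. The terms $\binom{p_0}{j}\pi^{cj}X^j$ with $2\le j<p_0$ have valuation at least $1+2c+jv(X)$, and the top term $\pi^{cp_0}X^{p_0}$ has valuation at least $cp_0+p_0v(X)$. Since $p_0\ge3$ and $c\ge1$, all of these exceed $1+c+v(X)$. So the linear term is strictly smallest and cannot cancel, a contradiction.
\end{itemize}
This is the standard Minkowski-type argument with level $\ge1$ at an odd unramified prime. Hence $\Gamma_0$ is torsion free.

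The main obstacle is the case $N=p_0$. To settle it I would use the eigenvalue form rather than matrix powers, to avoid non-commutativity issues: work in $\mathcal O_{\ell_0}[\zeta]$ and note that $\zeta-1$ would have to be divisible by $\pi^c$. For $p_0\ge3$ unramified, $v(\zeta-1)=1/(p_0-1)<1\le c$, which yields the contradiction directly.

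For the final assertion, suppose $\gamma\in\Gamma_0$ acts trivially on $\mathcal T$. Then at a walking place it fixes all vertices of a tree, so it is central in $\mathrm{SL}_2$, i.e. $\gamma=\pm1$. Since $-1\notin\Gamma_0$ by the first step, the projection to the product of the projective groups is injective.
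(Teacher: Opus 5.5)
Your proposal is correct and is essentially the paper's argument: the same valuation comparison in the binomial expansion of $(I_2+\pi^sA)^N$ at the split place $\ell_0$ (using $v(p_0)=1$ and $p_0\ge3$), then the facts that a determinant-one matrix fixing the whole tree is $\pm I_2$ and that $-1\notin\Gamma_0$. The differences are minor: the paper does not reduce to prime order but iterates to get $v(g^{up_0^a}-I_2)=s+a$; your eigenvalue detour for $N=p_0$ is valid but unnecessary, since $I_2$ commutes with $X$ and there is no non-commutativity issue; and the paper gives a short direct proof of the $\pm I_2$ kernel fact that you only assert.
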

\begin{proof}
Choose an integral splitting at $\ell_0$, write
$v=\operatorname{ord}_{\ell_0}$, and extend $v$ to matrices
by the minimum of the entry valuations, with $v(0)=+\infty$.
Unramifiedness gives $v(p_0)=1$.
Every nonidentity local matrix in the principal level has
the form $g=I_2+\pi^sA$, where $s\ge1$ and $v(A)=0$.

If $u\ge1$ is not divisible by $p_0$, then $v(u)=0$.
In the binomial expansion of $g^u-I_2$, the first term has
valuation $s$ and every later term has valuation at least
$2s>s$. Choosing a unit entry of $A$ proves
$v(g^u-I_2)=s$.
In the expansion of $g^{p_0}-I_2$, the first term has
valuation $s+1$. For $2\le j<p_0$, the coefficient
$\binom{p_0}{j}$ has valuation one, so the corresponding
term has valuation at least $js+1>s+1$.
The last term has valuation at least $p_0s>s+1$ since
$p_0\ge3$ and $s\ge1$. Thus
$v(g^{p_0}-I_2)=s+1$.
Iteration is valid at each new positive valuation and gives
\[
 v(g^{u p_0^a}-I_2)=s+a<\infty
 \qquad(u\ge1,\ p_0\nmid u,\ a\ge0).
\]
Every positive integer is of this form; hence $g$ is not torsion.
The embedding $B\hookrightarrow B_{\ell_0}$ is injective,
so this proves torsion-freeness of $\Gamma_0$.

A determinant-one matrix acting trivially on the local lattice
tree is scalar. To see this directly, it stabilizes
$\mathcal O^2$, and hence is integral. Stabilizing
$\mathcal O e_1+\mathcal O\pi^a e_2$ for every $a\ge0$
forces its lower-left entry to be zero; interchanging the
basis vectors forces the upper-right entry to be zero.
Repeating with the basis $e_1+e_2,e_2$ forces the diagonal
entries to be equal. Thus it is $I_2$ or $-I_2$.
The faithful embedding into a split completion then shows
that a rational quaternion in the kernel of the action is
$1$ or $-1$. The latter does not lie in the identity level
modulo the odd prime $\ell_0$. The action is therefore faithful.
\end{proof}

\section{Regularity tools for coordinate-flat arrangements}\label{app:regularity}
Throughout this appendix, $k=\overline{\mathbb F}_2$. For an integer
$N\geq1$, the polynomial ring $S=k[z_1,\ldots,z_N]$ has its standard grading. For a finitely generated
graded $S$-module with minimal resolution
$F_i=\bigoplus_jS(-j)^{\beta_{i,j}}$, use
\[
 \operatorname{reg}M=\max\{j-i:\beta_{i,j}\ne0\},\qquad
 \operatorname{reg}0=-\infty.
\]
Minimal graded free resolutions and this definition are as in
\cite[\S20.5]{Eisenbud1995}. For a closed subscheme
$Z\subseteq\mathbb P^{N-1}_k$, write $I_Z$ for its saturated homogeneous
ideal and $\mathcal I_Z$ for its ideal sheaf. Thus
$I_Z:(z_1,\ldots,z_N)^\infty=I_Z$; the empty scheme has $I_Z=S$.

\begin{lemma}\label{exloc:external-regularity}
\label{exloc:external-ideal-regularity}
For $N\geq2$, a nonzero proper saturated ideal $I_Z\subset S$, and an
integer $a$,
\[
 \operatorname{reg}I_Z=\operatorname{reg}(S/I_Z)+1,
 \qquad \operatorname{reg}I_Z\leq a
 \iff H^i(\mathbb P^{N-1},\mathcal I_Z(a-i))=0\quad(i>0).
\]
If these conditions hold, then $I_Z$ is generated in degrees at most
$a$, and
\[
 S_n\twoheadrightarrow H^0(Z,\mathcal O_Z(n))\quad(n\geq a-1),
 \qquad H^j(Z,\mathcal O_Z(n))=0\quad(j>0, n\geq a-j-1).
\]
\end{lemma}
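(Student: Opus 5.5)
This is the standard comparison between Castelnuovo--Mumford regularity and sheaf cohomology, and I will derive it from local cohomology. Let $\mathfrak m=(z_1,\ldots,z_N)$. First I will use the minimal resolution of $S/I_Z$. Its tail above $F_0=S$ is a minimal resolution of $I_Z$, shifted by one homological step. Hence the Betti numbers satisfy $\beta_{i,j}(I_Z)=\beta_{i+1,j}(S/I_Z)$. Since $I_Z$ is nonzero and proper, the summand $F_0=S$ contributes $0-0=0$ and does not affect the maximum, because $\beta_{1,j}(S/I_Z)\ne0$ for some $j\geq1$. This gives $\operatorname{reg}I_Z=\operatorname{reg}(S/I_Z)+1$.

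For the equivalence I will invoke the local-cohomology characterization of regularity \cite[Theorem~4.3]{Eisenbud1995}, or equivalently \cite[\S20.5]{Eisenbud1995}. It says $\operatorname{reg}M\leq a$ if and only if $H^i_{\mathfrak m}(M)_{n}=0$ for all $i$ and all $n>a-i$. I will apply it with $M=I_Z$. Saturatedness and $I_Z\ne0$ give $H^0_{\mathfrak m}(I_Z)=0$ and $H^1_{\mathfrak m}(I_Z)=0$, the latter because $I_Z$ is the module of twisted global sections in every degree. For $i\geq2$, the Serre--Grothendieck comparison gives $H^i_{\mathfrak m}(I_Z)_n\cong H^{i-1}(\mathbb P^{N-1},\mathcal I_Z(n))$. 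So the criterion becomes $H^{i}(\mathcal I_Z(n))=0$ for $i\geq1$ and $n>a-i-1$, that is, for $n\geq a-i$. Equivalently, $H^i(\mathcal I_Z(a-i))=0$ for all $i>0$.

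The step that needs the most care is the passage from vanishing at the single twist $a-i$ to vanishing at every twist $n\geq a-i$. I will handle it by Mumford's lemma: $a$-regularity of the sheaf $\mathcal I_Z$ implies $(a+1)$-regularity. The proof is by induction, restricting to a general hyperplane $H$ avoiding the associated points; $k$ is infinite, so such an $H$ exists. One then uses the sequence $0\to\mathcal I_Z(-1)\to\mathcal I_Z\to\mathcal I_Z|_H\to0$. The base case $N=2$ involves only $H^1$ on $\mathbb P^1$ and is immediate.

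With the equivalence in hand, the remaining claims follow quickly. Generation of $I_Z$ in degrees at most $a$ is immediate, since $\beta_{0,j}(I_Z)\ne0$ forces $j\leq\operatorname{reg}I_Z\leq a$. For the other two claims I will take cohomology of $0\to\mathcal I_Z\to\mathcal O_{\mathbb P}\to\mathcal O_Z\to0$.
\begin{itemize}
\item Surjectivity of $S_n=H^0(\mathcal O(n))\to H^0(\mathcal O_Z(n))$ follows from $H^1(\mathcal I_Z(n))=0$ for $n\geq a-1$.
\item For $j>0$, the group $H^j(\mathcal O_Z(n))$ sits between $H^j(\mathcal O(n))$ and $H^{j+1}(\mathcal I_Z(n))$. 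The second vanishes for $n\geq a-j-1$. The first vanishes for $0<j<N-1$. When $j=N-1$, the condition $n\geq a-N$ together with $a\geq1$ does not by itself give $n\geq-N+1$, so this case needs a separate argument. In that case $H^{N-1}(\mathcal O_Z(n))$ vanishes anyway once $Z$ is a proper closed subscheme of dimension less than $N-1$. If $Z$ is a hypersurface of degree $d\leq a$, the claim follows from the explicit cohomology of $\mathcal O_Z$.
\end{itemize}
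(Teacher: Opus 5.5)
Your proof is correct, and it reaches the equivalence by a more unpacked route than the paper. The paper obtains $\operatorname{reg}I_Z=\operatorname{reg}(S/I_Z)+1$ from the exact-sequence regularity inequalities applied to $0\to I_Z\to S\to S/I_Z\to0$. It then gets the equivalence by citing the sheaf-regularity criterion of Eisenbud's Exercise~20.20(b), after checking two things: the section module of $\mathcal I_Z$ is $I_Z$ (saturation), and $\mathcal I_Z$ has no closed associated points. Vanishing at all twists $n\ge a-i$ comes from reapplying that criterion with any $a'\ge a$.

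You instead shift Betti numbers along the minimal resolution, which is more elementary. You then derive the criterion yourself from four ingredients:
\begin{itemize}
\item the local-cohomology characterization of regularity;
\item the vanishing $H^0_{\mathfrak m}(I_Z)=H^1_{\mathfrak m}(I_Z)=0$, from torsion-freeness and saturation;
\item the Serre--Grothendieck comparison;
\item Mumford's lemma, to pass from the single twist $a-i$ to all $n\ge a-i$.
\end{itemize}
In effect you reprove what the paper outsources to the exercise. Your version shows exactly where $N\ge2$ and saturation enter, and in the forward direction the all-twist vanishing comes straight from the local-cohomology side rather than from monotonicity of regularity.

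One step in your final bullet should be corrected, although it does not break the proof. For $j=N-1$ the condition is $n\ge a-N$. Since $a\ge\operatorname{reg}I_Z\ge1$, this gives $n\ge 1-N$, which is exactly the range where $H^{N-1}(\mathbb P^{N-1},\mathcal O(n))=0$. So no separate argument is needed; this is how the paper closes the case, and your claim that it ``does not by itself give $n\ge-N+1$'' is an arithmetic slip.

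Your fallback is also valid, but the hypersurface clause is confused. A nonzero saturated $I_Z$ defines a proper closed subscheme of the integral scheme $\mathbb P^{N-1}$, so $\dim Z\le N-2$ always; a hypersurface in $\mathbb P^{N-1}$ has dimension $N-2$ too. Grothendieck vanishing then already kills $H^{N-1}(Z,\mathcal O_Z(n))$. As written, your case split suggests an uncovered case with $\dim Z=N-1$ that cannot occur, so drop it.
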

\begin{proof}
The exact-sequence inequalities of \cite[Corollary 20.19]{Eisenbud1995},
applied to $0\to I_Z\to S\to S/I_Z\to0$, give the first identity:
$\operatorname{reg}S=0$, $\operatorname{reg}(S/I_Z)\geq0$, and
$\operatorname{reg}I_Z\geq1$.
For the second, apply \cite[Exercise 20.20(b)]{Eisenbud1995} with
$\mathcal F=\mathcal I_Z$ and its section module
$\bigoplus_nH^0(\mathbb P^{N-1},\mathcal I_Z(n))=I_Z$.
The section-module identification is exactly saturation; a section
represented on the standard opens belongs to $I_Z$ after multiplication
by a power of each coordinate, hence after multiplication by a power
of the irrelevant ideal. The source's associated-point condition
holds: $\mathcal I_Z$ is a torsion-free ideal in the local domain of
projective space, so it has no closed associated points when $N-1\geq1$.

The same criterion with $a'\geq a$ gives
$H^i(\mathcal I_Z(n))=0$ for $n\geq a-i$. The generator bound follows
from the degree-zero term of the minimal resolution. Apply the ideal
sequence at twist $n$. Its $H^1(\mathcal I_Z(n))$ proves restriction
surjectivity. For $j>0$, the groups on either side of
$H^j(Z,\mathcal O_Z(n))$ are $H^j(\mathbb P^{N-1},\mathcal O(n))$
and $H^{j+1}(\mathcal I_Z(n))$. The latter vanishes in the asserted
range. The former vanishes in intermediate degrees; in top degree
$n\geq a-(N-1)-1\geq-(N-1)$ because $a\geq1$, so it also vanishes
\cite[III, Theorem 5.1]{Hartshorne1977}.
\end{proof}

For the products used here, the cohomology calculations are
\begin{equation}\label{exloc:external-projective-cohomology}
 h^0(\mathbb P^1,\mathcal O(n))=\max(n+1,0),\qquad
 h^1(\mathbb P^1,\mathcal O(n))=\max(-n-1,0),
\end{equation}
and, for projective $k$-schemes $U,V$ and coherent sheaves $\mathcal F,\mathcal G$,
\begin{equation}\label{exloc:external-kunneth}
 H^j(U\times V,\mathcal F\boxtimes\mathcal G)
 =\bigoplus_{p+q=j}H^p(U,\mathcal F)\otimes_kH^q(V,\mathcal G).
\end{equation}
The first is \cite[III, Theorem 5.1]{Hartshorne1977}; the second is~\cite[Lemma~33.29.1]{stacks-project}. Its quasi-compactness and affine-diagonal hypotheses
hold for projective schemes. Finite reduced schemes have only degree-zero
cohomology, and a coherent sheaf on a projective scheme of dimension
$h$ has no cohomology above $h$ \cite[III, Theorems 3.5 and 2.7]{Hartshorne1977}.

\begin{lemma}\label{ped:projective-parameter-power}
Let $N,D\geq1$. If $f_1,\ldots,f_N\in S_D$ have no common projective
zero, then
\[
 (z_1,\ldots,z_N)^{N(D-1)+1}\subseteq(f_1,\ldots,f_N).
\]
\end{lemma}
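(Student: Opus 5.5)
This is the classical statement that a homogeneous system of parameters of equal degree $D$ in $N$ variables generates an ideal containing all monomials of degree $N(D-1)+1$. I would prove it through the Koszul complex and the Hilbert series of a complete intersection.

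First, the hypothesis says $V_+(f_1,\ldots,f_N)=\varnothing$ in $\mathbb P^{N-1}_k$, so the radical of $(f_1,\ldots,f_N)$ contains the irrelevant ideal $\mathfrak m=(z_1,\ldots,z_N)$. By the projective Nullstellensatz, $S/(f)$ is therefore Artinian, i.e. of Krull dimension zero. Since $S$ has dimension $N$ and we have exactly $N$ generators, these $f_i$ form a homogeneous system of parameters. Because $S$ is Cohen--Macaulay, any system of parameters is a regular sequence; this is the fact I would cite from \cite[Theorem 17.4 and Corollary 17.7]{Eisenbud1995}, where unmixedness gives regularity in the graded local setting.

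Second, regularity of the sequence makes the Koszul complex a graded free resolution of $S/(f)$. The Hilbert series is then
\[
 \sum_n\dim_k(S/(f))_nT^n=\frac{(1-T^D)^N}{(1-T)^N}=(1+T+\cdots+T^{D-1})^N,
\]
a polynomial of degree $N(D-1)$. Hence $(S/(f))_n=0$ for $n\ge N(D-1)+1$, so $S_n=(f)_n$ in those degrees. In particular $\mathfrak m^{N(D-1)+1}$, which is generated in degree $N(D-1)+1$, lies in $(f)$. Equivalently, one can read off the top Koszul shift: $\operatorname{reg}(S/(f))=ND-N$, and an Artinian module vanishes above its regularity. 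That gives the same bound.

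Third, the degenerate cases are immediate. For $N=1$, $f_1=cz_1^D$ with $c\ne0$, and the bound $z_1^{D}\in(f_1)$ is the claim. When $D=1$ the $f_i$ are independent linear forms spanning $S_1$, and the claim reads $\mathfrak m\subseteq(f)$.

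The only step that needs care is passing from the absence of a common zero to $f$ being a regular sequence. This is where the Cohen--Macaulay property of the polynomial ring enters, and I would cite it rather than reprove it. Everything else is a Hilbert series computation.
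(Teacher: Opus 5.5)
Your proposal is correct and follows essentially the same route as the paper: Nullstellensatz to get $\mathfrak m^s\subseteq(f_1,\ldots,f_N)$, then regularity of $f_1,\ldots,f_N$ via Eisenbud's Corollary~17.7, exactness of the Koszul complex, and the Hilbert series $(1+T+\cdots+T^{D-1})^N$. The only difference is that the paper does not invoke the graded Cohen--Macaulay fact directly. It works in $S_{\mathfrak m}$ using the regular sequence $z_1^s,\ldots,z_N^s$, and then argues that graded Koszul homology vanishing at $\mathfrak m$ vanishes globally; your citation covers exactly this step.
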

\begin{proof}
The homogeneous Nullstellensatz gives
$\mathfrak m^s\subseteq(f_1,\ldots,f_N)$ for some $s\geq1$.
The powers $z_1^s,\ldots,z_N^s$ are a regular sequence, as follows
successively from their distinct variables. Localize at $\mathfrak m$.
The ideal generated by the $N$ forms contains this regular sequence of
length $N$, so \cite[Corollary 17.7]{Eisenbud1995} makes
$f_1,\ldots,f_N$ a regular sequence in $S_{\mathfrak m}$.
Its positive Koszul homology therefore vanishes there. Each homology
module is a finite graded $S$-module, and a nonzero such module cannot
vanish at $\mathfrak m$: its homogeneous annihilator is proper and
contained in $\mathfrak m$. Hence the Koszul complex is globally exact.
The exact Koszul resolution \cite[Corollary 17.5]{Eisenbud1995} gives Hilbert series
\[
 \frac{(1-z^D)^N}{(1-z)^N}=(1+z+\cdots+z^{D-1})^N.
\]
All graded pieces above $N(D-1)$ vanish, which is the stated inclusion.
\end{proof}

\begin{lemma}\label{exloc:external-approximation}
Let $I\subseteq I_j\subsetneq S$ and $A_j\subsetneq S$ be homogeneous
ideals, indexed by a nonempty finite set. Suppose
\[
 A_jI_j\subseteq I,\qquad
 (z_1,\ldots,z_N)^\tau\subseteq\sum_j A_j,\qquad
 \operatorname{reg}(S/I_j)\leq b,
\]
where $\tau\geq1$ and $b\geq0$ are integers. Then
\[
 \operatorname{reg}(S/I)\leq b+1+(\tau-1)N.
\]
\end{lemma}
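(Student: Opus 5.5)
The plan is to estimate $\operatorname{reg}(S/I)$ through local cohomology at $\mathfrak m=(z_1,\ldots,z_N)$. We use the characterization
\[
 \operatorname{reg}M=\max_i\bigl(\operatorname{end}H^i_{\mathfrak m}(M)+i\bigr),
\]
which is \cite[Theorem 4.3]{Eisenbud1995}-type standard material, and compare $S/I$ with the modules $S/I_j$, whose local cohomology ends in degree at most $b-i$. First, the hypothesis $\mathfrak m^\tau\subseteq\sum_jA_j$ is replaced by a statement about powers of variables. Each $z_i^\tau$ lies in $\sum_jA_j$, and the Koszul/Hilbert-series computation from the proof of Lemma~\ref{ped:projective-parameter-power} (with $D=\tau$ and $f_i=z_i^\tau$) gives
\[
 \mathfrak m^{N(\tau-1)+1}\subseteq(z_1^\tau,\ldots,z_N^\tau).
\]
This is the origin of the term $(\tau-1)N$. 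Accordingly, we compute $H^\bullet_{\mathfrak m}$ with the Čech complex on $z_1^\tau,\ldots,z_N^\tau$, which has the same radical as $\mathfrak m$.

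Next, consider the natural map $\varphi:S/I\to\bigoplus_jS/I_j$, with kernel $K=(\bigcap_jI_j)/I$. Since $A_jI_j\subseteq I$, each $A_j$ kills $I_j/I\supseteq K$. Hence $\sum_jA_j$, and therefore $\mathfrak m^\tau$, annihilates $K$, so $K$ has finite length and only $H^0_{\mathfrak m}$. To bound the top degree of $K$, we use the multiplication maps $A_j\otimes S/I_j\to S/I$ induced by $A_jI_j\subseteq I$. Their images cover $\bigl(\sum_jA_j\bigr)(S/I)$, and each such image is a quotient of a module controlled by $\operatorname{reg}(S/I_j)\le b$. Working generator by generator (each $z_i^\tau$ being a sum $\sum_ja_{ij}$ with $a_{ij}\in A_j$), a homogeneous class $x\in S/I$ with $z_i^\tau x$ expressible through the $S/I_j$ in every coordinate direction should have its degree controlled by $b$ plus the Koszul shift. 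We would run this on each Čech term $(S/I)_{z_{i_1}^\tau\cdots z_{i_p}^\tau}$, comparing it with the corresponding localizations of $\bigoplus_jS/I_j$, where $\varphi$ becomes injective because $K$ is $\mathfrak m$-torsion.

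The intended bookkeeping is then as follows. For $i\ge2$ we have $H^i_{\mathfrak m}(S/I)\cong H^i_{\mathfrak m}(\operatorname{im}\varphi)$, and the cokernel of $\operatorname{im}\varphi\hookrightarrow\bigoplus_jS/I_j$ is again killed by a power of $\mathfrak m$ by the same annihilator argument. The long exact sequences then bound the higher local cohomology ends by $b-i+1$. The finite-length pieces, $K$ and the cokernel, contribute $H^0$ and $H^1$ terms whose ends are bounded by $b+1+(\tau-1)N-i$, using the inclusion $\mathfrak m^{N(\tau-1)+1}\subseteq(z_i^\tau)$ and the degree-$b$ generation of the $S/I_j$ data.

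The main obstacle is this last step: bounding the ends of $H^0_{\mathfrak m}(S/I)$ and $H^1_{\mathfrak m}(S/I)$ sharply enough to reach exactly $b+1+(\tau-1)N$, rather than a weaker bound such as $b+\tau N$. My first attempt would follow Derksen--Sidman's approximation argument \cite{DS2002}. There one builds a Koszul-type double complex on the chosen $N$ elements of $\sum_jA_j$ with terms $\bigoplus_jS/I_j$ twisted by $\tau$, and reads off the regularity of its total homology with Eisenbud's exact-sequence inequalities \cite[Corollary 20.19]{Eisenbud1995}. I expect the constant to emerge as $b$ plus one homological shift plus the $N(\tau-1)$ degree span of the Koszul complex on $N$ forms of degree $\tau$.
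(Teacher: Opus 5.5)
There is a genuine gap, located exactly where your bookkeeping needs it. You claim that the cokernel of $\varphi\colon S/I\to\bigoplus_jS/I_j$ is killed by a power of $\mathfrak m$ ``by the same annihilator argument''. That argument only controls the kernel $(\bigcap_jI_j)/I$; nothing in the hypotheses controls how the targets overlap. For a counterexample, take $N\geq2$, $I=I_1=I_2=(z_1)$, $A_1=A_2=\mathfrak m$, $\tau=1$ and $b=0$. Every hypothesis holds, $\varphi$ is the diagonal embedding, and its cokernel is isomorphic to $S/(z_1)$, which has dimension $N-1\geq1$. The paper's own use of the lemma, in the proof of Theorem~\ref{exloc:flat-regularity}, behaves the same way: the targets $S/I_{Y^{(J)}}$ and $S/I_{Y_c}$ generally share positive-dimensional flats. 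Once the cokernel $Q$ is not $\mathfrak m$-torsion, the sequence $0\to\operatorname{im}\varphi\to\bigoplus_jS/I_j\to Q\to0$ bounds $H^i_{\mathfrak m}(\operatorname{im}\varphi)$ only through the uncontrolled $H^{i-1}_{\mathfrak m}(Q)$. Your estimates for the higher local cohomology therefore collapse. Separately, you flag the degree bound on the finite-length kernel as the main obstacle but only assert it (``should have its degree controlled''), so even the $H^0$ contribution is not established.

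The missing idea is that no new argument is needed. The lemma is a direct instance of Derksen--Sidman's approximation theorem, and the paper's proof is that citation with the parameters matched. Apply \cite[Theorem 3.5 and Definition 3.1]{DS2002} to $M=S/I$ and the degree-zero quotient maps $M\to M_j=S/I_j$. The kernels $I_j/I$ are annihilated by $A_j$, and $\sum_jA_j\supseteq\mathfrak m^\tau$, so these maps form an approximation system of degree $\tau$. Set $n+1=N$, $t=\tau$ and $r=b+1+(\tau-1)N$. The theorem then asks for two things. First, $\operatorname{reg}M_j\leq r-(t-1)(n+1)-1=b$, which is your hypothesis. Second, $M$ must be generated in degrees at most $r-(t-1)(n+1)=b+1$, which holds because $S/I$ is generated in degree $0$ and $b\geq0$. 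The field $k=\overline{\mathbb F}_2$ is infinite, as the theorem requires, and the conclusion is $\operatorname{reg}(S/I)\leq r$. Your heuristic for where the term $(\tau-1)N$ comes from is consistent with this. As a proof, however, the proposal needs either this citation with its hypotheses checked or a faithful reproduction of Derksen--Sidman's argument, not the direct-sum comparison.
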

\begin{proof}
Apply \cite[Theorem 3.5 and Definition 3.1]{DS2002} to
$M=S/I$ and the quotient maps $M\to M_j=S/I_j$.
The ideals $A_j$ annihilate their kernels, and their sum contains
$\mathfrak m^\tau$, so this is an approximation system of degree $\tau$.
The source has $n+1=N$ variables. Set its parameters
$t=\tau$ and $r=b+1+(\tau-1)N$. Its target bound is
$r-(t-1)(n+1)-1=b$, and its generator bound for $M$ is
$r-(t-1)(n+1)=b+1$. The latter holds because $M$ is generated in degree
zero. The field $k$ is infinite, as required by that theorem.
\end{proof}

\begin{lemma}\label{exloc:initial-betti-comparison}
\label{exloc:sr-regularity}
For a homogeneous ideal $I\subset S$ and a term order,
$\operatorname{reg}(S/I)\leq\operatorname{reg}(S/\operatorname{in}I)$.
For an integer $s\geq0$, if a squarefree monomial ideal $J$ has no
squarefree standard monomial of degree greater than $s$, then $\operatorname{reg}(S/J)\leq s$.
\end{lemma}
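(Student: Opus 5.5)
The lemma has two independent parts, and I would handle them separately. For the first inequality, $\operatorname{reg}(S/I)\leq\operatorname{reg}(S/\operatorname{in}I)$, the plan is to use the standard Gröbner degeneration. Choose a weight vector $w$ representing the term order on the finitely many relevant monomials. Form the homogenized family $\widetilde I\subseteq S[u]$ using the $w$-homogenization. Then $S[u]/\widetilde I$ is flat over $k[u]$, its fiber at $u=1$ is $S/I$, and its fiber at $u=0$ is $S/\operatorname{in}I$.

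Upper semicontinuity of graded Betti numbers in a flat family then gives $\beta_{i,j}(S/I)\leq\beta_{i,j}(S/\operatorname{in}I)$ for every $i,j$. This is \cite[Theorem 15.17 and its consequences]{Eisenbud1995}, or one can argue directly via the graded Tor over the family. Since regularity is the maximum of $j-i$ over the nonzero $\beta_{i,j}$, the inequality follows. I would simply cite this Betti-number comparison rather than reprove it.

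For the second part, $J$ is a squarefree monomial ideal, so $S/J$ is the Stanley--Reisner ring of a simplicial complex $\Delta$. The faces of $\Delta$ are exactly the supports of the squarefree standard monomials. The hypothesis therefore says every face has at most $s$ vertices, so $\dim\Delta\leq s-1$. Hochster's formula \cite{Hochster1977} gives
\[
 \beta_{i,j}(S/J)=\sum_{|W|=j}\dim_k\widetilde H_{j-i-1}(\Delta_W;k).
\]
Every induced subcomplex $\Delta_W$ has dimension at most $s-1$, so its reduced homology vanishes in degrees above $s-1$. A nonzero $\beta_{i,j}$ therefore forces $j-i-1\leq s-1$, that is $j-i\leq s$. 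The edge cases are harmless: $J=S$ gives regularity $-\infty$, and for the empty complex only $\widetilde H_{-1}$ of $\Delta_\varnothing$ contributes, at $i=j=0$.

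The main obstacle is purely bibliographic care. For the first part I need to make sure the flat-degeneration Betti comparison is quoted in the graded form it takes over an arbitrary field, here $k=\overline{\mathbb F}_2$. For the second part I need to check that Hochster's formula is characteristic-free. It is, since it is computed through the multigraded Koszul complex and simplicial cochains with coefficients in $k$.
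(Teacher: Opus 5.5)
Your proposal is correct and follows essentially the same route as the paper. For the first inequality, the paper also uses the homogeneous Gr\"obner degeneration of \cite[Theorem 15.17]{Eisenbud1995}; it argues that, in each internal degree, the Koszul complex is a finite complex of finite free modules over the parameter ring whose matrix ranks can only drop at the special fiber, which is exactly the semicontinuity of graded Betti numbers you invoke. For the second part, it likewise applies Hochster's formula with every induced subcomplex $\Delta_W$ of dimension at most $s-1$, so that a nonzero $\beta_{i,j}$ forces $j-i\le s$, and handles $J=S$ separately.
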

\begin{proof}
The homogeneous Gr\"obner degeneration of
\cite[Theorem 15.17]{Eisenbud1995} has special fiber $S/\operatorname{in}I$
and general fiber isomorphic to $S/I$. In each internal degree its
Koszul complex is a finite complex of finite free modules over the
parameter ring. Matrix ranks can drop at specialization, so homology
dimensions can only increase. Since these dimensions are the graded
Betti numbers, the first inequality follows.

For the second, the case $J=S$ is immediate. Otherwise let $\Delta$ be the simplicial complex whose faces are
the squarefree standard monomials of $J$. Hochster's formula, with its
multigrading as in the proof of \cite[Theorem 5.1, pp.~194--198]{Hochster1977},
is
\[
 \beta_{i,j}(S/J)=\sum_{|W|=j}
 \dim_k\widetilde H^{j-i-1}(\Delta_W;k).
\]
All faces have cardinality at most $s$, so every $\Delta_W$ has
dimension at most $s-1$. A nonzero summand therefore has $j-i\leq s$.
This includes variables belonging to $J$ (absent vertices) and degree-zero
Betti numbers. Taking the maximum proves the claim.
\end{proof}

\section{Verification of the Fixed Constants}\label{app:constants}
Throughout this appendix, take $t=4$, $k=2$, $K=2^3$,
$\epsilon=2^{-5}$, $\tau=2^{-3}$, and $h=4$, as in
Section~\ref{fixed:section}. Write
\[
 \rho=\rho(4,2^3,2^{-5}),\qquad \rho_*=2^{-2^{11}}.
\]

\begin{lemma}\label{fixed:constants}
The constants satisfy
\[
 \rho>\rho_*,\qquad
 \max\{\mathcal L(4,2,\rho,2^3),\mathcal Q(4,2,\rho,2^3)\}
       <2^{64}\rho_*^{-6}.
\]
\end{lemma}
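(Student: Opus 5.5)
The plan is to unwind the definitions in the order the constants were fixed and to bound each quantity crudely by powers of two.

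First, the bound on $\rho$. In \eqref{exloc:arrangement-constants} with $t=4$ we get $F_4=15$, $\sigma_4=119$, and $c_4=1+256\cdot119<2^{15}$. Then $\eta=\epsilon/(2tc_tK)=2^{-5}/(2^{4}c_4)>2^{-24}$. Next bound $P_4(4,2^3)$ from \eqref{exinc:incidence-constants}--\eqref{exinc:constant-recurrence}:
\begin{itemize}
\item $B_4(8)=3\cdot33+2=101<2^7$ and $P_2=9$;
\item $S_3=2^9\cdot9<2^{13}$, $L_3=24S_3<2^{18}$, $H_3=27(1+L_3)<2^{23}$, so $P_3<2^{30}$;
\item $S_4<2^{42}$, $L_4<2^{49}$, $H_4=256(1+L_4)<2^{58}$, so $P_4<2^{65}$.
\end{itemize}
With $\gamma_4=1/4$ we obtain $\theta\geq(\eta/P_4)^4>2^{-356}$, and $\theta\le 1/2$ is automatic.

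Then
\[
\rho=\tfrac18\bigl(\epsilon\theta/(4c_4K)\bigr)^4>2^{-3}\bigl(2^{-5-356-2-15-3}\bigr)^4=2^{-3-4\cdot381}=2^{-1527}>2^{-2^{11}}=\rho_*.
\]
All exponents only need to be upper bounds, so no exact evaluation is required.

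For $\mathcal L$ and $\mathcal Q$, take $\mathcal J_{4,2}=\{2,3\}$ and bound $\widehat b_{j\ell}$ from \eqref{geo:Btilde}. The numerator is a fixed integer at most $K^{4}\cdot\binom32\binom43\cdot4\cdot2^3\cdot3!<2^{12}\cdot2^{13}=2^{25}$ for $j\le3$. The denominator is $\prod_{i=\ell}^j\kappa_{4-i,j-i}$, a product of at most four factors. Each factor is $\kappa_{r,u}=\rho^{\binom{r-1}{u}}/(K+1)^{\binom{r-1}{u}-1}$. Since $\kappa_{r,u}\le1$ and $\kappa_{r,u}\geq(\rho/9)^{\binom{r-1}{u}}$, we need the exponents. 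The relevant pairs $(r,u)$ for $j=3$, $\ell=0$ are $(4,3),(3,2),(2,1),(1,0)$. Here $\binom{r-1}{u}=0$ whenever $u=r-1\ge1$, in which case $\kappa=K+1$. The paper's convention says boundary values are $\rho$, so I will use those directly. For $j=2$ the pairs are $(4,2),(3,1),(2,0)$ with $\binom32=3$, $\binom21=2$, $\binom10=1$. The total exponent of $\rho$ is therefore at most about $6$, and the powers of $(K+1)$ contribute at most $9^{3}<2^{10}$.

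This gives $\widehat b_{j\ell}\le2^{35}\rho^{-6}$. Summing at most four terms gives $\widehat L_j<2^{37}\rho^{-6}$. Multiplying by $\widehat\beta_{j\ell}\le\binom4\ell2^{3}K^{3}\le 6\cdot2^{12}<2^{15}$ gives $\widehat Q_j<2^{52}\rho^{-6}$. Since $\rho>\rho_*$, both maxima are below $2^{64}\rho_*^{-6}$.

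The main obstacle is bookkeeping the total $\rho$-exponent $\sum_i\binom{t-i-1}{j-i}$ in the denominator of $\widehat b_{j\ell}$. It must come out to at most $6$, and getting this right depends on the boundary conventions for $\kappa_{r,u}$ being applied correctly. The worst case is $j=2$, $\ell=0$: the exponent is $\binom32+\binom21+\binom10=6$, while $j=3$ gives smaller exponents. I will tabulate all seven $(j,\ell)$ pairs explicitly to confirm that $6$ is the maximum.
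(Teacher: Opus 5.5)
Your route matches the paper's proof. You bound $c_4$ and run the incidence recurrence up to $P_4$, then push these through $\eta,\theta,\rho$. You then tabulate the $\rho$- and $(K+1)$-exponents of $\prod_i\kappa_{4-i,j-i}$: totals $6,3,1$ and $3,1,0$ for $j=2$, and $4,3,2,1$ with all zeros for $j=3$. Finally you bound $\widehat\beta_{j\ell}$ and sum at most four terms.

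There are three arithmetic slips. Because of the huge slack, none changes the conclusion, but the first is a false inequality and should be corrected.

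\emph{The bound on $\eta$.} Here $2tK=2\cdot4\cdot2^3=2^6$, not $2^4$, so $\eta=2^{-11}/c_4$. With $c_4=30465$ this gives $\eta\approx2^{-25.9}$, so your claim $\eta>2^{-24}$ is false. The correct bound $\eta>2^{-26}$ gives $\theta>2^{-364}$ and
\[
\rho>2^{-3-4\cdot389}=2^{-1559},
\]
which is still well above $\rho_*=2^{-2^{11}}$.

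\emph{The boundary values of $\kappa$.} One has $\binom{r-1}{r-1}=1$, not $0$, so $\kappa_{r,r-1}=\rho$ and never $K+1$. This is the boundary convention you actually use, so simply delete the contrary remark.

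\emph{The numerator of $\widehat b_{j\ell}$.} The factor $\binom{4-\ell}{j-\ell}$ reaches $\binom42=6$ at $(j,\ell)=(2,0)$, not $\binom43=4$. Bounding each of the six numerator factors by its own maximum still gives less than $2^{25}$. The actual largest numerator, at $(j,\ell)=(3,0)$, is $3\cdot2^{20}$.
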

\begin{proof}
By \eqref{exloc:arrangement-constants},
\[
 c_4=1+2^8\bigl(2^3(2^4-1)-1\bigr)<2^{15}.
\]
For the incidence recurrence \eqref{exinc:constant-recurrence},
$P_2=K+1<2^4$ and $B_4(K)<2^7$. Its next two steps give
\[
 \begin{array}{c|cccc}
 e&S_e&L_e&H_e&P_e\\\hline
 3&<2^{13}&<2^{18}&<2^{24}&<2^{31}\\
 4&<2^{43}&<2^{50}&<2^{59}&<2^{66}
 \end{array}
\]
Indeed $S_e=2^{3e}P_{e-1}$, while $4\cdot3!<2^5$,
$4\cdot4!<2^7$, $3^3<2^5$, and $4^4=2^8$; use
$1+2^a\leq2^{a+1}$ for every integer $a\geq0$.
Since $\gamma_4=1/4$, \eqref{exloc:uniform-pe-parameters} yields
\[
 \eta>2^{-26},\qquad
 \theta=\min\{2^{-1},(\eta/P_4)^4\}>2^{-4(26+66)},
\]
\[
 \rho=2^{-3}\left(\frac{2^{-5}\theta}{4c_4K}\right)^4
       >2^{-3-4(25+4(26+66))}>2^{-2^{11}}.
\]
Every bound has been obtained before fixing the local lengths.

Put $D_\rho=\rho^{-1}\geq1$. For $j\in\{2,3\}$ and
$0\leq\ell\leq j$, \eqref{ten:filling-constants} gives
\[
 \left(\prod_{a=\ell}^j\kappa_{4-a,j-a}\right)^{-1}
 \leq (K+1)^3D_\rho^6<2^{12}D_\rho^6.
\]
For $j=2$, the exponent sums of $D_\rho$ are $6,3,1$ and
those of $K+1$ are $3,1,0$, as $\ell=0,1,2$.
For $j=3$, the corresponding sums are $4,3,2,1$ and all zero.
The six numerator factors in \eqref{geo:Btilde} satisfy
\[
 \begin{gathered}
 K^{j-\ell+1}\leq2^{12},\quad
 \binom j\ell\leq2^2,\quad
 \binom{4-\ell}{j-\ell}\leq2^3,\\
 4-\ell\leq2^2,\quad 2^{j-\ell}\leq2^3,\quad
 (j-\ell)!\leq2^3.
 \end{gathered}
\]
Consequently $\widehat b_{j\ell}<2^{37}D_\rho^6$.
Also
\[
 \widehat\beta_{j\ell}
 =\binom4\ell2^{3-\ell}K^j\leq2^3\cdot2^3\cdot2^9=2^{15}.
\]
There are at most four summands in each sum defining
$\widehat L_j$ and $\widehat Q_j$. Therefore
\[
 \mathcal L<2^{39}D_\rho^6,\qquad
 \mathcal Q<2^{54}D_\rho^6,
\]
which imply the asserted common bound because $D_\rho<\rho_*^{-1}$.
\end{proof}

\begin{lemma}\label{fixed:size}
The choices in Section~\ref{fixed:section} satisfy
\eqref{real:parameters} and \eqref{real:degree-selection}. Moreover
\[
 n<2^{r+4},\qquad s<2^{68},\qquad
 0<c_*:=2^{-2^{14}}<(2\mathcal Q)^{-1}.
\]
\end{lemma}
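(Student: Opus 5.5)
The statement has four parts: \eqref{real:parameters}, \eqref{real:degree-selection}, the size bounds on $n$ and $s$, and the bound on $c_*$. I will verify them in the order the parameters are selected in Section~\ref{real:section}.

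For \eqref{real:parameters}: $K=2^3\geq1$ is immediate. Since $h=4$, we have $1/(2^h+1)=1/17$, so $\epsilon=2^{-5}<1/17<2^{-3}=\tau<1/2$. Positivity $G_{4,2}(\tau)>0$ follows from the explicit coefficient computation already quoted in the proof of Corollary~\ref{fixed:main}: $G_{4,2}(\tau)=(1-2\tau)^4-2\tau^2\geq 2^{-2}$ at $\tau=2^{-3}$. With $a_i=i-1$, the offsets are distinct and $2^{\max a_i-\min a_i}=2^3=K$.

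For \eqref{real:degree-selection}: we have $a_-=0$ and $r=2^{15}$, so $r\geq h$ holds trivially. For the threshold $Q_{\rm end}(4,2^{-5},2^{-3})$ from \eqref{exloc:reciprocal-parameters}, take $D=17$. Then $(1+D\epsilon)/(1-D\epsilon)=49/15$, and $D(s_h-\beta)/(D\beta-1)=17(16-1/8)/(17/8-1)$, which is below $2^{8}$. Hence $Q_{\rm end}<2^8\leq 2^{r}$. The remaining condition $2^{r}>16\mathcal L^2$ is the main quantitative point. By Lemma~\ref{fixed:constants}, $\mathcal L<2^{64}\rho_*^{-6}=2^{64+6\cdot2^{11}}$. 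So $16\mathcal L^2<2^{4+128+12\cdot 2^{11}}$, and the exponent is about $24{,}708<2^{15}=r$.

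For the size bounds: $n=n_4=2^{r+3}+1<2^{r+4}$. For $s=\lcm_{0\le a\le3}2(r+a)$, bound the lcm by the product: $s\leq 2^4\prod_{a=0}^3(r+a)<2^4(2^{16})^4=2^{68}$, since $r+3<2^{16}$.

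For the bound on $c_*$: Lemma~\ref{fixed:constants} gives $\mathcal Q<2^{64+6\cdot 2^{11}}$. Therefore $2\mathcal Q<2^{65+12288}$, and $65+12288<2^{14}=16384$. This yields $c_*=2^{-2^{14}}<(2\mathcal Q)^{-1}$, and $c_*>0$ is trivial.

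I expect the main obstacle to be checking cleanly that $Q_{\rm end}$ is small. The proof of Theorem~\ref{real:main} also uses $\lambda\mathcal L<1/2$, which follows from $2^{r}>16\mathcal L^2$ together with $\lambda\leq2/\sqrt{q_-}$. Everything else reduces to arithmetic with the exponent bounds already established in Lemma~\ref{fixed:constants}.
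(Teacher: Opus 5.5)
Your proof is correct and follows the paper's argument step by step: the same endpoint inequalities, the bound $Q_{\mathrm{end}}<2^8$, the exponent comparisons $16\mathcal L^2<2^{2^{15}}=2^r$ and $2\mathcal Q<2^{2^{14}}$ obtained from Lemma~\ref{fixed:constants}, and the same product bound on the lcm defining $s$. One small point: compute $G_{4,2}(2^{-3})=(1-2\tau)^4-2\tau^2$ directly from \eqref{bin:rate-functions}, where only $j=0,4$ contribute and each contributes $\tau^2$, rather than citing the proof of Corollary~\ref{fixed:main}, since that proof itself relies on this lemma.
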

\begin{proof}
The endpoint and rate premises follow from
\[
 0<2^{-5}<\frac1{2^4+1}<2^{-3}<\frac12,\qquad
 G_{4,2}(2^{-3})\geq2^{-2}>0.
\]
With $D=2^4+1$ and $s_h=2^4$ in
\eqref{exloc:reciprocal-parameters}, direct substitution gives
\[
 \frac{1+D\epsilon}{1-D\epsilon}<2^2,\qquad
 \frac{D}{D\tau-1}<2^4,\qquad s_h-\tau<2^4.
\]
Thus $Q_{\mathrm{end}}(4,2^{-5},2^{-3})<2^8$.
The offsets have gap three, giving the ratio bound $2^3=K$.
Since $r=2^{15}$, both $r\geq h$ and $2^r\geq Q_{\rm end}$ hold.
Lemma~\ref{fixed:constants} gives
\[
 16\mathcal L^2<2^{4+2(64+6\cdot2^{11})}<2^{2^{15}}=2^r,
\]
\[
 (2\mathcal Q)^{-1}>2^{-1-64-6\cdot2^{11}}>2^{-2^{14}}=c_*.
\]
This proves the strict spectral premise and the bound on $c_*$.

The exact local data satisfy
\[
 n=2^{r+3}+1<2^{r+4},\qquad
 s=\lcm_{0\leq a\leq3}2(r+a)
 \leq\prod_{a=0}^3 2(r+a)<(2^{17})^4=2^{68}.
\]
For the last strict inequality, use $2(r+3)<2^{17}$.
The field degree remains the indicated least common multiple;
the upper bound on $s$ is used only in the code-parameter estimates.
\end{proof}

\end{document}